\newcounter{sarrow}
\newcounter{sarrow1}
\newcommand\xnrsquigarrow[1]{%
\stepcounter{sarrow1}%
\mathrel{\begin{tikzpicture}[baseline= {( $ (current bounding box.south) + (0,-0.5ex) $ )}]
\node[inner sep=.5ex] (\thesarrow) {$\scriptstyle #1$};
\path[draw,<-,decorate,
  decoration={zigzag,amplitude=0.7pt,segment length=1.2mm,pre=lineto,pre length=4pt}]
    (\thesarrow1.south east) -- (\thesarrow1.south west);
    $\slashedarrowfill@\relbar\relbar/$
    \end{tikzpicture}}%
}
\def\slashedarrowfill@#1#2#3#4#5{%
  $\m@th\thickmuskip0mu\medmuskip\thickmuskip\thinmuskip\thickmuskip
   \relax#5#1\mkern-7mu%
   \cleaders\hbox{$#5\mkern-2mu#2\mkern-2mu$}\hfill
   \mathclap{#3}\mathclap{#2}%
   \cleaders\hbox{$#5\mkern-2mu#2\mkern-2mu$}\hfill
   \mkern-7mu#4$%
}
\def\rightslashedarrowfillb@{%
  \slashedarrowfill@\relbar\relbar/\rightarrow}
\newcommand\xnrightarrow[2][]{%
  \ext@arrow 0055{\rightslashedarrowfillb@}{#1}{#2}}
\def\rightslashedarrowfille@{%
  \slashedarrowfill@\relbar\relbar/\twoheadrightarrow}
\newcommand\xntworightarrow[2][]{%
  \ext@arrow 0055{\rightslashedarrowfille@}{#1}{#2}}
\def\rightslashedarrowfillg@{%
  \slashedarrowfill@\relbar\relbar{\raisebox{.12em}{}}\twoheadrightarrow}
\newcommand\xtworightarrow[2][]{%
  \ext@arrow 0055{\rightslashedarrowfillg@}{#1}{#2}}
\def\rightslashedarrowfillx@{%
  \slashedarrowfill@\Relbar\Relbar/\rightrightarrows}
\newcommand\xnTworightarrow[2][]{%
  \ext@arrow 0055{\rightslashedarrowfillx@}{#1}{#2}}
\def\rightslashedarrowfilly@{%
  \slashedarrowfill@\Relbar\Relbar{\raisebox{.12em}{}}\rightrightarrows}
\newcommand\xTworightarrow[2][]{%
  \ext@arrow 0055{\rightslashedarrowfilly@}{#1}{#2}}
\tikzset{nomorepostaction/.code=\let\tikz@postactions\pgfutil@empty}
\newtheorem{theorem}{Theorem}[section]
\newtheorem{definition}[theorem]{Definition}
\newtheorem{proposition}[theorem]{Proposition}
\newtheorem{lemma}[theorem]{Lemma}
\begin{document}

\begin{titlepage}
\thispagestyle{empty}

\hrule
\begin{center}
{\bf\LARGE Truly Concurrent Process Algebra with Localities}

\vspace{0.7cm}
--- Yong Wang ---

\vspace{2cm}
\begin{figure}[!htbp]
 \centering
 \includegraphics[width=1.0\textwidth]{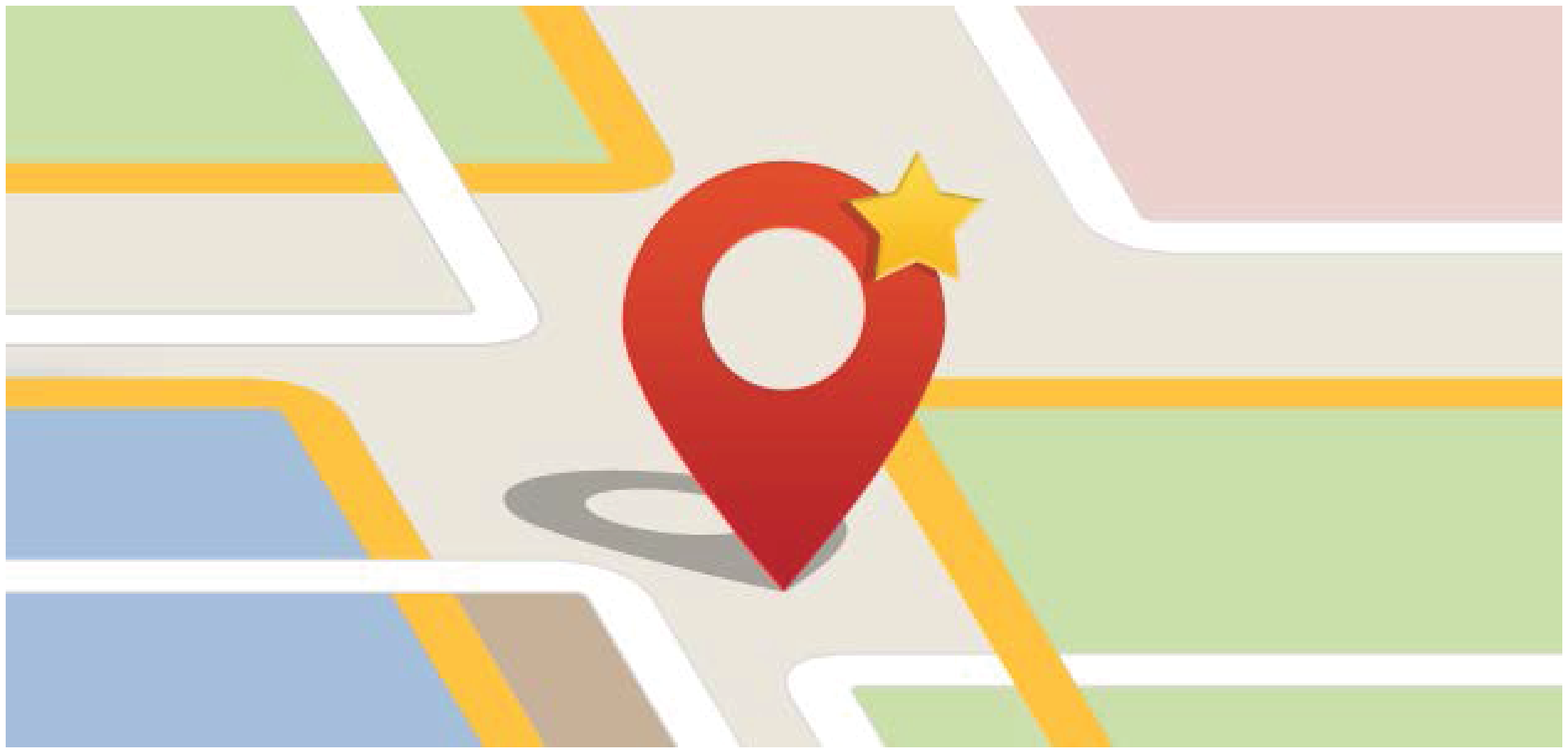}
\end{figure}

\end{center}
\end{titlepage}

\newpage 

\setcounter{page}{1}\pagenumbering{roman}

\tableofcontents

\newpage

\setcounter{page}{1}\pagenumbering{arabic}

        \section{Introduction}

The well-known process algebras, such as CCS \cite{CC} \cite{CCS}, ACP \cite{ACP} and $\pi$-calculus \cite{PI1} \cite{PI2}, capture the interleaving concurrency based on bisimilarity semantics.
We did some work on truly concurrent process algebras, such as CTC \cite{CTC}, APTC \cite{ATC} and $\pi_{tc}$ \cite{PITC}, capture the true concurrency based on truly concurrent bisimilarities, such as
pomset bisimilarity, step bisimilarity, history-preserving (hp-) bisimilarity and hereditary history-preserving (hhp-) bisimilarity. Truly concurrent process algebras are generalizations
of the corresponding traditional process algebras.

In this book, we introduce localities into truly concurrent process algebras, based on the work on process algebra with localities \cite{LOC1}. We introduce the
preliminaries in chapter \ref{bg}. Truly concurrent bisimulations with localities in chapter \ref{osl}, CTC with localities in chapter \ref{ctcl}, APTC with localities in chapter \ref{aptcl}, 
$\pi_{tc}$ with localities in chapter \ref{pitcl}.
\newpage\section{Backgrounds}\label{bg}

To make this book self-satisfied, we introduce some preliminaries in this chapter, including some introductions on operational semantics, proof techniques, truly concurrent process algebra \cite{ATC} \cite{CTC} \cite{PITC}, which is based on truly
concurrent operational semantics.

\subsection{Operational Semantics}

The semantics of $ACP$ is based on bisimulation/rooted branching bisimulation equivalences, and the modularity of $ACP$ relies on the concept of conservative extension, for the
conveniences, we introduce some concepts and conclusions on them.

\begin{definition}[Bisimulation]
A bisimulation relation $R$ is a binary relation on processes such that: (1) if $p R q$ and $p\xrightarrow{a}p'$ then $q\xrightarrow{a}q'$ with $p' R q'$; (2) if $p R q$ and
$q\xrightarrow{a}q'$ then $p\xrightarrow{a}p'$ with $p' R q'$; (3) if $p R q$ and $pP$, then $qP$; (4) if $p R q$ and $qP$, then $pP$. Two processes $p$ and $q$ are bisimilar,
denoted by $p\sim_{HM} q$, if there is a bisimulation relation $R$ such that $p R q$.
\end{definition}

\begin{definition}[Congruence]
Let $\Sigma$ be a signature. An equivalence relation $R$ on $\mathcal{T}(\Sigma)$ is a congruence if for each $f\in\Sigma$, if $s_i R t_i$ for $i\in\{1,\cdots,ar(f)\}$, then
$f(s_1,\cdots,s_{ar(f)}) R f(t_1,\cdots,t_{ar(f)})$.
\end{definition}

\begin{definition}[Branching bisimulation]
A branching bisimulation relation $R$ is a binary relation on the collection of processes such that: (1) if $p R q$ and $p\xrightarrow{a}p'$ then either $a\equiv \tau$ and $p' R q$ or there is a sequence of (zero or more) $\tau$-transitions $q\xrightarrow{\tau}\cdots\xrightarrow{\tau}q_0$ such that $p R q_0$ and $q_0\xrightarrow{a}q'$ with $p' R q'$; (2) if $p R q$ and $q\xrightarrow{a}q'$ then either $a\equiv \tau$ and $p R q'$ or there is a sequence of (zero or more) $\tau$-transitions $p\xrightarrow{\tau}\cdots\xrightarrow{\tau}p_0$ such that $p_0 R q$ and $p_0\xrightarrow{a}p'$ with $p' R q'$; (3) if $p R q$ and $pP$, then there is a sequence of (zero or more) $\tau$-transitions $q\xrightarrow{\tau}\cdots\xrightarrow{\tau}q_0$ such that $p R q_0$ and $q_0P$; (4) if $p R q$ and $qP$, then there is a sequence of (zero or more) $\tau$-transitions $p\xrightarrow{\tau}\cdots\xrightarrow{\tau}p_0$ such that $p_0 R q$ and $p_0P$. Two processes $p$ and $q$ are branching bisimilar, denoted by $p\approx_{bHM} q$, if there is a branching bisimulation relation $R$ such that $p R q$.
\end{definition}

\begin{definition}[Rooted branching bisimulation]
A rooted branching bisimulation relation $R$ is a binary relation on processes such that: (1) if $p R q$ and $p\xrightarrow{a}p'$ then $q\xrightarrow{a}q'$ with $p'\approx_{bHM} q'$;
(2) if $p R q$ and $q\xrightarrow{a}q'$ then $p\xrightarrow{a}p'$ with $p'\approx_{bHM} q'$; (3) if $p R q$ and $pP$, then $qP$; (4) if $p R q$ and $qP$, then $pP$. Two processes $p$ and $q$ are rooted branching bisimilar, denoted by $p\approx_{rbHM} q$, if there is a rooted branching bisimulation relation $R$ such that $p R q$.
\end{definition}

\begin{definition}[Conservative extension]
Let $T_0$ and $T_1$ be TSSs (transition system specifications) over signatures $\Sigma_0$ and $\Sigma_1$, respectively. The TSS $T_0\oplus T_1$ is a conservative extension of $T_0$ if
the LTSs (labeled transition systems) generated by $T_0$ and $T_0\oplus T_1$ contain exactly the same transitions $t\xrightarrow{a}t'$ and $tP$ with $t\in \mathcal{T}(\Sigma_0)$.
\end{definition}

\begin{definition}[Source-dependency]
The source-dependent variables in a transition rule of $\rho$ are defined inductively as follows: (1) all variables in the source of $\rho$ are source-dependent; (2) if
$t\xrightarrow{a}t'$ is a premise of $\rho$ and all variables in $t$ are source-dependent, then all variables in $t'$ are source-dependent. A transition rule is source-dependent if
all its variables are. A TSS is source-dependent if all its rules are.
\end{definition}

\begin{definition}[Freshness]
Let $T_0$ and $T_1$ be TSSs over signatures $\Sigma_0$ and $\Sigma_1$, respectively. A term in $\mathbb{T}(T_0\oplus T_1)$ is said to be fresh if it contains a function symbol from
$\Sigma_1\setminus\Sigma_0$. Similarly, a transition label or predicate symbol in $T_1$ is fresh if it does not occur in $T_0$.
\end{definition}

\begin{theorem}[Conservative extension]
Let $T_0$ and $T_1$ be TSSs over signatures $\Sigma_0$ and $\Sigma_1$, respectively, where $T_0$ and $T_0\oplus T_1$ are positive after reduction. Under the following conditions,
$T_0\oplus T_1$ is a conservative extension of $T_0$. (1) $T_0$ is source-dependent. (2) For each $\rho\in T_1$, either the source of $\rho$ is fresh, or $\rho$ has a premise of the
form $t\xrightarrow{a}t'$ or $tP$, where $t\in \mathbb{T}(\Sigma_0)$, all variables in $t$ occur in the source of $\rho$ and $t'$, $a$ or $P$ is fresh.
\end{theorem}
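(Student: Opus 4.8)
The plan is to establish the two inclusions that together say the derivable transitions $t\xrightarrow{a}t'$ and predicates $tP$ with $t\in\mathcal{T}(\Sigma_0)$ coincide for $T_0$ and for $T_0\oplus T_1$. Since both TSSs are assumed positive after reduction, I would first pass to the reduced positive systems, so that derivability can be analysed through well-founded proof trees. This is where the hypothesis ``positive after reduction'' is consumed: it lets me treat the negative premises as resolved by the reduction and then argue as if every premise were positive, which is what makes the proof-tree induction below legitimate.

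The inclusion of $T_0$-derivations into $T_0\oplus T_1$-derivations is the easy direction: every rule of $T_0$ is literally a rule of $T_0\oplus T_1$, so any proof tree over $T_0$ is already a proof tree over $T_0\oplus T_1$, and no transition or predicate over $\mathcal{T}(\Sigma_0)$ is lost.

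The substantive direction is the converse, which I would prove by induction on the structure of a well-founded proof tree in $T_0\oplus T_1$, establishing the \emph{strengthened} claim: if $t\xrightarrow{a}t'$ (resp.\ $tP$) is provable with $t\in\mathcal{T}(\Sigma_0)$, then it is already provable from $T_0$, the target $t'$ again lies in $\mathcal{T}(\Sigma_0)$, and the label $a$ (resp.\ the predicate $P$) is not fresh. Carrying the last two properties inside the induction is essential, since they are exactly what make the premises of the next rule amenable to the hypothesis. Let $\sigma(\rho)$ be the closed instance applied at the root. If $\rho\in T_1$, then because $t=\sigma(\mathrm{source}(\rho))\in\mathcal{T}(\Sigma_0)$ the source of $\rho$ cannot be fresh, so condition~(2) forces $\rho$ to carry a premise $u\xrightarrow{b}u'$ or $uP'$ with $u\in\mathbb{T}(\Sigma_0)$, all variables of $u$ occurring in the source, and a fresh label, target, or predicate. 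Since those variables are instantiated by $\sigma$ into subterms of $t$, the instantiated premise has left-hand side in $\mathcal{T}(\Sigma_0)$ and occurs in the tree, so the induction hypothesis would make it provable from $T_0$ with a non-fresh label or predicate, contradicting freshness. Hence no rule of $T_1$ can sit at the root, i.e.\ $\rho\in T_0$.

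For $\rho\in T_0$ I would invoke source-dependency (condition~(1)) to process the premises in an order respecting the propagation of source-dependent variables: the source variables are mapped by $\sigma$ into $\Sigma_0$-subterms of $t$, and for each premise whose left-hand side has only already-controlled variables, the induction hypothesis yields provability from $T_0$, keeps the right-hand side in $\mathcal{T}(\Sigma_0)$, and thereby brings the newly introduced variables under control. Iterating over all premises shows every variable of $\rho$ is instantiated into $\Sigma_0$ and every premise is provable from $T_0$, so $\sigma(\rho)$ is a legitimate $T_0$-derivation step delivering the conclusion; its target is in $\mathcal{T}(\Sigma_0)$ and its label is non-fresh because $\rho\in T_0$, closing the induction. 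I expect the main obstacle to be precisely this bookkeeping: one must order the premises so that ``stays over $\Sigma_0$'' and ``non-fresh label'' propagate in lockstep through the source-dependent variables, and one must correctly discharge the negative premises permitted by ``positive after reduction'' so that the well-founded proof-tree induction is genuinely available.
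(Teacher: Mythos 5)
The paper states this theorem in its preliminaries without any proof: it is quoted as a known background result (the conservative-extension theorem of Fokkink and Verhoef for TSSs that are positive after reduction), so there is no in-paper argument to compare against. Your proposal correctly reconstructs the standard proof of that result: induction over well-founded proof trees in the reduced positive system, with the strengthened invariant that the target stays in $\mathcal{T}(\Sigma_0)$ and the label/predicate is non-fresh; rules of $T_1$ at the root are excluded because a non-fresh source forces, via condition (2) and the induction hypothesis, a contradiction with the required fresh label, target, or predicate; and rules of $T_0$ are handled by propagating source-dependency through the premises. The only place needing more care than you give it is the appeal to ``positive after reduction'' — one must check that the positive premise demanded by condition (2) survives the reduction and that reduction of $T_0\oplus T_1$ restricted to $\Sigma_0$-terms agrees with reduction of $T_0$ — but this is exactly the bookkeeping the literature proof carries out, and you flag it yourself as the main obstacle rather than gloss over it.
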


\begin{definition}[Prime event structure with silent event]
Let $\Lambda$ be a fixed set of labels, ranged over $a,b,c,\cdots$ and $\tau$. A ($\Lambda$-labelled) prime event structure with silent event $\tau$ is a tuple
$\mathcal{E}=\langle \mathbb{E}, \leq, \sharp, \lambda\rangle$, where $\mathbb{E}$ is a denumerable set of events, including the silent event $\tau$. Let
$\hat{\mathbb{E}}=\mathbb{E}\backslash\{\tau\}$, exactly excluding $\tau$, it is obvious that $\hat{\tau^*}=\epsilon$, where $\epsilon$ is the empty event.
Let $\lambda:\mathbb{E}\rightarrow\Lambda$ be a labelling function and let $\lambda(\tau)=\tau$. And $\leq$, $\sharp$ are binary relations on $\mathbb{E}$,
called causality and conflict respectively, such that:

\begin{enumerate}
  \item $\leq$ is a partial order and $\lceil e \rceil = \{e'\in \mathbb{E}|e'\leq e\}$ is finite for all $e\in \mathbb{E}$. It is easy to see that
  $e\leq\tau^*\leq e'=e\leq\tau\leq\cdots\leq\tau\leq e'$, then $e\leq e'$.
  \item $\sharp$ is irreflexive, symmetric and hereditary with respect to $\leq$, that is, for all $e,e',e''\in \mathbb{E}$, if $e\sharp e'\leq e''$, then $e\sharp e''$.
\end{enumerate}

Then, the concepts of consistency and concurrency can be drawn from the above definition:

\begin{enumerate}
  \item $e,e'\in \mathbb{E}$ are consistent, denoted as $e\frown e'$, if $\neg(e\sharp e')$. A subset $X\subseteq \mathbb{E}$ is called consistent, if $e\frown e'$ for all
  $e,e'\in X$.
  \item $e,e'\in \mathbb{E}$ are concurrent, denoted as $e\parallel e'$, if $\neg(e\leq e')$, $\neg(e'\leq e)$, and $\neg(e\sharp e')$.
\end{enumerate}
\end{definition}

\begin{definition}[Configuration]
Let $\mathcal{E}$ be a PES. A (finite) configuration in $\mathcal{E}$ is a (finite) consistent subset of events $C\subseteq \mathcal{E}$, closed with respect to causality
(i.e. $\lceil C\rceil=C$). The set of finite configurations of $\mathcal{E}$ is denoted by $\mathcal{C}(\mathcal{E})$. We let $\hat{C}=C\backslash\{\tau\}$.
\end{definition}

A consistent subset of $X\subseteq \mathbb{E}$ of events can be seen as a pomset. Given $X, Y\subseteq \mathbb{E}$, $\hat{X}\sim \hat{Y}$ if $\hat{X}$ and $\hat{Y}$ are
isomorphic as pomsets. In the following of the paper, we say $C_1\sim C_2$, we mean $\hat{C_1}\sim\hat{C_2}$.

\begin{definition}[Pomset transitions and step]
Let $\mathcal{E}$ be a PES and let $C\in\mathcal{C}(\mathcal{E})$, and $\emptyset\neq X\subseteq \mathbb{E}$, if $C\cap X=\emptyset$ and $C'=C\cup X\in\mathcal{C}(\mathcal{E})$,
then $C\xrightarrow{X} C'$ is called a pomset transition from $C$ to $C'$. When the events in $X$ are pairwise concurrent, we say that $C\xrightarrow{X}C'$ is a step.
\end{definition}

\begin{definition}[Pomset, step bisimulation]\label{PSB}
Let $\mathcal{E}_1$, $\mathcal{E}_2$ be PESs. A pomset bisimulation is a relation $R\subseteq\mathcal{C}(\mathcal{E}_1)\times\mathcal{C}(\mathcal{E}_2)$, such that if
$(C_1,C_2)\in R$, and $C_1\xrightarrow{X_1}C_1'$ then $C_2\xrightarrow{X_2}C_2'$, with $X_1\subseteq \mathbb{E}_1$, $X_2\subseteq \mathbb{E}_2$, $X_1\sim X_2$ and $(C_1',C_2')\in R$,
and vice-versa. We say that $\mathcal{E}_1$, $\mathcal{E}_2$ are pomset bisimilar, written $\mathcal{E}_1\sim_p\mathcal{E}_2$, if there exists a pomset bisimulation $R$, such that
$(\emptyset,\emptyset)\in R$. By replacing pomset transitions with steps, we can get the definition of step bisimulation. When PESs $\mathcal{E}_1$ and $\mathcal{E}_2$ are step
bisimilar, we write $\mathcal{E}_1\sim_s\mathcal{E}_2$.
\end{definition}

\begin{definition}[Posetal product]
Given two PESs $\mathcal{E}_1$, $\mathcal{E}_2$, the posetal product of their configurations, denoted $\mathcal{C}(\mathcal{E}_1)\overline{\times}\mathcal{C}(\mathcal{E}_2)$,
is defined as

$$\{(C_1,f,C_2)|C_1\in\mathcal{C}(\mathcal{E}_1),C_2\in\mathcal{C}(\mathcal{E}_2),f:C_1\rightarrow C_2 \textrm{ isomorphism}\}.$$

A subset $R\subseteq\mathcal{C}(\mathcal{E}_1)\overline{\times}\mathcal{C}(\mathcal{E}_2)$ is called a posetal relation. We say that $R$ is downward closed when for any
$(C_1,f,C_2),(C_1',f',C_2')\in \mathcal{C}(\mathcal{E}_1)\overline{\times}\mathcal{C}(\mathcal{E}_2)$, if $(C_1,f,C_2)\subseteq (C_1',f',C_2')$ pointwise and $(C_1',f',C_2')\in R$,
then $(C_1,f,C_2)\in R$.

For $f:X_1\rightarrow X_2$, we define $f[x_1\mapsto x_2]:X_1\cup\{x_1\}\rightarrow X_2\cup\{x_2\}$, $z\in X_1\cup\{x_1\}$,(1)$f[x_1\mapsto x_2](z)=
x_2$,if $z=x_1$;(2)$f[x_1\mapsto x_2](z)=f(z)$, otherwise. Where $X_1\subseteq \mathbb{E}_1$, $X_2\subseteq \mathbb{E}_2$, $x_1\in \mathbb{E}_1$, $x_2\in \mathbb{E}_2$.
\end{definition}

\begin{definition}[(Hereditary) history-preserving bisimulation]\label{HHPB}
A history-preserving (hp-) bisimulation is a posetal relation $R\subseteq\mathcal{C}(\mathcal{E}_1)\overline{\times}\mathcal{C}(\mathcal{E}_2)$ such that if $(C_1,f,C_2)\in R$,
and $C_1\xrightarrow{e_1} C_1'$, then $C_2\xrightarrow{e_2} C_2'$, with $(C_1',f[e_1\mapsto e_2],C_2')\in R$, and vice-versa. $\mathcal{E}_1,\mathcal{E}_2$ are history-preserving
(hp-)bisimilar and are written $\mathcal{E}_1\sim_{hp}\mathcal{E}_2$ if there exists a hp-bisimulation $R$ such that $(\emptyset,\emptyset,\emptyset)\in R$.

A hereditary history-preserving (hhp-)bisimulation is a downward closed hp-bisimulation. $\mathcal{E}_1,\mathcal{E}_2$ are hereditary history-preserving (hhp-)bisimilar and are
written $\mathcal{E}_1\sim_{hhp}\mathcal{E}_2$.
\end{definition}

\begin{definition}[Weak pomset transitions and weak step]
Let $\mathcal{E}$ be a PES and let $C\in\mathcal{C}(\mathcal{E})$, and $\emptyset\neq X\subseteq \hat{\mathbb{E}}$, if $C\cap X=\emptyset$ and
$\hat{C'}=\hat{C}\cup X\in\mathcal{C}(\mathcal{E})$, then $C\xRightarrow{X} C'$ is called a weak pomset transition from $C$ to $C'$, where we define
$\xRightarrow{e}\triangleq\xrightarrow{\tau^*}\xrightarrow{e}\xrightarrow{\tau^*}$. And $\xRightarrow{X}\triangleq\xrightarrow{\tau^*}\xrightarrow{e}\xrightarrow{\tau^*}$,
for every $e\in X$. When the events in $X$ are pairwise concurrent, we say that $C\xRightarrow{X}C'$ is a weak step.
\end{definition}

\begin{definition}[Weak pomset, step bisimulation]
Let $\mathcal{E}_1$, $\mathcal{E}_2$ be PESs. A weak pomset bisimulation is a relation $R\subseteq\mathcal{C}(\mathcal{E}_1)\times\mathcal{C}(\mathcal{E}_2)$, such that if
$(C_1,C_2)\in R$, and $C_1\xRightarrow{X_1}C_1'$ then $C_2\xRightarrow{X_2}C_2'$, with $X_1\subseteq \hat{\mathbb{E}_1}$, $X_2\subseteq \hat{\mathbb{E}_2}$, $X_1\sim X_2$ and
$(C_1',C_2')\in R$, and vice-versa. We say that $\mathcal{E}_1$, $\mathcal{E}_2$ are weak pomset bisimilar, written $\mathcal{E}_1\approx_p\mathcal{E}_2$, if there exists a weak pomset
bisimulation $R$, such that $(\emptyset,\emptyset)\in R$. By replacing weak pomset transitions with weak steps, we can get the definition of weak step bisimulation. When PESs
$\mathcal{E}_1$ and $\mathcal{E}_2$ are weak step bisimilar, we write $\mathcal{E}_1\approx_s\mathcal{E}_2$.
\end{definition}

\begin{definition}[Weakly posetal product]
Given two PESs $\mathcal{E}_1$, $\mathcal{E}_2$, the weakly posetal product of their configurations, denoted $\mathcal{C}(\mathcal{E}_1)\overline{\times}\mathcal{C}(\mathcal{E}_2)$, is
defined as

$$\{(C_1,f,C_2)|C_1\in\mathcal{C}(\mathcal{E}_1),C_2\in\mathcal{C}(\mathcal{E}_2),f:\hat{C_1}\rightarrow \hat{C_2} \textrm{ isomorphism}\}.$$

A subset $R\subseteq\mathcal{C}(\mathcal{E}_1)\overline{\times}\mathcal{C}(\mathcal{E}_2)$ is called a weakly posetal relation. We say that $R$ is downward closed when for any
$(C_1,f,C_2),(C_1',f,C_2')\in \mathcal{C}(\mathcal{E}_1)\overline{\times}\mathcal{C}(\mathcal{E}_2)$, if $(C_1,f,C_2)\subseteq (C_1',f',C_2')$ pointwise and $(C_1',f',C_2')\in R$, then
$(C_1,f,C_2)\in R$.

For $f:X_1\rightarrow X_2$, we define $f[x_1\mapsto x_2]:X_1\cup\{x_1\}\rightarrow X_2\cup\{x_2\}$, $z\in X_1\cup\{x_1\}$,(1)$f[x_1\mapsto x_2](z)=
x_2$,if $z=x_1$;(2)$f[x_1\mapsto x_2](z)=f(z)$, otherwise. Where $X_1\subseteq \hat{\mathbb{E}_1}$, $X_2\subseteq \hat{\mathbb{E}_2}$, $x_1\in \hat{\mathbb{E}}_1$,
$x_2\in \hat{\mathbb{E}}_2$. Also, we define $f(\tau^*)=f(\tau^*)$.
\end{definition}

\begin{definition}[Weak (hereditary) history-preserving bisimulation]
A weak history-preserving (hp-) bisimulation is a weakly posetal relation $R\subseteq\mathcal{C}(\mathcal{E}_1)\overline{\times}\mathcal{C}(\mathcal{E}_2)$ such that if
$(C_1,f,C_2)\in R$, and $C_1\xRightarrow{e_1} C_1'$, then $C_2\xRightarrow{e_2} C_2'$, with $(C_1',f[e_1\mapsto e_2],C_2')\in R$, and vice-versa. $\mathcal{E}_1,\mathcal{E}_2$ are weak
history-preserving (hp-)bisimilar and are written $\mathcal{E}_1\approx_{hp}\mathcal{E}_2$ if there exists a weak hp-bisimulation $R$ such that $(\emptyset,\emptyset,\emptyset)\in R$.

A weakly hereditary history-preserving (hhp-)bisimulation is a downward closed weak hp-bisimulation. $\mathcal{E}_1,\mathcal{E}_2$ are weakly hereditary history-preserving
(hhp-)bisimilar and are written $\mathcal{E}_1\approx_{hhp}\mathcal{E}_2$.
\end{definition}

\begin{definition}[Branching pomset, step bisimulation]
Assume a special termination predicate $\downarrow$, and let $\surd$ represent a state with $\surd\downarrow$. Let $\mathcal{E}_1$, $\mathcal{E}_2$ be PESs. A branching pomset
bisimulation is a relation $R\subseteq\mathcal{C}(\mathcal{E}_1)\times\mathcal{C}(\mathcal{E}_2)$, such that:
 \begin{enumerate}
   \item if $(C_1,C_2)\in R$, and $C_1\xrightarrow{X}C_1'$ then
   \begin{itemize}
     \item either $X\equiv \tau^*$, and $(C_1',C_2)\in R$;
     \item or there is a sequence of (zero or more) $\tau$-transitions $C_2\xrightarrow{\tau^*} C_2^0$, such that $(C_1,C_2^0)\in R$ and $C_2^0\xRightarrow{X}C_2'$ with
     $(C_1',C_2')\in R$;
   \end{itemize}
   \item if $(C_1,C_2)\in R$, and $C_2\xrightarrow{X}C_2'$ then
   \begin{itemize}
     \item either $X\equiv \tau^*$, and $(C_1,C_2')\in R$;
     \item or there is a sequence of (zero or more) $\tau$-transitions $C_1\xrightarrow{\tau^*} C_1^0$, such that $(C_1^0,C_2)\in R$ and $C_1^0\xRightarrow{X}C_1'$ with
     $(C_1',C_2')\in R$;
   \end{itemize}
   \item if $(C_1,C_2)\in R$ and $C_1\downarrow$, then there is a sequence of (zero or more) $\tau$-transitions $C_2\xrightarrow{\tau^*}C_2^0$ such that $(C_1,C_2^0)\in R$
   and $C_2^0\downarrow$;
   \item if $(C_1,C_2)\in R$ and $C_2\downarrow$, then there is a sequence of (zero or more) $\tau$-transitions $C_1\xrightarrow{\tau^*}C_1^0$ such that $(C_1^0,C_2)\in R$
   and $C_1^0\downarrow$.
 \end{enumerate}

We say that $\mathcal{E}_1$, $\mathcal{E}_2$ are branching pomset bisimilar, written $\mathcal{E}_1\approx_{bp}\mathcal{E}_2$, if there exists a branching pomset bisimulation $R$,
such that $(\emptyset,\emptyset)\in R$.

By replacing pomset transitions with steps, we can get the definition of branching step bisimulation. When PESs $\mathcal{E}_1$ and $\mathcal{E}_2$ are branching step bisimilar,
we write $\mathcal{E}_1\approx_{bs}\mathcal{E}_2$.
\end{definition}

\begin{definition}[Rooted branching pomset, step bisimulation]
Assume a special termination predicate $\downarrow$, and let $\surd$ represent a state with $\surd\downarrow$. Let $\mathcal{E}_1$, $\mathcal{E}_2$ be PESs. A rooted branching pomset
bisimulation is a relation $R\subseteq\mathcal{C}(\mathcal{E}_1)\times\mathcal{C}(\mathcal{E}_2)$, such that:
 \begin{enumerate}
   \item if $(C_1,C_2)\in R$, and $C_1\xrightarrow{X}C_1'$ then $C_2\xrightarrow{X}C_2'$ with $C_1'\approx_{bp}C_2'$;
   \item if $(C_1,C_2)\in R$, and $C_2\xrightarrow{X}C_2'$ then $C_1\xrightarrow{X}C_1'$ with $C_1'\approx_{bp}C_2'$;
   \item if $(C_1,C_2)\in R$ and $C_1\downarrow$, then $C_2\downarrow$;
   \item if $(C_1,C_2)\in R$ and $C_2\downarrow$, then $C_1\downarrow$.
 \end{enumerate}

We say that $\mathcal{E}_1$, $\mathcal{E}_2$ are rooted branching pomset bisimilar, written $\mathcal{E}_1\approx_{rbp}\mathcal{E}_2$, if there exists a rooted branching pomset
bisimulation $R$, such that $(\emptyset,\emptyset)\in R$.

By replacing pomset transitions with steps, we can get the definition of rooted branching step bisimulation. When PESs $\mathcal{E}_1$ and $\mathcal{E}_2$ are rooted branching step
bisimilar, we write $\mathcal{E}_1\approx_{rbs}\mathcal{E}_2$.
\end{definition}

\begin{definition}[Branching (hereditary) history-preserving bisimulation]
Assume a special termination predicate $\downarrow$, and let $\surd$ represent a state with $\surd\downarrow$. A branching history-preserving (hp-) bisimulation is a posetal
relation $R\subseteq\mathcal{C}(\mathcal{E}_1)\overline{\times}\mathcal{C}(\mathcal{E}_2)$ such that:

 \begin{enumerate}
   \item if $(C_1,f,C_2)\in R$, and $C_1\xrightarrow{e_1}C_1'$ then
   \begin{itemize}
     \item either $e_1\equiv \tau$, and $(C_1',f[e_1\mapsto \tau],C_2)\in R$;
     \item or there is a sequence of (zero or more) $\tau$-transitions $C_2\xrightarrow{\tau^*} C_2^0$, such that $(C_1,f,C_2^0)\in R$ and $C_2^0\xrightarrow{e_2}C_2'$ with
     $(C_1',f[e_1\mapsto e_2],C_2')\in R$;
   \end{itemize}
   \item if $(C_1,f,C_2)\in R$, and $C_2\xrightarrow{e_2}C_2'$ then
   \begin{itemize}
     \item either $X\equiv \tau$, and $(C_1,f[e_2\mapsto \tau],C_2')\in R$;
     \item or there is a sequence of (zero or more) $\tau$-transitions $C_1\xrightarrow{\tau^*} C_1^0$, such that $(C_1^0,f,C_2)\in R$ and $C_1^0\xrightarrow{e_1}C_1'$ with
     $(C_1',f[e_2\mapsto e_1],C_2')\in R$;
   \end{itemize}
   \item if $(C_1,f,C_2)\in R$ and $C_1\downarrow$, then there is a sequence of (zero or more) $\tau$-transitions $C_2\xrightarrow{\tau^*}C_2^0$ such that $(C_1,f,C_2^0)\in R$
   and $C_2^0\downarrow$;
   \item if $(C_1,f,C_2)\in R$ and $C_2\downarrow$, then there is a sequence of (zero or more) $\tau$-transitions $C_1\xrightarrow{\tau^*}C_1^0$ such that $(C_1^0,f,C_2)\in R$
   and $C_1^0\downarrow$.
 \end{enumerate}

$\mathcal{E}_1,\mathcal{E}_2$ are branching history-preserving (hp-)bisimilar and are written $\mathcal{E}_1\approx_{bhp}\mathcal{E}_2$ if there exists a branching hp-bisimulation
$R$ such that $(\emptyset,\emptyset,\emptyset)\in R$.

A branching hereditary history-preserving (hhp-)bisimulation is a downward closed branching hp-bisimulation. $\mathcal{E}_1,\mathcal{E}_2$ are branching hereditary history-preserving
(hhp-)bisimilar and are written $\mathcal{E}_1\approx_{bhhp}\mathcal{E}_2$.
\end{definition}

\begin{definition}[Rooted branching (hereditary) history-preserving bisimulation]
Assume a special termination predicate $\downarrow$, and let $\surd$ represent a state with $\surd\downarrow$. A rooted branching history-preserving (hp-) bisimulation is a posetal relation $R\subseteq\mathcal{C}(\mathcal{E}_1)\overline{\times}\mathcal{C}(\mathcal{E}_2)$ such that:

 \begin{enumerate}
   \item if $(C_1,f,C_2)\in R$, and $C_1\xrightarrow{e_1}C_1'$, then $C_2\xrightarrow{e_2}C_2'$ with $C_1'\approx_{bhp}C_2'$;
   \item if $(C_1,f,C_2)\in R$, and $C_2\xrightarrow{e_2}C_2'$, then $C_1\xrightarrow{e_1}C_1'$ with $C_1'\approx_{bhp}C_2'$;
   \item if $(C_1,f,C_2)\in R$ and $C_1\downarrow$, then $C_2\downarrow$;
   \item if $(C_1,f,C_2)\in R$ and $C_2\downarrow$, then $C_1\downarrow$.
 \end{enumerate}

$\mathcal{E}_1,\mathcal{E}_2$ are rooted branching history-preserving (hp-)bisimilar and are written $\mathcal{E}_1\approx_{rbhp}\mathcal{E}_2$ if there exists a rooted branching
hp-bisimulation $R$ such that $(\emptyset,\emptyset,\emptyset)\in R$.

A rooted branching hereditary history-preserving (hhp-)bisimulation is a downward closed rooted branching hp-bisimulation. $\mathcal{E}_1,\mathcal{E}_2$ are rooted branching
hereditary history-preserving (hhp-)bisimilar and are written $\mathcal{E}_1\approx_{rbhhp}\mathcal{E}_2$.
\end{definition}

\subsection{Proof Techniques}\label{PT}

In this subsection, we introduce the concepts and conclusions about elimination, which is very important in the proof of completeness theorem.

\begin{definition}[Elimination property]
Let a process algebra with a defined set of basic terms as a subset of the set of closed terms over the process algebra. Then the process algebra has the elimination to basic terms
property if for every closed term $s$ of the algebra, there exists a basic term $t$ of the algebra such that the algebra$\vdash s=t$.
\end{definition}

\begin{definition}[Strongly normalizing]
A term $s_0$ is called strongly normalizing if does not an infinite series of reductions beginning in $s_0$.
\end{definition}

\begin{definition}
We write $s>_{lpo} t$ if $s\rightarrow^+ t$ where $\rightarrow^+$ is the transitive closure of the reduction relation defined by the transition rules of an algebra.
\end{definition}

\begin{theorem}[Strong normalization]\label{SN}
Let a term rewriting system (TRS) with finitely many rewriting rules and let $>$ be a well-founded ordering on the signature of the corresponding algebra. If $s>_{lpo} t$ for each
rewriting rule $s\rightarrow t$ in the TRS, then the term rewriting system is strongly normalizing.
\end{theorem}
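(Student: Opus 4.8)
The plan is to exhibit $>_{lpo}$ as a \emph{reduction order} --- a strict partial order on terms that is well-founded, closed under contexts, and stable under substitutions --- and then to reduce strong normalization of the TRS to the well-foundedness of $>_{lpo}$. Concretely, I read $>_{lpo}$ as the lexicographic path ordering generated by the well-founded precedence $>$ on the signature, defined recursively so that $s=f(s_1,\dots,s_m)>_{lpo}t$ holds exactly when (i) $s_i\geq_{lpo}t$ for some argument $s_i$, or (ii) $t=g(t_1,\dots,t_n)$ with $f>g$ and $s>_{lpo}t_j$ for all $j$, or (iii) $t=f(t_1,\dots,t_m)$ with $(s_1,\dots,s_m)$ greater than $(t_1,\dots,t_m)$ in the lexicographic extension of $>_{lpo}$ and $s>_{lpo}t_j$ for all $j$.

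First I would record the algebraic closure properties of $>_{lpo}$: that it is irreflexive and transitive, that it enjoys the subterm property $f(\dots,t,\dots)>_{lpo}t$, that it is monotonic (if $s>_{lpo}t$ then $C[s]>_{lpo}C[t]$ for every context $C$), and that it is stable (if $s>_{lpo}t$ then $s\sigma>_{lpo}t\sigma$ for every substitution $\sigma$). Each of these follows by a routine structural induction on the recursive definition above, so I would sketch the inductive steps rather than grind through every case.

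With these properties in hand the reduction is immediate. Any single rewrite step has the form $u=C[s\sigma]\to C[t\sigma]=v$ for some rule $s\to t$, context $C$ and substitution $\sigma$. Since $s>_{lpo}t$ by hypothesis, stability gives $s\sigma>_{lpo}t\sigma$ and monotonicity gives $C[s\sigma]>_{lpo}C[t\sigma]$, i.e. $u>_{lpo}v$. Hence every reduction sequence $s_0\to s_1\to s_2\to\cdots$ descends strictly in $>_{lpo}$, so an infinite reduction sequence would produce an infinite strictly descending $>_{lpo}$-chain. Strong normalization therefore follows once $>_{lpo}$ is shown to be well-founded.

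The hard part is exactly this last point: proving $>_{lpo}$ well-founded whenever the precedence $>$ is. I would prove it by showing that every term is \emph{accessible} (admits no infinite descending chain), arguing by well-founded induction on the precedence $>$ as the outer parameter and, for a fixed root symbol $f$, by a nested induction that treats the argument tuples under the lexicographic extension of $>_{lpo}$. The delicate combinatorial core is closing this nested induction: one must show that if all proper subterms of $f(t_1,\dots,t_m)$ are accessible then so is $f(t_1,\dots,t_m)$, and this requires controlling clauses (ii) and (iii) simultaneously, since a descent may alternately shrink the root symbol in the precedence and lexicographically shrink the arguments. The standard way to tame this is a minimal-bad-sequence (Nash--Williams) argument, or, when the signature is finite, an appeal to Kruskal's tree theorem ruling out infinite descending chains via the homeomorphic embedding relation; I would take the minimal-bad-sequence route so as to rely only on well-foundedness of $>$ rather than on a well-quasi-ordering hypothesis. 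Once well-foundedness of $>_{lpo}$ is established, the conclusion is just the reduction described above.
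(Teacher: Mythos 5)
The paper does not prove this theorem at all: it appears in the Backgrounds chapter as a recalled result from term-rewriting theory (the classical termination theorem for lexicographic path orderings, quoted in the form used in the process-algebra literature), and it is afterwards invoked only as a black box inside the various elimination theorems. So there is no in-paper argument to measure you against; the only meaningful comparison is with the standard textbook proof, and that is precisely what you reconstruct. Your decomposition is the right one: (a) establish that $>_{lpo}$ is a reduction order (irreflexive, transitive, subterm property, closed under contexts, stable under substitutions); (b) deduce that every rewrite step $C[s\sigma]\rightarrow C[t\sigma]$ strictly decreases under $>_{lpo}$ whenever every rule does, so an infinite reduction would yield an infinite descending $>_{lpo}$-chain; (c) prove $>_{lpo}$ well-founded from well-foundedness of the precedence. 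You correctly single out (c) as the only genuinely hard point and name the two standard ways of discharging it (Kruskal's tree theorem via homeomorphic embedding, or a minimal-bad-sequence/accessibility argument); your treatment of (c) is a sketch rather than a proof, but the structure you describe (outer well-founded induction on the precedence, inner induction on argument tuples, closed by a minimal-bad-sequence argument) is the standard accessibility proof and does close. Note also that the finiteness of the rule set assumed in the statement plays no role in your argument; it is inessential for the LPO route, and since the signatures in this paper are finite, either of your two options for (c) would serve.

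One point worth making explicit: your reading of $>_{lpo}$ is a necessary correction of the paper's. The definition immediately preceding the theorem in the paper sets $s>_{lpo} t$ iff $s\rightarrow^+ t$, the transitive closure of the reduction relation; under that reading the hypothesis ``$s>_{lpo} t$ for each rule $s\rightarrow t$'' is vacuously true of every TRS, and the theorem would then be false (the one-rule system $a\rightarrow a$ satisfies it but is not strongly normalizing). Interpreting $>_{lpo}$ as the lexicographic path ordering generated by the well-founded precedence on the signature, as you do, is the standard meaning and the only one under which the statement is a theorem, so your proposal both proves the intended result and silently repairs the statement.
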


\subsection{CTC}

CTC \cite{CTC} is a calculus of truly concurrent systems. It includes syntax and semantics:

\begin{enumerate}
  \item Its syntax includes actions, process constant, and operators acting between actions, like Prefix, Summation, Composition, Restriction, Relabelling.
  \item Its semantics is based on labeled transition systems, Prefix, Summation, Composition, Restriction, Relabelling have their transition rules. CTC has good semantic properties
  based on the truly concurrent bisimulations. These properties include monoid laws, static laws, new expansion law for strongly truly concurrent bisimulations, $\tau$ laws for weakly
  truly concurrent bisimulations, and full congruences for strongly and weakly truly concurrent bisimulations, and also unique solution for recursion.
\end{enumerate}

\subsection{APTC}

$APTC$ \cite{ATC} captures several computational properties in the form of algebraic laws, and proves the soundness and completeness modulo truly concurrent bisimulation/rooted branching truly concurrent bisimulation equivalence. These computational properties are organized in a modular way by use of the concept of conservational extension, which include the following modules, note that, every algebra are composed of constants and operators, the constants are the computational objects, while operators capture the computational properties.

\begin{enumerate}
  \item \textbf{$BATC$ (Basic Algebras for True Concurrency)}. $BATC$ has sequential composition $\cdot$ and alternative composition $+$ to capture causality computation and conflict.
  The constants are ranged over $\mathbb{E}$, the set of atomic events. The algebraic laws on $\cdot$ and $+$ are sound and complete modulo truly concurrent bisimulation equivalences,
  such as pomset bisimulation $\sim_p$, step bisimulation $\sim_s$, history-preserving (hp-) bisimulation $\sim_{hp}$ and hereditary history-preserving (hhp-) bisimulation $\sim_{hhp}$.
  \item \textbf{$APTC$ (Algebra for Parallelism for True Concurrency)}. $APTC$ uses the whole parallel operator $\between$, the parallel operator $\parallel$ to model parallelism, and
  the communication merge $\mid$ to model causality (communication) among different parallel branches. Since a communication may be blocked, a new constant called deadlock $\delta$ is
  extended to $\mathbb{E}$, and also a new unary encapsulation operator $\partial_H$ is introduced to eliminate $\delta$, which may exist in the processes. And also a conflict
  elimination operator $\Theta$ to eliminate conflicts existing in different parallel branches. The algebraic laws on these operators are also sound and complete modulo truly concurrent
  bisimulation equivalences, such as pomset bisimulation $\sim_p$, step bisimulation $\sim_s$, history-preserving (hp-) bisimulation $\sim_{hp}$. Note that, these operators in a process
  except the parallel operator $\parallel$ can be eliminated by deductions on the process using axioms of $APTC$, and eventually be steadied by $\cdot$, $+$ and $\parallel$, this is
  also why bisimulations are called an \emph{truly concurrent} semantics.
  \item \textbf{Recursion}. To model infinite computation, recursion is introduced into $APTC$. In order to obtain a sound and complete theory, guarded recursion and linear recursion
  are needed. The corresponding axioms are $RSP$ (Recursive Specification Principle) and $RDP$ (Recursive Definition Principle), $RDP$ says the solutions of a recursive specification
  can represent the behaviors of the specification, while $RSP$ says that a guarded recursive specification has only one solution, they are sound with respect to $APTC$ with guarded
  recursion modulo truly concurrent bisimulation equivalences, such as pomset bisimulation $\sim_p$, step bisimulation $\sim_s$, history-preserving (hp-) bisimulation $\sim_{hp}$, and
  they are complete with respect to $APTC$ with linear recursion modulo truly concurrent bisimulation equivalence, such as pomset bisimulation $\sim_p$, step bisimulation $\sim_s$,
  history-preserving (hp-) bisimulation $\sim_{hp}$.
  \item \textbf{Abstraction}. To abstract away internal implementations from the external behaviors, a new constant $\tau$ called silent step is added to $\mathbb{E}$, and also a new
  unary abstraction operator $\tau_I$ is used to rename actions in $I$ into $\tau$ (the resulted $APTC$ with silent step and abstraction operator is called $APTC_{\tau}$). The recursive
  specification is adapted to guarded linear recursion to prevent infinite $\tau$-loops specifically. The axioms for $\tau$ and $\tau_I$ are sound modulo rooted branching truly
  concurrent bisimulation equivalences (a kind of weak truly concurrent bisimulation equivalence), such as rooted branching pomset bisimulation $\approx_p$, rooted branching step
  bisimulation $\approx_s$, rooted branching history-preserving (hp-) bisimulation $\approx_{hp}$. To eliminate infinite $\tau$-loops caused by $\tau_I$ and obtain the completeness,
  $CFAR$ (Cluster Fair Abstraction Rule) is used to prevent infinite $\tau$-loops in a constructible way.
\end{enumerate}

\subsection{$\pi_{tc}$}

$\pi_{tc}$ \cite{PITC} is a calculus of truly concurrent mobile processes. It includes syntax and semantics:

\begin{enumerate}
  \item Its syntax includes actions, process constant, and operators acting between actions, like Prefix, Summation, Composition, Restriction, Input and Output.
  \item Its semantics is based on labeled transition systems, Prefix, Summation, Composition, Restriction, Input and Output have their transition rules. $\pi_{tc}$ has good semantic properties
  based on the truly concurrent bisimulations. These properties include summation laws, identity laws, restriction laws, parallel laws, expansion laws, congruences, and also unique solution for recursion.
\end{enumerate}

\newpage\section{Location-related Bisimulation Semantics for True Concurrency}\label{osl}

In this chapter, we give the location-related bisimulation semantics for true concurrency. This chapter is organized as follows. We introduce the static location bisimulations in section
\ref{slb}, the dynamic location bisimulations in section \ref{dlb}.

\subsection{Static Location Bisimulations}\label{slb}

Let $Loc$ be the set of locations, and $u,v\in Loc^*$. Let $\ll$ be the sequential ordering on $Loc^*$, we call $v$ is an extension or a sublocation of $u$ in $u\ll v$; and if $u\nll v$
$v\nll u$, then $u$ and $v$ are independent and denoted $u\diamond v$.

\begin{definition}[Consistent location association]
A relation $\varphi\subseteq (Loc^*\times Loc^*)$ is a consistent location association (cla), if $(u,v)\in \varphi \&(u',v')\in\varphi$, then $u\diamond u'\Leftrightarrow v\diamond v'$.
\end{definition}

\begin{definition}[Static location pomset, step bisimulation]
Let $\mathcal{E}_1$, $\mathcal{E}_2$ be PESs. A static location pomset bisimulation is a relation $R_{\varphi}\subseteq\mathcal{C}(\mathcal{E}_1)\times\mathcal{C}(\mathcal{E}_2)$, such that if
$(C_1,C_2)\in R_{\varphi}$, and $C_1\xrightarrow[u]{X_1}C_1'$ then $C_2\xrightarrow[v]{X_2}C_2'$, with $X_1\subseteq \mathbb{E}_1$, $X_2\subseteq \mathbb{E}_2$, $X_1\sim X_2$ and
$(C_1',C_2')\in R_{\varphi\cup\{(u,v)\}}$,
and vice-versa. We say that $\mathcal{E}_1$, $\mathcal{E}_2$ are static location pomset bisimilar, written $\mathcal{E}_1\sim_p^{sl}\mathcal{E}_2$, if there exists a static location pomset bisimulation $R_{\varphi}$, such that
$(\emptyset,\emptyset)\in R_{\varphi}$. By replacing pomset transitions with steps, we can get the definition of static location step bisimulation. When PESs $\mathcal{E}_1$ and $\mathcal{E}_2$ are static location step
bisimilar, we write $\mathcal{E}_1\sim_s^{sl}\mathcal{E}_2$.
\end{definition}

\begin{definition}[Static location (hereditary) history-preserving bisimulation]
A static location history-preserving (hp-) bisimulation is a posetal relation $R_{\varphi}\subseteq\mathcal{C}(\mathcal{E}_1)\overline{\times}\mathcal{C}(\mathcal{E}_2)$ such that if $(C_1,f,C_2)\in R_{\varphi}$,
and $C_1\xrightarrow[u]{e_1} C_1'$, then $C_2\xrightarrow[v]{e_2} C_2'$, with $(C_1',f[e_1\mapsto e_2],C_2')\in R_{\varphi\cup\{(u,v)\}}$, and vice-versa. $\mathcal{E}_1,\mathcal{E}_2$ are static location history-preserving
(hp-)bisimilar and are written $\mathcal{E}_1\sim_{hp}^{sl}\mathcal{E}_2$ if there exists a static location hp-bisimulation $R_{\varphi}$ such that $(\emptyset,\emptyset,\emptyset)\in R_{\varphi}$.

A static location hereditary history-preserving (hhp-)bisimulation is a downward closed static location hp-bisimulation. $\mathcal{E}_1,\mathcal{E}_2$ are static location hereditary history-preserving (hhp-)bisimilar and are
written $\mathcal{E}_1\sim_{hhp}^{sl}\mathcal{E}_2$.
\end{definition}

\begin{definition}[Weak static location pomset, step bisimulation]
Let $\mathcal{E}_1$, $\mathcal{E}_2$ be PESs. A weak static location pomset bisimulation is a relation $R_{\varphi}\subseteq\mathcal{C}(\mathcal{E}_1)\times\mathcal{C}(\mathcal{E}_2)$, such that if
$(C_1,C_2)\in R_{\varphi}$, and $C_1\xRightarrow[u]{X_1}C_1'$ then $C_2\xRightarrow[v]{X_2}C_2'$, with $X_1\subseteq \hat{\mathbb{E}_1}$, $X_2\subseteq \hat{\mathbb{E}_2}$, $X_1\sim X_2$ and
$(C_1',C_2')\in R_{\varphi\cup\{(u,v)\}}$, and vice-versa. We say that $\mathcal{E}_1$, $\mathcal{E}_2$ are weak static location pomset bisimilar, written $\mathcal{E}_1\approx_p^{sl}\mathcal{E}_2$, if there exists a weak static location pomset
bisimulation $R_{\varphi}$, such that $(\emptyset,\emptyset)\in R_{\varphi}$. By replacing weak pomset transitions with weak steps, we can get the definition of weak static location step bisimulation. When PESs
$\mathcal{E}_1$ and $\mathcal{E}_2$ are weak static location step bisimilar, we write $\mathcal{E}_1\approx_s^{sl}\mathcal{E}_2$.
\end{definition}

\begin{definition}[Weak static location (hereditary) history-preserving bisimulation]
A weak static location history-preserving (hp-) bisimulation is a weakly posetal relation $R_{\varphi}\subseteq\mathcal{C}(\mathcal{E}_1)\overline{\times}\mathcal{C}(\mathcal{E}_2)$ such that if
$(C_1,f,C_2)\in R_{\varphi}$, and $C_1\xRightarrow[u]{e_1} C_1'$, then $C_2\xRightarrow[v]{e_2} C_2'$, with $(C_1',f[e_1\mapsto e_2],C_2')\in R_{\varphi\cup\{(u,v)\}}$, and vice-versa. $\mathcal{E}_1,\mathcal{E}_2$ are weak
static location history-preserving (hp-)bisimilar and are written $\mathcal{E}_1\approx_{hp}^{sl}\mathcal{E}_2$ if there exists a weak static location hp-bisimulation $R_{\varphi}$ such that $(\emptyset,\emptyset,\emptyset)\in R_{\varphi}$.

A weak static location hereditary history-preserving (hhp-)bisimulation is a downward closed weak static location hp-bisimulation. $\mathcal{E}_1,\mathcal{E}_2$ are weak static location hereditary history-preserving
(hhp-)bisimilar and are written $\mathcal{E}_1\approx_{hhp}^{sl}\mathcal{E}_2$.
\end{definition}

\begin{definition}[Branching static location pomset, step bisimulation]
Assume a special termination predicate $\downarrow$, and let $\surd$ represent a state with $\surd\downarrow$. Let $\mathcal{E}_1$, $\mathcal{E}_2$ be PESs. A branching static location pomset
bisimulation is a relation $R_{\varphi}\subseteq\mathcal{C}(\mathcal{E}_1)\times\mathcal{C}(\mathcal{E}_2)$, such that:
 \begin{enumerate}
   \item if $(C_1,C_2)\in R_{\varphi}$, and $C_1\xrightarrow[u]{X}C_1'$ then
   \begin{itemize}
     \item either $X\equiv \tau^*$, and $(C_1',C_2)\in R_{\varphi}$;
     \item or there is a sequence of (zero or more) $\tau$-transitions $C_2\xrightarrow{\tau^*} C_2^0$, such that $(C_1,C_2^0)\in R_{\varphi}$ and $C_2^0\xRightarrow[v]{X}C_2'$ with
     $(C_1',C_2')\in R_{\varphi\cup\{(u,v)\}}$;
   \end{itemize}
   \item if $(C_1,C_2)\in R_{\varphi}$, and $C_2\xrightarrow[v]{X}C_2'$ then
   \begin{itemize}
     \item either $X\equiv \tau^*$, and $(C_1,C_2')\in R_{\varphi}$;
     \item or there is a sequence of (zero or more) $\tau$-transitions $C_1\xrightarrow{\tau^*} C_1^0$, such that $(C_1^0,C_2)\in R_{\varphi}$ and $C_1^0\xRightarrow[u]{X}C_1'$ with
     $(C_1',C_2')\in R_{\varphi\cup\{(u,v)\}}$;
   \end{itemize}
   \item if $(C_1,C_2)\in R_{\varphi}$ and $C_1\downarrow$, then there is a sequence of (zero or more) $\tau$-transitions $C_2\xrightarrow{\tau^*}C_2^0$ such that $(C_1,C_2^0)\in R_{\varphi}$
   and $C_2^0\downarrow$;
   \item if $(C_1,C_2)\in R_{\varphi}$ and $C_2\downarrow$, then there is a sequence of (zero or more) $\tau$-transitions $C_1\xrightarrow{\tau^*}C_1^0$ such that $(C_1^0,C_2)\in R_{\varphi}$
   and $C_1^0\downarrow$.
 \end{enumerate}

We say that $\mathcal{E}_1$, $\mathcal{E}_2$ are branching static location pomset bisimilar, written $\mathcal{E}_1\approx_{bp}^{sl}\mathcal{E}_2$, if there exists a branching static location pomset bisimulation $R_{\varphi}$,
such that $(\emptyset,\emptyset)\in R_{\varphi}$.

By replacing pomset transitions with steps, we can get the definition of branching static location step bisimulation. When PESs $\mathcal{E}_1$ and $\mathcal{E}_2$ are branching static location step bisimilar,
we write $\mathcal{E}_1\approx_{bs}^{sl}\mathcal{E}_2$.
\end{definition}

\begin{definition}[Rooted branching static location pomset, step bisimulation]
Assume a special termination predicate $\downarrow$, and let $\surd$ represent a state with $\surd\downarrow$. Let $\mathcal{E}_1$, $\mathcal{E}_2$ be PESs. A rooted branching static location pomset
bisimulation is a relation $R_{\varphi}\subseteq\mathcal{C}(\mathcal{E}_1)\times\mathcal{C}(\mathcal{E}_2)$, such that:
 \begin{enumerate}
   \item if $(C_1,C_2)\in R_{\varphi}$, and $C_1\xrightarrow[u]{X}C_1'$ then $C_2\xrightarrow[v]{X}C_2'$ with $C_1'\approx_{bp}^{sl}C_2'$;
   \item if $(C_1,C_2)\in R_{\varphi}$, and $C_2\xrightarrow[v]{X}C_2'$ then $C_1\xrightarrow[u]{X}C_1'$ with $C_1'\approx_{bp}^{sl}C_2'$;
   \item if $(C_1,C_2)\in R_{\varphi}$ and $C_1\downarrow$, then $C_2\downarrow$;
   \item if $(C_1,C_2)\in R_{\varphi}$ and $C_2\downarrow$, then $C_1\downarrow$.
 \end{enumerate}

We say that $\mathcal{E}_1$, $\mathcal{E}_2$ are rooted branching static location pomset bisimilar, written $\mathcal{E}_1\approx_{rbp}^{sl}\mathcal{E}_2$, if there exists a rooted branching static location pomset
bisimulation $R_{\varphi}$, such that $(\emptyset,\emptyset)\in R_{\varphi}$.

By replacing pomset transitions with steps, we can get the definition of rooted branching static location step bisimulation. When PESs $\mathcal{E}_1$ and $\mathcal{E}_2$ are rooted branching static location step
bisimilar, we write $\mathcal{E}_1\approx_{rbs}^{sl}\mathcal{E}_2$.
\end{definition}

\begin{definition}[Branching static location (hereditary) history-preserving bisimulation]
Assume a special termination predicate $\downarrow$, and let $\surd$ represent a state with $\surd\downarrow$. A branching static location history-preserving (hp-) bisimulation is a posetal
relation $R_{\varphi}\subseteq\mathcal{C}(\mathcal{E}_1)\overline{\times}\mathcal{C}(\mathcal{E}_2)$ such that:

 \begin{enumerate}
   \item if $(C_1,f,C_2)\in R$, and $C_1\xrightarrow[u]{e_1}C_1'$ then
   \begin{itemize}
     \item either $e_1\equiv \tau$, and $(C_1',f[e_1\mapsto \tau],C_2)\in R_{\varphi}$;
     \item or there is a sequence of (zero or more) $\tau$-transitions $C_2\xrightarrow{\tau^*} C_2^0$, such that $(C_1,f,C_2^0)\in R_{\varphi}$ and $C_2^0\xrightarrow[v]{e_2}C_2'$ with
     $(C_1',f[e_1\mapsto e_2],C_2')\in R_{\varphi\cup\{(u,v)\}}$;
   \end{itemize}
   \item if $(C_1,f,C_2)\in R_{\varphi}$, and $C_2\xrightarrow[v]{e_2}C_2'$ then
   \begin{itemize}
     \item either $X\equiv \tau$, and $(C_1,f[e_2\mapsto \tau],C_2')\in R_{\varphi}$;
     \item or there is a sequence of (zero or more) $\tau$-transitions $C_1\xrightarrow{\tau^*} C_1^0$, such that $(C_1^0,f,C_2)\in R_{\varphi}$ and $C_1^0\xrightarrow[u]{e_1}C_1'$ with
     $(C_1',f[e_2\mapsto e_1],C_2')\in R_{\varphi\cup\{(u,v)\}}$;
   \end{itemize}
   \item if $(C_1,f,C_2)\in R_{\varphi}$ and $C_1\downarrow$, then there is a sequence of (zero or more) $\tau$-transitions $C_2\xrightarrow{\tau^*}C_2^0$ such that $(C_1,f,C_2^0)\in R_{\varphi}$
   and $C_2^0\downarrow$;
   \item if $(C_1,f,C_2)\in R_{\varphi}$ and $C_2\downarrow$, then there is a sequence of (zero or more) $\tau$-transitions $C_1\xrightarrow{\tau^*}C_1^0$ such that $(C_1^0,f,C_2)\in R_{\varphi}$
   and $C_1^0\downarrow$.
 \end{enumerate}

$\mathcal{E}_1,\mathcal{E}_2$ are branching static location history-preserving (hp-)bisimilar and are written $\mathcal{E}_1\approx_{bhp}^{sl}\mathcal{E}_2$ if there exists a branching static location hp-bisimulation
$R_{\varphi}$ such that $(\emptyset,\emptyset,\emptyset)\in R_{\varphi}$.

A branching static location hereditary history-preserving (hhp-)bisimulation is a downward closed branching static location hhp-bisimulation. $\mathcal{E}_1,\mathcal{E}_2$ are branching static location hereditary history-preserving
(hhp-)bisimilar and are written $\mathcal{E}_1\approx_{bhhp}^{sl}\mathcal{E}_2$.
\end{definition}

\begin{definition}[Rooted branching static location (hereditary) history-preserving bisimulation]
Assume a special termination predicate $\downarrow$, and let $\surd$ represent a state with $\surd\downarrow$. A rooted branching static location history-preserving (hp-) bisimulation is a posetal relation $R_{\varphi}\subseteq\mathcal{C}(\mathcal{E}_1)\overline{\times}\mathcal{C}(\mathcal{E}_2)$ such that:

 \begin{enumerate}
   \item if $(C_1,f,C_2)\in R_{\varphi}$, and $C_1\xrightarrow[u]{e_1}C_1'$, then $C_2\xrightarrow[v]{e_2}C_2'$ with $C_1'\approx_{bhp}^{sl}C_2'$;
   \item if $(C_1,f,C_2)\in R_{\varphi}$, and $C_2\xrightarrow[v]{e_2}C_2'$, then $C_1\xrightarrow[u]{e_1}C_1'$ with $C_1'\approx_{bhp}^{sl}C_2'$;
   \item if $(C_1,f,C_2)\in R_{\varphi}$ and $C_1\downarrow$, then $C_2\downarrow$;
   \item if $(C_1,f,C_2)\in R_{\varphi}$ and $C_2\downarrow$, then $C_1\downarrow$.
 \end{enumerate}

$\mathcal{E}_1,\mathcal{E}_2$ are rooted branching static location history-preserving (hp-)bisimilar and are written $\mathcal{E}_1\approx_{rbhp}^{sl}\mathcal{E}_2$ if there exists a rooted branching
static location hp-bisimulation $R_{\varphi}$ such that $(\emptyset,\emptyset,\emptyset)\in R_{\varphi}$.

A rooted branching static location hereditary history-preserving (hhp-)bisimulation is a downward closed rooted branching static location hp-bisimulation. $\mathcal{E}_1,\mathcal{E}_2$ are rooted branching
static location hereditary history-preserving (hhp-)bisimilar and are written $\mathcal{E}_1\approx_{rbhhp}^{sl}\mathcal{E}_2$.
\end{definition}

\subsection{Dynamic Location Bisimulations}\label{dlb}

We assume that $u,v\in Loc^+$.

\begin{definition}[Dynamic location pomset, step bisimulation]
Let $\mathcal{E}_1$, $\mathcal{E}_2$ be PESs. A dynamic location pomset bisimulation is a relation $R\subseteq\mathcal{C}(\mathcal{E}_1)\times\mathcal{C}(\mathcal{E}_2)$, such that if
$(C_1,C_2)\in R$, and $C_1\xrightarrow[u]{X_1}C_1'$ then $C_2\xrightarrow[u]{X_2}C_2'$, with $X_1\subseteq \mathbb{E}_1$, $X_2\subseteq \mathbb{E}_2$, $X_1\sim X_2$ and $(C_1',C_2')\in R$,
and vice-versa. We say that $\mathcal{E}_1$, $\mathcal{E}_2$ are dynamic location pomset bisimilar, written $\mathcal{E}_1\sim_p^{dl}\mathcal{E}_2$, if there exists a dynamic location pomset bisimulation $R$, such that
$(\emptyset,\emptyset)\in R$. By replacing pomset transitions with steps, we can get the definition of dynamic location step bisimulation. When PESs $\mathcal{E}_1$ and $\mathcal{E}_2$ are dynamic location step
bisimilar, we write $\mathcal{E}_1\sim_s^{dl}\mathcal{E}_2$.
\end{definition}

\begin{definition}[Dynamic location (hereditary) history-preserving bisimulation]
A dynamic location history-preserving (hp-) bisimulation is a posetal relation $R\subseteq\mathcal{C}(\mathcal{E}_1)\overline{\times}\mathcal{C}(\mathcal{E}_2)$ such that if $(C_1,f,C_2)\in R$,
and $C_1\xrightarrow[u]{e_1} C_1'$, then $C_2\xrightarrow[u]{e_2} C_2'$, with $(C_1',f[e_1\mapsto e_2],C_2')\in R$, and vice-versa. $\mathcal{E}_1,\mathcal{E}_2$ are dynamic location history-preserving
(hp-)bisimilar and are written $\mathcal{E}_1\sim_{hp}^{dl}\mathcal{E}_2$ if there exists a dynamic location hp-bisimulation $R$ such that $(\emptyset,\emptyset,\emptyset)\in R$.

A dynamic location hereditary history-preserving (hhp-)bisimulation is a downward closed dynamic location hp-bisimulation. $\mathcal{E}_1,\mathcal{E}_2$ are dynamic location hereditary history-preserving (hhp-)bisimilar and are
written $\mathcal{E}_1\sim_{hhp}^{dl}\mathcal{E}_2$.
\end{definition}

\begin{definition}[Weak dynamic location pomset, step bisimulation]
Let $\mathcal{E}_1$, $\mathcal{E}_2$ be PESs. A weak dynamic location pomset bisimulation is a relation $R\subseteq\mathcal{C}(\mathcal{E}_1)\times\mathcal{C}(\mathcal{E}_2)$, such that if
$(C_1,C_2)\in R$, and $C_1\xRightarrow[u]{X_1}C_1'$ then $C_2\xRightarrow[u]{X_2}C_2'$, with $X_1\subseteq \hat{\mathbb{E}_1}$, $X_2\subseteq \hat{\mathbb{E}_2}$, $X_1\sim X_2$ and
$(C_1',C_2')\in R$, and vice-versa. We say that $\mathcal{E}_1$, $\mathcal{E}_2$ are weak dynamic location pomset bisimilar, written $\mathcal{E}_1\approx_p^{dl}\mathcal{E}_2$, if there exists a weak dynamic location pomset
bisimulation $R$, such that $(\emptyset,\emptyset)\in R$. By replacing weak pomset transitions with weak steps, we can get the definition of weak dynamic location step bisimulation. When PESs
$\mathcal{E}_1$ and $\mathcal{E}_2$ are weak dynamic location step bisimilar, we write $\mathcal{E}_1\approx_s^{dl}\mathcal{E}_2$.
\end{definition}

\begin{definition}[Weak dynamic location (hereditary) history-preserving bisimulation]
A weak dynamic location history-preserving (hp-) bisimulation is a weakly posetal relation $R\subseteq\mathcal{C}(\mathcal{E}_1)\overline{\times}\mathcal{C}(\mathcal{E}_2)$ such that if
$(C_1,f,C_2)\in R$, and $C_1\xRightarrow[u]{e_1} C_1'$, then $C_2\xRightarrow[u]{e_2} C_2'$, with $(C_1',f[e_1\mapsto e_2],C_2')\in R$, and vice-versa. $\mathcal{E}_1,\mathcal{E}_2$ are weak
dynamic location history-preserving (hp-)bisimilar and are written $\mathcal{E}_1\approx_{hp}^{dl}\mathcal{E}_2$ if there exists a weak dynamic location hp-bisimulation $R$ such that $(\emptyset,\emptyset,\emptyset)\in R$.

A weak dynamic location hereditary history-preserving (hhp-)bisimulation is a downward closed weak dynamic location hp-bisimulation. $\mathcal{E}_1,\mathcal{E}_2$ are weak dynamic location hereditary history-preserving
(hhp-)bisimilar and are written $\mathcal{E}_1\approx_{hhp}^{dl}\mathcal{E}_2$.
\end{definition}

\begin{definition}[Branching dynamic location pomset, step bisimulation]
Assume a special termination predicate $\downarrow$, and let $\surd$ represent a state with $\surd\downarrow$. Let $\mathcal{E}_1$, $\mathcal{E}_2$ be PESs. A branching dynamic location pomset
bisimulation is a relation $R\subseteq\mathcal{C}(\mathcal{E}_1)\times\mathcal{C}(\mathcal{E}_2)$, such that:
 \begin{enumerate}
   \item if $(C_1,C_2)\in R$, and $C_1\xrightarrow[u]{X}C_1'$ then
   \begin{itemize}
     \item either $X\equiv \tau^*$, and $(C_1',C_2)\in R$;
     \item or there is a sequence of (zero or more) $\tau$-transitions $C_2\xrightarrow{\tau^*} C_2^0$, such that $(C_1,C_2^0)\in R$ and $C_2^0\xRightarrow[u]{X}C_2'$ with
     $(C_1',C_2')\in R$;
   \end{itemize}
   \item if $(C_1,C_2)\in R$, and $C_2\xrightarrow[u]{X}C_2'$ then
   \begin{itemize}
     \item either $X\equiv \tau^*$, and $(C_1,C_2')\in R$;
     \item or there is a sequence of (zero or more) $\tau$-transitions $C_1\xrightarrow{\tau^*} C_1^0$, such that $(C_1^0,C_2)\in R$ and $C_1^0\xRightarrow[u]{X}C_1'$ with
     $(C_1',C_2')\in R$;
   \end{itemize}
   \item if $(C_1,C_2)\in R$ and $C_1\downarrow$, then there is a sequence of (zero or more) $\tau$-transitions $C_2\xrightarrow{\tau^*}C_2^0$ such that $(C_1,C_2^0)\in R$
   and $C_2^0\downarrow$;
   \item if $(C_1,C_2)\in R$ and $C_2\downarrow$, then there is a sequence of (zero or more) $\tau$-transitions $C_1\xrightarrow{\tau^*}C_1^0$ such that $(C_1^0,C_2)\in R$
   and $C_1^0\downarrow$.
 \end{enumerate}

We say that $\mathcal{E}_1$, $\mathcal{E}_2$ are branching dynamic location pomset bisimilar, written $\mathcal{E}_1\approx_{bp}^{dl}\mathcal{E}_2$, if there exists a branching dynamic location pomset bisimulation $R$,
such that $(\emptyset,\emptyset)\in R$.

By replacing pomset transitions with steps, we can get the definition of branching dynamic location step bisimulation. When PESs $\mathcal{E}_1$ and $\mathcal{E}_2$ are branching dynamic location step bisimilar,
we write $\mathcal{E}_1\approx_{bs}^{dl}\mathcal{E}_2$.
\end{definition}

\begin{definition}[Rooted branching dynamic location pomset, step bisimulation]
Assume a special termination predicate $\downarrow$, and let $\surd$ represent a state with $\surd\downarrow$. Let $\mathcal{E}_1$, $\mathcal{E}_2$ be PESs. A rooted branching dynamic location pomset
bisimulation is a relation $R\subseteq\mathcal{C}(\mathcal{E}_1)\times\mathcal{C}(\mathcal{E}_2)$, such that:
 \begin{enumerate}
   \item if $(C_1,C_2)\in R$, and $C_1\xrightarrow[u]{X}C_1'$ then $C_2\xrightarrow[u]{X}C_2'$ with $C_1'\approx_{bp}^{dl}C_2'$;
   \item if $(C_1,C_2)\in R$, and $C_2\xrightarrow[u]{X}C_2'$ then $C_1\xrightarrow[u]{X}C_1'$ with $C_1'\approx_{bp}^{dl}C_2'$;
   \item if $(C_1,C_2)\in R$ and $C_1\downarrow$, then $C_2\downarrow$;
   \item if $(C_1,C_2)\in R$ and $C_2\downarrow$, then $C_1\downarrow$.
 \end{enumerate}

We say that $\mathcal{E}_1$, $\mathcal{E}_2$ are rooted branching dynamic location pomset bisimilar, written $\mathcal{E}_1\approx_{rbp}^{dl}\mathcal{E}_2$, if there exists a rooted branching dynamic location pomset
bisimulation $R$, such that $(\emptyset,\emptyset)\in R$.

By replacing pomset transitions with steps, we can get the definition of rooted branching dynamic location step bisimulation. When PESs $\mathcal{E}_1$ and $\mathcal{E}_2$ are rooted branching dynamic location step
bisimilar, we write $\mathcal{E}_1\approx_{rbs}^{dl}\mathcal{E}_2$.
\end{definition}

\begin{definition}[Branching dynamic location (hereditary) history-preserving bisimulation]
Assume a special termination predicate $\downarrow$, and let $\surd$ represent a state with $\surd\downarrow$. A branching dynamic location history-preserving (hp-) bisimulation is a posetal
relation $R\subseteq\mathcal{C}(\mathcal{E}_1)\overline{\times}\mathcal{C}(\mathcal{E}_2)$ such that:

 \begin{enumerate}
   \item if $(C_1,f,C_2)\in R$, and $C_1\xrightarrow[u]{e_1}C_1'$ then
   \begin{itemize}
     \item either $e_1\equiv \tau$, and $(C_1',f[e_1\mapsto \tau],C_2)\in R$;
     \item or there is a sequence of (zero or more) $\tau$-transitions $C_2\xrightarrow{\tau^*} C_2^0$, such that $(C_1,f,C_2^0)\in R$ and $C_2^0\xrightarrow[u]{e_2}C_2'$ with
     $(C_1',f[e_1\mapsto e_2],C_2')\in R$;
   \end{itemize}
   \item if $(C_1,f,C_2)\in R$, and $C_2\xrightarrow[u]{e_2}C_2'$ then
   \begin{itemize}
     \item either $X\equiv \tau$, and $(C_1,f[e_2\mapsto \tau],C_2')\in R$;
     \item or there is a sequence of (zero or more) $\tau$-transitions $C_1\xrightarrow{\tau^*} C_1^0$, such that $(C_1^0,f,C_2)\in R$ and $C_1^0\xrightarrow[u]{e_1}C_1'$ with
     $(C_1',f[e_2\mapsto e_1],C_2')\in R$;
   \end{itemize}
   \item if $(C_1,f,C_2)\in R$ and $C_1\downarrow$, then there is a sequence of (zero or more) $\tau$-transitions $C_2\xrightarrow{\tau^*}C_2^0$ such that $(C_1,f,C_2^0)\in R$
   and $C_2^0\downarrow$;
   \item if $(C_1,f,C_2)\in R$ and $C_2\downarrow$, then there is a sequence of (zero or more) $\tau$-transitions $C_1\xrightarrow{\tau^*}C_1^0$ such that $(C_1^0,f,C_2)\in R$
   and $C_1^0\downarrow$.
 \end{enumerate}

$\mathcal{E}_1,\mathcal{E}_2$ are branching dynamic location history-preserving (hp-)bisimilar and are written $\mathcal{E}_1\approx_{bhp}^{dl}\mathcal{E}_2$ if there exists a branching dynamic location hp-bisimulation
$R$ such that $(\emptyset,\emptyset,\emptyset)\in R$.

A branching dynamic location hereditary history-preserving (hhp-)bisimulation is a downward closed branching dynamic location hp-bisimulation. $\mathcal{E}_1,\mathcal{E}_2$ are branching dynamic location hereditary history-preserving
(hhp-)bisimilar and are written $\mathcal{E}_1\approx_{bhhp}^{dl}\mathcal{E}_2$.
\end{definition}

\begin{definition}[Rooted branching dynamic location (hereditary) history-preserving bisimulation]
Assume a special termination predicate $\downarrow$, and let $\surd$ represent a state with $\surd\downarrow$. A rooted branching dynamic location history-preserving (hp-) bisimulation is a posetal relation $R\subseteq\mathcal{C}(\mathcal{E}_1)\overline{\times}\mathcal{C}(\mathcal{E}_2)$ such that:

 \begin{enumerate}
   \item if $(C_1,f,C_2)\in R$, and $C_1\xrightarrow[u]{e_1}C_1'$, then $C_2\xrightarrow[u]{e_2}C_2'$ with $C_1'\approx_{bhp}^{dl}C_2'$;
   \item if $(C_1,f,C_2)\in R$, and $C_2\xrightarrow[u]{e_2}C_2'$, then $C_1\xrightarrow[u]{e_1}C_1'$ with $C_1'\approx_{bhp}^{dl}C_2'$;
   \item if $(C_1,f,C_2)\in R$ and $C_1\downarrow$, then $C_2\downarrow$;
   \item if $(C_1,f,C_2)\in R$ and $C_2\downarrow$, then $C_1\downarrow$.
 \end{enumerate}

$\mathcal{E}_1,\mathcal{E}_2$ are rooted branching dynamic location history-preserving (hp-)bisimilar and are written $\mathcal{E}_1\approx_{rbhp}^{dl}\mathcal{E}_2$ if there exists a rooted branching
dynamic location hp-bisimulation $R$ such that $(\emptyset,\emptyset,\emptyset)\in R$.

A rooted branching dynamic location hereditary history-preserving (hhp-)bisimulation is a downward closed rooted branching dynamic location hp-bisimulation. $\mathcal{E}_1,\mathcal{E}_2$ are rooted branching
dynamic location hereditary history-preserving (hhp-)bisimilar and are written $\mathcal{E}_1\approx_{rbhhp}^{dl}\mathcal{E}_2$.
\end{definition}

\newpage\section{CTC with Localities}\label{ctcl}

In this chapter, we introduce CTC with localities, including CTC with static localities in section \ref{ctcsl}, CTC with dynamic localities in section \ref{ctcdl}.

\subsection{CTC with Static Localities}{\label{ctcsl}}

\subsubsection{Syntax and Operational Semantics}

We assume an infinite set $\mathcal{N}$ of (action or event) names, and use $a,b,c,\cdots$ to range over $\mathcal{N}$. We denote by $\overline{\mathcal{N}}$ the set of co-names and
let $\overline{a},\overline{b},\overline{c},\cdots$ range over $\overline{\mathcal{N}}$. Then we set $\mathcal{L}=\mathcal{N}\cup\overline{\mathcal{N}}$ as the set of labels, and use
$l,\overline{l}$ to range over $\mathcal{L}$. We extend complementation to $\mathcal{L}$ such that $\overline{\overline{a}}=a$. Let $\tau$ denote the silent step (internal action or
event) and define $Act=\mathcal{L}\cup\{\tau\}$ to be the set of actions, $\alpha,\beta$ range over $Act$. And $K,L$ are used to stand for subsets of $\mathcal{L}$ and $\overline{L}$
is used for the set of complements of labels in $L$. A relabelling function $f$ is a function from $\mathcal{L}$ to $\mathcal{L}$ such that $f(\overline{l})=\overline{f(l)}$. By
defining $f(\tau)=\tau$, we extend $f$ to $Act$.

Further, we introduce a set $\mathcal{X}$ of process variables, and a set $\mathcal{K}$ of process constants, and let $X,Y,\cdots$ range over $\mathcal{X}$, and $A,B,\cdots$ range
over $\mathcal{K}$, $\widetilde{X}$ is a tuple of distinct process variables, and also $E,F,\cdots$ range over the recursive expressions. We write $\mathcal{P}$ for the set of
processes. Sometimes, we use $I,J$ to stand for an indexing set, and we write $E_i:i\in I$ for a family of expressions indexed by $I$. $Id_D$ is the identity function or relation over
set $D$.

For each process constant schema $A$, a defining equation of the form

$$A\overset{\text{def}}{=}P$$

is assumed, where $P$ is a process.


Let $Loc$ be the set of locations, and $loc\in Loc$, $u,v\in Loc^*$, $\epsilon$ is the empty location. A distribution allocates a location $u\in Loc*$ to an action $\alpha$ denoted
$u::\alpha$ or a process $P$ denoted $u::P$.

\begin{definition}[Syntax]\label{syntax3}
Truly concurrent processes with static location are defined inductively by the following formation rules:

\begin{enumerate}
  \item $A\in\mathcal{P}$;
  \item $\textbf{nil}\in\mathcal{P}$;
  \item if $P\in\mathcal{P}$ and $loc\in Loc$, the Location $loc::P\in\mathcal{P}$;
  \item if $P\in\mathcal{P}$, then the Prefix $\alpha.P\in\mathcal{P}$, for $\alpha\in Act$;
  \item if $P,Q\in\mathcal{P}$, then the Summation $P+Q\in\mathcal{P}$;
  \item if $P,Q\in\mathcal{P}$, then the Composition $P\parallel Q\in\mathcal{P}$;
  \item if $P\in\mathcal{P}$, then the Prefix $(\alpha_1\parallel\cdots\parallel\alpha_n).P\in\mathcal{P}\quad(n\in I)$, for $\alpha_,\cdots,\alpha_n\in Act$;
  \item if $P\in\mathcal{P}$, then the Restriction $P\setminus L\in\mathcal{P}$ with $L\in\mathcal{L}$;
  \item if $P\in\mathcal{P}$, then the Relabelling $P[f]\in\mathcal{P}$.
\end{enumerate}

The standard BNF grammar of syntax of CTC with static localities can be summarized as follows:

$$P::=A|\textbf{nil}|loc::P|\alpha.P| P+P | P\parallel P | (\alpha_1\parallel\cdots\parallel\alpha_n).P | P\setminus L | P[f]$$
\end{definition}


The operational semantics is defined by LTSs (labelled transition systems), and it is detailed by the following definition.

\begin{definition}[Semantics]\label{semantics3}
The operational semantics of CTC with static localities corresponding to the syntax in Definition \ref{syntax3} is defined by a series of transition rules, named $\textbf{Act}$, $\textbf{Loc}$,
$\textbf{Sum}$, $\textbf{Com}$, $\textbf{Res}$, $\textbf{Rel}$ and $\textbf{Con}$ indicate that the rules are associated respectively with Prefix, Summation, Composition, Restriction,
Relabelling and Constants in Definition \ref{syntax3}. They are shown in Table \ref{TRForCTC3}.

\begin{center}
    \begin{table}
        \[\textbf{Act}_1\quad \frac{}{\alpha.P\xrightarrow[\epsilon]{\alpha}P}\]

        \[\textbf{Loc}\quad \frac{P\xrightarrow[u]{\alpha}P'}{loc::P\xrightarrow[loc\ll u]{\alpha}loc::P'}\]

        \[\textbf{Sum}_1\quad \frac{P\xrightarrow[u]{\alpha}P'}{P+Q\xrightarrow[u]{\alpha}P'}\]

        \[\textbf{Com}_1\quad \frac{P\xrightarrow[u]{\alpha}P'\quad Q\nrightarrow}{P\parallel Q\xrightarrow[u]{\alpha}P'\parallel Q}\]

        \[\textbf{Com}_2\quad \frac{Q\xrightarrow[u]{\alpha}Q'\quad P\nrightarrow}{P\parallel Q\xrightarrow[u]{\alpha}P\parallel Q'}\]

        \[\textbf{Com}_3\quad \frac{P\xrightarrow[u]{\alpha}P'\quad Q\xrightarrow[v]{\beta}Q'}{P\parallel Q\xrightarrow[u\diamond v]{\{\alpha,\beta\}}P'\parallel Q'}\quad (\beta\neq\overline{\alpha})\]

        \[\textbf{Com}_4\quad \frac{P\xrightarrow[u]{l}P'\quad Q\xrightarrow[v]{\overline{l}}Q'}{P\parallel Q\xrightarrow[u\diamond v]{\tau}P'\parallel Q'}\]

        \[\textbf{Act}_2\quad \frac{}{(\alpha_1\parallel\cdots\parallel\alpha_n).P\xrightarrow[\epsilon]{\{\alpha_1,\cdots,\alpha_n\}}P}\quad (\alpha_i\neq\overline{\alpha_j}\quad i,j\in\{1,\cdots,n\})\]

        \[\textbf{Sum}_2\quad \frac{P\xrightarrow[u]{\{\alpha_1,\cdots,\alpha_n\}}P'}{P+Q\xrightarrow[u]{\{\alpha_1,\cdots,\alpha_n\}}P'}\]

%
%
%
%
%

        \caption{Transition rules of CTC with static localities}
        \label{TRForCTC3}
    \end{table}
\end{center}

\begin{center}
    \begin{table}
%
%
%
%
%
%
%
%
%
        \[\textbf{Res}_1\quad \frac{P\xrightarrow[u]{\alpha}P'}{P\setminus L\xrightarrow[u]{\alpha}P'\setminus L}\quad (\alpha,\overline{\alpha}\notin L)\]

        \[\textbf{Res}_2\quad \frac{P\xrightarrow[u]{\{\alpha_1,\cdots,\alpha_n\}}P'}{P\setminus L\xrightarrow[u]{\{\alpha_1,\cdots,\alpha_n\}}P'\setminus L}\quad (\alpha_1,\overline{\alpha_1},\cdots,\alpha_n,\overline{\alpha_n}\notin L)\]

        \[\textbf{Rel}_1\quad \frac{P\xrightarrow[u]{\alpha}P'}{P[f]\xrightarrow[u]{f(\alpha)}P'[f]}\]

        \[\textbf{Rel}_2\quad \frac{P\xrightarrow[u]{\{\alpha_1,\cdots,\alpha_n\}}P'}{P[f]\xrightarrow[u]{\{f(\alpha_1),\cdots,f(\alpha_n)\}}P'[f]}\]

        \[\textbf{Con}_1\quad\frac{P\xrightarrow[u]{\alpha}P'}{A\xrightarrow[u]{\alpha}P'}\quad (A\overset{\text{def}}{=}P)\]

        \[\textbf{Con}_2\quad\frac{P\xrightarrow[u]{\{\alpha_1,\cdots,\alpha_n\}}P'}{A\xrightarrow[u]{\{\alpha_1,\cdots,\alpha_n\}}P'}\quad (A\overset{\text{def}}{=}P)\]

        \caption{Transition rules of CTC with static localities (continuing)}
        \label{TRForCTC32}
    \end{table}
\end{center}
\end{definition}


\begin{definition}[Sorts]\label{sorts3}
Given the sorts $\mathcal{L}(A)$ and $\mathcal{L}(X)$ of constants and variables, we define $\mathcal{L}(P)$ inductively as follows.

\begin{enumerate}
  \item $\mathcal{L}(loc::P)=\mathcal{L}(P)$;
  \item $\mathcal{L}(l.P)=\{l\}\cup\mathcal{L}(P)$;
  \item $\mathcal{L}((l_1\parallel \cdots\parallel l_n).P)=\{l_1,\cdots,l_n\}\cup\mathcal{L}(P)$;
  \item $\mathcal{L}(\tau.P)=\mathcal{L}(P)$;
  \item $\mathcal{L}(P+Q)=\mathcal{L}(P)\cup\mathcal{L}(Q)$;
  \item $\mathcal{L}(P\parallel Q)=\mathcal{L}(P)\cup\mathcal{L}(Q)$;
  \item $\mathcal{L}(P\setminus L)=\mathcal{L}(P)-(L\cup\overline{L})$;
  \item $\mathcal{L}(P[f])=\{f(l):l\in\mathcal{L}(P)\}$;
  \item for $A\overset{\text{def}}{=}P$, $\mathcal{L}(P)\subseteq\mathcal{L}(A)$.
\end{enumerate}
\end{definition}

Now, we present some properties of the transition rules defined in Table \ref{TRForCTC3}.

\begin{proposition}
If $P\xrightarrow[u]{\alpha}P'$, then
\begin{enumerate}
  \item $\alpha\in\mathcal{L}(P)\cup\{\tau\}$;
  \item $\mathcal{L}(P')\subseteq\mathcal{L}(P)$.
\end{enumerate}

If $P\xrightarrow[u]{\{\alpha_1,\cdots,\alpha_n\}}P'$, then
\begin{enumerate}
  \item $\alpha_1,\cdots,\alpha_n\in\mathcal{L}(P)\cup\{\tau\}$;
  \item $\mathcal{L}(P')\subseteq\mathcal{L}(P)$.
\end{enumerate}
\end{proposition}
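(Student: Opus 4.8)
The plan is to prove both implications simultaneously by induction on the depth of the inference tree deriving the transition (equivalently, a case analysis on the last transition rule applied from Tables \ref{TRForCTC3} and \ref{TRForCTC32}). A \emph{simultaneous} induction over the single-action statement and the multiset statement is forced, because rule $\textbf{Com}_3$ combines two single-action premises into a multiset conclusion, so the two claims cannot be established independently.

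For the base cases I would treat $\textbf{Act}_1$ and $\textbf{Act}_2$. In $\textbf{Act}_1$ the conclusion $\alpha.P\xrightarrow[\epsilon]{\alpha}P$ gives $P'=P$: if $\alpha=\tau$ then $\alpha\in\{\tau\}$ and $\mathcal{L}(\tau.P)=\mathcal{L}(P)$ by clause 4 of Definition \ref{sorts3}, while if $\alpha=l\in\mathcal{L}$ then $\alpha\in\{l\}\cup\mathcal{L}(P)=\mathcal{L}(l.P)$ by clause 2; in either case $\mathcal{L}(P')=\mathcal{L}(P)\subseteq\mathcal{L}(\alpha.P)$. The case $\textbf{Act}_2$ is identical using clause 3.

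For the inductive step I would proceed rule by rule, in each case applying the induction hypothesis to the premise(s) and then reading off the clause of Definition \ref{sorts3} for the operator in the source. The routine cases ($\textbf{Loc}$, $\textbf{Sum}_1$, $\textbf{Sum}_2$, $\textbf{Com}_1$, $\textbf{Com}_2$, $\textbf{Com}_3$, $\textbf{Com}_4$) reduce to monotonicity of $\mathcal{L}$ along the subterm structure, together with the inclusions $\mathcal{L}(P)\subseteq\mathcal{L}(P)\cup\mathcal{L}(Q)$; note $\textbf{Com}_4$ emits label $\tau$, which lands in $\{\tau\}$ automatically. Three cases deserve care. In $\textbf{Res}_1$ (and $\textbf{Res}_2$) I must invoke the side condition $\alpha,\overline{\alpha}\notin L$: the induction hypothesis gives $\alpha\in\mathcal{L}(P)\cup\{\tau\}$, and when $\alpha=l$ the side condition ensures $l\notin L\cup\overline{L}$, so $l\in\mathcal{L}(P)-(L\cup\overline{L})=\mathcal{L}(P\setminus L)$ by clause 7, while $\mathcal{L}(P')\subseteq\mathcal{L}(P)$ is preserved under the same subtraction. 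In $\textbf{Rel}_1$ (and $\textbf{Rel}_2$) the label becomes $f(\alpha)$, so I split on whether $\alpha=\tau$ (using $f(\tau)=\tau$) or $\alpha=l$ (using $f(l)\in\{f(l'):l'\in\mathcal{L}(P)\}=\mathcal{L}(P[f])$ by clause 8), with image-monotonicity of $f$ giving the sort inclusion. In $\textbf{Con}_1$ (and $\textbf{Con}_2$), where $A\overset{\text{def}}{=}P$, the hypothesis applies to $P$ and I chain through clause 9, $\mathcal{L}(P)\subseteq\mathcal{L}(A)$.

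Nothing here is a genuine obstacle; the only spots where one could slip are the uniform treatment of the silent action $\tau$, which never belongs to any label sort but is always absorbed by the explicit $\cup\{\tau\}$, and the restriction case, where omitting the side condition would break the claim. The whole argument is a direct rule-by-rule verification.
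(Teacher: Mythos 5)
Your proof is correct and follows essentially the same route as the paper: induction on the inference of the transition with a case analysis over the rules of Tables \ref{TRForCTC3} and \ref{TRForCTC32}, using the clauses of Definition \ref{sorts3}. The paper writes out only the $\textbf{Act}_1$ and $\textbf{Act}_2$ cases and omits the rest, whereas you correctly fill in the omitted inductive cases (including the side condition in $\textbf{Res}$, the relabelling image in $\textbf{Rel}$, and clause 9 for $\textbf{Con}$), so your treatment is, if anything, more complete.
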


\begin{proof}
By induction on the inference of $P\xrightarrow[u]{\alpha}P'$ and $P\xrightarrow[u]{\{\alpha_1,\cdots,\alpha_n\}}P'$, there are several cases corresponding to the transition rules
in Table \ref{TRForCTC3}, we just prove the one case $\textbf{Act}_1$ and $\textbf{Act}_2$, and omit the others.

Case $\textbf{Act}_1$: by $\textbf{Act}_1$, with $P\equiv\alpha.P'$. Then by Definition \ref{sorts3}, we have (1) $\mathcal{L}(P)=\{\alpha\}\cup\mathcal{L}(P')$ if $\alpha\neq\tau$;
(2) $\mathcal{L}(P)=\mathcal{L}(P')$ if $\alpha=\tau$. So, $\alpha\in\mathcal{L}(P)\cup\{\tau\}$, and $\mathcal{L}(P')\subseteq\mathcal{L}(P)$, as desired.

Case $\textbf{Act}_2$: by $\textbf{Act}_2$, with $P\equiv(\alpha_1\parallel\cdots\parallel\alpha_n).P'$. Then by Definition \ref{sorts3}, we have (1)
$\mathcal{L}(P)=\{\alpha_1,\cdots,\alpha_n\}\cup\mathcal{L}(P')$ if $\alpha_i\neq\tau$ for $i\leq n$; (2) $\mathcal{L}(P)=\mathcal{L}(P')$ if
$\alpha_1,\cdots,\alpha_n=\tau$. So, $\alpha_1,\cdots,\alpha_n\in\mathcal{L}(P)\cup\{\tau\}$, and $\mathcal{L}(P')\subseteq\mathcal{L}(P)$, as desired.
\end{proof}

\subsubsection{Strong Bisimulations}


Based on the concepts of strong bisimulation equivalences, we get the following laws.

\begin{proposition}[Monoid laws for strong static location pomset bisimulation]
The monoid laws for strong static location pomset bisimulation are as follows.

\begin{enumerate}
  \item $P+Q\sim_p^{sl} Q+P$;
  \item $P+(Q+R)\sim_p^{sl} (P+Q)+R$;
  \item $P+P\sim_p^{sl} P$;
  \item $P+\textbf{nil}\sim_p^{sl} P$.
\end{enumerate}

\end{proposition}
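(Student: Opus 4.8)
The plan is to prove each of the four laws by exhibiting an explicit static location pomset bisimulation and checking the transfer conditions of the relevant definition directly against the summation rules $\textbf{Sum}_1$ and $\textbf{Sum}_2$ (applied symmetrically to both summands). For every consistent location association $\varphi$ I would take, respectively, $R_{\varphi}=\{(P+Q,\ Q+P)\}\cup Id$ for commutativity, $R_{\varphi}=\{(P+(Q+R),\ (P+Q)+R)\}\cup Id$ for associativity, $R_{\varphi}=\{(P+P,\ P)\}\cup Id$ for idempotence, and $R_{\varphi}=\{(P+\textbf{nil},\ P)\}\cup Id$ for the unit law, where $Id$ is the identity relation on process terms. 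In each case $(\emptyset,\emptyset)\in R_{\varphi}$, so it remains only to verify the forward and backward simulation clauses.

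The uniform observation driving all four arguments is that any move out of a summation is, by $\textbf{Sum}_1$/$\textbf{Sum}_2$, forced to be a move of one of its summands, performed with exactly the same location $u$, the same label $X$, and the same target term as the summand's move. Hence for commutativity, if $P+Q\xrightarrow[u]{X}P'$ because $P\xrightarrow[u]{X}P'$, then $Q+P\xrightarrow[u]{X}P'$ by the right-summand rule, landing on the identical target $P'$ with $(P',P')\in Id\subseteq R_{\varphi}$; the case where the other summand fires, and the backward direction, are handled the same way. Associativity, idempotence and the unit law follow the identical pattern: the first move is relayed to the matching summand on the opposite side (for the unit law note that $\textbf{nil}$ contributes no transitions), after which both terms coincide and are related by $Id$ for all subsequent behaviour. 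The pomset side-condition $X_1\sim X_2$ is immediate, since the matched labels are literally equal, $X_1=X_2$.

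The only genuinely location-specific check is that the association extends consistently. Because every matched pair of transitions uses one and the same location, i.e. $v=u$ throughout, each step extends $\varphi$ only by a diagonal pair $(u,u)$; for such pairs the condition $u\diamond u'\Leftrightarrow v\diamond v'$ defining a consistent location association holds trivially, so $\varphi\cup\{(u,u)\}$ is again a cla and $R_{\varphi\cup\{(u,u)\}}$ still contains the required target pair. Thus the clause $(C_1',C_2')\in R_{\varphi\cup\{(u,v)\}}$ is satisfied.

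I expect the main difficulty here to be bookkeeping rather than mathematics. One must be careful to apply the summation rules symmetrically to both summands, to discharge the backward clauses explicitly rather than invoking symmetry of $+$ (which is precisely what commutativity is asserting), and to confirm that the location component genuinely adds no constraint because matched moves sit at equal locations. No hard case arises: both sides always reach identical targets at identical locations, which is exactly why the monoid laws survive the refinement to truly concurrent, location-aware bisimilarity.
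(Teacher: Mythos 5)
Your proposal is correct and follows essentially the same route as the paper: the paper's proof exhibits exactly the same witness relations $R=\{(P+Q,\,Q+P)\}\cup \textbf{Id}$ (and analogously for the other three laws) and then defers the transfer-condition check to the corresponding monoid-law proofs in CTC. You simply carry out that deferred verification explicitly, including the observation that matched moves share a location so $\varphi$ only ever grows by diagonal pairs, which is the location-specific point the paper leaves implicit.
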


\begin{proof}
\begin{enumerate}
  \item $P+Q\sim_p^{sl} Q+P$. It is sufficient to prove the relation $R=\{(P+Q, Q+P)\}\cup \textbf{Id}$ is a strong static location pomset bisimulation for some distributions. It can be proved similarly to the proof of
  Monoid laws for strong pomset bisimulation in CTC, we omit it;
  \item $P+(Q+R)\sim_p^{sl} (P+Q)+R$. It is sufficient to prove the relation $R=\{(P+(Q+R), (P+Q)+R)\}\cup \textbf{Id}$ is a strong static location pomset bisimulation for some distributions. It can be proved similarly to the proof of
  Monoid laws for strong pomset bisimulation in CTC, we omit it;
  \item $P+P\sim_p^{sl} P$. It is sufficient to prove the relation $R=\{(P+P, P)\}\cup \textbf{Id}$ is a strong static location pomset bisimulation for some distributions. It can be proved similarly to the proof of
  Monoid laws for strong pomset bisimulation in CTC, we omit it;
  \item $P+\textbf{nil}\sim_p^{sl} P$. It is sufficient to prove the relation $R=\{(P+\textbf{nil}, P)\}\cup \textbf{Id}$ is a strong static location pomset bisimulation for some distributions. It can be proved similarly to the proof of
  Monoid laws for strong pomset bisimulation in CTC, we omit it.
\end{enumerate}
\end{proof}

\begin{proposition}[Monoid laws for strong static location step bisimulation]
The monoid laws for strong static location step bisimulation are as follows.

\begin{enumerate}
  \item $P+Q\sim_s^{sl} Q+P$;
  \item $P+(Q+R)\sim_s^{sl} (P+Q)+R$;
  \item $P+P\sim_s^{sl} P$;
  \item $P+\textbf{nil}\sim_s^{sl} P$.
\end{enumerate}

\end{proposition}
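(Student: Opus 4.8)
The plan is to prove each of the four monoid laws for strong static location step bisimulation by exhibiting, in each case, an explicit candidate relation and checking that it satisfies the defining conditions of a strong static location step bisimulation (Definition of static location pomset, step bisimulation, with pomset transitions replaced by steps). Concretely, for the four claims I would take the relations $R=\{(P+Q,\,Q+P)\}\cup\textbf{Id}$, $R=\{(P+(Q+R),\,(P+Q)+R)\}\cup\textbf{Id}$, $R=\{(P+P,\,P)\}\cup\textbf{Id}$, and $R=\{(P+\textbf{nil},\,P)\}\cup\textbf{Id}$, each paired with an appropriate consistent location association $\varphi$, and argue that each $R_\varphi$ is a strong static location step bisimulation containing $(\emptyset,\emptyset)$.

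First I would fix a pair in the relation and analyze the possible step transitions $C_1\xrightarrow[u]{X}C_1'$ originating from the left-hand term. Since summation is governed only by the rules $\textbf{Sum}_1$ and $\textbf{Sum}_2$, any such step must arise from a step of one of the summands, carried over with the \emph{same} location $u$ and the \emph{same} label multiset $X$. For commutativity, a step of $P+Q$ coming from $P$ (resp.\ $Q$) is matched by the identical step of $Q+P$ coming from the same summand; I would record $(u,u)\in\varphi$ so that the independence condition $u\diamond u'\Leftrightarrow u\diamond u'$ of a consistent location association holds trivially, and observe the residuals coincide, landing in $\textbf{Id}$. The same pattern handles associativity (tracking which of the three summands fires), idempotence (a step of $P+P$ from either copy is matched by the corresponding step of $P$, and conversely), and the unit law (noting $\textbf{nil}$ has no outgoing transitions, so every step of $P+\textbf{nil}$ comes from $P$). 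The symmetric ``vice versa'' direction and the termination-predicate clauses are checked identically, using that $+$ preserves the $\downarrow$ predicate on its summands.

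The key observation making all four cases routine is that the \textbf{Sum} rules neither alter the location annotation nor the action label of the inherited transition, so the matching transition on the right uses the identical location $v=u$; consequently the set $\{(u,u)\}$ extending $\varphi$ is automatically a consistent location association and the residual pairs fall into the identity part of $R$. I therefore expect the matching of transitions to be entirely mechanical once the relevant \textbf{Sum} rule is identified in each case.

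The main obstacle, such as it is, is purely bookkeeping: one must verify that the accumulated association $\varphi\cup\{(u,v)\}$ remains a consistent location association at every step, i.e.\ that adding matched pairs $(u,u)$ never violates the biconditional $u\diamond u'\Leftrightarrow v\diamond v'$. Because the right-hand locations always equal the left-hand ones here ($v=u$), this condition is preserved automatically, so no genuine difficulty arises. Given this, and since the structure is identical to the corresponding monoid laws for strong step bisimulation in plain CTC, I would simply note that each proof proceeds exactly as for strong pomset bisimulation above with pomset transitions replaced by steps, and omit the repetitive transition-by-transition verification.
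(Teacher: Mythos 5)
Your proposal is correct and follows essentially the same route as the paper: for each law you exhibit the relation $R=\{(\textrm{LHS},\textrm{RHS})\}\cup\textbf{Id}$ (with a suitable consistent location association) and verify it is a strong static location step bisimulation, reducing the transition-matching to the corresponding monoid-law argument for strong step bisimulation in CTC, which is exactly what the paper does while omitting the details. Your additional observations --- that the $\textbf{Sum}$ rules preserve both the location $u$ and the label set, so one may always take $v=u$ and the extended association $\varphi\cup\{(u,u)\}$ remains consistent --- merely make explicit the bookkeeping the paper leaves implicit.
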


\begin{proof}
\begin{enumerate}
  \item $P+Q\sim_s^{sl} Q+P$. It is sufficient to prove the relation $R=\{(P+Q, Q+P)\}\cup \textbf{Id}$ is a strong static location step bisimulation for some distributions. It can be proved similarly to the proof of
  Monoid laws for strong step bisimulation in CTC, we omit it;
  \item $P+(Q+R)\sim_s^{sl} (P+Q)+R$. It is sufficient to prove the relation $R=\{(P+(Q+R), (P+Q)+R)\}\cup \textbf{Id}$ is a strong static location step bisimulation for some distributions. It can be proved similarly to the proof of
  Monoid laws for strong step bisimulation in CTC, we omit it;
  \item $P+P\sim_s^{sl} P$. It is sufficient to prove the relation $R=\{(P+P, P)\}\cup \textbf{Id}$ is a strong static location step bisimulation for some distributions. It can be proved similarly to the proof of
  Monoid laws for strong step bisimulation in CTC, we omit it;
  \item $P+\textbf{nil}\sim_s^{sl} P$. It is sufficient to prove the relation $R=\{(P+\textbf{nil}, P)\}\cup \textbf{Id}$ is a strong static location step bisimulation for some distributions. It can be proved similarly to the proof of
  Monoid laws for strong step bisimulation in CTC, we omit it.
\end{enumerate}
\end{proof}

\begin{proposition}[Monoid laws for strong static location hp-bisimulation]
The monoid laws for strong static location hp-bisimulation are as follows.

\begin{enumerate}
  \item $P+Q\sim_{hp}^{sl} Q+P$;
  \item $P+(Q+R)\sim_{hp}^{sl} (P+Q)+R$;
  \item $P+P\sim_{hp}^{sl} P$;
  \item $P+\textbf{nil}\sim_{hp}^{sl} P$.
\end{enumerate}

\end{proposition}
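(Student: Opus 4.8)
The plan is to prove each of the four monoid laws for strong static location hp-bisimulation by exhibiting an explicit static location hp-bisimulation relating the two sides, paralleling the structure used for the pomset and step cases but now carrying along the isomorphism $f$ between configurations and the consistent location association $\varphi$. Recall from the definition that a static location hp-bisimulation is a posetal relation $R_{\varphi}\subseteq\mathcal{C}(\mathcal{E}_1)\overline{\times}\mathcal{C}(\mathcal{E}_2)$ such that whenever $(C_1,f,C_2)\in R_{\varphi}$ and $C_1\xrightarrow[u]{e_1}C_1'$, there is a matching $C_2\xrightarrow[v]{e_2}C_2'$ with $(C_1',f[e_1\mapsto e_2],C_2')\in R_{\varphi\cup\{(u,v)\}}$, and symmetrically. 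So for each law I would take the relation generated by the identity-like correspondence on configurations together with the appropriate seed pair, and check the transfer condition.

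Concretely, for the commutativity law $P+Q\sim_{hp}^{sl} Q+P$ I would set $R_{\varphi}$ to contain the triple $(\emptyset,\emptyset,\emptyset)$ seeding $(P+Q, Q+P)$, together with the identity posetal relation $\mathbf{Id}$ on the configurations reachable after the first transition (once a summand has been chosen, both sides evolve through the same process). The key observation is that by the rules $\textbf{Sum}_1$ and $\textbf{Sum}_2$, a transition $P+Q\xrightarrow[u]{e_1}C_1'$ arises either from $P\xrightarrow[u]{e_1}C_1'$ or from $Q\xrightarrow[u]{e_1}C_1'$, and in each case $Q+P$ can fire the very same event at the very same location $u$, so we may take $v=u$, $e_2=e_1$, and the location pair $(u,u)$ trivially extends $\varphi$ to a consistent location association. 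The identity isomorphism then extends correctly under $f[e_1\mapsto e_1]$, and since after the choice both sides are literally the same process, the identity relation discharges all subsequent transitions. The associativity law (2), idempotence law (3), and the unit law (4) with $\textbf{nil}$ follow by exactly the same pattern: in each case summation merely offers a choice among subprocesses, the location label $u$ and event $e$ are preserved identically across the equation, and after the initial step the two sides coincide, so $\mathbf{Id}$ completes the bisimulation. For law (4) one additionally notes that $\textbf{nil}$ contributes no transitions, so the set of available moves of $P+\textbf{nil}$ is exactly that of $P$.

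The transfer condition must be verified in both directions (the ``and vice-versa'' clause) and one must confirm that the posetal structure is respected, i.e. that $f[e_1\mapsto e_2]$ remains an isomorphism of the underlying pomsets and that the location association stays consistent; but since we always choose $v=u$ and $e_2=e_1$, consistency of $\varphi\cup\{(u,u)\}$ is immediate and the isomorphism is the identity. Because the paper has already established the analogous laws for strong static location pomset and step bisimulation, and for the hp-bisimulation in plain CTC, I expect the cleanest write-up simply to cite those proofs and note that the only additional bookkeeping is the invariant location pair $(u,u)$, which automatically preserves consistency. The main obstacle, such as it is, is purely notational rather than mathematical: one must carefully track the map $f$ and the growing association $\varphi$ simultaneously, and confirm that the ``identity after the choice'' argument is legitimate for hp-bisimulation, where one tracks individual events $e$ rather than whole pomset sets $X$. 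Since the summation operator does not introduce any location shift of its own (only $\textbf{Loc}$ does that), the location components match trivially and no genuine difficulty arises.
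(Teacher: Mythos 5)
Your proposal is correct and matches the paper's approach exactly: the paper likewise proves each law by showing that $R=\{(P+Q,Q+P)\}\cup \textbf{Id}$ (and its analogues for the other three laws) is a strong static location hp-bisimulation, deferring the transfer-condition details to the corresponding proof in CTC. Your write-up simply fills in the routine verification (Sum rules, choice of $v=u$ and $e_2=e_1$, consistency of $\varphi\cup\{(u,u)\}$, identity isomorphism) that the paper omits.
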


\begin{proof}
\begin{enumerate}
  \item $P+Q\sim_{hp}^{sl} Q+P$. It is sufficient to prove the relation $R=\{(P+Q, Q+P)\}\cup \textbf{Id}$ is a strong static location hp-bisimulation for some distributions. It can be proved similarly to the proof of
  Monoid laws for strong hp-bisimulation in CTC, we omit it;
  \item $P+(Q+R)\sim_{hp}^{sl} (P+Q)+R$. It is sufficient to prove the relation $R=\{(P+(Q+R), (P+Q)+R)\}\cup \textbf{Id}$ is a strong static location hp-bisimulation for some distributions. It can be proved similarly to the proof of
  Monoid laws for strong hp-bisimulation in CTC, we omit it;
  \item $P+P\sim_{hp}^{sl} P$. It is sufficient to prove the relation $R=\{(P+P, P)\}\cup \textbf{Id}$ is a strong static location hp-bisimulation for some distributions. It can be proved similarly to the proof of
  Monoid laws for strong hp-bisimulation in CTC, we omit it;
  \item $P+\textbf{nil}\sim_{hp}^{sl} P$. It is sufficient to prove the relation $R=\{(P+\textbf{nil}, P)\}\cup \textbf{Id}$ is a strong static location hp-bisimulation for some distributions. It can be proved similarly to the proof of
  Monoid laws for strong hp-bisimulation in CTC, we omit it.
\end{enumerate}
\end{proof}

\begin{proposition}[Monoid laws for strong static location hhp-bisimulation]
The monoid laws for strong static location hhp-bisimulation are as follows.

\begin{enumerate}
  \item $P+Q\sim_{hhp}^{sl} Q+P$;
  \item $P+(Q+R)\sim_{hhp}^{sl} (P+Q)+R$;
  \item $P+P\sim_{hhp}^{sl} P$;
  \item $P+\textbf{nil}\sim_{hhp}^{sl} P$.
\end{enumerate}

\end{proposition}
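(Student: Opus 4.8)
The plan is to follow exactly the pattern established for the preceding strong static location hp-bisimulation laws, since by definition a strong static location hereditary history-preserving bisimulation is simply a \emph{downward closed} strong static location hp-bisimulation. Thus for each of the four laws it suffices to exhibit a posetal relation of the form $R_{\varphi}=\{(C_1,f,C_2)\}\cup\textbf{Id}$ built over the witnessing pair --- namely $(P+Q,Q+P)$, $(P+(Q+R),(P+Q)+R)$, $(P+P,P)$, and $(P+\textbf{nil},P)$ respectively --- and to verify two things: first that $R_{\varphi}$ is a static location hp-bisimulation, and second that it is downward closed.

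First I would establish the hp-bisimulation clauses. This part is identical in spirit to the corresponding monoid laws for strong hp-bisimulation in CTC, which we may invoke, the only new bookkeeping being the location labels: whenever the left component fires $C_1\xrightarrow[u]{e_1}C_1'$, the summation rules $\textbf{Sum}_1$ and $\textbf{Sum}_2$ (together with $\textbf{Act}_1$ and $\textbf{Act}_2$) let the right component answer with the \emph{same} underlying event at the \emph{same} location $v=u$, so the association $\varphi$ is extended by $(u,u)$ and remains a consistent location association trivially. After this initial matched step the two residuals coincide, so the continuation lands in $\textbf{Id}$, which is manifestly an hp-bisimulation, and the function $f[e_1\mapsto e_2]$ stays an isomorphism throughout.

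The genuinely hhp-specific step, and the only place where this proof differs from the hp case, is downward closure. Here I expect no real obstacle: every triple of $R_{\varphi}$ is either the root triple over the empty configurations or lies in $\textbf{Id}$. The identity part is downward closed because any triple pointwise below an identity triple is again an identity triple, and the root triple $(\emptyset,\emptyset,\emptyset)$ has nothing strictly below it. Hence $R_{\varphi}$ is downward closed, so it is a strong static location hhp-bisimulation containing $(\emptyset,\emptyset,\emptyset)$, which yields each law. The one subtlety worth double-checking is that the summation operator discriminates only at the first transition --- once a summand is chosen the other is discarded --- so that no later-reached configuration can witness a failure of downward closure; this is precisely what makes the argument go through uniformly for all four laws.
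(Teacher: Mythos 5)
Your proposal takes essentially the same route as the paper: for each law the paper also exhibits the relation $R=\{(P+Q,\,Q+P)\}\cup\textbf{Id}$ (and its analogues) and claims it is a strong static location hhp-bisimulation, deferring the verification to the corresponding monoid laws for strong hhp-bisimulation in CTC. Your explicit check of downward closure (the identity part and the root triple) is the one detail the paper leaves entirely implicit, and it is correct.
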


\begin{proof}
\begin{enumerate}
  \item $P+Q\sim_{hhp}^{sl} Q+P$. It is sufficient to prove the relation $R=\{(P+Q, Q+P)\}\cup \textbf{Id}$ is a strong static location hhp-bisimulation for some distributions. It can be proved similarly to the proof of
  Monoid laws for strong hhp-bisimulation in CTC, we omit it;
  \item $P+(Q+R)\sim_{hhp}^{sl} (P+Q)+R$. It is sufficient to prove the relation $R=\{(P+(Q+R), (P+Q)+R)\}\cup \textbf{Id}$ is a strong static location hhp-bisimulation for some distributions. It can be proved similarly to the proof of
  Monoid laws for strong hhp-bisimulation in CTC, we omit it;
  \item $P+P\sim_{hhp}^{sl} P$. It is sufficient to prove the relation $R=\{(P+P, P)\}\cup \textbf{Id}$ is a strong static location hhp-bisimulation for some distributions. It can be proved similarly to the proof of
  Monoid laws for strong hhp-bisimulation in CTC, we omit it;
  \item $P+\textbf{nil}\sim_{hhp}^{sl} P$. It is sufficient to prove the relation $R=\{(P+\textbf{nil}, P)\}\cup \textbf{Id}$ is a strong static location hhp-bisimulation for some distributions. It can be proved similarly to the proof of
  Monoid laws for strong hhp-bisimulation in CTC, we omit it.
\end{enumerate}
\end{proof}

\begin{proposition}[Static laws for strong static location pomset bisimulation]
The static laws for strong static location pomset bisimulation are as follows.

\begin{enumerate}
  \item $P\parallel Q\sim_p^{sl} Q\parallel P$;
  \item $P\parallel(Q\parallel R)\sim_p^{sl} (P\parallel Q)\parallel R$;
  \item $P\parallel \textbf{nil}\sim_p^{sl} P$;
  \item $P\setminus L\sim_p^{sl} P$, if $\mathcal{L}(P)\cap(L\cup\overline{L})=\emptyset$;
  \item $P\setminus K\setminus L\sim_p^{sl} P\setminus(K\cup L)$;
  \item $P[f]\setminus L\sim_p^{sl} P\setminus f^{-1}(L)[f]$;
  \item $(P\parallel Q)\setminus L\sim_p^{sl} P\setminus L\parallel Q\setminus L$, if $\mathcal{L}(P)\cap\overline{\mathcal{L}(Q)}\cap(L\cup\overline{L})=\emptyset$;
  \item $P[Id]\sim_p^{sl} P$;
  \item $P[f]\sim_p^{sl} P[f']$, if $f\upharpoonright\mathcal{L}(P)=f'\upharpoonright\mathcal{L}(P)$;
  \item $P[f][f']\sim_p^{sl} P[f'\circ f]$;
  \item $(P\parallel Q)[f]\sim_p^{sl} P[f]\parallel Q[f]$, if $f\upharpoonright(L\cup\overline{L})$ is one-to-one, where $L=\mathcal{L}(P)\cup\mathcal{L}(Q)$.
\end{enumerate}
\end{proposition}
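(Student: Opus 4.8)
The plan is to dispatch each of the eleven laws by the standard method used for the monoid laws above: exhibit an explicit candidate relation and verify that it is a static location pomset bisimulation. For the law relating terms $S$ and $T$, I would take
$$R=\{(S,T):P,Q,R\in\mathcal{P}\}\cup\textbf{Id},$$
indexed by an initially empty consistent location association $\varphi$, and check the transfer condition together with its symmetric counterpart. Since $(\emptyset,\emptyset)\in R_{\varphi}$ by construction, establishing the transfer property immediately yields $S\sim_p^{sl}T$. As in the monoid-law propositions, the underlying label-and-structure bookkeeping coincides with the corresponding static laws of CTC without localities, so I would carry it out by a case analysis on the last transition rule applied and compress the routine parts by reference.

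The verification splits naturally by operator. For the parallel laws (1--3) one uses the symmetry of $\textbf{Com}_1$--$\textbf{Com}_4$ in the two operands; for the restriction laws (4--7) one uses the side conditions to show that no transition is blocked on one side while surviving on the other; and for the relabelling laws (8--11) one checks that $f$ transports labels compatibly and that $\textbf{Rel}_1$, $\textbf{Rel}_2$ commute with the other constructors. In every case the matched transition has the same action (up to $\sim$ on pomsets) as the triggering one, so the only genuinely new ingredient is the tracking of the location label.

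The key observation handling locations is that every rule in Tables \ref{TRForCTC3} and \ref{TRForCTC32} other than $\textbf{Com}_3$ and $\textbf{Com}_4$ propagates the location $u$ of its premise unchanged, while $\textbf{Com}_3$ and $\textbf{Com}_4$ combine the two premise locations through the symmetric (and, where associativity of the combination is invoked, associative) operation $u\diamond v$. Consequently, for each of these laws a transition $S\xrightarrow[u]{X}S'$ is matched by a transition $T\xrightarrow[u]{X'}T'$ at the \emph{same} location $u$, because the surrounding restriction or relabelling context does not touch locations and the $\diamond$-combination is computed identically on both sides. Thus the matched pair lies in $R_{\varphi\cup\{(u,u)\}}$, and adjoining the diagonal pair $(u,u)$ preserves the consistent location association, since for any $(u',v')\in\varphi$ the biconditional $u\diamond u'\Leftrightarrow u\diamond u'$ holds trivially.

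I expect the main obstacle to be laws 7 and 11, where a communication derived by $\textbf{Com}_4$ must be matched across the equation. For law 7, $(P\parallel Q)\setminus L\sim_p^{sl}P\setminus L\parallel Q\setminus L$, the side condition $\mathcal{L}(P)\cap\overline{\mathcal{L}(Q)}\cap(L\cup\overline{L})=\emptyset$ is exactly what guarantees that a $\tau$-transition arising from a communication between $P$ and $Q$ is neither blocked on the left, where restriction is applied after composition, nor on the right, where it is applied componentwise; for law 11, the injectivity of $f\upharpoonright(L\cup\overline{L})$ ensures that $f$ neither creates nor destroys complementary pairs, so that the $\tau$-transitions produced by $\textbf{Com}_4$ correspond bijectively on the two sides. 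In both cases one must additionally confirm that the combined location $u\diamond v$ of the communication is produced identically on each side, which again follows because $\textbf{Com}_4$ computes it the same way independently of the surrounding restriction or relabelling, so the argument of the previous paragraph still applies.
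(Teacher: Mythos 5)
Your proposal is correct and follows essentially the same route as the paper: for each law you exhibit the candidate relation $R=\{(S,T)\}\cup\textbf{Id}$ over an initially empty consistent location association, verify the transfer conditions by case analysis on the transition rules, and defer the routine structural bookkeeping to the corresponding static laws of CTC without localities. In fact you supply more detail than the paper does (which omits everything after naming $R$), and your key observation — that matched transitions carry the same location, so only diagonal pairs $(u,u)$ are adjoined to $\varphi$, trivially preserving consistency — is exactly the point that makes the deferred CTC argument go through in the located setting.
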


\begin{proof}
\begin{enumerate}
  \item $P\parallel Q\sim_p^{sl} Q\parallel P$. It is sufficient to prove the relation $R=\{(P\parallel Q, Q\parallel P)\}\cup \textbf{Id}$ is a strong static location pomset bisimulation for some distributions. It can be proved similarly to the proof of
  static laws for strong pomset bisimulation in CTC, we omit it;
  \item $P\parallel(Q\parallel R)\sim_p^{sl} (P\parallel Q)\parallel R$. It is sufficient to prove the relation $R=\{(P\parallel(Q\parallel R), (P\parallel Q)\parallel R)\}\cup \textbf{Id}$ is a strong static location pomset bisimulation for some distributions. It can be proved similarly to the proof of
  static laws for strong pomset bisimulation in CTC, we omit it;
  \item $P\parallel \textbf{nil}\sim_p^{sl} P$. It is sufficient to prove the relation $R=\{(P\parallel \textbf{nil}, P)\}\cup \textbf{Id}$ is a strong static location pomset bisimulation for some distributions. It can be proved similarly to the proof of
  static laws for strong pomset bisimulation in CTC, we omit it;
  \item $P\setminus L\sim_p^{sl} P$, if $\mathcal{L}(P)\cap(L\cup\overline{L})=\emptyset$. It is sufficient to prove the relation $R=\{(P\setminus L, P)\}\cup \textbf{Id}$, if $\mathcal{L}(P)\cap(L\cup\overline{L})=\emptyset$, is a strong static location pomset bisimulation for some distributions. It can be proved similarly to the proof of
  static laws for strong pomset bisimulation in CTC, we omit it;
  \item $P\setminus K\setminus L\sim_p^{sl} P\setminus(K\cup L)$. It is sufficient to prove the relation $R=\{(P\setminus K\setminus L, P\setminus(K\cup L))\}\cup \textbf{Id}$ is a strong static location pomset bisimulation for some distributions. It can be proved similarly to the proof of
  static laws for strong pomset bisimulation in CTC, we omit it;
  \item $P[f]\setminus L\sim_p^{sl} P\setminus f^{-1}(L)[f]$. It is sufficient to prove the relation $R=\{(P[f]\setminus L, P\setminus f^{-1}(L)[f])\}\cup \textbf{Id}$ is a strong static location pomset bisimulation for some distributions. It can be proved similarly to the proof of
  static laws for strong pomset bisimulation in CTC, we omit it;
  \item $(P\parallel Q)\setminus L\sim_p^{sl} P\setminus L\parallel Q\setminus L$, if $\mathcal{L}(P)\cap\overline{\mathcal{L}(Q)}\cap(L\cup\overline{L})=\emptyset$. It is sufficient to prove the relation $R=\{((P\parallel Q)\setminus L, P\setminus L\parallel Q\setminus L)\}\cup \textbf{Id}$, if $\mathcal{L}(P)\cap\overline{\mathcal{L}(Q)}\cap(L\cup\overline{L})=\emptyset$, is a strong static location pomset bisimulation for some distributions. It can be proved similarly to the proof of
  static laws for strong pomset bisimulation in CTC, we omit it;
  \item $P[Id]\sim_p^{sl} P$. It is sufficient to prove the relation $R=\{(P[Id], P)\}\cup \textbf{Id}$ is a strong static location pomset bisimulation for some distributions. It can be proved similarly to the proof of
  static laws for strong pomset bisimulation in CTC, we omit it;
  \item $P[f]\sim_p^{sl} P[f']$, if $f\upharpoonright\mathcal{L}(P)=f'\upharpoonright\mathcal{L}(P)$. It is sufficient to prove the relation $R=\{(P[f], P[f'])\}\cup \textbf{Id}$, if $f\upharpoonright\mathcal{L}(P)=f'\upharpoonright\mathcal{L}(P)$, is a strong static location pomset bisimulation for some distributions. It can be proved similarly to the proof of
  static laws for strong pomset bisimulation in CTC, we omit it;
  \item $P[f][f']\sim_p^{sl} P[f'\circ f]$. It is sufficient to prove the relation $R=\{(P[f][f'], P[f'\circ f])\}\cup \textbf{Id}$ is a strong static location pomset bisimulation for some distributions. It can be proved similarly to the proof of
  static laws for strong pomset bisimulation in CTC, we omit it;
  \item $(P\parallel Q)[f]\sim_p^{sl} P[f]\parallel Q[f]$, if $f\upharpoonright(L\cup\overline{L})$ is one-to-one, where $L=\mathcal{L}(P)\cup\mathcal{L}(Q)$. It is sufficient to prove the relation $R=\{((P\parallel Q)[f], P[f]\parallel Q[f])\}\cup \textbf{Id}$, if $f\upharpoonright(L\cup\overline{L})$ is one-to-one, where $L=\mathcal{L}(P)\cup\mathcal{L}(Q)$, is a strong static location pomset bisimulation for some distributions. It can be proved similarly to the proof of
  static laws for strong pomset bisimulation in CTC, we omit it.
\end{enumerate}
\end{proof}

\begin{proposition}[Static laws for strong static location step bisimulation]
The static laws for strong static location step bisimulation are as follows.

\begin{enumerate}
  \item $P\parallel Q\sim_s^{sl} Q\parallel P$;
  \item $P\parallel(Q\parallel R)\sim_s^{sl} (P\parallel Q)\parallel R$;
  \item $P\parallel \textbf{nil}\sim_s^{sl} P$;
  \item $P\setminus L\sim_s^{sl} P$, if $\mathcal{L}(P)\cap(L\cup\overline{L})=\emptyset$;
  \item $P\setminus K\setminus L\sim_s^{sl} P\setminus(K\cup L)$;
  \item $P[f]\setminus L\sim_s^{sl} P\setminus f^{-1}(L)[f]$;
  \item $(P\parallel Q)\setminus L\sim_s^{sl} P\setminus L\parallel Q\setminus L$, if $\mathcal{L}(P)\cap\overline{\mathcal{L}(Q)}\cap(L\cup\overline{L})=\emptyset$;
  \item $P[Id]\sim_s^{sl} P$;
  \item $P[f]\sim_s^{sl} P[f']$, if $f\upharpoonright\mathcal{L}(P)=f'\upharpoonright\mathcal{L}(P)$;
  \item $P[f][f']\sim_s^{sl} P[f'\circ f]$;
  \item $(P\parallel Q)[f]\sim_s^{sl} P[f]\parallel Q[f]$, if $f\upharpoonright(L\cup\overline{L})$ is one-to-one, where $L=\mathcal{L}(P)\cup\mathcal{L}(Q)$.
\end{enumerate}
\end{proposition}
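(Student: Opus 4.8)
The plan is to follow the same template used throughout this section: for each of the eleven laws I would exhibit the witnessing relation $R=\{(\mathrm{LHS},\mathrm{RHS})\}\cup\textbf{Id}$ (parameterized by a fixed but arbitrary location distribution, which is what ``for some distributions'' records) and verify that $R_\varphi$ is a strong static location step bisimulation by checking the matching clause of its definition in both directions. Since in each pair the two terms are the ``same'' process up to commutation, reassociation, or bookkeeping of $\setminus L$ and $[f]$, every step transition $\mathrm{LHS}\xrightarrow[u]{X_1}C_1'$ should be matched by a structurally identical transition $\mathrm{RHS}\xrightarrow[v]{X_2}C_2'$ whose action multiset $X_2$ is pomset-isomorphic to $X_1$ and whose residuals again lie in $R$; the remaining work is entirely the location bookkeeping.

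First I would observe that the rules $\textbf{Res}_{1,2}$, $\textbf{Rel}_{1,2}$ and $\textbf{Con}_{1,2}$ leave the location label $u$ untouched. Consequently the restriction and relabelling laws (4)--(6) and (8)--(10) inherit their location handling verbatim from the corresponding non-located step law in CTC, and only the classical side conditions on $L$, $K$, $f$ (such as $\mathcal{L}(P)\cap(L\cup\overline L)=\emptyset$ or one-to-one-ness of $f\upharpoonright(L\cup\overline L)$) need re-checking, exactly as there. For these the matching transition fires at the identical location $u=v$, so each residual pair $(C_1',C_2')$ lands in $R_{\varphi\cup\{(u,u)\}}$, and $\varphi\cup\{(u,u)\}$ is trivially a consistent location association.

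The parallel laws (1), (2), (7) and (11) are where the location arithmetic actually bites, so I expect this to be the main obstacle. For commutativity and associativity the combined location of a concurrent or synchronized step is produced by $\textbf{Com}_3$/$\textbf{Com}_4$ as $u\diamond v$; matching $P\parallel Q\xrightarrow[u\diamond v]{\{\alpha,\beta\}}$ against $Q\parallel P\xrightarrow[v\diamond u]{\{\beta,\alpha\}}$ requires that $u\diamond v$ and $v\diamond u$ be consistently associated, which holds because $\diamond$ is symmetric, while associativity requires the analogous compatibility of $(u\diamond v)\diamond w$ with $u\diamond(v\diamond w)$. Because the step clause forces the events in $X$ to be pairwise concurrent, I would additionally confirm that this concurrency is invariant under commuting and reassociating the parallel branches. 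The delicate point is verifying that the accumulated relation stays a consistent location association, i.e.\ that the invariant $u\diamond u'\Leftrightarrow v\diamond v'$ survives across the generated pairs; this reduces to the fact that the underlying independence structure is preserved by the symmetric and associative behaviour of $\parallel$, together with the compatibility of restriction and relabelling with $\parallel$ needed for (7) and (11).

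The remaining law (3), $P\parallel\textbf{nil}\sim_s^{sl}P$, is immediate: $\textbf{nil}$ emits no transitions, so only $\textbf{Com}_1$ can fire and the location $u$ is carried through unchanged, leaving nothing further to verify beyond the plain CTC argument.
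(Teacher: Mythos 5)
Your proposal matches the paper's own proof: for each law the paper exhibits exactly the same witnessing relation $R=\{(\mathrm{LHS},\mathrm{RHS})\}\cup \textbf{Id}$ ``for some distributions'', checks it is a strong static location step bisimulation, and defers the transition-matching details to the corresponding static-law proofs in CTC. Your additional remarks on the location bookkeeping (identity of locations under $\textbf{Res}$/$\textbf{Rel}$/$\textbf{Con}$, symmetry and associativity of $\diamond$ under $\textbf{Com}_{3,4}$, and consistency of the accumulated association $\varphi$) simply spell out what the paper leaves implicit, so the two arguments coincide.
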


\begin{proof}
\begin{enumerate}
  \item $P\parallel Q\sim_s^{sl} Q\parallel P$. It is sufficient to prove the relation $R=\{(P\parallel Q, Q\parallel P)\}\cup \textbf{Id}$ is a strong static location step bisimulation for some distributions. It can be proved similarly to the proof of
  static laws for strong step bisimulation in CTC, we omit it;
  \item $P\parallel(Q\parallel R)\sim_s^{sl} (P\parallel Q)\parallel R$. It is sufficient to prove the relation $R=\{(P\parallel(Q\parallel R), (P\parallel Q)\parallel R)\}\cup \textbf{Id}$ is a strong static location step bisimulation for some distributions. It can be proved similarly to the proof of
  static laws for strong step bisimulation in CTC, we omit it;
  \item $P\parallel \textbf{nil}\sim_s^{sl} P$. It is sufficient to prove the relation $R=\{(P\parallel \textbf{nil}, P)\}\cup \textbf{Id}$ is a strong static location step bisimulation for some distributions. It can be proved similarly to the proof of
  static laws for strong step bisimulation in CTC, we omit it;
  \item $P\setminus L\sim_s^{sl} P$, if $\mathcal{L}(P)\cap(L\cup\overline{L})=\emptyset$. It is sufficient to prove the relation $R=\{(P\setminus L, P)\}\cup \textbf{Id}$, if $\mathcal{L}(P)\cap(L\cup\overline{L})=\emptyset$, is a strong static location step bisimulation for some distributions. It can be proved similarly to the proof of
  static laws for strong step bisimulation in CTC, we omit it;
  \item $P\setminus K\setminus L\sim_s^{sl} P\setminus(K\cup L)$. It is sufficient to prove the relation $R=\{(P\setminus K\setminus L, P\setminus(K\cup L))\}\cup \textbf{Id}$ is a strong static location step bisimulation for some distributions. It can be proved similarly to the proof of
  static laws for strong step bisimulation in CTC, we omit it;
  \item $P[f]\setminus L\sim_s^{sl} P\setminus f^{-1}(L)[f]$. It is sufficient to prove the relation $R=\{(P[f]\setminus L, P\setminus f^{-1}(L)[f])\}\cup \textbf{Id}$ is a strong static location step bisimulation for some distributions. It can be proved similarly to the proof of
  static laws for strong step bisimulation in CTC, we omit it;
  \item $(P\parallel Q)\setminus L\sim_s^{sl} P\setminus L\parallel Q\setminus L$, if $\mathcal{L}(P)\cap\overline{\mathcal{L}(Q)}\cap(L\cup\overline{L})=\emptyset$. It is sufficient to prove the relation $R=\{((P\parallel Q)\setminus L, P\setminus L\parallel Q\setminus L)\}\cup \textbf{Id}$, if $\mathcal{L}(P)\cap\overline{\mathcal{L}(Q)}\cap(L\cup\overline{L})=\emptyset$, is a strong static location step bisimulation for some distributions. It can be proved similarly to the proof of
  static laws for strong step bisimulation in CTC, we omit it;
  \item $P[Id]\sim_s^{sl} P$. It is sufficient to prove the relation $R=\{(P[Id], P)\}\cup \textbf{Id}$ is a strong static location step bisimulation for some distributions. It can be proved similarly to the proof of
  static laws for strong step bisimulation in CTC, we omit it;
  \item $P[f]\sim_s^{sl} P[f']$, if $f\upharpoonright\mathcal{L}(P)=f'\upharpoonright\mathcal{L}(P)$. It is sufficient to prove the relation $R=\{(P[f], P[f'])\}\cup \textbf{Id}$, if $f\upharpoonright\mathcal{L}(P)=f'\upharpoonright\mathcal{L}(P)$, is a strong static location step bisimulation for some distributions. It can be proved similarly to the proof of
  static laws for strong step bisimulation in CTC, we omit it;
  \item $P[f][f']\sim_s^{sl} P[f'\circ f]$. It is sufficient to prove the relation $R=\{(P[f][f'], P[f'\circ f])\}\cup \textbf{Id}$ is a strong static location step bisimulation for some distributions. It can be proved similarly to the proof of
  static laws for strong step bisimulation in CTC, we omit it;
  \item $(P\parallel Q)[f]\sim_s^{sl} P[f]\parallel Q[f]$, if $f\upharpoonright(L\cup\overline{L})$ is one-to-one, where $L=\mathcal{L}(P)\cup\mathcal{L}(Q)$. It is sufficient to prove the relation $R=\{((P\parallel Q)[f], P[f]\parallel Q[f])\}\cup \textbf{Id}$, if $f\upharpoonright(L\cup\overline{L})$ is one-to-one, where $L=\mathcal{L}(P)\cup\mathcal{L}(Q)$, is a strong static location step bisimulation for some distributions. It can be proved similarly to the proof of
  static laws for strong step bisimulation in CTC, we omit it.
\end{enumerate}
\end{proof}

\begin{proposition}[Static laws for strong static location hp-bisimulation]
The static laws for strong static location hp-bisimulation are as follows.

\begin{enumerate}
  \item $P\parallel Q\sim_{hp}^{sl} Q\parallel P$;
  \item $P\parallel(Q\parallel R)\sim_{hp}^{sl} (P\parallel Q)\parallel R$;
  \item $P\parallel \textbf{nil}\sim_{hp}^{sl} P$;
  \item $P\setminus L\sim_{hp}^{sl} P$, if $\mathcal{L}(P)\cap(L\cup\overline{L})=\emptyset$;
  \item $P\setminus K\setminus L\sim_{hp}^{sl} P\setminus(K\cup L)$;
  \item $P[f]\setminus L\sim_{hp}^{sl} P\setminus f^{-1}(L)[f]$;
  \item $(P\parallel Q)\setminus L\sim_{hp}^{sl} P\setminus L\parallel Q\setminus L$, if $\mathcal{L}(P)\cap\overline{\mathcal{L}(Q)}\cap(L\cup\overline{L})=\emptyset$;
  \item $P[Id]\sim_{hp}^{sl} P$;
  \item $P[f]\sim_{hp}^{sl} P[f']$, if $f\upharpoonright\mathcal{L}(P)=f'\upharpoonright\mathcal{L}(P)$;
  \item $P[f][f']\sim_{hp}^{sl} P[f'\circ f]$;
  \item $(P\parallel Q)[f]\sim_{hp}^{sl} P[f]\parallel Q[f]$, if $f\upharpoonright(L\cup\overline{L})$ is one-to-one, where $L=\mathcal{L}(P)\cup\mathcal{L}(Q)$.
\end{enumerate}
\end{proposition}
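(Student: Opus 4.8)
The plan is to treat all eleven laws uniformly, following the template already used for the strong static location pomset and step cases, but adapted to the posetal (history\nobreakdash-preserving) setting. For each law, say of the form $P = Q$, I would exhibit the posetal relation $R_{\varphi}=\{(C_1,f,C_2)\}\cup\textbf{Id}$, where $C_1,C_2$ range over the corresponding configurations of the two sides, $f$ is the label-preserving isomorphism induced by the syntactic correspondence of events across the law, and $\varphi$ is a suitable consistent location association. I would then verify that $R_{\varphi}$ is a strong static location hp-bisimulation: whenever $(C_1,f,C_2)\in R_{\varphi}$ and $C_1\xrightarrow[u]{e_1}C_1'$, I must produce a matching move $C_2\xrightarrow[v]{e_2}C_2'$ with $(C_1',f[e_1\mapsto e_2],C_2')\in R_{\varphi\cup\{(u,v)\}}$, and symmetrically.

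First I would isolate the purely behavioural content --- that the single-event transitions on the two sides correspond one-for-one and that $f[e_1\mapsto e_2]$ remains a configuration isomorphism --- and discharge it by appealing to the corresponding proof of the static laws for strong hp-bisimulation in CTC, exactly as the pomset and step proofs do. The genuinely new obligation, layered on top of the CTC argument, is the location bookkeeping: I must check that the locations $u$ and $v$ carried by the matched transitions agree in the sense required by a cla, i.e. that $\varphi\cup\{(u,v)\}$ is again a consistent location association, so that the target triple lies in the right relation.

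For most laws this bookkeeping is immediate. In the restriction and relabelling laws (4)--(6), (8)--(10), the operators $\setminus L$ and $[f]$ leave the location component of a transition untouched (rules $\textbf{Res}_i$, $\textbf{Rel}_i$ carry $u$ through unchanged), so a transition on one side and its match on the other carry \emph{identical} locations $u=v$; consistency of $\varphi\cup\{(u,u)\}$ is then automatic, since $u\diamond u'\Leftrightarrow u\diamond u'$ holds trivially. The same observation handles the unit law (3).

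The main obstacle is the parallel laws (1), (2), (7) and (11), where rule $\textbf{Com}_3$ produces combined locations of the form $u\diamond v$ and the commutation or reassociation of $\parallel$ permutes the operands. Here I would verify that when the left side emits a transition located at $u\diamond v$, the commuted right side emits the matching transition located at $v\diamond u$, using symmetry of the location combination; for associativity I would check the corresponding reassociation identity between $(u\diamond v)\diamond w$ and $u\diamond(v\diamond w)$ on $Loc^{*}$. Granting these structural facts about $\diamond$, the defining condition of a cla --- that $u\diamond u'\Leftrightarrow v\diamond v'$ for associated pairs --- is preserved under the extension by $(u,v)$, and the reduction to the CTC hp-bisimulation proof then completes each case.
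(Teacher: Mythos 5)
Your proposal takes essentially the same approach as the paper: for each law the paper exhibits exactly the relation $R=\{(P,Q)\}\cup \textbf{Id}$ (read posetally), asserts it is a strong static location hp-bisimulation ``for some distributions,'' and discharges the verification by appealing to the corresponding proofs of the static laws for strong hp-bisimulation in CTC, which is precisely your reduction. Your explicit treatment of the location bookkeeping --- identical locations passing through $\textbf{Res}$/$\textbf{Rel}$, and the symmetry/associativity of $\diamond$ for the parallel laws --- merely spells out what the paper leaves implicit in that phrase, so it is a faithful (indeed slightly more detailed) rendering of the paper's argument.
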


\begin{proof}
\begin{enumerate}
  \item $P\parallel Q\sim_{hp}^{sl} Q\parallel P$. It is sufficient to prove the relation $R=\{(P\parallel Q, Q\parallel P)\}\cup \textbf{Id}$ is a strong static location hp-bisimulation for some distributions. It can be proved similarly to the proof of
  static laws for strong hp-bisimulation in CTC, we omit it;
  \item $P\parallel(Q\parallel R)\sim_{hp}^{sl} (P\parallel Q)\parallel R$. It is sufficient to prove the relation $R=\{(P\parallel(Q\parallel R), (P\parallel Q)\parallel R)\}\cup \textbf{Id}$ is a strong static location hp-bisimulation for some distributions. It can be proved similarly to the proof of
  static laws for strong hp-bisimulation in CTC, we omit it;
  \item $P\parallel \textbf{nil}\sim_{hp}^{sl} P$. It is sufficient to prove the relation $R=\{(P\parallel \textbf{nil}, P)\}\cup \textbf{Id}$ is a strong static location hp-bisimulation for some distributions. It can be proved similarly to the proof of
  static laws for strong hp-bisimulation in CTC, we omit it;
  \item $P\setminus L\sim_{hp}^{sl} P$, if $\mathcal{L}(P)\cap(L\cup\overline{L})=\emptyset$. It is sufficient to prove the relation $R=\{(P\setminus L, P)\}\cup \textbf{Id}$, if $\mathcal{L}(P)\cap(L\cup\overline{L})=\emptyset$, is a strong static location hp-bisimulation for some distributions. It can be proved similarly to the proof of
  static laws for strong hp-bisimulation in CTC, we omit it;
  \item $P\setminus K\setminus L\sim_{hp}^{sl} P\setminus(K\cup L)$. It is sufficient to prove the relation $R=\{(P\setminus K\setminus L, P\setminus(K\cup L))\}\cup \textbf{Id}$ is a strong static location hp-bisimulation for some distributions. It can be proved similarly to the proof of
  static laws for strong hp-bisimulation in CTC, we omit it;
  \item $P[f]\setminus L\sim_{hp}^{sl} P\setminus f^{-1}(L)[f]$. It is sufficient to prove the relation $R=\{(P[f]\setminus L, P\setminus f^{-1}(L)[f])\}\cup \textbf{Id}$ is a strong static location hp-bisimulation for some distributions. It can be proved similarly to the proof of
  static laws for strong hp-bisimulation in CTC, we omit it;
  \item $(P\parallel Q)\setminus L\sim_{hp}^{sl} P\setminus L\parallel Q\setminus L$, if $\mathcal{L}(P)\cap\overline{\mathcal{L}(Q)}\cap(L\cup\overline{L})=\emptyset$. It is sufficient to prove the relation $R=\{((P\parallel Q)\setminus L, P\setminus L\parallel Q\setminus L)\}\cup \textbf{Id}$, if $\mathcal{L}(P)\cap\overline{\mathcal{L}(Q)}\cap(L\cup\overline{L})=\emptyset$, is a strong static location hp-bisimulation for some distributions. It can be proved similarly to the proof of
  static laws for strong hp-bisimulation in CTC, we omit it;
  \item $P[Id]\sim_{hp}^{sl} P$. It is sufficient to prove the relation $R=\{(P[Id], P)\}\cup \textbf{Id}$ is a strong static location hp-bisimulation for some distributions. It can be proved similarly to the proof of
  static laws for strong hp-bisimulation in CTC, we omit it;
  \item $P[f]\sim_{hp}^{sl} P[f']$, if $f\upharpoonright\mathcal{L}(P)=f'\upharpoonright\mathcal{L}(P)$. It is sufficient to prove the relation $R=\{(P[f], P[f'])\}\cup \textbf{Id}$, if $f\upharpoonright\mathcal{L}(P)=f'\upharpoonright\mathcal{L}(P)$, is a strong static location hp-bisimulation for some distributions. It can be proved similarly to the proof of
  static laws for strong hp-bisimulation in CTC, we omit it;
  \item $P[f][f']\sim_{hp}^{sl} P[f'\circ f]$. It is sufficient to prove the relation $R=\{(P[f][f'], P[f'\circ f])\}\cup \textbf{Id}$ is a strong static location hp-bisimulation for some distributions. It can be proved similarly to the proof of
  static laws for strong hp-bisimulation in CTC, we omit it;
  \item $(P\parallel Q)[f]\sim_{hp}^{sl} P[f]\parallel Q[f]$, if $f\upharpoonright(L\cup\overline{L})$ is one-to-one, where $L=\mathcal{L}(P)\cup\mathcal{L}(Q)$. It is sufficient to prove the relation $R=\{((P\parallel Q)[f], P[f]\parallel Q[f])\}\cup \textbf{Id}$, if $f\upharpoonright(L\cup\overline{L})$ is one-to-one, where $L=\mathcal{L}(P)\cup\mathcal{L}(Q)$, is a strong static location hp-bisimulation for some distributions. It can be proved similarly to the proof of
  static laws for strong hp-bisimulation in CTC, we omit it.
\end{enumerate}
\end{proof}

\begin{proposition}[Static laws for strong static location hhp-bisimulation]
The static laws for strong static location hhp-bisimulation are as follows.

\begin{enumerate}
  \item $P\parallel Q\sim_{hhp}^{sl} Q\parallel P$;
  \item $P\parallel(Q\parallel R)\sim_{hhp}^{sl} (P\parallel Q)\parallel R$;
  \item $P\parallel \textbf{nil}\sim_{hhp}^{sl} P$;
  \item $P\setminus L\sim_{hhp}^{sl} P$, if $\mathcal{L}(P)\cap(L\cup\overline{L})=\emptyset$;
  \item $P\setminus K\setminus L\sim_{hhp}^{sl} P\setminus(K\cup L)$;
  \item $P[f]\setminus L\sim_{hhp}^{sl} P\setminus f^{-1}(L)[f]$;
  \item $(P\parallel Q)\setminus L\sim_{hhp}^{sl} P\setminus L\parallel Q\setminus L$, if $\mathcal{L}(P)\cap\overline{\mathcal{L}(Q)}\cap(L\cup\overline{L})=\emptyset$;
  \item $P[Id]\sim_{hhp}^{sl} P$;
  \item $P[f]\sim_{hhp}^{sl} P[f']$, if $f\upharpoonright\mathcal{L}(P)=f'\upharpoonright\mathcal{L}(P)$;
  \item $P[f][f']\sim_{hhp}^{sl} P[f'\circ f]$;
  \item $(P\parallel Q)[f]\sim_{hhp}^{sl} P[f]\parallel Q[f]$, if $f\upharpoonright(L\cup\overline{L})$ is one-to-one, where $L=\mathcal{L}(P)\cup\mathcal{L}(Q)$.
\end{enumerate}
\end{proposition}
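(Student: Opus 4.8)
The plan is to treat all eleven laws uniformly, exactly as in the four preceding propositions: for each law I exhibit the candidate relation $R_\varphi=\{(P_L,f,P_R)\}\cup\overline{\textbf{Id}}$, where $P_L,P_R$ are the two sides, $f$ is the evident label-preserving isomorphism between their configurations, $\overline{\textbf{Id}}$ is the posetal identity, and $\varphi$ is built up from the location words emitted along matched transitions. Since a static location hhp-bisimulation is by definition a \emph{downward closed} static location hp-bisimulation, and since the immediately preceding proposition already supplies these very relations as static location hp-bisimulations, the only genuinely new obligation is to check downward closure of each $R_\varphi$; the hp-matching and the consistency of $\varphi$ carry over verbatim.

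First I would dispose of the location bookkeeping. Inspecting the transition rules of Table~\ref{TRForCTC3}, the operators $\parallel$, $+$, $\setminus L$, $[f]$ and the constant rules never prepend a location word --- only the rule \textbf{Loc} does, through $loc::P$, which occurs in none of these laws. Hence in every law a transition of $P_L$ and its matching transition of $P_R$ carry the \emph{same} location, except for the two cases $\textbf{Com}_3$ and $\textbf{Com}_4$ arising under commutativity and associativity, where the emitted locations are $u\diamond v$ against $v\diamond u$; these agree because $\diamond$ is symmetric. Consequently $\varphi$ may be taken to consist only of pairs $(u,u)$, so the consistency requirement $u\diamond u'\Leftrightarrow v\diamond v'$ of the definition of consistent location association is automatic, and the step matching reduces precisely to the corresponding law for strong hp-bisimulation in CTC, which I invoke.

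The heart of the argument is downward closure. For the $\overline{\textbf{Id}}$ part this is immediate: any sub-triple of $(C,Id_C,C)$ is again of the form $(D,Id_D,D)$ and so lies in $\overline{\textbf{Id}}$. For the generating pair $(P_L,f,P_R)$ the point is that, in each of the eleven laws, $P_L$ and $P_R$ denote the \emph{same} behaviour up to a canonical isomorphism $\theta$ of prime event structures: a swap or reassociation of parallel components (laws 1--3), an identification of the surviving events after restriction (laws 4--7), or a renaming of labels (laws 8--11). The relation $R_\varphi$ is therefore a restriction of the graph of $\theta$, and the graph of an isomorphism is trivially downward closed, since if $(C_1,f,C_2)\subseteq(C_1',f',C_2')$ pointwise with $f'=\theta\upharpoonright C_1'$, then $f=\theta\upharpoonright C_1$ and $(C_1,f,C_2)$ is again a $\theta$-matched triple. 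Thus downward closure demands no work beyond confirming that $R_\varphi$ really is (contained in) an isomorphism graph rather than something coarser.

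I expect the main obstacle to lie not in the hhp-specific downward closure --- which, by the observation above, is essentially free --- but in verifying for laws 4--7 and 11 that the claimed isomorphism $\theta$ genuinely exists under the stated side conditions. In particular, law 7, $(P\parallel Q)\setminus L\sim_{hhp}^{sl}P\setminus L\parallel Q\setminus L$, requires the hypothesis $\mathcal{L}(P)\cap\overline{\mathcal{L}(Q)}\cap(L\cup\overline{L})=\emptyset$ to guarantee that no communication $\tau$-transition allowed on one side is blocked on the other, so that the two event structures coincide; and law 11 needs injectivity of $f$ on $L\cup\overline{L}$ to ensure relabelling does not merge distinct communications. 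Once these structural isomorphisms are pinned down, the hp-conditions and downward closure follow mechanically, and the proof is complete.
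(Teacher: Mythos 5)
Your proposal is correct and follows essentially the same route as the paper: for each of the eleven laws the paper simply exhibits the relation $R=\{(P_L,P_R)\}\cup \textbf{Id}$, asserts it is a strong static location hhp-bisimulation for some distributions, and defers to the corresponding static-law proofs for strong hhp-bisimulation in CTC, omitting all details. The extra content you supply --- that $\varphi$ can be taken to consist of pairs $(u,u)$ (using symmetry of $\diamond$ for $\textbf{Com}_{3,4}$) so the consistency condition is automatic, and that downward closure is free because each relation is the configuration-graph of an event-structure isomorphism --- is precisely the bookkeeping the paper leaves implicit.
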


\begin{proof}
\begin{enumerate}
  \item $P\parallel Q\sim_{hhp}^{sl} Q\parallel P$. It is sufficient to prove the relation $R=\{(P\parallel Q, Q\parallel P)\}\cup \textbf{Id}$ is a strong static location hhp-bisimulation for some distributions. It can be proved similarly to the proof of
  static laws for strong hhp-bisimulation in CTC, we omit it;
  \item $P\parallel(Q\parallel R)\sim_{hhp}^{sl} (P\parallel Q)\parallel R$. It is sufficient to prove the relation $R=\{(P\parallel(Q\parallel R), (P\parallel Q)\parallel R)\}\cup \textbf{Id}$ is a strong static location hhp-bisimulation for some distributions. It can be proved similarly to the proof of
  static laws for strong hhp-bisimulation in CTC, we omit it;
  \item $P\parallel \textbf{nil}\sim_{hhp}^{sl} P$. It is sufficient to prove the relation $R=\{(P\parallel \textbf{nil}, P)\}\cup \textbf{Id}$ is a strong static location hhp-bisimulation for some distributions. It can be proved similarly to the proof of
  static laws for strong hhp-bisimulation in CTC, we omit it;
  \item $P\setminus L\sim_{hhp}^{sl} P$, if $\mathcal{L}(P)\cap(L\cup\overline{L})=\emptyset$. It is sufficient to prove the relation $R=\{(P\setminus L, P)\}\cup \textbf{Id}$, if $\mathcal{L}(P)\cap(L\cup\overline{L})=\emptyset$, is a strong static location hhp-bisimulation for some distributions. It can be proved similarly to the proof of
  static laws for strong hhp-bisimulation in CTC, we omit it;
  \item $P\setminus K\setminus L\sim_{hhp}^{sl} P\setminus(K\cup L)$. It is sufficient to prove the relation $R=\{(P\setminus K\setminus L, P\setminus(K\cup L))\}\cup \textbf{Id}$ is a strong static location hhp-bisimulation for some distributions. It can be proved similarly to the proof of
  static laws for strong hhp-bisimulation in CTC, we omit it;
  \item $P[f]\setminus L\sim_{hhp}^{sl} P\setminus f^{-1}(L)[f]$. It is sufficient to prove the relation $R=\{(P[f]\setminus L, P\setminus f^{-1}(L)[f])\}\cup \textbf{Id}$ is a strong static location hhp-bisimulation for some distributions. It can be proved similarly to the proof of
  static laws for strong hhp-bisimulation in CTC, we omit it;
  \item $(P\parallel Q)\setminus L\sim_{hhp}^{sl} P\setminus L\parallel Q\setminus L$, if $\mathcal{L}(P)\cap\overline{\mathcal{L}(Q)}\cap(L\cup\overline{L})=\emptyset$. It is sufficient to prove the relation $R=\{((P\parallel Q)\setminus L, P\setminus L\parallel Q\setminus L)\}\cup \textbf{Id}$, if $\mathcal{L}(P)\cap\overline{\mathcal{L}(Q)}\cap(L\cup\overline{L})=\emptyset$, is a strong static location hhp-bisimulation for some distributions. It can be proved similarly to the proof of
  static laws for strong hhp-bisimulation in CTC, we omit it;
  \item $P[Id]\sim_{hhp}^{sl} P$. It is sufficient to prove the relation $R=\{(P[Id], P)\}\cup \textbf{Id}$ is a strong static location hhp-bisimulation for some distributions. It can be proved similarly to the proof of
  static laws for strong hhp-bisimulation in CTC, we omit it;
  \item $P[f]\sim_{hhp}^{sl} P[f']$, if $f\upharpoonright\mathcal{L}(P)=f'\upharpoonright\mathcal{L}(P)$. It is sufficient to prove the relation $R=\{(P[f], P[f'])\}\cup \textbf{Id}$, if $f\upharpoonright\mathcal{L}(P)=f'\upharpoonright\mathcal{L}(P)$, is a strong static location hhp-bisimulation for some distributions. It can be proved similarly to the proof of
  static laws for strong hhp-bisimulation in CTC, we omit it;
  \item $P[f][f']\sim_{hhp}^{sl} P[f'\circ f]$. It is sufficient to prove the relation $R=\{(P[f][f'], P[f'\circ f])\}\cup \textbf{Id}$ is a strong static location hhp-bisimulation for some distributions. It can be proved similarly to the proof of
  static laws for strong hhp-bisimulation in CTC, we omit it;
  \item $(P\parallel Q)[f]\sim_{hhp}^{sl} P[f]\parallel Q[f]$, if $f\upharpoonright(L\cup\overline{L})$ is one-to-one, where $L=\mathcal{L}(P)\cup\mathcal{L}(Q)$. It is sufficient to prove the relation $R=\{((P\parallel Q)[f], P[f]\parallel Q[f])\}\cup \textbf{Id}$, if $f\upharpoonright(L\cup\overline{L})$ is one-to-one, where $L=\mathcal{L}(P)\cup\mathcal{L}(Q)$, is a strong static location hhp-bisimulation for some distributions. It can be proved similarly to the proof of
  static laws for strong hhp-bisimulation in CTC, we omit it.
\end{enumerate}
\end{proof}

\begin{proposition}[Location laws for strong static location pomset bisimulation]
The location laws for strong static location pomset bisimulation are as follows.

\begin{enumerate}
  \item $\epsilon::P\sim_p^{sl} P$;
  \item $u::\textbf{nil}\sim_p^{sl} \textbf{nil}$;
  \item $u::(\alpha.P)\sim_p^{sl} u::\alpha.u::P$;
  \item $u::(P+Q)\sim_p^{sl} u::P+u::Q$;
  \item $u::(P\parallel Q)\sim_p^{sl}u::P\parallel u::Q$;
  \item $u::(P\setminus L)\sim_p^{sl}u::P\setminus L$;
  \item $u::(P[f])\sim_p^{sl}u::P[f]$;
  \item $u::(v::P)\sim_p^{sl}uv::P$.
\end{enumerate}
\end{proposition}

\begin{proof}
\begin{enumerate}
  \item $\epsilon::P\sim_p^{sl} P$. It is sufficient to prove the relation $R=\{(\epsilon::P, P)\}\cup \textbf{Id}$ is a strong static location pomset bisimulation, we omit it;
  \item $u::\textbf{nil}\sim_p^{sl} \textbf{nil}$. It is sufficient to prove the relation $R=\{(u::\textbf{nil}, \textbf{nil})\}\cup \textbf{Id}$ is a strong static location pomset bisimulation, we omit it;
  \item $u::(\alpha.P)\sim_p^{sl} u::\alpha.u::P$. It is sufficient to prove the relation $R=\{(u::(\alpha.P), u::\alpha.u::P)\}\cup \textbf{Id}$ is a strong static location pomset bisimulation, we omit it;
  \item $u::(P+Q)\sim_p^{sl} u::P+u::Q$. It is sufficient to prove the relation $R=\{(u::(P+Q), u::P+u::Q)\}\cup \textbf{Id}$ is a strong static location pomset bisimulation, we omit it;
  \item $u::(P\parallel Q)\sim_p^{sl}u::P\parallel u::Q$. It is sufficient to prove the relation $R=\{(u::(P\parallel Q), u::P\parallel u::Q)\}\cup \textbf{Id}$ is a strong static location pomset bisimulation, we omit it;
  \item $u::(P\setminus L)\sim_p^{sl}u::P\setminus L$. It is sufficient to prove the relation $R=\{(u::(P\setminus L), u::P\setminus L)\}\cup \textbf{Id}$ is a strong static location pomset bisimulation, we omit it;
  \item $u::(P[f])\sim_p^{sl}u::P[f]$. It is sufficient to prove the relation $R=\{(u::(P[f]), u::P[f])\}\cup \textbf{Id}$ is a strong static location pomset bisimulation, we omit it;
  \item $u::(v::P)\sim_p^{sl}uv::P$. It is sufficient to prove the relation $R=\{(u::(v::P), uv::P)\}\cup \textbf{Id}$ is a strong static location pomset bisimulation, we omit it.
\end{enumerate}
\end{proof}

\begin{proposition}[Location laws for strong static location step bisimulation]
The location laws for strong static location step bisimulation are as follows.

\begin{enumerate}
  \item $\epsilon::P\sim_s^{sl} P$;
  \item $u::\textbf{nil}\sim_s^{sl} \textbf{nil}$;
  \item $u::(\alpha.P)\sim_s^{sl} u::\alpha.u::P$;
  \item $u::(P+Q)\sim_s^{sl} u::P+u::Q$;
  \item $u::(P\parallel Q)\sim_s^{sl}u::P\parallel u::Q$;
  \item $u::(P\setminus L)\sim_s^{sl}u::P\setminus L$;
  \item $u::(P[f])\sim_s^{sl}u::P[f]$;
  \item $u::(v::P)\sim_s^{sl}uv::P$.
\end{enumerate}
\end{proposition}

\begin{proof}
\begin{enumerate}
  \item $\epsilon::P\sim_s^{sl} P$. It is sufficient to prove the relation $R=\{(\epsilon::P, P)\}\cup \textbf{Id}$ is a strong static location step bisimulation, we omit it;
  \item $u::\textbf{nil}\sim_s^{sl} \textbf{nil}$. It is sufficient to prove the relation $R=\{(u::\textbf{nil}, \textbf{nil})\}\cup \textbf{Id}$ is a strong static location step bisimulation, we omit it;
  \item $u::(\alpha.P)\sim_s^{sl} u::\alpha.u::P$. It is sufficient to prove the relation $R=\{(u::(\alpha.P), u::\alpha.u::P)\}\cup \textbf{Id}$ is a strong static location step bisimulation, we omit it;
  \item $u::(P+Q)\sim_s^{sl} u::P+u::Q$. It is sufficient to prove the relation $R=\{(u::(P+Q), u::P+u::Q)\}\cup \textbf{Id}$ is a strong static location step bisimulation, we omit it;
  \item $u::(P\parallel Q)\sim_s^{sl}u::P\parallel u::Q$. It is sufficient to prove the relation $R=\{(u::(P\parallel Q), u::P\parallel u::Q)\}\cup \textbf{Id}$ is a strong static location step bisimulation, we omit it;
  \item $u::(P\setminus L)\sim_s^{sl}u::P\setminus L$. It is sufficient to prove the relation $R=\{(u::(P\setminus L), u::P\setminus L)\}\cup \textbf{Id}$ is a strong static location step bisimulation, we omit it;
  \item $u::(P[f])\sim_s^{sl}u::P[f]$. It is sufficient to prove the relation $R=\{(u::(P[f]), u::P[f])\}\cup \textbf{Id}$ is a strong static location step bisimulation, we omit it;
  \item $u::(v::P)\sim_s^{sl}uv::P$. It is sufficient to prove the relation $R=\{(u::(v::P), uv::P)\}\cup \textbf{Id}$ is a strong static location step bisimulation, we omit it.
\end{enumerate}
\end{proof}

\begin{proposition}[Location laws for strong static location hp-bisimulation]
The location laws for strong static location hp-bisimulation are as follows.

\begin{enumerate}
  \item $\epsilon::P\sim_{hp}^{sl} P$;
  \item $u::\textbf{nil}\sim_{hp}^{sl} \textbf{nil}$;
  \item $u::(\alpha.P)\sim_{hp}^{sl} u::\alpha.u::P$;
  \item $u::(P+Q)\sim_{hp}^{sl} u::P+u::Q$;
  \item $u::(P\parallel Q)\sim_{hp}^{sl}u::P\parallel u::Q$;
  \item $u::(P\setminus L)\sim_{hp}^{sl}u::P\setminus L$;
  \item $u::(P[f])\sim_{hp}^{sl}u::P[f]$;
  \item $u::(v::P)\sim_{hp}^{sl}uv::P$.
\end{enumerate}
\end{proposition}

\begin{proof}
\begin{enumerate}
  \item $\epsilon::P\sim_{hp}^{sl} P$. It is sufficient to prove the relation $R=\{(\epsilon::P, P)\}\cup \textbf{Id}$ is a strong static location hp-bisimulation, we omit it;
  \item $u::\textbf{nil}\sim_{hp}^{sl} \textbf{nil}$. It is sufficient to prove the relation $R=\{(u::\textbf{nil}, \textbf{nil})\}\cup \textbf{Id}$ is a strong static location hp-bisimulation, we omit it;
  \item $u::(\alpha.P)\sim_{hp}^{sl} u::\alpha.u::P$. It is sufficient to prove the relation $R=\{(u::(\alpha.P), u::\alpha.u::P)\}\cup \textbf{Id}$ is a strong static location hp-bisimulation, we omit it;
  \item $u::(P+Q)\sim_{hp}^{sl} u::P+u::Q$. It is sufficient to prove the relation $R=\{(u::(P+Q), u::P+u::Q)\}\cup \textbf{Id}$ is a strong static location hp-bisimulation, we omit it;
  \item $u::(P\parallel Q)\sim_{hp}^{sl}u::P\parallel u::Q$. It is sufficient to prove the relation $R=\{(u::(P\parallel Q), u::P\parallel u::Q)\}\cup \textbf{Id}$ is a strong static location hp-bisimulation, we omit it;
  \item $u::(P\setminus L)\sim_{hp}^{sl}u::P\setminus L$. It is sufficient to prove the relation $R=\{(u::(P\setminus L), u::P\setminus L)\}\cup \textbf{Id}$ is a strong static location hp-bisimulation, we omit it;
  \item $u::(P[f])\sim_{hp}^{sl}u::P[f]$. It is sufficient to prove the relation $R=\{(u::(P[f]), u::P[f])\}\cup \textbf{Id}$ is a strong static location hp-bisimulation, we omit it;
  \item $u::(v::P)\sim_{hp}^{sl}uv::P$. It is sufficient to prove the relation $R=\{(u::(v::P), uv::P)\}\cup \textbf{Id}$ is a strong static location hp-bisimulation, we omit it.
\end{enumerate}
\end{proof}

\begin{proposition}[Location laws for strong static location hhp-bisimulation]
The location laws for strong static location hhp-bisimulation are as follows.

\begin{enumerate}
  \item $\epsilon::P\sim_{hhp}^{sl} P$;
  \item $u::\textbf{nil}\sim_{hhp}^{sl} \textbf{nil}$;
  \item $u::(\alpha.P)\sim_{hhp}^{sl} u::\alpha.u::P$;
  \item $u::(P+Q)\sim_{hhp}^{sl} u::P+u::Q$;
  \item $u::(P\parallel Q)\sim_{hhp}^{sl}u::P\parallel u::Q$;
  \item $u::(P\setminus L)\sim_{hhp}^{sl}u::P\setminus L$;
  \item $u::(P[f])\sim_{hhp}^{sl}u::P[f]$;
  \item $u::(v::P)\sim_{hhp}^{sl}uv::P$.
\end{enumerate}
\end{proposition}

\begin{proof}
\begin{enumerate}
  \item $\epsilon::P\sim_{hhp}^{sl} P$. It is sufficient to prove the relation $R=\{(\epsilon::P, P)\}\cup \textbf{Id}$ is a strong static location hhp-bisimulation, we omit it;
  \item $u::\textbf{nil}\sim_{hhp}^{sl} \textbf{nil}$. It is sufficient to prove the relation $R=\{(u::\textbf{nil}, \textbf{nil})\}\cup \textbf{Id}$ is a strong static location hhp-bisimulation, we omit it;
  \item $u::(\alpha.P)\sim_{hhp}^{sl} u::\alpha.u::P$. It is sufficient to prove the relation $R=\{(u::(\alpha.P), u::\alpha.u::P)\}\cup \textbf{Id}$ is a strong static location hhp-bisimulation, we omit it;
  \item $u::(P+Q)\sim_{hhp}^{sl} u::P+u::Q$. It is sufficient to prove the relation $R=\{(u::(P+Q), u::P+u::Q)\}\cup \textbf{Id}$ is a strong static location hhp-bisimulation, we omit it;
  \item $u::(P\parallel Q)\sim_{hhp}^{sl}u::P\parallel u::Q$. It is sufficient to prove the relation $R=\{(u::(P\parallel Q), u::P\parallel u::Q)\}\cup \textbf{Id}$ is a strong static location hhp-bisimulation, we omit it;
  \item $u::(P\setminus L)\sim_{hhp}^{sl}u::P\setminus L$. It is sufficient to prove the relation $R=\{(u::(P\setminus L), u::P\setminus L)\}\cup \textbf{Id}$ is a strong static location hhp-bisimulation, we omit it;
  \item $u::(P[f])\sim_{hhp}^{sl}u::P[f]$. It is sufficient to prove the relation $R=\{(u::(P[f]), u::P[f])\}\cup \textbf{Id}$ is a strong static location hhp-bisimulation, we omit it;
  \item $u::(v::P)\sim_{hhp}^{sl}uv::P$. It is sufficient to prove the relation $R=\{(u::(v::P), uv::P)\}\cup \textbf{Id}$ is a strong static location hhp-bisimulation, we omit it.
\end{enumerate}
\end{proof}

\begin{proposition}[Expansion law for strong static location pomset bisimulation]
Let $P\equiv (P_1[f_1]\parallel\cdots\parallel P_n[f_n])\setminus L$, with $n\geq 1$. Then

\begin{eqnarray}
P\sim_p^{sl} \{(f_1(\alpha_1)\parallel\cdots\parallel f_n(\alpha_n)).(P_1'[f_1]\parallel\cdots\parallel P_n'[f_n])\setminus L: \nonumber\\
P_i\xrightarrow[u_i]{\alpha_i}P_i',i\in\{1,\cdots,n\},f_i(\alpha_i)\notin L\cup\overline{L}\} \nonumber\\
+\sum\{\tau.(P_1[f_1]\parallel\cdots\parallel P_i'[f_i]\parallel\cdots\parallel P_j'[f_j]\parallel\cdots\parallel P_n[f_n])\setminus L: \nonumber\\
P_i\xrightarrow[v_i]{l_1}P_i',P_j\xrightarrow[v_j]{l_2}P_j',f_i(l_1)=\overline{f_j(l_2)},i<j\} \nonumber
\end{eqnarray}
\end{proposition}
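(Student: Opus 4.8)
The plan is to exhibit a single static location pomset bisimulation witnessing $P\sim_p^{sl}Q$, where $Q$ denotes the entire right-hand side of the displayed equation. Writing $P\equiv (P_1[f_1]\parallel\cdots\parallel P_n[f_n])\setminus L$, the natural candidate is
$$R_\varphi=\{(P,Q)\}\cup\textbf{Id},$$
taken for a suitably chosen consistent location association $\varphi$, where $\textbf{Id}$ is the identity relation on processes and will handle every continuation, since both sides collapse to a common residual after the first step. Because the underlying location-free expansion law for strong pomset bisimulation has already been established for CTC, the bulk of the argument is a transcription of that proof; the genuinely new content is verifying that the location annotations produced on the two sides can be consistently associated.

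First I would classify the moves that the expansion exposes. By the shape of the rules $\textbf{Com}_3$, $\textbf{Com}_4$, $\textbf{Rel}_1$, $\textbf{Rel}_2$, $\textbf{Res}_1$ and $\textbf{Res}_2$, the matched transitions of $P$ fall into two families, in bijection with the two groups of summands of $Q$: (i) a concurrent step in which each $P_i\xrightarrow[u_i]{\alpha_i}P_i'$ lifts through $\textbf{Rel}_1$ to $P_i[f_i]\xrightarrow[u_i]{f_i(\alpha_i)}P_i'[f_i]$, is assembled by iterated $\textbf{Com}_3$ into the step $\{f_1(\alpha_1),\ldots,f_n(\alpha_n)\}$, and survives $\textbf{Res}_2$ precisely when each $f_i(\alpha_i)\notin L\cup\overline{L}$; or (ii) a synchronisation between two components $i<j$ with $f_i(l_1)=\overline{f_j(l_2)}$, produced by $\textbf{Com}_4$ as a $\tau$-move and passed unchanged through $\textbf{Res}_1$ (the side condition holds since $\tau\notin L\cup\overline{L}$). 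Case (i) is matched on the $Q$ side by selecting, via $\textbf{Sum}_2$, the summand $(f_1(\alpha_1)\parallel\cdots\parallel f_n(\alpha_n)).(P_1'[f_1]\parallel\cdots\parallel P_n'[f_n])\setminus L$ and firing it with $\textbf{Act}_2$; case (ii) is matched by selecting, via $\textbf{Sum}_1$, the corresponding $\tau$-prefixed summand and firing it with $\textbf{Act}_1$. In both cases the residual reached on the two sides is literally the same term, so the pair of residuals lies in $\textbf{Id}$, and the step labels are identical pomsets, so $X_1\sim X_2$ is trivial. The converse direction is symmetric: each summand of $Q$ fires only through an $\textbf{Act}$ rule after a $\textbf{Sum}$ rule, and its premise is exactly the transition of $P$ it was built to mirror.

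The main obstacle is the location bookkeeping. On the $P$ side the matched move carries the combined location $u_1\diamond\cdots\diamond u_n$ in case (i), or $v_i\diamond v_j$ in case (ii), produced by $\textbf{Com}_3$ and $\textbf{Com}_4$, whereas on the $Q$ side the prefix fires at the empty location $\epsilon$ by $\textbf{Act}_1$ or $\textbf{Act}_2$. The definition of static location pomset bisimulation does not require equal locations but only that the witnessing relation be indexed by a consistent location association; accordingly I would extend $\varphi$ by the pair $(w,\epsilon)$, where $w$ is the relevant combined location, and then confirm that the resulting set of pairs remains a cla in the sense of the Consistent location association definition, i.e. that independence is preserved in both directions.

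Because both sides coincide after the first step, every continuation contributes only diagonal pairs $(w',w')$, for which the independence equivalence is immediate; the only nontrivial point is the compatibility of the initial pair $(w,\epsilon)$ with these diagonal pairs. This is precisely where the freedom indicated by \emph{``for some distributions''} is used: one allocates the component locations $u_i$ and $v_i$ so that the induced independence relation on the $P$ side matches that of $\epsilon$ against the subsequent locations, making $\varphi$ a genuine cla. Once this consistency is confirmed, $R_\varphi$ satisfies every clause of the definition and $(\emptyset,\emptyset)\in R_\varphi$, which yields $P\sim_p^{sl}Q$ as claimed.
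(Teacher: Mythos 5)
Your strategy differs from the paper's: you exhibit the single relation $R_{\varphi}=\{(P,Q)\}\cup\textbf{Id}$ directly, whereas the paper argues by induction on the number $n$ of parallel components, first for the restriction- and relabelling-free form $loc_1::P_1\parallel\cdots\parallel loc_n::P_n$ and then claiming Res/Rel can be added. The location-free part of your case analysis (matching $\textbf{Com}_3$/$\textbf{Com}_4$ moves of $P$ against $\textbf{Sum}$+$\textbf{Act}$ moves of $Q$, with identical residuals absorbed by $\textbf{Id}$) is fine and is the standard way to prove an expansion law. The problem is exactly at the point you yourself identify as the crux: the consistency of the location association. Since $\epsilon\ll v$ for every $v\in Loc^*$, the empty location is independent of nothing; so once the pair $(w,\epsilon)$ is added, the cla condition forces $w$ to be $\ll$-comparable with \emph{every} location $w'$ occurring in a later, diagonally matched transition ($w\diamond w'$ must be equivalent to $\epsilon\diamond w'$, which is always false). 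This fails in general. Take $f_1=f_2=Id$, $L=\emptyset$, $P_1=loc_1::(a.a'.\textbf{nil})$, $P_2=loc_2::(b.\textbf{nil})$: the first step of $P$ occurs at the composite location $loc_1\diamond loc_2$, but the next transition of the common residual is $a'$ at $loc_1$ alone (by $\textbf{Com}_1$, since $loc_2::\textbf{nil}$ is stuck), and $loc_1\diamond loc_2$ is not $\ll$-comparable with $loc_1$. Hence $\{(loc_1\diamond loc_2,\epsilon),(loc_1,loc_1)\}$ is not a cla and your relation is not a static location pomset bisimulation. The appeal to ``for some distributions'' cannot repair this: the $u_i$ are determined by the given terms, and no allocation makes a composite of two independent locations comparable with one of its own components.

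It is worth seeing how the paper sidesteps this (rather than solves it): in its proof the right-hand side is restated with \emph{located} prefixes, $(loc_1::\alpha_1\parallel\cdots\parallel loc_n::\alpha_n).(\cdots)$, so that the matched transitions on the two sides carry the \emph{same} location; every pair entering the association is then diagonal and consistency is trivial. In other words, the mismatch you are fighting is between the proposition as literally stated (whose unlocated prefixes fire at $\epsilon$ by $\textbf{Act}_1$, $\textbf{Act}_2$) and any attempt to match it against $P$'s genuinely located moves; it is not a bookkeeping issue that a cleverer choice of $\varphi$ can fix. A correct version of your direct-bisimulation argument should therefore be run against the located-prefix form of the expansion (as in the paper's proof), where your relation $\{(P,Q)\}\cup\textbf{Id}$ does work because all association pairs are diagonal.
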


\begin{proof}
Firstly, we consider the case without Restriction and Relabeling. That is, we suffice to prove the following case by induction on the size $n$. Note that, we consider the general distribution.

For $P\equiv loc_1::P_1\parallel\cdots\parallel loc_n::P_n$, with $n\geq 1$, we need to prove

\begin{eqnarray}
P\sim_p^{sl} \{(loc_1::\alpha_1\parallel\cdots\parallel loc_n::\alpha_n).(loc_1::P_1'\parallel\cdots\parallel loc_n::P_n'): \nonumber\\
loc_i::P_i\xrightarrow[u_i]{\alpha_i}loc_i::P_i',i\in\{1,\cdots,n\}\nonumber\\
+\sum\{\tau.(loc_1::P_1\parallel\cdots\parallel loc_i::P_i'\parallel\cdots\parallel loc_j::P_j'\parallel\cdots\parallel loc_n::P_n): \nonumber\\
loc_i::P_i\xrightarrow[v_i]{l}loc_i::P_i',loc_j::P_j\xrightarrow[v_j]{\overline{l}}loc_j::P_j',i<j\} \nonumber
\end{eqnarray}

For $n=1$, $loc_1::P_1\sim_p^{sl} loc_1::\alpha_1.loc_1::P_1':loc_1::P_1\xrightarrow[u_1]{\alpha_1}loc_1::P_1'$ is obvious. Then with a hypothesis $n$, we consider $R\equiv P\parallel loc_{n+1}::P_{n+1}$. By the transition rules $\textbf{Com}_{1,2,3,4}$, we can get

\begin{eqnarray}
R\sim_p^{sl} \{(loc::p\parallel loc_{n+1}::\alpha_{n+1}).(loc::P'\parallel loc_{n+1}::P_{n+1}'): \nonumber\\
loc::P\xrightarrow[u]{p}loc::P',loc_{n+1}::P_{n+1}\xrightarrow[u_{n+1}]{\alpha_{n+1}}loc_{n+1}::P_{n+1}',p\subseteq P\}\nonumber\\
+\sum\{\tau.(loc::P'\parallel loc_{n+1}::P_{n+1}'): \nonumber\\
loc::P\xrightarrow[v]{l}loc::P',loc_{n+1}::P_{n+1}\xrightarrow[v_{n+1}]{\overline{l}}loc_{n+1}::P_{n+1}'\} \nonumber
\end{eqnarray}

Now with the induction assumption $P\equiv loc_1::P_1\parallel\cdots\parallel loc_n::P_n$, the right-hand side can be reformulated as follows.

\begin{eqnarray}
\{(loc_1::\alpha_1\parallel\cdots\parallel loc_n::\alpha_n\parallel loc_{n+1}::\alpha_{n+1}).(loc_1::P_1'\parallel\cdots\parallel loc_n::P_n'\parallel loc_{n+1}::P_{n+1}'): \nonumber\\
loc_i::P_i\xrightarrow[u_i]{\alpha_i}loc_i::P_i',i\in\{1,\cdots,n+1\}\nonumber\\
+\sum\{\tau.(loc_1::P_1\parallel\cdots\parallel loc_i::P_i'\parallel\cdots\parallel loc_j::P_j'\parallel\cdots\parallel loc_n::P_n\parallel loc_{n+1}::P_{n+1}): \nonumber\\
loc_i::P_i\xrightarrow[v_i]{l}loc_i::P_i',loc_j::P_j\xrightarrow[v_j]{\overline{l}}loc_j::P_j',i<j\} \nonumber\\
+\sum\{\tau.(loc_1::P_1\parallel\cdots\parallel loc_i::P_i'\parallel\cdots\parallel loc_j::P_j\parallel\cdots\parallel loc_n::P_n\parallel loc_{n+1}::P_{n+1}'): \nonumber\\
loc_i::P_i\xrightarrow[v_i]{l}loc_i::P_i',loc_{n+1}::P_{n+1}\xrightarrow[v_{n+1}]{\overline{l}}loc_{n+1}::P_{n+1}',i\in\{1,\cdots, n\}\} \nonumber
\end{eqnarray}

So,

\begin{eqnarray}
R\sim_p^{sl} \{(loc_1::\alpha_1\parallel\cdots\parallel loc_n::\alpha_n\parallel loc_{n+1}::\alpha_{n+1}).(loc_1::P_1'\parallel\cdots\parallel loc_n::P_n'\parallel loc_{n+1}::P_{n+1}'): \nonumber\\
loc_i::P_i\xrightarrow[u_i]{\alpha_i}loc_i::P_i',i\in\{1,\cdots,n+1\}\nonumber\\
+\sum\{\tau.(loc_1::P_1\parallel\cdots\parallel loc_i::P_i'\parallel\cdots\parallel loc_j::P_j'\parallel\cdots\parallel loc_n::P_n): \nonumber\\
loc_i::P_i\xrightarrow[v_i]{l}loc_i::P_i',loc_j::P_j\xrightarrow[v_j]{\overline{l}}loc_j::P_j',1 \leq i<j\geq n+1\} \nonumber
\end{eqnarray}

Then, we can easily add the full conditions with Restriction and Relabeling.
\end{proof}

\begin{proposition}[Expansion law for strong static location step bisimulation]
Let $P\equiv (P_1[f_1]\parallel\cdots\parallel P_n[f_n])\setminus L$, with $n\geq 1$. Then

\begin{eqnarray}
P\sim_s^{sl} \{(f_1(\alpha_1)\parallel\cdots\parallel f_n(\alpha_n)).(P_1'[f_1]\parallel\cdots\parallel P_n'[f_n])\setminus L: \nonumber\\
P_i\xrightarrow[u_i]{\alpha_i}P_i',i\in\{1,\cdots,n\},f_i(\alpha_i)\notin L\cup\overline{L}\} \nonumber\\
+\sum\{\tau.(P_1[f_1]\parallel\cdots\parallel P_i'[f_i]\parallel\cdots\parallel P_j'[f_j]\parallel\cdots\parallel P_n[f_n])\setminus L: \nonumber\\
P_i\xrightarrow[v_i]{l_1}P_i',P_j\xrightarrow[v_j]{l_2}P_j',f_i(l_1)=\overline{f_j(l_2)},i<j\} \nonumber
\end{eqnarray}
\end{proposition}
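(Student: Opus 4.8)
The plan is to follow the same two-stage strategy used in the preceding pomset expansion law. First I would discharge the case without Restriction and Relabelling, proving the expansion for $P\equiv loc_1::P_1\parallel\cdots\parallel loc_n::P_n$ by induction on $n$, and only at the end reinstate the operators $\setminus L$ and $[f]$, using the side-conditions $f_i(\alpha_i)\notin L\cup\overline{L}$ and $f_i(l_1)=\overline{f_j(l_2)}$ together with the static laws already established for $\setminus L$ and $[f]$ under $\sim_s^{sl}$. For the base case $n=1$, the equivalence $loc_1::P_1\sim_s^{sl}loc_1::\alpha_1.loc_1::P_1'$ (ranging over transitions $loc_1::P_1\xrightarrow[u_1]{\alpha_1}loc_1::P_1'$) is immediate from rule $\textbf{Act}_1$ composed with $\textbf{Loc}$: with a single component there are no communications, and every transition is a singleton step.

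For the inductive step I would put $R\equiv P\parallel loc_{n+1}::P_{n+1}$ and classify its transitions by the composition rules $\textbf{Com}_{1,2,3,4}$. Rules $\textbf{Com}_{1,2}$ account for a move of exactly one side, $\textbf{Com}_3$ for a synchronous step of both sides at independent locations $u\diamond v$, and $\textbf{Com}_4$ for a $\tau$-communication of complementary labels. Substituting the induction hypothesis for $P$ and collecting the resulting terms yields the displayed right-hand side at $n+1$: the new communication summands, pairing a component $i\leq n$ with the $(n+1)$-th, are produced exactly by $\textbf{Com}_4$ applied to the already-expanded form of $P$, and these merge with the old summands $i<j\leq n$ into the single index range $1\leq i<j\leq n+1$.

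The point requiring care in the step setting, as opposed to the pomset one, is the \emph{pairwise concurrency} of the events grouped inside a single step, and here the locality mechanism does the work. A joint move is formed by $\textbf{Com}_3$ only when the two sides act at independent locations $u\diamond v$, so the combined events are genuinely concurrent and the grouped prefix $(loc_1::\alpha_1\parallel\cdots\parallel loc_{n+1}::\alpha_{n+1})$ is a legitimate step; hence the multiset labelling each transition on the left matches, up to $\sim$, the step labelling on the right. I expect the main obstacle to be bookkeeping rather than conceptual: one must verify that the location words attached to the $n+1$ matched transitions on the two sides stay related by a consistent location association, i.e. that independence $u\diamond v$ is preserved symmetrically as required by the definition of cla, so that the accumulated $\varphi$ remains consistent and the candidate relation $R_{\varphi}$ is closed under the matched steps in both directions.
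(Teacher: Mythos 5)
Your proposal follows essentially the same route as the paper's proof: strip away Restriction and Relabelling, prove the expansion for $P\equiv loc_1::P_1\parallel\cdots\parallel loc_n::P_n$ by induction on $n$ (base case via $\textbf{Act}_1$ and $\textbf{Loc}$, inductive step via $R\equiv P\parallel loc_{n+1}::P_{n+1}$ and the rules $\textbf{Com}_{1,2,3,4}$), and then reinstate $\setminus L$ and $[f]$ at the end. Your added attention to pairwise concurrency of the grouped events and to consistency of the location association is a sound refinement of details the paper leaves implicit, not a departure from its method.
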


\begin{proof}
Firstly, we consider the case without Restriction and Relabeling. That is, we suffice to prove the following case by induction on the size $n$. Note that, we consider the general distribution.

For $P\equiv loc_1::P_1\parallel\cdots\parallel loc_n::P_n$, with $n\geq 1$, we need to prove

\begin{eqnarray}
P\sim_s^{sl} \{(loc_1::\alpha_1\parallel\cdots\parallel loc_n::\alpha_n).(loc_1::P_1'\parallel\cdots\parallel loc_n::P_n'): \nonumber\\
loc_i::P_i\xrightarrow[u_i]{\alpha_i}loc_i::P_i',i\in\{1,\cdots,n\}\nonumber\\
+\sum\{\tau.(loc_1::P_1\parallel\cdots\parallel loc_i::P_i'\parallel\cdots\parallel loc_j::P_j'\parallel\cdots\parallel loc_n::P_n): \nonumber\\
loc_i::P_i\xrightarrow[v_i]{l}loc_i::P_i',loc_j::P_j\xrightarrow[v_j]{\overline{l}}loc_j::P_j',i<j\} \nonumber
\end{eqnarray}

For $n=1$, $loc_1::P_1\sim_s^{sl} loc_1::\alpha_1.loc_1::P_1':loc_1::P_1\xrightarrow[u_1]{\alpha_1}loc_1::P_1'$ is obvious. Then with a hypothesis $n$, we consider $R\equiv P\parallel loc_{n+1}::P_{n+1}$. By the transition rules $\textbf{Com}_{1,2,3,4}$, we can get

\begin{eqnarray}
R\sim_s^{sl} \{(loc::p\parallel loc_{n+1}::\alpha_{n+1}).(loc::P'\parallel loc_{n+1}::P_{n+1}'): \nonumber\\
loc::P\xrightarrow[u]{p}loc::P',loc_{n+1}::P_{n+1}\xrightarrow[u_{n+1}]{\alpha_{n+1}}loc_{n+1}::P_{n+1}',p\subseteq P\}\nonumber\\
+\sum\{\tau.(loc::P'\parallel loc_{n+1}::P_{n+1}'): \nonumber\\
loc::P\xrightarrow[v]{l}loc::P',loc_{n+1}::P_{n+1}\xrightarrow[v_{n+1}]{\overline{l}}loc_{n+1}::P_{n+1}'\} \nonumber
\end{eqnarray}

Now with the induction assumption $P\equiv loc_1::P_1\parallel\cdots\parallel loc_n::P_n$, the right-hand side can be reformulated as follows.

\begin{eqnarray}
\{(loc_1::\alpha_1\parallel\cdots\parallel loc_n::\alpha_n\parallel loc_{n+1}::\alpha_{n+1}).(loc_1::P_1'\parallel\cdots\parallel loc_n::P_n'\parallel loc_{n+1}::P_{n+1}'): \nonumber\\
loc_i::P_i\xrightarrow[u_i]{\alpha_i}loc_i::P_i',i\in\{1,\cdots,n+1\}\nonumber\\
+\sum\{\tau.(loc_1::P_1\parallel\cdots\parallel loc_i::P_i'\parallel\cdots\parallel loc_j::P_j'\parallel\cdots\parallel loc_n::P_n\parallel loc_{n+1}::P_{n+1}): \nonumber\\
loc_i::P_i\xrightarrow[v_i]{l}loc_i::P_i',loc_j::P_j\xrightarrow[v_j]{\overline{l}}loc_j::P_j',i<j\} \nonumber\\
+\sum\{\tau.(loc_1::P_1\parallel\cdots\parallel loc_i::P_i'\parallel\cdots\parallel loc_j::P_j\parallel\cdots\parallel loc_n::P_n\parallel loc_{n+1}::P_{n+1}'): \nonumber\\
loc_i::P_i\xrightarrow[v_i]{l}loc_i::P_i',loc_{n+1}::P_{n+1}\xrightarrow[v_{n+1}]{\overline{l}}loc_{n+1}::P_{n+1}',i\in\{1,\cdots, n\}\} \nonumber
\end{eqnarray}

So,

\begin{eqnarray}
R\sim_s^{sl} \{(loc_1::\alpha_1\parallel\cdots\parallel loc_n::\alpha_n\parallel loc_{n+1}::\alpha_{n+1}).(loc_1::P_1'\parallel\cdots\parallel loc_n::P_n'\parallel loc_{n+1}::P_{n+1}'): \nonumber\\
loc_i::P_i\xrightarrow[u_i]{\alpha_i}loc_i::P_i',i\in\{1,\cdots,n+1\}\nonumber\\
+\sum\{\tau.(loc_1::P_1\parallel\cdots\parallel loc_i::P_i'\parallel\cdots\parallel loc_j::P_j'\parallel\cdots\parallel loc_n::P_n): \nonumber\\
loc_i::P_i\xrightarrow[v_i]{l}loc_i::P_i',loc_j::P_j\xrightarrow[v_j]{\overline{l}}loc_j::P_j',1 \leq i<j\geq n+1\} \nonumber
\end{eqnarray}

Then, we can easily add the full conditions with Restriction and Relabeling.
\end{proof}

\begin{proposition}[Expansion law for strong static location hp-bisimulation]
Let $P\equiv (P_1[f_1]\parallel\cdots\parallel P_n[f_n])\setminus L$, with $n\geq 1$. Then

\begin{eqnarray}
P\sim_{hp}^{sl} \{(f_1(\alpha_1)\parallel\cdots\parallel f_n(\alpha_n)).(P_1'[f_1]\parallel\cdots\parallel P_n'[f_n])\setminus L: \nonumber\\
P_i\xrightarrow[u_i]{\alpha_i}P_i',i\in\{1,\cdots,n\},f_i(\alpha_i)\notin L\cup\overline{L}\} \nonumber\\
+\sum\{\tau.(P_1[f_1]\parallel\cdots\parallel P_i'[f_i]\parallel\cdots\parallel P_j'[f_j]\parallel\cdots\parallel P_n[f_n])\setminus L: \nonumber\\
P_i\xrightarrow[v_i]{l_1}P_i',P_j\xrightarrow[v_j]{l_2}P_j',f_i(l_1)=\overline{f_j(l_2)},i<j\} \nonumber
\end{eqnarray}
\end{proposition}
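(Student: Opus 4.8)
The plan is to follow exactly the inductive strategy used in the two preceding propositions (the expansion laws for strong static location pomset and step bisimulation), adapting it to the posetal setting that hp-bisimulation requires. First I would reduce to the case without Restriction and Relabeling: as in those proofs, $\setminus L$ and the $[f_i]$ only filter and rename labels and leave the underlying causality, concurrency and location structure untouched, so they can be layered back on at the very end. It therefore suffices to establish the expansion for $P\equiv loc_1::P_1\parallel\cdots\parallel loc_n::P_n$, and I would prove this by induction on $n$.

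For the base case $n=1$, the processes $loc_1::P_1$ and its single-prefix expansion $loc_1::\alpha_1.loc_1::P_1'$ are related by the empty configuration together with the identity isomorphism, so $\sim_{hp}^{sl}$ holds trivially. For the inductive step I would set $R\equiv P\parallel loc_{n+1}::P_{n+1}$, apply the transition rules $\textbf{Com}_{1,2,3,4}$ to peel off one layer of behaviour, then invoke the induction hypothesis on $P$ to reformulate the right-hand side, in the same two-stage rewriting as in the step-bisimulation proof. The three summands that appear (the fully concurrent prefix, the internal $\tau$ arising from a communication among the first $n$ branches, and the new $\tau$ from a communication with branch $n+1$) are produced by $\textbf{Com}_3$ and $\textbf{Com}_4$ respectively, and their locations are recorded by $u_i$ and by $u\diamond v$ exactly as dictated by the transition rules.

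The genuine difference from the earlier two proofs is that I must now exhibit a \emph{posetal} relation $R_{\varphi}\subseteq\mathcal{C}(\mathcal{E}_1)\overline{\times}\mathcal{C}(\mathcal{E}_2)$ of triples $(C_1,f,C_2)$ with $f$ an order-isomorphism, rather than a bare relation on configurations. Concretely, for each pair of matching single-event transitions $C_1\xrightarrow[u]{e_1}C_1'$ and $C_2\xrightarrow[v]{e_2}C_2'$ I would extend the isomorphism to $f[e_1\mapsto e_2]$ and the location association to $\varphi\cup\{(u,v)\}$, and check that the resulting triple again lies in $R_{\varphi\cup\{(u,v)\}}$. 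Because both sides name the same events with the same causal dependencies, the witnessing isomorphism is essentially the identity on events, and $\varphi$ stays a consistent location association since each event retains its location $loc_i$ on both sides.

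The main obstacle will be verifying that $f$ remains an order-isomorphism throughout the inductive reformulation, i.e.\ that it preserves $\leq$ and hence the concurrency structure. The delicate point is that the first-layer events collected into a single concurrent prefix on the left must be pairwise concurrent on the right as well, which is precisely the content of the $\textbf{Com}_3$ side condition $\beta\neq\overline{\alpha}$ versus the $\textbf{Com}_4$ communication case: a genuine communication fuses two events into one $\tau$ and so changes the causal shape, whereas independent actions remain concurrent. Once one confirms that the $\textbf{Com}_3$/$\textbf{Com}_4$ case split on the left corresponds exactly to the concurrent-prefix/$\tau$-summand split on the right, the isomorphism is preserved and the posetal relation is a strong static location hp-bisimulation; adding Restriction and Relabeling then completes the argument as in the pomset case.
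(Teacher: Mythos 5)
Your proposal follows essentially the same route as the paper's own proof: the same initial reduction to the case without Restriction and Relabeling, the same induction on $n$ with base case $n=1$ and inductive step $R\equiv P\parallel loc_{n+1}::P_{n+1}$ analyzed via the transition rules $\textbf{Com}_{1,2,3,4}$, the same reformulation of the right-hand side using the induction hypothesis, and the same restoration of Restriction and Relabeling at the end. The additional care you devote to the posetal structure, i.e.\ extending the isomorphism to $f[e_1\mapsto e_2]$ and the location association to $\varphi\cup\{(u,v)\}$, is detail the paper silently omits (its hp-proof repeats the pomset/step argument verbatim), so your version is, if anything, slightly more complete while remaining the same proof.
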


\begin{proof}
Firstly, we consider the case without Restriction and Relabeling. That is, we suffice to prove the following case by induction on the size $n$. Note that, we consider the general distribution.

For $P\equiv loc_1::P_1\parallel\cdots\parallel loc_n::P_n$, with $n\geq 1$, we need to prove

\begin{eqnarray}
P\sim_{hp}^{sl} \{(loc_1::\alpha_1\parallel\cdots\parallel loc_n::\alpha_n).(loc_1::P_1'\parallel\cdots\parallel loc_n::P_n'): \nonumber\\
loc_i::P_i\xrightarrow[u_i]{\alpha_i}loc_i::P_i',i\in\{1,\cdots,n\}\nonumber\\
+\sum\{\tau.(loc_1::P_1\parallel\cdots\parallel loc_i::P_i'\parallel\cdots\parallel loc_j::P_j'\parallel\cdots\parallel loc_n::P_n): \nonumber\\
loc_i::P_i\xrightarrow[v_i]{l}loc_i::P_i',loc_j::P_j\xrightarrow[v_j]{\overline{l}}loc_j::P_j',i<j\} \nonumber
\end{eqnarray}

For $n=1$, $loc_1::P_1\sim_{hp}^{sl} loc_1::\alpha_1.loc_1::P_1':loc_1::P_1\xrightarrow[u_1]{\alpha_1}loc_1::P_1'$ is obvious. Then with a hypothesis $n$, we consider $R\equiv P\parallel loc_{n+1}::P_{n+1}$. By the transition rules $\textbf{Com}_{1,2,3,4}$, we can get

\begin{eqnarray}
R\sim_{hp}^{sl} \{(loc::p\parallel loc_{n+1}::\alpha_{n+1}).(loc::P'\parallel loc_{n+1}::P_{n+1}'): \nonumber\\
loc::P\xrightarrow[u]{p}loc::P',loc_{n+1}::P_{n+1}\xrightarrow[u_{n+1}]{\alpha_{n+1}}loc_{n+1}::P_{n+1}',p\subseteq P\}\nonumber\\
+\sum\{\tau.(loc::P'\parallel loc_{n+1}::P_{n+1}'): \nonumber\\
loc::P\xrightarrow[v]{l}loc::P',loc_{n+1}::P_{n+1}\xrightarrow[v_{n+1}]{\overline{l}}loc_{n+1}::P_{n+1}'\} \nonumber
\end{eqnarray}

Now with the induction assumption $P\equiv loc_1::P_1\parallel\cdots\parallel loc_n::P_n$, the right-hand side can be reformulated as follows.

\begin{eqnarray}
\{(loc_1::\alpha_1\parallel\cdots\parallel loc_n::\alpha_n\parallel loc_{n+1}::\alpha_{n+1}).(loc_1::P_1'\parallel\cdots\parallel loc_n::P_n'\parallel loc_{n+1}::P_{n+1}'): \nonumber\\
loc_i::P_i\xrightarrow[u_i]{\alpha_i}loc_i::P_i',i\in\{1,\cdots,n+1\}\nonumber\\
+\sum\{\tau.(loc_1::P_1\parallel\cdots\parallel loc_i::P_i'\parallel\cdots\parallel loc_j::P_j'\parallel\cdots\parallel loc_n::P_n\parallel loc_{n+1}::P_{n+1}): \nonumber\\
loc_i::P_i\xrightarrow[v_i]{l}loc_i::P_i',loc_j::P_j\xrightarrow[v_j]{\overline{l}}loc_j::P_j',i<j\} \nonumber\\
+\sum\{\tau.(loc_1::P_1\parallel\cdots\parallel loc_i::P_i'\parallel\cdots\parallel loc_j::P_j\parallel\cdots\parallel loc_n::P_n\parallel loc_{n+1}::P_{n+1}'): \nonumber\\
loc_i::P_i\xrightarrow[v_i]{l}loc_i::P_i',loc_{n+1}::P_{n+1}\xrightarrow[v_{n+1}]{\overline{l}}loc_{n+1}::P_{n+1}',i\in\{1,\cdots, n\}\} \nonumber
\end{eqnarray}

So,

\begin{eqnarray}
R\sim_{hp}^{sl} \{(loc_1::\alpha_1\parallel\cdots\parallel loc_n::\alpha_n\parallel loc_{n+1}::\alpha_{n+1}).(loc_1::P_1'\parallel\cdots\parallel loc_n::P_n'\parallel loc_{n+1}::P_{n+1}'): \nonumber\\
loc_i::P_i\xrightarrow[u_i]{\alpha_i}loc_i::P_i',i\in\{1,\cdots,n+1\}\nonumber\\
+\sum\{\tau.(loc_1::P_1\parallel\cdots\parallel loc_i::P_i'\parallel\cdots\parallel loc_j::P_j'\parallel\cdots\parallel loc_n::P_n): \nonumber\\
loc_i::P_i\xrightarrow[v_i]{l}loc_i::P_i',loc_j::P_j\xrightarrow[v_j]{\overline{l}}loc_j::P_j',1 \leq i<j\geq n+1\} \nonumber
\end{eqnarray}

Then, we can easily add the full conditions with Restriction and Relabeling.
\end{proof}

\begin{proposition}[Expansion law for strong static location hhp-bisimulation]
Let $P\equiv (P_1[f_1]\parallel\cdots\parallel P_n[f_n])\setminus L$, with $n\geq 1$. Then

\begin{eqnarray}
P\sim_{hhp}^{sl} \{(f_1(\alpha_1)\parallel\cdots\parallel f_n(\alpha_n)).(P_1'[f_1]\parallel\cdots\parallel P_n'[f_n])\setminus L: \nonumber\\
P_i\xrightarrow[u_i]{\alpha_i}P_i',i\in\{1,\cdots,n\},f_i(\alpha_i)\notin L\cup\overline{L}\} \nonumber\\
+\sum\{\tau.(P_1[f_1]\parallel\cdots\parallel P_i'[f_i]\parallel\cdots\parallel P_j'[f_j]\parallel\cdots\parallel P_n[f_n])\setminus L: \nonumber\\
P_i\xrightarrow[v_i]{l_1}P_i',P_j\xrightarrow[v_j]{l_2}P_j',f_i(l_1)=\overline{f_j(l_2)},i<j\} \nonumber
\end{eqnarray}
\end{proposition}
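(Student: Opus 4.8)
The plan is to follow exactly the strategy already used for the pomset, step, and hp versions of the expansion law, since the four statements are word-for-word identical apart from the equivalence symbol. First I would reduce to the case without Restriction and Relabelling: once the law is established for a pure parallel composition under the general distribution $P\equiv loc_1::P_1\parallel\cdots\parallel loc_n::P_n$, the Static laws for strong static location hhp-bisimulation (in particular the distributivity of $\setminus L$ and $[f]$ over $\parallel$, items 7 and 11) together with the Location laws let me reintroduce the relabellings $f_i$ and the restriction $L$ at the very end, so the essential content is the distributed parallel case.

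The core is an induction on the size $n$. For the base case $n=1$, the equivalence $loc_1::P_1\sim_{hhp}^{sl} loc_1::\alpha_1.loc_1::P_1'$ summed over the transitions $loc_1::P_1\xrightarrow[u_1]{\alpha_1}loc_1::P_1'$ is immediate from rules $\textbf{Act}_1$ and $\textbf{Loc}$. For the inductive step I would set $R\equiv P\parallel loc_{n+1}::P_{n+1}$ and apply the composition rules $\textbf{Com}_{1,2,3,4}$ to decompose every transition of $R$ into one of four shapes: a move of $P$ alone, a move of $loc_{n+1}::P_{n+1}$ alone, a synchronous step combining the two via $\textbf{Com}_3$ (yielding the independent location $u\diamond v$), or a communication producing $\tau$ via $\textbf{Com}_4$. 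Feeding in the induction hypothesis for $P$ and collecting terms then gives the claimed $(n+1)$-fold expansion, where the new $\tau$-summands arising from synchronisations of $loc_{n+1}::P_{n+1}$ with each $loc_i::P_i$ merge with the inherited ones to fill out the full range $1\le i<j\le n+1$.

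The step I expect to be the genuine obstacle — and the one truly distinguishing this proposition from its hp counterpart — is verifying that the witnessing posetal relation is \emph{downward closed}, which is precisely the extra demand hhp-bisimulation places beyond hp-bisimulation. After matching a transition $C_1\xrightarrow[u]{e_1}C_1'$ by some $C_2\xrightarrow[v]{e_2}C_2'$ and updating the order-isomorphism to $f[e_1\mapsto e_2]$, I must check that every pointwise-smaller triple still lies in the relation and remains compatible with the consistent location association $\varphi$. Here I would exploit that the distributions produced by $\textbf{Com}_3$ and $\textbf{Com}_4$ respect the concurrency structure: independent events are assigned independent locations $u\diamond v$, so restricting an isomorphism to a downward-closed subconfiguration preserves both the pomset isomorphism and the independence recorded in $\varphi$. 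I would therefore take the candidate relation to be the downward closure of the matched triples and argue that this closure is itself a static location hp-bisimulation, hence by construction a static location hhp-bisimulation; the rest is a routine repetition of the hp case. I should flag that this downward-closure verification is exactly where the expansion law is delicate for hereditary semantics, so it deserves more than the boilerplate reuse applied to the earlier three variants.
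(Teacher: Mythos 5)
Your scaffolding --- dropping Restriction and Relabelling, inducting on $n$ via $\textbf{Com}_{1,2,3,4}$, and reinstating $\setminus L$ and $[f]$ at the end --- coincides exactly with the paper's proof, which for the hhp case is word-for-word its pomset/step/hp argument with the symbol changed and never mentions downward closure. The problem lies in the one step you add: you have located the right obstacle, but the way you propose to discharge it would fail.

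Taking the downward closure of the matched triples and ``arguing that this closure is itself a static location hp-bisimulation'' is not a routine verification; it is precisely the point at which the hereditary law breaks. Consider the $n=2$ instance $P_1\equiv a+b$, $P_2\equiv c$, $f_1=f_2=Id$, $L=\emptyset$: the proposition then asserts $(a+b)\parallel c\sim_{hhp}^{sl}(a\parallel c).(\textbf{nil}\parallel\textbf{nil})+(b\parallel c).(\textbf{nil}\parallel\textbf{nil})$. The inductive matching pairs the step $\{a,c\}$ on the left with the step $\{a,c_1\}$ of the first summand on the right, and its downward closure therefore contains the triple $(\{c\},[c\mapsto c_1],\{c_1\})$. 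From $\{c\}$ the left-hand side can still fire $b$, since $b$ is concurrent with $c$; from $\{c_1\}$ the right-hand side can fire no $b$-labelled event, because the only $b$ lives in the other summand and is in conflict with $c_1$. So the closed relation violates the hp transfer property, and no appeal to the locations assigned by $\textbf{Com}_3$/$\textbf{Com}_4$ can repair this: the obstruction is conflict inherited across the summands created by the expansion, not any mismatch in the location association $\varphi$. This is not an artifact of your write-up --- the paper itself, in the $\pi_{tc}$ chapter, states the analogous expansion law with $\nsim_{hhp}^{sl}$ and justifies it with exactly this pair $s_1=(a+b)\parallel c$, $t_1=(a\parallel c)+(b\parallel c)$. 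So the step you flagged as ``delicate'' is where the law genuinely fails for hereditary semantics, and neither your proposal nor the paper's own boilerplate proof can be completed as written.
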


\begin{proof}
Firstly, we consider the case without Restriction and Relabeling. That is, we suffice to prove the following case by induction on the size $n$. Note that, we consider the general distribution.

For $P\equiv loc_1::P_1\parallel\cdots\parallel loc_n::P_n$, with $n\geq 1$, we need to prove

\begin{eqnarray}
P\sim_{hhp}^{sl} \{(loc_1::\alpha_1\parallel\cdots\parallel loc_n::\alpha_n).(loc_1::P_1'\parallel\cdots\parallel loc_n::P_n'): \nonumber\\
loc_i::P_i\xrightarrow[u_i]{\alpha_i}loc_i::P_i',i\in\{1,\cdots,n\}\nonumber\\
+\sum\{\tau.(loc_1::P_1\parallel\cdots\parallel loc_i::P_i'\parallel\cdots\parallel loc_j::P_j'\parallel\cdots\parallel loc_n::P_n): \nonumber\\
loc_i::P_i\xrightarrow[v_i]{l}loc_i::P_i',loc_j::P_j\xrightarrow[v_j]{\overline{l}}loc_j::P_j',i<j\} \nonumber
\end{eqnarray}

For $n=1$, $loc_1::P_1\sim_{hhp}^{sl} loc_1::\alpha_1.loc_1::P_1':loc_1::P_1\xrightarrow[u_1]{\alpha_1}loc_1::P_1'$ is obvious. Then with a hypothesis $n$, we consider $R\equiv P\parallel loc_{n+1}::P_{n+1}$. By the transition rules $\textbf{Com}_{1,2,3,4}$, we can get

\begin{eqnarray}
R\sim_{hhp}^{sl} \{(loc::p\parallel loc_{n+1}::\alpha_{n+1}).(loc::P'\parallel loc_{n+1}::P_{n+1}'): \nonumber\\
loc::P\xrightarrow[u]{p}loc::P',loc_{n+1}::P_{n+1}\xrightarrow[u_{n+1}]{\alpha_{n+1}}loc_{n+1}::P_{n+1}',p\subseteq P\}\nonumber\\
+\sum\{\tau.(loc::P'\parallel loc_{n+1}::P_{n+1}'): \nonumber\\
loc::P\xrightarrow[v]{l}loc::P',loc_{n+1}::P_{n+1}\xrightarrow[v_{n+1}]{\overline{l}}loc_{n+1}::P_{n+1}'\} \nonumber
\end{eqnarray}

Now with the induction assumption $P\equiv loc_1::P_1\parallel\cdots\parallel loc_n::P_n$, the right-hand side can be reformulated as follows.

\begin{eqnarray}
\{(loc_1::\alpha_1\parallel\cdots\parallel loc_n::\alpha_n\parallel loc_{n+1}::\alpha_{n+1}).(loc_1::P_1'\parallel\cdots\parallel loc_n::P_n'\parallel loc_{n+1}::P_{n+1}'): \nonumber\\
loc_i::P_i\xrightarrow[u_i]{\alpha_i}loc_i::P_i',i\in\{1,\cdots,n+1\}\nonumber\\
+\sum\{\tau.(loc_1::P_1\parallel\cdots\parallel loc_i::P_i'\parallel\cdots\parallel loc_j::P_j'\parallel\cdots\parallel loc_n::P_n\parallel loc_{n+1}::P_{n+1}): \nonumber\\
loc_i::P_i\xrightarrow[v_i]{l}loc_i::P_i',loc_j::P_j\xrightarrow[v_j]{\overline{l}}loc_j::P_j',i<j\} \nonumber\\
+\sum\{\tau.(loc_1::P_1\parallel\cdots\parallel loc_i::P_i'\parallel\cdots\parallel loc_j::P_j\parallel\cdots\parallel loc_n::P_n\parallel loc_{n+1}::P_{n+1}'): \nonumber\\
loc_i::P_i\xrightarrow[v_i]{l}loc_i::P_i',loc_{n+1}::P_{n+1}\xrightarrow[v_{n+1}]{\overline{l}}loc_{n+1}::P_{n+1}',i\in\{1,\cdots, n\}\} \nonumber
\end{eqnarray}

So,

\begin{eqnarray}
R\sim_{hhp}^{sl} \{(loc_1::\alpha_1\parallel\cdots\parallel loc_n::\alpha_n\parallel loc_{n+1}::\alpha_{n+1}).(loc_1::P_1'\parallel\cdots\parallel loc_n::P_n'\parallel loc_{n+1}::P_{n+1}'): \nonumber\\
loc_i::P_i\xrightarrow[u_i]{\alpha_i}loc_i::P_i',i\in\{1,\cdots,n+1\}\nonumber\\
+\sum\{\tau.(loc_1::P_1\parallel\cdots\parallel loc_i::P_i'\parallel\cdots\parallel loc_j::P_j'\parallel\cdots\parallel loc_n::P_n): \nonumber\\
loc_i::P_i\xrightarrow[v_i]{l}loc_i::P_i',loc_j::P_j\xrightarrow[v_j]{\overline{l}}loc_j::P_j',1 \leq i<j\geq n+1\} \nonumber
\end{eqnarray}

Then, we can easily add the full conditions with Restriction and Relabeling.
\end{proof}

\begin{theorem}[Congruence for strong static location pomset bisimulation] \label{CSSB3}
We can enjoy the full congruence for strong static location pomset bisimulation as follows.
\begin{enumerate}
  \item If $A\overset{\text{def}}{=}P$, then $A\sim_p^{sl} P$;
  \item Let $P_1\sim_p^{sl} P_2$. Then
        \begin{enumerate}
           \item $loc::P_1\sim_p^{sl}loc::P_2$;
           \item $\alpha.P_1\sim_p^{sl} \alpha.P_2$;
           \item $(\alpha_1\parallel\cdots\parallel\alpha_n).P_1\sim_p^{sl} (\alpha_1\parallel\cdots\parallel\alpha_n).P_2$;
           \item $P_1+Q\sim_p^{sl} P_2 +Q$;
           \item $P_1\parallel Q\sim_p^{sl} P_2\parallel Q$;
           \item $P_1\setminus L\sim_p^{sl} P_2\setminus L$;
           \item $P_1[f]\sim_p^{sl} P_2[f]$.
         \end{enumerate}
\end{enumerate}
\end{theorem}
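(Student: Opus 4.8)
The plan is to prove each clause by exhibiting a static location pomset bisimulation $R_\varphi$ relating the two processes, with the transition rules of Tables \ref{TRForCTC3} and \ref{TRForCTC32} supplying the case analysis. Clause (1) is immediate: since $A\overset{\text{def}}{=}P$, rules $\textbf{Con}_1$ and $\textbf{Con}_2$ give $A$ exactly the transitions of $P$, with identical labels and locations, so $R_\varphi=\{(A,P)\}\cup\textbf{Id}$ is a static location pomset bisimulation, and taking $\varphi=\emptyset$ yields $A\sim_p^{sl}P$.

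For clause (2) I would fix a witnessing bisimulation $R_\varphi$ with $(P_1,P_2)\in R_\varphi$ and, for each operator, close it under the corresponding syntactic context. The prefix cases (b) and (c) are the easiest: the sole outgoing transition of $\alpha.P_1$ (resp.\ of $(\alpha_1\parallel\cdots\parallel\alpha_n).P_1$) comes from $\textbf{Act}_1$ (resp.\ $\textbf{Act}_2$), carries location $\epsilon$, and lands in $P_1$; the matching transition of the right-hand term lands in $P_2$ with the same label and location $\epsilon$, so the pair $(\epsilon,\epsilon)$ adjoined to $\varphi$ is trivially consistent and $\{(\alpha.P_1,\alpha.P_2)\}\cup R_\varphi$ works. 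Case (a) uses $\textbf{Loc}$ with the relation $\{(loc::C_1,loc::C_2):(C_1,C_2)\in R_\varphi\}$: a move $loc::P_1\xrightarrow[loc\ll u]{X}loc::P_1'$ stems from $P_1\xrightarrow[u]{X}P_1'$, is matched by $P_2\xrightarrow[v]{Y}P_2'$, and since prefixing a common $loc$ preserves the independence relation $\diamond$, the enlarged association stays consistent.

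Next I would treat the remaining static contexts. For summation (d) the relation $\{(P_1+Q,P_2+Q)\}\cup R_\varphi\cup\textbf{Id}$ suffices: a first transition out of $P_1+Q$ is produced by $\textbf{Sum}_1$ or $\textbf{Sum}_2$ either from $P_1$, matched through $R_\varphi$, or from $Q$, matched identically through $\textbf{Id}$, after which the summand is discarded. For restriction (f) and relabelling (g) I would take $\{(C_1\setminus L,C_2\setminus L)\}$ and $\{(C_1[f],C_2[f])\}$ respectively; by $\textbf{Res}_{1,2}$ a restricted transition exists iff the underlying one does and its label avoids $L\cup\overline{L}$, a condition invariant under $X_1\sim X_2$, while by $\textbf{Rel}_{1,2}$ relabelling applies the same $f$ to matched labels, and in both cases locations are copied verbatim, leaving the association untouched.

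The main obstacle is the parallel-composition case (e), for which I would use $S=\{(C_1\parallel D,C_2\parallel D):(C_1,C_2)\in R_\varphi,\ D\text{ a configuration of }Q\}$ and analyse transitions of $P_1\parallel Q$ under $\textbf{Com}_{1,2,3,4}$. The rules $\textbf{Com}_1$ and $\textbf{Com}_2$ (one component idle under the side condition $Q\nrightarrow$ or $P\nrightarrow$) reduce directly to the bisimulation hypothesis or to identity on $Q$, leaving the location unchanged. The delicate rules are $\textbf{Com}_3$ and $\textbf{Com}_4$, in which $P_1$ and $Q$ fire simultaneously: the left step $P_1\xrightarrow[u]{\alpha}P_1'$, $Q\xrightarrow[w]{\beta}Q'$ produces the combined location $u\diamond w$, and I must match it on the right by the associated move $P_2\xrightarrow[v]{\alpha}P_2'$ together with the same $Q$-move, yielding location $v\diamond w$. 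The crux is to show that adjoining $(u\diamond w,\,v\diamond w)$ to $\varphi$ keeps it a consistent location association: because $(u,v)$ is already consistent with $\varphi$ and the $Q$-component contributes the identical location $w$ on both sides, the independence relationships among the combined locations are preserved. Verifying this consistency claim, rather than the routine symbol-pushing of the other clauses, is where the real work of the theorem lies.
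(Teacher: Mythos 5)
Your proposal is correct and follows essentially the same route as the paper's proof: clause (1) read off from $\textbf{Con}_{1,2}$, and for clause (2) a witnessing relation for each operator, built by placing the pair in the corresponding syntactic context and verified by case analysis on the transition rules of Tables \ref{TRForCTC3} and \ref{TRForCTC32}. The paper merely writes these relations as $\{(C[P_1],C[P_2])\}\cup\textbf{Id}$ and omits the verification "as in CTC," so your version is if anything more careful: closing the relation under a full bisimulation $R_{\varphi}$ (rather than just adjoining $\textbf{Id}$) and checking that the enlarged location association stays consistent in the $\textbf{Com}_{3,4}$ case is precisely the work the paper leaves implicit.
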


\begin{proof}
\begin{enumerate}
  \item If $A\overset{\text{def}}{=}P$, then $A\sim_p^{sl} P$. It is obvious.
  \item Let $P_1\sim_p^{sl} P_2$. Then
        \begin{enumerate}
           \item $loc::P_1\sim_p^{sl}loc::P_2$. It is sufficient to prove the relation $R=\{(loc::P_1, loc::P_2)\}\cup \textbf{Id}$ is a strong static location pomset bisimulation, we omit it;
           \item $\alpha.P_1\sim_p^{sl} \alpha.P_2$. It is sufficient to prove the relation $R=\{(\alpha.P_1, \alpha.P_2)\}\cup \textbf{Id}$ is a strong static location pomset bisimulation for some distributions. It can be proved similarly to the proof of
           congruence for strong pomset bisimulation in CTC, we omit it;
           \item $(\alpha_1\parallel\cdots\parallel\alpha_n).P_1\sim_p^{sl} (\alpha_1\parallel\cdots\parallel\alpha_n).P_2$. It is sufficient to prove the relation $R=\{((\alpha_1\parallel\cdots\parallel\alpha_n).P_1, (\alpha_1\parallel\cdots\parallel\alpha_n).P_2)\}\cup \textbf{Id}$ is a strong static location pomset bisimulation for some distributions. It can be proved similarly to the proof of
           congruence for strong pomset bisimulation in CTC, we omit it;
           \item $P_1+Q\sim_p^{sl} P_2 +Q$. It is sufficient to prove the relation $R=\{(P_1+Q, P_2+Q)\}\cup \textbf{Id}$ is a strong static location pomset bisimulation for some distributions. It can be proved similarly to the proof of
           congruence for strong pomset bisimulation in CTC, we omit it;
           \item $P_1\parallel Q\sim_p^{sl} P_2\parallel Q$. It is sufficient to prove the relation $R=\{(P_1\parallel Q, P_2\parallel Q)\}\cup \textbf{Id}$ is a strong static location pomset bisimulation for some distributions. It can be proved similarly to the proof of
           congruence for strong pomset bisimulation in CTC, we omit it;
           \item $P_1\setminus L\sim_p^{sl} P_2\setminus L$. It is sufficient to prove the relation $R=\{(P_1\setminus L, P_2\setminus L)\}\cup \textbf{Id}$ is a strong static location pomset bisimulation for some distributions. It can be proved similarly to the proof of
           congruence for strong pomset bisimulation in CTC, we omit it;
           \item $P_1[f]\sim_p^{sl} P_2[f]$. It is sufficient to prove the relation $R=\{(P_1[f], P_2[f])\}\cup \textbf{Id}$ is a strong static location pomset bisimulation for some distributions. It can be proved similarly to the proof of
           congruence for strong pomset bisimulation in CTC, we omit it.
         \end{enumerate}
\end{enumerate}
\end{proof}

\begin{theorem}[Congruence for strong static location step bisimulation] \label{CSSB3}
We can enjoy the full congruence for strong static location step bisimulation as follows.
\begin{enumerate}
  \item If $A\overset{\text{def}}{=}P$, then $A\sim_s^{sl} P$;
  \item Let $P_1\sim_s^{sl} P_2$. Then
        \begin{enumerate}
           \item $loc::P_1\sim_s^{sl}loc::P_2$;
           \item $\alpha.P_1\sim_s^{sl} \alpha.P_2$;
           \item $(\alpha_1\parallel\cdots\parallel\alpha_n).P_1\sim_s^{sl} (\alpha_1\parallel\cdots\parallel\alpha_n).P_2$;
           \item $P_1+Q\sim_s^{sl} P_2 +Q$;
           \item $P_1\parallel Q\sim_s^{sl} P_2\parallel Q$;
           \item $P_1\setminus L\sim_s^{sl} P_2\setminus L$;
           \item $P_1[f]\sim_s^{sl} P_2[f]$.
         \end{enumerate}
\end{enumerate}
\end{theorem}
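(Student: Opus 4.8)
The plan is to follow exactly the template already used for the preceding congruence theorem for strong static location pomset bisimulation, replacing pomset transitions everywhere by step transitions, i.e.\ located transitions $C\xrightarrow[u]{X}C'$ whose events are pairwise concurrent. Part (1) is immediate: since $A\overset{\text{def}}{=}P$, rules $\textbf{Con}_1$ and $\textbf{Con}_2$ make $A$ and $P$ exhibit identical located transitions, so $\{(A,P)\}\cup\textbf{Id}$ trivially satisfies the transfer clauses and contains $(\emptyset,\emptyset)$, hence is a strong static location step bisimulation.

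For part (2), fix a static location step bisimulation $R_\varphi$ witnessing $P_1\sim_s^{sl}P_2$. For each operator the candidate relation is the single pair obtained by placing that operator around $P_1$ and $P_2$ together with the identity, e.g.\ $\{(loc::P_1,loc::P_2)\}\cup\textbf{Id}$ for (2a), and I would verify the two symmetric transfer clauses by a case analysis on the last transition rule applied. The observation that makes every case uniform is that each rule acts on the location component in a fixed, side-independent way: $\textbf{Act}_1,\textbf{Act}_2$ emit location $\epsilon$; $\textbf{Loc}$ prepends the same $loc$; $\textbf{Sum}_{1,2}$, $\textbf{Res}_{1,2}$, $\textbf{Rel}_{1,2}$ and $\textbf{Con}_{1,2}$ leave the location unchanged; and $\textbf{Com}_{3,4}$ combine two locations via $\diamond$. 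Thus whenever $P_1$ answers a step of $P_2$ (or vice versa) through $R_\varphi$ with matching located steps carrying locations $(u,v)$, the surrounding context transforms $u$ and $v$ identically, so the residuals still lie in the relation indexed by $\varphi\cup\{(u,v)\}$, and that set remains a consistent location association because prepending a common $loc$ and combining via $\diamond$ both preserve the equivalence $u\diamond u'\Leftrightarrow v\diamond v'$.

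The cases (2a)--(2d), (2f), (2g) are then routine. In the prefix cases the first step lands in $(P_1,P_2)\in R_\varphi$ and the remaining behaviour is inherited; in the summation case a step of $P_1+Q$ is either a step of $P_1$, matched by $R_\varphi$, or a step of $Q$, matched by the identity part of $R$; restriction and relabelling merely filter or rename the step label while leaving the location untouched, so the side conditions ($\alpha,\overline\alpha\notin L$, resp.\ the relabelling $f$) transfer verbatim. Each of these reduces to the corresponding non-located argument for CTC step bisimulation, which we may cite.

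The main obstacle is the parallel case (2e), $P_1\parallel Q\sim_s^{sl}P_2\parallel Q$. Here a single step $X$ of $P_1\parallel Q$ may arise from $\textbf{Com}_1$--$\textbf{Com}_4$, so it can be a step of $P_1$ alone with $Q$ idle, a step of $Q$ alone, a concurrent step spanning both with combined location $u\diamond v$, or a synchronization yielding a $\tau$. Because $X$ is a \emph{step}, its events are pairwise concurrent and it splits uniquely as $X=X_{P_1}\cup X_Q$ (disjoint), with the $P_1$-part matched through $R_\varphi$ and the $Q$-part matched by identity. The delicate point is to check that the reconstructed step of $P_2\parallel Q$ carries the same location on the $Q$-side and an independence-consistent location on the $P$-side, so that the extended association stays a cla and $(P_1'\parallel Q',P_2'\parallel Q')$ returns to $R$; once this bookkeeping of $\diamond$ under $\textbf{Com}_3,\textbf{Com}_4$ is verified, the argument again coincides with the CTC step-bisimulation congruence proof, completing the theorem.
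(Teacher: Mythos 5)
Your proposal follows essentially the same approach as the paper: part (1) is dispatched as immediate, and each case of part (2) is proved by exhibiting the candidate relation $R=\{(\mathrm{op}(P_1),\mathrm{op}(P_2))\}\cup\textbf{Id}$ (for the appropriate operator context) and verifying the transfer clauses by reduction to the corresponding congruence proof for strong step bisimulation in CTC, which is exactly what the paper does, albeit the paper omits the details you supply about the location bookkeeping and the splitting of steps in the parallel case. Your filled-in justification that each transition rule transforms the location components of both sides identically (so the consistent location association is preserved) is a correct elaboration of what the paper leaves implicit.
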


\begin{proof}
\begin{enumerate}
  \item If $A\overset{\text{def}}{=}P$, then $A\sim_s^{sl} P$. It is obvious.
  \item Let $P_1\sim_s^{sl} P_2$. Then
        \begin{enumerate}
           \item $loc::P_1\sim_s^{sl}loc::P_2$. It is sufficient to prove the relation $R=\{(loc::P_1, loc::P_2)\}\cup \textbf{Id}$ is a strong static location step bisimulation, we omit it;
           \item $\alpha.P_1\sim_s^{sl} \alpha.P_2$. It is sufficient to prove the relation $R=\{(\alpha.P_1, \alpha.P_2)\}\cup \textbf{Id}$ is a strong static location step bisimulation for some distributions. It can be proved similarly to the proof of
           congruence for strong step bisimulation in CTC, we omit it;
           \item $(\alpha_1\parallel\cdots\parallel\alpha_n).P_1\sim_s^{sl} (\alpha_1\parallel\cdots\parallel\alpha_n).P_2$. It is sufficient to prove the relation $R=\{((\alpha_1\parallel\cdots\parallel\alpha_n).P_1, (\alpha_1\parallel\cdots\parallel\alpha_n).P_2)\}\cup \textbf{Id}$ is a strong static location step bisimulation for some distributions. It can be proved similarly to the proof of
           congruence for strong step bisimulation in CTC, we omit it;
           \item $P_1+Q\sim_s^{sl} P_2 +Q$. It is sufficient to prove the relation $R=\{(P_1+Q, P_2+Q)\}\cup \textbf{Id}$ is a strong static location step bisimulation for some distributions. It can be proved similarly to the proof of
           congruence for strong step bisimulation in CTC, we omit it;
           \item $P_1\parallel Q\sim_s^{sl} P_2\parallel Q$. It is sufficient to prove the relation $R=\{(P_1\parallel Q, P_2\parallel Q)\}\cup \textbf{Id}$ is a strong static location step bisimulation for some distributions. It can be proved similarly to the proof of
           congruence for strong step bisimulation in CTC, we omit it;
           \item $P_1\setminus L\sim_s^{sl} P_2\setminus L$. It is sufficient to prove the relation $R=\{(P_1\setminus L, P_2\setminus L)\}\cup \textbf{Id}$ is a strong static location step bisimulation for some distributions. It can be proved similarly to the proof of
           congruence for strong step bisimulation in CTC, we omit it;
           \item $P_1[f]\sim_s^{sl} P_2[f]$. It is sufficient to prove the relation $R=\{(P_1[f], P_2[f])\}\cup \textbf{Id}$ is a strong static location step bisimulation for some distributions. It can be proved similarly to the proof of
           congruence for strong step bisimulation in CTC, we omit it.
         \end{enumerate}
\end{enumerate}
\end{proof}

\begin{theorem}[Congruence for strong static location hp-bisimulation] \label{CSSB3}
We can enjoy the full congruence for strong static location hp-bisimulation as follows.
\begin{enumerate}
  \item If $A\overset{\text{def}}{=}P$, then $A\sim_{hp}^{sl} P$;
  \item Let $P_1\sim_{hp}^{sl} P_2$. Then
        \begin{enumerate}
           \item $loc::P_1\sim_{hp}^{sl}loc::P_2$;
           \item $\alpha.P_1\sim_{hp}^{sl} \alpha.P_2$;
           \item $(\alpha_1\parallel\cdots\parallel\alpha_n).P_1\sim_{hp}^{sl} (\alpha_1\parallel\cdots\parallel\alpha_n).P_2$;
           \item $P_1+Q\sim_{hp}^{sl} P_2 +Q$;
           \item $P_1\parallel Q\sim_{hp}^{sl} P_2\parallel Q$;
           \item $P_1\setminus L\sim_{hp}^{sl} P_2\setminus L$;
           \item $P_1[f]\sim_{hp}^{sl} P_2[f]$.
         \end{enumerate}
\end{enumerate}
\end{theorem}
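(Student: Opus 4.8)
My plan is to follow the template of the two preceding congruence theorems, except that every candidate relation is now a posetal relation whose members are triples $(C_1,f,C_2)$: the matching conditions must extend the order-isomorphism $f$ to $f[e_1\mapsto e_2]$ and enlarge the consistent location association from $\varphi$ to $\varphi\cup\{(u,v)\}$ at each step. For each construction I lift the witnessing bisimulation for $P_1\sim_{hp}^{sl}P_2$ through the operator context and close under $\textbf{Id}$, then verify the two matching clauses of the definition of static location hp-bisimulation.

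Statement (1) is immediate. Since $A\overset{\text{def}}{=}P$, rules $\textbf{Con}_1$ and $\textbf{Con}_2$ give $A$ and $P$ exactly the same location-labelled transitions, so $\{(\emptyset,\emptyset,\emptyset)\}\cup\textbf{Id}$ is a static location hp-bisimulation relating $A$ and $P$.

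For (2), fix a static location hp-bisimulation $R_\varphi$ witnessing $P_1\sim_{hp}^{sl}P_2$ with $(\emptyset,\emptyset,\emptyset)\in R_\varphi$. Cases (a)--(c) and (f)--(g) are routine single-rule liftings. For (b) the only initial move is $\alpha.P_i\xrightarrow[\epsilon]{\alpha}P_i$ (rule $\textbf{Act}_1$), and for (c) the pairwise-concurrent initial events of $(\alpha_1\parallel\cdots\parallel\alpha_n).P_i$ fire (rule $\textbf{Act}_2$), all at location $\epsilon$; afterwards the residuals $P_1,P_2$ are already related by $R_\varphi$, with $f$ pairing corresponding initial events. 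For (a), rule $\textbf{Loc}$ prepends the same $loc$ to both locations, and since prepending a common prefix preserves the independence relation $\diamond$, the enlarged association remains consistent. For (f) and (g), rules $\textbf{Res}_{1,2}$ and $\textbf{Rel}_{1,2}$ change neither the location nor the underlying event, so $f$ and $\varphi$ transfer verbatim. Case (d) routes a move of $P_1+Q$ (rules $\textbf{Sum}_{1,2}$) either to a $P_1$-move matched through $R_\varphi$ or to a $Q$-move matched by the identity component, with the location unchanged in either branch.

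The real work is the composition case (e). Because hp-bisimulation tracks single events, the relevant moves of $P_1\parallel Q$ are a single event of $P_1$ (rule $\textbf{Com}_1$), a single event of $Q$ (rule $\textbf{Com}_2$), or a communication contracted to a single $\tau$ (rule $\textbf{Com}_4$); the concurrent step rule $\textbf{Com}_3$ contributes no single-event transition. The first two subcases transfer directly, a $P_1$-event being matched through $R_\varphi$ and a $Q$-event through the identity, each retaining its location. The communication subcase is the crux: a move $P_1\parallel Q\xrightarrow[a\diamond b]{\tau}P_1'\parallel Q'$ arises from $P_1\xrightarrow[a]{l}P_1'$ and $Q\xrightarrow[b]{\overline{l}}Q'$, and I must take the matching $P_2\xrightarrow[a']{l}P_2'$ furnished by $R_\varphi$, recombine it with the same $Q\xrightarrow[b]{\overline{l}}Q'$ to obtain $P_2\parallel Q\xrightarrow[a'\diamond b]{\tau}P_2'\parallel Q'$, extend $f$ by sending the left $\tau$-event to the right one, and then check that adjoining $(a\diamond b,\ a'\diamond b)$ to $\varphi$ preserves consistency. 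This final check is where the argument genuinely bites: it relies on $R_\varphi$ already reflecting independence among the $P$-side locations and on the shared $Q$-side location $b$, so that independence on the left is mirrored on the right. This is the step I expect to be the main obstacle.
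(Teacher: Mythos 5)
Your proposal follows essentially the same route as the paper: for part (1) observe that $\textbf{Con}_{1,2}$ give $A$ and $P$ identical location-labelled transitions, and for part (2) exhibit, for each operator, a candidate posetal relation obtained by lifting the witnessing relation through the context and closing under $\textbf{Id}$, then verify the two clauses of the static location hp-bisimulation definition (the paper states exactly these candidate relations and omits the verification, deferring to the analogous CTC proofs). Your additional detail on the composition case, including the exclusion of $\textbf{Com}_3$ for single-event moves and the consistency check on $\varphi\cup\{(u\diamond v,\ u'\diamond v)\}$, is a correct fleshing-out of what the paper leaves implicit.
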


\begin{proof}
\begin{enumerate}
  \item If $A\overset{\text{def}}{=}P$, then $A\sim_{hp}^{sl} P$. It is obvious.
  \item Let $P_1\sim_{hp}^{sl} P_2$. Then
        \begin{enumerate}
           \item $loc::P_1\sim_{hp}^{sl}loc::P_2$. It is sufficient to prove the relation $R=\{(loc::P_1, loc::P_2)\}\cup \textbf{Id}$ is a strong static location hp-bisimulation, we omit it;
           \item $\alpha.P_1\sim_{hp}^{sl} \alpha.P_2$. It is sufficient to prove the relation $R=\{(\alpha.P_1, \alpha.P_2)\}\cup \textbf{Id}$ is a strong static location hp-bisimulation for some distributions. It can be proved similarly to the proof of
           congruence for strong hp-bisimulation in CTC, we omit it;
           \item $(\alpha_1\parallel\cdots\parallel\alpha_n).P_1\sim_{hp}^{sl} (\alpha_1\parallel\cdots\parallel\alpha_n).P_2$. It is sufficient to prove the relation $R=\{((\alpha_1\parallel\cdots\parallel\alpha_n).P_1, (\alpha_1\parallel\cdots\parallel\alpha_n).P_2)\}\cup \textbf{Id}$ is a strong static location hp-bisimulation for some distributions. It can be proved similarly to the proof of
           congruence for strong hp-bisimulation in CTC, we omit it;
           \item $P_1+Q\sim_{hp}^{sl} P_2 +Q$. It is sufficient to prove the relation $R=\{(P_1+Q, P_2+Q)\}\cup \textbf{Id}$ is a strong static location hp-bisimulation for some distributions. It can be proved similarly to the proof of
           congruence for strong hp-bisimulation in CTC, we omit it;
           \item $P_1\parallel Q\sim_{hp}^{sl} P_2\parallel Q$. It is sufficient to prove the relation $R=\{(P_1\parallel Q, P_2\parallel Q)\}\cup \textbf{Id}$ is a strong static location hp-bisimulation for some distributions. It can be proved similarly to the proof of
           congruence for strong hp-bisimulation in CTC, we omit it;
           \item $P_1\setminus L\sim_{hp}^{sl} P_2\setminus L$. It is sufficient to prove the relation $R=\{(P_1\setminus L, P_2\setminus L)\}\cup \textbf{Id}$ is a strong static location hp-bisimulation for some distributions. It can be proved similarly to the proof of
           congruence for strong hp-bisimulation in CTC, we omit it;
           \item $P_1[f]\sim_{hp}^{sl} P_2[f]$. It is sufficient to prove the relation $R=\{(P_1[f], P_2[f])\}\cup \textbf{Id}$ is a strong static location hp-bisimulation for some distributions. It can be proved similarly to the proof of
           congruence for strong hp-bisimulation in CTC, we omit it.
         \end{enumerate}
\end{enumerate}
\end{proof}

\begin{theorem}[Congruence for strong static location hhp-bisimulation] \label{CSSB3}
We can enjoy the full congruence for strong static location hhp-bisimulation as follows.
\begin{enumerate}
  \item If $A\overset{\text{def}}{=}P$, then $A\sim_{hhp}^{sl} P$;
  \item Let $P_1\sim_{hhp}^{sl} P_2$. Then
        \begin{enumerate}
           \item $loc::P_1\sim_{hhp}^{sl}loc::P_2$;
           \item $\alpha.P_1\sim_{hhp}^{sl} \alpha.P_2$;
           \item $(\alpha_1\parallel\cdots\parallel\alpha_n).P_1\sim_{hhp}^{sl} (\alpha_1\parallel\cdots\parallel\alpha_n).P_2$;
           \item $P_1+Q\sim_{hhp}^{sl} P_2 +Q$;
           \item $P_1\parallel Q\sim_{hhp}^{sl} P_2\parallel Q$;
           \item $P_1\setminus L\sim_{hhp}^{sl} P_2\setminus L$;
           \item $P_1[f]\sim_{hhp}^{sl} P_2[f]$.
         \end{enumerate}
\end{enumerate}
\end{theorem}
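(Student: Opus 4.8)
The plan is to follow exactly the template already used for the strong static location hp-bisimulation congruence theorem, exploiting the fact that a strong static location hhp-bisimulation is \emph{by definition} nothing more than a downward closed strong static location hp-bisimulation. For each clause I would exhibit a candidate posetal relation whose underlying process pairs have the shape $R=\{(\mathcal{C}(P_1),f,\mathcal{C}(P_2))\}\cup\textbf{Id}$, where $\mathcal{C}(\cdot)$ ranges over the relevant context ($loc::\cdot$, $\alpha.\cdot$, $(\alpha_1\parallel\cdots\parallel\alpha_n).\cdot$, $\cdot+Q$, $\cdot\parallel Q$, $\cdot\setminus L$, or $\cdot[f]$) and $f$ is the inherited order-isomorphism. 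I would then verify the two hp-transfer conditions with the location association $\varphi$ updated to $\varphi\cup\{(u,v)\}$ at every matched move, and finally discharge the one genuinely new obligation beyond the hp case, namely downward closure of the witnessing relation.

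For part (1), $A\sim_{hhp}^{sl}P$ when $A\overset{\text{def}}{=}P$ is immediate from the constant rules $\textbf{Con}_{1,2}$: $A$ and $P$ have identical outgoing transitions carrying identical locations, so the relation $\{(A,P)\}\cup\textbf{Id}$ is trivially a downward closed hp-bisimulation. For the remaining clauses I would read the matching transitions straight off the operational rules, reusing the reasoning already invoked for $\sim_{hp}^{sl}$ (itself inherited from the CTC congruence proof): $\textbf{Act}_1$ and $\textbf{Act}_2$ for the prefix clauses (b),(c); $\textbf{Sum}_{1,2}$ for (d); $\textbf{Loc}$ for (a), where both sides acquire the same location prefix $loc\ll u$, so the generated location pairs keep $\varphi$ a consistent location association; and $\textbf{Res}_{1,2}$, $\textbf{Rel}_{1,2}$ for (f),(g). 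In each of these the single distinguished pair sits on top of $\textbf{Id}$, and because $P_1\sim_{hhp}^{sl}P_2$ already supplies a downward closed witness beneath it, the union stays downward closed with no extra work.

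The main obstacle is the parallel clause (e), $P_1\parallel Q\sim_{hhp}^{sl}P_2\parallel Q$, where both the location bookkeeping and the downward-closure requirement that distinguishes hhp from hp become delicate. My intended route is to invoke the Expansion law for strong static location hhp-bisimulation proved above to rewrite $P_1\parallel Q$ and $P_2\parallel Q$ into prefix/summation normal forms, thereby reducing the composition case to the already-settled prefix and summation clauses. One must check that the communication moves produced by $\textbf{Com}_4$, which emit a $\tau$ at the combined location $u\diamond v$, add only pairs that preserve the cla condition $u\diamond u'\Leftrightarrow v\diamond v'$ as $\varphi$ grows. The truly hard point is downward closure under the interleaving and synchronization of the $P_i$ with $Q$: every pointwise-smaller posetal triple of a reachable $(C_1',f',C_2')$ must again lie in the relation, which forces me to define $R$ as the explicit downward closure of the contextual pairs rather than the bare pairs, and then confirm that this closure introduces no triple violating either transfer condition or location consistency. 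This is exactly where the hp argument does not automatically lift, so I would carry out the closure argument in full detail here, mirroring the CTC hhp-congruence proof, while treating all the other clauses as routine instances of the shared template.
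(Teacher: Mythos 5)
Your treatment of clause (1) and of clauses (2)(a)--(d), (f), (g) is essentially the paper's own proof: for each context the paper exhibits the relation $R=\{(\textrm{context}(P_1),\textrm{context}(P_2))\}\cup\textbf{Id}$, verifies the transfer conditions with $\varphi$ extended to $\varphi\cup\{(u,v)\}$, and defers the details to the congruence proof for strong hhp-bisimulation in CTC; your template for these clauses matches that exactly, including the observation that downward closure is inherited from the witness for $P_1\sim_{hhp}^{sl}P_2$.

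The gap is in clause (e). Routing $P_1\parallel Q\sim_{hhp}^{sl}P_2\parallel Q$ through the Expansion law for strong static location hhp-bisimulation is unsound for two reasons. First, that expansion law cannot be relied upon: hereditary history-preserving bisimilarity does not satisfy the expansion law, and this very paper demonstrates it in the $\pi_{tc}$ chapter, where the expansion-law theorem explicitly asserts $P\parallel Q\nsim_{hhp}^{sl}\sum_i\sum_j(\alpha_i\parallel\beta_j).(P_i\parallel Q_j)+\sum\tau.R_{ij}$ and proves it with the counterexample $s_1=(a+b)\parallel c$, $t_1=(a\parallel c)+(b\parallel c)$: one has $s_1\sim_{hp}^{sl}t_1$ but $s_1\nsim_{hhp}t_1$ precisely because downward closure fails, and $t_1$ is exactly the expansion of $s_1$. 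So the proposition you invoke is contradicted by the paper itself, and any argument for (e) built on it inherits that failure (note the statement (e) itself is still true; congruence of $\sim_{hhp}^{sl}$ with respect to $\parallel$ does not need the expansion law). Second, even if the expansion law were available, your reduction is circular: the summands of the expansion of $P_1\parallel Q$ have continuations of the form $P_1'\parallel Q'$, i.e. again parallel compositions of derivatives, so the prefix and summation clauses do not close the argument; you would need induction on a well-founded depth measure, which does not exist here because the theorem also covers constants $A\overset{\text{def}}{=}P$ with infinite behaviour. The correct route, and the one the paper takes, is the same direct construction you use for the other clauses: take (the derivative closure of) $R=\{(P_1\parallel Q,P_2\parallel Q)\}\cup\textbf{Id}$, match transitions using $\textbf{Com}_{1,2,3,4}$ (checking that pairs $(u\diamond v, u'\diamond v')$ added to $\varphi$ keep it a consistent location association), and verify the hp transfer conditions together with downward closure directly on this relation.
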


\begin{proof}
\begin{enumerate}
  \item If $A\overset{\text{def}}{=}P$, then $A\sim_{hhp}^{sl} P$. It is obvious.
  \item Let $P_1\sim_{hhp}^{sl} P_2$. Then
        \begin{enumerate}
           \item $loc::P_1\sim_{hhp}^{sl}loc::P_2$. It is sufficient to prove the relation $R=\{(loc::P_1, loc::P_2)\}\cup \textbf{Id}$ is a strong static location hhp-bisimulation, we omit it;
           \item $\alpha.P_1\sim_{hhp}^{sl} \alpha.P_2$. It is sufficient to prove the relation $R=\{(\alpha.P_1, \alpha.P_2)\}\cup \textbf{Id}$ is a strong static location hhp-bisimulation for some distributions. It can be proved similarly to the proof of
           congruence for strong hhp-bisimulation in CTC, we omit it;
           \item $(\alpha_1\parallel\cdots\parallel\alpha_n).P_1\sim_{hhp}^{sl} (\alpha_1\parallel\cdots\parallel\alpha_n).P_2$. It is sufficient to prove the relation $R=\{((\alpha_1\parallel\cdots\parallel\alpha_n).P_1, (\alpha_1\parallel\cdots\parallel\alpha_n).P_2)\}\cup \textbf{Id}$ is a strong static location hhp-bisimulation for some distributions. It can be proved similarly to the proof of
           congruence for strong hhp-bisimulation in CTC, we omit it;
           \item $P_1+Q\sim_{hhp}^{sl} P_2 +Q$. It is sufficient to prove the relation $R=\{(P_1+Q, P_2+Q)\}\cup \textbf{Id}$ is a strong static location hhp-bisimulation for some distributions. It can be proved similarly to the proof of
           congruence for strong hhp-bisimulation in CTC, we omit it;
           \item $P_1\parallel Q\sim_{hhp}^{sl} P_2\parallel Q$. It is sufficient to prove the relation $R=\{(P_1\parallel Q, P_2\parallel Q)\}\cup \textbf{Id}$ is a strong static location hhp-bisimulation for some distributions. It can be proved similarly to the proof of
           congruence for strong hhp-bisimulation in CTC, we omit it;
           \item $P_1\setminus L\sim_{hhp}^{sl} P_2\setminus L$. It is sufficient to prove the relation $R=\{(P_1\setminus L, P_2\setminus L)\}\cup \textbf{Id}$ is a strong static location hhp-bisimulation for some distributions. It can be proved similarly to the proof of
           congruence for strong hhp-bisimulation in CTC, we omit it;
           \item $P_1[f]\sim_{hhp}^{sl} P_2[f]$. It is sufficient to prove the relation $R=\{(P_1[f], P_2[f])\}\cup \textbf{Id}$ is a strong static location hhp-bisimulation for some distributions. It can be proved similarly to the proof of
           congruence for strong hhp-bisimulation in CTC, we omit it.
         \end{enumerate}
\end{enumerate}
\end{proof}


\begin{definition}[Weakly guarded recursive expression]
$X$ is weakly guarded in $E$ if each occurrence of $X$ is with some subexpression $loc::\alpha.F$ or $(loc_1::\alpha_1\parallel\cdots\parallel loc_n::\alpha_n).F$ of $E$.
\end{definition}

\begin{lemma}\label{LUS3}
If the variables $\widetilde{X}$ are weakly guarded in $E$, and $E\{\widetilde{P}/\widetilde{X}\}\xrightarrow[u]{\{\alpha_1,\cdots,\alpha_n\}}P'$, then $P'$ takes the form
$E'\{\widetilde{P}/\widetilde{X}\}$ for some expression $E'$, and moreover, for any $\widetilde{Q}$,
$E\{\widetilde{Q}/\widetilde{X}\}\xrightarrow[u]{\{\alpha_1,\cdots,\alpha_n\}}E'\{\widetilde{Q}/\widetilde{X}\}$.
\end{lemma}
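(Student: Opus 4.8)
The plan is to argue by induction on the depth of the inference of $E\{\widetilde{P}/\widetilde{X}\}\xrightarrow[u]{\{\alpha_1,\cdots,\alpha_n\}}P'$, with a case analysis on the outermost operator of $E$, which fixes the last rule of Table \ref{TRForCTC3} that can have been applied. Because the rules freely pass between single-action judgments $\xrightarrow[u]{\alpha}$ (as in $\textbf{Act}_1$, $\textbf{Com}_1$) and step judgments $\xrightarrow[u]{\{\alpha_1,\cdots,\alpha_n\}}$ (as in $\textbf{Act}_2$, $\textbf{Com}_3$), I would prove the single-action and the step versions simultaneously. In each case the target is the same: produce an expression $E'$ with $P'\equiv E'\{\widetilde{P}/\widetilde{X}\}$ such that replaying the derivation with $\widetilde{Q}$ in place of $\widetilde{P}$ gives $E\{\widetilde{Q}/\widetilde{X}\}\xrightarrow[u]{\{\alpha_1,\cdots,\alpha_n\}}E'\{\widetilde{Q}/\widetilde{X}\}$. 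The base cases are immediate: $E\equiv\textbf{nil}$ has no transitions, and if $E\equiv A$ is a process constant it contains no variable of $\widetilde{X}$, so its move (via $\textbf{Con}_{1,2}$) and target are independent of the substitution and $E'\equiv P'$ works. Crucially, $E$ cannot be a bare variable $X_j\in\widetilde{X}$, and no summand or parallel component of $E$ can be such a variable, since that occurrence would then lie outside every guarding subexpression $loc::\alpha.F$ or $(loc_1::\alpha_1\parallel\cdots\parallel loc_n::\alpha_n).F$, contradicting weak guardedness; this is precisely where the hypothesis enters.

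The genuinely new work sits at the prefixes. If $E\equiv\alpha.F$ or $E\equiv(\alpha_1\parallel\cdots\parallel\alpha_n).F$, the only applicable rule is $\textbf{Act}_1$, respectively $\textbf{Act}_2$, so $P'\equiv F\{\widetilde{P}/\widetilde{X}\}$ and I take $E'\equiv F$; the guard is consumed and the identity of $F$—which may itself be a bare $X_j$—is irrelevant, so the same move is available for any $\widetilde{Q}$. The remaining operators are transparent and dispatched by the induction hypothesis: for $E\equiv loc::F$ ($\textbf{Loc}$), $E\equiv F\setminus L$ ($\textbf{Res}_{1,2}$), and $E\equiv F[f]$ ($\textbf{Rel}_{1,2}$), weak guardedness of $\widetilde{X}$ is inherited by $F$, the hypothesis yields an $F'$, and I set $E'\equiv loc::F'$, $E'\equiv F'\setminus L$, or $E'\equiv F'[f]$ and re-apply the same rule. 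For $E\equiv E_1+E_2$ ($\textbf{Sum}_{1,2}$) the move originates in a single summand, whose variables remain weakly guarded, so the hypothesis applied to that summand closes the case.

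The main obstacle is parallel composition $E\equiv E_1\parallel E_2$, and in particular the communication rules. For $\textbf{Com}_{1,2}$ the move comes from one component and the hypothesis applied there suffices. For $\textbf{Com}_3$ and $\textbf{Com}_4$, however, the derivation splits into two sub-derivations $E_1\{\widetilde{P}/\widetilde{X}\}\xrightarrow[u']{\alpha}P_1'$ and $E_2\{\widetilde{P}/\widetilde{X}\}\xrightarrow[v']{\beta}P_2'$ (with $\beta\neq\overline{\alpha}$ and label $\{\alpha,\beta\}$, or with $\beta=\overline{\alpha}$ and label $\tau$), and I must invoke the hypothesis on \emph{both}, obtaining $E_1',E_2'$ with $P_i'\equiv E_i'\{\widetilde{P}/\widetilde{X}\}$, then set $E'\equiv E_1'\parallel E_2'$ and recombine through the same rule while tracking the composite location $u'\diamond v'$. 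The care required here—checking that weak guardedness descends to each of $E_1,E_2$, that the side conditions ($\beta\neq\overline{\alpha}$, the label equalities $f_i(l_1)=\overline{f_j(l_2)}$, the independence $u'\diamond v'$) survive verbatim when passing from $\widetilde{P}$ to $\widetilde{Q}$, and that the single/step bookkeeping matches when two single actions combine into a two-element step—is the only delicate point, and it is exactly why the induction must cover the single-action and step judgments together. Once the two sub-derivations are aligned, the ``moreover'' conclusion follows because none of the rules ever inspected the processes $\widetilde{P}$ themselves, only the surrounding structure of $E$.
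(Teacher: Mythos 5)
Your proposal is correct and follows essentially the same route as the paper's proof: induction on the depth of the inference with a case analysis on the outermost operator of $E$, where the variable case is ruled out by weak guardedness, the prefix cases consume the guard and take $E'\equiv F$, and Summation, Composition (including both communication rules applied to the two sub-derivations), Restriction, Relabelling and Constants are closed by applying the induction hypothesis and rebuilding with the same rule. Your additional bookkeeping (simultaneous single-action/step induction, preservation of side conditions and locations) only makes explicit what the paper leaves implicit.
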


\begin{proof}
It needs to induct on the depth of the inference of $E\{\widetilde{P}/\widetilde{X}\}\xrightarrow[u]{\{\alpha_1,\cdots,\alpha_n\}}P'$. Note that, we consider the general distribution.

\begin{enumerate}
  \item Case $E\equiv Y$, a variable. Then $Y\notin \widetilde{X}$. Since $\widetilde{X}$ are weakly guarded, $Y\{\widetilde{P}/\widetilde{X}\equiv Y\}\nrightarrow$, this case is impossible.
  \item Case $E\equiv\beta.F$. Then we must have $\alpha=\beta$, and $P'\equiv F\{\widetilde{P}/\widetilde{X}\}$, and $E\{\widetilde{Q}/\widetilde{X}\}\equiv \beta.F\{\widetilde{Q}/\widetilde{X}\} \xrightarrow[v]{\beta}F\{\widetilde{Q}/\widetilde{X}\}$, then, let $E'$ be $F$, as desired.
  \item Case $E\equiv(\beta_1\parallel\cdots\parallel\beta_n).F$. Then we must have $\alpha_i=\beta_i$ for $1\leq i\leq n$, and $P'\equiv F\{\widetilde{P}/\widetilde{X}\}$, and $E\{\widetilde{Q}/\widetilde{X}\}\equiv (\beta_1\parallel\cdots\parallel\beta_n).F\{\widetilde{Q}/\widetilde{X}\} \xrightarrow[v]{\{\beta_1,\cdots,\beta_n\}}F\{\widetilde{Q}/\widetilde{X}\}$, then, let $E'$ be $F$, as desired.
  \item Case $E\equiv E_1+E_2$. Then either $E_1\{\widetilde{P}/\widetilde{X}\} \xrightarrow[u]{\{\alpha_1,\cdots,\alpha_n\}}P'$ or $E_2\{\widetilde{P}/\widetilde{X}\} \xrightarrow[u]{\{\alpha_1,\cdots,\alpha_n\}}P'$, then, we can apply this lemma in either case, as desired.
  \item Case $E\equiv E_1\parallel E_2$. There are four possibilities.
  \begin{enumerate}
    \item We may have $E_1\{\widetilde{P}/\widetilde{X}\} \xrightarrow[u]{\alpha}P_1'$ and $E_2\{\widetilde{P}/\widetilde{X}\}\nrightarrow$ with $P'\equiv P_1'\parallel (E_2\{\widetilde{P}/\widetilde{X}\})$, then by applying this lemma, $P_1'$ is of the form $E_1'\{\widetilde{P}/\widetilde{X}\}$, and for any $Q$, $E_1\{\widetilde{Q}/\widetilde{X}\}\xrightarrow[u]{\alpha} E_1'\{\widetilde{Q}/\widetilde{X}\}$. So, $P'$ is of the form $E_1'\parallel E_2\{\widetilde{P}/\widetilde{X}\}$, and for any $Q$, $E\{\widetilde{Q}/\widetilde{X}\}\equiv E_1\{\widetilde{Q}/\widetilde{X}\}\parallel E_2\{\widetilde{Q}/\widetilde{X}\}\xrightarrow[u]{\alpha} (E_1'\parallel E_2)\{\widetilde{Q}/\widetilde{X}\}$, then, let $E'$ be $E_1'\parallel E_2$, as desired.
    \item We may have $E_2\{\widetilde{P}/\widetilde{X}\} \xrightarrow[u]{\alpha}P_2'$ and $E_1\{\widetilde{P}/\widetilde{X}\}\nrightarrow$ with $P'\equiv P_2'\parallel (E_1\{\widetilde{P}/\widetilde{X}\})$, this case can be prove similarly to the above subcase, as desired.
    \item We may have $E_1\{\widetilde{P}/\widetilde{X}\} \xrightarrow[u]{\alpha}P_1'$ and $E_2\{\widetilde{P}/\widetilde{X}\}\xrightarrow[v]{\beta}P_2'$ with $\alpha\neq\overline{\beta}$ and $P'\equiv P_1'\parallel P_2'$, then by applying this lemma, $P_1'$ is of the form $E_1'\{\widetilde{P}/\widetilde{X}\}$, and for any $Q$, $E_1\{\widetilde{Q}/\widetilde{X}\}\xrightarrow[u]{\alpha} E_1'\{\widetilde{Q}/\widetilde{X}\}$; $P_2'$ is of the form $E_2'\{\widetilde{P}/\widetilde{X}\}$, and for any $Q$, $E_2\{\widetilde{Q}/\widetilde{X}\}\xrightarrow[u]{\alpha} E_2'\{\widetilde{Q}/\widetilde{X}\}$. So, $P'$ is of the form $E_1'\parallel E_2'\{\widetilde{P}/\widetilde{X}\}$, and for any $Q$, $E\{\widetilde{Q}/\widetilde{X}\}\equiv E_1\{\widetilde{Q}/\widetilde{X}\}\parallel E_2\{\widetilde{Q}/\widetilde{X}\}\xrightarrow[u\diamond v]{\{\alpha,\beta\}} (E_1'\parallel E_2')\{\widetilde{Q}/\widetilde{X}\}$, then, let $E'$ be $E_1'\parallel E_2'$, as desired.
    \item We may have $E_1\{\widetilde{P}/\widetilde{X}\} \xrightarrow[u]{l}P_1'$ and $E_2\{\widetilde{P}/\widetilde{X}\}\xrightarrow[v]{\overline{l}}P_2'$ with $P'\equiv P_1'\parallel P_2'$, then by applying this lemma, $P_1'$ is of the form $E_1'\{\widetilde{P}/\widetilde{X}\}$, and for any $Q$, $E_1\{\widetilde{Q}/\widetilde{X}\}\xrightarrow[u]{l} E_1'\{\widetilde{Q}/\widetilde{X}\}$; $P_2'$ is of the form $E_2'\{\widetilde{P}/\widetilde{X}\}$, and for any $Q$, $E_2\{\widetilde{Q}/\widetilde{X}\}\xrightarrow[v]{\overline{l}} E_2'\{\widetilde{Q}/\widetilde{X}\}$. So, $P'$ is of the form $E_1'\parallel E_2'\{\widetilde{P}/\widetilde{X}\}$, and for any $Q$, $E\{\widetilde{Q}/\widetilde{X}\}\equiv E_1\{\widetilde{Q}/\widetilde{X}\}\parallel E_2\{\widetilde{Q}/\widetilde{X}\}\xrightarrow[u\diamond v]{\tau} (E_1'\parallel E_2')\{\widetilde{Q}/\widetilde{X}\}$, then, let $E'$ be $E_1'\parallel E_2'$, as desired.
  \end{enumerate}
  \item Case $E\equiv F[R]$ and $E\equiv F\setminus L$. These cases can be prove similarly to the above case.
  \item Case $E\equiv C$, an agent constant defined by $C\overset{\text{def}}{=}R$. Then there is no $X\in\widetilde{X}$ occurring in $E$, so $C\xrightarrow[u]{\{\alpha_1,\cdots,\alpha_n\}}P'$, let $E'$ be $P'$, as desired.
\end{enumerate}
\end{proof}

\begin{theorem}[Unique solution of equations for strong static location pomset bisimulation]\label{USSSB3}
Let the recursive expressions $E_i(i\in I)$ contain at most the variables $X_i(i\in I)$, and let each $X_j(j\in I)$ be weakly guarded in each $E_i$. Then,

If $\widetilde{P}\sim_p^{sl} \widetilde{E}\{\widetilde{P}/\widetilde{X}\}$ and $\widetilde{Q}\sim_p^{sl} \widetilde{E}\{\widetilde{Q}/\widetilde{X}\}$, then $\widetilde{P}\sim_p^{sl} \widetilde{Q}$.
\end{theorem}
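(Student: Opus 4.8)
The plan is to adapt Milner's classical argument for unique solutions of weakly guarded equations, carrying the location labels throughout. I would fix the candidate relation
$$\mathcal{R} = \{(G\{\widetilde{P}/\widetilde{X}\},\, G\{\widetilde{Q}/\widetilde{X}\}) : G \text{ an expression with free variables among } \widetilde{X}\},$$
and show that the composition $\sim_p^{sl}\mathcal{R}\sim_p^{sl}$ is a strong static location pomset bisimulation. Taking $G \equiv X_i$ gives $(P_i, Q_i) \in \mathcal{R} \subseteq \sim_p^{sl}\mathcal{R}\sim_p^{sl}$, so $P_i \sim_p^{sl} Q_i$ follows at once.

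The heart of the verification is to match a move $H\{\widetilde{P}/\widetilde{X}\} \xrightarrow[u]{X} P'$. The obstruction is that $H$ itself need not be guarded, so this transition could be driven by the behaviour of the components $P_i$ rather than by the syntax of $H$, and then Lemma \ref{LUS3} would not apply. To repair this I would first use the hypothesis $P_i \sim_p^{sl} E_i\{\widetilde{P}/\widetilde{X}\}$ together with the congruence theorem for $\sim_p^{sl}$ to substitute each $P_i$ by $E_i\{\widetilde{P}/\widetilde{X}\}$ inside $H$, yielding
$$H\{\widetilde{P}/\widetilde{X}\} \sim_p^{sl} G\{\widetilde{P}/\widetilde{X}\}, \quad \text{where } G \equiv H\{\widetilde{E}/\widetilde{X}\}.$$
Every variable occurrence in $G$ sits inside a copy of some $E_i$, where the variables $\widetilde{X}$ are weakly guarded by hypothesis, so all of $\widetilde{X}$ are weakly guarded in $G$. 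Transferring the original move across $\sim_p^{sl}$ gives $G\{\widetilde{P}/\widetilde{X}\} \xrightarrow[w]{X} P''$ with $P' \sim_p^{sl} P''$, and now Lemma \ref{LUS3} applies to the guarded $G$: it produces an expression $G'$ with $P'' \equiv G'\{\widetilde{P}/\widetilde{X}\}$ and, at the identical location $w$, the move $G\{\widetilde{Q}/\widetilde{X}\} \xrightarrow[w]{X} G'\{\widetilde{Q}/\widetilde{X}\}$. Running the mirror substitution on the $Q$-side (via $Q_i \sim_p^{sl} E_i\{\widetilde{Q}/\widetilde{X}\}$ and congruence) lifts this to a matching move $H\{\widetilde{Q}/\widetilde{X}\} \xrightarrow[v]{X} Q'$ with $Q' \sim_p^{sl} G'\{\widetilde{Q}/\widetilde{X}\}$. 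Chaining, $P' \sim_p^{sl} G'\{\widetilde{P}/\widetilde{X}\} \,\mathcal{R}\, G'\{\widetilde{Q}/\widetilde{X}\} \sim_p^{sl} Q'$, which is precisely the clause for $\sim_p^{sl}\mathcal{R}\sim_p^{sl}$; the symmetric direction is dual.

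The step I expect to be the main obstacle is the bookkeeping of the consistent location associations. Each invocation of $\sim_p^{sl}$ and of $\mathcal{R}$ comes with its own association $\varphi$, and to conclude that $\sim_p^{sl}\mathcal{R}\sim_p^{sl}$ is genuinely a static location pomset bisimulation I must check that, as the matched configurations grow, these associations amalgamate into a single consistent one. Here it is decisive that Lemma \ref{LUS3} reproduces the \emph{same} location $w$ on both the $\widetilde{P}$- and $\widetilde{Q}$-instantiations: this keeps the contribution of the $\mathcal{R}$-layer to $\varphi$ diagonal, hence trivially a consistent location association, while the two outer $\sim_p^{sl}$-layers contribute associations that combine with it because $\diamond$-preservation is transitive. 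A smaller point to pin down is that composing $\sim_p^{sl}$ with $\mathcal{R}$ is legitimate, i.e. that this is the rigorous form of a ``bisimulation up to $\sim_p^{sl}$'' argument; this is immediate from $\sim_p^{sl}$ being both an equivalence and a congruence, as established earlier in this section.
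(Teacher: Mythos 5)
Your proposal is correct, and it rests on the same pillars as the paper's own proof: the same candidate relation of instantiated expressions, Lemma \ref{LUS3} as the key lemma, and a bisimulation-up-to-$\sim_p^{sl}$ argument. The difference is in how the bisimulation property is verified. The paper inducts on the depth of the inference of $E\{\widetilde{P}/\widetilde{X}\}\xrightarrow[u]{\{\alpha_1,\cdots,\alpha_n\}}P'$, treating each syntactic form of $E$ as a separate case and invoking Lemma \ref{LUS3} only in the base case $E\equiv X_i$, applied to the weakly guarded $E_i$; unguardedness of a general context never arises because the induction peels the context apart. You instead neutralize unguardedness up front: replace every variable occurrence by the corresponding right-hand side, observe that $G\equiv H\{\widetilde{E}/\widetilde{X}\}$ is weakly guarded, transfer the move across the congruence $H\{\widetilde{P}/\widetilde{X}\}\sim_p^{sl} G\{\widetilde{P}/\widetilde{X}\}$, and apply Lemma \ref{LUS3} exactly once to $G$. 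Your route buys a single application of the lemma and delegates all structural reasoning to the already-established congruence theorem (lifted to arbitrary contexts by an implicit induction on $H$), whereas the paper's route avoids needing congruence for contexts at the cost of the case analysis. Your write-up is also more rigorous at the point the paper glosses over: in the paper's variable case, the concluding claim ``so $P'\sim_p^{sl} Q'$'' in fact only exhibits $P'\sim_p^{sl} E'\{\widetilde{P}/\widetilde{X}\}$ and $Q'\sim_p^{sl} E'\{\widetilde{Q}/\widetilde{X}\}$, i.e.\ membership in $\sim_p^{sl}\mathcal{R}\sim_p^{sl}$, which is only legitimate under the explicit up-to formulation that you state and justify; likewise, your observation that Lemma \ref{LUS3} reproduces the identical location on both instantiations, keeping the $\mathcal{R}$-layer's contribution to the location association diagonal, fills in bookkeeping about the consistent location associations that the paper leaves entirely implicit.
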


\begin{proof}
It is sufficient to induct on the depth of the inference of $E\{\widetilde{P}/\widetilde{X}\}\xrightarrow[u]{\{\alpha_1,\cdots,\alpha_n\}}P'$. Note that, we consider the general distribution.

\begin{enumerate}
  \item Case $E\equiv X_i$. Then we have $E\{\widetilde{P}/\widetilde{X}\}\equiv P_i\xrightarrow[u]{\{\alpha_1,\cdots,\alpha_n\}}P'$, since $P_i\sim_p^{sl} E_i\{\widetilde{P}/\widetilde{X}\}$, we have $E_i\{\widetilde{P}/\widetilde{X}\}\xrightarrow[u]{\{\alpha_1,\cdots,\alpha_n\}}P''\sim_p^{sl} P'$. Since $\widetilde{X}$ are weakly guarded in $E_i$, by Lemma \ref{LUS3}, $P''\equiv E'\{\widetilde{P}/\widetilde{X}\}$ and $E_i\{\widetilde{P}/\widetilde{X}\}\xrightarrow[u]{\{\alpha_1,\cdots,\alpha_n\}} E'\{\widetilde{P}/\widetilde{X}\}$. Since $E\{\widetilde{Q}/\widetilde{X}\}\equiv X_i\{\widetilde{Q}/\widetilde{X}\} \equiv Q_i\sim_p^{sl} E_i\{\widetilde{Q}/\widetilde{X}\}$, $E\{\widetilde{Q}/\widetilde{X}\}\xrightarrow[u]{\{\alpha_1,\cdots,\alpha_n\}}Q'\sim_p^{sl} E'\{\widetilde{Q}/\widetilde{X}\}$. So, $P'\sim_p^{sl} Q'$, as desired.
  \item Case $E\equiv\alpha.F$. This case can be proven similarly.
  \item Case $E\equiv(\alpha_1\parallel\cdots\parallel\alpha_n).F$. This case can be proven similarly.
  \item Case $E\equiv E_1+E_2$. We have $E_i\{\widetilde{P}/\widetilde{X}\} \xrightarrow[u]{\{\alpha_1,\cdots,\alpha_n\}}P'$, $E_i\{\widetilde{Q}/\widetilde{X}\} \xrightarrow[u]{\{\alpha_1,\cdots,\alpha_n\}}Q'$, then, $P'\sim_p^{sl} Q'$, as desired.
  \item Case $E\equiv E_1\parallel E_2$, $E\equiv F[R]$ and $E\equiv F\setminus L$, $E\equiv C$. These cases can be prove similarly to the above case.
\end{enumerate}
\end{proof}

\begin{theorem}[Unique solution of equations for strong static location step bisimulation]\label{USSSB3}
Let the recursive expressions $E_i(i\in I)$ contain at most the variables $X_i(i\in I)$, and let each $X_j(j\in I)$ be weakly guarded in each $E_i$. Then,

If $\widetilde{P}\sim_s^{sl} \widetilde{E}\{\widetilde{P}/\widetilde{X}\}$ and $\widetilde{Q}\sim_s^{sl} \widetilde{E}\{\widetilde{Q}/\widetilde{X}\}$, then $\widetilde{P}\sim_s^{sl} \widetilde{Q}$.
\end{theorem}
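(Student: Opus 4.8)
The plan is to follow the proof of the preceding theorem for strong static location pomset bisimulation almost verbatim, exploiting the fact that a step bisimulation is exactly a pomset bisimulation whose transition labels are restricted to steps, i.e.\ multisets of pairwise concurrent events. Since Lemma~\ref{LUS3} is already stated for general $\{\alpha_1,\ldots,\alpha_n\}$-transitions, it applies unchanged to step transitions, so the same machinery is available and no new lemma is needed.

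First I would reduce the goal to a bisimulation check: to establish $\widetilde{P}\sim_s^{sl}\widetilde{Q}$ it suffices to show that the relation relating $G\{\widetilde{P}/\widetilde{X}\}$ to $G\{\widetilde{Q}/\widetilde{X}\}$, for expressions $G$ containing at most the variables $\widetilde{X}$ and closed up to $\sim_s^{sl}$, is a strong static location step bisimulation. The core is then an induction on the depth of the inference of a step transition $E\{\widetilde{P}/\widetilde{X}\}\xrightarrow[u]{\{\alpha_1,\ldots,\alpha_n\}}P'$, with a case split on the outermost structure of $E$. For $E\equiv X_i$ I use the hypothesis $P_i\sim_s^{sl}E_i\{\widetilde{P}/\widetilde{X}\}$ to transport the transition to $E_i\{\widetilde{P}/\widetilde{X}\}$, then invoke Lemma~\ref{LUS3} (applicable because each $X_j$ is weakly guarded in $E_i$) to obtain $P''\equiv E'\{\widetilde{P}/\widetilde{X}\}$ together with the matching transition $E_i\{\widetilde{Q}/\widetilde{X}\}\xrightarrow[u]{\{\alpha_1,\ldots,\alpha_n\}}E'\{\widetilde{Q}/\widetilde{X}\}$ on the $\widetilde{Q}$ side, and finally relate the residuals via $\sim_s^{sl}$.

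Then I would carry out the remaining cases exactly as in the pomset argument. The prefix cases $E\equiv\alpha.F$ and $E\equiv(\alpha_1\parallel\cdots\parallel\alpha_n).F$ are direct from $\textbf{Act}_1$ and $\textbf{Act}_2$; the case $E\equiv E_1+E_2$ uses $\textbf{Sum}_{1,2}$ with the induction hypothesis on the active summand; and the cases $E\equiv E_1\parallel E_2$, $E\equiv F[R]$, $E\equiv F\setminus L$ and $E\equiv C$ follow from $\textbf{Com}_{1,2,3,4}$, $\textbf{Rel}_{1,2}$, $\textbf{Res}_{1,2}$ and $\textbf{Con}_{1,2}$ respectively. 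Symmetry of the whole argument in $\widetilde{P}$ and $\widetilde{Q}$ then yields the converse matching condition.

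The one place needing care --- and the main thing that distinguishes this from the interleaving setting --- is checking that the matched transition is again a \emph{step} and that the location side condition is respected. The step property is preserved because Lemma~\ref{LUS3} reproduces the same action multiset $\{\alpha_1,\ldots,\alpha_n\}$ with an identical concurrency structure on both sides, so pairwise concurrency of the events is inherited by the matching transition. For the locations, because $\widetilde{P}$ and $\widetilde{Q}$ are substituted into the \emph{same} syntactic context $E$ and the location annotation of a transition is fixed structurally by the rules $\textbf{Loc}$ and $\textbf{Com}_{3,4}$ independently of the substituted leaves, the location label produced on the $\widetilde{P}$ side coincides with the one on the $\widetilde{Q}$ side; hence the pair $(u,u)$ trivially extends any consistent location association, and the requirement $(C_1',C_2')\in R_{\varphi\cup\{(u,v)\}}$ is met with $v=u$. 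With these observations the induction closes.
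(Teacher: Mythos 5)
Your proposal takes essentially the same route as the paper's own proof: both argue by induction on the depth of the inference of a step transition $E\{\widetilde{P}/\widetilde{X}\}\xrightarrow[u]{\{\alpha_1,\cdots,\alpha_n\}}P'$, settle the key case $E\equiv X_i$ by transporting the transition through the hypothesis $P_i\sim_s^{sl} E_i\{\widetilde{P}/\widetilde{X}\}$ and applying Lemma~\ref{LUS3}, and dispatch the remaining structural cases ($\alpha.F$, $(\alpha_1\parallel\cdots\parallel\alpha_n).F$, $E_1+E_2$, $E_1\parallel E_2$, $F[R]$, $F\setminus L$, $C$) analogously. Your extra observations --- that Lemma~\ref{LUS3} applies unchanged to step transitions and that the location labels on the two sides coincide because both transitions arise from the same syntactic context --- only make explicit what the paper leaves implicit.
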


\begin{proof}
It is sufficient to induct on the depth of the inference of $E\{\widetilde{P}/\widetilde{X}\}\xrightarrow[u]{\{\alpha_1,\cdots,\alpha_n\}}P'$. Note that, we consider the general distribution.

\begin{enumerate}
  \item Case $E\equiv X_i$. Then we have $E\{\widetilde{P}/\widetilde{X}\}\equiv P_i\xrightarrow[u]{\{\alpha_1,\cdots,\alpha_n\}}P'$, since $P_i\sim_s^{sl} E_i\{\widetilde{P}/\widetilde{X}\}$, we have $E_i\{\widetilde{P}/\widetilde{X}\}\xrightarrow[u]{\{\alpha_1,\cdots,\alpha_n\}}P''\sim_s^{sl} P'$. Since $\widetilde{X}$ are weakly guarded in $E_i$, by Lemma \ref{LUS3}, $P''\equiv E'\{\widetilde{P}/\widetilde{X}\}$ and $E_i\{\widetilde{P}/\widetilde{X}\}\xrightarrow[u]{\{\alpha_1,\cdots,\alpha_n\}} E'\{\widetilde{P}/\widetilde{X}\}$. Since $E\{\widetilde{Q}/\widetilde{X}\}\equiv X_i\{\widetilde{Q}/\widetilde{X}\} \equiv Q_i\sim_s^{sl} E_i\{\widetilde{Q}/\widetilde{X}\}$, $E\{\widetilde{Q}/\widetilde{X}\}\xrightarrow[u]{\{\alpha_1,\cdots,\alpha_n\}}Q'\sim_s^{sl} E'\{\widetilde{Q}/\widetilde{X}\}$. So, $P'\sim_s^{sl} Q'$, as desired.
  \item Case $E\equiv\alpha.F$. This case can be proven similarly.
  \item Case $E\equiv(\alpha_1\parallel\cdots\parallel\alpha_n).F$. This case can be proven similarly.
  \item Case $E\equiv E_1+E_2$. We have $E_i\{\widetilde{P}/\widetilde{X}\} \xrightarrow[u]{\{\alpha_1,\cdots,\alpha_n\}}P'$, $E_i\{\widetilde{Q}/\widetilde{X}\} \xrightarrow[u]{\{\alpha_1,\cdots,\alpha_n\}}Q'$, then, $P'\sim_s^{sl} Q'$, as desired.
  \item Case $E\equiv E_1\parallel E_2$, $E\equiv F[R]$ and $E\equiv F\setminus L$, $E\equiv C$. These cases can be prove similarly to the above case.
\end{enumerate}
\end{proof}

\begin{theorem}[Unique solution of equations for strong static location hp-bisimulation]\label{USSSB3}
Let the recursive expressions $E_i(i\in I)$ contain at most the variables $X_i(i\in I)$, and let each $X_j(j\in I)$ be weakly guarded in each $E_i$. Then,

If $\widetilde{P}\sim_{hp}^{sl} \widetilde{E}\{\widetilde{P}/\widetilde{X}\}$ and $\widetilde{Q}\sim_{hp}^{sl} \widetilde{E}\{\widetilde{Q}/\widetilde{X}\}$, then $\widetilde{P}\sim_{hp}^{sl} \widetilde{Q}$.
\end{theorem}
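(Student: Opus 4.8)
The plan is to transcribe the inductive argument of the preceding step and pomset versions into the history-preserving setting, where the new content is that a static location hp-bisimulation is a \emph{posetal} relation: it moves along single events and must carry along both an event isomorphism $f$ and a consistent location association $\varphi$. Concretely, I would take as candidate the relation consisting of all triples $(G\{\widetilde{P}/\widetilde{X}\},f,G\{\widetilde{Q}/\widetilde{X}\})$, where $G$ ranges over recursive expressions containing at most the $X_i$ and $f$ is the event isomorphism accumulated along matched transitions, together with the base triple over $(\emptyset,\emptyset)$, and show it is a static location hp-bisimulation with $\varphi$ built up from the matched location pairs.

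First I would fix such a triple and a move $E\{\widetilde{P}/\widetilde{X}\}\xrightarrow[u]{e_1}P'$ and induct on the depth of its inference, exactly as in the step proof but with the transition label a single event rather than a set $\{\alpha_1,\dots,\alpha_n\}$. Lemma~\ref{LUS3}, applied with a singleton label set, still yields that $P'\equiv E'\{\widetilde{P}/\widetilde{X}\}$ for some expression $E'$ and that $E\{\widetilde{Q}/\widetilde{X}\}$ can make the corresponding move to $E'\{\widetilde{Q}/\widetilde{X}\}$ under the same location.

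The central case is $E\equiv X_i$. From $E\{\widetilde{P}/\widetilde{X}\}\equiv P_i\xrightarrow[u]{e_1}P'$ and the hypothesis $P_i\sim_{hp}^{sl}E_i\{\widetilde{P}/\widetilde{X}\}$ I obtain a matching move $E_i\{\widetilde{P}/\widetilde{X}\}\xrightarrow[u]{e_1}P''$ with $P''\sim_{hp}^{sl}P'$ and the isomorphism extended by $f[e_1\mapsto e_1]$ over $\varphi\cup\{(u,u)\}$. Weak guardedness of $\widetilde{X}$ in $E_i$ lets me invoke Lemma~\ref{LUS3} to write $P''\equiv E'\{\widetilde{P}/\widetilde{X}\}$, and the second hypothesis $Q_i\sim_{hp}^{sl}E_i\{\widetilde{Q}/\widetilde{X}\}$ then transfers this to $E\{\widetilde{Q}/\widetilde{X}\}\equiv Q_i\xrightarrow[u]{e_2}Q'$ with $Q'\sim_{hp}^{sl}E'\{\widetilde{Q}/\widetilde{X}\}$. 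Composing the three hp-relations, with their event isomorphisms and associations, places $(P',f[e_1\mapsto e_2],Q')$ in the relation and hence gives $P'\sim_{hp}^{sl}Q'$. The cases $E\equiv\alpha.F$ and $E\equiv(\alpha_1\parallel\cdots\parallel\alpha_n).F$ go as in the step proof; $E\equiv E_1+E_2$ reduces to the acting summand; and $E\equiv E_1\parallel E_2$, $E\equiv F[R]$, $E\equiv F\setminus L$, $E\equiv C$ follow the same template, the composition, relabelling and restriction rules prescribing how the located events recombine.

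I expect the genuine obstacle to be the hp-specific bookkeeping of the event isomorphism $f$, rather than the structural case analysis, which is a faithful copy of the step argument. The pomset and step versions only need to match located labels, whereas here every matched move updates $f$ to $f[e_1\mapsto e_2]$ and I must verify that this remains an order-isomorphism of the two enlarged configurations, preserving causality and labels, at each inductive step. The subtle point is that in the case $E\equiv X_i$ the matching event $e_2$ is produced by composing the isomorphism from $P_i\sim_{hp}^{sl}E_i\{\widetilde{P}/\widetilde{X}\}$, the event of the derivation inside $E_i$ tracked by Lemma~\ref{LUS3}, and the isomorphism from $E_i\{\widetilde{Q}/\widetilde{X}\}\sim_{hp}^{sl}Q_i$; checking that this composite assignment is order-preserving, while the accumulated pairs $(u,v)$ continue to satisfy the consistency condition $u\diamond u'\Leftrightarrow v\diamond v'$ of a consistent location association, is the delicate part, whereas the remaining matching obligations transcribe directly.
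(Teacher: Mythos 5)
Your proposal takes essentially the same route as the paper's proof: an induction on the depth of the inference of the transition of $E\{\widetilde{P}/\widetilde{X}\}$, with Lemma \ref{LUS3} supplying the syntactic form $E'\{\widetilde{P}/\widetilde{X}\}$ of the derivative and, in the key case $E\equiv X_i$, the two hypotheses chained through that lemma to match the moves of $P_i$ and $Q_i$, the remaining cases following the structural template. The only differences are presentational: the paper states the induction with set-labelled transitions and silently ignores the posetal bookkeeping (the isomorphism $f$ and the consistent location association $\varphi$) that you explicitly carry along, so your version is, if anything, a more careful rendering of the same argument.
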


\begin{proof}
It is sufficient to induct on the depth of the inference of $E\{\widetilde{P}/\widetilde{X}\}\xrightarrow[u]{\{\alpha_1,\cdots,\alpha_n\}}P'$. Note that, we consider the general distribution.

\begin{enumerate}
  \item Case $E\equiv X_i$. Then we have $E\{\widetilde{P}/\widetilde{X}\}\equiv P_i\xrightarrow[u]{\{\alpha_1,\cdots,\alpha_n\}}P'$, since $P_i\sim_{hp}^{sl} E_i\{\widetilde{P}/\widetilde{X}\}$, we have $E_i\{\widetilde{P}/\widetilde{X}\}\xrightarrow[u]{\{\alpha_1,\cdots,\alpha_n\}}P''\sim_{hp}^{sl} P'$. Since $\widetilde{X}$ are weakly guarded in $E_i$, by Lemma \ref{LUS3}, $P''\equiv E'\{\widetilde{P}/\widetilde{X}\}$ and $E_i\{\widetilde{P}/\widetilde{X}\}\xrightarrow[u]{\{\alpha_1,\cdots,\alpha_n\}} E'\{\widetilde{P}/\widetilde{X}\}$. Since $E\{\widetilde{Q}/\widetilde{X}\}\equiv X_i\{\widetilde{Q}/\widetilde{X}\} \equiv Q_i\sim_{hp}^{sl} E_i\{\widetilde{Q}/\widetilde{X}\}$, $E\{\widetilde{Q}/\widetilde{X}\}\xrightarrow[u]{\{\alpha_1,\cdots,\alpha_n\}}Q'\sim_{hp}^{sl} E'\{\widetilde{Q}/\widetilde{X}\}$. So, $P'\sim_{hp}^{sl} Q'$, as desired.
  \item Case $E\equiv\alpha.F$. This case can be proven similarly.
  \item Case $E\equiv(\alpha_1\parallel\cdots\parallel\alpha_n).F$. This case can be proven similarly.
  \item Case $E\equiv E_1+E_2$. We have $E_i\{\widetilde{P}/\widetilde{X}\} \xrightarrow[u]{\{\alpha_1,\cdots,\alpha_n\}}P'$, $E_i\{\widetilde{Q}/\widetilde{X}\} \xrightarrow[u]{\{\alpha_1,\cdots,\alpha_n\}}Q'$, then, $P'\sim_{hp}^{sl} Q'$, as desired.
  \item Case $E\equiv E_1\parallel E_2$, $E\equiv F[R]$ and $E\equiv F\setminus L$, $E\equiv C$. These cases can be prove similarly to the above case.
\end{enumerate}
\end{proof}

\begin{theorem}[Unique solution of equations for strong static location hhp-bisimulation]\label{USSSB3}
Let the recursive expressions $E_i(i\in I)$ contain at most the variables $X_i(i\in I)$, and let each $X_j(j\in I)$ be weakly guarded in each $E_i$. Then,

If $\widetilde{P}\sim_{hhp}^{sl} \widetilde{E}\{\widetilde{P}/\widetilde{X}\}$ and $\widetilde{Q}\sim_{hhp}^{sl} \widetilde{E}\{\widetilde{Q}/\widetilde{X}\}$, then $\widetilde{P}\sim_{hhp}^{sl} \widetilde{Q}$.
\end{theorem}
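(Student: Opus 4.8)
The plan is to reproduce the induction used for the pomset, step, and hp versions of this theorem, carrying along the one extra ingredient that separates hereditary history-preserving bisimulation from ordinary hp-bisimulation, namely \emph{downward closure}. First I would exhibit a candidate static location posetal relation
\[
R=\{(C_1,f,C_2):\ \widetilde{P}\sim_{hhp}^{sl}\widetilde{E}\{\widetilde{P}/\widetilde{X}\},\ \widetilde{Q}\sim_{hhp}^{sl}\widetilde{E}\{\widetilde{Q}/\widetilde{X}\}\},
\]
taken over the configurations of $\widetilde{P}$ and $\widetilde{Q}$ equipped with the induced order-isomorphisms $f$, and prove that $R$ is a static location hhp-bisimulation containing $(\emptyset,\emptyset,\emptyset)$. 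This immediately yields $\widetilde{P}\sim_{hhp}^{sl}\widetilde{Q}$.

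The core of the argument is an induction on the depth of the inference of a transition $E\{\widetilde{P}/\widetilde{X}\}\xrightarrow[u]{\{\alpha_1,\cdots,\alpha_n\}}P'$, exactly as in the hp case. The decisive base case is $E\equiv X_i$: here $E\{\widetilde{P}/\widetilde{X}\}\equiv P_i$, and since $P_i\sim_{hhp}^{sl}E_i\{\widetilde{P}/\widetilde{X}\}$ the move is matched by $E_i\{\widetilde{P}/\widetilde{X}\}\xrightarrow[u]{\{\alpha_1,\cdots,\alpha_n\}}P''$ at the same location $u$ with $P''\sim_{hhp}^{sl}P'$. Because the $\widetilde{X}$ are weakly guarded in $E_i$, Lemma \ref{LUS3} forces $P''\equiv E'\{\widetilde{P}/\widetilde{X}\}$ and guarantees the \emph{same} syntactic derivation is available after substituting $\widetilde{Q}$, i.e. $E_i\{\widetilde{Q}/\widetilde{X}\}\xrightarrow[u]{\{\alpha_1,\cdots,\alpha_n\}}E'\{\widetilde{Q}/\widetilde{X}\}$; combining this with $\widetilde{Q}\sim_{hhp}^{sl}\widetilde{E}\{\widetilde{Q}/\widetilde{X}\}$ produces the matching transition of $Q_i$, with location $u$ preserved verbatim and the order-isomorphism updated by $f[e_1\mapsto e_2]$. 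The remaining cases — prefix $\alpha.F$, parallel prefix $(\alpha_1\parallel\cdots\parallel\alpha_n).F$, summation $E_1+E_2$, composition $E_1\parallel E_2$, restriction $F\setminus L$, relabelling $F[f]$, and constant $C$ — reduce to the inductive hypothesis in precisely the manner of the hp-bisimulation proof, with the location labels threaded through the rules $\textbf{Loc}$, $\textbf{Sum}$, $\textbf{Com}_{1,2,3,4}$, $\textbf{Res}$, $\textbf{Rel}$, and $\textbf{Con}$; in particular $\textbf{Com}_{3,4}$ deliver the composite location $u\diamond v$ identically on both sides, so the location-matching demanded by the static location hp-clause is automatic.

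The step that genuinely goes beyond the hp proof, and where I expect the only real work, is checking that $R$ is \emph{downward closed}, since an hhp-bisimulation is by definition a downward closed hp-bisimulation. Given $(C_1,f,C_2)\subseteq(C_1',f',C_2')$ pointwise with $(C_1',f',C_2')\in R$, I would derive $(C_1,f,C_2)\in R$ by noting that the witnessing relations for the two hypotheses $\widetilde{P}\sim_{hhp}^{sl}\widetilde{E}\{\widetilde{P}/\widetilde{X}\}$ and $\widetilde{Q}\sim_{hhp}^{sl}\widetilde{E}\{\widetilde{Q}/\widetilde{X}\}$ are themselves downward closed, and that the identification of matching transitions supplied by Lemma \ref{LUS3} is compatible with the pointwise subconfiguration ordering. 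Hence the composite relation inherits downward closure, and $R$ is a genuine static location hhp-bisimulation. Modulo this closure bookkeeping the argument is identical in form to the corresponding theorem for strong static location hp-bisimulation, and the desired conclusion $\widetilde{P}\sim_{hhp}^{sl}\widetilde{Q}$ follows.
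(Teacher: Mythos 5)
Your proposal follows essentially the same route as the paper's own proof: induction on the depth of the inference of $E\{\widetilde{P}/\widetilde{X}\}\xrightarrow[u]{\{\alpha_1,\cdots,\alpha_n\}}P'$, with Lemma \ref{LUS3} resolving the key case $E\equiv X_i$ (the same syntactic derivation surviving the substitution of $\widetilde{Q}$ for $\widetilde{P}$) and the prefix, summation, composition, restriction, relabelling and constant cases handled by the same structural reduction. Your explicit verification of downward closure is in fact more careful than the paper, whose proof of the hhp case simply repeats the hp-bisimulation argument verbatim with $\sim_{hhp}^{sl}$ substituted and never addresses the hereditary condition at all.
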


\begin{proof}
It is sufficient to induct on the depth of the inference of $E\{\widetilde{P}/\widetilde{X}\}\xrightarrow[u]{\{\alpha_1,\cdots,\alpha_n\}}P'$. Note that, we consider the general distribution.

\begin{enumerate}
  \item Case $E\equiv X_i$. Then we have $E\{\widetilde{P}/\widetilde{X}\}\equiv P_i\xrightarrow[u]{\{\alpha_1,\cdots,\alpha_n\}}P'$, since
  $P_i\sim_{hhp}^{sl} E_i\{\widetilde{P}/\widetilde{X}\}$, we have $E_i\{\widetilde{P}/\widetilde{X}\}\xrightarrow[u]{\{\alpha_1,\cdots,\alpha_n\}}P''\sim_{hhp}^{sl} P'$. Since
  $\widetilde{X}$ are weakly guarded in $E_i$, by Lemma \ref{LUS3}, $P''\equiv E'\{\widetilde{P}/\widetilde{X}\}$ and
  $E_i\{\widetilde{P}/\widetilde{X}\}\xrightarrow[u]{\{\alpha_1,\cdots,\alpha_n\}} E'\{\widetilde{P}/\widetilde{X}\}$. Since
  $E\{\widetilde{Q}/\widetilde{X}\}\equiv X_i\{\widetilde{Q}/\widetilde{X}\} \equiv Q_i\sim_{hhp}^{sl} E_i\{\widetilde{Q}/\widetilde{X}\}$,
  $E\{\widetilde{Q}/\widetilde{X}\}\xrightarrow[u]{\{\alpha_1,\cdots,\alpha_n\}}Q'\sim_{hhp}^{sl} E'\{\widetilde{Q}/\widetilde{X}\}$. So, $P'\sim_{hhp}^{sl} Q'$, as desired.
  \item Case $E\equiv\alpha.F$. This case can be proven similarly.
  \item Case $E\equiv(\alpha_1\parallel\cdots\parallel\alpha_n).F$. This case can be proven similarly.
  \item Case $E\equiv E_1+E_2$. We have $E_i\{\widetilde{P}/\widetilde{X}\} \xrightarrow[u]{\{\alpha_1,\cdots,\alpha_n\}}P'$,
  $E_i\{\widetilde{Q}/\widetilde{X}\} \xrightarrow[u]{\{\alpha_1,\cdots,\alpha_n\}}Q'$, then, $P'\sim_{hhp}^{sl} Q'$, as desired.
  \item Case $E\equiv E_1\parallel E_2$, $E\equiv F[R]$ and $E\equiv F\setminus L$, $E\equiv C$. These cases can be prove similarly to the above case.
\end{enumerate}
\end{proof}

\subsubsection{Weak Bisimulations}

The weak transition rules for CTC with static localities are listed in Table \ref{WTRForCTC3}.

\begin{center}
    \begin{table}
        \[\textbf{Act}_1\quad \frac{}{\alpha.P\xRightarrow[\epsilon]{\alpha}P}\]

        \[\textbf{Loc}\quad \frac{P\xRightarrow[u]{\alpha}P'}{loc::P\xRightarrow[loc\ll u]{\alpha}loc::P'}\]

        \[\textbf{Sum}_1\quad \frac{P\xRightarrow[u]{\alpha}P'}{P+Q\xRightarrow[u]{\alpha}P'}\]

        \[\textbf{Com}_1\quad \frac{P\xRightarrow[u]{\alpha}P'\quad Q\nrightarrow}{P\parallel Q\xRightarrow[u]{\alpha}P'\parallel Q}\]

        \[\textbf{Com}_2\quad \frac{Q\xRightarrow[u]{\alpha}Q'\quad P\nrightarrow}{P\parallel Q\xRightarrow[u]{\alpha}P\parallel Q'}\]

        \[\textbf{Com}_3\quad \frac{P\xRightarrow[u]{\alpha}P'\quad Q\xRightarrow[v]{\beta}Q'}{P\parallel Q\xRightarrow[u\diamond v]{\{\alpha,\beta\}}P'\parallel Q'}\quad (\beta\neq\overline{\alpha})\]

        \[\textbf{Com}_4\quad \frac{P\xRightarrow[u]{l}P'\quad Q\xRightarrow[v]{\overline{l}}Q'}{P\parallel Q\xRightarrow{\tau}P'\parallel Q'}\]

        \[\textbf{Act}_2\quad \frac{}{(\alpha_1\parallel\cdots\parallel\alpha_n).P\xRightarrow[\epsilon]{\{\alpha_1,\cdots,\alpha_n\}}P}\quad (\alpha_i\neq\overline{\alpha_j}\quad i,j\in\{1,\cdots,n\})\]

        \[\textbf{Sum}_2\quad \frac{P\xRightarrow[u]{\{\alpha_1,\cdots,\alpha_n\}}P'}{P+Q\xRightarrow[u]{\{\alpha_1,\cdots,\alpha_n\}}P'}\]
%
%
%
%
%
%
        \caption{Weak transition rules of CTC with static localities}
        \label{WTRForCTC3}
    \end{table}
\end{center}

\begin{center}
    \begin{table}
%
%
%
%
%
%
%
%
%
        \[\textbf{Res}_1\quad \frac{P\xRightarrow[u]{\alpha}P'}{P\setminus L\xRightarrow[u]{\alpha}P'\setminus L}\quad (\alpha,\overline{\alpha}\notin L)\]

        \[\textbf{Res}_2\quad \frac{P\xRightarrow[u]{\{\alpha_1,\cdots,\alpha_n\}}P'}{P\setminus L\xRightarrow[u]{\{\alpha_1,\cdots,\alpha_n\}}P'\setminus L}\quad (\alpha_1,\overline{\alpha_1},\cdots,\alpha_n,\overline{\alpha_n}\notin L)\]

        \[\textbf{Rel}_1\quad \frac{P\xRightarrow[u]{\alpha}P'}{P[f]\xRightarrow[u]{f(\alpha)}P'[f]}\]

        \[\textbf{Rel}_2\quad \frac{P\xRightarrow[u]{\{\alpha_1,\cdots,\alpha_n\}}P'}{P[f]\xRightarrow[u]{\{f(\alpha_1),\cdots,f(\alpha_n)\}}P'[f]}\]

        \[\textbf{Con}_1\quad\frac{P\xRightarrow[u]{\alpha}P'}{A\xRightarrow[u]{\alpha}P'}\quad (A\overset{\text{def}}{=}P)\]

        \[\textbf{Con}_2\quad\frac{P\xRightarrow[u]{\{\alpha_1,\cdots,\alpha_n\}}P'}{A\xRightarrow[u]{\{\alpha_1,\cdots,\alpha_n\}}P'}\quad (A\overset{\text{def}}{=}P)\]
        \caption{Weak transition rules of CTC with static localities (continuing)}
        \label{WTRForCTC32}
    \end{table}
\end{center}


\begin{proposition}[$\tau$ laws for weak static location pomset bisimulation]\label{TAUWSB3}
The $\tau$ laws for weak static location pomset bisimulation is as follows.
\begin{enumerate}
  \item $P\approx_p^{sl} \tau.P$;
  \item $\alpha.\tau.P\approx_p^{sl} \alpha.P$;
  \item $(\alpha_1\parallel\cdots\parallel\alpha_n).\tau.P\approx_p^{sl} (\alpha_1\parallel\cdots\parallel\alpha_n).P$;
  \item $P+\tau.P\approx_p^{sl} \tau.P$;
  \item $\alpha.(P+\tau.Q)+\alpha.Q\approx_p^{sl}\alpha.(P+\tau.Q)$;
  \item $(\alpha_1\parallel\cdots\parallel\alpha_n).(P+\tau.Q)+ (\alpha_1\parallel\cdots\parallel\alpha_n).Q\approx_p^{sl} (\alpha_1\parallel\cdots\parallel\alpha_n).(P+\tau.Q)$;
  \item $P\approx_p^{sl} \tau\parallel P$.
\end{enumerate}
\end{proposition}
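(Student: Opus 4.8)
The plan is to treat the seven identities uniformly in the style of the preceding propositions: for each one I would exhibit a concrete weak static location pomset bisimulation $R_\varphi$ containing the stated pair together with $\textbf{Id}$, and verify the transfer conditions of the definition of $\approx_p^{sl}$ in both directions. The single fact that drives every case is that a weak transition $\xRightarrow[u]{X}$ is by definition $\xrightarrow{\tau^*}\xrightarrow{X}\xrightarrow{\tau^*}$, so any silent step that one side performs --- and each $\tau$ introduced in these laws fires through $\textbf{Act}_1$ at location $\epsilon$ --- is swallowed by the $\tau^*$ prefix/suffix of the matching weak move on the other side and never appears in $\hat{X}$.

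First I would dispatch items (1) and (4), the two primitive laws. For (1), $R_\varphi=\{(P,\tau.P)\}\cup\textbf{Id}$ works because $\tau.P\xrightarrow[\epsilon]{\tau}P$ means every weak move of $P$ at location $u$ is already a weak move of $\tau.P$ at $u$, while the only extra initial transition of $\tau.P$ is the absorbed $\tau$; since the visible actions sit at the same locations on both sides, the association $\varphi\cup\{(u,u)\}$ is trivially consistent. Item (4) is the analogous choice-absorption argument: after a shared first step both $P+\tau.P$ and $\tau.P$ continue as the same residual, and the summand $P$ on the left is matched through $\tau.P\xrightarrow[\epsilon]{\tau}P$. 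Items (2), (3), (5), (6) then reduce to (1) and (4) guarded by a prefix: the guarding prefix forces the first (multi)action to agree on both sides at a common location, and the residuals are related precisely by (1) or (4), so I would build $R_\varphi$ by closing those base relations under the prefix and checking that the prefix contributes the same location $u$ to $\varphi$ on each side.

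The genuinely different case, and the one I expect to be the main obstacle, is (7), $P\approx_p^{sl}\tau\parallel P$, because here the silent step is a concurrent component rather than a guarding prefix, so its transitions must be analysed through $\textbf{Com}_{1,2,3,4}$ rather than $\textbf{Act}_1$. The heart of the argument is that the only independent contribution of the $\tau$ summand is a silent step that $\xRightarrow{}$ absorbs (either fired alone via $\textbf{Com}_1$, or inside a step via $\textbf{Com}_3$ where it is then deleted by the $\hat{\cdot}$ projection), after which $\tau\parallel P$ reduces to $\textbf{nil}\parallel P$ and is handled by the static law $P\parallel\textbf{nil}\sim_p^{sl}P$, which entails its weak counterpart. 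The care needed is in the location bookkeeping: the absorbed concurrent $\tau$ is $\diamond$-independent of every visible event, so I must check that matching $P$'s events via $\textbf{Com}_1/\textbf{Com}_2$ carries their locations through unchanged and never forces an inconsistent pair into the cla $\varphi$, and that the required $X_1\sim X_2$ holds after removing the silent event. Once this is verified, $R_\varphi=\{(P,\tau\parallel P)\}\cup\textbf{Id}$, extended to track the residual $\textbf{nil}$, gives the result, and the \emph{vice versa} direction is symmetric since $P$ can match any weak visible move of $\tau\parallel P$ directly at the same location.
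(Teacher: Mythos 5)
Your proposal takes essentially the same route as the paper: for each of the seven laws the paper exhibits exactly the relation $R=\{(\mathrm{LHS},\mathrm{RHS})\}\cup\textbf{Id}$ and verifies that it is a weak static location pomset bisimulation for some distributions (deferring the routine transfer checks to the corresponding $\tau$-law proofs in CTC), which is precisely your construction, only with the verification details actually sketched. One minor imprecision worth fixing: for laws (5) and (6) the residuals after the left side fires its second summand are $(Q,\; P+\tau.Q)$, which is not an instance of law (1) or (4); the matching instead uses the weak move in which the right side fires the prefix and then the absorbed $\tau$ so that both sides land on $Q$ and the pair falls into $\textbf{Id}$ --- exactly the swallowing mechanism you describe at the outset, so the relation you propose still works.
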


\begin{proof}
\begin{enumerate}
  \item $P\approx_p^{sl} \tau.P$. It is sufficient to prove the relation $R=\{(P, \tau.P)\}\cup \textbf{Id}$ is a weak static location pomset bisimulation for some distributions. It can be proved similarly to the proof of
  $\tau$-laws for weak pomset bisimulation in CTC, we omit it;
  \item $\alpha.\tau.P\approx_p^{sl} \alpha.P$. It is sufficient to prove the relation $R=\{(\alpha.\tau.P, \alpha.P)\}\cup \textbf{Id}$ is a weak static location pomset bisimulation for some distributions. It can be proved similarly to the proof of
  $\tau$-laws for weak pomset bisimulation in CTC, we omit it;
  \item $(\alpha_1\parallel\cdots\parallel\alpha_n).\tau.P\approx_p^{sl} (\alpha_1\parallel\cdots\parallel\alpha_n).P$. It is sufficient to prove the relation $R=\{((\alpha_1\parallel\cdots\parallel\alpha_n).\tau.P, (\alpha_1\parallel\cdots\parallel\alpha_n).P)\}\cup \textbf{Id}$ is a weak static location pomset bisimulation for some distributions. It can be proved similarly to the proof of
  $\tau$-laws for weak pomset bisimulation in CTC, we omit it;
  \item $P+\tau.P\approx_p^{sl} \tau.P$. It is sufficient to prove the relation $R=\{(P+\tau.P, \tau.P)\}\cup \textbf{Id}$ is a weak static location pomset bisimulation for some distributions. It can be proved similarly to the proof of
  $\tau$-laws for weak pomset bisimulation in CTC, we omit it;
  \item $\alpha.(P+\tau.Q)+\alpha.Q\approx_p^{sl}\alpha.(P+\tau.Q)$. It is sufficient to prove the relation $R=\{(\alpha.(P+\tau.Q)+\alpha.Q, \alpha.(P+\tau.Q))\}\cup \textbf{Id}$ is a weak static location pomset bisimulation for some distributions. It can be proved similarly to the proof of
  $\tau$-laws for weak pomset bisimulation in CTC, we omit it;
  \item $(\alpha_1\parallel\cdots\parallel\alpha_n).(P+\tau.Q)+ (\alpha_1\parallel\cdots\parallel\alpha_n).Q\approx_p^{sl} (\alpha_1\parallel\cdots\parallel\alpha_n).(P+\tau.Q)$. It is sufficient to prove the relation $R=\{((\alpha_1\parallel\cdots\parallel\alpha_n).(P+\tau.Q)+ (\alpha_1\parallel\cdots\parallel\alpha_n).Q, (\alpha_1\parallel\cdots\parallel\alpha_n).(P+\tau.Q))\}\cup \textbf{Id}$ is a weak static location pomset bisimulation for some distributions. It can be proved similarly to the proof of
  $\tau$-laws for weak pomset bisimulation in CTC, we omit it;
  \item $P\approx_p^{sl} \tau\parallel P$. It is sufficient to prove the relation $R=\{(P, \tau\parallel P)\}\cup \textbf{Id}$ is a weak static location pomset bisimulation for some distributions. It can be proved similarly to the proof of
  $\tau$-laws for weak pomset bisimulation in CTC, we omit it.
\end{enumerate}
\end{proof}

\begin{proposition}[$\tau$ laws for weak static location step bisimulation]\label{TAUWSB3}
The $\tau$ laws for weak static location step bisimulation is as follows.
\begin{enumerate}
  \item $P\approx_s^{sl} \tau.P$;
  \item $\alpha.\tau.P\approx_s^{sl} \alpha.P$;
  \item $(\alpha_1\parallel\cdots\parallel\alpha_n).\tau.P\approx_s^{sl} (\alpha_1\parallel\cdots\parallel\alpha_n).P$;
  \item $P+\tau.P\approx_s^{sl} \tau.P$;
  \item $\alpha.(P+\tau.Q)+\alpha.Q\approx_s^{sl}\alpha.(P+\tau.Q)$;
  \item $(\alpha_1\parallel\cdots\parallel\alpha_n).(P+\tau.Q)+ (\alpha_1\parallel\cdots\parallel\alpha_n).Q\approx_s^{sl} (\alpha_1\parallel\cdots\parallel\alpha_n).(P+\tau.Q)$;
  \item $P\approx_s^{sl} \tau\parallel P$.
\end{enumerate}
\end{proposition}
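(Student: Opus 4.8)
The plan is to treat each of the seven laws exactly as in the preceding proposition for weak static location pomset bisimulation, replacing pomset transitions by step transitions throughout. For each item I would exhibit the witnessing relation $R=\{(\textrm{lhs},\textrm{rhs})\}\cup\textbf{Id}$ (for item (1), $R=\{(P,\tau.P)\}\cup\textbf{Id}$; for item (4), $R=\{(P+\tau.P,\tau.P)\}\cup\textbf{Id}$; and so on) and verify that, under a suitable choice of distributions, $R$ is a weak static location step bisimulation, i.e. that every step transition $C_1\xRightarrow[u]{X_1}C_1'$ is answered by some $C_2\xRightarrow[v]{X_2}C_2'$ with $X_1\sim X_2$ and the resulting configurations related under $R_{\varphi\cup\{(u,v)\}}$, and symmetrically. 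Each verification would then be reduced, as in the pomset case, to the corresponding $\tau$-law for weak step bisimulation in CTC, which I invoke as already established.

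First I would isolate the only new ingredient relative to CTC, namely the location label carried by each weak transition: rule $\textbf{Act}_1$ allocates $\epsilon$, rule $\textbf{Loc}$ prepends $loc\ll u$, and a communication $\textbf{Com}_4$ yields a $\tau$ step with no location subscript. Because the weak transitions absorb surrounding $\tau$-steps, the silent steps inserted or deleted on the two sides of each law are invisible and carry no observable locality; hence a step $X_1$ on one side is matched by $X_2=X_1$ on the other at the \emph{same} location, so the pair added to $\varphi$ has the form $(u,u)$. The consistency condition of a cla is then met trivially, since $u\diamond u'\Leftrightarrow u\diamond u'$, and the location bookkeeping collapses to the identity, leaving precisely the CTC argument to be cited.

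The key steps, in order, are: (i) for the visible-action laws (1)--(3) and (7), check that each observable step on the left is reproducible on the right at an identical location and conversely, using that a leading or parallel $\tau$ can always be threaded into a weak step without altering its label or its locality; (ii) for the absorption laws (4)--(6), track the branching point at which the summand $\tau.P$ (respectively $\alpha.Q$, respectively the parallel-prefix summand) is selected, and show that every such choice is matched weakly, again at the same location, deferring the branching bookkeeping to the CTC proof; (iii) observe that no downward-closure obligation arises here, since step bisimulation—unlike the hhp case—imposes none.

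I expect the main obstacle to be item (7), $P\approx_s^{sl}\tau\parallel P$, in the step setting: the silent event sits \emph{concurrently} with the events of $P$, so when $P$ fires a step $X$ the composite $\tau\parallel P$ may fire $X$ together with the concurrent $\tau$ (via $\textbf{Com}_3$ or $\textbf{Com}_1$), and one must argue that this enlarged step is still matched weakly by the plain step $X$ of $P$ at the same location $u$, with the $\tau$ absorbed into the weak transition and contributing no locality. Making the location algebra line up here—in particular verifying that the $\diamond$-independence introduced by the parallel composition produces no spurious location pair that would violate the cla—is the delicate point. Once this is handled, items (5) and (6) follow by the same branching analysis as in CTC, and the remaining laws are routine.
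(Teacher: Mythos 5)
Your proposal matches the paper's own proof: for each law the paper exhibits exactly the relation $R=\{(\textrm{lhs},\textrm{rhs})\}\cup\textbf{Id}$, asserts it is a weak static location step bisimulation for some distributions, and defers the verification to the corresponding $\tau$-laws for weak step bisimulation in CTC. Your additional remarks on the location bookkeeping (matching steps at identical locations so that $\varphi$ only acquires pairs of the form $(u,u)$, trivially satisfying the cla condition) and on the concurrent $\tau$ in law (7) are sound elaborations of details the paper leaves implicit, not a different route.
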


\begin{proof}
\begin{enumerate}
  \item $P\approx_s^{sl} \tau.P$. It is sufficient to prove the relation $R=\{(P, \tau.P)\}\cup \textbf{Id}$ is a weak static location step bisimulation for some distributions. It can be proved similarly to the proof of
  $\tau$-laws for weak step bisimulation in CTC, we omit it;
  \item $\alpha.\tau.P\approx_s^{sl} \alpha.P$. It is sufficient to prove the relation $R=\{(\alpha.\tau.P, \alpha.P)\}\cup \textbf{Id}$ is a weak static location step bisimulation for some distributions. It can be proved similarly to the proof of
  $\tau$-laws for weak step bisimulation in CTC, we omit it;
  \item $(\alpha_1\parallel\cdots\parallel\alpha_n).\tau.P\approx_s^{sl} (\alpha_1\parallel\cdots\parallel\alpha_n).P$. It is sufficient to prove the relation $R=\{((\alpha_1\parallel\cdots\parallel\alpha_n).\tau.P, (\alpha_1\parallel\cdots\parallel\alpha_n).P)\}\cup \textbf{Id}$ is a weak static location step bisimulation for some distributions. It can be proved similarly to the proof of
  $\tau$-laws for weak step bisimulation in CTC, we omit it;
  \item $P+\tau.P\approx_s^{sl} \tau.P$. It is sufficient to prove the relation $R=\{(P+\tau.P, \tau.P)\}\cup \textbf{Id}$ is a weak static location step bisimulation for some distributions. It can be proved similarly to the proof of
  $\tau$-laws for weak step bisimulation in CTC, we omit it;
  \item $\alpha.(P+\tau.Q)+\alpha.Q\approx_s^{sl}\alpha.(P+\tau.Q)$. It is sufficient to prove the relation $R=\{(\alpha.(P+\tau.Q)+\alpha.Q, \alpha.(P+\tau.Q))\}\cup \textbf{Id}$ is a weak static location step bisimulation for some distributions. It can be proved similarly to the proof of
  $\tau$-laws for weak step bisimulation in CTC, we omit it;
  \item $(\alpha_1\parallel\cdots\parallel\alpha_n).(P+\tau.Q)+ (\alpha_1\parallel\cdots\parallel\alpha_n).Q\approx_s^{sl} (\alpha_1\parallel\cdots\parallel\alpha_n).(P+\tau.Q)$. It is sufficient to prove the relation $R=\{((\alpha_1\parallel\cdots\parallel\alpha_n).(P+\tau.Q)+ (\alpha_1\parallel\cdots\parallel\alpha_n).Q, (\alpha_1\parallel\cdots\parallel\alpha_n).(P+\tau.Q))\}\cup \textbf{Id}$ is a weak static location step bisimulation for some distributions. It can be proved similarly to the proof of
  $\tau$-laws for weak step bisimulation in CTC, we omit it;
  \item $P\approx_s^{sl} \tau\parallel P$. It is sufficient to prove the relation $R=\{(P, \tau\parallel P)\}\cup \textbf{Id}$ is a weak static location step bisimulation for some distributions. It can be proved similarly to the proof of
  $\tau$-laws for weak step bisimulation in CTC, we omit it.
\end{enumerate}
\end{proof}

\begin{proposition}[$\tau$ laws for weak static location hp-bisimulation]\label{TAUWSB3}
The $\tau$ laws for weak static location hp-bisimulation is as follows.
\begin{enumerate}
  \item $P\approx_{hp}^{sl} \tau.P$;
  \item $\alpha.\tau.P\approx_{hp}^{sl} \alpha.P$;
  \item $(\alpha_1\parallel\cdots\parallel\alpha_n).\tau.P\approx_{hp}^{sl} (\alpha_1\parallel\cdots\parallel\alpha_n).P$;
  \item $P+\tau.P\approx_{hp}^{sl} \tau.P$;
  \item $\alpha.(P+\tau.Q)+\alpha.Q\approx_{hp}^{sl}\alpha.(P+\tau.Q)$;
  \item $(\alpha_1\parallel\cdots\parallel\alpha_n).(P+\tau.Q)+ (\alpha_1\parallel\cdots\parallel\alpha_n).Q\approx_{hp}^{sl} (\alpha_1\parallel\cdots\parallel\alpha_n).(P+\tau.Q)$;
  \item $P\approx_{hp}^{sl} \tau\parallel P$.
\end{enumerate}
\end{proposition}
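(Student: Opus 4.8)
The plan is to prove each of the seven laws by the same device used in the two preceding propositions: for each equation I exhibit a weakly posetal relation $R_{\varphi}$ whose interesting element is the triple consisting of the left-hand configuration, an isomorphism $f$ on the visible events, and the right-hand configuration, together with the identity triples, and then verify that $R_{\varphi}$ satisfies the four clauses of a weak static location hp-bisimulation. Since the calculus and its weak transition rules (Tables \ref{WTRForCTC3} and \ref{WTRForCTC32}) differ from plain CTC only by the decoration of transitions with localities and by the \textbf{Loc} rule, each verification reduces to the corresponding $\tau$-law for weak hp-bisimulation in CTC once the locality bookkeeping is checked; this is why the proofs may be presented in the abbreviated ``similarly to CTC, we omit it'' form.

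Concretely, for a fixed law I would take $R_{\varphi}=\{(C_1,f,C_2)\}\cup\textbf{Id}$, where $C_1,C_2$ range over the reachable configurations of the two sides, $f:\hat{C_1}\to\hat{C_2}$ is the evident isomorphism matching the common visible events, and $\varphi$ is any consistent location association accumulating the pairs $(u,v)$ contributed by matched transitions. The core of the argument is to check, for each weak transition $C_1\xRightarrow[u]{e_1}C_1'$, that there is a matching $C_2\xRightarrow[v]{e_2}C_2'$ with $(C_1',f[e_1\mapsto e_2],C_2')\in R_{\varphi\cup\{(u,v)\}}$, and symmetrically, together with the two termination clauses.

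The feature that distinguishes this proposition from the pomset and step versions is that hp-bisimulation demands preservation of the event-level isomorphism $f$, not merely agreement of pomset or step labels. The $\tau$-laws are compatible with this requirement precisely because the silent event is excluded from configurations ($\hat{C}=C\setminus\{\tau\}$ and $\hat{\tau^*}=\epsilon$): a $\tau$-step contributes nothing to the domain of $f$, so absorbing the surrounding $\tau$'s in a weak transition leaves the visible order isomorphic on the two sides. For instance, in item 2 the left side fires $\alpha$ and then a silent $\tau$, but the weak move $\xRightarrow[u]{\alpha}$ swallows the $\tau$ and yields the same visible configuration as the right side, so $f$ extends by $[\alpha\mapsto\alpha]$ exactly as on the right.

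The main obstacle is the locality bookkeeping rather than the event combinatorics. I would have to confirm that the localities $u$ and $v$ recorded for matched visible actions coincide up to the consistency demanded by $\varphi$: silent $\tau$-transitions generated by \textbf{Com}$_4$ carry no locality and do not alter the weak-transition locality, while \textbf{Act}$_1$ and \textbf{Act}$_2$ emit actions at $\epsilon$ and \textbf{Loc} prepends the same $loc$ on both sides, so the pair $(u,v)$ added to $\varphi$ respects the cla condition $u\diamond u'\Leftrightarrow v\diamond v'$. Verifying that this consistency is maintained under the parallel law (item 7, $P\approx_{hp}^{sl}\tau\parallel P$), where the $\tau$ arises from an independent component, is the most delicate point and the place where the argument must be spelled out most carefully.
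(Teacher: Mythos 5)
Your proposal takes essentially the same approach as the paper: for each of the seven laws the paper likewise exhibits the relation $R=\{(\textrm{LHS},\textrm{RHS})\}\cup\textbf{Id}$ (understood as a weakly posetal relation with the evident isomorphism on visible events) and argues it is a weak static location hp-bisimulation by reduction to the corresponding $\tau$-laws for weak hp-bisimulation in CTC, omitting the verification. Your extra remarks on why $\hat{C}=C\setminus\{\tau\}$ makes the $\tau$-laws compatible with the event-level isomorphism, and on the locality bookkeeping for the cla $\varphi$, are precisely the details the paper leaves implicit, so the proposal is correct.
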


\begin{proof}
\begin{enumerate}
  \item $P\approx_{hp}^{sl} \tau.P$. It is sufficient to prove the relation $R=\{(P, \tau.P)\}\cup \textbf{Id}$ is a weak static location hp-bisimulation for some distributions. It can be proved similarly to the proof of
  $\tau$-laws for weak hp-bisimulation in CTC, we omit it;
  \item $\alpha.\tau.P\approx_{hp}^{sl} \alpha.P$. It is sufficient to prove the relation $R=\{(\alpha.\tau.P, \alpha.P)\}\cup \textbf{Id}$ is a weak static location hp-bisimulation for some distributions. It can be proved similarly to the proof of
  $\tau$-laws for weak hp-bisimulation in CTC, we omit it;
  \item $(\alpha_1\parallel\cdots\parallel\alpha_n).\tau.P\approx_{hp}^{sl} (\alpha_1\parallel\cdots\parallel\alpha_n).P$. It is sufficient to prove the relation $R=\{((\alpha_1\parallel\cdots\parallel\alpha_n).\tau.P, (\alpha_1\parallel\cdots\parallel\alpha_n).P)\}\cup \textbf{Id}$ is a weak static location hp-bisimulation for some distributions. It can be proved similarly to the proof of
  $\tau$-laws for weak hp-bisimulation in CTC, we omit it;
  \item $P+\tau.P\approx_{hp}^{sl} \tau.P$. It is sufficient to prove the relation $R=\{(P+\tau.P, \tau.P)\}\cup \textbf{Id}$ is a weak static location hp-bisimulation for some distributions. It can be proved similarly to the proof of
  $\tau$-laws for weak hp-bisimulation in CTC, we omit it;
  \item $\alpha.(P+\tau.Q)+\alpha.Q\approx_{hp}^{sl}\alpha.(P+\tau.Q)$. It is sufficient to prove the relation $R=\{(\alpha.(P+\tau.Q)+\alpha.Q, \alpha.(P+\tau.Q))\}\cup \textbf{Id}$ is a weak static location hp-bisimulation for some distributions. It can be proved similarly to the proof of
  $\tau$-laws for weak hp-bisimulation in CTC, we omit it;
  \item $(\alpha_1\parallel\cdots\parallel\alpha_n).(P+\tau.Q)+ (\alpha_1\parallel\cdots\parallel\alpha_n).Q\approx_{hp}^{sl} (\alpha_1\parallel\cdots\parallel\alpha_n).(P+\tau.Q)$. It is sufficient to prove the relation $R=\{((\alpha_1\parallel\cdots\parallel\alpha_n).(P+\tau.Q)+ (\alpha_1\parallel\cdots\parallel\alpha_n).Q, (\alpha_1\parallel\cdots\parallel\alpha_n).(P+\tau.Q))\}\cup \textbf{Id}$ is a weak static location hp-bisimulation for some distributions. It can be proved similarly to the proof of
  $\tau$-laws for weak hp-bisimulation in CTC, we omit it;
  \item $P\approx_{hp}^{sl} \tau\parallel P$. It is sufficient to prove the relation $R=\{(P, \tau\parallel P)\}\cup \textbf{Id}$ is a weak static location hp-bisimulation for some distributions. It can be proved similarly to the proof of
  $\tau$-laws for weak hp-bisimulation in CTC, we omit it.
\end{enumerate}
\end{proof}

\begin{proposition}[$\tau$ laws for weak static location hhp-bisimulation]\label{TAUWSB3}
The $\tau$ laws for weak static location hhp-bisimulation is as follows.
\begin{enumerate}
  \item $P\approx_{hhp}^{sl} \tau.P$;
  \item $\alpha.\tau.P\approx_{hhp}^{sl} \alpha.P$;
  \item $(\alpha_1\parallel\cdots\parallel\alpha_n).\tau.P\approx_{hhp}^{sl} (\alpha_1\parallel\cdots\parallel\alpha_n).P$;
  \item $P+\tau.P\approx_{hhp}^{sl} \tau.P$;
  \item $\alpha.(P+\tau.Q)+\alpha.Q\approx_{hhp}^{sl}\alpha.(P+\tau.Q)$;
  \item $(\alpha_1\parallel\cdots\parallel\alpha_n).(P+\tau.Q)+ (\alpha_1\parallel\cdots\parallel\alpha_n).Q\approx_{hhp}^{sl} (\alpha_1\parallel\cdots\parallel\alpha_n).(P+\tau.Q)$;
  \item $P\approx_{hhp}^{sl} \tau\parallel P$.
\end{enumerate}
\end{proposition}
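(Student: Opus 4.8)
The plan is to dispatch all seven laws by the same device used for the pomset, step, and hp cases above: for the $k$-th law I would exhibit the witnessing relation $R=\{(\mathrm{LHS}_k,\mathrm{RHS}_k)\}\cup\textbf{Id}$, viewed as a weakly posetal relation $R_{\varphi}$, and show it is a weak static location hhp-bisimulation for a suitably chosen consistent location association $\varphi$ with $(\emptyset,\emptyset,\emptyset)\in R_{\varphi}$.

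First I would invoke the definition: a weak static location hhp-bisimulation is precisely a \emph{downward closed} weak static location hp-bisimulation. Hence the transfer conditions themselves are already handled by the immediately preceding Proposition on the $\tau$ laws for weak static location hp-bisimulation, where for each law the same relation $R_{\varphi}$ is shown to match weak transitions $\xRightarrow[u]{e_1}$ by weak transitions $\xRightarrow[v]{e_2}$, recording the location pair into $\varphi\cup\{(u,v)\}$. Since the silent event $\tau$ is deleted when passing to $\hat{C}$, the isomorphism $f:\hat{C_1}\rightarrow\hat{C_2}$ is untouched by the $\tau$-prefixes and duplicated summands that distinguish the two sides, so the weakly posetal triples line up exactly as in the hp argument.

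Second, I would discharge the only genuinely new obligation, namely downward closure of each $R_{\varphi}$: given $(C_1,f,C_2)\in R_{\varphi}$ and a pointwise sub-triple $(C_1',f',C_2')\subseteq(C_1,f,C_2)$, I must verify $(C_1',f',C_2')\in R_{\varphi}$. Because each witnessing relation is the identity enlarged by a single pair whose components differ only by a silent prefix or summand, the visible parts $\hat{C_1}$ and $\hat{C_2}$ are isomorphic copies of one pomset, and a causally closed restriction on one side induces the matching restriction on the other; the location association remains a cla since sublocations inherit the independence relation $\diamond$ from their extensions, preserving $u\diamond u'\Leftrightarrow v\diamond v'$.

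The hard part will be exactly this downward-closure step combined with the location bookkeeping, since hhp-bisimilarity is strictly finer than hp-bisimilarity: one must track \emph{all} order-preserving sub-correspondences while maintaining a single consistent $\varphi$ along each of them. For the silent laws (items 1--4 and 7) I would check that the $\tau$-labelled event, discarded in $\hat{C}$, creates no spurious mismatch under restriction; for the distributivity laws (items 5--6) I would check that the branch selected on the left (the summand $\alpha.Q$, resp. $(\alpha_1\parallel\cdots\parallel\alpha_n).Q$) and the branch simulated on the right yield downward-closed families that still agree once the $\tau$ from $P+\tau.Q$ is absorbed. With these verifications each $R_{\varphi}$ is a downward closed weak static location hp-bisimulation, as required; the routine transition-by-transition matching coincides with the $\tau$-law proof for weak hhp-bisimulation in CTC and is omitted.
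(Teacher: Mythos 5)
Your proposal is correct and takes essentially the same route as the paper: for each law it exhibits the relation $R=\{(\mathrm{LHS},\mathrm{RHS})\}\cup\textbf{Id}$, claims it is a weak static location hhp-bisimulation for suitable distributions, and defers the routine transition-by-transition matching to the $\tau$-law proofs for weak hhp-bisimulation in CTC. Your explicit decomposition of the verification into the hp transfer conditions (inherited from the preceding hp-proposition) plus the downward-closure check is a slightly tidier organization of the same argument, not a genuinely different one.
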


\begin{proof}
\begin{enumerate}
  \item $P\approx_{hhp}^{sl} \tau.P$. It is sufficient to prove the relation $R=\{(P, \tau.P)\}\cup \textbf{Id}$ is a weak static location hhp-bisimulation for some distributions. It can be proved similarly to the proof of
  $\tau$-laws for weak hhp-bisimulation in CTC, we omit it;
  \item $\alpha.\tau.P\approx_{hhp}^{sl} \alpha.P$. It is sufficient to prove the relation $R=\{(\alpha.\tau.P, \alpha.P)\}\cup \textbf{Id}$ is a weak static location hhp-bisimulation for some distributions. It can be proved similarly to the proof of
  $\tau$-laws for weak hhp-bisimulation in CTC, we omit it;
  \item $(\alpha_1\parallel\cdots\parallel\alpha_n).\tau.P\approx_{hhp}^{sl} (\alpha_1\parallel\cdots\parallel\alpha_n).P$. It is sufficient to prove the relation $R=\{((\alpha_1\parallel\cdots\parallel\alpha_n).\tau.P, (\alpha_1\parallel\cdots\parallel\alpha_n).P)\}\cup \textbf{Id}$ is a weak static location hhp-bisimulation for some distributions. It can be proved similarly to the proof of
  $\tau$-laws for weak hhp-bisimulation in CTC, we omit it;
  \item $P+\tau.P\approx_{hhp}^{sl} \tau.P$. It is sufficient to prove the relation $R=\{(P+\tau.P, \tau.P)\}\cup \textbf{Id}$ is a weak static location hhp-bisimulation for some distributions. It can be proved similarly to the proof of
  $\tau$-laws for weak hhp-bisimulation in CTC, we omit it;
  \item $\alpha.(P+\tau.Q)+\alpha.Q\approx_{hhp}^{sl}\alpha.(P+\tau.Q)$. It is sufficient to prove the relation $R=\{(\alpha.(P+\tau.Q)+\alpha.Q, \alpha.(P+\tau.Q))\}\cup \textbf{Id}$ is a weak static location hhp-bisimulation for some distributions. It can be proved similarly to the proof of
  $\tau$-laws for weak hhp-bisimulation in CTC, we omit it;
  \item $(\alpha_1\parallel\cdots\parallel\alpha_n).(P+\tau.Q)+ (\alpha_1\parallel\cdots\parallel\alpha_n).Q\approx_{hhp}^{sl} (\alpha_1\parallel\cdots\parallel\alpha_n).(P+\tau.Q)$. It is sufficient to prove the relation $R=\{((\alpha_1\parallel\cdots\parallel\alpha_n).(P+\tau.Q)+ (\alpha_1\parallel\cdots\parallel\alpha_n).Q, (\alpha_1\parallel\cdots\parallel\alpha_n).(P+\tau.Q))\}\cup \textbf{Id}$ is a weak static location hhp-bisimulation for some distributions. It can be proved similarly to the proof of
  $\tau$-laws for weak hhp-bisimulation in CTC, we omit it;
  \item $P\approx_{hhp}^{sl} \tau\parallel P$. It is sufficient to prove the relation $R=\{(P, \tau\parallel P)\}\cup \textbf{Id}$ is a weak static location hhp-bisimulation for some distributions. It can be proved similarly to the proof of
  $\tau$-laws for weak hhp-bisimulation in CTC, we omit it.
\end{enumerate}
\end{proof}

\begin{definition}[Sequential]
$X$ is sequential in $E$ if every subexpression of $E$ which contains $X$, apart from $X$ itself, is of the form $loc::\alpha.F$, or $(loc_1::\alpha_1\parallel\cdots\parallel loc_n::\alpha_n).F$, or $\sum\widetilde{F}$.
\end{definition}

\begin{definition}[Guarded recursive expression]
$X$ is guarded in $E$ if each occurrence of $X$ is with some subexpression $loc::l.F$ or $(loc_1::l_1\parallel\cdots\parallel loc_n::l_n).F$ of $E$.
\end{definition}

\begin{lemma}\label{LUSWW3}
Let $G$ be guarded and sequential, $Vars(G)\subseteq\widetilde{X}$, and let $G\{\widetilde{P}/\widetilde{X}\}\xrightarrow[u]{\{\alpha_1,\cdots,\alpha_n\}}P'$. Then there is an expression $H$ such that $G\xrightarrow[u]{\{\alpha_1,\cdots,\alpha_n\}}H$, $P'\equiv H\{\widetilde{P}/\widetilde{X}\}$, and for any $\widetilde{Q}$, $G\{\widetilde{Q}/\widetilde{X}\}\xrightarrow[u]{\{\alpha_1,\cdots,\alpha_n\}} H\{\widetilde{Q}/\widetilde{X}\}$. Moreover $H$ is sequential, $Vars(H)\subseteq\widetilde{X}$, and if $\alpha_1=\cdots=\alpha_n=\tau$, then $H$ is also guarded.
\end{lemma}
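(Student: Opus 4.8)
The plan is to argue by induction on the depth of the inference of $G\{\widetilde{P}/\widetilde{X}\}\xrightarrow[u]{\{\alpha_1,\cdots,\alpha_n\}}P'$, exactly as in the proof of Lemma \ref{LUS3}, but now carrying along three extra invariants on the witness $H$: that it is sequential, that $Vars(H)\subseteq\widetilde{X}$, and the conditional guardedness when $\alpha_1=\cdots=\alpha_n=\tau$. The case split is driven by the outermost form of $G$, and \emph{sequentiality} sharply limits the possibilities: since every subexpression of $G$ that mentions a variable must be a prefix $loc::\alpha.F$, a parallel prefix $(loc_1::\alpha_1\parallel\cdots\parallel loc_m::\alpha_m).F$, or a summation $\sum\widetilde{F}$, there is no variable-carrying occurrence of the binary composition $\parallel$, of $\setminus L$, or of $[f]$ to worry about; any such operator (and any constant) appears only inside variable-free subterms.

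First I would dispose of the easy cases. If $G\equiv X_i$ it is not guarded, so this case cannot occur. If $G$ is variable-free (in particular $G\equiv\textbf{nil}$, or $G$ a constant $C$ with $C\overset{\text{def}}{=}R$, or a closed $\parallel$/$\setminus L$/$[f]$ term), then $G\{\widetilde{P}/\widetilde{X}\}\equiv G$, so I simply take $H\equiv P'$: it has no variables, hence is trivially sequential, satisfies $Vars(H)=\emptyset\subseteq\widetilde{X}$, is vacuously guarded, and every instance $G\{\widetilde{Q}/\widetilde{X}\}\equiv G$ performs the very same transition to $P'$. For a guard $G\equiv loc::l.F$ the only applicable rules are $\textbf{Act}_1$ followed by $\textbf{Loc}$, giving the single action $l$ and $H\equiv loc::F$; for a parallel prefix $G\equiv(loc_1::l_1\parallel\cdots\parallel loc_m::l_m).F$ the rule $\textbf{Act}_2$ gives the step $\{l_1,\cdots,l_m\}$ and $H\equiv F$ (suitably located). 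In both prefix cases $H$ is a subterm of $G$, so it is sequential with $Vars(H)\subseteq\widetilde{X}$, and $P'\equiv H\{\widetilde{P}/\widetilde{X}\}$, the matching instance transition holding for every $\widetilde{Q}$.

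The inductive cases are summation and location. For $G\equiv G_1+G_2$ one summand, say $G_1$, performs the transition; since $G_1$ is again guarded and sequential with $Vars(G_1)\subseteq\widetilde{X}$, the hypothesis supplies a witness $H$ for $G_1$, and $\textbf{Sum}_1$/$\textbf{Sum}_2$ lift $G_1\xrightarrow[u]{\{\alpha_1,\cdots,\alpha_n\}}H$ to $G\xrightarrow[u]{\{\alpha_1,\cdots,\alpha_n\}}H$. For $G\equiv loc::G_0$, rule $\textbf{Loc}$ is the last rule used; I apply the hypothesis to $G_0$ to obtain $H_0$ and set $H\equiv loc::H_0$, with location label $loc\ll u$ on the lifted transition. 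In both cases sequentiality, the variable bound, and the matching $\widetilde{Q}$-instance propagate directly.

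The step I expect to be the main obstacle is the final clause: that $H$ is guarded whenever $\alpha_1=\cdots=\alpha_n=\tau$. The point is that, because $G$ is \emph{guarded} and not merely weakly guarded, every guard that actually shields a variable carries a \emph{visible} label $l$; hence a guard can never be consumed by an all-$\tau$ step unless the shielded body is variable-free. Concretely, in the prefix and parallel-prefix cases an all-$\tau$ transition forces the body $F$ to contain no variable (otherwise guardedness is violated), so $H\equiv F$ is variable-free and guarded; in the summation and location cases guardedness of $H$ comes directly from the hypothesis applied to the firing subexpression; and in the variable-free cases it is immediate. A secondary nuisance is the bookkeeping for the prefix $loc::$, which is not literally one of the three forms permitted by the definitions of sequential and guarded; I would handle this by reading the location annotations as transparent decorations, justified by the location laws (e.g.\ $u::(P+Q)\sim_p^{sl}u::P+u::Q$ and $u::(v::P)\sim_p^{sl}uv::P$), so that $loc::H_0$ inherits the required structural properties from $H_0$.
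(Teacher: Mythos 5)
Your strategy coincides with the paper's own proof: structural induction on $G$, with the variable case excluded by guardedness, the Constant/Composition/Restriction/Relabelling case handled by taking $H\equiv P'$ (such subterms are variable-free), the summation case handled by applying the induction hypothesis to the firing summand, and the prefix and parallel-prefix cases handled by taking the body as $H$. Your explicit tracking of sequentiality and of $Vars(H)\subseteq\widetilde{X}$ through the induction is extra care that the paper simply omits.

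The genuine problem is your justification of the last clause. You claim that in the prefix cases an all-$\tau$ transition ``forces the body $F$ to contain no variable (otherwise guardedness is violated).'' That is false. Take $G\equiv\tau.(loc::l.X)$ (read, as in the paper's own case $G\equiv\tau.H$, with bare prefixes admitted): $G$ is sequential, and it is guarded because its unique occurrence of $X$ lies inside the subexpression $loc::l.X$; yet $G\{\widetilde{P}/\widetilde{X}\}\xrightarrow{\tau}(loc::l.X)\{\widetilde{P}/\widetilde{X}\}$, and the body $loc::l.X$ is not variable-free. The conclusion you need is still true, but for a simpler reason: by definition a guard carries a visible label $l\in\mathcal{L}$, so a $\tau$-prefix is never itself a guard, and therefore every guard subexpression witnessing the guardedness of $G\equiv\tau.F$ is already a subexpression of $F$; hence $F$ (your $H$) is guarded whether or not it contains variables. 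The same observation disposes of the summation and location cases via the induction hypothesis, since an all-$\tau$ step can only originate from $\tau$-prefixes. With that repair your proof goes through; note that the paper's proof never verifies the ``moreover'' clauses at all, so this is precisely the point where the extra care you attempted is needed.
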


\begin{proof}
We need to induct on the structure of $G$. Note that, we consider the general distribution.

If $G$ is a Constant, a Composition, a Restriction or a Relabeling then it contains no variables, since $G$ is sequential and guarded, then $G\xrightarrow[u]{\{\alpha_1,\cdots,\alpha_n\}}P'$, then let $H\equiv P'$, as desired.

$G$ cannot be a variable, since it is guarded.

If $G\equiv G_1+G_2$. Then either $G_1\{\widetilde{P}/\widetilde{X}\} \xrightarrow[u]{\{\alpha_1,\cdots,\alpha_n\}}P'$ or $G_2\{\widetilde{P}/\widetilde{X}\} \xrightarrow[u]{\{\alpha_1,\cdots,\alpha_n\}}P'$, then, we can apply this lemma in either case, as desired.

If $G\equiv\beta.H$. Then we must have $\alpha=\beta$, and $P'\equiv H\{\widetilde{P}/\widetilde{X}\}$, and $G\{\widetilde{Q}/\widetilde{X}\}\equiv \beta.H\{\widetilde{Q}/\widetilde{X}\} \xrightarrow[v]{\beta}H\{\widetilde{Q}/\widetilde{X}\}$, then, let $G'$ be $H$, as desired.

If $G\equiv(\beta_1\parallel\cdots\parallel\beta_n).H$. Then we must have $\alpha_i=\beta_i$ for $1\leq i\leq n$, and $P'\equiv H\{\widetilde{P}/\widetilde{X}\}$, and $G\{\widetilde{Q}/\widetilde{X}\}\equiv (\beta_1\parallel\cdots\parallel\beta_n).H\{\widetilde{Q}/\widetilde{X}\} \xrightarrow[v]{\{\beta_1,\cdots,\beta_n\}}H\{\widetilde{Q}/\widetilde{X}\}$, then, let $G'$ be $H$, as desired.

If $G\equiv\tau.H$. Then we must have $\tau=\tau$, and $P'\equiv H\{\widetilde{P}/\widetilde{X}\}$, and $G\{\widetilde{Q}/\widetilde{X}\}\equiv \tau.H\{\widetilde{Q}/\widetilde{X}\} \xrightarrow{\tau}H\{\widetilde{Q}/\widetilde{X}\}$, then, let $G'$ be $H$, as desired.
\end{proof}

\begin{theorem}[Unique solution of equations for weak static location pomset bisimulation]
Let the guarded and sequential expressions $\widetilde{E}$ contain free variables $\subseteq \widetilde{X}$, then,

If $\widetilde{P}\approx_p^{sl} \widetilde{E}\{\widetilde{P}/\widetilde{X}\}$ and $\widetilde{Q}\approx_p^{sl} \widetilde{E}\{\widetilde{Q}/\widetilde{X}\}$, then $\widetilde{P}\approx_p^{sl} \widetilde{Q}$.
\end{theorem}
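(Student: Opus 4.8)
\emph{The plan is} to abandon the inductive template of the strong case (Theorem~\ref{USSSB3}) and argue instead through an explicit relation, because internal $\tau$-moves make a direct induction on transition depth match only weakly; this is precisely the situation Lemma~\ref{LUSWW3} was prepared for. First I would fix the syntactic relation
$$R=\{(G\{\widetilde{P}/\widetilde{X}\},G\{\widetilde{Q}/\widetilde{X}\}) \mid G\textrm{ guarded and sequential},\ Vars(G)\subseteq\widetilde{X}\},$$
understood as a location-indexed family $R_{\varphi}$, and take as my candidate the relational composition $\approx_p^{sl} R \approx_p^{sl}$. Since every $E_i$ is guarded and sequential, $(E_i\{\widetilde{P}/\widetilde{X}\},E_i\{\widetilde{Q}/\widetilde{X}\})\in R$; feeding in the hypotheses $P_i\approx_p^{sl}E_i\{\widetilde{P}/\widetilde{X}\}$ and $Q_i\approx_p^{sl}E_i\{\widetilde{Q}/\widetilde{X}\}$ places each pair $(P_i,Q_i)$ in $\approx_p^{sl} R \approx_p^{sl}$, so it suffices to show this composite is a weak static location pomset bisimulation and then read off $\widetilde{P}\approx_p^{sl}\widetilde{Q}$.

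The core is the transfer property for $R$. Given $(G\{\widetilde{P}/\widetilde{X}\},G\{\widetilde{Q}/\widetilde{X}\})\in R$ with $G$ guarded and a transition $G\{\widetilde{P}/\widetilde{X}\}\xrightarrow[u]{X}P'$, I would invoke Lemma~\ref{LUSWW3} to extract an expression $H$ with $G\xrightarrow[u]{X}H$, $P'\equiv H\{\widetilde{P}/\widetilde{X}\}$, and the twin move $G\{\widetilde{Q}/\widetilde{X}\}\xrightarrow[u]{X}H\{\widetilde{Q}/\widetilde{X}\}$ on the $\widetilde{Q}$-side. Because the lemma returns the \emph{same} location $u$ on both sides, the consistent location association is simply extended by $(u,u)$ and all location side-conditions hold automatically. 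If $X$ consists entirely of $\tau$'s, Lemma~\ref{LUSWW3} additionally guarantees that $H$ is again guarded and sequential, so $(H\{\widetilde{P}/\widetilde{X}\},H\{\widetilde{Q}/\widetilde{X}\})\in R$ and the pair remains in $R$ outright.

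The delicate case is a visible step $X$, for which Lemma~\ref{LUSWW3} certifies only that $H$ is sequential, not guarded. Here I would re-guard $H$ using the equations: put $G'\equiv H\{\widetilde{E}/\widetilde{X}\}$, which is guarded and sequential because each free variable of $H$ has been replaced by a guarded, sequential $E_i$. From $P_i\approx_p^{sl}E_i\{\widetilde{P}/\widetilde{X}\}$ I would deduce $H\{\widetilde{P}/\widetilde{X}\}\approx_p^{sl}G'\{\widetilde{P}/\widetilde{X}\}$ and likewise $H\{\widetilde{Q}/\widetilde{X}\}\approx_p^{sl}G'\{\widetilde{Q}/\widetilde{X}\}$; since $(G'\{\widetilde{P}/\widetilde{X}\},G'\{\widetilde{Q}/\widetilde{X}\})\in R$ with $G'$ guarded, the target pair $(P',H\{\widetilde{Q}/\widetilde{X}\})$ lands back in $\approx_p^{sl} R \approx_p^{sl}$. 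Promoting this strong-to-weak transfer for $R$ into a genuine transfer for the composite is then the bookkeeping of threading the two $\approx_p^{sl}$ factors along the weak transition, and the conclusion follows once the composite is confirmed to be a weak static location pomset bisimulation.

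\emph{The main obstacle} is exactly the re-guarding step: it substitutes the $\approx_p^{sl}$-equivalent processes $P_i$ and $E_i\{\widetilde{P}/\widetilde{X}\}$ into the context $H$, and weak equivalences are notoriously fragile under $+$, so one cannot invoke congruence blindly. The sequentiality of $G$ (hence of $H$) is what saves the argument, since it confines every variable occurrence to a prefix-guarded position or a summand: in the bare-variable sub-case $H\equiv X_i$ the needed equivalence $P_i\approx_p^{sl}E_i\{\widetilde{P}/\widetilde{X}\}$ is literally a hypothesis, and in the remaining summation positions one must appeal to the weak static location analogue of the CTC congruence results for the sequential operators. Verifying that $H\{\widetilde{E}/\widetilde{X}\}$ is simultaneously guarded and sequential, that these congruence appeals are legitimate in the positions sequentiality permits, and that the location labels $u$ thread through unchanged, are the points I would check with care rather than wave through.
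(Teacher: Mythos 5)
Your proposal is essentially the paper's own argument: the same key Lemma~\ref{LUSWW3}, the same context-indexed relation closed under $\approx_p^{sl}$ on both sides, and the same trick of using the hypothesis $\widetilde{P}\approx_p^{sl}\widetilde{E}\{\widetilde{P}/\widetilde{X}\}$ to restore guardedness by substituting $\widetilde{E}$ --- the paper merely pre-composes (working with the guarded sequential $H(E(\cdot))$ and applying the lemma along weak transitions) where you post-compose (re-guarding the derivative as $H\{\widetilde{E}/\widetilde{X}\}$ after a visible step). The summation-congruence caveat you flag is present, silently, in the paper's own step from $H(P)$ to $H(E(P))$, so your write-up is the same proof with the delicate points stated more explicitly.
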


\begin{proof}
Like the corresponding theorem in CCS, without loss of generality, we only consider a single equation $X=E$. So we assume $P\approx_p^{sl} E(P)$, $Q\approx_p^{sl} E(Q)$, then $P\approx_p^{sl} Q$. Note that, we consider the general distribution.

We will prove $\{(H(P),H(Q)): H\}$ sequential, if $H(P)\xrightarrow[u]{\{\alpha_1,\cdots,\alpha_n\}}P'$, then, for some $Q'$, $H(Q)\xRightarrow[u]{\{\alpha_1.\cdots,\alpha_n\}}Q'$ and $P'\approx_p^{sl} Q'$.

Let $H(P)\xrightarrow[u]{\{\alpha_1,\cdots,\alpha_n\}}P'$, then $H(E(P))\xRightarrow[u]{\{\alpha_1,\cdots,\alpha_n\}}P''$ and $P'\approx_p^{sl} P''$.

By Lemma \ref{LUSWW3}, we know there is a sequential $H'$ such that $H(E(P))\xRightarrow[u]{\{\alpha_1,\cdots,\alpha_n\}}H'(P)\Rightarrow P''\approx_p^{sl} P'$.

And, $H(E(Q))\xRightarrow[u]{\{\alpha_1,\cdots,\alpha_n\}}H'(Q)\Rightarrow Q''$ and $P''\approx_p^{sl} Q''$. And $H(Q)\xrightarrow[u]{\{\alpha_1,\cdots,\alpha_n\}}Q'\approx_p^{sl} Q''$. Hence, $P'\approx_p^{sl} Q'$, as desired.
\end{proof}

\begin{theorem}[Unique solution of equations for weak static location step bisimulation]\label{USWSB3}
Let the guarded and sequential expressions $\widetilde{E}$ contain free variables $\subseteq \widetilde{X}$, then,

If $\widetilde{P}\approx_s^{sl} \widetilde{E}\{\widetilde{P}/\widetilde{X}\}$ and $\widetilde{Q}\approx_s^{sl} \widetilde{E}\{\widetilde{Q}/\widetilde{X}\}$, then $\widetilde{P}\approx_s^{sl} \widetilde{Q}$.
\end{theorem}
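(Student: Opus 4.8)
The plan is to follow verbatim the template of the preceding pomset theorem, replacing pomset transitions by step transitions throughout; the only change is that the matched multiset $\{\alpha_1,\cdots,\alpha_n\}$ must now consist of pairwise concurrent events. As in CCS, a routine reduction lets me treat a single equation $X=E$: assuming $P\approx_s^{sl} E(P)$ and $Q\approx_s^{sl} E(Q)$, I aim to conclude $P\approx_s^{sl} Q$. The whole argument is carried out with the general distribution, so the location annotations travel along unchanged on both sides.

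First I would propose the witnessing relation
\[
\mathcal{R}=\{(H(P),H(Q)): H\text{ sequential}\},
\]
and show it is a weak static location step bisimulation up to $\approx_s^{sl}$. The base point $(P,Q)$ lies in $\mathcal{R}$ via the trivial context $H\equiv X$ (a bare variable is sequential), so the bisimulation property immediately yields $P\approx_s^{sl} Q$.

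The core is the transition-matching. Suppose $H(P)\xrightarrow[u]{\{\alpha_1,\cdots,\alpha_n\}}P'$ with the events pairwise concurrent. Since $\approx_s^{sl}$ is a congruence for the sequential context $H$ and $P\approx_s^{sl} E(P)$, the move is matched by $H(E(P))\xRightarrow[u]{\{\alpha_1,\cdots,\alpha_n\}}P''$ with $P'\approx_s^{sl} P''$. Now the composite $G\equiv H(E(X))$ is guarded and sequential (guarded because $E$ is), so Lemma \ref{LUSWW3} supplies a sequential $H'$ with $H(E(P))\xRightarrow[u]{\{\alpha_1,\cdots,\alpha_n\}}H'(P)\Rightarrow P''\approx_s^{sl} P'$, and crucially the \emph{same} $H'$ realizes $H(E(Q))\xRightarrow[u]{\{\alpha_1,\cdots,\alpha_n\}}H'(Q)$ with identical location label $u$. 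Using $Q\approx_s^{sl} E(Q)$ and congruence again, I pull this back to $H(Q)\xRightarrow[u]{\{\alpha_1,\cdots,\alpha_n\}}Q'\approx_s^{sl} H'(Q)$. Chaining the bisimilarities gives $P'\approx_s^{sl} Q'$ with $(H'(P),H'(Q))\in\mathcal{R}$, as required; the silent-move and termination clauses are handled identically, and the symmetric case is literally the same argument with the roles of $P$ and $Q$ exchanged.

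The main obstacle is making sure Lemma \ref{LUSWW3} applies cleanly under the step reading. The lemma is phrased for the generic multiset transition $\xrightarrow[u]{\{\alpha_1,\cdots,\alpha_n\}}$, so I must check that the reconstructed derivation preserves the pairwise-concurrency requirement of a \emph{step} and that the location $u$ is literally the same on the $P$- and $Q$-instances; this is exactly what the ``for any $\widetilde{Q}$'' clause of the lemma guarantees. A secondary point is the interplay with weakness: because $H(E(X))$ is \emph{guarded}, the first visible action cannot be an unguarded $\tau$, so the $\tau$-absorption hidden in $\xRightarrow{}$ does not destroy the form $H'(\cdot)$ — this is precisely why guardedness of $\widetilde{E}$ (not merely sequentiality) is needed for the weak case.
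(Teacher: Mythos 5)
Your proposal matches the paper's own proof: both reduce to a single equation $X=E$, take the relation $\{(H(P),H(Q)):H\text{ sequential}\}$, and establish the step-transfer property by combining congruence of $\approx_s^{sl}$ with Lemma \ref{LUSWW3} to extract a common sequential context $H'$ realizing the matching weak step transitions on both sides. Your added remarks on guardedness blocking unguarded $\tau$'s and on the lemma preserving the location label and pairwise concurrency are sound elaborations of points the paper leaves implicit, but the argument is essentially the same.
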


\begin{proof}
Like the corresponding theorem in CCS, without loss of generality, we only consider a single equation $X=E$. So we assume $P\approx_s^{sl} E(P)$, $Q\approx_s^{sl} E(Q)$, then $P\approx_s^{sl} Q$. Note that, we consider the general distribution.

We will prove $\{(H(P),H(Q)): H\}$ sequential, if $H(P)\xrightarrow[u]{\{\alpha_1,\cdots,\alpha_n\}}P'$, then, for some $Q'$, $H(Q)\xRightarrow[u]{\{\alpha_1.\cdots,\alpha_n\}}Q'$ and $P'\approx_s^{sl} Q'$.

Let $H(P)\xrightarrow[u]{\{\alpha_1,\cdots,\alpha_n\}}P'$, then $H(E(P))\xRightarrow[u]{\{\alpha_1,\cdots,\alpha_n\}}P''$ and $P'\approx_s^{sl} P''$.

By Lemma \ref{LUSWW3}, we know there is a sequential $H'$ such that $H(E(P))\xRightarrow[u]{\{\alpha_1,\cdots,\alpha_n\}}H'(P)\Rightarrow P''\approx_s^{sl} P'$.

And, $H(E(Q))\xRightarrow[u]{\{\alpha_1,\cdots,\alpha_n\}}H'(Q)\Rightarrow Q''$ and $P''\approx_s^{sl} Q''$. And $H(Q)\xrightarrow[u]{\{\alpha_1,\cdots,\alpha_n\}}Q'\approx_s^{sl} Q''$. Hence, $P'\approx_s^{sl} Q'$, as desired.
\end{proof}

\begin{theorem}[Unique solution of equations for weak static location hp-bisimulation]\label{USWSB3}
Let the guarded and sequential expressions $\widetilde{E}$ contain free variables $\subseteq \widetilde{X}$, then,

If $\widetilde{P}\approx_{hp}^{sl} \widetilde{E}\{\widetilde{P}/\widetilde{X}\}$ and $\widetilde{Q}\approx_{hp}^{sl} \widetilde{E}\{\widetilde{Q}/\widetilde{X}\}$, then $\widetilde{P}\approx_{hp}^{sl} \widetilde{Q}$.
\end{theorem}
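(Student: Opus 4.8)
The plan is to follow the template of the two preceding unique-solution theorems (for weak static location pomset and step bisimulation), adapting it to the extra bookkeeping that hp-bisimulation demands, namely a position-preserving isomorphism $f$ together with a consistent location association $\varphi$. As in the CCS argument, I would first reduce to a single equation $X=E$ with $E$ guarded and sequential: assume $P\approx_{hp}^{sl} E(P)$ and $Q\approx_{hp}^{sl} E(Q)$, and aim to show $P\approx_{hp}^{sl} Q$. The reduction to one equation is standard and does not interact with the location machinery, so the real work is the single-equation case.

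Next I would exhibit the candidate weakly posetal relation $R=\{(H(P),f,H(Q)) : H \textrm{ guarded and sequential}\}$, carrying the isomorphism $f$ built up along the matched events, paired with the location association $\varphi$ accumulated from the location pairs $(u,v)$ seen so far. The technical heart is the transfer step: whenever $(H(P),f,H(Q))\in R_{\varphi}$ and $H(P)\xrightarrow[u]{e_1}P'$, produce a weak move $H(Q)\xRightarrow[v]{e_2}Q'$ with $(P',f[e_1\mapsto e_2],Q')\in R_{\varphi\cup\{(u,v)\}}$, and symmetrically. To obtain this, I would use $P\approx_{hp}^{sl} E(P)$ to simulate the transition inside $E(P)$, getting $H(E(P))\xRightarrow[u]{e_1}P''$ with $P'\approx_{hp}^{sl} P''$; then invoke Lemma \ref{LUSWW3} (applied to the single-event transition) to peel off a sequential $H'$ with $H(E(P))\xRightarrow[u]{e_1}H'(P)$ and $P''\approx_{hp}^{sl} H'(P)$; then run the mirror-image steps on the $Q$ side via $Q\approx_{hp}^{sl} E(Q)$ to land at $H'(Q)$, and close the loop back into $R$.

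I expect the main obstacle to be coherently maintaining the posetal isomorphism and the consistency of $\varphi$ across the weak, silent-step-absorbing matching moves. Unlike the pomset and step cases, hp-bisimulation requires the event bijection $f[e_1\mapsto e_2]$ to be an order-isomorphism of the whole accumulated configurations at every stage, and the appended location pair $(u,v)$ must keep $\varphi$ a consistent location association in the sense of the cla definition; the intervening $\tau$-transitions inside a weak move must disturb neither. The delicate point is verifying that the sequential form $H'$ produced by Lemma \ref{LUSWW3} is compatible with these updates, so that the triple $(P',f[e_1\mapsto e_2],Q')$ genuinely re-enters $R_{\varphi\cup\{(u,v)\}}$. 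By contrast, the congruence and location laws established earlier let me rewrite $H'(P)$ and $H'(Q)$ routinely, so the algebraic manipulation should not be the bottleneck.
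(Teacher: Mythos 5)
Your proposal follows essentially the same route as the paper's proof: reduction to a single equation $X=E$, a candidate relation built from guarded and sequential contexts $H$ applied to $P$ and $Q$, and a transfer argument that pushes a transition of $H(P)$ through $P\approx_{hp}^{sl}E(P)$, applies Lemma \ref{LUSWW3} to extract a sequential $H'$, and mirrors the move on the $Q$ side to close the relation. If anything, your version is more faithful to the hp-setting than the paper's own text, which recycles the step-bisimulation proof verbatim (matching pomset-style labels $\{\alpha_1,\cdots,\alpha_n\}$) and never tracks the posetal isomorphism $f[e_1\mapsto e_2]$ or the consistent location association $\varphi\cup\{(u,v)\}$ that you correctly insist must be maintained across the weak matching moves.
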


\begin{proof}
Like the corresponding theorem in CCS, without loss of generality, we only consider a single equation $X=E$. So we assume $P\approx_{hp}^{sl} E(P)$, $Q\approx_{hp}^{sl} E(Q)$, then $P\approx_{hp}^{sl} Q$. Note that, we consider the general distribution.

We will prove $\{(H(P),H(Q)): H\}$ sequential, if $H(P)\xrightarrow[u]{\{\alpha_1,\cdots,\alpha_n\}}P'$, then, for some $Q'$, $H(Q)\xRightarrow[u]{\{\alpha_1.\cdots,\alpha_n\}}Q'$ and $P'\approx_{hp}^{sl} Q'$.

Let $H(P)\xrightarrow[u]{\{\alpha_1,\cdots,\alpha_n\}}P'$, then $H(E(P))\xRightarrow[u]{\{\alpha_1,\cdots,\alpha_n\}}P''$ and $P'\approx_{hp}^{sl} P''$.

By Lemma \ref{LUSWW3}, we know there is a sequential $H'$ such that $H(E(P))\xRightarrow[u]{\{\alpha_1,\cdots,\alpha_n\}}H'(P)\Rightarrow P''\approx_{hp}^{sl} P'$.

And, $H(E(Q))\xRightarrow[u]{\{\alpha_1,\cdots,\alpha_n\}}H'(Q)\Rightarrow Q''$ and $P''\approx_{hp}^{sl} Q''$. And $H(Q)\xrightarrow[u]{\{\alpha_1,\cdots,\alpha_n\}}Q'\approx_{hp}^{sl} Q''$. Hence, $P'\approx_{hp}^{sl} Q'$, as desired.
\end{proof}

\begin{theorem}[Unique solution of equations for weak static location hhp-bisimulation]\label{USWSB3}
Let the guarded and sequential expressions $\widetilde{E}$ contain free variables $\subseteq \widetilde{X}$, then,

If $\widetilde{P}\approx_{hhp}^{sl} \widetilde{E}\{\widetilde{P}/\widetilde{X}\}$ and $\widetilde{Q}\approx_{hhp}^{sl} \widetilde{E}\{\widetilde{Q}/\widetilde{X}\}$, then $\widetilde{P}\approx_{hhp}^{sl} \widetilde{Q}$.
\end{theorem}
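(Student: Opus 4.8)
The plan is to mirror the strategy already used for the pomset, step, and hp versions of this theorem, and to graft onto it the one genuinely new ingredient demanded by the hereditary setting. First I would reduce to a single equation $X=E$, since a system $\widetilde{X}=\widetilde{E}$ is handled componentwise with no new idea; so I assume $P\approx_{hhp}^{sl} E(P)$ and $Q\approx_{hhp}^{sl} E(Q)$ and aim to conclude $P\approx_{hhp}^{sl} Q$. The candidate witness is the posetal relation generated by the sequential contexts, namely the triples arising from pairs $(H(P),H(Q))$ with $H$ guarded and sequential, carrying the configuration isomorphism and the consistent location association $\varphi$ accumulated along the run. As in the earlier proofs I would work with the general distribution, so that a move of $H(P)$ at location $u$ is answered by a move of $H(Q)$ at the \emph{same} location $u$; then $(u,u)$ is trivially consistent and the association condition of Definition of cla is automatic.

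Second, for the forward transfer clause I would take a transition $H(P)\xrightarrow[u]{\{\alpha_1,\cdots,\alpha_n\}}P'$. Using $P\approx_{hhp}^{sl} E(P)$ and stability of the relation under sequential contexts, this lifts to $H(E(P))\xRightarrow[u]{\{\alpha_1,\cdots,\alpha_n\}}P''$ with $P'\approx_{hhp}^{sl} P''$. Lemma \ref{LUSWW3} is the workhorse: since $H$ is guarded and sequential it produces a sequential $H'$ with $H(E(P))\xRightarrow[u]{\{\alpha_1,\cdots,\alpha_n\}}H'(P)$ and $P''\approx_{hhp}^{sl} P'$, and simultaneously gives the matching $H(E(Q))\xRightarrow[u]{\{\alpha_1,\cdots,\alpha_n\}}H'(Q)$. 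Composing with $Q\approx_{hhp}^{sl} E(Q)$ yields $H(Q)\xRightarrow[u]{\{\alpha_1,\cdots,\alpha_n\}}Q'$ with $P'\approx_{hhp}^{sl} Q'$, so the successor pair $(H'(P),H'(Q))$ again lies in the relation, now equipped with the extended isomorphism and the updated association $\varphi\cup\{(u,u)\}$. The symmetric clause and the two termination clauses for the predicate $\downarrow$ are discharged identically, exactly as in Theorem for weak static location hp-bisimulation.

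The part that distinguishes this theorem from its pomset, step, and hp predecessors is downward closure: by definition a weak static location hhp-bisimulation is a \emph{downward closed} weak static location hp-bisimulation, so exhibiting the forward bisimulation built from $(\emptyset,\emptyset,\emptyset)$ is not by itself enough. The hard part will be verifying that whenever a posetal triple $(C_1,f,C_2)$ sits pointwise below a member of the context-generated relation, it is itself a member while retaining a consistent location association. I would handle this by passing to the downward closure of the candidate relation explicitly and arguing, as in the hhp case for CTC, that restricting the isomorphism $f$ and the association $\varphi$ to a sub-configuration again arises from a sequential context, so that the weak transitions and their location labels are inherited by sub-configurations; this keeps the closure a weak static location hp-bisimulation. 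This is precisely the step where the proof leans on the analogous hhp result for CTC, which the statement permits us to cite, and I expect it to be the only place where more than routine bookkeeping is required.
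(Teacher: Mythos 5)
Your proposal follows essentially the same route as the paper's own proof: reduce to a single equation $X=E$, take the sequential-context relation $\{(H(P),H(Q)):H \textrm{ sequential}\}$, and use Lemma \ref{LUSWW3} to transfer a transition of $H(P)$ through $H(E(P))$ and $H(E(Q))$ to a matching weak transition of $H(Q)$ at the same location $u$. The one point of divergence is your third paragraph: the paper does not address downward closure at all --- its hhp proof is verbatim the hp proof with the subscripts changed --- whereas you correctly flag that exhibiting an hp-style transfer argument does not by itself verify the hereditary condition, and you sketch closing the candidate relation downward and checking that sub-configurations of context-generated pairs are again context-generated; this extra step is sound and actually supplies rigor that the paper's own one-line treatment of the hhp case omits. (One small slip: the weak static location hp-bisimulation of this paper has no termination clauses for $\downarrow$; those belong to the branching variants, so that part of your second paragraph is vacuous rather than wrong.)
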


\begin{proof}
Like the corresponding theorem in CCS, without loss of generality, we only consider a single equation $X=E$. So we assume $P\approx_{hhp}^{sl} E(P)$, $Q\approx_{hhp}^{sl} E(Q)$, then $P\approx_{hhp}^{sl} Q$. Note that, we consider the general distribution.

We will prove $\{(H(P),H(Q)): H\}$ sequential, if $H(P)\xrightarrow[u]{\{\alpha_1,\cdots,\alpha_n\}}P'$, then, for some $Q'$, $H(Q)\xRightarrow[u]{\{\alpha_1.\cdots,\alpha_n\}}Q'$ and $P'\approx_{hhp}^{sl} Q'$.

Let $H(P)\xrightarrow[u]{\{\alpha_1,\cdots,\alpha_n\}}P'$, then $H(E(P))\xRightarrow[u]{\{\alpha_1,\cdots,\alpha_n\}}P''$ and $P'\approx_{hhp}^{sl} P''$.

By Lemma \ref{LUSWW3}, we know there is a sequential $H'$ such that $H(E(P))\xRightarrow[u]{\{\alpha_1,\cdots,\alpha_n\}}H'(P)\Rightarrow P''\approx_{hhp}^{sl} P'$.

And, $H(E(Q))\xRightarrow[u]{\{\alpha_1,\cdots,\alpha_n\}}H'(Q)\Rightarrow Q''$ and $P''\approx_{hhp}^{sl} Q''$. And $H(Q)\xrightarrow[u]{\{\alpha_1,\cdots,\alpha_n\}}Q'\approx_{hhp}^{sl} Q''$. Hence, $P'\approx_{hhp}^{sl} Q'$, as desired.
\end{proof}

\subsection{CTC with Dynamic Localities}{\label{ctcdl}}

CTC with dynamic localities is almost the same as CTC with static localities in section \ref{ctcsl}, as the locations are dynamically generated but not allocated statically. The LTSs-based
operational semantics and the laws are almost the same, except for the transition rules of $\textbf{Act}$ as follows.

\[\textbf{Act}_1\quad \frac{}{\alpha.P\xrightarrow[loc]{\alpha}loc::P}\]

\[\textbf{Act}_2\quad \frac{}{(\alpha_1\parallel\cdots\parallel\alpha_n).P\xrightarrow[loc]{\{\alpha_1,\cdots,\alpha_n\}}loc::P}\quad (\alpha_i\neq\overline{\alpha_j}\quad i,j\in\{1,\cdots,n\})\]

\newpage\section{APTC with Localities}\label{aptcl}

In this chapter, we introduce APTC with localities, including APTC with static localities in section \ref{aptcsl}, APTC with dynamic localities in section \ref{aptcdl}.

\subsection{APTC with Static Localities}{\label{aptcsl}}

\subsubsection{BATC with Static Localities}

Let $Loc$ be the set of locations, and $loc\in Loc$, $u,v\in Loc^*$, $\epsilon$ is the empty location. A distribution allocates a location $u\in Loc*$ to an action $e$ denoted
$u::e$ or a process $x$ denoted $u::x$.


In the following, let $e_1, e_2, e_1', e_2'\in \mathbb{E}$, and let variables $x,y,z$ range over the set of terms for true concurrency, $p,q,s$ range over the set of closed terms.
The set of axioms of BATC with static localities ($BATC^{sl}$) consists of the laws given in Table \ref{AxiomsForBATC}.

\begin{center}
    \begin{table}
        \begin{tabular}{@{}ll@{}}
            \hline No. &Axiom\\
            $A1$ & $x+ y = y+ x$\\
            $A2$ & $(x+ y)+ z = x+ (y+ z)$\\
            $A3$ & $x+ x = x$\\
            $A4$ & $(x+ y)\cdot z = x\cdot z + y\cdot z$\\
            $A5$ & $(x\cdot y)\cdot z = x\cdot(y\cdot z)$\\
            $L1$ & $\epsilon::x=x$\\
            $L2$ & $u::(x\cdot y)=u::x\cdot u::y$\\
            $L3$ & $u::(x+ y)=u::x+ u::y$\\
            $L4$ & $u::(v::x)=uv::x$\\
        \end{tabular}
        \caption{Axioms of BATC with static localities}
        \label{AxiomsForBATC}
    \end{table}
\end{center}


\begin{definition}[Basic terms of BATC with static localities]\label{BTBATC}
The set of basic terms of BATC with static localities, $\mathcal{B}(BATC^{sl})$, is inductively defined as follows:
\begin{enumerate}
  \item $\mathbb{E}\subset\mathcal{B}(BATC^{sl})$;
  \item if $u\in Loc^*, t\in\mathcal{B}(BATC^{sl})$ then $u::t\in\mathcal{B}(BATC^{sl})$;
  \item if $e\in \mathbb{E}, t\in\mathcal{B}(BATC^{sl})$ then $e\cdot t\in\mathcal{B}(BATC^{sl})$;
  \item if $t,s\in\mathcal{B}(BATC^{sl})$ then $t+ s\in\mathcal{B}(BATC^{sl})$.
\end{enumerate}
\end{definition}

\begin{theorem}[Elimination theorem of BATC with static localities]\label{ETBATC}
Let $p$ be a closed BATC with static localities term. Then there is a basic BATC with static localities term $q$ such that $BATC^{sl}\vdash p=q$.
\end{theorem}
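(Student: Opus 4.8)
The plan is to follow the standard rewriting route to elimination results in process algebra: orient the relevant axioms into a term rewriting system, prove it strongly normalizing by exhibiting a well-founded order, and then verify that every normal form is a basic term. Since each rewrite step is an instance of an axiom, a closed term is provably equal in $BATC^{sl}$ to any term it reduces to; combining termination with the normal-form characterization then gives that every closed $p$ equals some basic $q$.

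First I would orient the ``shape-changing'' axioms left to right, setting the pure-$+$ axioms $A1$--$A3$ aside (commutativity cannot be oriented terminatingly, and basic terms are already closed under $+$ by clause (4) of Definition \ref{BTBATC}, so these do not affect reaching basic form). This yields the rules $(x+y)\cdot z\to x\cdot z+y\cdot z$, $(x\cdot y)\cdot z\to x\cdot(y\cdot z)$, $\epsilon::x\to x$, $u::(x\cdot y)\to u::x\cdot u::y$, $u::(x+y)\to u::x+u::y$, and $u::(v::x)\to uv::x$, the last three understood as rule schemas over locations. The effect of $A4,A5$ is to force the left operand of every $\cdot$ to be neither a sum nor a product, while $L1$--$L4$ drive every location prefix inward until it reaches an atomic event or is absorbed.

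For strong normalization I would appeal to Theorem \ref{SN}, so it suffices to give a well-founded order on the signature under which each rule strictly decreases. I would use a lexicographic-path-style order induced by the precedence $::\,>\,\cdot\,>\,+$, treating each prefix $u::$ as a unary operator $\sigma_u$ and ranking the $\sigma_u$ among themselves by the length of $u$ (shorter locations higher), which is well founded. Then $A4,A5$ decrease exactly as in the location-free case; $L2,L3$ decrease because $\sigma_u$ outranks $\cdot$ and $+$ while the bodies $x,y$ are proper subterms of $x\cdot y$ and $x+y$; $L1$ decreases by the subterm property; and $L4$ decreases because $\sigma_u$ outranks $\sigma_{uv}$ (as $|u|<|uv|$). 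It is precisely the length-based ranking of the $\sigma_u$ that lets a single order absorb both the merging of prefixes in $L4$ and their duplication in $L2$/$L3$, alongside the duplication of $z$ in $A4$.

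The remaining and most delicate step is that every normal form $p$ is a basic term, which I would prove by induction on the structure of $p$. Atomic events are basic by clause (1); a sum is basic by clause (4) once its summands are; and for a location term $u::t$ in normal form, $L1$--$L4$ force $t$ to be an atomic event (with $u\neq\epsilon$), so $u::t$ is basic by clause (2). The crux is the product case $p_1\cdot p_2$: the normal-form hypothesis excludes $p_1$ being a sum (else $A4$ fires) or a product (else $A5$ fires), leaving $p_1$ either an event $e$, giving the clause-(3) shape $e\cdot p_2$, or a located event $u::e$. I expect the genuine obstacle to sit exactly here, since a normal form may contain $(u::e)\cdot p_2$, a shape that clause (3) of Definition \ref{BTBATC} does not literally admit because it requires a bare event on the left of $\cdot$. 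Resolving this cleanly requires either reading a located event $u::e$ (itself basic by clause (2)) as an admissible head of a sequential chain, or adjoining a rule that relocates such prefixes; I would make this precise and argue that, so amended, the normal forms coincide with $\mathcal{B}(BATC^{sl})$, whence every closed term equals a basic term in $BATC^{sl}$.
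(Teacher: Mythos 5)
Your proposal follows essentially the same route as the paper's proof: orient $A3$--$A5$ and $L1$--$L4$ into a term rewriting system, obtain strong normalization from Theorem \ref{SN} under the precedence $::\,>\,\cdot\,>\,+$, and then argue by structural induction that every normal form of a closed term is basic. The delicate point you flag at the end --- that a normal form of the shape $(u::e)\cdot p_2$ is not literally admitted by clause (3) of Definition \ref{BTBATC} --- is genuine, but the paper's own proof simply declares the subcase $p_1\equiv u::p_1'$ to be basic without comment, so your more cautious treatment is an improvement in rigor rather than a departure in method.
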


\begin{proof}
(1) Firstly, suppose that the following ordering on the signature of BATC with static localities is defined: $::>\cdot > +$ and the symbol $::$ is given the lexicographical status for
the first argument, then for each rewrite rule $p\rightarrow q$ in Table \ref{TRSForBATC} relation $p>_{lpo} q$ can easily be proved. We obtain that the term rewrite system shown in
Table \ref{TRSForBATC} is strongly normalizing, for it has finitely many rewriting rules, and $>$ is a well-founded ordering on the signature of BATC with static localities, and if
$s>_{lpo} t$, for each rewriting rule $s\rightarrow t$ is in Table \ref{TRSForBATC} (see Theorem \ref{SN}).

\begin{center}
    \begin{table}
        \begin{tabular}{@{}ll@{}}
            \hline No. &Rewriting Rule\\
            $RA3$ & $x+ x \rightarrow x$\\
            $RA4$ & $(x+ y)\cdot z \rightarrow x\cdot z + y\cdot z$\\
            $RA5$ & $(x\cdot y)\cdot z \rightarrow x\cdot(y\cdot z)$\\
            $RL1$ & $\epsilon::x\rightarrow x$\\
            $RL2$ & $u::(x\cdot y)\rightarrow u::x\cdot u::y$\\
            $RL3$ & $u::(x+ y)\rightarrow u::x+ u::y$\\
            $RL4$ & $u::(v::x)\rightarrow uv::x$\\
        \end{tabular}
        \caption{Term rewrite system of BATC with static localities}
        \label{TRSForBATC}
    \end{table}
\end{center}

(2) Then we prove that the normal forms of closed BATC with static localities terms are basic BATC with static localities terms.

Suppose that $p$ is a normal form of some closed BATC with static localities term and suppose that $p$ is not a basic term. Let $p'$ denote the smallest sub-term of $p$ which is not a
basic term. It implies that each sub-term of $p'$ is a basic term. Then we prove that $p$ is not a term in normal form. It is sufficient to induct on the structure of $p'$:

\begin{itemize}
  \item Case $p'\equiv u::e, e\in \mathbb{E}$. $p'$ is a basic term, which contradicts the assumption that $p'$ is not a basic term, so this case should not occur.
  \item Case $p'\equiv p_1\cdot p_2$. By induction on the structure of the basic term $p_1$:
      \begin{itemize}
        \item Subcase $p_1\in \mathbb{E}$. $p'$ would be a basic term, which contradicts the assumption that $p'$ is not a basic term;
        \item Subcase $p_1\equiv u:: p_1'$. $p'$ would be a basic term, which contradicts the assumption that $p'$ is not a basic term;
        \item Subcase $p_1\equiv e\cdot p_1'$. $RA5$ rewriting rule can be applied. So $p$ is not a normal form;
        \item Subcase $p_1\equiv p_1'+ p_1''$. $RA4$ rewriting rule can be applied. So $p$ is not a normal form.
      \end{itemize}
  \item Case $p'\equiv p_1+ p_2$. By induction on the structure of the basic terms both $p_1$ and $p_2$, all subcases will lead to that $p'$ would be a basic term, which contradicts
  the assumption that $p'$ is not a basic term.
\end{itemize}
\end{proof}


In this subsection, we will define a term-deduction system which gives the operational semantics of BATC with static localities. We give the operational transition rules for operators
$\cdot$ and $+$ as Table \ref{SETRForBATC} shows. And the predicate $\xrightarrow[u]{e}\surd$ represents successful termination after execution of the event $e$ at the location $u$.

\begin{center}
    \begin{table}
        $$\frac{}{e\xrightarrow[\epsilon]{e}\surd}\quad \frac{}{loc::e\xrightarrow[loc]{e}\surd}$$
        $$\frac{x\xrightarrow[u]{e}x'}{loc::x\xrightarrow[loc\ll u]{e}loc::x'}$$
        $$\frac{x\xrightarrow[u]{e}\surd}{x+ y\xrightarrow[u]{e}\surd} \quad\frac{x\xrightarrow[u]{e}x'}{x+ y\xrightarrow[u]{e}x'} \quad\frac{y\xrightarrow[u]{e}\surd}{x+ y\xrightarrow[u]{e}\surd} \quad\frac{y\xrightarrow[u]{e}y'}{x+ y\xrightarrow[u]{e}y'}$$
        $$\frac{x\xrightarrow[u]{e}\surd}{x\cdot y\xrightarrow[u]{e} y} \quad\frac{x\xrightarrow[u]{e}x'}{x\cdot y\xrightarrow[u]{e}x'\cdot y}$$
        \caption{Single event transition rules of BATC with static localities}
        \label{SETRForBATC}
    \end{table}
\end{center}

\begin{theorem}[Congruence of BATC with static localities with respect to static location pomset bisimulation equivalence]
Static location pomset bisimulation equivalence $\sim_p^{sl}$ is a congruence with respect to BATC with static localities.
\end{theorem}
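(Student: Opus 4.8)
The plan is to establish the congruence operator-by-operator. Since the signature of $BATC^{sl}$ consists of the constants in $\mathbb{E}$ together with the operators $::$, $\cdot$ and $+$, and constants are trivially related to themselves, it suffices to show that whenever $x_1\sim_p^{sl} y_1$ and $x_2\sim_p^{sl} y_2$ we have $u::x_1\sim_p^{sl} u::y_1$, $x_1\cdot x_2\sim_p^{sl} y_1\cdot y_2$, and $x_1+x_2\sim_p^{sl} y_1+y_2$. For each case I would take static location pomset bisimulation families $R^{(1)}_{\varphi}$ and $R^{(2)}_{\varphi}$ witnessing the hypotheses, build an explicit candidate family closed under the operator in question (together with $\textbf{Id}$), and verify the transfer conditions of the definition by a case analysis on the last transition rule of Table \ref{SETRForBATC} used to derive a move of the left-hand term. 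Note that $BATC^{sl}$ has no parallel operator, so every derivable transition carries a single event; the pomset-isomorphism requirement $X_1\sim X_2$ therefore degenerates to equality of the underlying label, which simplifies the matching.

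For the $+$ case I would use $R_{\varphi}=\{(x_1+x_2,\,y_1+y_2)\}\cup R^{(1)}_{\varphi}\cup R^{(2)}_{\varphi}\cup\textbf{Id}$. Any move $x_1+x_2\xrightarrow[u]{e}z$ (or $\xrightarrow[u]{e}\surd$) is, by the summation rules, inherited from a move of $x_1$ or of $x_2$ at location $u$; the matching move of $y_1+y_2$ at some $v$ is then supplied by $R^{(1)}_{\varphi}$ or $R^{(2)}_{\varphi}$, with the resulting pair landing in $R^{(i)}_{\varphi\cup\{(u,v)\}}\subseteq R_{\varphi\cup\{(u,v)\}}$. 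The $\cdot$ case uses $R_{\varphi}=\{(x_1\cdot x_2,\,y_1\cdot y_2)\}\cup\{(x_1'\cdot x_2,\,y_1'\cdot y_2):(x_1',y_1')\in R^{(1)}_{\varphi}\}\cup R^{(2)}_{\varphi}\cup\textbf{Id}$: a move of $x_1\cdot x_2$ either comes from $x_1\xrightarrow[u]{e}x_1'$ (giving $x_1'\cdot x_2$, matched via $R^{(1)}$), or from a termination $x_1\xrightarrow[u]{e}\surd$ that hands control to $x_2$ (matched by a corresponding termination of $y_1$ and the behaviour of $y_2$, using $R^{(2)}$).

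For the location case I would take $R_{\varphi}=\{(u::x_1,\,u::y_1)\}\cup\{(u::x_1',\,u::y_1'):(x_1',y_1')\in R^{(1)}_{\varphi}\}\cup\textbf{Id}$ and use the rule $\frac{x\xrightarrow[w]{e}x'}{loc::x\xrightarrow[loc\ll w]{e}loc::x'}$ together with its termination analogue: a move of $u::x_1$ at location $u\ll w$ arises from $x_1\xrightarrow[w]{e}x_1'$, matched by $y_1\xrightarrow[w']{e}y_1'$ with $(x_1',y_1')\in R^{(1)}_{\varphi\cup\{(w,w')\}}$, whence $u::y_1\xrightarrow[u\ll w']{e}u::y_1'$. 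The resulting pair must lie in $R_{\varphi\cup\{(u\ll w,\,u\ll w')\}}$, so I must check that this enlarged association is still a consistent location association; this reduces to the observation that prefixing a common location preserves independence, i.e. $w\diamond w'\Leftrightarrow(u\ll w)\diamond(u\ll w')$, together with the analogous preservation of dependence.

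The main obstacle I anticipate is precisely this bookkeeping of the location associations across the indexed families. In the sequential case, when the first operand terminates and control passes to $x_2$ (resp. $y_2$), the definition demands $(x_2,y_2)$ be related at the enlarged association $\varphi\cup\{(u,v)\}$, not merely at the base association guaranteed by $x_2\sim_p^{sl} y_2$; I would handle this by taking the combined family to be closed under extension of $\varphi$, so that the $R^{(2)}$-component is available at every association reachable after the prefix. At the location operator the crux is verifying that adjoining the prefixed pairs keeps $\varphi$ a consistent location association, which is exactly the compatibility $w\diamond w'\Leftrightarrow(u\ll w)\diamond(u\ll w')$ of the static-location labelling with the $\diamond$ relation. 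The transition-rule case analysis itself is routine and parallels the location-free congruence proof for strong pomset bisimulation in CTC, which may be invoked; the genuinely new content is this compatibility condition.
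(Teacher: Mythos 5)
Your overall strategy---reduce congruence to preservation under $::$, $\cdot$ and $+$, and exhibit for each operator an explicit family of relations closed under that operator---is exactly the route the paper takes, except that the paper records only the reduction (``we only need to prove that $\sim_p^{sl}$ is preserved by the operators $::$, $\cdot$ and $+$'') and omits every detail. Your $+$ case is fine, and your $::$ case correctly isolates the one fact that makes it work: a consistent location association remains consistent after prefixing both components by a common location, since $w\diamond w' \Leftrightarrow (u\ll w)\diamond (u\ll w')$.

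The genuine gap is in the sequential case, at precisely the step you flag and then claim to repair. Closing the combined family ``under extension of $\varphi$'' only puts the pair $(x_2,y_2)$ into $R_{\varphi\cup\{(u,v)\}}$; membership is not the problem. The pair must satisfy the transfer condition \emph{at that enlarged index}: every later matched move of $x_2$ against $y_2$ must extend $\varphi\cup\{(u,v)\}$ to a set that is still a consistent location association. The witness $R^{(2)}$ for $x_2\sim_p^{sl} y_2$ guarantees only that the pairs it generates are consistent \emph{among themselves}; it says nothing about their consistency with the foreign pairs inherited from the $x_1$/$y_1$ phase, and this joint consistency can genuinely fail under the paper's definitions. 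Concretely, take $x_1 = a::f$, $y_1 = b::f$, $x_2 = ad::e$, $y_2 = c::e$ with $a,b,c,d$ distinct locations. Then $x_1\sim_p^{sl} y_1$ and $x_2\sim_p^{sl} y_2$, since each matching produces a singleton association, which is vacuously consistent. But the matchings of $x_1\cdot x_2$ against $y_1\cdot y_2$ are forced step by step, and they accumulate the association $\{(a,b),(ad,c)\}$, which violates the cla condition: $a\ll ad$, so $a$ and $ad$ are not independent, while $b\diamond c$ holds. So your closure trick cannot be completed as stated; either the $\cdot$ case needs an argument (or an unstated convention on how the locations of the second factor are distributed) ruling out such conflicts between the two phases, or the claim itself is in jeopardy for sequential composition. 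This is exactly the point the paper's omitted proof glosses over, so the burden of resolving it falls on any honest reconstruction.
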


\begin{proof}
It is easy to see that static location pomset bisimulation is an equivalent relation on BATC with static localities terms, we only need to prove that $\sim_p^{sl}$ is preserved by the operators $::$,
$\cdot$ and $+$, we omit the proof.
\end{proof}

\begin{theorem}[Soundness of BATC with static localities modulo static location pomset bisimulation equivalence]\label{SBATCPBE}
Let $x$ and $y$ be BATC with static localities terms. If $BATC^{sl}\vdash x=y$, then $x\sim_p^{sl} y$.
\end{theorem}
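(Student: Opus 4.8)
The plan is to exploit that $\sim_p^{sl}$ is both an equivalence relation and, by the congruence theorem just established for $BATC^{sl}$, a congruence with respect to all of its operators, namely $::$, $\cdot$ and $+$. Consequently $\sim_p^{sl}$ is closed under the inference rules of equational logic (reflexivity, symmetry, transitivity, and replacement of equals under the operators). Since provable equality $BATC^{sl}\vdash x=y$ is by definition the least congruence containing all closed instances of the axioms in Table \ref{AxiomsForBATC}, it suffices to verify soundness axiom by axiom: for every axiom $s=t$ and every closing substitution, the two closed instances satisfy $s\sim_p^{sl} t$. The theorem then follows by a routine induction on the length of the derivation of $x=y$. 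A useful simplifying observation is that $BATC^{sl}$ has no parallel operator, so every transition generated by the rules of Table \ref{SETRForBATC} is a single-event step; hence each pomset transition $C\xrightarrow[u]{X}C'$ has $X$ a singleton, and the static location pomset condition reduces to matching single located events while maintaining the consistent location association $\varphi$.

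First I would dispatch the purely algebraic axioms $A1$--$A5$. For each I exhibit the candidate static location relation $R_\varphi=\{(s,t)\}\cup\textbf{Id}$ (for instance $\{(x+y,\,y+x)\}\cup\textbf{Id}$ for $A1$) and check the transfer conditions directly against the transition rules of Table \ref{SETRForBATC}. Because none of these axioms involves the location operator $::$, every transition on one side carries exactly the same location label $u$ as the transition simulating it on the other side, so the pair $(u,u)$ is added to $\varphi$ at each step and the cla requirement $u\diamond u'\Leftrightarrow v\diamond v'$ is trivially preserved. These verifications reduce verbatim to the soundness proofs of $A1$--$A5$ for ordinary pomset bisimulation in $BATC$.

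The substantive work is in the location laws $L1$--$L4$, which I would handle using the rule $\frac{x\xrightarrow[u]{e}x'}{loc::x\xrightarrow[loc\ll u]{e}loc::x'}$ together with $\frac{}{loc::e\xrightarrow[loc]{e}\surd}$ and $\frac{}{e\xrightarrow[\epsilon]{e}\surd}$. For $L1$ I use $\epsilon\ll u=u$, so $\epsilon::x$ and $x$ fire the same events at the same locations. For $L3$ ($u::(x+y)=u::x+u::y$) and $L2$ ($u::(x\cdot y)=u::x\cdot u::y$) the witnessing relation must push the prefix $u$ through the choice, respectively sequencing, and I verify that iterating the location rule yields the label $u\ll w$ on both sides whenever the underlying subterm fires with location $w$. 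For $L4$ ($u::(v::x)=uv::x$) I must show that the nested application of the location rule produces label $u\ll(v\ll w)$ on the left while a single application produces $(uv)\ll w$ on the right, and that these are identified under the sequential ordering $\ll$ on $Loc^*$; the relation here is $\{(u::(v::x),\,uv::x)\}\cup\textbf{Id}$, closed under the pairs reached along matching runs.

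The main obstacle I anticipate is the location bookkeeping in $L4$ and its interplay with $L2$/$L3$: one must confirm that the label computed by iterating the location rule over a concatenated prefix agrees, as an element of $Loc^*$ under $\ll$, with the label computed in a single step, and that the pairs $(u,v)$ accumulated along matching runs always form a consistent location association, so that the exhibited relation is genuinely a \emph{static location} pomset bisimulation rather than merely a pomset bisimulation. Everything else amounts to routine checking of the singleton-event transfer conditions, which parallels the established soundness arguments for $BATC$.
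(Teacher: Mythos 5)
Your proposal follows essentially the same route as the paper's proof: since $\sim_p^{sl}$ is an equivalence and (by the congruence theorem) a congruence with respect to $::$, $\cdot$ and $+$, soundness reduces to checking each axiom of Table \ref{AxiomsForBATC} individually, which the paper declares trivial and omits. Your additional axiom-by-axiom verification sketch (including the location bookkeeping for $L1$--$L4$) simply fills in the details the paper leaves out, and is consistent with the transition rules of Table \ref{SETRForBATC}.
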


\begin{proof}
Since static location pomset bisimulation $\sim_p^{sl}$ is both an equivalent and a congruent relation, we only need to check if each axiom in Table \ref{AxiomsForBATC} is sound modulo
static location pomset bisimulation equivalence, the proof is trivial and we omit it.
\end{proof}

\begin{theorem}[Completeness of BATC with static localities modulo static location pomset bisimulation equivalence]\label{CBATCPBE}
Let $p$ and $q$ be closed BATC with static localities terms, if $p\sim_p^{sl} q$ then $p=q$.
\end{theorem}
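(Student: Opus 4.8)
The plan is to follow the standard route for completeness of an equational process theory: reduce arbitrary closed terms to a canonical form, and then show that bisimilar canonical forms are provably equal by induction on their structure. First I would invoke the Elimination Theorem \ref{ETBATC}: every closed $BATC^{sl}$ term provably equals a basic term, so there are basic terms $p_0,q_0$ with $BATC^{sl}\vdash p=p_0$ and $BATC^{sl}\vdash q=q_0$. By the Soundness Theorem \ref{SBATCPBE} we have $p\sim_p^{sl} p_0$ and $q\sim_p^{sl} q_0$, and since $\sim_p^{sl}$ is an equivalence relation it follows that $p_0\sim_p^{sl} q_0$. Hence it suffices to prove that two bisimilar \emph{basic} terms are provably equal.

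Second, I would refine basic terms into a head normal form. Using the location axioms $L1$--$L4$ I would push every location prefix inward until it sits directly in front of an atomic event, so that the only occurrences of $::$ have the form $u::e$ with $e\in\mathbb{E}$; and using $A4,A5$ I would bring the result to a sum of summands, each of the shape $u::e$ or $(u::e)\cdot t$ with $t$ again in normal form. Reading off the transition rules of Table \ref{SETRForBATC}, the summand $u::e$ contributes a transition $\xrightarrow[u]{e}\surd$ and $(u::e)\cdot t$ contributes $\xrightarrow[u]{e}t$, so the normal form exposes exactly the initial located transitions of the term.

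Third, the core argument is an induction on the size of the normal forms. Given $p_0\sim_p^{sl} q_0$ witnessed by some $R_\varphi$ with $(\emptyset,\emptyset)\in R_\varphi$, each summand of $p_0$ determines an initial located (pomset) transition; the bisimulation supplies a matching transition of $q_0$ with an isomorphic label set and a bisimilar residual, which by the induction hypothesis is provably equal to the corresponding residual of $p_0$. Thus every summand of $p_0$ is provably equal to a summand of $q_0$, and by symmetry the converse holds as well; the axioms $A1$--$A3$ (commutativity, associativity, and idempotence of $+$) then yield $BATC^{sl}\vdash p_0=q_0$, and hence $p=q$.

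The main obstacle will be the bookkeeping of locations inside the bisimulation. The definition of static location pomset bisimulation matches a transition $\xrightarrow[u]{X_1}$ with $\xrightarrow[v]{X_2}$ only up to the accumulated consistent location association $\varphi\cup\{(u,v)\}$, whereas provable equality of the located prefixes $u::e$ and $v::e$ requires $u=v$ on the nose. I would therefore have to argue that for closed normal forms the matched locations can be taken to coincide --- that is, that consistency of $\varphi$ together with the pomset-label isomorphism forces summand-by-summand agreement of locations, so $\varphi$ may be assumed to be the identity on the locations actually occurring. Making this step rigorous is the delicate point, since it is exactly where the argument departs from the location-free completeness proof for $BATC$; the remaining matching-and-idempotence reasoning is then routine and parallels the corresponding proof for true concurrency without localities.
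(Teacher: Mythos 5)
Your proposal follows essentially the same route as the paper's proof: reduce to basic terms via the elimination theorem \ref{ETBATC} and soundness \ref{SBATCPBE}, bring them to a $+$-normal form whose summands are located events $u::e$ or products $t_1\cdot t_2$, then induct on the size of the normal forms, matching summands through their initial located transitions and closing with the AC axioms $A1$, $A2$ (the paper works modulo $=_{AC}$ and uses $A3$ implicitly). The only difference is that the paper silently writes the matching transition of $n'$ with the very same location $u$, ignoring the $\varphi$-association subtlety entirely, whereas you explicitly flag that locations are only matched up to the consistent location association and propose to argue they must coincide --- which is, if anything, more careful than the published argument.
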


\begin{proof}
Firstly, by the elimination theorem of BATC with static localities, we know that for each closed BATC with static localities term $p$, there exists a closed basic BATC with static
localities term $p'$, such that $BATC^{sl}\vdash p=p'$, so, we only need to consider closed basic BATC with static localities terms.

The basic terms (see Definition \ref{BTBATC}) modulo associativity and commutativity (AC) of conflict $+$ (defined by axioms $A1$ and $A2$ in Table \ref{AxiomsForBATC}), and this
equivalence is denoted by $=_{AC}$. Then, each equivalence class $s$ modulo AC of $+$ has the following normal form

$$s_1+\cdots+ s_k$$

with each $s_i$ either an atomic event or of the form $t_1\cdot t_2$, and each $s_i$ is called the summand of $s$.

Now, we prove that for normal forms $n$ and $n'$, if $n\sim_p^{sl} n'$ then $n=_{AC}n'$. It is sufficient to induct on the sizes of $n$ and $n'$.

\begin{itemize}
  \item Consider a summand $u::e$ of $n$. Then $n\xrightarrow[u]{e}\surd$, so $n\sim_p^{sl} n'$ implies $n'\xrightarrow[u]{e}\surd$, meaning that $n'$ also contains the summand $u::e$.
  \item Consider a summand $t_1\cdot t_2$ of $n$. Then $n\xrightarrow[u]{t_1}t_2$, so $n\sim_p^{sl} n'$ implies $n'\xrightarrow[u]{t_1}t_2'$ with $t_2\sim_p^{sl} t_2'$, meaning that
  $n'$ contains a summand $t_1\cdot t_2'$. Since $t_2$ and $t_2'$ are normal forms and have sizes smaller than $n$ and $n'$, by the induction hypotheses $t_2\sim_p^{sl} t_2'$ implies
  $t_2=_{AC} t_2'$.
\end{itemize}

So, we get $n=_{AC} n'$.

Finally, let $s$ and $t$ be basic terms, and $s\sim_p^{sl} t$, there are normal forms $n$ and $n'$, such that $s=n$ and $t=n'$. The soundness theorem of BATC with static localities
modulo static location pomset bisimulation equivalence (see Theorem \ref{SBATCPBE}) yields $s\sim_p^{sl} n$ and $t\sim_p^{sl} n'$, so $n\sim_p^{sl} s\sim_p^{sl} t\sim_p^{sl} n'$. Since
if $n\sim_p^{sl} n'$ then $n=_{AC}n'$, $s=n=_{AC}n'=t$, as desired.
\end{proof}

\begin{theorem}[Congruence of BATC with static localities with respect to static location step bisimulation equivalence]
Static location step bisimulation equivalence $\sim_s^{sl}$ is a congruence with respect to BATC with static localities.
\end{theorem}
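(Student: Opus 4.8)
The plan is to follow the standard two-part recipe for showing that a behavioural equivalence is a congruence: first confirm that $\sim_s^{sl}$ is genuinely an equivalence relation on the closed terms of $BATC^{sl}$, and then establish that it is compatible with each operator of the signature, namely the localisation prefix $u::\cdot$, the sequential composition $\cdot$, and the alternative composition $+$. A useful simplifying observation is that $BATC^{sl}$ contains no parallel operator, so every transition generated by the rules of Table \ref{SETRForBATC} carries a single event $e$ at a single location $u$; hence each step transition $\xrightarrow[u]{e}$ is a degenerate pomset transition and the pairwise-concurrency side condition in the definition of static location step bisimulation is vacuous here. Consequently the verification runs in lock-step with the pomset congruence proved above, and the two arguments are essentially interchangeable.

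For the equivalence part I would check reflexivity via the identity relation indexed by the diagonal location association, symmetry by exchanging the two components of each pair and inverting $\varphi$, and transitivity by relational composition together with the obvious composition of the two location associations. The only point requiring attention is that the composite association is again a consistent location association; this follows directly from chaining the defining biconditional $u\diamond u'\Leftrightarrow v\diamond v'$ of the cla definition through the intermediate term.

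For the congruence part I would, for each operator, assume $x_1\sim_s^{sl} y_1$ (and $x_2\sim_s^{sl} y_2$ in the binary cases) witnessed by static location step bisimulations $R_{\varphi}$, and exhibit an explicit candidate relation closed under the operator — for instance $R=\{(u::x_1,\,u::y_1)\}\cup R_{\varphi}\cup\textbf{Id}$ for localisation, and the evident context closures $\{(x_1+z,\,y_1+z)\}\cup R_{\varphi}\cup\textbf{Id}$ and $\{(x_1\cdot z,\,y_1\cdot z)\}\cup R_{\varphi}\cup\textbf{Id}$ for $+$ and $\cdot$. I would then verify the transfer conditions by a case analysis on which rule of Table \ref{SETRForBATC} fired. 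The cases for $+$ and $\cdot$ leave the location label $u$ untouched, so each matching transition reuses the same $u$, the residual pair lands in $R_{\varphi\cup\{(u,u)\}}$, and adding the diagonal pair $(u,u)$ trivially preserves consistency.

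The main obstacle is the localisation rule $\dfrac{x\xrightarrow[u]{e}x'}{loc::x\xrightarrow[loc\ll u]{e}loc::x'}$, which prepends $loc$ to the location on both sides. I expect the one genuinely non-trivial verification to be that passing under $loc::\cdot$ preserves consistency of the location association: if the inner transition of $x$ labelled $u$ is matched by a transition of $y$ labelled $v$, I must check that enlarging $\varphi$ by $(loc\ll u,\,loc\ll v)$ keeps the biconditional $u'\diamond u''\Leftrightarrow v'\diamond v''$ intact, i.e.\ that prepending a common prefix $loc$ respects the independence relation $\diamond$ on $Loc^*$. This reduces to the order-theoretic fact that $loc\ll u$ and $loc\ll u'$ are independent exactly when $u$ and $u'$ are; once that is checked, the bisimulation clauses close immediately, and the union of the constructed relation with $\textbf{Id}$ is seen to be a static location step bisimulation, completing the proof.
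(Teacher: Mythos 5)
Your proposal follows essentially the same route as the paper, whose proof simply observes that $\sim_s^{sl}$ is an equivalence relation and that it suffices to check preservation by the operators $::$, $\cdot$ and $+$, omitting the details as trivial; your sketch fills in exactly those details, including the one substantive point that prepending a common location prefix $loc$ respects the independence relation $\diamond$, so that the enlarged association remains a consistent location association. One small repair to your sketch: the candidate relations must be taken as full context closures, e.g.\ $\{(loc::p,\,loc::q)\,:\,(p,q)\in R_{\varphi}\}\cup R_{\varphi}\cup\textbf{Id}$ rather than the single pair $\{(loc::x_1,\,loc::y_1)\}\cup R_{\varphi}\cup\textbf{Id}$, since after a transition the residual pair $(loc::x_1',\,loc::y_1')$ must again lie in the relation being exhibited (and likewise for $\cdot$ and $+$).
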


\begin{proof}
It is easy to see that static location step bisimulation is an equivalent relation on BATC with static localities terms, we only need to prove that $\sim_s^{sl}$ is preserved by the
operators $::$, $\cdot$ and $+$, the proof is trivial and we omit it.
\end{proof}

\begin{theorem}[Soundness of BATC with static localities modulo static location step bisimulation equivalence]\label{SBATCSBE}
Let $x$ and $y$ be BATC with static localities terms. If $BATC^{sl}\vdash x=y$, then $x\sim_s^{sl} y$.
\end{theorem}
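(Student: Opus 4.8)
The plan is to mirror the soundness proof modulo static location pomset bisimulation (Theorem~\ref{SBATCPBE}). The relation $BATC^{sl}\vdash x=y$ is by definition the smallest congruence on $BATC^{sl}$-terms containing all instances of the axioms $A1$--$A5$ and $L1$--$L4$ of Table~\ref{AxiomsForBATC}. By the congruence theorem for $\sim_s^{sl}$ proved just above, $\sim_s^{sl}$ is an equivalence that is preserved by $::$, $\cdot$ and $+$; hence it is itself a congruence closed under exactly the operations used to generate $\vdash$. Consequently it suffices to check that each individual axiom is valid modulo $\sim_s^{sl}$, and soundness for an arbitrary derivation then follows by a routine induction on the length of the derivation, invoking reflexivity, symmetry, transitivity and the congruence property at each step.

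A useful simplification is that $BATC^{sl}$ has no parallel operator, so the transition rules of Table~\ref{SETRForBATC} only ever generate single-event transitions and no two events of a $BATC^{sl}$-term are concurrent. Therefore every static location step $C\xrightarrow[u]{X}C'$ has $|X|=1$, and the step transfer condition collapses to the single-event one. This means the per-axiom verifications are word-for-word the same as in the pomset case: for each axiom I would take the witnessing relation $R=\{(\textrm{lhs},\textrm{rhs})\}\cup\textbf{Id}$, symmetrize it, and verify the transfer conditions against Table~\ref{SETRForBATC}. For the algebraic axioms $A1$--$A5$ the matching transitions carry the same event and the same location label $u$ on both sides, so commutativity, associativity and idempotence of $+$ and distributivity and associativity of $\cdot$ go through unchanged from the located-to-unlocated reading.

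The only location-specific content lies in $L1$--$L4$, where I would track the location labels through the $\textbf{Loc}$ rule, which turns a premise $x\xrightarrow[u]{e}x'$ into a conclusion $loc::x\xrightarrow[loc\ll u]{e}loc::x'$. For $L1$ one uses $\epsilon\ll u=u$; for $L2$ and $L3$ one checks that distributing $u::$ across $\cdot$ and $+$ produces, transition by transition, the same event with the same prefixed location. I expect $L4$ ($u::(v::x)\sim_s^{sl}uv::x$) to be the main obstacle: here the left-hand side applies the $\textbf{Loc}$ rule twice, producing a location prefix $u\ll(v\ll w)$ on an underlying transition $x\xrightarrow[w]{e}x'$, whereas the right-hand side applies it once, producing $(uv)\ll w$. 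Soundness of $L4$ thus reduces to the associativity identity $u\ll(v\ll w)=(uv)\ll w$ for the location-prefixing operation on $Loc^*$. Once this identity is in hand the located labels on corresponding transitions are identical, every pair of residuals lands back in the relation, and $L4$---together with the whole theorem---follows exactly as in Theorem~\ref{SBATCPBE}.
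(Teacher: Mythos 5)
Your proposal is correct and follows essentially the same route as the paper: since $\sim_s^{sl}$ is an equivalence and a congruence with respect to $::$, $\cdot$ and $+$, soundness reduces to checking each axiom of Table \ref{AxiomsForBATC} individually, which is exactly how the paper argues (it then declares the per-axiom checks trivial and omits them). Your additional details --- the collapse of steps to singletons in the absence of $\parallel$, the tracking of location labels through the $\textbf{Loc}$ rule, and the associativity identity $u\ll(v\ll w)=(uv)\ll w$ needed for $L4$ --- are consistent fillings-in of what the paper leaves implicit.
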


\begin{proof}
Since static location step bisimulation $\sim_s^{sl}$ is both an equivalent and a congruent relation, we only need to check if each axiom in Table \ref{AxiomsForBATC} is sound modulo
static location step bisimulation equivalence, the proof is trivial and we omit it.
\end{proof}

\begin{theorem}[Completeness of BATC with static localities modulo static location step bisimulation equivalence]\label{CBATCSBE}
Let $p$ and $q$ be closed BATC with static localities terms, if $p\sim_s^{sl} q$ then $p=q$.
\end{theorem}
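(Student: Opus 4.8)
The plan is to mirror the completeness proof for static location pomset bisimulation (Theorem \ref{CBATCPBE}) almost verbatim, since the two arguments differ only in the bisimulation symbol and in which soundness theorem is invoked. First I would use the elimination theorem (Theorem \ref{ETBATC}) to reduce the statement to closed basic terms: for every closed $BATC^{sl}$ term $p$ there is a closed basic term $p'$ with $BATC^{sl}\vdash p=p'$, so it suffices to establish completeness for closed basic terms (see Definition \ref{BTBATC}).

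Next I would put basic terms into a normal form modulo associativity and commutativity of $+$, as defined by axioms $A1$ and $A2$ in Table \ref{AxiomsForBATC}; write this equivalence as $=_{AC}$. Each $=_{AC}$-class then has a normal form $s_1+\cdots+s_k$ in which every summand $s_i$ is either an atomic event under a location prefix, of the form $u::e$, or a sequential composition of the form $t_1\cdot t_2$. The heart of the proof is the claim that, for normal forms $n$ and $n'$, if $n\sim_s^{sl} n'$ then $n=_{AC} n'$. I would prove this by induction on the sizes of $n$ and $n'$, matching summands transition by transition: a summand $u::e$ yields $n\xrightarrow[u]{e}\surd$, which the step bisimulation must answer by $n'\xrightarrow[u]{e}\surd$, forcing $n'$ to contain the same summand $u::e$; a summand $t_1\cdot t_2$ yields $n\xrightarrow[u]{t_1}t_2$, answered by $n'\xrightarrow[u]{t_1}t_2'$ with $t_2\sim_s^{sl} t_2'$, whence the induction hypothesis gives $t_2=_{AC}t_2'$ and hence that $n'$ carries the matching summand $t_1\cdot t_2'$. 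Symmetry of the bisimulation then gives $n=_{AC}n'$.

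Finally I would assemble the general case: given basic terms $s,t$ with $s\sim_s^{sl} t$, choose normal forms $n,n'$ with $s=n$ and $t=n'$. The soundness theorem for static location step bisimulation (Theorem \ref{SBATCSBE}) gives $s\sim_s^{sl} n$ and $t\sim_s^{sl} n'$, so $n\sim_s^{sl} s\sim_s^{sl} t\sim_s^{sl} n'$; the claim above then yields $n=_{AC}n'$, and therefore $s=n=_{AC}n'=t$, as desired.

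The point worth flagging is that there is no genuine obstacle here, and I would make explicit why the pomset proof transfers without change: the signature of $BATC^{sl}$ contains no parallel operator, so every transition in Table \ref{SETRForBATC} carries a single event. Consequently the step transitions used to distinguish normal forms coincide with the single-event transitions, and the distinction between step and pomset bisimulation is immaterial at the level of basic terms. This is precisely why replacing $\sim_p^{sl}$ by $\sim_s^{sl}$ throughout, together with the corresponding soundness theorem \ref{SBATCSBE}, suffices.
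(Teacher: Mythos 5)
Your proposal follows essentially the same route as the paper's own proof: elimination to closed basic terms via Theorem \ref{ETBATC}, normal forms modulo $=_{AC}$, an induction on the sizes of normal forms matching summands $u::e$ and $t_1\cdot t_2$ transition by transition, and finally the soundness theorem \ref{SBATCSBE} to conclude $s=n=_{AC}n'=t$. Your closing observation that BATC with static localities has no parallel operator (so step and pomset transitions coincide on basic terms) is a correct and worthwhile clarification of why the pomset argument transfers unchanged, but it does not alter the structure of the argument.
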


\begin{proof}
Firstly, by the elimination theorem of BATC with static localities, we know that for each closed BATC with static localities term $p$, there exists a closed basic BATC with static
localities term $p'$, such that $BATC^{sl}\vdash p=p'$, so, we only need to consider closed basic BATC with static localities terms.

The basic terms (see Definition \ref{BTBATC}) modulo associativity and commutativity (AC) of conflict $+$ (defined by axioms $A1$ and $A2$ in Table \ref{AxiomsForBATC}), and this
equivalence is denoted by $=_{AC}$. Then, each equivalence class $s$ modulo AC of $+$ has the following normal form

$$s_1+\cdots+ s_k$$

with each $s_i$ either an atomic event or of the form $t_1\cdot t_2$, and each $s_i$ is called the summand of $s$.

Now, we prove that for normal forms $n$ and $n'$, if $n\sim_s^{sl} n'$ then $n=_{AC}n'$. It is sufficient to induct on the sizes of $n$ and $n'$.

\begin{itemize}
  \item Consider a summand $u::e$ of $n$. Then $n\xrightarrow[u]{e}\surd$, so $n\sim_s^{sl} n'$ implies $n'\xrightarrow[u]{e}\surd$, meaning that $n'$ also contains the summand $u::e$.
  \item Consider a summand $t_1\cdot t_2$ of $n$. Then $n\xrightarrow[u]{t_1}t_2$($\forall e_1, e_2\in t_1$ are pairwise concurrent), so $n\sim_s^{sl} n'$ implies
  $n'\xrightarrow[u]{t_1}t_2'$($\forall e_1, e_2\in t_1$ are pairwise concurrent) with $t_2\sim_s^{sl} t_2'$, meaning that $n'$ contains a summand $t_1\cdot t_2'$. Since $t_2$ and
  $t_2'$ are normal forms and have sizes smaller than $n$ and $n'$, by the induction hypotheses if $t_2\sim_s^{sl} t_2'$ then $t_2=_{AC} t_2'$.
\end{itemize}

So, we get $n=_{AC} n'$.

Finally, let $s$ and $t$ be basic terms, and $s\sim_s^{sl} t$, there are normal forms $n$ and $n'$, such that $s=n$ and $t=n'$. The soundness theorem of BATC with static localities
modulo static location step bisimulation equivalence (see Theorem \ref{SBATCSBE}) yields $s\sim_s^{sl} n$ and $t\sim_s^{sl} n'$, so $n\sim_s^{sl} s\sim_s^{sl} t\sim_s^{sl} n'$. Since
if $n\sim_s^{sl} n'$ then $n=_{AC}n'$, $s=n=_{AC}n'=t$, as desired.
\end{proof}

\begin{theorem}[Congruence of BATC with static localities with respect to static location hp-bisimulation equivalence]
Static location hp-bisimulation equivalence $\sim_{hp}^{sl}$ is a congruence with respect to BATC with static localities.
\end{theorem}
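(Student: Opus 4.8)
The plan is to reuse the two-step template established for the static location pomset and step congruence theorems, adapting it to the posetal setting. First I would confirm that $\sim_{hp}^{sl}$ is an equivalence relation on the terms of BATC with static localities. Reflexivity is witnessed by the identity posetal relation together with the diagonal location association; symmetry follows by inverting both the configuration isomorphism $f$ and the association $\varphi$, using the fact that the inverse of a consistent location association is again consistent; and transitivity follows by composing witnesses, pairing $f_2\circ f_1$ with the relational composition of $\varphi_1$ and $\varphi_2$ and checking that consistency, i.e. preservation of the independence relation $\diamond$, is closed under composition.

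Second, for congruence proper, it suffices to show that $\sim_{hp}^{sl}$ is preserved by each of the three operators $::$, $\cdot$ and $+$. In each case I would assume $P_1\sim_{hp}^{sl} P_2$ witnessed by a static location hp-bisimulation $R_\varphi$ containing $(\emptyset,\emptyset,\emptyset)$, and build an explicit posetal relation for the two combined terms. For the location operator I would take $R'=\{(loc::C_1,f,loc::C_2):(C_1,f,C_2)\in R_\varphi\}$; by the rule $\textbf{Loc}$ a move labelled $u$ of $P_i$ becomes a move labelled $loc\ll u$ of $loc::P_i$, the event isomorphism $f$ is untouched, and the induced association $\{(loc\ll u,loc\ll v):(u,v)\in\varphi\}$ stays consistent because prepending a common prefix preserves $\diamond$. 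For the alternative composition I would let $R'$ be the union of $\{(\emptyset,\emptyset,\emptyset)\}$, the copy of $R_\varphi$ governing moves inherited from $P_1$ versus $P_2$, and the identity relation governing moves inherited from the shared summand $Q$; the first transition selects one summand, after which the continuation stays inside the corresponding component and the argument is closed by that component's hp-bisimulation clause.

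For the sequential composition I would define $R'$ by pairing configurations of the form $(C_1,f,C_2)$ while the $P_i$ are still executing, using $R_\varphi$, and then switching to an identity phase on the shared continuation $Q$ once both sides have reached successful termination of their first operand. At each matched step the isomorphism is extended by $f[e_1\mapsto e_2]$ and the location association is enlarged by the matched location pair $(u,v)$, exactly as required by the definition of static location hp-bisimulation, so that the image triple lands in $R_{\varphi\cup\{(u,v)\}}$.

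The step I expect to be the main obstacle is this sequential case, where both the order-isomorphism $f$ and the association $\varphi$ must be transported coherently across the boundary between the first operand and the shared continuation $Q$. I must verify that after both first operands terminate, the accumulated $\varphi$ remains a consistent location association relative to the locations freshly contributed by $Q$, and that the extension of $f$ by the $Q$-events is still an isomorphism of posets once the causal dependencies on the already-executed first operand are taken into account. Everything else reduces to a routine case analysis on the single event transition rules, so once the sequential boundary is handled cleanly the congruence follows.
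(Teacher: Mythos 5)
Your proposal is correct and takes essentially the same approach as the paper: the paper's own proof also reduces the theorem to (i) observing that $\sim_{hp}^{sl}$ is an equivalence relation on BATC with static localities terms and (ii) checking that it is preserved by the operators $::$, $\cdot$ and $+$, which the paper declares trivial and omits. You simply supply the explicit witness posetal relations and the location-association bookkeeping that the paper leaves out.
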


\begin{proof}
It is easy to see that static location history-preserving bisimulation is an equivalent relation on BATC with static localities terms, we only need to prove that $\sim_{hp}^{sl}$ is
preserved by the operators $::$, $\cdot$ and $+$, the proof is trivial and we omit it.
\end{proof}

\begin{theorem}[Soundness of BATC with static localities modulo static location hp-bisimulation equivalence]\label{SBATCHPBE}
Let $x$ and $y$ be BATC with static localities terms. If $BATC\vdash x=y$, then $x\sim_{hp}^{sl} y$.
\end{theorem}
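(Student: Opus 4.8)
The plan is to follow the same route as the soundness proofs for $\sim_p^{sl}$ and $\sim_s^{sl}$ (Theorems \ref{SBATCPBE} and \ref{SBATCSBE}), exploiting the congruence result just established. First I would record that $\sim_{hp}^{sl}$ is an equivalence relation on $BATC^{sl}$ terms and, by the preceding theorem, a congruence with respect to the operators $::$, $\cdot$ and $+$. Given these two facts, the standard argument that provable equality is generated by reflexivity, symmetry, transitivity, the congruence rules, and the axioms themselves reduces the claim to a single local check: it suffices to show that for every axiom $s=t$ in Table \ref{AxiomsForBATC}, the two sides satisfy $s\sim_{hp}^{sl} t$.

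Next, for each axiom I would exhibit an explicit static location hp-bisimulation, i.e.\ a posetal relation $R_\varphi\subseteq\mathcal{C}(\mathcal{E}_1)\overline{\times}\mathcal{C}(\mathcal{E}_2)$ containing $(\emptyset,\emptyset,\emptyset)$, where $\mathcal{E}_1$ and $\mathcal{E}_2$ are the event structures of the left- and right-hand terms. For the structural axioms $A1$--$A5$ the witnessing relation pairs each reachable configuration of the left-hand term with the syntactically corresponding configuration of the right-hand term, taking $f$ to be the evident label-preserving order-isomorphism; since these axioms do not touch the location prefix, every matched pair of transitions carries the same location $u$, so the accumulated association $\varphi$ is carried along unchanged and remains consistent. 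For the location axioms $L1$--$L4$ I would use the transition rules of Table \ref{SETRForBATC}, in particular the rule deriving $loc::x\xrightarrow[loc\ll u]{e}loc::x'$, to compute the location words observed on both sides and verify that they coincide; the laws $L1$ ($\epsilon::x=x$), $L2$, $L3$ and $L4$ ($u::(v::x)=uv::x$) are precisely the identities making these observed locations equal, so again the identity-style $f$ works and $(C_1',C_2')$ stays in $R_{\varphi\cup\{(u,v)\}}$.

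The step I expect to be the main obstacle is the bookkeeping peculiar to hp-bisimulation, namely maintaining the order-isomorphism $f$ between configurations across transitions and ensuring the generated location pairs form a consistent location association. Here I would point out a simplifying feature of $BATC^{sl}$: it has no parallel composition, so every configuration reachable from a basic term is a totally ordered chain (produced by $\cdot$) selected by the choices of $+$. Consequently the candidate $f$ is forced and its order-preservation is immediate, and the independence relation $\diamond$ never imposes a nontrivial constraint, so consistency of $\varphi$ holds vacuously. The only genuine content is therefore that matched transitions carry identical location words, which I verify by a direct structural induction against the rules of Table \ref{SETRForBATC}. This check is routine and parallels the pomset and step cases, so I would present the construction for one representative axiom of each kind and omit the remaining symmetric verifications.
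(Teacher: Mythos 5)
Your proposal follows essentially the same route as the paper's proof: since $\sim_{hp}^{sl}$ is an equivalence and (by the preceding congruence theorem) a congruence with respect to $::$, $\cdot$ and $+$, soundness reduces to checking each axiom of Table \ref{AxiomsForBATC} modulo $\sim_{hp}^{sl}$, which the paper then declares trivial and omits. Your additional work---exhibiting the witnessing posetal relations, noting that matched transitions carry identical location words so the accumulated association stays consistent, and that absence of parallel composition forces the isomorphism $f$---simply fills in the details the paper leaves out, and is correct.
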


\begin{proof}
Since static location hp-bisimulation $\sim_{hp}^{sl}$ is both an equivalent and a congruent relation, we only need to check if each axiom in Table \ref{AxiomsForBATC} is sound modulo
static location hp-bisimulation equivalence, the proof is trivial and we omit it.
\end{proof}

\begin{theorem}[Completeness of BATC with static localities modulo static location hp-bisimulation equivalence]\label{CBATCHPBE}
Let $p$ and $q$ be closed BATC with static localities terms, if $p\sim_{hp}^{sl} q$ then $p=q$.
\end{theorem}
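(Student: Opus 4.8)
The plan is to follow exactly the template already established for the pomset and step cases (Theorems \ref{CBATCPBE} and \ref{CBATCSBE}), since closed basic $BATC^{sl}$ terms contain no parallel composition and hence admit only single-event transitions, on which hp-, pomset- and step-bisimulation essentially coincide. First I would invoke the elimination theorem (Theorem \ref{ETBATC}): for every closed $BATC^{sl}$ term $p$ there is a basic term $p'$ with $BATC^{sl}\vdash p=p'$, so together with the soundness theorem (Theorem \ref{SBATCHPBE}) it suffices to establish completeness for closed basic terms.

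Next I would fix the normal form. Working modulo associativity and commutativity of $+$ (axioms $A1,A2$, yielding the relation $=_{AC}$), every basic term is provably equal to a sum $s_1+\cdots+s_k$ whose summands $s_i$ are each either of the form $u::e$ or of the form $t_1\cdot t_2$. The core lemma to establish is: for normal forms $n,n'$, if $n\sim_{hp}^{sl}n'$ then $n=_{AC}n'$. This I would prove by induction on the combined sizes of $n$ and $n'$, inspecting the two kinds of summand. For a summand $u::e$ of $n$ we have $n\xrightarrow[u]{e}\surd$, and the static location hp-bisimulation, started from $(\emptyset,\emptyset,\emptyset)$ with the accompanying cla $\varphi$, forces a matching move of $n'$ with the same event, so that $n'$ carries the corresponding summand. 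For a summand $t_1\cdot t_2$ we have $n\xrightarrow[u]{t_1}t_2$, matched by $n'\xrightarrow[u]{t_1}t_2'$ with the extended triple again in the relation, hence $t_2\sim_{hp}^{sl}t_2'$; since $t_2,t_2'$ are strictly smaller normal forms, the induction hypothesis gives $t_2=_{AC}t_2'$ and therefore $n'$ contains the summand $t_1\cdot t_2'$.

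Finally, for arbitrary basic terms $s,t$ with $s\sim_{hp}^{sl}t$, I would pick normal forms $n,n'$ with $s=n$ and $t=n'$; soundness yields $s\sim_{hp}^{sl}n$ and $t\sim_{hp}^{sl}n'$, whence $n\sim_{hp}^{sl}n'$, so $n=_{AC}n'$ and thus $s=t$, as desired.

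The main obstacle, and the only place where the hp-case genuinely differs from the pomset and step cases, is bookkeeping the order isomorphism $f$ that a posetal relation carries: one must check that the extension $f[e_1\mapsto e_2]$ demanded by the definition of static location hp-bisimulation is exactly the label matching used implicitly above, and that the consistent location association $\varphi\cup\{(u,v)\}$ stays consistent along the derivation. Because the configurations of basic terms are totally ordered (no two events being concurrent), this isomorphism is uniquely determined and imposes no constraint beyond the event-and-location matching already exploited in the pomset argument, so the reasoning reduces to that case; I would make this coincidence explicit rather than re-deriving the inductive step from scratch.
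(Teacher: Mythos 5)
Your proposal is correct and follows essentially the same route as the paper's own proof: elimination to closed basic terms, normal forms modulo $=_{AC}$ of $+$, an induction on the sizes of normal forms matching summands $u::e$ and sequential summands via single-event transitions, and a final appeal to soundness to transfer the result from normal forms back to arbitrary basic terms. Your closing observation that the posetal isomorphism $f[e_1\mapsto e_2]$ and the cla $\varphi\cup\{(u,v)\}$ add no constraints for basic $BATC^{sl}$ terms (whose configurations are totally ordered) is exactly the reason the paper's hp-case can mirror the pomset/step template, with its sequential summands written in the single-event form $u::e\cdot s$.
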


\begin{proof}
Firstly, by the elimination theorem of BATC with static localities, we know that for each closed BATC with static localities term $p$, there exists a closed basic BATC with static
localities term $p'$, such that $BATC^{sl}\vdash p=p'$, so, we only need to consider closed basic BATC with static localities terms.

The basic terms (see Definition \ref{BTBATC}) modulo associativity and commutativity (AC) of conflict $+$ (defined by axioms $A1$ and $A2$ in Table \ref{AxiomsForBATC}), and this
equivalence is denoted by $=_{AC}$. Then, each equivalence class $s$ modulo AC of $+$ has the following normal form

$$s_1+\cdots+ s_k$$

with each $s_i$ either an atomic event or of the form $t_1\cdot t_2$, and each $s_i$ is called the summand of $s$.

Now, we prove that for normal forms $n$ and $n'$, if $n\sim_{hp}^{sl} n'$ then $n=_{AC}n'$. It is sufficient to induct on the sizes of $n$ and $n'$.

\begin{itemize}
  \item Consider a summand $u::e$ of $n$. Then $n\xrightarrow[u]{e}\surd$, so $n\sim_{hp}^{sl} n'$ implies $n'\xrightarrow[u]{e}\surd$, meaning that $n'$ also contains the summand $u::e$.
  \item Consider a summand $u::e\cdot s$ of $n$. Then $n\xrightarrow[u]{e}s$, so $n\sim_{hp}^{sl} n'$ implies $n'\xrightarrow[u]{e}t$ with $s\sim_{hp}^{sl} t$, meaning that $n'$
  contains a summand $u::e\cdot t$. Since $s$ and $t$ are normal forms and have sizes smaller than $n$ and $n'$, by the induction hypotheses $s\sim_{hp}^{sl} t$ implies $s=_{AC} t$.
\end{itemize}

So, we get $n=_{AC} n'$.

Finally, let $s$ and $t$ be basic terms, and $s\sim_{hp}^{sl} t$, there are normal forms $n$ and $n'$, such that $s=n$ and $t=n'$. The soundness theorem of BATC with static localities
modulo static location hp-bisimulation equivalence (see Theorem \ref{SBATCHPBE}) yields $s\sim_{hp}^{sl} n$ and $t\sim_{hp}^{sl} n'$, so $n\sim_{hp}^{sl} s\sim_{hp}^{sl} t\sim_{hp}^{sl} n'$.
Since if $n\sim_{hp}^{sl} n'$ then $n=_{AC}n'$, $s=n=_{AC}n'=t$, as desired.
\end{proof}

\begin{theorem}[Congruence of BATC with static localities with respect to static location hhp-bisimulation equivalence]
Static location hhp-bisimulation equivalence $\sim_{hhp}^{sl}$ is a congruence with respect to BATC with static localities.
\end{theorem}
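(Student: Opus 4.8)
The plan is to follow the same template used for the three preceding congruence theorems (for $\sim_p^{sl}$, $\sim_s^{sl}$, and $\sim_{hp}^{sl}$), adding only the bookkeeping needed to handle the extra downward-closure requirement that distinguishes hhp-bisimulation from hp-bisimulation. First I would check that $\sim_{hhp}^{sl}$ is an equivalence relation on BATC with static localities terms: reflexivity holds because the identity posetal relation is trivially downward closed; symmetry holds because the inverse of a downward closed static location hp-bisimulation is again downward closed; and transitivity holds because the relational composition of two downward closed static location hp-bisimulations is again a downward closed static location hp-bisimulation (the pointwise $\subseteq$ ordering is respected under composition).

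Next, for the congruence property itself, I would show that $\sim_{hhp}^{sl}$ is preserved by the three operators $::$, $\cdot$, and $+$. For each operator, assuming $P_1\sim_{hhp}^{sl} P_2$ is witnessed by a downward closed static location hp-bisimulation $R_{\varphi}\subseteq\mathcal{C}(\mathcal{E}_1)\overline{\times}\mathcal{C}(\mathcal{E}_2)$, I would build the natural candidate relation on the compound terms and verify two things: that it satisfies the transfer conditions of a static location hp-bisimulation, which is exactly the verification already carried out in the congruence proof for $\sim_{hp}^{sl}$, and that it is downward closed in the sense of the posetal product definition.

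The main obstacle will be the downward-closure verification, and the delicate case is sequential composition $\cdot$. When comparing $P_1\cdot Q$ with $P_2\cdot Q$, a configuration can lie partly in the $P$-part and partly in the $Q$-part, so passing to a pointwise sub-triple $(C_1,f,C_2)\subseteq(C_1',f',C_2')$ may cut across the boundary between the two components. The argument must show that the candidate relation, obtained by gluing the witness $R_{\varphi}$ for $P_1\sim_{hhp}^{sl}P_2$ to the identity relation on the configurations of $Q$, is itself closed under taking such pointwise sub-triples; this uses that $R_{\varphi}$ is downward closed by hypothesis together with the fact that identity relations are downward closed. For the prefix $::$ and the alternative composition $+$, downward closure transfers directly from $R_{\varphi}$, since these operators do not split a configuration across independent components.

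Since each of these verifications is routine once the downward-closure requirement is isolated, and is entirely parallel to the $\sim_{hp}^{sl}$ congruence argument with the single additional check described above, I would present the constructions of the witnessing relations and then, as in the companion theorems, omit the detailed calculations.
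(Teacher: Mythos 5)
Your proposal follows essentially the same route as the paper: the paper's proof likewise reduces the theorem to (i) observing that $\sim_{hhp}^{sl}$ is an equivalence relation on BATC with static localities terms and (ii) checking that it is preserved by the operators $::$, $\cdot$ and $+$, and then omits the verification as trivial. Your additional attention to the downward-closure condition (especially for sequential composition) is exactly the bookkeeping the paper leaves implicit, so the proposal is correct and matches the paper's approach.
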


\begin{proof}
It is easy to see that static location hhp-bisimulation is an equivalent relation on BATC with static localities terms, we only need to prove that $\sim_{hhp}^{sl}$ is preserved by the
operators $::$, $\cdot$ and $+$, the proof is trivial and we omit it.
\end{proof}

\begin{theorem}[Soundness of BATC with static localities modulo static location hhp-bisimulation equivalence]\label{SBATCHHPBE}
Let $x$ and $y$ be BATC with static localities terms. If $BATC\vdash x=y$, then $x\sim_{hhp}^{sl} y$.
\end{theorem}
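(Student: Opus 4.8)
The plan is to follow exactly the template of the three preceding soundness theorems (Theorems \ref{SBATCPBE}, \ref{SBATCSBE}, \ref{SBATCHPBE}). The result immediately above establishes that static location hhp-bisimulation equivalence $\sim_{hhp}^{sl}$ is both an equivalence relation and a congruence with respect to the operators $::$, $\cdot$ and $+$ of BATC with static localities. Consequently the soundness of the entire axiom system reduces to a single, local check: it suffices to verify that for each individual axiom in Table \ref{AxiomsForBATC}, its left-hand and right-hand sides are related by $\sim_{hhp}^{sl}$. Once every axiom is shown sound, an easy induction on the length of a derivation $BATC\vdash x=y$ --- using reflexivity, symmetry and transitivity (equivalence) together with substitutivity (congruence) --- yields $x\sim_{hhp}^{sl} y$ for arbitrary provably equal terms.

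First I would dispatch the conflict/causality axioms $A1$--$A5$ and the locality axioms $L1$--$L4$ one at a time. For each axiom $s=t$ I would exhibit a witnessing posetal relation $R_{\varphi}$ containing $(\emptyset,\emptyset,\emptyset)$ and built from the canonical, label- and causality-preserving bijection between the events of $s$ and those of $t$. The transitions generated by the term-deduction system (Table \ref{SETRForBATC}) on the two sides are identical up to this bijection, and the location labels decorating matching transitions coincide, so the clause $(C_1',f[e_1\mapsto e_2],C_2')\in R_{\varphi\cup\{(u,v)\}}$ of the definition is met with identical locations on both sides, keeping the location association consistent. For the purely structural axioms $A5$ and $L4$ the two configuration lattices are literally isomorphic; for $L4$ the decorations agree because $u\ll(v\ll w)=uv\ll w$ as location strings, and for $L1$ the empty location $\epsilon$ contributes nothing to the decorations, in accordance with the transition rule $e\xrightarrow[\epsilon]{e}\surd$.

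The one genuine extra obligation, compared with the hp-case of Theorem \ref{SBATCHPBE}, is \emph{downward closure}: an hhp-bisimulation must be a downward closed hp-bisimulation. Hence for each witnessing relation I would additionally check that whenever $(C_1,f,C_2)\subseteq(C_1',f',C_2')$ pointwise and the larger triple lies in $R_{\varphi}$, the smaller triple lies in $R_{\varphi}$ as well. This is where I expect the only real work to sit, though it remains routine: the witnessing isomorphisms for $A1$--$A5$ and $L1$--$L4$ are all \emph{restrictions} of one fixed global event isomorphism, so any sub-configuration of a matched configuration is matched by the corresponding restricted isomorphism, and the associated location association is a subset of a consistent one, hence still a consistent location association. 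Therefore each witnessing relation is automatically downward closed, every axiom is sound modulo $\sim_{hhp}^{sl}$, and the theorem follows. As in the companion proofs the verifications are mechanical and may be abbreviated to ``the proof is trivial and we omit it,'' but the downward-closure observation is the substantive point that distinguishes this statement from its hp-predecessor.
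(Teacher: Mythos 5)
Your proposal is correct and takes essentially the same route as the paper: both reduce soundness to the single observation that $\sim_{hhp}^{sl}$ is an equivalence and a congruence with respect to $::$, $\cdot$ and $+$, so only an axiom-by-axiom check of Table \ref{AxiomsForBATC} remains, which the paper then declares trivial and omits. Your additional sketch of the witnessing posetal relations and the downward-closure verification simply fills in the detail the paper leaves out, and is consistent with it.
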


\begin{proof}
Since static location hhp-bisimulation $\sim_{hhp}^{sl}$ is both an equivalent and a congruent relation, we only need to check if each axiom in Table \ref{AxiomsForBATC} is sound
modulo static location hhp-bisimulation equivalence, the proof is trivial and we omit it.
\end{proof}

\begin{theorem}[Completeness of BATC with static localities modulo static location hhp-bisimulation equivalence]\label{CBATCHHPBE}
Let $p$ and $q$ be closed BATC with static localities terms, if $p\sim_{hhp}^{sl} q$ then $p=q$.
\end{theorem}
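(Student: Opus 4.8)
The plan is to follow exactly the template used in the completeness proofs for the pomset, step, and hp cases (Theorems \ref{CBATCPBE}, \ref{CBATCSBE}, \ref{CBATCHPBE}), since the only change is the equivalence under consideration. First I would invoke the Elimination Theorem of BATC with static localities (Theorem \ref{ETBATC}): for every closed BATC$^{sl}$ term $p$ there is a closed \emph{basic} term $p'$ (in the sense of Definition \ref{BTBATC}) with $BATC^{sl}\vdash p=p'$. Hence it suffices to prove completeness for closed basic terms. Next I would put basic terms into normal form modulo associativity and commutativity of $+$ (axioms $A1,A2$): each AC-equivalence class $s$ has the form $s_1+\cdots+s_k$ where every summand $s_i$ is either a located atomic event $u::e$ or of the form $u::e\cdot t$. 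I write $=_{AC}$ for this syntactic equivalence.

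The core of the argument is the claim that, for normal forms $n$ and $n'$, if $n\sim_{hhp}^{sl} n'$ then $n=_{AC} n'$, proved by induction on the sum of the sizes of $n$ and $n'$. I would argue summand by summand. For a summand $u::e$ of $n$ we have $n\xrightarrow[u]{e}\surd$ by the transition rules of Table \ref{SETRForBATC}; since $\sim_{hhp}^{sl}$ is in particular a static location hp-bisimulation, this move must be matched from the associated configuration, forcing $n'\xrightarrow[u]{e}\surd$, so $n'$ contains the summand $u::e$. For a summand $u::e\cdot s$ we have $n\xrightarrow[u]{e}s$, and the hp-matching (with $f[e\mapsto e']$, but here $e'=e$ because labels are preserved) yields $n'\xrightarrow[u]{e}t$ with $s\sim_{hhp}^{sl} t$; since $s,t$ are smaller normal forms, the induction hypothesis gives $s=_{AC}t$, so $n'$ contains the summand $u::e\cdot t =_{AC} u::e\cdot s$. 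By symmetry $n$ and $n'$ have the same summands up to $=_{AC}$, whence $n=_{AC}n'$.

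I would then close the proof exactly as in the hp case: given basic terms $s,t$ with $s\sim_{hhp}^{sl} t$, take normal forms $n,n'$ with $s=n$ and $t=n'$; by the Soundness Theorem (Theorem \ref{SBATCHHPBE}) we get $s\sim_{hhp}^{sl} n$ and $t\sim_{hhp}^{sl} n'$, so $n\sim_{hhp}^{sl} s\sim_{hhp}^{sl} t\sim_{hhp}^{sl} n'$; the core claim then gives $n=_{AC}n'$, and therefore $s=n=_{AC}n'=t$, as desired.

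The one point requiring care — and the step I expect to be the main obstacle — is the downward-closure condition built into hhp-bisimulation (Definition \ref{HHPB}), which is absent in the plain hp case. I would observe that BATC with static localities contains only $::$, $\cdot$, and $+$, with no parallel operator, so every reachable configuration of a basic term is a \emph{linearly ordered} (totally sequential) set of events; consequently the posetal products involved consist of chains, every sub-triple of a matched pair is again a prefix arising along the same computation, and downward closure is automatic. Thus the downward-closure requirement imposes no extra constraint beyond the hp-matching used above, and the hhp-argument collapses to the hp-argument. Making this reduction precise is the only substantive work beyond transcribing the earlier proof.
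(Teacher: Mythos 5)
Your proposal matches the paper's proof essentially verbatim: reduce to closed basic terms via the Elimination Theorem, pass to AC-normal forms, prove by induction on sizes that $n\sim_{hhp}^{sl} n'$ forces $n=_{AC}n'$ by matching the summands $u::e$ and $u::e\cdot s$, and close with the soundness theorem to get $s=n=_{AC}n'=t$. Your final worry about downward closure is harmless but not actually an obstacle in the completeness direction: an hhp-bisimulation is by definition a (downward closed) hp-bisimulation, so the hypothesis $n\sim_{hhp}^{sl} n'$ already yields all the transition-matching the induction needs, which is exactly how the paper's proof (implicitly) treats it.
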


\begin{proof}
Firstly, by the elimination theorem of BATC with static localities, we know that for each closed BATC with static localities term $p$, there exists a closed basic BATC with static
localities term $p'$, such that $BATC^{sl}\vdash p=p'$, so, we only need to consider closed basic BATC with static localities terms.

The basic terms (see Definition \ref{BTBATC}) modulo associativity and commutativity (AC) of conflict $+$ (defined by axioms $A1$ and $A2$ in Table \ref{AxiomsForBATC}), and this
equivalence is denoted by $=_{AC}$. Then, each equivalence class $s$ modulo AC of $+$ has the following normal form

$$s_1+\cdots+ s_k$$

with each $s_i$ either an atomic event or of the form $t_1\cdot t_2$, and each $s_i$ is called the summand of $s$.

Now, we prove that for normal forms $n$ and $n'$, if $n\sim_{hhp}^{sl} n'$ then $n=_{AC}n'$. It is sufficient to induct on the sizes of $n$ and $n'$.

\begin{itemize}
  \item Consider a summand $u::e$ of $n$. Then $n\xrightarrow[u]{e}\surd$, so $n\sim_{hhp}^{sl} n'$ implies $n'\xrightarrow[u]{e}\surd$, meaning that $n'$ also contains the summand $u::e$.
  \item Consider a summand $u::e\cdot s$ of $n$. Then $n\xrightarrow[u]{e}s$, so $n\sim_{hhp}^{sl} n'$ implies $n'\xrightarrow[u]{e}t$ with $s\sim_{hhp}^{sl} t$, meaning that $n'$
  contains a summand $u::e\cdot t$. Since $s$ and $t$ are normal forms and have sizes smaller than $n$ and $n'$, by the induction hypotheses $s\sim_{hhp}^{sl} t$ implies $s=_{AC} t$.
\end{itemize}

So, we get $n=_{AC} n'$.

Finally, let $s$ and $t$ be basic terms, and $s\sim_{hhp}^{sl} t$, there are normal forms $n$ and $n'$, such that $s=n$ and $t=n'$. The soundness theorem of BATC with static localities
modulo history-preserving bisimulation equivalence (see Theorem \ref{SBATCHHPBE}) yields $s\sim_{hhp}^{sl} n$ and $t\sim_{hhp}^{sl} n'$, so
$n\sim_{hhp}^{sl} s\sim_{hhp}^{sl} t\sim_{hhp}^{sl} n'$. Since if $n\sim_{hhp}^{sl} n'$ then $n=_{AC}n'$, $s=n=_{AC}n'=t$, as desired.
\end{proof}

\subsubsection{APTC with Static Localities}

We give the transition rules of APTC with static localities as Table \ref{TRForAPTC} shows.

\begin{center}
    \begin{table}
        $$\frac{x\xrightarrow[u]{e_1}\surd\quad y\xrightarrow[v]{e_2}\surd}{x\parallel y\xrightarrow[u\diamond v]{\{e_1,e_2\}}\surd} \quad\frac{x\xrightarrow[u]{e_1}x'\quad y\xrightarrow[v]{e_2}\surd}{x\parallel y\xrightarrow[u\diamond v]{\{e_1,e_2\}}x'}$$
        $$\frac{x\xrightarrow[u]{e_1}\surd\quad y\xrightarrow[v]{e_2}y'}{x\parallel y\xrightarrow[u\diamond v]{\{e_1,e_2\}}y'} \quad\frac{x\xrightarrow[u]{e_1}x'\quad y\xrightarrow[v]{e_2}y'}{x\parallel y\xrightarrow[u\diamond v]{\{e_1,e_2\}}x'\between y'}$$
        $$\frac{x\xrightarrow[u]{e_1}\surd\quad y\xrightarrow[v]{e_2}\surd \quad(e_1\leq e_2)}{x\leftmerge y\xrightarrow[u\diamond v]{\{e_1,e_2\}}\surd} \quad\frac{x\xrightarrow[u]{e_1}x'\quad y\xrightarrow[v]{e_2}\surd \quad(e_1\leq e_2)}{x\leftmerge y\xrightarrow[u\diamond v]{\{e_1,e_2\}}x'}$$
        $$\frac{x\xrightarrow[u]{e_1}\surd\quad y\xrightarrow[v]{e_2}y' \quad(e_1\leq e_2)}{x\leftmerge y\xrightarrow[u\diamond v]{\{e_1,e_2\}}y'} \quad\frac{x\xrightarrow[u]{e_1}x'\quad y\xrightarrow[v]{e_2}y' \quad(e_1\leq e_2)}{x\leftmerge y\xrightarrow[u\diamond v]{\{e_1,e_2\}}x'\between y'}$$
        $$\frac{x\xrightarrow[u]{e_1}\surd\quad y\xrightarrow[v]{e_2}\surd}{x\mid y\xrightarrow[u\diamond v]{\gamma(e_1,e_2)}\surd} \quad\frac{x\xrightarrow[u]{e_1}x'\quad y\xrightarrow[v]{e_2}\surd}{x\mid y\xrightarrow[u\diamond v]{\gamma(e_1,e_2)}x'}$$
        $$\frac{x\xrightarrow[u]{e_1}\surd\quad y\xrightarrow[v]{e_2}y'}{x\mid y\xrightarrow[u\diamond v]{\gamma(e_1,e_2)}y'} \quad\frac{x\xrightarrow[u]{e_1}x'\quad y\xrightarrow[v]{e_2}y'}{x\mid y\xrightarrow[u\diamond v]{\gamma(e_1,e_2)}x'\between y'}$$
        $$\frac{x\xrightarrow[u]{e_1}\surd\quad (\sharp(e_1,e_2))}{\Theta(x)\xrightarrow[u]{e_1}\surd} \quad\frac{x\xrightarrow[u]{e_2}\surd\quad (\sharp(e_1,e_2))}{\Theta(x)\xrightarrow[u]{e_2}\surd}$$
        $$\frac{x\xrightarrow[u]{e_1}x'\quad (\sharp(e_1,e_2))}{\Theta(x)\xrightarrow[u]{e_1}\Theta(x')} \quad\frac{x\xrightarrow[u]{e_2}x'\quad (\sharp(e_1,e_2))}{\Theta(x)\xrightarrow[u]{e_2}\Theta(x')}$$
        $$\frac{x\xrightarrow[u]{e_1}\surd \quad y\nrightarrow^{e_2}\quad (\sharp(e_1,e_2))}{x\triangleleft y\xrightarrow[u]{\tau}\surd}
        \quad\frac{x\xrightarrow[u]{e_1}x' \quad y\nrightarrow^{e_2}\quad (\sharp(e_1,e_2))}{x\triangleleft y\xrightarrow[u]{\tau}x'}$$
        $$\frac{x\xrightarrow[u]{e_1}\surd \quad y\nrightarrow^{e_3}\quad (\sharp(e_1,e_2),e_2\leq e_3)}{x\triangleleft y\xrightarrow[u]{e_1}\surd}
        \quad\frac{x\xrightarrow[u]{e_1}x' \quad y\nrightarrow^{e_3}\quad (\sharp(e_1,e_2),e_2\leq e_3)}{x\triangleleft y\xrightarrow[u]{e_1}x'}$$
        $$\frac{x\xrightarrow[u]{e_3}\surd \quad y\nrightarrow^{e_2}\quad (\sharp(e_1,e_2),e_1\leq e_3)}{x\triangleleft y\xrightarrow[u]{\tau}\surd}
        \quad\frac{x\xrightarrow[u]{e_3}x' \quad y\nrightarrow^{e_2}\quad (\sharp(e_1,e_2),e_1\leq e_3)}{x\triangleleft y\xrightarrow[u]{\tau}x'}$$
        \caption{Transition rules of APTC with static localities}
        \label{TRForAPTC}
    \end{table}
\end{center}

In the following, we show that the elimination theorem does not hold for truly concurrent processes combined the operators $\cdot$, $+$ and $\leftmerge$. Firstly, we define the basic terms for APTC with static localities.

\begin{definition}[Basic terms of APTC with static localities]\label{BTAPTC}
The set of basic terms of APTC with static localities, $\mathcal{B}(APTC^{sl})$, is inductively defined as follows:
\begin{enumerate}
  \item $\mathbb{E}\subset\mathcal{B}(APTC^{sl})$;
  \item if $u\in Loc^*, t\in\mathcal{B}(APTC^{sl})$ then $u::t\in\mathcal{B}(APTC^{sl})$;
  \item if $e\in \mathbb{E}, t\in\mathcal{B}(APTC^{sl})$ then $e\cdot t\in\mathcal{B}(APTC^{sl})$;
  \item if $t,s\in\mathcal{B}(APTC^{sl})$ then $t+ s\in\mathcal{B}(APTC^{sl})$;
  \item if $t,s\in\mathcal{B}(APTC^{sl})$ then $t\leftmerge s\in\mathcal{B}(APTC^{sl})$.
\end{enumerate}
\end{definition}

\begin{theorem}[Congruence theorem of APTC with static localities]
Static location truly concurrent bisimulation equivalences $\sim_p^{sl}$, $\sim_s^{sl}$, $\sim_{hp}^{sl}$ and $\sim_{hhp}^{sl}$ are all congruences with respect to APTC with static
localities.
\end{theorem}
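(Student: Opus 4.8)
The plan is to establish the congruence in the standard operator-by-operator fashion. First I would confirm that each of $\sim_p^{sl}$, $\sim_s^{sl}$, $\sim_{hp}^{sl}$ and $\sim_{hhp}^{sl}$ is an equivalence relation on the closed terms of APTC with static localities: reflexivity, symmetry and transitivity follow directly from the definitions of the static location bisimulations by composing the underlying relations and taking unions of the location associations $\varphi$. Having fixed this, I would reduce the congruence property to a finite list of preservation obligations, one for each operator in the signature, namely $loc::$, $\cdot$, $+$, $\parallel$, $\between$, $\leftmerge$, $\mid$, $\Theta$ and $\triangleleft$: for each operator $f$ and each of the four relations, I would show that relating all the arguments implies relating the two compound terms $f(\vec{s})$ and $f(\vec{t})$. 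Since the BATC fragment (operators $loc::$, $\cdot$, $+$) has already been handled by the earlier congruence theorems for BATC with static localities, the genuinely new work concerns the parallel-style operators whose rules appear in Table \ref{TRForAPTC}.

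For each obligation I would exhibit an explicit witness relation of the shape $R=\{(f(\vec{s}),f(\vec{t}))\}\cup\textbf{Id}$, indexed by a consistent location association $\varphi$, where the $s_i$ are related to the $t_i$ by the relevant equivalence, and then verify the transfer conditions by matching transitions through the operational rules. The only subtlety beyond the ordinary truly concurrent case is the bookkeeping of locations: when a component fires a transition carrying a location $u$, the matching transition carries some $v$, and one passes from $R_{\varphi}$ to $R_{\varphi\cup\{(u,v)\}}$, so I must check that adjoining $(u,v)$ keeps $\varphi$ a consistent location association, i.e. that the equivalence $u\diamond u'\Leftrightarrow v\diamond v'$ is preserved. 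For the parallel operators the composite label is $u\diamond v$, so the independence structure of the two sides must be shown to match exactly, and this is precisely where the \emph{cla} condition does the work. The pomset and step cases differ only in whether the fired multiset is required to be pairwise concurrent; the hp-case additionally threads the order isomorphism $f[e_1\mapsto e_2]$ between configurations, and the hhp-case is the hp-case together with the requirement that $R$ be downward closed.

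The hard part will be twofold. First, the operators $\Theta$ and $\triangleleft$, together with the communication merge $\mid$ feeding them, carry negative premises of the form $y\nrightarrow^{e_2}$ and conflict side conditions $\sharp(e_1,e_2)$; to match such transitions across $R$ one must argue that related terms enable and disable exactly the same events, so that the negative premises and the conflict relation transfer faithfully, which is what makes the rule system well behaved for bisimilarity. Second, the hhp-case demands that the witness relation be downward closed, and this is not automatic for the parallel operators: rather than merely building $R$ from the top-level pair, I would have to verify that restricting a related pair of configurations to a sub-configuration again yields a related pair with a compatible location association. I expect the downward-closure verification for $\parallel$, $\leftmerge$ and $\mid$, in the presence of the $u\diamond v$ location labels, to be the most delicate step; once the location bookkeeping is isolated, the remaining operators reduce to the corresponding congruence arguments already available for APTC and for CTC with static localities.
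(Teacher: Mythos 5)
Your proposal follows essentially the same route as the paper: the paper's proof likewise observes that the four relations are equivalences and reduces congruence to preservation by the operators $\between$, $\parallel$, $\leftmerge$, $\mid$, $\Theta$ and $\triangleleft$, after which it declares the verification trivial and omits it. Your sketch is consistent with this and merely fills in the witness relations, the consistent-location-association bookkeeping, and the negative-premise/downward-closure subtleties that the paper leaves implicit.
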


\begin{proof}
It is easy to see that static location pomset, step, hp-, hhp- bisimulation are all equivalent relations on APTC with static localities terms, we only need to prove that
$\sim_p^{sl}$, $\sim_s^{sl}$, $\sim_{hp}^{sl}$ and $\sim_{hhp}^{sl}$ are all preserved by the operators $\between$, $\parallel$, $\leftmerge$, $\mid$, $\Theta$ and $\triangleleft$,
the proof is trivial and we omit it.
\end{proof}

So, we design the axioms of parallelism in Table \ref{AxiomsForParallelism}, including algebraic laws for parallel operator $\parallel$, communication operator $\mid$, conflict elimination operator $\Theta$ and unless operator $\triangleleft$, and also the whole parallel operator $\between$. Since the communication between two communicating events in different parallel branches may cause deadlock (a state of inactivity), which is caused by mismatch of two communicating events or the imperfectness of the communication channel. We introduce a new constant $\delta$ to denote the deadlock, and let the atomic event $e\in \mathbb{E}\cup\{\delta\}$.

\begin{center}
    \begin{table}
        \begin{tabular}{@{}ll@{}}
            \hline No. &Axiom\\
            $A6$ & $x+ \delta = x$\\
            $A7$ & $\delta\cdot x =\delta$\\
            $P1$ & $x\between y = x\parallel y + x\mid y$\\
            $P2$ & $x\parallel y = y \parallel x$\\
            $P3$ & $(x\parallel y)\parallel z = x\parallel (y\parallel z)$\\
            $P4$ & $x\parallel y = x\leftmerge y + y\leftmerge x$\\
            $P5$ & $(e_1\leq e_2)\quad e_1\leftmerge (e_2\cdot y) = (e_1\leftmerge e_2)\cdot y$\\
            $P6$ & $(e_1\leq e_2)\quad (e_1\cdot x)\leftmerge e_2 = (e_1\leftmerge e_2)\cdot x$\\
            $P7$ & $(e_1\leq e_2)\quad (e_1\cdot x)\leftmerge (e_2\cdot y) = (e_1\leftmerge e_2)\cdot (x\between y)$\\
            $P8$ & $(x+ y)\leftmerge z = (x\leftmerge z)+ (y\leftmerge z)$\\
            $P9$ & $\delta\leftmerge x = \delta$\\
            $C1$ & $e_1\mid e_2 = \gamma(e_1,e_2)$\\
            $C2$ & $e_1\mid (e_2\cdot y) = \gamma(e_1,e_2)\cdot y$\\
            $C3$ & $(e_1\cdot x)\mid e_2 = \gamma(e_1,e_2)\cdot x$\\
            $C4$ & $(e_1\cdot x)\mid (e_2\cdot y) = \gamma(e_1,e_2)\cdot (x\between y)$\\
            $C5$ & $(x+ y)\mid z = (x\mid z) + (y\mid z)$\\
            $C6$ & $x\mid (y+ z) = (x\mid y)+ (x\mid z)$\\
            $C7$ & $\delta\mid x = \delta$\\
            $C8$ & $x\mid\delta = \delta$\\
            $CE1$ & $\Theta(e) = e$\\
            $CE2$ & $\Theta(\delta) = \delta$\\
            $CE3$ & $\Theta(x+ y) = \Theta(x)\triangleleft y + \Theta(y)\triangleleft x$\\
            $CE4$ & $\Theta(x\cdot y)=\Theta(x)\cdot\Theta(y)$\\
            $CE5$ & $\Theta(x\parallel y) = ((\Theta(x)\triangleleft y)\parallel y)+ ((\Theta(y)\triangleleft x)\parallel x)$\\
            $CE6$ & $\Theta(x\mid y) = ((\Theta(x)\triangleleft y)\mid y)+ ((\Theta(y)\triangleleft x)\mid x)$\\
            $U1$ & $(\sharp(e_1,e_2))\quad e_1\triangleleft e_2 = \tau$\\
            $U2$ & $(\sharp(e_1,e_2),e_2\leq e_3)\quad e_1\triangleleft e_3 = e_1$\\
            $U3$ & $(\sharp(e_1,e_2),e_2\leq e_3)\quad e3\triangleleft e_1 = \tau$\\
            $U4$ & $e\triangleleft \delta = e$\\
            $U5$ & $\delta \triangleleft e = \delta$\\
            $U6$ & $(x+ y)\triangleleft z = (x\triangleleft z)+ (y\triangleleft z)$\\
            $U7$ & $(x\cdot y)\triangleleft z = (x\triangleleft z)\cdot (y\triangleleft z)$\\
            $U8$ & $(x\leftmerge y)\triangleleft z = (x\triangleleft z)\leftmerge (y\triangleleft z)$\\
            $U9$ & $(x\mid y)\triangleleft z = (x\triangleleft z)\mid (y\triangleleft z)$\\
            $U10$ & $x\triangleleft (y+ z) = (x\triangleleft y)\triangleleft z$\\
            $U11$ & $x\triangleleft (y\cdot z)=(x\triangleleft y)\triangleleft z$\\
            $U12$ & $x\triangleleft (y\leftmerge z) = (x\triangleleft y)\triangleleft z$\\
            $U13$ & $x\triangleleft (y\mid z) = (x\triangleleft y)\triangleleft z$\\
            $L5$ & $u::(x\between y) = u::x\between u:: y$\\
            $L6$ & $u::(x\parallel y) = u::x\parallel u:: y$\\
            $L7$ & $u::(x\mid y) = u::x\mid u:: y$\\
            $L8$ & $u::(\Theta(x)) = \Theta(u::x)$\\
            $L9$ & $u::(x\triangleleft y) = u::x\triangleleft u:: y$\\
            $L10$ & $u::\delta=\delta$\\
        \end{tabular}
        \caption{Axioms of parallelism}
        \label{AxiomsForParallelism}
    \end{table}
\end{center}


Based on the definition of basic terms for APTC with static localities (see Definition \ref{BTAPTC}) and axioms of parallelism (see Table \ref{AxiomsForParallelism}), we can prove the
elimination theorem of parallelism.

\begin{theorem}[Elimination theorem of parallelism]\label{ETParallelism}
Let $p$ be a closed APTC with static localities term. Then there is a basic APTC with static localities term $q$ such that $APTC^{sl}\vdash p=q$.
\end{theorem}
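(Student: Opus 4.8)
The plan is to follow the same two-stage scheme used for the elimination theorem of $BATC$ with static localities (Theorem \ref{ETBATC}): first orient the axioms of Table \ref{AxiomsForParallelism}, together with the already-oriented $BATC^{sl}$ rules, into a term rewrite system; then establish that this system is strongly normalizing via Theorem \ref{SN}; and finally argue that every normal form of a closed $APTC^{sl}$ term is a basic term in the sense of Definition \ref{BTAPTC}. First I would fix the rewrite system by reading each axiom with a genuinely simplifying direction from left to right: $A6,A7$ for $\delta$; $P1$ and $P4$--$P9$ for $\between$, $\parallel$ and $\leftmerge$; $C1$--$C8$ for the communication merge $\mid$; $CE1$--$CE6$ for $\Theta$; $U1$--$U13$ for $\triangleleft$; and $L5$--$L10$ for the distribution of locations over the parallel operators. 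The commutativity and associativity axioms $P2,P3$ (and $A1,A2$ inherited from $BATC^{sl}$) are not oriented; exactly as in the $BATC^{sl}$ proof they are handled modulo associativity and commutativity, writing $=_{AC}$.

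For the termination step I would choose a well-founded precedence on the extended signature and verify $s>_{lpo}t$ for every rewrite rule, so that Theorem \ref{SN} applies. A natural choice places each derived operator above those it unfolds into, for instance $::\,>\,\Theta\,>\,\triangleleft\,>\,\between\,>\,\parallel\,>\,\mid\,>\,\leftmerge\,>\,\cdot\,>\,+$, with $::$ given the lexicographic status on its first argument, precisely as in Theorem \ref{ETBATC}. With this precedence the location rules $L5$--$L10$, the $\Theta$-rules $CE1$--$CE6$, and the $\triangleleft$-rules $U1$--$U13$ are immediate, since each either pushes $::$, $\Theta$, or $\triangleleft$ strictly inward or replaces a higher symbol by strictly lower ones, and $P1$, $P4$, $P8$, $P9$, $C1$, $C5$--$C8$ are equally routine.

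Once strong normalization is in hand, I would show that a normal form $p$ must be basic. Arguing by contradiction, I take a smallest non-basic subterm $p'$ of $p$, so that every proper subterm of $p'$ is already basic, and perform a case analysis on the head operator of $p'$: a head $\between$ admits $P1$; a head $\parallel$ admits $P4$; a head $\mid$ admits a $C$-rule after splitting the basic arguments into summands; a head $\Theta$ admits a $CE$-rule; a head $\triangleleft$ admits a $U$-rule; a head $::$ admits an $L$-rule (or $RL1$--$RL4$); and a head $\cdot$ admits $RA4$, $RA5$, or one of the $P/C$ forms. In each case $p'$ is reducible, contradicting that $p$ is a normal form, so $p$ is basic and $APTC^{sl}\vdash p=q$ with $q$ basic.

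The main obstacle is the termination verification for $P5$--$P7$ and $C2$--$C4$, whose right-hand sides reintroduce the high-precedence operators $\between$ and $\leftmerge$; the rule $(e_1\cdot x)\leftmerge(e_2\cdot y)\to(e_1\leftmerge e_2)\cdot(x\between y)$ is the sharpest instance, since $\between$ outranks $\leftmerge$ and so a precedence-only comparison of heads does not suffice. The reason the recursive path ordering still decreases here is that $\between$ is reintroduced only on the proper subterms $x,y$, which are strictly smaller than the redex, so the subterm clause of $>_{lpo}$ discharges the obligation. I would therefore check the $>_{lpo}$ inequalities for these four rules explicitly, and this is where the genuine work concentrates; the remaining rules follow the $BATC^{sl}$ pattern and the normal-form analysis is then a straightforward structural induction.
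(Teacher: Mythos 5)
Your overall strategy is exactly the paper's: orient the axioms of Table \ref{AxiomsForParallelism} into the term rewrite system of Table \ref{TRSForAPTC}, invoke Theorem \ref{SN} via a lexicographical path ordering, and then show by a smallest-non-basic-subterm analysis that normal forms of closed terms are basic. (Your decision to leave $P2,P3$ unoriented and work modulo AC is a sensible deviation; the paper's Table \ref{TRSForAPTC} orients them as $RP2,RP3$, which strictly speaking cannot belong to any terminating TRS.) The gap lies in the step you yourself flagged as the crux. Under your precedence, where $\between$ outranks $\leftmerge$ and $\mid$, the obligation for $P7$, namely $(e_1\cdot x)\leftmerge(e_2\cdot y)>_{lpo}(e_1\leftmerge e_2)\cdot(x\between y)$, reduces to showing $(e_1\cdot x)\leftmerge(e_2\cdot y)>_{lpo}x\between y$; since the head $\between$ of the right-hand term is \emph{greater} in your precedence than the head $\leftmerge$ of the left-hand term, the only applicable clause of the LPO is the subterm clause, which demands that some argument of the left-hand side, i.e. $e_1\cdot x$ or $e_2\cdot y$, be $\geq_{lpo}$ the \emph{whole} term $x\between y$. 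Neither argument contains both $x$ and $y$, so this fails. Your justification --- that $\between$ is reapplied only to proper subterms of the redex --- is not what the subterm clause licenses: when the reintroduced symbol sits above the LHS head in the precedence, assembling it from strictly smaller pieces does not make the comparison go through (otherwise $f(x)\rightarrow g(x)$ with $g>f$ would count as decreasing).

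Moreover, the defect cannot be repaired by re-ranking the symbols: $P1$ forces $\between$ above $\parallel$ and $\mid$ (its right-hand side must be dominated argumentwise), $P4$ forces $\parallel$ above $\leftmerge$, while $P7$ and $C4$ force $\leftmerge$ and $\mid$ above $\between$ --- a cyclic set of constraints, so no precedence makes all rules of Table \ref{TRSForAPTC} decrease under a plain LPO. To close the proof you need either a genuinely different termination measure (for instance a lexicographic combination that first counts nested occurrences of $\between,\parallel,\leftmerge,\mid$ and only then falls back on a path ordering, or an interpretation/weight argument), or to bypass the TRS route for the parallel operators entirely: prove by simultaneous induction on the sizes of basic terms $s,t$ that $s\leftmerge t$, $s\mid t$, $s\parallel t$, $s\between t$, $\Theta(s)$, $s\triangleleft t$ and $u::s$ are each provably equal to a basic term, and then eliminate operators from an arbitrary closed term by structural induction. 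In fairness, the paper's own proof is silent at precisely this point --- its stated ordering $::>\leftmerge>\cdot>+$ does not rank $\between$, $\parallel$, $\mid$, $\Theta$ or $\triangleleft$ at all, and it merely asserts that the LPO inequalities ``can easily be proved'' --- so you have correctly located the real difficulty; but the resolution you propose does not discharge it.
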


\begin{proof}
(1) Firstly, suppose that the following ordering on the signature of APTC with static localities is defined: $::>\leftmerge > \cdot > +$ and the symbol $::$ is given the lexicographical
status for the first argument, then for each rewrite rule $p\rightarrow q$ in Table \ref{TRSForAPTC} relation $p>_{lpo} q$ can easily be proved. We obtain that the term rewrite system
shown in Table \ref{TRSForAPTC} is strongly normalizing, for it has finitely many rewriting rules, and $>$ is a well-founded ordering on the signature of APTC with static localities,
and if $s>_{lpo} t$, for each rewriting rule $s\rightarrow t$ is in Table \ref{TRSForAPTC} (see Theorem \ref{SN}).

\begin{center}
    \begin{table}
        \begin{tabular}{@{}ll@{}}
            \hline No. &Rewriting Rule\\
            $RA6$ & $x+ \delta \rightarrow x$\\
            $RA7$ & $\delta\cdot x \rightarrow\delta$\\
            $RP1$ & $x\between y \rightarrow x\parallel y + x\mid y$\\
            $RP2$ & $x\parallel y \rightarrow y \parallel x$\\
            $RP3$ & $(x\parallel y)\parallel z \rightarrow x\parallel (y\parallel z)$\\
            $RP4$ & $x\parallel y \rightarrow x\leftmerge y + y\leftmerge x$\\
            $RP5$ & $(e_1\leq e_2)\quad e_1\leftmerge (e_2\cdot y) \rightarrow (e_1\leftmerge e_2)\cdot y$\\
            $RP6$ & $(e_1\leq e_2)\quad (e_1\cdot x)\leftmerge e_2 \rightarrow (e_1\leftmerge e_2)\cdot x$\\
            $RP7$ & $(e_1\leq e_2)\quad (e_1\cdot x)\leftmerge (e_2\cdot y) \rightarrow (e_1\leftmerge e_2)\cdot (x\between y)$\\
            $RP8$ & $(x+ y)\leftmerge z \rightarrow (x\leftmerge z)+ (y\leftmerge z)$\\
            $RP9$ & $\delta\leftmerge x \rightarrow \delta$\\
            $RC1$ & $e_1\mid e_2 \rightarrow \gamma(e_1,e_2)$\\
            $RC2$ & $e_1\mid (e_2\cdot y) \rightarrow \gamma(e_1,e_2)\cdot y$\\
            $RC3$ & $(e_1\cdot x)\mid e_2 \rightarrow \gamma(e_1,e_2)\cdot x$\\
            $RC4$ & $(e_1\cdot x)\mid (e_2\cdot y) \rightarrow \gamma(e_1,e_2)\cdot (x\between y)$\\
            $RC5$ & $(x+ y)\mid z \rightarrow (x\mid z) + (y\mid z)$\\
            $RC6$ & $x\mid (y+ z) \rightarrow (x\mid y)+ (x\mid z)$\\
            $RC7$ & $\delta\mid x \rightarrow \delta$\\
            $RC8$ & $x\mid\delta \rightarrow \delta$\\
            $RCE1$ & $\Theta(e) \rightarrow e$\\
            $RCE2$ & $\Theta(\delta) \rightarrow \delta$\\
            $RCE3$ & $\Theta(x+ y) \rightarrow \Theta(x)\triangleleft y + \Theta(y)\triangleleft x$\\
            $RCE4$ & $\Theta(x\cdot y)\rightarrow\Theta(x)\cdot\Theta(y)$\\
            $RCE5$ & $\Theta(x\parallel y) \rightarrow ((\Theta(x)\triangleleft y)\parallel y)+ ((\Theta(y)\triangleleft x)\parallel x)$\\
            $RCE6$ & $\Theta(x\mid y) \rightarrow ((\Theta(x)\triangleleft y)\mid y)+ ((\Theta(y)\triangleleft x)\mid x)$\\
            $RU1$ & $(\sharp(e_1,e_2))\quad e_1\triangleleft e_2 \rightarrow \tau$\\
            $RU2$ & $(\sharp(e_1,e_2),e_2\leq e_3)\quad e_1\triangleleft e_3 \rightarrow e_1$\\
            $RU3$ & $(\sharp(e_1,e_2),e_2\leq e_3)\quad e3\triangleleft e_1 \rightarrow \tau$\\
            $RU4$ & $e\triangleleft \delta \rightarrow e$\\
            $RU5$ & $\delta \triangleleft e \rightarrow \delta$\\
            $RU6$ & $(x+ y)\triangleleft z \rightarrow (x\triangleleft z)+ (y\triangleleft z)$\\
            $RU7$ & $(x\cdot y)\triangleleft z \rightarrow (x\triangleleft z)\cdot (y\triangleleft z)$\\
            $RU8$ & $(x\leftmerge y)\triangleleft z \rightarrow (x\triangleleft z)\leftmerge (y\triangleleft z)$\\
            $RU9$ & $(x\mid y)\triangleleft z \rightarrow (x\triangleleft z)\mid (y\triangleleft z)$\\
            $RU10$ & $x\triangleleft (y+ z) \rightarrow (x\triangleleft y)\triangleleft z$\\
            $RU11$ & $x\triangleleft (y\cdot z)\rightarrow(x\triangleleft y)\triangleleft z$\\
            $RU12$ & $x\triangleleft (y\leftmerge z) \rightarrow (x\triangleleft y)\triangleleft z$\\
            $RU13$ & $x\triangleleft (y\mid z) \rightarrow (x\triangleleft y)\triangleleft z$\\
            $RL5$ & $u::(x\between y) \rightarrow u::x\between u:: y$\\
            $RL6$ & $u::(x\parallel y) \rightarrow u::x\parallel u:: y$\\
            $RL7$ & $u::(x\mid y) \rightarrow u::x\mid u:: y$\\
            $RL8$ & $u::(\Theta(x)) \rightarrow \Theta(u::x)$\\
            $RL9$ & $u::(x\triangleleft y) \rightarrow u::x\triangleleft u:: y$\\
            $RL10$ & $u::\delta\rightarrow\delta$\\
        \end{tabular}
        \caption{Term rewrite system of APTC with static localities}
        \label{TRSForAPTC}
    \end{table}
\end{center}

(2) Then we prove that the normal forms of closed APTC with static localities terms are basic APTC with static localities terms.

Suppose that $p$ is a normal form of some closed APTC with static localities term and suppose that $p$ is not a basic APTC with static localities term. Let $p'$ denote the smallest
sub-term of $p$ which is not a basic APTC with static localities term. It implies that each sub-term of $p'$ is a basic APTC with static localities term. Then we prove that $p$ is not
a term in normal form. It is sufficient to induct on the structure of $p'$:

\begin{itemize}
  \item Case $p'\equiv u::e, e\in \mathbb{E}$. $p'$ is a basic APTC with static localities term, which contradicts the assumption that $p'$ is not a basic APTC with static localities
  term, so this case should not occur.
  \item Case $p'\equiv p_1\cdot p_2$. By induction on the structure of the basic APTC with static localities term $p_1$:
      \begin{itemize}
        \item Subcase $p_1\in \mathbb{E}$. $p'$ would be a basic APTC with static localities term, which contradicts the assumption that $p'$ is not a basic APTC with static localities term;
        \item Subcase $p_1\equiv u::e$. $p'$ would be a basic APTC with static localities term, which contradicts the assumption that $p'$ is not a basic APTC with static localities term;
        \item Subcase $p_1\equiv e\cdot p_1'$. $RA5$ rewriting rule in Table \ref{TRSForBATC} can be applied. So $p$ is not a normal form;
        \item Subcase $p_1\equiv p_1'+ p_1''$. $RA4$ rewriting rule in Table \ref{TRSForBATC} can be applied. So $p$ is not a normal form;
        \item Subcase $p_1\equiv p_1'\parallel p_1''$. $p'$ would be a basic APTC with static localities term, which contradicts the assumption that $p'$ is not a basic APTC with static localities term;
        \item Subcase $p_1\equiv p_1'\mid p_1''$. $RC1$ rewrite rule in Table \ref{TRSForAPTC} can be applied. So $p$ is not a normal form;
        \item Subcase $p_1\equiv \Theta(p_1')$. $RCE1$-$RCE6$ rewrite rules in Table \ref{TRSForAPTC} can be applied. So $p$ is not a normal form.
      \end{itemize}
  \item Case $p'\equiv p_1+ p_2$. By induction on the structure of the basic APTC with static localities terms both $p_1$ and $p_2$, all subcases will lead to that $p'$ would be a basic APTC with static localities term, which contradicts the assumption that $p'$ is not a basic APTC with static localities term.
  \item Case $p'\equiv p_1\leftmerge p_2$. By induction on the structure of the basic APTC with static localities terms both $p_1$ and $p_2$, all subcases will lead to that $p'$ would be a basic APTC with static localities term, which contradicts the assumption that $p'$ is not a basic APTC with static localities term.
  \item Case $p'\equiv p_1\mid p_2$. By induction on the structure of the basic APTC with static localities terms both $p_1$ and $p_2$, all subcases will lead to that $p'$ would be a basic APTC with static localities term, which contradicts the assumption that $p'$ is not a basic APTC with static localities term.
  \item Case $p'\equiv \Theta(p_1)$. By induction on the structure of the basic APTC with static localities term $p_1$, $RCE1-RCE6$ rewrite rules in Table \ref{TRSForAPTC} can be applied. So $p$ is not a normal form.
  \item Case $p'\equiv p_1\triangleleft p_2$. By induction on the structure of the basic APTC with static localities terms both $p_1$ and $p_2$, all subcases will lead to that $p'$ would be a basic APTC with static localities term, which contradicts the assumption that $p'$ is not a basic APTC with static localities term.
\end{itemize}
\end{proof}


\begin{theorem}[Generalization of APTC with static localities with respect to BATC with static localities]
APTC with static localities is a generalization of BATC with static localities.
\end{theorem}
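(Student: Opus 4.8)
The plan is to read ``generalization'' in the sense that is standard for these algebras: namely that $APTC^{sl}$ is a \emph{conservative extension} of $BATC^{sl}$, so that no closed $BATC^{sl}$ term acquires new transitions or new termination behaviour when the $APTC^{sl}$ operators are adjoined. Once conservativity is established the generalization claim is immediate, since every $BATC^{sl}$ term is already an $APTC^{sl}$ term and its operational behaviour, hence its truly concurrent bisimulation class, is left unchanged. The tool I would invoke is the conservative extension theorem stated earlier in the excerpt: taking $T_0$ to be the TSS of $BATC^{sl}$ (Table \ref{SETRForBATC}) and $T_1$ to be the TSS consisting of the new rules for $\between$, $\parallel$, $\leftmerge$, $\mid$, $\Theta$ and $\triangleleft$ (Table \ref{TRForAPTC}), it suffices to check that $T_0$ is source-dependent and that each rule of $T_1$ meets the freshness condition.

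First I would verify source-dependency of $T_0$. Inspecting each rule in Table \ref{SETRForBATC}: the axioms for $e$ and $loc::e$ contain no variables; the location rule $\frac{x\xrightarrow[u]{e}x'}{loc::x\xrightarrow[loc\ll u]{e}loc::x'}$ has source $loc::x$, so $x$ is source-dependent and the premise then makes $x'$ source-dependent; and the summation and sequential-composition rules have sources $x+y$ and $x\cdot y$ containing all source variables, with the premises supplying source-dependency of any primed variables. Hence every rule of $T_0$ is source-dependent, so the base system is source-dependent.

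Next I would discharge the freshness condition for $T_1$. The operators $\between$, $\parallel$, $\leftmerge$, $\mid$, $\Theta$ and $\triangleleft$, together with the constant $\delta$, all belong to $\Sigma_1\setminus\Sigma_0$; and every rule in Table \ref{TRForAPTC} has a source headed by one of these operators (for instance $x\parallel y$, $x\leftmerge y$, $x\mid y$, $\Theta(x)$, $x\triangleleft y$). Thus the source of each rule of $T_1$ is fresh, which satisfies the first alternative of the freshness condition, so no analysis of premises is required. Combining this with the source-dependency of $T_0$, the conservative extension theorem yields that $T_0\oplus T_1$ is a conservative extension of $T_0$, and therefore $APTC^{sl}$ is a generalization of $BATC^{sl}$.

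The main obstacle I anticipate is a hypothesis of the conservative extension theorem rather than its two numbered conditions: the theorem requires $T_0$ and $T_0\oplus T_1$ to be \emph{positive after reduction}. The rules for the unless operator $\triangleleft$ in Table \ref{TRForAPTC} carry negative premises such as $y\nrightarrow^{e_2}$, so I would have to argue that, after the usual reduction of the TSS, these negative premises do not disturb positivity. The cleanest route is to observe that the negative premises never constrain $BATC^{sl}$ terms in a way that could retract an existing $BATC^{sl}$ transition, so the reduced system remains well-behaved on $\mathcal{T}(\Sigma_0)$. Once this bookkeeping is in place, the freshness-of-source argument carries the rest of the proof with no computation.
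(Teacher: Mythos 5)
Your proposal takes essentially the same route as the paper: the paper's proof likewise invokes the conservative extension machinery, resting on exactly your two facts (source-dependency of the $BATC^{sl}$ rules and freshness of the sources of the $APTC^{sl}$ rules, each headed by $\between$, $\parallel$, $\leftmerge$, $\mid$, $\Theta$ or $\triangleleft$), plus the source-dependency of the $APTC^{sl}$ rules themselves. Your additional attention to the ``positive after reduction'' hypothesis and the negative premises $y\nrightarrow^{e_2}$ of the unless operator is a point the paper silently skips, so your write-up is if anything more careful than the original.
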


\begin{proof}
It follows from the following three facts.

\begin{enumerate}
  \item The transition rules of BATC with static localities are all source-dependent;
  \item The sources of the transition rules for APTC with static localities contain an occurrence of $\between$, or $\parallel$, or $\leftmerge$, or $\mid$, or $\Theta$, or $\triangleleft$;
  \item The transition rules of APTC with static localities are all source-dependent.
\end{enumerate}

So, APTC with static localities is a generalization of BATC with static localities, that is, BATC with static localities is an embedding of APTC with static localities, as desired.
\end{proof}

\begin{theorem}[Soundness of APTC with static localities modulo static location pomset bisimulation equivalence]\label{SPPBE}
Let $x$ and $y$ be APTC with static localities terms. If $APTC^{sl}\vdash x=y$, then $x\sim_p^{sl} y$.
\end{theorem}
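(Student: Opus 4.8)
The plan is to reduce soundness of the whole equational theory to soundness of its individual axioms, exploiting the fact that $\sim_p^{sl}$ is a congruence. By the immediately preceding congruence theorem for APTC with static localities, $\sim_p^{sl}$ is an equivalence relation that is preserved by every operator of the signature ($::$, $\cdot$, $+$, $\between$, $\parallel$, $\leftmerge$, $\mid$, $\Theta$, $\triangleleft$). Since the provability relation $\vdash$ is the least congruence containing all instances of the axioms, once each axiom is shown to be sound the soundness of every derivable equation follows by a routine induction on the length of the derivation: reflexivity, symmetry and transitivity are handled because $\sim_p^{sl}$ is an equivalence, and closure under arbitrary contexts is handled because it is a congruence. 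So the first step I would take is to state this reduction explicitly and thereby isolate the remaining obligation.

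The second, and main, step is to verify for each axiom $s=y$ in Table \ref{AxiomsForBATC} and Table \ref{AxiomsForParallelism} that $s\sim_p^{sl} t$ under a suitable choice of distribution. For each axiom I would exhibit a candidate static location pomset bisimulation of the form $R_{\varphi}=\{(s,t)\}\cup\textbf{Id}$, extended over all closed instantiations of the variables occurring in the axiom, and then check the two transfer conditions directly against the operational rules of Table \ref{SETRForBATC} and Table \ref{TRForAPTC}: whenever $s\xrightarrow[u]{X_1}s'$ I must supply a matching $t\xrightarrow[v]{X_2}t'$ with $X_1\sim X_2$ and $(s',t')\in R_{\varphi\cup\{(u,v)\}}$, and symmetrically. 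The sequential axioms $A1$--$A5$ together with $A6,A7$ and the basic location axioms $L1$--$L4$ reduce to the already-established soundness of BATC with static localities (Theorem \ref{SBATCPBE}), so for these I would simply appeal to that result rather than redo the bisimulation constructions.

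The third step treats the parallelism-specific axioms, where the genuine work lies. For $P1$--$P9$ and $C1$--$C8$ both sides unfold through the parallel, left-merge and communication rules, and I must track how the location decorations combine: the APTC rules place a step $\{e_1,e_2\}$ at the independence-combined location $u\diamond v$, so the commutativity and associativity axioms $P2,P3$ force me to confirm that the combined locations on the two sides stay related by a consistent location association, i.e. that $u\diamond u'\Leftrightarrow v\diamond v'$ is preserved along the matched transitions. The hardest part will be exactly this location-bookkeeping for the parallel and communication axioms, and likewise for the conflict-elimination and unless axioms $CE1$--$CE6$ and $U1$--$U13$, whose rules carry the side conditions $\sharp(e_1,e_2)$ and $e_i\leq e_j$; there I must check that the conflict and causality structure inherited from the underlying PES is respected on both sides, so that the matched pomsets really are isomorphic and the cla condition holds. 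Finally the distribution axioms $L5$--$L10$ are the ingredient new to the localized setting, and verifying that $u::$ distributes over $\between$, $\parallel$, $\mid$, $\Theta$, $\triangleleft$ and absorbs $\delta$ compatibly with the $loc\ll u$ prefixing in the transition rules is the step I expect to demand the most care.
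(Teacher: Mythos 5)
Your proposal takes essentially the same route as the paper: the paper's proof likewise invokes that $\sim_p^{sl}$ is both an equivalence and a congruence with respect to $\between$, $\parallel$, $\leftmerge$, $\mid$, $\Theta$ and $\triangleleft$ to reduce soundness to checking each axiom of Table \ref{AxiomsForParallelism}, and then declares that check trivial and omits it. Your extra detail---appealing to Theorem \ref{SBATCPBE} for the BATC axioms and sketching the per-axiom bisimulations with their location bookkeeping for $u\diamond v$ and $L5$--$L10$---merely fills in what the paper leaves implicit.
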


\begin{proof}
Since static location pomset bisimulation $\sim_p^{sl}$ is both an equivalent and a congruent relation with respect to the operators $\between$, $\parallel$, $\leftmerge$, $\mid$,
$\Theta$ and $\triangleleft$, we only need to check if each axiom in Table \ref{AxiomsForParallelism} is sound modulo static location pomset bisimulation equivalence, the proof is trivial
and we omit it.
\end{proof}

\begin{theorem}[Completeness of APTC with static localities modulo static location pomset bisimulation equivalence]\label{CPPBE}
Let $p$ and $q$ be closed APTC with static localities terms, if $p\sim_p^{sl} q$ then $p=q$.
\end{theorem}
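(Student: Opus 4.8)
The plan is to follow the same three-stage template used for the completeness of $BATC$ with static localities (Theorem \ref{CBATCPBE}), now carrying the extra parallel structure through each stage. First I would invoke the elimination theorem of parallelism (Theorem \ref{ETParallelism}): for every closed $APTC^{sl}$ term $p$ there is a basic $APTC^{sl}$ term $p'$ with $APTC^{sl}\vdash p=p'$, and by soundness (Theorem \ref{SPPBE}) also $p\sim_p^{sl}p'$. Hence it suffices to establish the claim for closed basic terms, and since $\sim_p^{sl}$ is an equivalence, completeness reduces to showing that two bisimilar closed basic terms are provably equal.

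Second, I would put basic terms into normal form modulo associativity and commutativity of $+$ (axioms $A1$, $A2$), writing each as $s_1+\cdots+s_k$ where, after applying the location laws $L1$--$L4$ and the parallelism laws $P4$--$P7$, every summand $s_i$ is one of: a single located event $u::e$, a sequential term $u::e\cdot t$, or --- this is the new case relative to $BATC$ --- a left-merge term $t\leftmerge s$ whose arguments are themselves normal forms. The core lemma is then: if $n$ and $n'$ are such normal forms with $n\sim_p^{sl}n'$, then $n=_{AC}n'$, to be proved by induction on the combined sizes of $n$ and $n'$.

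Third, I would run the summand-matching argument. The event and sequential cases are exactly as in Theorem \ref{CBATCPBE}: a summand $u::e$ forces $n\xrightarrow[u]{e}\surd$, matched by $n'\xrightarrow[u]{e}\surd$, so $n'$ contains $u::e$; a summand $u::e\cdot t$ forces $n\xrightarrow[u]{e}t$, matched by $n'\xrightarrow[u]{e}t'$ with $t\sim_p^{sl}t'$, and the induction hypothesis gives $t=_{AC}t'$. For a left-merge summand I would read off the step/pomset transition it generates from the rules of Table \ref{TRForAPTC}: the move carries a multiset label and a composed location $u\diamond v$, and matching it on the $n'$ side should pin down a corresponding left-merge summand whose two components are bisimilar to those on the $n$ side, so that the induction hypothesis applies componentwise. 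Finally, soundness (Theorem \ref{SPPBE}) closes the loop: given basic $s\sim_p^{sl}t$ with normal forms $n,n'$, we have $n\sim_p^{sl}s\sim_p^{sl}t\sim_p^{sl}n'$, whence $n=_{AC}n'$ and therefore $s=n=_{AC}n'=t$.

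I expect the left-merge case to be the main obstacle. Unlike the sequential operator, $\leftmerge$ produces step labels that are multisets $\{e_1,e_2\}$ distributed over a composite location $u\diamond v$, and pomset bisimulation matches such a step only up to pomset isomorphism of the label set rather than on the nose. One must therefore argue that the ordering side-conditions ($e_1\le e_2$) in the $\leftmerge$ rules, together with the consistent-location-association discipline behind $\sim_p^{sl}$, force the matching summand to have the same left-merge shape with componentwise bisimilar arguments, so that the structural induction goes through. Keeping the location bookkeeping consistent with the independence requirement $u\diamond u'\Leftrightarrow v\diamond v'$ of a consistent location association is the delicate point that distinguishes this argument from the location-free $APTC$ completeness proof.
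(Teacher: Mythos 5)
Your three-stage skeleton (elimination, AC-normal forms, summand matching closed by soundness) is exactly the paper's proof, and stages one and three for the event and sequential cases match it verbatim. The gap is in your second stage, and it is what manufactures the ``main obstacle'' you worry about at the end: your normal form is not the one the elimination theorem actually produces. The term rewriting system behind Theorem \ref{ETParallelism} (rules $RP5$--$RP8$ together with $RP1$ and the $RL$ rules) pushes $\leftmerge$ all the way down to atomic events: $(e_1\cdot x)\leftmerge(e_2\cdot y)$ rewrites to $(e_1\leftmerge e_2)\cdot(x\between y)$, sums distribute out by $RP8$, and the resulting $\between$ is eliminated by $RP1$ and $RP4$. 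Consequently a normal-form summand is never of the shape $t\leftmerge s$ with compound arguments; it is either a located event $u::e$ or a sequential composition $t_1\cdot\cdots\cdot t_m$ whose factors are located events or left-merges $u_1::e_1\leftmerge\cdots\leftmerge u_l::e_l$ of located events only.

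With the correct normal form your feared case dissolves: such a factor fires all of its events in a single pomset transition $n\xrightarrow[u]{\{e_1,\cdots,e_l\}}t_2$, which $n'$ must match by $n'\xrightarrow[u]{\{e_1,\cdots,e_l\}}t_2'$ with $t_2\sim_p^{sl}t_2'$, meaning $n'$ contains a summand $(u_1::e_1\leftmerge\cdots\leftmerge u_l::e_l)\cdot t_2'$, and the induction hypothesis applies to the smaller continuations $t_2$, $t_2'$. No componentwise decomposition of a left-merge over bisimilarity is needed, and no appeal to the $e_1\leq e_2$ side conditions or to the consistent-location-association discipline is required. As you set it up, the argument would actually break in two places: a transition out of $t\leftmerge s$ with compound $t,s$ leaves a residue of the form $t'\between s'$, which is not a basic term, so your size induction on normal forms has nothing to apply to; and ``matching pins down a left-merge summand with componentwise bisimilar arguments'' is precisely the kind of decomposition property that pomset bisimulation does not give you for free (it matches whole labelled pomsets, not the syntactic operators that generated them). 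Restate the normal form as above and your proof becomes the paper's.
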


\begin{proof}
Firstly, by the elimination theorem of APTC with static localities (see Theorem \ref{ETParallelism}), we know that for each closed APTC with static localities term $p$, there exists a
closed basic APTC with static localities term $p'$, such that $APTC^{sl}\vdash p=p'$, so, we only need to consider closed basic APTC with static localities terms.

The basic terms (see Definition \ref{BTAPTC}) modulo associativity and commutativity (AC) of conflict $+$ (defined by axioms $A1$ and $A2$ in Table \ref{AxiomsForBATC}), and these
equivalences is denoted by $=_{AC}$. Then, each equivalence class $s$ modulo AC of $+$ has the following normal form

$$s_1+\cdots+ s_k$$

with each $s_i$ either an atomic event or of the form

$$t_1\cdot\cdots\cdot t_m$$

with each $t_j$ either an atomic event or of the form

$$u_1\parallel\cdots\parallel u_l$$

with each $u_l$ an atomic event with a location, and each $s_i$ is called the summand of $s$.

Now, we prove that for normal forms $n$ and $n'$, if $n\sim_p^{sl} n'$ then $n=_{AC}n'$. It is sufficient to induct on the sizes of $n$ and $n'$.

\begin{itemize}
  \item Consider a summand $u::e$ of $n$. Then $n\xrightarrow[u]{e}\surd$, so $n\sim_p^{sl} n'$ implies $n'\xrightarrow[u]{e}\surd$, meaning that $n'$ also contains the summand $u::e$.
  \item Consider a summand $t_1\cdot t_2$ of $n$,
  \begin{itemize}
    \item if $t_1\equiv u'::e'$, then $n\xrightarrow[u']{e'}t_2$, so $n\sim_p^{sl} n'$ implies $n'\xrightarrow[u']{e'}t_2'$ with $t_2\sim_p^{sl} t_2'$, meaning that $n'$ contains a
    summand $u'::e'\cdot t_2'$. Since $t_2$ and $t_2'$ are normal forms and have sizes smaller than $n$ and $n'$, by the induction hypotheses if $t_2\sim_p^{sl} t_2'$ then
    $t_2=_{AC} t_2'$;
    \item if $t_1\equiv u_1::e_1\leftmerge\cdots\leftmerge u_l::e_l$, then $n\xrightarrow[u]{\{e_1,\cdots,e_l\}}t_2$, so $n\sim_p^{sl} n'$ implies
    $n'\xrightarrow[u]{\{e_1,\cdots,e_l\}}t_2'$ with $t_2\sim_p^{sl} t_2'$, meaning that $n'$ contains a summand $(u_1::e_1\leftmerge\cdots\leftmerge u_l::e_l)\cdot t_2'$. Since $t_2$
    and $t_2'$ are normal forms and have sizes smaller than $n$ and $n'$, by the induction hypotheses if $t_2\sim_p^{sl} t_2'$ then $t_2=_{AC} t_2'$.
  \end{itemize}
\end{itemize}

So, we get $n=_{AC} n'$.

Finally, let $s$ and $t$ be basic APTC with static localities terms, and $s\sim_p^{sl} t$, there are normal forms $n$ and $n'$, such that $s=n$ and $t=n'$. The soundness theorem of
parallelism modulo static location pomset bisimulation equivalence (see Theorem \ref{SPPBE}) yields $s\sim_p^{sl} n$ and $t\sim_p^{sl} n'$, so
$n\sim_p^{sl} s\sim_p^{sl} t\sim_p^{sl} n'$. Since if $n\sim_p^{sl} n'$ then $n=_{AC}n'$, $s=n=_{AC}n'=t$, as desired.
\end{proof}

\begin{theorem}[Soundness of APTC with static localities modulo static location step bisimulation equivalence]\label{SPSBE}
Let $x$ and $y$ be APTC with static localities terms. If $APTC^{sl}\vdash x=y$, then $x\sim_s^{sl} y$.
\end{theorem}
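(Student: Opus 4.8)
The plan is to follow exactly the template used for the pomset case (Theorem \ref{SPPBE}), exploiting two facts already in hand: that $\sim_s^{sl}$ is an equivalence relation, and that it is a congruence with respect to all the operators of APTC with static localities (the congruence theorem stated just above). Since the relation $\vdash$ is generated from the axioms by reflexivity, symmetry, transitivity, and closure under the operators $\cdot$, $+$, $\between$, $\parallel$, $\leftmerge$, $\mid$, $\Theta$, $\triangleleft$ and $u::(\cdot)$, these two facts reduce the statement to a single claim: every axiom $x=y$ in Table \ref{AxiomsForBATC} and Table \ref{AxiomsForParallelism}, read as an equation between open terms, is sound modulo $\sim_s^{sl}$ under every closed substitution. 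Soundness of the full theory then follows by a routine induction on the length of the derivation of $x=y$.

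For each axiom I would exhibit a witnessing static location step bisimulation, taking $R=\{(\mathrm{lhs},\mathrm{rhs})\}\cup\textbf{Id}$ (closed under substitution and, where needed, under the subterms reachable by transitions), indexed by a suitable consistent location association $\varphi$, and verify the transfer conditions directly from the single-event and parallel transition rules in Table \ref{SETRForBATC} and Table \ref{TRForAPTC}. The distinguishing feature of the step case, as opposed to the pomset case, is that each matched move $C\xrightarrow[u]{X}C'$ carries a set $X$ of pairwise concurrent events, so in addition to matching labels and resulting states one must check that the concurrency witnessed on the left is reproduced on the right, with a location label $v$ related to $u$ through $\varphi$. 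Because every axiom is a structural identity that neither creates nor destroys causal dependencies, the pairwise-concurrency side condition is automatically preserved, and the location stamps produced on the two sides agree up to the cla.

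The routine cases are the monoid and distributivity axioms $A1$--$A7$, $P1$, $P8$, $C5$--$C8$, $U6$, $U7$, $U10$--$U13$, together with the location laws $L1$--$L10$, all verified as in $BATC^{sl}$. The part demanding the most attention will be the genuinely parallel axioms, namely $P2$--$P7$ for $\leftmerge$, $C1$--$C4$ for the communication merge $\mid$, $CE5$--$CE6$ for $\Theta$, and $U8$--$U9$ for $\triangleleft$, because here the transition rules combine two location stamps via the independence operation $u\diamond v$, and one must confirm that the location attached to a synchronised or interleaved step on the left-hand side matches the one produced on the right-hand side while respecting the condition $u\diamond u'\Leftrightarrow v\diamond v'$ defining a consistent location association. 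Once these are checked the proof reads ``trivial'' in the paper's sense, with the only genuine content hidden in the bookkeeping of location labels for the parallel operators.
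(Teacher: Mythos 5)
Your proposal is correct and follows essentially the same route as the paper: the paper's proof likewise invokes the fact that $\sim_s^{sl}$ is an equivalence and a congruence with respect to $\between$, $\parallel$, $\leftmerge$, $\mid$, $\Theta$ and $\triangleleft$ to reduce soundness to checking each axiom of Table \ref{AxiomsForParallelism}, and then declares that check trivial and omits it. The only difference is that you spell out the witnessing relations $R=\{(\mathrm{lhs},\mathrm{rhs})\}\cup \textbf{Id}$ and the bookkeeping of location stamps under $\diamond$ and the consistent location association, which is exactly the content the paper leaves implicit.
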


\begin{proof}
Since static location step bisimulation $\sim_s^{sl}$ is both an equivalent and a congruent relation with respect to the operators $\between$, $\parallel$, $\leftmerge$, $\mid$,
$\Theta$ and $\triangleleft$, we only need to check if each axiom in Table \ref{AxiomsForParallelism} is sound modulo static location step bisimulation equivalence, the proof is trivial
and we omit it.
\end{proof}

\begin{theorem}[Completeness of APTC with static localities modulo static location step bisimulation equivalence]\label{CPSBE}
Let $p$ and $q$ be closed APTC with static localities terms, if $p\sim_s^{sl} q$ then $p=q$.
\end{theorem}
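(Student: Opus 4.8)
The plan is to follow the same template used for the completeness theorem modulo static location pomset bisimulation (Theorem \ref{CPPBE}), adapting the matching argument to respect the pairwise-concurrency requirement built into step bisimulation. First I would invoke the elimination theorem of parallelism (Theorem \ref{ETParallelism}) to reduce the problem to closed \emph{basic} APTC with static localities terms: for any closed term $p$ there is a basic term $p'$ with $APTC^{sl}\vdash p=p'$, and by the soundness theorem modulo static location step bisimulation (Theorem \ref{SPSBE}) we have $p\sim_s^{sl} p'$, so it suffices to prove the statement for basic terms. Next I would put basic terms into normal form modulo associativity and commutativity of $+$ (axioms $A1$, $A2$), writing each as $s_1+\cdots+s_k$ where each summand $s_i$ is either a located atomic event $u::e$ or a sequential composition $t_1\cdot t_2$ whose head $t_1$ is either a located atomic event or a parallel product $u_1::e_1\leftmerge\cdots\leftmerge u_l::e_l$ of located atomic events. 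This is exactly the normal form established in the proof of Theorem \ref{CPPBE}.

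The core step is to show that for normal forms $n$ and $n'$, $n\sim_s^{sl} n'$ implies $n=_{AC} n'$, proceeding by induction on the combined sizes of $n$ and $n'$. For a summand of the form $u::e$ the argument is immediate: the termination transition $n\xrightarrow[u]{e}\surd$ must be matched by $n'$ at the same location $u$, so $u::e$ is also a summand of $n'$. For a sequential summand $t_1\cdot t_2$ I would distinguish whether the head $t_1$ is a single located event or a parallel product; in the latter case the firing transition is the step $n\xrightarrow[u]{\{e_1,\cdots,e_l\}}t_2$ whose label is precisely the set of pairwise concurrent events of $t_1$, and step bisimilarity forces $n'$ to answer with a step carrying the same set of events at a matching distribution, yielding a summand $t_1\cdot t_2'$ with $t_2\sim_s^{sl} t_2'$; the induction hypothesis applied to the smaller residuals then gives $t_2=_{AC} t_2'$.

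Finally, combining this normal-form uniqueness result with soundness (Theorem \ref{SPSBE}) closes the argument exactly as in Theorem \ref{CPPBE}: for basic terms $s\sim_s^{sl} t$ there are normal forms $n$ and $n'$ with $s=n$ and $t=n'$, whence $n\sim_s^{sl} s\sim_s^{sl} t\sim_s^{sl} n'$, so $n=_{AC} n'$ and therefore $s=n=_{AC}n'=t$, as desired. The main obstacle, and the only genuine departure from the pomset case, is the parallel-step summand: I must check that the pairwise-concurrency side condition on the events of a step is preserved under the bisimulation matching, so that the matched summand on the $n'$ side is again a legitimate step of the same shape rather than a strictly causal product, and that the location information carried by $\leftmerge$ together with the distribution under $\diamond$ lines up consistently on both sides. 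This is where the step discipline, as opposed to the pomset discipline, does its work, but because the labels are plain finite sets of concurrent located events the verification is routine and can be omitted after the structure above is in place.
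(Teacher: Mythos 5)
Your proposal is correct and follows essentially the same route as the paper's own proof: elimination to basic terms, normal forms modulo AC of $+$, induction on the sizes of normal forms with the same case split (located atomic summand $u::e$, sequential summand headed by a located event or by a left-merge product of located events), and the final soundness argument $n\sim_s^{sl} s\sim_s^{sl} t\sim_s^{sl} n'$ giving $s=n=_{AC}n'=t$. Your closing remark about verifying the pairwise-concurrency side condition in the step case is a point the paper leaves implicit, but it does not change the structure of the argument.
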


\begin{proof}
Firstly, by the elimination theorem of APTC with static localities (see Theorem \ref{ETParallelism}), we know that for each closed APTC with static localities term $p$, there exists a
closed basic APTC with static localities term $p'$, such that $APTC^{sl}\vdash p=p'$, so, we only need to consider closed basic APTC with static localities terms.

The basic terms (see Definition \ref{BTAPTC}) modulo associativity and commutativity (AC) of conflict $+$ (defined by axioms $A1$ and $A2$ in Table \ref{AxiomsForBATC}), and these
equivalences is denoted by $=_{AC}$. Then, each equivalence class $s$ modulo AC of $+$ has the following normal form

$$s_1+\cdots+ s_k$$

with each $s_i$ either an atomic event or of the form

$$t_1\cdot\cdots\cdot t_m$$

with each $t_j$ either an atomic event or of the form

$$u_1\parallel\cdots\parallel u_l$$

with each $u_l$ an atomic event with a location, and each $s_i$ is called the summand of $s$.

Now, we prove that for normal forms $n$ and $n'$, if $n\sim_s^{sl} n'$ then $n=_{AC}n'$. It is sufficient to induct on the sizes of $n$ and $n'$.

\begin{itemize}
  \item Consider a summand $u::e$ of $n$. Then $n\xrightarrow[u]{e}\surd$, so $n\sim_s^{sl} n'$ implies $n'\xrightarrow[u]{e}\surd$, meaning that $n'$ also contains the summand $u::e$.
  \item Consider a summand $t_1\cdot t_2$ of $n$,
  \begin{itemize}
    \item if $t_1\equiv u'::e'$, then $n\xrightarrow[u']{e'}t_2$, so $n\sim_s^{sl} n'$ implies $n'\xrightarrow[u']{e'}t_2'$ with $t_2\sim_s^{sl} t_2'$, meaning that $n'$ contains a
    summand $u'::e'\cdot t_2'$. Since $t_2$ and $t_2'$ are normal forms and have sizes smaller than $n$ and $n'$, by the induction hypotheses if $t_2\sim_s^{sl} t_2'$ then
    $t_2=_{AC} t_2'$;
    \item if $t_1\equiv u_1::e_1\leftmerge\cdots\leftmerge u_l::e_l$, then $n\xrightarrow[u]{\{e_1,\cdots,e_l\}}t_2$, so $n\sim_s^{sl} n'$ implies
    $n'\xrightarrow[u]{\{e_1,\cdots,e_l\}}t_2'$ with $t_2\sim_s^{sl} t_2'$, meaning that $n'$ contains a summand $(u_1::e_1\leftmerge\cdots\leftmerge u_l::e_l)\cdot t_2'$. Since $t_2$
    and $t_2'$ are normal forms and have sizes smaller than $n$ and $n'$, by the induction hypotheses if $t_2\sim_s^{sl} t_2'$ then $t_2=_{AC} t_2'$.
  \end{itemize}
\end{itemize}

So, we get $n=_{AC} n'$.

Finally, let $s$ and $t$ be basic APTC with static localities terms, and $s\sim_s^{sl} t$, there are normal forms $n$ and $n'$, such that $s=n$ and $t=n'$. The soundness theorem of
parallelism modulo static location step bisimulation equivalence (see Theorem \ref{SPSBE}) yields $s\sim_s^{sl} n$ and $t\sim_s^{sl} n'$, so $n\sim_s^{sl} s\sim_s^{sl} t\sim_s^{sl} n'$.
Since if $n\sim_s^{sl} n'$ then $n=_{AC}n'$, $s=n=_{AC}n'=t$, as desired.
\end{proof}

\begin{theorem}[Soundness of APTC with static localities modulo static location hp-bisimulation equivalence]\label{SPHPBE}
Let $x$ and $y$ be APTC with static localities terms. If $APTC^{sl}\vdash x=y$, then $x\sim_{hp}^{sl} y$.
\end{theorem}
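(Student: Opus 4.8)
The plan is to follow exactly the same route used for the pomset and step versions (Theorems \ref{SPPBE} and \ref{SPSBE}), specialised to the history-preserving setting. Provability $APTC^{sl}\vdash x=y$ generates the smallest congruence over the signature of APTC with static localities that contains all the axioms of Table \ref{AxiomsForParallelism} together with those of $BATC^{sl}$. Since the preceding Congruence theorem of APTC with static localities already establishes that $\sim_{hp}^{sl}$ is a congruence with respect to $\between$, $\parallel$, $\leftmerge$, $\mid$, $\Theta$ and $\triangleleft$ (and the $BATC^{sl}$ soundness gives congruence for $::$, $\cdot$ and $+$), the relation $\sim_{hp}^{sl}$ is closed under all contexts. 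Hence it suffices to verify that each individual axiom $s=t$ of Table \ref{AxiomsForParallelism} is valid modulo $\sim_{hp}^{sl}$; closure under contexts and transitivity of $\sim_{hp}^{sl}$ then propagate validity to every derivable equation.

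First I would record that $\sim_{hp}^{sl}$ is an equivalence relation, so that the reduction to a per-axiom check is legitimate, and that the congruence property lets me localise the argument to the two terms on each side of an axiom. Then, for each axiom, I would exhibit a static location hp-bisimulation $R_{\varphi}$ (a posetal relation) containing the pair of closed instances of $s$ and $t$ and closed under the transition behaviour. Concretely, for a candidate pair $(C_1,f,C_2)$ I must match each move $C_1\xrightarrow[u]{e_1}C_1'$ by a move $C_2\xrightarrow[v]{e_2}C_2'$ with $(C_1',f[e_1\mapsto e_2],C_2')\in R_{\varphi\cup\{(u,v)\}}$ and vice versa, checking that the location association stays consistent (a cla) as required by the definition of static location hp-bisimulation. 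For the purely structural axioms ($A6$, $A7$, the associativity/commutativity fragments $P2$, $P3$, and the location-distribution laws $L5$–$L10$) the identity map on executed events serves as the order-isomorphism $f$ and the verification is essentially bookkeeping on the transition rules of Table \ref{TRForAPTC}.

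The hard part will be the axioms that genuinely interact with the truly concurrent, location-indexed semantics: the left-merge laws $P5$–$P7$, the communication-merge laws $C1$–$C4$, and the conflict-elimination / unless laws $CE3$–$CE6$, $U1$–$U13$. For these I must make sure that the order-isomorphism $f$ between the two configurations is preserved under the composite transitions, and that the locations produced by the parallel rules — where the combined location is $u\diamond v$ and the label is a set $\{e_1,e_2\}$ or a communication $\gamma(e_1,e_2)$ — are associated consistently on both sides, so that $\varphi\cup\{(u,v)\}$ remains a consistent location association. The subtle point, as compared with the pomset and step cases, is that hp-bisimulation additionally tracks the causal/identity correspondence of events through $f$; I would therefore check that whenever an axiom reorganises a parallel or communicating subterm, the induced relabelling of events respects both the causal order $\leq$ and the independence structure recorded by $\diamond$.

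Since all of these are finitary checks over the finitely many axioms and transition rules, and each reduces to displaying an appropriate posetal relation and confirming the defining clauses, the verification is routine though tedious; accordingly I would, as in the proofs of Theorems \ref{SPPBE} and \ref{SPSBE}, state that each axiom in Table \ref{AxiomsForParallelism} is sound modulo static location hp-bisimulation equivalence and omit the explicit case-by-case bisimulation constructions.
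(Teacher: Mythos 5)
Your proposal matches the paper's proof: both reduce the statement, via the fact that $\sim_{hp}^{sl}$ is an equivalence and a congruence with respect to the operators of APTC with static localities, to a per-axiom soundness check of Table \ref{AxiomsForParallelism}, and both then omit the routine case-by-case exhibition of the posetal bisimulation relations. Your additional commentary on which axioms require care with the location association and the order-isomorphism $f$ is a faithful elaboration of what "the proof is trivial and we omit it" suppresses, not a different route.
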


\begin{proof}
Since static location hp-bisimulation $\sim_{hp}^{sl}$ is both an equivalent and a congruent relation with respect to the operators $\between$, $\parallel$, $\leftmerge$, $\mid$,
$\Theta$ and $\triangleleft$, we only need to check if each axiom in Table \ref{AxiomsForParallelism} is sound modulo static location hp-bisimulation equivalence, the proof is trivial
and we omit it.
\end{proof}

\begin{theorem}[Completeness of APTC with static localities modulo static location hp-bisimulation equivalence]\label{CPHPBE}
Let $p$ and $q$ be closed APTC with static localities terms, if $p\sim_{hp}^{sl} q$ then $p=q$.
\end{theorem}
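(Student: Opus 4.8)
The plan is to follow the same algebraic completeness strategy used for the pomset and step cases (Theorems \ref{CPPBE} and \ref{CPSBE}), adapting the transition analysis to the posetal structure of hp-bisimulation. First I would invoke the elimination theorem of parallelism (Theorem \ref{ETParallelism}) to reduce the problem to closed basic APTC with static localities terms: for any closed term $p$ there is a closed basic term $p'$ with $APTC^{sl}\vdash p=p'$, so it suffices to treat basic terms (Definition \ref{BTAPTC}). Working modulo associativity and commutativity of $+$ (axioms $A1,A2$), each basic term has a normal form $s_1+\cdots+s_k$ whose summands are either an atomic located event or a sequential product $t_1\cdot\cdots\cdot t_m$, each $t_j$ being an atomic located event or a parallel product $u_1::e_1\leftmerge\cdots\leftmerge u_l::e_l$ of located atomic events.

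The heart of the argument is the claim that for normal forms $n$ and $n'$, $n\sim_{hp}^{sl} n'$ implies $n=_{AC}n'$, proved by induction on the combined sizes of $n$ and $n'$. I would match summands of $n$ against summands of $n'$ using the single-event static-location transitions $\xrightarrow[u]{e}$. For a terminating summand $u::e$, the transition $n\xrightarrow[u]{e}\surd$ must be answered by $n'\xrightarrow[u]{e}\surd$ at the same location $u$, forcing $u::e$ to be a summand of $n'$ as well. For a sequential summand I would peel off the leading located event (or, for a parallel head, the events one at a time in a causally admissible order), and use that the defining clause of static location hp-bisimulation supplies a matching transition together with a posetal triple $(C_1',f[e_1\mapsto e_2],C_2')\in R_{\varphi\cup\{(u,v)\}}$ whose residuals are again hp-bisimilar; the induction hypothesis then yields $=_{AC}$ equality of the residual normal forms.

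The main obstacle, and the point where this proof departs from the pomset and step versions, is that hp-bisimulation fires one event at a time while maintaining an order-isomorphism $f$ between the two configurations. For a parallel summand $u_1::e_1\leftmerge\cdots\leftmerge u_l::e_l$ I cannot match the whole set in a single step; instead I must show that firing the events individually produces matching located events at identical locations and that the resulting isomorphism $f$ stays compatible with the location data. This is exactly where the defining condition of a consistent location association (cla) is used: since $\varphi$ preserves the independence relation $\diamond$ in both directions, concurrent located events on the two sides are matched to concurrent located events at corresponding locations, so that the partial-order structure of the head is reproduced event-by-event and the accumulated isomorphism remains order-preserving. Once this claim is established, I would finish exactly as in the other cases: given basic terms $s\sim_{hp}^{sl} t$ with normal forms $n,n'$, soundness (Theorem \ref{SPHPBE}) gives $s\sim_{hp}^{sl} n$ and $t\sim_{hp}^{sl} n'$, hence $n\sim_{hp}^{sl} n'$, so $n=_{AC}n'$ and therefore $s=n=_{AC}n'=t$, as desired.
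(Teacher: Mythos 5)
Your proposal follows the same skeleton as the paper's proof: elimination (Theorem \ref{ETParallelism}) to reduce to closed basic terms, normal forms modulo $=_{AC}$, an induction on the sizes of normal forms establishing that $n\sim_{hp}^{sl} n'$ implies $n=_{AC}n'$, and the closing soundness argument $n\sim_{hp}^{sl} s\sim_{hp}^{sl} t\sim_{hp}^{sl} n'$. The single point of divergence is the parallel-head summand $(u_1::e_1\leftmerge\cdots\leftmerge u_l::e_l)\cdot t_2$. The paper treats this case exactly as in the pomset and step proofs: it fires the whole multiset transition $n\xrightarrow[u]{\{e_1,\cdots,e_l\}}t_2$, lets $\sim_{hp}^{sl}$ supply a matching transition $n'\xrightarrow[u]{\{e_1,\cdots,e_l\}}t_2'$, and applies the induction hypothesis to the residuals, with no event-by-event decomposition and no explicit bookkeeping of the posetal isomorphism. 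Your event-at-a-time replay, using the consistent location association to argue that independence of locations is preserved so the accumulated isomorphism stays order-preserving, is more faithful to the definition of static location hp-bisimulation, which only speaks of single-event transitions $C_1\xrightarrow[u]{e_1}C_1'$; what it buys is a proof that does not silently borrow the step-matching property of $\sim_s^{sl}$. The price is that the term-level SOS of APTC with static localities (Table \ref{TRForAPTC}) has no interleaving rules for $\parallel$ or $\leftmerge$ --- both components must fire simultaneously --- so ``firing the events individually'' is meaningful only for configurations of the underlying event structure, not for the term LTS, and you would need to make that passage between the two levels explicit. Finally, note a weakness your argument shares with the paper's: both conclude that the answering transition of $n'$ carries literally the same locations (which $=_{AC}$ requires), whereas the definition of $\sim_{hp}^{sl}$ only requires locations that correspond under a consistent location association $\varphi\cup\{(u,v)\}$; neither proof closes this gap.
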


\begin{proof}
Firstly, by the elimination theorem of APTC with static localities (see Theorem \ref{ETParallelism}), we know that for each closed APTC with static localities term $p$, there exists a
closed basic APTC with static localities term $p'$, such that $APTC^{sl}\vdash p=p'$, so, we only need to consider closed basic APTC with static localities terms.

The basic terms (see Definition \ref{BTAPTC}) modulo associativity and commutativity (AC) of conflict $+$ (defined by axioms $A1$ and $A2$ in Table \ref{AxiomsForBATC}), and these
equivalences is denoted by $=_{AC}$. Then, each equivalence class $s$ modulo AC of $+$ has the following normal form

$$s_1+\cdots+ s_k$$

with each $s_i$ either an atomic event or of the form

$$t_1\cdot\cdots\cdot t_m$$

with each $t_j$ either an atomic event or of the form

$$u_1\parallel\cdots\parallel u_l$$

with each $u_l$ an atomic event with a location, and each $s_i$ is called the summand of $s$.

Now, we prove that for normal forms $n$ and $n'$, if $n\sim_{hp}^{sl} n'$ then $n=_{AC}n'$. It is sufficient to induct on the sizes of $n$ and $n'$.

\begin{itemize}
  \item Consider a summand $u::e$ of $n$. Then $n\xrightarrow[u]{e}\surd$, so $n\sim_{hp}^{sl} n'$ implies $n'\xrightarrow[u]{e}\surd$, meaning that $n'$ also contains the summand $u::e$.
  \item Consider a summand $t_1\cdot t_2$ of $n$,
  \begin{itemize}
    \item if $t_1\equiv u'::e'$, then $n\xrightarrow[u']{e'}t_2$, so $n\sim_{hp}^{sl} n'$ implies $n'\xrightarrow[u']{e'}t_2'$ with $t_2\sim_{hp}^{sl} t_2'$, meaning that $n'$ contains
    a summand $u'::e'\cdot t_2'$. Since $t_2$ and $t_2'$ are normal forms and have sizes smaller than $n$ and $n'$, by the induction hypotheses if $t_2\sim_{hp}^{sl} t_2'$ then
    $t_2=_{AC} t_2'$;
    \item if $t_1\equiv u_1::e_1\leftmerge\cdots\leftmerge u_l::e_l$, then $n\xrightarrow[u]{\{e_1,\cdots,e_l\}}t_2$, so $n\sim_{hp}^{sl} n'$ implies
    $n'\xrightarrow[u]{\{e_1,\cdots,e_l\}}t_2'$ with $t_2\sim_{hp}^{sl} t_2'$, meaning that $n'$ contains a summand $(u_1::e_1\leftmerge\cdots\leftmerge u_l::e_l)\cdot t_2'$. Since
    $t_2$ and $t_2'$ are normal forms and have sizes smaller than $n$ and $n'$, by the induction hypotheses if $t_2\sim_{hp}^{sl} t_2'$ then $t_2=_{AC} t_2'$.
  \end{itemize}
\end{itemize}

So, we get $n=_{AC} n'$.

Finally, let $s$ and $t$ be basic APTC with static localities terms, and $s\sim_{hp}^{sl} t$, there are normal forms $n$ and $n'$, such that $s=n$ and $t=n'$. The soundness theorem of
parallelism modulo static location hp-bisimulation equivalence (see Theorem \ref{SPPBE}) yields $s\sim_{hp}^{sl} n$ and $t\sim_{hp}^{sl} n'$, so
$n\sim_{hp}^{sl} s\sim_{hp}^{sl} t\sim_{hp}^{sl} n'$. Since if $n\sim_{hp}^{sl} n'$ then $n=_{AC}n'$, $s=n=_{AC}n'=t$, as desired.
\end{proof}

\begin{theorem}[Soundness of APTC with static localities modulo static location hhp-bisimulation equivalence]\label{SPHPBE}
Let $x$ and $y$ be APTC with static localities terms. If $APTC^{sl}\vdash x=y$, then $x\sim_{hhp}^{sl} y$.
\end{theorem}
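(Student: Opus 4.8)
The plan is to follow exactly the two-step strategy already used for the pomset and step soundness theorems (Theorems \ref{SPPBE} and \ref{SPSBE}) and for the hp- case. First I would invoke the Congruence theorem of APTC with static localities, which already establishes that $\sim_{hhp}^{sl}$ is both an equivalence relation and a congruence with respect to all the operators $\between$, $\parallel$, $\leftmerge$, $\mid$, $\Theta$ and $\triangleleft$ (and, through the underlying $BATC^{sl}$ layer, with respect to $::$, $\cdot$ and $+$). Since provable equality $APTC^{sl}\vdash x=y$ is by definition the closure of the axioms of Table \ref{AxiomsForParallelism} (together with those of Table \ref{AxiomsForBATC}) under reflexivity, symmetry, transitivity and the congruence rules, and since $\sim_{hhp}^{sl}$ is already an equivalence and a congruence, these two facts reduce the entire theorem to a single local obligation: verifying that each individual axiom $s=t$ in Table \ref{AxiomsForParallelism} satisfies $s\sim_{hhp}^{sl}t$.

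For each such axiom I would exhibit a concrete witnessing posetal relation and check the defining clauses of a static location hereditary history-preserving bisimulation. Concretely, I would take the downward closure of the posetal relation containing $(\emptyset,\emptyset,\emptyset)$ together with the triples $(C_s,f,C_t)$, where $f$ is the evident isomorphism matching the events fired on the two sides of the axiom and where the index $\varphi$ is the consistent location association accumulated along the matched transitions. The transfer condition---whenever $C_1\xrightarrow[u]{e_1}C_1'$ there is $C_2\xrightarrow[v]{e_2}C_2'$ with $(C_1',f[e_1\mapsto e_2],C_2')\in R_{\varphi\cup\{(u,v)\}}$, and symmetrically---then follows by a routine case analysis on the transition rules of Table \ref{TRForAPTC}. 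For the commutativity and associativity axioms ($P2$, $P3$) and the communication axioms ($C1$--$C8$) the location labels on the two sides agree literally up to the symmetry of $\diamond$, so $\varphi$ stays a cla automatically; for the $\Theta$/$\triangleleft$ laws and the $\delta$ axioms the relevant configurations are finite and shallow, so the checks are immediate.

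The hard part will be the hereditary clause, namely ensuring that the witnessing posetal relation is genuinely \emph{downward closed}, which is precisely what distinguishes $\sim_{hhp}^{sl}$ from plain $\sim_{hp}^{sl}$. For the axioms that rearrange genuinely concurrent behaviour---principally $P2$, $P3$, the left-merge laws $P4$--$P9$ and the interaction laws $CE5$, $CE6$, $U8$, $U9$---I must check that every sub-triple $(C_1,f,C_2)\subseteq(C_1',f',C_2')$ of a related pair is again in the relation and that the location association restricted to those sub-configurations is still a cla, i.e.\ that the $u\diamond v$ combinations produced by the parallel rules continue to satisfy $(u,v),(u',v')\in\varphi\Rightarrow(u\diamond u'\Leftrightarrow v\diamond v')$ on every sub-configuration. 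Because each axiom relates terms whose causal and parallel structure is identical up to the bijection $f$---the operators merely regroup the same underlying events and their locations---the induced map on every sub-configuration is already an isomorphism preserving both causality and the location labelling, so downward closure holds by construction and the cla condition is inherited. Having discharged every row of Table \ref{AxiomsForParallelism} in this way, soundness follows; as in the companion theorems the remaining calculations are routine and may be omitted.
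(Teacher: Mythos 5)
Your proposal follows exactly the paper's own strategy: since $\sim_{hhp}^{sl}$ is already shown to be an equivalence and a congruence with respect to all the operators, soundness of provable equality reduces to checking each axiom of Table \ref{AxiomsForParallelism} individually, which the paper dismisses as trivial and omits. Your additional sketch of the per-axiom verification (the witnessing posetal relations, the cla bookkeeping, and the downward-closure check that distinguishes the hhp- case) simply fills in the detail the paper leaves out, so the two proofs coincide in approach.
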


\begin{proof}
Since static location hhp-bisimulation $\sim_{hhp}^{sl}$ is both an equivalent and a congruent relation with respect to the operators $\between$, $\parallel$, $\leftmerge$, $\mid$,
$\Theta$ and $\triangleleft$, we only need to check if each axiom in Table \ref{AxiomsForParallelism} is sound modulo static location hhp-bisimulation equivalence, the proof is trivial
and we omit it.
\end{proof}

\begin{theorem}[Completeness of APTC with static localities modulo static location hhp-bisimulation equivalence]\label{CPHPBE}
Let $p$ and $q$ be closed APTC with static localities terms, if $p\sim_{hhp}^{sl} q$ then $p=q$.
\end{theorem}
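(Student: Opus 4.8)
The plan is to mirror exactly the structure of the three preceding completeness proofs (for $\sim_p^{sl}$, $\sim_s^{sl}$ and $\sim_{hp}^{sl}$), since the only new ingredient here is that $\sim_{hhp}^{sl}$ is the downward-closed refinement of $\sim_{hp}^{sl}$. First I would invoke the elimination theorem of APTC with static localities (Theorem \ref{ETParallelism}) to reduce the problem to closed \emph{basic} terms: for each closed term $p$ there is a basic term $p'$ with $APTC^{sl}\vdash p=p'$, so it suffices to prove the claim for basic terms. Then I would put basic terms into normal form modulo associativity and commutativity of $+$ (axioms $A1,A2$), writing each class as $s_1+\cdots+s_k$ where each summand $s_i$ is either an atomic event (possibly located) or a product $t_1\cdot\cdots\cdot t_m$ whose factors $t_j$ are themselves atomic events or parallel compositions $u_1\parallel\cdots\parallel u_l$ of located atomic events, exactly as in the $\sim_{hp}^{sl}$ completeness proof.

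The heart of the argument is the claim: for normal forms $n$ and $n'$, if $n\sim_{hhp}^{sl}n'$ then $n=_{AC}n'$, proved by induction on the combined sizes of $n$ and $n'$. I would match summands in the same way as the hp case: a summand of the form $u{::}e$ gives a transition $n\xrightarrow[u]{e}\surd$, which by $n\sim_{hhp}^{sl}n'$ must be matched by $n'\xrightarrow[u]{e}\surd$, forcing $n'$ to contain $u{::}e$; a summand $u{::}e\cdot s$ gives $n\xrightarrow[u]{e}s$, matched by $n'\xrightarrow[u]{e}t$ with $s\sim_{hhp}^{sl}t$, and since $s,t$ are smaller normal forms the induction hypothesis yields $s=_{AC}t$, so $n'$ contains $u{::}e\cdot t$. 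The located-parallel summands $u_1{::}e_1\leftmerge\cdots\leftmerge u_l{::}e_l$ are handled analogously via the corresponding step transition. By symmetry every summand of $n'$ is likewise a summand of $n$, whence $n=_{AC}n'$.

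Finally I would close the proof using the soundness theorem for $\sim_{hhp}^{sl}$ (Theorem \ref{SPHPBE}, the hhp soundness result): given basic terms $s,t$ with $s\sim_{hhp}^{sl}t$, take normal forms $n,n'$ with $s=n$ and $t=n'$; soundness gives $s\sim_{hhp}^{sl}n$ and $t\sim_{hhp}^{sl}n'$, hence $n\sim_{hhp}^{sl}s\sim_{hhp}^{sl}t\sim_{hhp}^{sl}n'$, so $n\sim_{hhp}^{sl}n'$, and by the claim $n=_{AC}n'$, giving $s=n=_{AC}n'=t$ as desired.

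I expect the main obstacle to be the downward-closure requirement intrinsic to hhp-bisimulation: unlike the pomset, step, and even hp cases, an hhp-bisimulation must remain a bisimulation when restricted to subconfigurations, so I must check that the witnessing posetal relation used in the matching step genuinely satisfies downward closure rather than merely being an hp-relation. In practice, because the normal-form summands are built from located prefixes and parallel products whose causal/location structure is fully exposed by the single-step transition rules, the isomorphism $f[e_1\mapsto e_2]$ produced at each matching step restricts correctly to all subconfigurations, so the downward-closure obligation is discharged exactly as the size induction proceeds; nonetheless this is the one point where the argument is not a verbatim copy of the hp case and deserves the explicit verification.
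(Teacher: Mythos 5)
Your proposal follows essentially the same route as the paper's own proof: elimination to closed basic terms via Theorem \ref{ETParallelism}, normal forms modulo AC of $+$, an induction on sizes showing $n\sim_{hhp}^{sl}n'$ implies $n=_{AC}n'$ by matching located atomic summands and $t_1\cdot t_2$ summands through their transitions, and closing with the hhp soundness theorem to get $s=n=_{AC}n'=t$. The only difference is that you explicitly flag and discharge the downward-closure obligation peculiar to $\sim_{hhp}^{sl}$, a point the paper's proof silently elides, so your write-up is if anything slightly more careful than the original.
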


\begin{proof}
Firstly, by the elimination theorem of APTC with static localities (see Theorem \ref{ETParallelism}), we know that for each closed APTC with static localities term $p$, there exists a
closed basic APTC with static localities term $p'$, such that $APTC^{sl}\vdash p=p'$, so, we only need to consider closed basic APTC with static localities terms.

The basic terms (see Definition \ref{BTAPTC}) modulo associativity and commutativity (AC) of conflict $+$ (defined by axioms $A1$ and $A2$ in Table \ref{AxiomsForBATC}), and these
equivalences is denoted by $=_{AC}$. Then, each equivalence class $s$ modulo AC of $+$ has the following normal form

$$s_1+\cdots+ s_k$$

with each $s_i$ either an atomic event or of the form

$$t_1\cdot\cdots\cdot t_m$$

with each $t_j$ either an atomic event or of the form

$$u_1\parallel\cdots\parallel u_l$$

with each $u_l$ an atomic event with a location, and each $s_i$ is called the summand of $s$.

Now, we prove that for normal forms $n$ and $n'$, if $n\sim_{hhp}^{sl} n'$ then $n=_{AC}n'$. It is sufficient to induct on the sizes of $n$ and $n'$.

\begin{itemize}
  \item Consider a summand $u::e$ of $n$. Then $n\xrightarrow[u]{e}\surd$, so $n\sim_{hhp}^{sl} n'$ implies $n'\xrightarrow[u]{e}\surd$, meaning that $n'$ also contains the summand $u::e$.
  \item Consider a summand $t_1\cdot t_2$ of $n$,
  \begin{itemize}
    \item if $t_1\equiv u'::e'$, then $n\xrightarrow[u']{e'}t_2$, so $n\sim_{hhp}^{sl} n'$ implies $n'\xrightarrow[u']{e'}t_2'$ with $t_2\sim_{hhp}^{sl} t_2'$, meaning that $n'$ contains
    a summand $u'::e'\cdot t_2'$. Since $t_2$ and $t_2'$ are normal forms and have sizes smaller than $n$ and $n'$, by the induction hypotheses if $t_2\sim_{hhp}^{sl} t_2'$ then
    $t_2=_{AC} t_2'$;
    \item if $t_1\equiv u_1::e_1\leftmerge\cdots\leftmerge u_l::e_l$, then $n\xrightarrow[u]{\{e_1,\cdots,e_l\}}t_2$, so $n\sim_{hhp}^{sl} n'$ implies
    $n'\xrightarrow[u]{\{e_1,\cdots,e_l\}}t_2'$ with $t_2\sim_{hhp}^{sl} t_2'$, meaning that $n'$ contains a summand $(u_1::e_1\leftmerge\cdots\leftmerge u_l::e_l)\cdot t_2'$. Since
    $t_2$ and $t_2'$ are normal forms and have sizes smaller than $n$ and $n'$, by the induction hypotheses if $t_2\sim_{hhp}^{sl} t_2'$ then $t_2=_{AC} t_2'$.
  \end{itemize}
\end{itemize}

So, we get $n=_{AC} n'$.

Finally, let $s$ and $t$ be basic APTC with static localities terms, and $s\sim_{hhp}^{sl} t$, there are normal forms $n$ and $n'$, such that $s=n$ and $t=n'$. The soundness theorem of
parallelism modulo static location hhp-bisimulation equivalence (see Theorem \ref{SPPBE}) yields $s\sim_{hhp}^{sl} n$ and $t\sim_{hhp}^{sl} n'$, so
$n\sim_{hhp}^{sl} s\sim_{hhp}^{sl} t\sim_{hhp}^{sl} n'$. Since if $n\sim_{hhp}^{sl} n'$ then $n=_{AC}n'$, $s=n=_{AC}n'=t$, as desired.
\end{proof}


The transition rules of encapsulation operator $\partial_H$ are shown in Table \ref{TRForEncapsulation}.

\begin{center}
    \begin{table}
        $$\frac{x\xrightarrow[u]{e}\surd}{\partial_H(x)\xrightarrow[u]{e}\surd}\quad (e\notin H)\quad\quad\frac{x\xrightarrow[u]{e}x'}{\partial_H(x)\xrightarrow[u]{e}\partial_H(x')}\quad(e\notin H)$$
        \caption{Transition rules of encapsulation operator $\partial_H$}
        \label{TRForEncapsulation}
    \end{table}
\end{center}

Based on the transition rules for encapsulation operator $\partial_H$ in Table \ref{TRForEncapsulation}, we design the axioms as Table \ref{AxiomsForEncapsulation} shows.

\begin{center}
    \begin{table}
        \begin{tabular}{@{}ll@{}}
            \hline No. &Axiom\\
            $D1$ & $e\notin H\quad\partial_H(e) = e$\\
            $D2$ & $e\in H\quad \partial_H(e) = \delta$\\
            $D3$ & $\partial_H(\delta) = \delta$\\
            $D4$ & $\partial_H(x+ y) = \partial_H(x)+\partial_H(y)$\\
            $D5$ & $\partial_H(x\cdot y) = \partial_H(x)\cdot\partial_H(y)$\\
            $D6$ & $\partial_H(x\leftmerge y) = \partial_H(x)\leftmerge\partial_H(y)$\\
            $L11$ & $u::\partial_H(x) = \partial_H(u::x)$\\
        \end{tabular}
        \caption{Axioms of encapsulation operator}
        \label{AxiomsForEncapsulation}
    \end{table}
\end{center}

\begin{theorem}[Congruence theorem of encapsulation operator $\partial_H$]
Static location truly concurrent bisimulation equivalences $\sim_p^{sl}$, $\sim_s^{sl}$, $\sim_{hp}^{sl}$ and $\sim_{hhp}^{sl}$ are all congruences with respect to encapsulation
operator $\partial_H$.
\end{theorem}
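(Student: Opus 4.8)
The plan is to establish substitutivity of $\partial_H$ directly. Since each of $\sim_p^{sl}$, $\sim_s^{sl}$, $\sim_{hp}^{sl}$ and $\sim_{hhp}^{sl}$ has already been shown to be an equivalence relation, it suffices to prove that $x \sim_*^{sl} y$ implies $\partial_H(x) \sim_*^{sl} \partial_H(y)$ for each $* \in \{p,s,hp,hhp\}$. For the pomset and step cases I would start from a witnessing static location bisimulation $R_\varphi$ with $(x,y) \in R_\varphi$ and build the candidate relation obtained by wrapping both components in the encapsulation operator, namely $R_\varphi' = \{(\partial_H(x'), \partial_H(y')) : (x',y') \in R_\varphi\} \cup \textbf{Id}$, and then verify the defining transfer clauses of the relevant bisimulation.

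The observation driving every transfer check is the shape of the transition rules for $\partial_H$ in Table \ref{TRForEncapsulation}: a move $\partial_H(x') \xrightarrow[u]{e} \partial_H(x'')$ (or $\partial_H(x') \xrightarrow[u]{e} \surd$) is derivable exactly when $x' \xrightarrow[u]{e} x''$ (resp.\ $x' \xrightarrow[u]{e} \surd$) holds with $e \notin H$, and $\partial_H$ leaves the location label $u$ untouched. So, given a move of $\partial_H(x')$, I would strip off $\partial_H$ to recover the underlying move of $x'$ at location $u$, invoke $(x',y') \in R_\varphi$ to obtain a matching move of $y'$ at some location $v$ whose step is pomset-isomorphic to that of $x'$ — in particular carrying the same events, so that the side condition $e \notin H$ is inherited — and with the resulting pair in $R_{\varphi \cup \{(u,v)\}}$. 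Reapplying the $\partial_H$ rule then yields $\partial_H(y') \xrightarrow[v]{e'} \partial_H(y'')$ with $(\partial_H(x''), \partial_H(y'')) \in R_{\varphi \cup \{(u,v)\}}'$, which is precisely the required clause. The symmetric direction is identical, and the termination clauses are immediate because $\partial_H$ preserves both $u$ and the outgoing-transition structure. The step case is the same argument with $X$ restricted to pairwise concurrent events.

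For the hp-case I would additionally carry the configuration isomorphism, taking the candidate posetal relation $\{(\partial_H(C_1), f, \partial_H(C_2)) : (C_1,f,C_2) \in R_\varphi\}$ over $\mathcal{C}(\mathcal{E}_1)\overline{\times}\mathcal{C}(\mathcal{E}_2)$, and checking that the update $f[e_1 \mapsto e_2]$ attached to the reassembled $\partial_H$-move is exactly the one supplied by the underlying hp-bisimulation. For the hhp-case I would use the same wrapped relation but must also confirm that encapsulation preserves downward closure, so that a downward-closed $R_\varphi$ produces a downward-closed $R_\varphi'$.

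The main obstacle I anticipate is precisely this last point in the hhp-case: to invoke downward closure of $R_\varphi$ I must verify that whenever an encapsulated triple $(\partial_H(C_1'), f', \partial_H(C_2'))$ sits pointwise below an element of $R_\varphi'$, the corresponding un-encapsulated triple $(C_1', f', C_2')$ sits pointwise below the matching element of $R_\varphi$, so that the closure hypothesis applies and returns it to $R_\varphi$. Tracking the interaction between the $\partial_H$ wrapper, the posetal ordering on configurations and the consistent location association $\varphi$ is where genuine care is needed; the remaining three cases reduce to the routine verification of the transfer diagram sketched above.
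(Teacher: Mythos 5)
Your proposal follows the same route as the paper: the paper likewise notes that the four relations are already equivalences, reduces the theorem to showing each is preserved by $\partial_H$, and then omits that verification as trivial. Your wrapped-relation construction $R_{\varphi}'=\{(\partial_H(x'),\partial_H(y')):(x',y')\in R_{\varphi}\}\cup \textbf{Id}$, with the label-preservation argument ensuring the side condition $e\notin H$ transfers across matched moves and the downward-closure check for the hhp-case, is exactly the standard detail the paper leaves implicit, so the two proofs agree in substance.
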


\begin{proof}
It is easy to see that static location pomset, step, hp-, hhp- bisimulation are all equivalent relations on APTC with static localities terms, we only need to prove that
$\sim_p^{sl}$, $\sim_s^{sl}$, $\sim_{hp}^{sl}$ and $\sim_{hhp}^{sl}$ are all preserved by the operator $\partial_H$,
the proof is trivial and we omit it.
\end{proof}

\begin{theorem}[Elimination theorem of APTC with static localities]\label{ETEncapsulation}
Let $p$ be a closed APTC with static localities term including the encapsulation operator $\partial_H$. Then there is a basic APTC with static localities term $q$ such that
$APTC\vdash p=q$.
\end{theorem}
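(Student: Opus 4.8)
The plan is to follow the same two-stage strategy used for Theorem \ref{ETParallelism}: first orient the axioms governing $\partial_H$ into a terminating term rewrite system, and then show that its normal forms are exactly the basic terms of Definition \ref{BTAPTC}. Concretely, I would extend the rewrite system of Table \ref{TRSForAPTC} with the rules obtained by orienting the encapsulation axioms $D1$--$D6$ and $L11$ of Table \ref{AxiomsForEncapsulation} from left to right, e.g. $\partial_H(e)\rightarrow e$ for $e\notin H$, $\partial_H(e)\rightarrow\delta$ for $e\in H$, $\partial_H(\delta)\rightarrow\delta$, $\partial_H(x+y)\rightarrow\partial_H(x)+\partial_H(y)$, $\partial_H(x\cdot y)\rightarrow\partial_H(x)\cdot\partial_H(y)$, $\partial_H(x\leftmerge y)\rightarrow\partial_H(x)\leftmerge\partial_H(y)$, and $u::\partial_H(x)\rightarrow\partial_H(u::x)$.

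To apply Theorem \ref{SN} I would refine the well-founded precedence used previously to $:: > \partial_H > \leftmerge > \cdot > +$, keeping $\partial_H$ above $\delta$ and the lexicographical status of $::$ on its first argument, and then verify $s>_{lpo} t$ for each new rule. The distributivity rules $D4$--$D6$ decrease because $\partial_H$ outranks $+$, $\cdot$ and $\leftmerge$, so the root of the redex dominates the root of the contractum while each $\partial_H(\cdot)$ on the right is compared against the left via the subterm property; the elimination rules $D1$--$D3$ decrease directly by the subterm property; and $L11$ decreases precisely because $::$ is kept above $\partial_H$, so that rewriting $u::\partial_H(x)$ to $\partial_H(u::x)$ requires only $u::\partial_H(x)>_{lpo} u::x$, which holds since $\partial_H(x)$ properly contains $x$. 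Together with the finitely many rules already in Table \ref{TRSForAPTC}, Theorem \ref{SN} then yields strong normalization of the combined system.

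Next I would show that every normal form is a basic term. As in the proof of Theorem \ref{ETParallelism}, I take the smallest non-basic subterm $p'$ of a normal form; all cases except $p'\equiv\partial_H(p_1)$ are handled verbatim as there, since a normal form of the combined system is in particular a normal form of the system of Table \ref{TRSForAPTC}. For the new case $p_1$ is basic, and a case analysis on the structure of $p_1$ (Definition \ref{BTAPTC}) shows a rule always applies: if $p_1\in\mathbb{E}$ then $D1$ or $D2$; if $p_1\equiv\delta$ then $D3$; if $p_1\equiv u::t$ then $L11$; if $p_1\equiv e\cdot t$ then $D5$; if $p_1\equiv t+s$ then $D4$; and if $p_1\equiv t\leftmerge s$ then $D6$. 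In every subcase $p$ is reducible, contradicting that it is a normal form, so $\partial_H$ cannot occur in a normal form and normal forms are basic $APTC^{sl}$ terms.

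Finally, since each rewrite rule is an instance of an axiom in Table \ref{AxiomsForEncapsulation}, $p\rightarrow q$ implies $APTC^{sl}\vdash p=q$; rewriting $p$ to its normal form $q$ therefore produces a basic term with $APTC^{sl}\vdash p=q$, as required. The main obstacle I anticipate is calibrating the precedence so that strong normalization survives the simultaneous presence of the distributivity rules and the location-commutation rule $L11$: the former push $\partial_H$ inward and want $\partial_H$ ranked high, whereas $L11$ moves $\partial_H$ outward across $::$ and only decreases if $::$ is ranked above $\partial_H$. Checking that the single precedence $:: > \partial_H > \leftmerge > \cdot > +$ reconciles both demands, and that no rule in the combined system breaks $>_{lpo}$, is the delicate step; the remaining structural case analysis is routine.
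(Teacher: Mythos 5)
Your overall strategy is exactly the paper's: orient $D1$--$D6$ and $L11$ left-to-right into a TRS extending Table \ref{TRSForAPTC}, get strong normalization from Theorem \ref{SN} via an LPO over a refined precedence, and then argue that normal forms are basic by taking a smallest non-basic subterm. Your termination argument is fine (and more careful than the paper's, which never says where $\partial_H$ sits in the precedence). The genuine gap is in your second stage, in the case $p_1\equiv u::t$. The rule you invoke there, $L11$ oriented as $u::\partial_H(x)\rightarrow\partial_H(u::x)$, rewrites terms whose root is $::$ with $\partial_H$ immediately below; it does not apply to $\partial_H(u::t)$, which matches the rule's \emph{right}-hand side, not its left-hand side. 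Since $D1$/$D2$ fire only on atomic events $e$, and $D3$--$D6$ need $\delta$, $+$, $\cdot$ or $\leftmerge$ at the root of the argument, no rule of your system applies to $\partial_H(loc::e)$: it is a normal form of your TRS, yet it contains $\partial_H$ and so is not basic. Hence the claim that normal forms are basic fails, and with it the proof. The tension you flagged at the end is real, but you located it in the wrong place: the precedence is fine; the problem is that the inward-pushing rules $D4$--$D6$ and the outward-pushing $L11$ jointly strand $\partial_H$ exactly on top of located atoms, where nothing removes it.

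The repair is easy and should be made explicit. Either orient $L11$ right-to-left, $\partial_H(u::x)\rightarrow u::\partial_H(x)$, with $\partial_H$ above $::$ in the precedence (this also passes the LPO check: $\partial_H(u::x)>_{lpo}u$ and $\partial_H(u::x)>_{lpo}\partial_H(x)$ by the subterm property together with the argument comparison $u::x>_{lpo}x$), so that $\partial_H(u::e)\rightarrow u::\partial_H(e)\rightarrow u::e$, or $\rightarrow u::\delta\rightarrow\delta$; or keep your orientation and add the two rules $\partial_H(u::e)\rightarrow u::e$ for $e\notin H$ and $\partial_H(u::e)\rightarrow\delta$ for $e\in H$, which is precisely what the paper does for the abstraction operator via axioms $L15$/$L16$. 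For what it is worth, the paper's own proof has the same wrinkle: it disposes of the case $p_1\equiv u::e$ by asserting that $RD1$ or $RD2$ applies, which, read literally, they do not. So you have reproduced a defect of the original rather than invented a new one, but a self-contained proof needs one of the repairs above.
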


\begin{proof}
(1) Firstly, suppose that the following ordering on the signature of APTC with static localities is defined: $::>\leftmerge > \cdot > +$ and the symbol $::$ is given the lexicographical
status for the first argument, then for each rewrite rule $p\rightarrow q$ in Table \ref{TRSForEncapsulation} relation $p>_{lpo} q$ can easily be proved. We obtain that the term
rewrite system shown in Table \ref{TRSForEncapsulation} is strongly normalizing, for it has finitely many rewriting rules, and $>$ is a well-founded ordering on the signature of APTC
with static localities, and if $s>_{lpo} t$, for each rewriting rule $s\rightarrow t$ is in Table \ref{TRSForEncapsulation} (see Theorem \ref{SN}).

\begin{center}
    \begin{table}
        \begin{tabular}{@{}ll@{}}
            \hline No. &Rewriting Rule\\
            $RD1$ & $e\notin H\quad\partial_H(e) \rightarrow e$\\
            $RD2$ & $e\in H\quad \partial_H(e) \rightarrow \delta$\\
            $RD3$ & $\partial_H(\delta) \rightarrow \delta$\\
            $RD4$ & $\partial_H(x+ y) \rightarrow \partial_H(x)+\partial_H(y)$\\
            $RD5$ & $\partial_H(x\cdot y) \rightarrow \partial_H(x)\cdot\partial_H(y)$\\
            $RD6$ & $\partial_H(x\leftmerge y) \rightarrow \partial_H(x)\leftmerge\partial_H(y)$\\
            $RL11$ & $u::\partial_H(x) \rightarrow \partial_H(u::x)$\\
        \end{tabular}
        \caption{Term rewrite system of encapsulation operator $\partial_H$}
        \label{TRSForEncapsulation}
    \end{table}
\end{center}

(2) Then we prove that the normal forms of closed APTC with static localities terms including encapsulation operator $\partial_H$ are basic APTC with static localities terms.

Suppose that $p$ is a normal form of some closed APTC with static localities term and suppose that $p$ is not a basic APTC with static localities term. Let $p'$ denote the smallest
sub-term of $p$ which is not a basic APTC with static localities term. It implies that each sub-term of $p'$ is a basic APTC with static localities term. Then we prove that $p$ is not
a term in normal form. It is sufficient to induct on the structure of $p'$, we only prove the new case $p'\equiv \partial_H(p_1)$:

\begin{itemize}
  \item Case $p_1\equiv e$. The transition rules $RD1$ or $RD2$ can be applied, so $p$ is not a normal form;
  \item Case $p_1\equiv u::e$. The transition rules $RD1$ or $RD2$ can be applied, so $p$ is not a normal form;
  \item Case $p_1\equiv \delta$. The transition rules $RD3$ can be applied, so $p$ is not a normal form;
  \item Case $p_1\equiv p_1'+ p_1''$. The transition rules $RD4$ can be applied, so $p$ is not a normal form;
  \item Case $p_1\equiv p_1'\cdot p_1''$. The transition rules $RD5$ can be applied, so $p$ is not a normal form;
  \item Case $p_1\equiv p_1'\leftmerge p_1''$. The transition rules $RD6$ can be applied, so $p$ is not a normal form.
\end{itemize}
\end{proof}

\begin{theorem}[Soundness of APTC with static localities modulo static location pomset bisimulation equivalence]\label{SAPTCPBE}
Let $x$ and $y$ be APTC with static localities terms including encapsulation operator $\partial_H$. If $APTC^{sl}\vdash x=y$, then $x\sim_p^{sl} y$.
\end{theorem}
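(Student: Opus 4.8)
The plan is to reduce the soundness claim to a finite, axiom-by-axiom verification, following exactly the pattern of the preceding soundness results in this section. First I would record two facts already available: that $\sim_p^{sl}$ is an equivalence relation on APTC with static localities terms, and that it is a congruence for every operator of the signature---the Congruence theorem of APTC with static localities gives preservation by $\between$, $\parallel$, $\leftmerge$, $\mid$, $\Theta$ and $\triangleleft$, while the Congruence theorem of encapsulation operator $\partial_H$ gives preservation by $\partial_H$. Since $\sim_p^{sl}$ is both an equivalence and a congruence, any proof of $APTC^{sl}\vdash x=y$ factors through finitely many rewrites by single axioms inside congruence-closed contexts, so it suffices to show that each axiom, read as an identity between its two sides, relates $\sim_p^{sl}$-equivalent terms.

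Second, I would note that the axioms for $+$, $\cdot$, $::$ and the parallelism operators have already been proved sound (in the BATC soundness theorems and in the parallelism soundness Theorem \ref{SPPBE}), so the only new obligations are the encapsulation axioms $D1$--$D6$ and $L11$ of Table \ref{AxiomsForEncapsulation}. For each such axiom I would take the witnessing relation $R=\{(\mathrm{lhs},\mathrm{rhs})\}\cup\textbf{Id}$ and check that it is a static location pomset bisimulation for a suitable family of distributions, matching transitions directly from the rules for $\partial_H$ in Table \ref{TRForEncapsulation}. The key observation is that these rules copy both the location $u$ and the label unchanged from premise to conclusion under the side condition $e\notin H$; hence every move $x\xrightarrow[u]{\alpha}x'$ is answered at the \emph{same} location $u$, and the consistent location association $\varphi$ is extended by the identical pair $(u,u)$ on the two sides, so the location bookkeeping demanded by the definition of $\sim_p^{sl}$ is met automatically.

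The step needing the most care is $D6$, $\partial_H(x\leftmerge y)=\partial_H(x)\leftmerge\partial_H(y)$, in combination with the distribution law $L11$, since there encapsulation must commute with a genuine pomset (parallel) step rather than with a single event: one has to verify that a step carrying the composite location $u\diamond v$ produced by the left-merge rules survives encapsulation on the left precisely when each of its components survives after $\partial_H$ is distributed on the right, and that the two sides agree on the resulting location and on the induced independence relation. I expect this matching of $u\diamond v$ across the distribution of $\partial_H$ to be the main obstacle, though it is routine once one uses that $\partial_H$ alters no location and that $\diamond$, hence the consistency condition on $\varphi$, is preserved. The remaining axioms $D1$--$D5$ each match a single labelled transition at an unchanged location and are immediate; in keeping with the style of the earlier proofs I would state that these verifications are routine and omit the explicit calculations.
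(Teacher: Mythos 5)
Your proposal matches the paper's proof in both structure and substance: the paper likewise invokes that $\sim_p^{sl}$ is an equivalence and a congruence with respect to $\partial_H$, reduces soundness of $APTC^{sl}\vdash x=y$ to checking each axiom of Table \ref{AxiomsForEncapsulation}, and then declares that per-axiom verification trivial and omits it. Your additional commentary on the witnessing relations $R=\{(\mathrm{lhs},\mathrm{rhs})\}\cup\textbf{Id}$ and on axiom $D6$ only elaborates the step the paper leaves implicit, so the two arguments are essentially the same.
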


\begin{proof}
Since static location pomset bisimulation $\sim_p^{sl}$ is both an equivalent and a congruent relation with respect to the operator $\partial_H$, we only need to check if each axiom in
Table \ref{AxiomsForEncapsulation} is sound modulo static location pomset bisimulation equivalence, the proof is trivial and we omit it.
\end{proof}

\begin{theorem}[Completeness of APTC with static localities modulo static location pomset bisimulation equivalence]\label{CAPTCPBE}
Let $p$ and $q$ be closed APTC with static localities terms including encapsulation operator $\partial_H$, if $p\sim_p^{sl} q$ then $p=q$.
\end{theorem}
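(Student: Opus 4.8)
The plan is to follow verbatim the completeness argument already established for APTC with static localities in the encapsulation-free setting (Theorem \ref{CPPBE}), using the new elimination theorem to dispose of $\partial_H$ at the outset. First I would invoke the elimination theorem of APTC with static localities including $\partial_H$ (Theorem \ref{ETEncapsulation}): for each closed term $p$ there is a closed \emph{basic} term $p'$ in the sense of Definition \ref{BTAPTC}, which by construction contains no occurrence of $\partial_H$, such that $APTC^{sl}\vdash p=p'$. Combining this with soundness (Theorem \ref{SAPTCPBE}), which gives $p\sim_p^{sl} p'$ and $q\sim_p^{sl} q'$, the problem reduces to the $\partial_H$-free case: it suffices to show that for closed basic terms $p',q'$, if $p'\sim_p^{sl} q'$ then $p'=q'$.

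Second, I would normalize basic terms modulo associativity and commutativity of $+$ (axioms $A1$ and $A2$ of Table \ref{AxiomsForBATC}), writing $=_{AC}$ for the induced equivalence. Each class then has normal form $s_1+\cdots+s_k$ whose summands $s_i$ are either a located atomic event $u::e$, or a product $t_1\cdot t_2$ whose head $t_1$ is either a located atomic event $u'::e'$ or a parallel block $u_1::e_1\leftmerge\cdots\leftmerge u_l::e_l$, exactly as in the proof of Theorem \ref{CPPBE}.

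Third, the technical core, I would prove by induction on the combined sizes of two normal forms $n,n'$ that $n\sim_p^{sl} n'$ implies $n=_{AC}n'$. For a summand $u::e$ of $n$ we have $n\xrightarrow[u]{e}\surd$, so the matching clause of $\sim_p^{sl}$ produces $n'\xrightarrow[u]{e}\surd$ and hence the summand $u::e$ in $n'$. For a summand $t_1\cdot t_2$ one splits on the shape of $t_1$: an atomic step $n\xrightarrow[u']{e'}t_2$ or a parallel step $n\xrightarrow[u]{\{e_1,\cdots,e_l\}}t_2$; in either case $n'$ must answer with a residual $t_2'$ satisfying $t_2\sim_p^{sl} t_2'$, and since $t_2,t_2'$ are strictly smaller normal forms the induction hypothesis gives $t_2=_{AC}t_2'$, so the whole summand reappears in $n'$. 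By symmetry every summand of $n'$ occurs in $n$, whence $n=_{AC}n'$. Chaining soundness with this normalization, $n\sim_p^{sl} s\sim_p^{sl} t\sim_p^{sl} n'$ together with $n=_{AC}n'$ yields $s=t$, completing the reduction.

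I expect the genuine difficulty to be almost entirely absorbed by the elimination theorem: once $\partial_H$ has been rewritten away, nothing structurally new occurs relative to Theorem \ref{CPPBE}, and the only content is the size-induction matching of summands. The one point that must be checked carefully, beyond the location-free APTC argument, is that the location decorations on transitions are respected, so that a summand $u::e$ is matched by a summand at the \emph{same} location rather than by some $v::e$ with $v\neq u$; this is where the static-locality structure, namely the bookkeeping of the pair $(u,v)$ in $R_{\varphi\cup\{(u,v)\}}$ together with the transition rule for $::$, is used, and it is the only step where I would not simply cite the ordinary APTC reasoning.
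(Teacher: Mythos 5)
Your proposal is correct and follows essentially the same route as the paper's own proof: eliminate $\partial_H$ via Theorem \ref{ETEncapsulation} to reduce to closed basic terms, show by induction on sizes that static location pomset bisimilar normal forms are equal modulo AC of $+$ (the paper cites the argument of the encapsulation-free completeness theorem for this), and then chain soundness (Theorem \ref{SAPTCPBE}) to conclude $s=n=_{AC}n'=t$. Your closing remark about matching located summands $u::e$ at the same location is exactly the point implicitly carried by the location-decorated transitions in the paper's summand-matching step, so nothing is missing.
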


\begin{proof}
Firstly, by the elimination theorem of APTC with static localities (see Theorem \ref{ETEncapsulation}), we know that the normal form of APTC with static localities does not contain
$\partial_H$, and for each closed APTC with static localities term $p$, there exists a closed basic APTC with static localities term $p'$, such that $APTC^{sl}\vdash p=p'$, so, we only need
to consider closed basic APTC with static localities terms.

Similarly to Theorem \ref{CAPTCSBE}, we can prove that for normal forms $n$ and $n'$, if $n\sim_p^{sl} n'$ then $n=_{AC}n'$.

Finally, let $s$ and $t$ be basic APTC with static localities terms, and $s\sim_p^{sl} t$, there are normal forms $n$ and $n'$, such that $s=n$ and $t=n'$. The soundness theorem of
APTC with static localities modulo static location pomset bisimulation equivalence (see Theorem \ref{SAPTCPBE}) yields $s\sim_p^{sl} n$ and $t\sim_p^{sl} n'$, so
$n\sim_p^{sl} s\sim_p^{sl} t\sim_p^{sl} n'$. Since if $n\sim_p^{sl} n'$ then $n=_{AC}n'$, $s=n=_{AC}n'=t$, as desired.
\end{proof}

\begin{theorem}[Soundness of APTC with static localities modulo static location step bisimulation equivalence]\label{SAPTCSBE}
Let $x$ and $y$ be APTC with static localities terms including encapsulation operator $\partial_H$. If $APTC^{sl}\vdash x=y$, then $x\sim_s^{sl} y$.
\end{theorem}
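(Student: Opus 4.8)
The plan is to mirror the structure used for every preceding soundness result in this section, in particular Theorem \ref{SAPTCPBE} for the pomset case, exploiting the standard fact that soundness of an equational theory modulo a \emph{congruence} reduces to the soundness of each individual axiom. First I would invoke the two standing facts about $\sim_s^{sl}$: that static location step bisimulation equivalence is an equivalence relation on the terms of APTC with static localities, and—crucially—that by the Congruence theorem of the encapsulation operator $\partial_H$ proved just above, $\sim_s^{sl}$ is a congruence with respect to $\partial_H$ (as well as with respect to $\between$, $\parallel$, $\leftmerge$, $\mid$, $\Theta$, $\triangleleft$ and the $BATC^{sl}$ operators). Since a congruence is closed under the formation of contexts, any equality derivable via $APTC^{sl}\vdash x=y$ decomposes along its derivation into applications of single axiom instances placed inside contexts, and congruence lifts each such step to the enclosing term. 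Hence it suffices to show that each of the new axioms $D1$--$D6$ and $L11$ in Table \ref{AxiomsForEncapsulation} is sound, i.e. that for every axiom $s=t$ we have $s\sim_s^{sl} t$.

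Next I would verify the axioms one at a time by exhibiting, for each, a static location step bisimulation $R_\varphi$ containing the pair of the two sides closed up with the identity relation. For $D1$ and $D2$ the two sides either both fire $e$ at the same location or both block, according to whether $e\in H$; for $D3$ both sides are the (location-tagged) deadlock and admit no transitions; for $D4$--$D6$ the transition rules for $\partial_H$ in Table \ref{TRForEncapsulation}, combined with the rules for $+$, $\cdot$ and $\leftmerge$, show that any step of one side is matched by an identically labelled step at the same location of the other side with residuals again related; and for $L11$ the location rule $\textbf{Loc}$ simply commutes the location prefix $u$ past $\partial_H$, because $\partial_H$ alters neither labels nor locations. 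The step condition demands only that the matching transition carry the same set of pairwise concurrent events at the same location, which holds because $\partial_H$ neither relabels nor relocates the surviving events.

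The one point requiring attention, and thus the mild main obstacle, is the interplay of the location decorations with $\sim_s^{sl}$: one must confirm that the location information is preserved exactly by the $\partial_H$ rules so that the consistent location association underlying $R_\varphi$ is respected and no new location pair is forced apart. Since Table \ref{TRForEncapsulation} keeps the location $u$ unchanged on both sides of each transition, the same $\varphi$ works throughout and the pair $(u,v)$ added on a matched step is identical on both sides. Because every verification is entirely routine and parallels the corresponding argument for $\sim_p^{sl}$ in Theorem \ref{SAPTCPBE}, I would simply assert the soundness of each axiom and omit the mechanical bisimulation-checking details.
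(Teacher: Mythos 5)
Your proposal matches the paper's own proof: both reduce soundness to checking each axiom of Table \ref{AxiomsForEncapsulation} ($D1$--$D6$ and $L11$), justified by the fact that $\sim_s^{sl}$ is an equivalence and a congruence with respect to $\partial_H$, and then treat the per-axiom bisimulation checks as routine and omit them. Your additional remarks on how the transition rules for $\partial_H$ preserve labels and locations simply flesh out what the paper calls ``trivial,'' so there is no substantive difference.
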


\begin{proof}
Since static location step bisimulation $\sim_s^{sl}$ is both an equivalent and a congruent relation with respect to the operator $\partial_H$, we only need to check if each axiom in
Table \ref{AxiomsForEncapsulation} is sound modulo static location step bisimulation equivalence, the proof is trivial and we omit it.
\end{proof}

\begin{theorem}[Completeness of APTC with static localities modulo static location step bisimulation equivalence]\label{CAPTCSBE}
Let $p$ and $q$ be closed APTC with static localities terms including encapsulation operator $\partial_H$, if $p\sim_s^{sl} q$ then $p=q$.
\end{theorem}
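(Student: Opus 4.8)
The plan is to follow exactly the template used for the pomset version in Theorem~\ref{CAPTCPBE}, adapting it to the step equivalence. First I would invoke the elimination theorem for $APTC$ with static localities including $\partial_H$ (Theorem~\ref{ETEncapsulation}): for every closed term $p$ there is a closed \emph{basic} term $p'$ (in the sense of Definition~\ref{BTAPTC}, in particular with every occurrence of $\partial_H$ rewritten away) with $APTC^{sl}\vdash p=p'$. By the soundness theorem modulo static location step bisimulation (Theorem~\ref{SAPTCSBE}) we have $p\sim_s^{sl}p'$ and $q\sim_s^{sl}q'$, so it suffices to establish the claim for closed basic terms, which I would further normalize modulo associativity and commutativity of $+$ (axioms $A1,A2$) into the summand form $s_1+\cdots+s_k$, each $s_i$ being an atomic (located) event, or a product $t_1\cdot t_2$, where $t_1$ is either a located atomic event or a parallel block $u_1::e_1\leftmerge\cdots\leftmerge u_l::e_l$.

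The heart of the argument is the lemma that for normal forms $n,n'$, if $n\sim_s^{sl}n'$ then $n=_{AC}n'$, proved by induction on the combined sizes of $n$ and $n'$. I would do a case analysis on each summand of $n$ and show it must also appear in $n'$. A summand $u::e$ gives $n\xrightarrow[u]{e}\surd$, and step bisimilarity (matching both the label $e$ and the location $u$) forces $n'\xrightarrow[u]{e}\surd$, so $u::e$ is a summand of $n'$. For a summand $t_1\cdot t_2$ with $t_1\equiv u'::e'$, the transition $n\xrightarrow[u']{e'}t_2$ is matched by $n'\xrightarrow[u']{e'}t_2'$ with $t_2\sim_s^{sl}t_2'$, and since $t_2,t_2'$ are strictly smaller normal forms the induction hypothesis yields $t_2=_{AC}t_2'$. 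For a summand with $t_1\equiv u_1::e_1\leftmerge\cdots\leftmerge u_l::e_l$, the crucial point specific to the \emph{step} case is that the induced transition $n\xrightarrow[u]{\{e_1,\cdots,e_l\}}t_2$ is genuinely a step, i.e. $e_1,\cdots,e_l$ are pairwise concurrent; step bisimilarity then returns a matching step $n'\xrightarrow[u]{\{e_1,\cdots,e_l\}}t_2'$ (again respecting the location $u$ produced by the $\diamond$-combination in the rules of Table~\ref{TRForAPTC}), with $t_2\sim_s^{sl}t_2'$ and hence $t_2=_{AC}t_2'$ by induction. Symmetry gives the converse inclusion of summands, so $n=_{AC}n'$.

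To finish, for basic terms $s,t$ with $s\sim_s^{sl}t$ I would take their normal forms $n,n'$ (so $s=n$, $t=n'$ provably), use soundness (Theorem~\ref{SAPTCSBE}) to get $s\sim_s^{sl}n$ and $t\sim_s^{sl}n'$, whence $n\sim_s^{sl}n'$, apply the lemma to obtain $n=_{AC}n'$, and conclude $s=n=_{AC}n'=t$.

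I expect the main obstacle to be the located step summand case rather than anything about $\partial_H$ (which disappears under elimination). The subtlety is twofold: one must verify that the composite location label carried by a step transition is exactly reconstructed on the matching side—this rests on the consistent-location-association discipline built into $\sim_s^{sl}$ and the $u\diamond v$ labelling in the parallel rules—and one must keep the pairwise-concurrency side condition intact so that a step is matched by a step (and not merely by some pomset transition), which is precisely the feature distinguishing this completeness result from the pomset one in Theorem~\ref{CAPTCPBE}. The remaining calculations are the routine summand-matching bookkeeping, which I would not grind through in detail.
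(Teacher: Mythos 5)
Your proposal follows essentially the same route as the paper: reduce to closed basic terms via the elimination theorem (Theorem~\ref{ETEncapsulation}, so $\partial_H$ disappears), prove by induction on term sizes that static location step bisimilar normal forms are equal modulo AC of $+$ (the paper simply defers this lemma to the proof of Theorem~\ref{CPSBE}, whose summand case analysis you reproduce, including the located $\leftmerge$-step case), and conclude with the soundness theorem (Theorem~\ref{SAPTCSBE}) exactly as the paper does. Your extra remarks on matching locations and pairwise concurrency are consistent with, and merely elaborate on, the paper's argument.
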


\begin{proof}
Firstly, by the elimination theorem of APTC with static localities (see Theorem \ref{ETEncapsulation}), we know that the normal form of APTC with static localities does not contain
$\partial_H$, and for each closed APTC with static localities term $p$, there exists a closed basic APTC with static localities term $p'$, such that $APTC^{sl}\vdash p=p'$, so, we only
need to consider closed basic APTC with static localities terms.

Similarly to Theorem \ref{CPSBE}, we can prove that for normal forms $n$ and $n'$, if $n\sim_s^{sl} n'$ then $n=_{AC}n'$.

Finally, let $s$ and $t$ be basic APTC with static localities terms, and $s\sim_s^{sl} t$, there are normal forms $n$ and $n'$, such that $s=n$ and $t=n'$. The soundness theorem of
APTC with static localities modulo static location step bisimulation equivalence (see Theorem \ref{SAPTCSBE}) yields $s\sim_s^{sl} n$ and $t\sim_s^{sl} n'$, so
$n\sim_s^{sl} s\sim_s^{sl} t\sim_s^{sl} n'$. Since if $n\sim_s^{sl} n'$ then $n=_{AC}n'$, $s=n=_{AC}n'=t$, as desired.
\end{proof}

\begin{theorem}[Soundness of APTC with static localities modulo static location hp-bisimulation equivalence]\label{SAPTCHPBE}
Let $x$ and $y$ be APTC with static localities terms including encapsulation operator $\partial_H$. If $APTC^{sl}\vdash x=y$, then $x\sim_{hp}^{sl} y$.
\end{theorem}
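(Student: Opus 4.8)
The plan is to mirror the template established by the preceding soundness theorems (in particular Theorem~\ref{SAPTCPBE} for the pomset case). I would first invoke the standard reduction: soundness of equational derivability $APTC^{sl}\vdash x=y$ with respect to a behavioural equivalence follows once we know that the equivalence is (i) an equivalence relation and (ii) a congruence with respect to all the operators of the signature, since then it suffices to verify that each \emph{defining axiom} is itself sound. Point (ii) for the encapsulation operator is exactly the Congruence theorem of $\partial_H$ stated just above, and $\sim_{hp}^{sl}$ being an equivalence is immediate from its definition. The soundness of the parallelism axioms in Table~\ref{AxiomsForParallelism} was already discharged in Theorem~\ref{SPHPBE}, so the genuinely new obligation is only to check the axioms $D1$--$D6$ and $L11$ in Table~\ref{AxiomsForEncapsulation}.

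For each such axiom I would exhibit a static location hp-bisimulation $R_\varphi$ that contains the pair of closed instances of its left- and right-hand sides, together with $\textbf{Id}$. The key simplification is that the transition rules for $\partial_H$ in Table~\ref{TRForEncapsulation} preserve both the event label $e$ and the location $u$: $\partial_H$ merely filters out events lying in $H$ without relocating the survivors. Consequently, whenever one side performs $\xrightarrow[u]{e}$, the other side can match it with the \emph{same} location $u$ and the \emph{same} event $e$, so the location association is augmented only by pairs of the form $(u,u)$, which is trivially a consistent location association, and the required order-isomorphism can be taken to be the identity extended by $f[e\mapsto e]$. For the blocking axiom $D2$ (where $e\in H$, so $\partial_H(e)=\delta$) one checks that neither side has any outgoing transition, so the posetal relation condition holds vacuously.

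The only aspect that distinguishes the hp-case from the pomset and step cases is the need to carry along the configuration order-isomorphism $f$ and its downward structure under every single-event step. For the distributive axioms $D4$--$D6$ and $L11$ this amounts to checking that $\partial_H$ commutes with $+$, $\cdot$, $\leftmerge$ and $u::(\cdot)$ at the level of transition derivations, which it does by the shape of the rules. I expect the \emph{main obstacle} to be purely bookkeeping — tracking the isomorphism $f$ and the association $\varphi$ simultaneously across matched transitions — rather than any real mathematical difficulty; accordingly, in keeping with the style of the surrounding theorems, the verification of each axiom is routine and can be omitted.
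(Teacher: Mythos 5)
Your proposal follows essentially the same route as the paper's proof: both reduce soundness of $APTC^{sl}\vdash x=y$ to checking each axiom of Table \ref{AxiomsForEncapsulation}, justified by the fact that $\sim_{hp}^{sl}$ is an equivalence and a congruence with respect to $\partial_H$, and then treat the per-axiom verification as routine. Your extra observations (that $\partial_H$ preserves both the event label and the location, so matching transitions only add pairs $(u,u)$ to $\varphi$ and the isomorphism extends by $f[e\mapsto e]$, with $D2$ holding vacuously) merely spell out the details the paper dismisses as trivial and omits.
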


\begin{proof}
Since static location hp-bisimulation $\sim_{hp}^{sl}$ is both an equivalent and a congruent relation with respect to the operator $\partial_H$, we only need to check if each axiom in
Table \ref{AxiomsForEncapsulation} is sound modulo static location hp-bisimulation equivalence, the proof is trivial and we omit it.
\end{proof}

\begin{theorem}[Completeness of APTC with static localities modulo static location hp-bisimulation equivalence]\label{CAPTCHPBE}
Let $p$ and $q$ be closed APTC with static localities terms including encapsulation operator $\partial_H$, if $p\sim_{hp}^{sl} q$ then $p=q$.
\end{theorem}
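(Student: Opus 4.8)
The plan is to follow the same three-stage template that produced the earlier completeness results for this system (e.g. Theorems \ref{CAPTCPBE} and \ref{CAPTCSBE}), adapting only the bisimulation-matching step to the history-preserving setting. First I would invoke the elimination theorem for APTC with static localities including the encapsulation operator (Theorem \ref{ETEncapsulation}): for every closed term $p$ there is a closed \emph{basic} term $p'$ with $APTC^{sl}\vdash p=p'$, and crucially the normal form contains no occurrence of $\partial_H$. This reduces the claim to closed basic APTC with static localities terms, so from here on I only reason about terms in the normal form described in the proof of Theorem \ref{CPHPBE}, namely sums $s_1+\cdots+s_k$ modulo associativity and commutativity of $+$ (the relation $=_{AC}$), where each summand is either an atomic event with a location, a sequential term $t_1\cdot t_2$, or a localized parallel block $u_1::e_1\leftmerge\cdots\leftmerge u_l::e_l$ possibly followed by a sequential continuation.

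The heart of the argument is the normal-form lemma: for normal forms $n$ and $n'$, if $n\sim_{hp}^{sl} n'$ then $n=_{AC}n'$, proved by induction on the combined sizes of $n$ and $n'$. I would proceed exactly as in Theorem \ref{CAPTCSBE}, examining each summand of $n$ and using the hp-bisimulation to force a matching summand in $n'$. For a summand $u::e$ we have $n\xrightarrow[u]{e}\surd$, so $n\sim_{hp}^{sl} n'$ yields $n'\xrightarrow[u]{e}\surd$ and hence $n'$ contains $u::e$. For a sequential summand, matching $n\xrightarrow[u']{e'}t_2$ against $n'\xrightarrow[u']{e'}t_2'$ with $t_2\sim_{hp}^{sl}t_2'$ lets the induction hypothesis conclude $t_2=_{AC}t_2'$; for a parallel block the step transition $n\xrightarrow[u]{\{e_1,\cdots,e_l\}}t_2$ is matched similarly, giving the summand $(u_1::e_1\leftmerge\cdots\leftmerge u_l::e_l)\cdot t_2'$ in $n'$.

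The main obstacle, and the only place the hp case genuinely differs from the pomset and step cases, is that a static location hp-bisimulation is a \emph{posetal} relation $R_{\varphi}$ carrying an order-isomorphism $f$ between configurations together with the consistent location association $\varphi$. So when I match a transition I must check not only that the event label and the location $u$ agree but that the move extends $f$ to $f[e_1\mapsto e_2]$ while keeping $(u,v)$ consistent under $\varphi\cup\{(u,v)\}$ in the sense of the consistent-location-association definition. In practice the matched location is identical ($u=v$ on the two sides of a normal form built from the same localized atoms), so the consistency condition $u\diamond u'\Leftrightarrow v\diamond v'$ is automatic and the isomorphism $f$ is carried along the induction without extra work; I would spell this out just enough to see that the posetal bookkeeping collapses to the step-bisimulation computation already carried out.

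Finally I would close the argument using soundness (Theorem \ref{SAPTCHPBE}): given basic terms $s,t$ with $s\sim_{hp}^{sl}t$, pick normal forms $n,n'$ with $s=n$ and $t=n'$; soundness gives $s\sim_{hp}^{sl}n$ and $t\sim_{hp}^{sl}n'$, so $n\sim_{hp}^{sl}s\sim_{hp}^{sl}t\sim_{hp}^{sl}n'$, whence the normal-form lemma yields $n=_{AC}n'$ and therefore $s=n=_{AC}n'=t$, as desired.
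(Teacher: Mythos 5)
Your proposal follows exactly the paper's own route: reduce to closed basic terms via the elimination theorem (Theorem \ref{ETEncapsulation}), prove the normal-form lemma that $n\sim_{hp}^{sl} n'$ implies $n=_{AC}n'$ by induction on sizes with the summand-matching analysis (which the paper itself delegates to the earlier completeness proofs), and close with the soundness sandwich $n\sim_{hp}^{sl} s\sim_{hp}^{sl} t\sim_{hp}^{sl} n'$. Your additional remark on why the posetal and location-association bookkeeping collapses onto the step-style computation is merely an explicit fleshing-out of the ``similarly'' step the paper leaves implicit, so the two proofs coincide in substance.
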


\begin{proof}
Firstly, by the elimination theorem of APTC with static localities (see Theorem \ref{ETEncapsulation}), we know that the normal form of APTC with static localities does not contain
$\partial_H$, and for each closed APTC with static localities term $p$, there exists a closed basic APTC with static localities term $p'$, such that $APTC^{sl}\vdash p=p'$, so, we only
need to consider closed basic APTC with static localities terms.

Similarly to Theorem \ref{CAPTCPBE}, we can prove that for normal forms $n$ and $n'$, if $n\sim_{hp}^{sl} n'$ then $n=_{AC}n'$.

Finally, let $s$ and $t$ be basic APTC with static localities terms, and $s\sim_{hp}^{sl} t$, there are normal forms $n$ and $n'$, such that $s=n$ and $t=n'$. The soundness theorem of
APTC with static localities modulo static location hp-bisimulation equivalence (see Theorem \ref{SAPTCHPBE}) yields $s\sim_{hp}^{sl} n$ and $t\sim_{hp}^{sl} n'$, so
$n\sim_{hp}^{sl} s\sim_{hp}^{sl} t\sim_{hp}^{sl} n'$. Since if $n\sim_{hp}^{sl} n'$ then $n=_{AC}n'$, $s=n=_{AC}n'=t$, as desired.
\end{proof}

\begin{theorem}[Soundness of APTC with static localities modulo static location hhp-bisimulation equivalence]\label{SAPTCHPBE}
Let $x$ and $y$ be APTC with static localities terms including encapsulation operator $\partial_H$. If $APTC^{sl}\vdash x=y$, then $x\sim_{hhp}^{sl} y$.
\end{theorem}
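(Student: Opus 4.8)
The plan is to reduce the soundness claim to a finite, axiom-by-axiom verification. Since $\sim_{hhp}^{sl}$ is an equivalence relation and, by the Congruence theorem of the encapsulation operator $\partial_H$ (together with the earlier congruence theorem for $\between$, $\parallel$, $\leftmerge$, $\mid$, $\Theta$, $\triangleleft$, and the $BATC^{sl}$ operators $\cdot$, $+$, $::$), a congruence with respect to every operator of the signature, the usual equational-logic argument shows that $\sim_{hhp}^{sl}$ is closed under the inference rules of $APTC^{sl}\vdash$. Hence it suffices to exhibit, for each axiom in Table \ref{AxiomsForEncapsulation} (that is, $D1$--$D6$ and $L11$), a witnessing static location hhp-bisimulation relating its two sides; soundness of derivable equalities then follows by reflexivity, symmetry, transitivity, and congruence.

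For each axiom I would take the witnessing relation to be of the form $R_{\varphi}=\{(C_1,f,C_2)\}\cup\textbf{Id}$ on the posetal product $\mathcal{C}(\mathcal{E}_1)\overline{\times}\mathcal{C}(\mathcal{E}_2)$ of the two sides, and check the transfer conditions of the definition of static location hp-bisimulation. The key observation is that the rules for $\partial_H$ in Table \ref{TRForEncapsulation} only suppress transitions whose label lies in $H$, while leaving both the event labels and the location decorations $u$ on each $\xrightarrow[u]{e}$ untouched; consequently the two sides of $D1$--$D6$ generate isomorphic located, labelled transition systems, so the order isomorphism $f$ and the consistent location association $\varphi$ carry over verbatim, and the $\varphi\cup\{(u,v)\}$ bookkeeping matches on both sides because the matched transitions share the same location. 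For $L11$, $u::\partial_H(x)=\partial_H(u::x)$, the prefix $u$ is prepended to every location on both sides identically (rule $\textbf{Loc}$ commutes with the $\partial_H$-filtering), so again the located systems coincide and the witnessing isomorphism is the identity on events.

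The one point that genuinely distinguishes the hhp case from the pomset, step, and hp cases already established is the requirement that an hhp-bisimulation be a \emph{downward closed} hp-bisimulation: I must confirm that each $R_{\varphi}$ is downward closed, i.e. whenever $(C_1',f',C_2')\in R_{\varphi}$ and $(C_1,f,C_2)\subseteq(C_1',f',C_2')$ pointwise, then $(C_1,f,C_2)\in R_{\varphi}$. I expect this to be the main (though here still mild) obstacle, since downward closure does not follow automatically from the hp transfer conditions. It is handled uniformly by the same structural fact used above: because $\partial_H$ and the prefix $u::$ act purely syntactically and neither reorder, merge, nor split events, each witnessing $R_{\varphi}$ is induced by an isomorphism of the underlying prime event structures, and restricting that isomorphism to any pair of sub-configurations yields precisely the pointwise-smaller triple, which therefore lies in $R_{\varphi}$. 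This establishes downward closure and completes the reduction, so that $x\sim_{hhp}^{sl} y$ whenever $APTC^{sl}\vdash x=y$.
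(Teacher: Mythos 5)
Your proposal follows the paper's own route exactly: the paper likewise reduces soundness to checking each axiom of Table \ref{AxiomsForEncapsulation} ($D1$--$D6$ and $L11$), invoking that $\sim_{hhp}^{sl}$ is an equivalence and a congruence with respect to $\partial_H$ (and the other operators), and then declares the per-axiom verification trivial and omits it. Your extra detail --- the located-LTS isomorphism argument for each axiom and the explicit downward-closure check distinguishing the hhp case --- merely fills in what the paper leaves implicit, so the two proofs agree in substance.
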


\begin{proof}
Since static location hhp-bisimulation $\sim_{hhp}^{sl}$ is both an equivalent and a congruent relation with respect to the operator $\partial_H$, we only need to check if each axiom in
Table \ref{AxiomsForEncapsulation} is sound modulo static location hhp-bisimulation equivalence, the proof is trivial and we omit it.
\end{proof}

\begin{theorem}[Completeness of APTC with static localities modulo static location hhp-bisimulation equivalence]\label{CAPTCHPBE}
Let $p$ and $q$ be closed APTC with static localities terms including encapsulation operator $\partial_H$, if $p\sim_{hhp}^{sl} q$ then $p=q$.
\end{theorem}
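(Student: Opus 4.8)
The plan is to follow exactly the skeleton already used for the completeness results modulo $\sim_p^{sl}$, $\sim_s^{sl}$ and $\sim_{hp}^{sl}$, since the hhp-case differs only in the bisimulation notion invoked at each matching step. First I would apply the elimination theorem of APTC with static localities including the encapsulation operator (Theorem \ref{ETEncapsulation}): for every closed term $p$ there is a closed \emph{basic} term $p'$ with $APTC^{sl}\vdash p=p'$, and the normal form contains no occurrence of $\partial_H$. Thus it suffices to treat closed basic terms. Using associativity and commutativity of $+$ (axioms $A1,A2$), every such term can be written in the normal form $s_1+\cdots+s_k$, where each summand $s_i$ is either an atomic located event or a sequential product $t_1\cdot\cdots\cdot t_m$ whose factors are either atomic located events or parallel blocks $u_1::e_1\leftmerge\cdots\leftmerge u_l::e_l$ of located events.

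Next I would establish the core lemma: for normal forms $n$ and $n'$, if $n\sim_{hhp}^{sl} n'$ then $n=_{AC}n'$, by induction on the combined sizes of $n$ and $n'$, in the same manner as Theorem \ref{CAPTCPBE}. For a summand $u::e$ we have $n\xrightarrow[u]{e}\surd$, and the static location hhp-bisimulation forces $n'\xrightarrow[u]{e}\surd$, so $n'$ carries the identical summand. For a summand $t_1\cdot t_2$ one splits on the shape of $t_1$: a single located event $u'::e'$ yields $n\xrightarrow[u']{e'}t_2$, matched by $n'\xrightarrow[u']{e'}t_2'$ with $t_2\sim_{hhp}^{sl} t_2'$; a parallel block yields the corresponding step transition $n\xrightarrow[u]{\{e_1,\cdots,e_l\}}t_2$, matched analogously. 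In either case $t_2$ and $t_2'$ are strictly smaller normal forms, so the induction hypothesis gives $t_2=_{AC}t_2'$ and hence the matching summand occurs in $n'$. Combining both directions yields $n=_{AC}n'$.

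Finally I would close the argument as in the earlier completeness theorems. Given basic terms $s$ and $t$ with $s\sim_{hhp}^{sl} t$, take their normal forms $n,n'$, so that $s=n$ and $t=n'$ are provable. The soundness theorem modulo static location hhp-bisimulation equivalence (Theorem \ref{SAPTCHPBE}) gives $s\sim_{hhp}^{sl} n$ and $t\sim_{hhp}^{sl} n'$; transitivity yields $n\sim_{hhp}^{sl} n'$, the core lemma yields $n=_{AC}n'$, and therefore $s=n=_{AC}n'=t$, as desired.

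The main obstacle I anticipate is the downward-closure condition that distinguishes hhp-bisimulation from hp-bisimulation: when a single summand transition is matched, one must verify that the residual relation restricted to the reached configurations is still a \emph{downward closed} static location hp-bisimulation witnessing $t_2\sim_{hhp}^{sl} t_2'$, so that the induction hypothesis genuinely applies. Because each matched transition consumes exactly one located event (or one located step) and the residual order isomorphism is obtained by pointwise restriction of $f$, downward closure is inherited automatically; making this preservation explicit is the only place where the hhp-argument needs more care than the hp-argument, and everything else is the routine summand bookkeeping inherited from the preceding completeness proofs.
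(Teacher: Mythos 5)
Your proposal follows exactly the paper's own argument: reduce to closed basic terms via the elimination theorem for the encapsulation operator, prove by induction on sizes of normal forms that $n\sim_{hhp}^{sl} n'$ implies $n=_{AC}n'$ (the paper delegates this to the analogous pomset-case proof, which you spell out with the same summand case analysis), and conclude via the soundness theorem and transitivity. Your closing remark on inheritance of downward closure is a sensible elaboration of a point the paper leaves implicit, but it does not change the route.
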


\begin{proof}
Firstly, by the elimination theorem of APTC with static localities (see Theorem \ref{ETEncapsulation}), we know that the normal form of APTC with static localities does not contain
$\partial_H$, and for each closed APTC with static localities term $p$, there exists a closed basic APTC with static localities term $p'$, such that $APTC^{sl}\vdash p=p'$, so, we only
need to consider closed basic APTC with static localities terms.

Similarly to Theorem \ref{CAPTCPBE}, we can prove that for normal forms $n$ and $n'$, if $n\sim_{hhp}^{sl} n'$ then $n=_{AC}n'$.

Finally, let $s$ and $t$ be basic APTC with static localities terms, and $s\sim_{hhp}^{sl} t$, there are normal forms $n$ and $n'$, such that $s=n$ and $t=n'$. The soundness theorem of
APTC with static localities modulo static location hhp-bisimulation equivalence (see Theorem \ref{SAPTCHPBE}) yields $s\sim_{hhp}^{sl} n$ and $t\sim_{hhp}^{sl} n'$, so
$n\sim_{hhp}^{sl} s\sim_{hhp}^{sl} t\sim_{hhp}^{sl} n'$. Since if $n\sim_{hhp}^{sl} n'$ then $n=_{AC}n'$, $s=n=_{AC}n'=t$, as desired.
\end{proof}

\subsubsection{Recursion}\label{rec}

In this section, we introduce recursion to capture infinite processes based on APTC with static localities. Since in APTC with static localities, there are four basic operators
$::$, $\cdot$, $+$ and $\leftmerge$, the recursion must be adapted this situation to include $\leftmerge$.

In the following, $E,F,G$ are recursion specifications, $X,Y,Z$ are recursive variables.


\begin{definition}[Recursive specification]
A recursive specification is a finite set of recursive equations

$$X_1=t_1(X_1,\cdots,X_n)$$
$$\cdots$$
$$X_n=t_n(X_1,\cdots,X_n)$$

where the left-hand sides of $X_i$ are called recursion variables, and the right-hand sides $t_i(X_1,\cdots,X_n)$ are process terms in APTC with static localities with possible
occurrences of the recursion variables $X_1,\cdots,X_n$.
\end{definition}

\begin{definition}[Solution]
Processes $p_1,\cdots,p_n$ are a solution for a recursive specification $\{X_i=t_i(X_1,\cdots,X_n)|i\in\{1,\cdots,n\}\}$ (with respect to static location truly concurrent bisimulation equivalences
$\sim_s^{sl}$($\sim_p^{sl}$, $\sim_{hp}^{sl}$, $\sim_{hhp}^{sl}$)) if $p_i\sim_s^{sl} (\sim_p^{sl}, \sim_{hp}^{sl}, \sim_{hhp}^{sl})t_i(p_1,\cdots,p_n)$ for $i\in\{1,\cdots,n\}$.
\end{definition}

\begin{definition}[Guarded recursive specification]
A recursive specification

$$X_1=t_1(X_1,\cdots,X_n)$$
$$...$$
$$X_n=t_n(X_1,\cdots,X_n)$$

is guarded if the right-hand sides of its recursive equations can be adapted to the form by applications of the axioms in APTC with static localities and replacing recursion variables
by the right-hand sides of their recursive equations,

$(u_{11}::a_{11}\leftmerge\cdots\leftmerge u_{1i_1}::a_{1i_1})\cdot s_1(X_1,\cdots,X_n)+\cdots+(u_{k1}::a_{k1}\leftmerge\cdots\leftmerge u_{ki_k}::a_{ki_k})\cdot s_k(X_1,\cdots,X_n)\\
+(v_{11}::b_{11}\leftmerge\cdots\leftmerge v_{1j_1}::b_{1j_1})+\cdots+(v_{1j_1}::b_{1j_1}\leftmerge\cdots\leftmerge v_{1j_l}::b_{lj_l})$

where $a_{11},\cdots,a_{1i_1},a_{k1},\cdots,a_{ki_k},b_{11},\cdots,b_{1j_1},b_{1j_1},\cdots,b_{lj_l}\in \mathbb{E}$, and the sum above is allowed to be empty, in which case it
represents the deadlock $\delta$.
\end{definition}

\begin{definition}[Linear recursive specification]\label{LRS}
A recursive specification is linear if its recursive equations are of the form

$(u_{11}::a_{11}\leftmerge\cdots\leftmerge u_{1i_1}::a_{1i_1})X_1+\cdots+(u_{k1}::a_{k1}\leftmerge\cdots\leftmerge u_{ki_k}::a_{ki_k})X_k\\
+(v_{11}::b_{11}\leftmerge\cdots\leftmerge v_{1j_1}::b_{1j_1})+\cdots+(v_{1j_1}::b_{1j_1}\leftmerge\cdots\leftmerge v_{1j_l}::b_{lj_l})$

where $a_{11},\cdots,a_{1i_1},a_{k1},\cdots,a_{ki_k},b_{11},\cdots,b_{1j_1},b_{1j_1},\cdots,b_{lj_l}\in \mathbb{E}$, and the sum above is allowed to be empty, in which case it
represents the deadlock $\delta$.
\end{definition}

For a guarded recursive specifications $E$ with the form

$$X_1=t_1(X_1,\cdots,X_n)$$
$$\cdots$$
$$X_n=t_n(X_1,\cdots,X_n)$$

the behavior of the solution $\langle X_i|E\rangle$ for the recursion variable $X_i$ in $E$, where $i\in\{1,\cdots,n\}$, is exactly the behavior of their right-hand sides
$t_i(X_1,\cdots,X_n)$, which is captured by the two transition rules in Table \ref{TRForGR}.

\begin{center}
    \begin{table}
        $$\frac{t_i(\langle X_1|E\rangle,\cdots,\langle X_n|E\rangle)\xrightarrow[u]{\{e_1,\cdots,e_k\}}\surd}{\langle X_i|E\rangle\xrightarrow[u]{\{e_1,\cdots,e_k\}}\surd}$$
        $$\frac{t_i(\langle X_1|E\rangle,\cdots,\langle X_n|E\rangle)\xrightarrow[u]{\{e_1,\cdots,e_k\}} y}{\langle X_i|E\rangle\xrightarrow[u]{\{e_1,\cdots,e_k\}} y}$$
        \caption{Transition rules of guarded recursion}
        \label{TRForGR}
    \end{table}
\end{center}

\begin{theorem}[Conservitivity of APTC with static localities and guarded recursion]
APTC with static localities and guarded recursion is a conservative extension of APTC with static localities.
\end{theorem}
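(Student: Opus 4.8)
The plan is to apply the Conservative Extension Theorem stated in Section~\ref{bg}. I would take $T_0$ to be the TSS of APTC with static localities (the rules of Tables~\ref{SETRForBATC}, \ref{TRForAPTC} and \ref{TRForEncapsulation}) over its signature $\Sigma_0$, and take $T_1$ to be the two transition rules of guarded recursion in Table~\ref{TRForGR} over the extended signature $\Sigma_1$, which adjoins the recursion construct $\langle X_i|E\rangle$ for each guarded recursive specification $E$. Then APTC with static localities and guarded recursion is exactly the combined system $T_0\oplus T_1$, and it suffices to check the two hypotheses of that theorem together with the requirement that $T_0$ and $T_0\oplus T_1$ are positive after reduction.

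For the first hypothesis I would invoke the fact, already recorded in the proof that APTC with static localities generalizes BATC with static localities, that all transition rules of APTC with static localities are source-dependent; hence $T_0$ is source-dependent. The positivity-after-reduction requirement is the only place where genuine care is needed, since the communication rules $\textbf{Com}$ and the unless rules for $\triangleleft$ carry negative premises of the form $Q\nrightarrow$ and $y\nrightarrow^{e}$. These are the familiar negative premises of APTC and are handled by the same stratification/reduction argument used for plain APTC, so both $T_0$ and $T_0\oplus T_1$ remain positive after reduction; the recursion rules of $T_1$ add no further negative premises.

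For the second hypothesis I would examine each rule of $T_1$ in turn. Both rules in Table~\ref{TRForGR} have source $\langle X_i|E\rangle$, and the construct $\langle X_i|E\rangle$ belongs to $\Sigma_1\setminus\Sigma_0$; hence the source of every rule of $T_1$ is fresh, which immediately discharges the disjunction in condition~(2) without even needing to inspect the premises. With both hypotheses verified, the Conservative Extension Theorem yields that $T_0\oplus T_1$ generates exactly the same transitions $t\xrightarrow[u]{a}t'$ and terminations $t\xrightarrow[u]{a}\surd$ for terms $t\in\mathcal{T}(\Sigma_0)$ as $T_0$ does, which is precisely the claim. The one real subtlety, and the step I expect to absorb most of the attention, is justifying positivity after reduction in the presence of the negative premises of $\textbf{Com}$ and $\triangleleft$; everything else reduces to the observation that $\langle X_i|E\rangle$ is a fresh function symbol.
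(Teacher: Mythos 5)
Your proposal is correct and follows essentially the same route as the paper's own proof: invoke the conservative extension meta-theorem, using source-dependency of the APTC with static localities rules and the freshness of the constant $\langle X_i|E\rangle$ in the sources of the two recursion rules of Table \ref{TRForGR}. Your additional attention to the positivity-after-reduction hypothesis (because of the negative premises in the rules for $\parallel$ and $\triangleleft$) is a point the paper silently passes over, but it does not change the structure of the argument.
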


\begin{proof}
Since the transition rules of APTC with static localities are source-dependent, and the transition rules for guarded recursion in Table \ref{TRForGR} contain only a fresh constant in
their source, so the transition rules of APTC with static localities and guarded recursion are a conservative extension of those of APTC with static localities.
\end{proof}

\begin{theorem}[Congruence theorem of APTC with static localities and guarded recursion]
Static location truly concurrent bisimulation equivalences $\sim_p^{sl}$, $\sim_s^{sl}$, $\sim_{hp}^{sl}$ and $\sim_{hhp}^{sl}$ are all congruences with respect to APTC with static localities and guarded
recursion.
\end{theorem}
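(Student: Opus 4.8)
The plan is to reduce the statement to a congruence property for the recursion construct alone, and then establish that property by a bisimulation-up-to-context argument driven by guardedness. Since the earlier congruence theorems already show that $\sim_p^{sl}$, $\sim_s^{sl}$, $\sim_{hp}^{sl}$ and $\sim_{hhp}^{sl}$ are preserved by every operator of APTC with static localities (including $::$, $\cdot$, $+$, $\parallel$, $\leftmerge$, $\mid$, $\Theta$, $\triangleleft$ and $\partial_H$), the only new obligation is that these equivalences are preserved when passing from the right-hand sides of a guarded recursive specification to the constants $\langle X_i|E\rangle$ they define. Concretely, I would fix two guarded recursive specifications $E$ and $F$ over the same recursion variables $\widetilde{X}$ whose corresponding right-hand sides are equivalent (after substituting the same closed terms for $\widetilde{X}$), and aim to prove $\langle X_i|E\rangle \sim \langle X_i|F\rangle$ for each $i$.

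First I would set up the candidate relation. For the pomset and step cases, take
$$R=\{(H(\langle\widetilde{X}|E\rangle),\,H(\langle\widetilde{X}|F\rangle)) : H \text{ a context over } \widetilde{X}\}\cup \textbf{Id},$$
and show $R$ is a static location pomset (resp.\ step) bisimulation. The verification proceeds by induction on the depth of inference of a transition $H(\langle\widetilde{X}|E\rangle)\xrightarrow[u]{\{e_1,\cdots,e_k\}}P'$. The two transition rules for guarded recursion in Table~\ref{TRForGR} say that any move of $\langle X_i|E\rangle$ is inherited from a move of $t_i(\langle\widetilde{X}|E\rangle)$; a guardedness lemma analogous to Lemma~\ref{LUS3} for this signature guarantees that this move passes through a guard, so the residual $P'$ is again of the form $H'(\langle\widetilde{X}|E\rangle)$ and the matching move of $H(\langle\widetilde{X}|F\rangle)$ leads to $H'(\langle\widetilde{X}|F\rangle)$ with matching label and location $u$, and with the two observed pomsets isomorphic. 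Throughout I would carry the location bookkeeping explicitly, checking that the pairs $(u,v)$ accumulated along matching transitions form a consistent location association, which is automatic here since matching transitions reuse the same location.

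For the hp- and hhp- cases I would lift $R$ to a posetal relation on triples $(C_1,f,C_2)$, extending $f$ by $f[e_1\mapsto e_2]$ at each matched event as in Definition~\ref{HHPB}, and re-run the same inductive transfer argument while maintaining that $f$ stays an order-isomorphism of the observed configurations. The hhp- case then additionally requires checking that the resulting posetal relation is downward closed. Because the recursion constants are nullary, the same contexts $H$ serve for all four equivalences, so the four verifications differ only in how much extra structure ($f$, downward closure) must be tracked.

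I expect the main obstacle to be the hhp- case. Downward closure must be shown compatible with recursive unfolding: when a prefix of a matched computation is taken, the restricted triple must still lie in the relation, and verifying this in the presence of the guarded-recursion rules (where the first observable step can require unfolding $t_i$ several times before a guard is reached) is the delicate point. The guardedness hypothesis is exactly what keeps these unfoldings finite and lets the induction close; the bulk of the remaining work is routine but tedious location and configuration bookkeeping, which I would treat uniformly across the four equivalences.
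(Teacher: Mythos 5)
Your proposal is correct in outline but takes a genuinely different, and substantially more ambitious, route than the paper. The paper's own proof is a two-fact reduction: (1) the right-hand sides of a guarded recursive specification can be adapted, by the axioms and by substituting right-hand sides for recursion variables, into a form built from the operators of APTC with static localities; and (2) those operators have already been shown to preserve $\sim_p^{sl}$, $\sim_s^{sl}$, $\sim_{hp}^{sl}$ and $\sim_{hhp}^{sl}$. Since the recursion constants $\langle X_i|E\rangle$ are nullary, the paper in effect treats congruence for the extended signature as nothing more than congruence of the old operators over the enlarged set of terms, and stops there. You instead read congruence as also covering the recursion construct itself --- if two guarded specifications have pointwise equivalent right-hand sides, then their solutions are equivalent --- and you prove this by a Milner-style bisimulation-up-to-context argument (induction on inference depth through the rules of Table \ref{TRForGR}, with a guardedness lemma playing the role that Lemma \ref{LUS3} plays for CTC). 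That is a strictly stronger statement than what the paper establishes under this theorem, and it is essentially the compositionality property the paper later needs and re-derives by hand (via $RDP$ and $RSP$ and an explicitly constructed specification $E$ of variables $Z_{XY}$) in the completeness proof for linear recursion; your route buys rigor and reusability at the price of the bookkeeping you acknowledge. One technical caveat: the relation $R=\{(H(\langle\widetilde{X}|E\rangle),H(\langle\widetilde{X}|F\rangle)): H\}\cup \textbf{Id}$ is not itself a bisimulation, because when $\langle X_i|E\rangle$ unfolds to $t_i(\langle\widetilde{X}|E\rangle)$, the other side unfolds to $t_i'(\langle\widetilde{X}|F\rangle)$, which is only $\sim$-related, not $R$-related, to $t_i(\langle\widetilde{X}|F\rangle)$; you must close $R$ under composition with $\sim$ on both sides (i.e., work up to $\sim$) for the transfer step to close. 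With that repair the argument is standard and sound for the pomset, step and hp- cases, and the hhp- case indeed reduces to the downward-closure check you flag.
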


\begin{proof}
It follows the following two facts:
\begin{enumerate}
  \item in a guarded recursive specification, right-hand sides of its recursive equations can be adapted to the form by applications of the axioms in APTC with static localities and
  replacing recursion variables by the right-hand sides of their recursive equations;
  \item static location truly concurrent bisimulation equivalences $\sim_p^{sl}$, $\sim_s^{sl}$, $\sim_{hp}^{sl}$ and $\sim_{hhp}^{sl}$ are all congruences with respect to all operators of APTC with
  static localities.
\end{enumerate}
\end{proof}


The $RDP$ (Recursive Definition Principle) and the $RSP$ (Recursive Specification Principle) are shown in Table \ref{RDPRSP}.

\begin{center}
\begin{table}
  \begin{tabular}{@{}ll@{}}
\hline No. &Axiom\\
  $RDP$ & $\langle X_i|E\rangle = t_i(\langle X_1|E\rangle,\cdots,\langle X_n|E\rangle)\quad (i\in\{1,\cdots,n\})$\\
  $RSP$ & if $y_i=t_i(y_1,\cdots,y_n)$ for $i\in\{1,\cdots,n\}$, then $y_i=\langle X_i|E\rangle \quad(i\in\{1,\cdots,n\})$\\
\end{tabular}
\caption{Recursive definition and specification principle}
\label{RDPRSP}
\end{table}
\end{center}

\begin{theorem}[Elimination theorem of APTC with static localities and linear recursion]\label{ETRecursion}
Each process term in APTC with static localities and linear recursion is equal to a process term $\langle X_1|E\rangle$ with $E$ a linear recursive specification.
\end{theorem}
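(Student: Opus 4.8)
The plan is to proceed by structural induction on a process term $t$ of APTC with static localities and linear recursion, showing at each step that $t$ can be rewritten, using the axioms of Tables \ref{AxiomsForBATC}, \ref{AxiomsForParallelism} and \ref{AxiomsForEncapsulation} together with $RDP$ and $RSP$ (Table \ref{RDPRSP}), into a single recursion constant $\langle X_1|E\rangle$ whose specification $E$ is linear in the sense of Definition \ref{LRS}. Throughout I assume, by renaming, that the linear specifications supplied by the induction hypothesis for distinct subterms use pairwise disjoint sets of recursion variables, and that the designated start variable of each system is its first variable. For the base cases $t\equiv e$ and $t\equiv u::e$ I take $E$ to be the single equation $X_1=u::e$ (respectively $X_1=e$), which is a linear specification whose only summand carries no trailing variable, so that $RDP$ gives $\langle X_1|E\rangle=t$. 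For $t\equiv\langle X_i|F\rangle$ with $F$ already linear I merely reindex $F$ so that $X_i$ becomes the start variable.

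The first two inductive cases assume $t_1=\langle X_1|E_1\rangle$ and $t_2=\langle Y_1|E_2\rangle$ with $E_1,E_2$ linear. For $t\equiv t_1+t_2$ I introduce a fresh start variable $Z$ whose defining equation is the sum of the (linear) right-hand sides of $X_1$ in $E_1$ and of $Y_1$ in $E_2$; adjoining all equations of $E_1$ and $E_2$ yields a linear system, and $RSP$ identifies its solution for $Z$ with $t_1+t_2$. For $t\equiv t_1\cdot t_2$ I transform $E_1$ by replacing, in every equation, each terminating summand $(w_1::b_1\leftmerge\cdots\leftmerge w_m::b_m)$ (one with no trailing recursion variable) by $(w_1::b_1\leftmerge\cdots\leftmerge w_m::b_m)\,Y_1$, thereby passing control to $t_2$ upon termination of $t_1$; the modified $E_1$ together with $E_2$ is again linear, and $RSP$ gives $t_1\cdot t_2$.

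The substantial cases are the parallel operators $\parallel$, $\leftmerge$ and $\mid$, together with $\Theta$, $\triangleleft$ and $\partial_H$. For the parallel operators I build a \emph{product} specification indexed by pairs $(X_i,Y_j)$ of recursion variables drawn from $E_1$ and $E_2$: for each pair I introduce a variable $Z_{ij}$ and compute its defining equation by applying the parallelism axioms $P1$--$P9$, $C1$--$C8$ and the location laws $L5$--$L10$ to the right-hand sides of $X_i$ and $Y_j$. This is an expansion-law computation at the level of first steps: a summand of $Z_{ij}$ is either a left-only step leading to some $Z_{i'j}$, a right-only step leading to $Z_{ij'}$, or a communication $\gamma(\cdot,\cdot)$ leading to $Z_{i'j'}$, each being a parallel block of located events followed by a single product variable (or no variable, in the terminating subcase). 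Since $E_1,E_2$ are finite, only finitely many $Z_{ij}$ arise, and $RSP$ identifies $Z_{11}$ with $t_1\parallel t_2$ (and likewise for $\leftmerge$ and $\mid$, using the side conditions on the leading events). The operators $\Theta$ and $\triangleleft$ are handled by the analogous renaming-of-variables technique using $CE1$--$CE6$ and $U1$--$U13$, and $\partial_H$ by applying $D1$--$D6$ and $L11$ to every equation of $E_1$, replacing forbidden actions by $\delta$; in every case the construction preserves finiteness.

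The main obstacle will be verifying that \emph{linearity is preserved} in the parallel cases. The two delicate points are that (i) each computed summand must reduce, via the axioms, to a single parallel block of located atomic events optionally followed by exactly one product variable, with no nested sequential compositions or residual sums surviving; and (ii) the location bookkeeping must be respected, since the communication and merge rules combine locations through $u\diamond v$, so the location laws $L5$--$L10$ must be invoked to push the distribution $u::(\cdot)$ through $\parallel$, $\mid$, $\Theta$ and $\triangleleft$ until the resulting summands genuinely match the located form of Definition \ref{LRS}. Once these facts are checked, the finiteness of the product index set guarantees that $E$ is a finite linear specification, and repeated appeal to $RSP$ completes the induction.
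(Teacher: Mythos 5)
Your overall strategy---structural induction on the term, compositionally building a finite linear specification at each operator and closing each case with $RSP$---is the same strategy underlying the paper's (much terser) proof, which simply asserts that by structural induction the term and its derivatives can be expressed as a finite system of equations already in the linear form of Definition \ref{LRS}, and then applies $RSP$. Your base cases, the summation case, and the sequential-composition case (redirecting the terminating summands of $E_1$ to the start variable of $E_2$, justified by $A4$ and $A5$) correctly fill in what the paper leaves implicit.

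There is, however, a genuine error in exactly the cases you identify as substantial. You compute the equation of the product variable $Z_{ij}$ as consisting of ``a left-only step leading to some $Z_{i'j}$, a right-only step leading to $Z_{ij'}$, or a communication $\gamma(\cdot,\cdot)$ leading to $Z_{i'j'}$''. That is the CCS/ACP interleaving expansion, and it is false in APTC with static localities: every transition rule for $\parallel$, $\leftmerge$ and $\mid$ in Table \ref{TRForAPTC} requires \emph{both} arguments to perform an event in the same step, so $x\parallel y$ has no left-only or right-only transitions, and no axiom of Table \ref{AxiomsForParallelism} lets you derive such summands. The correct first-step computation uses $P4$ together with $P5$--$P7$ (and $C1$--$C4$ for $\mid$): each summand of $Z_{ij}$ is a \emph{merged} parallel block, combining one located block from a summand of $t_{X_i}$ with one located block from a summand of $t_{Y_j}$ (subject to the causality side condition for $\leftmerge$), or a communication block, followed by at most one product variable. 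Furthermore, by $P7$ and $C4$ the continuation after such a merged step is $x'\between y'$, not $x'\parallel y'$; hence the product variables $Z_{ij}$ must be chosen to denote the whole-merge compositions $\langle X_i|E_1\rangle\between\langle Y_j|E_2\rangle$, with $t_1\parallel t_2$ and $t_1\mid t_2$ then recovered from the equation for $Z_{11}$ via $P1$. As written, the substitution $Z_{ij}\mapsto\langle X_i|E_1\rangle\parallel\langle Y_j|E_2\rangle$ does not satisfy the equations you construct, so $RSP$ cannot be invoked. With the summands recomputed in this truly concurrent form---which is exactly the located-block shape that Definition \ref{LRS} requires---the remainder of your induction goes through.
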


\begin{proof}
By applying structural induction with respect to term size, each process term $t_1$ in APTC with static localities and linear recursion generates a process can be expressed in the form of equations

$t_i=(u_{i11}::a_{i11}\leftmerge\cdots\leftmerge u_{i1i_1}::a_{i1i_1})t_{i1}+\cdots+(u_{ik_i1}::a_{ik_i1}\leftmerge\cdots\leftmerge u_{ik_ii_k}::a_{ik_ii_k})t_{ik_i}\\
+(v_{i11}::b_{i11}\leftmerge\cdots\leftmerge v_{i1i_1}::b_{i1i_1})+\cdots+(v_{i1_i1}::b_{il_i1}\leftmerge\cdots\leftmerge v_{i1_ii_l}::b_{il_ii_l})$

for $i\in\{1,\cdots,n\}$. Let the linear recursive specification $E$ consist of the recursive equations

$X_i=(u_{i11}::a_{i11}\leftmerge\cdots\leftmerge u_{i1i_1}::a_{i1i_1})X_{i1}+\cdots+(u_{ik_i1}::a_{ik_i1}\leftmerge\cdots\leftmerge u_{ik_ii_k}::a_{ik_ii_k})X_{ik_i}\\
+(v_{i11}::b_{i11}\leftmerge\cdots\leftmerge v_{i1i_1}::b_{i1i_1})+\cdots+(v_{i1_i1}::b_{il_i1}\leftmerge\cdots\leftmerge v_{i1_ii_l}::b_{il_ii_l})$

for $i\in\{1,\cdots,n\}$. Replacing $X_i$ by $t_i$ for $i\in\{1,\cdots,n\}$ is a solution for $E$, $RSP$ yields $t_1=\langle X_1|E\rangle$.
\end{proof}

\begin{theorem}[Soundness of APTC with static localities and guarded recursion]\label{SAPTCR}
Let $x$ and $y$ be APTC with static localities and guarded recursion terms. If $APTC\textrm{ with guarded recursion}\vdash x=y$, then
\begin{enumerate}
  \item $x\sim_s^{sl} y$;
  \item $x\sim_p^{sl} y$;
  \item $x\sim_{hp}^{sl} y$;
  \item $x\sim_{hhp}^{sl} y$.
\end{enumerate}
\end{theorem}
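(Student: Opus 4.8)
The plan is to reduce the claim to the soundness of the individual axioms, exactly as in the earlier soundness theorems for APTC with static localities, and then to isolate the genuinely new content, which is the soundness of the two recursion principles $RDP$ and $RSP$ in Table \ref{RDPRSP}. Since the congruence theorem for APTC with static localities and guarded recursion has already established that $\sim_p^{sl}$, $\sim_s^{sl}$, $\sim_{hp}^{sl}$ and $\sim_{hhp}^{sl}$ are congruences with respect to every operator of the signature, it suffices to verify that whenever an equation $s=t$ is an instance of one of the defining laws, the two sides are related by each of the four equivalences; congruence and transitivity of the equivalences then propagate this to an arbitrary derivation $APTC\textrm{ with guarded recursion}\vdash x=y$.

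For the axioms already present in the theory this is inherited: the laws of Table \ref{AxiomsForBATC}, the parallelism laws of Table \ref{AxiomsForParallelism} (covering $\between$, $\parallel$, $\leftmerge$, $\mid$, $\Theta$ and $\triangleleft$), and the encapsulation laws of Table \ref{AxiomsForEncapsulation} were shown sound modulo each of $\sim_p^{sl}$, $\sim_s^{sl}$, $\sim_{hp}^{sl}$, $\sim_{hhp}^{sl}$ in the preceding soundness theorems, so I would simply cite those. The principle $RDP$, $\langle X_i|E\rangle = t_i(\langle X_1|E\rangle,\cdots,\langle X_n|E\rangle)$, is immediate from the transition rules in Table \ref{TRForGR}: those rules are designed so that $\langle X_i|E\rangle$ and $t_i(\langle X_1|E\rangle,\cdots,\langle X_n|E\rangle)$ have literally the same outgoing transitions at the same distributions, so the identity relation augmented with this single pair is a static location bisimulation of each kind, with the location association $\varphi$ threaded unchanged.

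The main obstacle is $RSP$, the soundness of the statement that a guarded recursive specification has a unique solution up to each equivalence. Here I would fix a guarded $E$ and two solutions $\widetilde{p}$, $\widetilde{q}$ with $p_i\sim t_i(\widetilde{p})$ and $q_i\sim t_i(\widetilde{q})$, and build the candidate relation $R=\{(H(\widetilde{p}),H(\widetilde{q}))\}$ ranging over contexts $H$ built from the operators of the theory, aiming to show $R$ is a static location bisimulation containing $(p_i,q_i)$. The key lever is guardedness: as in Lemma \ref{LUS3} and Lemma \ref{LUSWW3}, any transition $H(\widetilde{p})\xrightarrow[u]{\{e_1,\cdots,e_k\}}P'$ must be produced through a guarding prefix, so $P'$ has the shape $H'(\widetilde{p})$ for some context $H'$ and the matching transition $H(\widetilde{q})\xrightarrow[u]{\{e_1,\cdots,e_k\}}H'(\widetilde{q})$ fires at the very same location tuple $u$; the two residuals are again an $R$-pair, and threading $\varphi\cup\{(u,u)\}$ keeps the consistent-location-association condition trivially satisfied. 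For the step and pomset cases this closure argument suffices; for $\sim_{hp}^{sl}$ I would additionally carry the order-isomorphism $f$ along each matched event, and for $\sim_{hhp}^{sl}$ I would check that the constructed posetal relation is downward closed, which follows because contexts restrict to subconfigurations. The only delicate point is ruling out unmatched $\tau$-free infinite unfoldings, which is exactly what guardedness prevents; once that is secured the four cases differ only in the bookkeeping of the posetal and downward-closure structure, and I would dispatch the hp- and hhp- cases by the same recipe already used for the corresponding strong unique-solution-of-equations theorems.
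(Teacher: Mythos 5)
Your overall decomposition is exactly the paper's: since $\sim_p^{sl}$, $\sim_s^{sl}$, $\sim_{hp}^{sl}$ and $\sim_{hhp}^{sl}$ are equivalences and congruences, soundness reduces to checking the two axioms of Table \ref{RDPRSP}, namely $RDP$ and $RSP$; the paper declares this check trivial and omits it, while you supply the content. Your treatment of $RDP$ is correct: by the transition rules of Table \ref{TRForGR}, $\langle X_i|E\rangle$ and $t_i(\langle X_1|E\rangle,\cdots,\langle X_n|E\rangle)$ have literally the same outgoing transitions at the same locations, so the identity relation extended with this pair is a bisimulation of each of the four kinds.

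However, your $RSP$ argument has a genuine gap. You claim that every transition of $H(\widetilde{p})$, for an \emph{arbitrary} context $H$, ``must be produced through a guarding prefix, so $P'$ has the shape $H'(\widetilde{p})$''. This fails precisely when $H$ exposes a hole unguarded --- most basically $H\equiv X_i$, where $H(\widetilde{p})\equiv p_i$ is an arbitrary process, not a syntactically guarded term, and its transitions need not factor through any guard. Guardedness is a property of the right-hand sides $t_i$ of the specification, not of $H$ or of the solutions $p_i$. The repair is the one the paper itself uses in its unique-solution theorems (Theorem \ref{USSSB3}, via Lemma \ref{LUS3}): first use the solution equation $p_i\sim t_i(\widetilde{p})$ to transfer the transition to $t_i(\widetilde{p})$, where guardedness does apply and yields a residual of the form $H'(\widetilde{p})$; but then the residual of $p_i$ is only \emph{bisimilar} to $H'(\widetilde{p})$, not identical to it. Consequently $R=\{(H(\widetilde{p}),H(\widetilde{q}))\}$ is not itself a bisimulation; you must close it under the relevant bisimilarity, i.e.\ show that the composition $\sim R\sim$ is a bisimulation (a bisimulation-up-to argument), threading the location association $\varphi$ and, for the hp- and hhp-cases, the posetal and downward-closure structure through that composition. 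With that adjustment your argument goes through for all four equivalences and fills in exactly what the paper leaves unproven.
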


\begin{proof}
Since $\sim_p^{sl}$, $\sim_s^{sl}$, $\sim_{hp}^{sl}$, and $\sim_{hhp}^{sl}$ are all both equivalent and congruent relations, we only need to check if each axiom in
Table \ref{RDPRSP} is sound modulo $\sim_p^{sl}$, $\sim_s^{sl}$, $\sim_{hp}^{sl}$, and $\sim_{hhp}^{sl}$, the proof is trivial and we omit it.
\end{proof}

\begin{theorem}[Completeness of APTC with static localities and linear recursion]\label{CAPTCR}
Let $p$ and $q$ be closed APTC with static localities and linear recursion terms, then,
\begin{enumerate}
  \item if $p\sim_s^{sl} q$ then $p=q$;
  \item if $p\sim_p^{sl} q$ then $p=q$;
  \item if $p\sim_{hp}^{sl} q$ then $p=q$;
  \item if $p\sim_{hhp}^{sl} q$ then $p=q$.
\end{enumerate}
\end{theorem}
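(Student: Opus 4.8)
The plan is to reduce the statement, for each of the four equivalences simultaneously, to a single application of $RSP$, following the standard completeness argument for process algebras with linear recursion. First I would invoke the elimination theorem for APTC with static localities and linear recursion (Theorem \ref{ETRecursion}): every closed term is provably equal to a term of the form $\langle X_1|E\rangle$ with $E$ a linear recursive specification in the normal form of Definition \ref{LRS}. Hence it suffices to prove that whenever $\langle X_1|E_1\rangle \sim \langle Y_1|E_2\rangle$, for $\sim$ any of $\sim_s^{sl}$, $\sim_p^{sl}$, $\sim_{hp}^{sl}$, $\sim_{hhp}^{sl}$ and $E_1,E_2$ linear, the equation $\langle X_1|E_1\rangle = \langle Y_1|E_2\rangle$ is derivable.

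The heart of the argument is to construct one linear recursive specification $E$ that both processes solve. I would write $E_1$ with recursion variables $X_1,\cdots,X_m$ and $E_2$ with variables $Y_1,\cdots,Y_n$, each equation being a sum of summands of the shape $(u_{1}::a_{1}\leftmerge\cdots\leftmerge u_{k}::a_{k})\cdot X_{i'}$ or a terminating summand $(u_{1}::a_{1}\leftmerge\cdots\leftmerge u_{k}::a_{k})$. For every pair $(i,j)$ with $\langle X_i|E_1\rangle \sim \langle Y_j|E_2\rangle$ I introduce a fresh variable $Z_{ij}$. The matching condition of the static location bisimulation guarantees, exactly as in the completeness proof for $BATC$ with static localities (Theorem \ref{CBATCPBE}), that each summand of $X_i$ carrying a location-prefixed parallel action is answered by a summand of $Y_j$ carrying the same location-prefixed parallel action with a continuation $Y_{j'}$ satisfying $\langle X_{i'}|E_1\rangle \sim \langle Y_{j'}|E_2\rangle$, and symmetrically. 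I take the equation for $Z_{ij}$ to be the sum of these matched summands, each continuing in $Z_{i'j'}$. Since every prefix is a genuine location-decorated action, $E$ is again linear, hence guarded.

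Finally I would verify that the assignment $Z_{ij}\mapsto\langle X_i|E_1\rangle$ is a solution of $E$: unfolding $\langle X_i|E_1\rangle$ by $RDP$ gives the full right-hand side of $E_1$'s $i$-th equation, and using the soundness theorem (Theorem \ref{SAPTCR}), the congruence of $\sim$, and idempotence/commutativity/associativity of $+$ (axioms $A1$–$A3$) one reconciles these summands with the right-hand side of the $Z_{ij}$-equation under the substitution; the symmetric check shows $Z_{ij}\mapsto\langle Y_j|E_2\rangle$ is also a solution. As $E$ is guarded, $RSP$ forces the two solutions to coincide, so in particular $\langle X_1|E_1\rangle = \langle Z_{11}|E\rangle = \langle Y_1|E_2\rangle$, which is the desired equality for all four equivalences. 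The main obstacle I anticipate is the $\sim_{hp}^{sl}$ and $\sim_{hhp}^{sl}$ cases: there the bisimilar ``states'' are triples carrying a posetal isomorphism $f$ between configurations, so the pairs $Z_{ij}$ must be chosen to record not just bisimilar residuals but residuals related by the propagated map $f[e_1\mapsto e_2]$ (and, for $\sim_{hhp}^{sl}$, with downward closure preserved). Keeping this isomorphism bookkeeping consistent while still producing a finite linear specification is the delicate step, whereas the pomset and step cases go through with the plain product of state sets.
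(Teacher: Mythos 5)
Your proposal is correct and follows essentially the same route as the paper's own proof: after invoking the elimination theorem (Theorem \ref{ETRecursion}), the paper likewise constructs the combined linear specification $E$ with variables $Z_{XY}$ for bisimilar pairs whose equations consist exactly of the matched summands of $t_X$ and $t_Y$, shows via $RDP$ that $\langle X|E_1\rangle$ (and symmetrically $\langle Y|E_2\rangle$) solves $E$, and concludes by $RSP$ that $\langle X_1|E_1\rangle=\langle Z_{X_1Y_1}|E\rangle=\langle Y_1|E_2\rangle$. The paper carries this out in detail only for $\sim_s^{sl}$ and declares the pomset, hp- and hhp-cases ``proven similarly,'' so the posetal-isomorphism bookkeeping you flag as the delicate step is not elaborated there either.
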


\begin{proof}
Firstly, by the elimination theorem of APTC with static localities and guarded recursion (see Theorem \ref{ETRecursion}), we know that each process term in APTC with static localities and linear recursion is equal to a process term $\langle X_1|E\rangle$ with $E$ a linear recursive specification.

It remains to prove the following cases.

(1) If $\langle X_1|E_1\rangle \sim_s^{sl} \langle Y_1|E_2\rangle$ for linear recursive specification $E_1$ and $E_2$, then $\langle X_1|E_1\rangle = \langle Y_1|E_2\rangle$.

Let $E_1$ consist of recursive equations $X=t_X$ for $X\in \mathcal{X}$ and $E_2$
consists of recursion equations $Y=t_Y$ for $Y\in\mathcal{Y}$. Let the linear recursive specification $E$ consist of recursion equations $Z_{XY}=t_{XY}$, and
$\langle X|E_1\rangle\sim_s^{sl}\langle Y|E_2\rangle$, and $t_{XY}$ consists of the following summands:

\begin{enumerate}
  \item $t_{XY}$ contains a summand $(u_1::a_1\leftmerge\cdots\leftmerge u_m::a_m)Z_{X'Y'}$ iff $t_X$ contains the summand $(u_1::a_1\leftmerge\cdots\leftmerge u_m::a_m)X'$ and $t_Y$
  contains the summand $(u_1::a_1\leftmerge\cdots\leftmerge u_m::a_m)Y'$ such that $\langle X'|E_1\rangle\sim_s^{sl}\langle Y'|E_2\rangle$;
  \item $t_{XY}$ contains a summand $v_1::b_1\leftmerge\cdots\leftmerge v_n::b_n$ iff $t_X$ contains the summand $v_1::b_1\leftmerge\cdots\leftmerge v_n::b_n$ and $t_Y$ contains the
  summand $v_1::b_1\leftmerge\cdots\leftmerge v_n::b_n$.
\end{enumerate}

Let $\sigma$ map recursion variable $X$ in $E_1$ to $\langle X|E_1\rangle$, and let $\psi$ map recursion variable $Z_{XY}$ in $E$ to $\langle X|E_1\rangle$. So,
$\sigma((u_1::a_1\leftmerge\cdots\leftmerge u_m::a_m)X')\equiv(u_1::a_1\leftmerge\cdots\leftmerge u_m::a_m)\langle X'|E_1\rangle\equiv\psi((u_1::a_1\leftmerge\cdots\leftmerge u_m::a_m)Z_{X'Y'})$,
so by $RDP$, we get $\langle X|E_1\rangle=\sigma(t_X)=\psi(t_{XY})$. Then by $RSP$, $\langle X|E_1\rangle=\langle Z_{XY}|E\rangle$, particularly,
$\langle X_1|E_1\rangle=\langle Z_{X_1Y_1}|E\rangle$. Similarly, we can obtain $\langle Y_1|E_2\rangle=\langle Z_{X_1Y_1}|E\rangle$. Finally,
$\langle X_1|E_1\rangle=\langle Z_{X_1Y_1}|E\rangle=\langle Y_1|E_2\rangle$, as desired.

(2) If $\langle X_1|E_1\rangle \sim_p^{sl} \langle Y_1|E_2\rangle$ for linear recursive specification $E_1$ and $E_2$, then $\langle X_1|E_1\rangle = \langle Y_1|E_2\rangle$.

It can be proven similarly to (1), we omit it.

(3) If $\langle X_1|E_1\rangle \sim_{hp}^{sl} \langle Y_1|E_2\rangle$ for linear recursive specification $E_1$ and $E_2$, then $\langle X_1|E_1\rangle = \langle Y_1|E_2\rangle$.

It can be proven similarly to (1), we omit it.

(4) If $\langle X_1|E_1\rangle \sim_{hhp}^{sl} \langle Y_1|E_2\rangle$ for linear recursive specification $E_1$ and $E_2$, then $\langle X_1|E_1\rangle = \langle Y_1|E_2\rangle$.

It can be proven similarly to (1), we omit it.
\end{proof}


In this subsection, we introduce approximation induction principle ($AIP$) and try to explain that $AIP$ is still valid. $AIP$ can be used to try and equate static location truly concurrent bisimilar
guarded recursive specifications. $AIP$ says that if two process terms are truly concurrent bisimilar up to any finite depth, then they are static location truly concurrent bisimilar.

Also, we need the auxiliary unary projection operator $\Pi_n$ for $n\in\mathbb{N}$ and $\mathbb{N}\triangleq\{0,1,2,\cdots\}$. The transition rules of $\Pi_n$ are expressed in Table
\ref{TRForProjection}.

\begin{center}
    \begin{table}
        $$\frac{x\xrightarrow[u]{\{e_1,\cdots,e_k\}}\surd}{\Pi_{n+1}(x)\xrightarrow[u]{\{e_1,\cdots,e_k\}}\surd}
        \quad\frac{x\xrightarrow[u]{\{e_1,\cdots,e_k\}}x'}{\Pi_{n+1}(x)\xrightarrow[u]{\{e_1,\cdots,e_k\}}\Pi_n(x')}$$
        \caption{Transition rules of projection operator $\Pi_n$}
        \label{TRForProjection}
    \end{table}
\end{center}

Based on the transition rules for projection operator $\Pi_n$ in Table \ref{TRForProjection}, we design the axioms as Table \ref{AxiomsForProjection} shows.

\begin{center}
    \begin{table}
        \begin{tabular}{@{}ll@{}}
            \hline No. &Axiom\\
            $PR1$ & $\Pi_n(x+y)=\Pi_n(x)+\Pi_n(y)$\\
            $PR2$ & $\Pi_n(x\leftmerge y)=\Pi_n(x)\leftmerge \Pi_n(y)$\\
            $PR3$ & $\Pi_{n+1}(e_1\leftmerge\cdots\leftmerge e_k)=e_1\leftmerge\cdots\leftmerge e_k$\\
            $PR4$ & $\Pi_{n+1}((e_1\leftmerge\cdots\leftmerge e_k)\cdot x)=(e_1\leftmerge\cdots\leftmerge e_k)\cdot\Pi_n(x)$\\
            $PR5$ & $\Pi_0(x)=\delta$\\
            $PR6$ & $\Pi_n(\delta)=\delta$\\
            $L12$ & $u::\Pi_n(x) = \Pi_n(u::x)$\\
        \end{tabular}
        \caption{Axioms of projection operator}
        \label{AxiomsForProjection}
    \end{table}
\end{center}

\begin{theorem}[Conservativity of APTC with static localities and projection operator and guarded recursion]
APTC with static localities and projection operator and guarded recursion is a conservative extension of APTC with static localities and guarded recursion.
\end{theorem}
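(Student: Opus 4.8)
The plan is to invoke the general Conservative extension theorem stated in the Backgrounds chapter, taking $T_0$ to be the TSS of APTC with static localities and guarded recursion and $T_1$ to be the two transition rules for the projection operator $\Pi_n$ in Table~\ref{TRForProjection}. This mirrors exactly the argument already used for the conservativity of APTC with static localities and guarded recursion over APTC with static localities, so the work reduces to checking the two hypotheses of that theorem, namely source-dependency of $T_0$ and the freshness/premise condition on each rule of $T_1$.

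First I would record that $T_0$ is source-dependent. This was already established in the earlier conservativity results: the transition rules of APTC with static localities are all source-dependent, and adjoining guarded recursion (whose rules in Table~\ref{TRForGR} introduce only the fresh constants $\langle X_i|E\rangle$ in their sources) preserves source-dependency. Hence $T_0$ satisfies condition (1). I would also note that $T_0$ and $T_0\oplus T_1$ are positive after reduction, since the only negative premises in the whole system are the $\nrightarrow$ side-conditions, which already live in $T_0$ and do not concern the new symbol $\Pi_n$.

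Next I would verify condition (2) for each rule $\rho\in T_1$. Both rules for $\Pi_n$ have source of the form $\Pi_{n+1}(x)$, in which the function symbol $\Pi_n$ occurs; since $\Pi_n\in\Sigma_1\setminus\Sigma_0$, the source of each such rule is fresh in the sense of the Freshness definition. Thus the first alternative of condition (2) is met for every rule of $T_1$, and there is no need to inspect the premises. Having checked both conditions, the Conservative extension theorem yields that $T_0\oplus T_1$ generates exactly the same transitions $t\xrightarrow[u]{a}t'$ and the same termination predicates $t\xrightarrow[u]{a}\surd$ for every $t\in\mathcal{T}(\Sigma_0)$, which is precisely the claim. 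The argument is entirely routine structural bookkeeping; the only point requiring care is confirming that $\Pi_n$ really is the sole fresh symbol and that it genuinely appears in the source of every rule of $T_1$, which is immediate from Table~\ref{TRForProjection}. I expect no substantive obstacle here.
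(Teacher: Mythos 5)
Your proposal is correct and follows essentially the same route as the paper's own proof, which likewise rests on exactly two facts: the transition rules of APTC with static localities and guarded recursion are source-dependent, and the sources of the projection rules contain an occurrence of the fresh operator $\Pi_n$, so the general conservative extension theorem applies. Your additional check that $T_0$ and $T_0\oplus T_1$ are positive after reduction is a hypothesis the paper leaves implicit, so your write-up is if anything slightly more complete.
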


\begin{proof}
It follows from the following two facts.

\begin{enumerate}
  \item The transition rules of APTC with static localities and guarded recursion are all source-dependent;
  \item The sources of the transition rules for the projection operator contain an occurrence of $\Pi_n$.
\end{enumerate}
\end{proof}

\begin{theorem}[Congruence theorem of projection operator $\Pi_n$]
Static location truly concurrent bisimulation equivalences $\sim_p^{sl}$, $\sim_s^{sl}$, $\sim_{hp}^{sl}$ and $\sim_{hhp}^{sl}$ are all congruences with respect to projection operator
$\Pi_n$.
\end{theorem}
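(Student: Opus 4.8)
The plan is to establish, for each of the four equivalences $\sim_p^{sl}$, $\sim_s^{sl}$, $\sim_{hp}^{sl}$ and $\sim_{hhp}^{sl}$, that it is preserved by $\Pi_n$, the equivalence-relation part having already been recorded for APTC with static localities. Since the four arguments share a common skeleton, I would first treat the pomset case in full and then indicate the modifications for the others.

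For the pomset case (and identically the step case), suppose $P\sim_p^{sl}Q$, witnessed by a family of static location pomset bisimulations $\{S_\varphi\}$ with $(\emptyset,\emptyset)\in S_{\varphi}$. I would propose the candidate family
$$R_\varphi=\{(\Pi_n(P'),\Pi_n(Q')) : (P',Q')\in S_\varphi,\ n\in\mathbb{N}\}\cup\textbf{Id}$$
and verify the transfer property by inspecting the two transition rules for $\Pi_n$ in Table \ref{TRForProjection}. The essential observations are: a transition $\Pi_n(P')\xrightarrow[u]{X}t$ can only arise when $n\geq 1$, forcing $P'\xrightarrow[u]{X}P''$ with $t\equiv\Pi_{n-1}(P'')$; the projection operator leaves the location label $u$ and the label-set $X$ untouched; and the depth index is decremented uniformly on both components. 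Hence from $P'\xrightarrow[u]{X}P''$ and $(P',Q')\in S_\varphi$ I obtain $Q'\xrightarrow[v]{Y}Q''$ with $X\sim Y$ and $(P'',Q'')\in S_{\varphi\cup\{(u,v)\}}$, which yields $\Pi_n(Q')\xrightarrow[v]{Y}\Pi_{n-1}(Q'')$ and $(\Pi_{n-1}(P''),\Pi_{n-1}(Q''))\in R_{\varphi\cup\{(u,v)\}}$, as required; the symmetric clause and the termination clause $\Pi_n(P')\xrightarrow[u]{X}\surd$ are dispatched the same way.

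For the hp-case the candidate is the posetal analogue, relating $(\Pi_n(P'),f,\Pi_n(Q'))$ whenever $(P',f,Q')$ lies in a witnessing static location hp-bisimulation; because $\Pi_n$ neither renames nor reorders the events it does execute but merely truncates the behaviour at depth $n$, the configuration isomorphism $f$ carries over verbatim and its update $f[e_1\mapsto e_2]$ is exactly the one produced by the underlying transition of $P'$. The hhp-case adds the requirement that the relation be downward closed, and this is where I expect the real work to lie: I would need to check that if $(\Pi_n(P'),f,\Pi_n(Q'))$ sits pointwise below another related triple, then it is itself in the relation, reconciling the pointwise order on truncated configurations with the downward closure of the witnessing hhp-bisimulation. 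The main obstacle is thus the interaction between the depth-truncation performed by $\Pi_n$ and hereditary (downward) closure; once one verifies that truncation commutes with passing to sub-configurations, the downward closure of the witnessing family transports to $\{R_\varphi\}$ and the four claims follow.
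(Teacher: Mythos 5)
Your proposal is correct and follows essentially the same route as the paper: the paper likewise reduces congruence to showing that $\sim_p^{sl}$, $\sim_s^{sl}$, $\sim_{hp}^{sl}$ and $\sim_{hhp}^{sl}$ are preserved by $\Pi_n$, but then declares this preservation step trivial and omits it. What you have written is exactly the omitted content — the explicit candidate relations $R_{\varphi}$ built from $\Pi_n$-images of a witnessing bisimulation, the transfer argument via the two rules of Table \ref{TRForProjection}, and the (correctly flagged) downward-closure check in the hhp-case — so it supplies detail the paper skips rather than diverging from it.
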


\begin{proof}
It is easy to see that static location pomset, step, hp-, hhp- bisimulation are all equivalent relations on APTC with static localities and guarded recursion terms, we only need to prove that
$\sim_p^{sl}$, $\sim_s^{sl}$, $\sim_{hp}^{sl}$ and $\sim_{hhp}^{sl}$ are all preserved by the operator $\Pi_n$,
the proof is trivial and we omit it.
\end{proof}

\begin{theorem}[Elimination theorem of APTC with static localities and linear recursion and projection operator]\label{ETProjection}
Each process term in APTC with static localities and linear recursion and projection operator is equal to a process term $\langle X_1|E\rangle$ with $E$ a linear recursive specification.
\end{theorem}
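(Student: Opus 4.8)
The plan is to proceed by structural induction on the process term $t$, in exactly the spirit of the elimination theorem for APTC with static localities and linear recursion (Theorem \ref{ETRecursion}). The key observation is that the projection operator $\Pi_n$ is the only construct not already covered by that theorem. Thus for every term whose outermost operator is one of $::$, $\cdot$, $+$, $\leftmerge$, $\between$, $\mid$, $\Theta$, $\triangleleft$, $\partial_H$, or a recursion constant, the induction hypothesis first rewrites the immediate subterms into projection-free form; the whole composite is then a term of APTC with static localities and linear recursion, and Theorem \ref{ETRecursion} delivers the desired $\langle X_1|E\rangle$ with $E$ linear. Consequently the entire argument reduces to the single new case $t\equiv\Pi_n(t_1)$, where I would invoke the induction hypothesis to write $t_1=\langle X_1|E\rangle$ for some linear recursive specification $E$, so that it suffices to eliminate projection from $\Pi_n(\langle X_1|E\rangle)$.

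For this I would prove, by an inner induction on $n\in\mathbb{N}$, that $\Pi_n(\langle X_1|E\rangle)$ is provably equal to a finite closed basic term containing neither $\Pi$ nor any recursion constant. The base case $n=0$ is immediate from axiom $PR5$, which gives $\Pi_0(\langle X_1|E\rangle)=\delta$. For the inductive step, I would use $RDP$ to unfold $\langle X_1|E\rangle$ into the head normal form guaranteed by linearity (Definition \ref{LRS}), namely a sum of continuing summands of the shape $(u_1::a_1\leftmerge\cdots\leftmerge u_m::a_m)\cdot\langle X_j|E\rangle$ together with terminating summands $(v_1::b_1\leftmerge\cdots\leftmerge v_l::b_l)$. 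Pushing $\Pi_{n+1}$ inward using $PR1$ to distribute over $+$, $PR3$ on the terminating summands, and $PR4$ on the continuing summands produces summands $(u_1::a_1\leftmerge\cdots\leftmerge u_m::a_m)\cdot\Pi_n(\langle X_j|E\rangle)$; the inner induction hypothesis then replaces each $\Pi_n(\langle X_j|E\rangle)$ by a finite basic term, while the location axiom $L12$ commutes $::$ past $\Pi$ wherever a static location prefixes the projected subterm.

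Once $\Pi_n(\langle X_1|E\rangle)$ has been rewritten to a finite closed basic term $q$, the proof concludes by noting that a finite basic term is an acyclic instance of a linear recursive specification: reapplying Theorem \ref{ETRecursion} to $q$ yields $q=\langle Y_1|E'\rangle$ with $E'$ linear, which is precisely the required form. Assembling the cases then completes the structural induction.

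The hard part will be the inner induction on $n$. One must verify that the axioms $PR1$--$PR6$ together with $L12$ genuinely suffice to drive $\Pi_{n+1}$ through every summand of the linear head normal form --- in particular through the parallel prefixes $(u_1::a_1\leftmerge\cdots\leftmerge u_m::a_m)$, for which $PR3$ and $PR4$ are exactly tailored, and past the static location operators $::$, where $L12$ is indispensable --- and that the projection index strictly decreases at each application of $PR4$, so that the unfolding terminates after at most $n$ rounds instead of diverging along the recursion. Establishing that this rewriting always reaches a $\Pi$-free, recursion-free basic term is the crux of the matter; the surrounding bookkeeping of the structural induction is routine.
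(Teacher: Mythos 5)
Your proof is correct and rests on the same ingredients as the paper's: head normal forms of the shape in Definition \ref{LRS}, the projection axioms $PR1$--$PR6$ together with $L12$, and $RDP$/$RSP$. The organization, however, genuinely differs. The paper runs a single structural induction on term size, expressing every term (projections included) directly in head normal form; these equations are taken as the linear specification $E$, $RSP$ gives $t_1=\langle X_1|E\rangle$, and the proof ends with the bare remark that $E$ contains no occurrence of $\Pi_n$. You instead invoke Theorem \ref{ETRecursion} as a black box, isolate the one genuinely new case $\Pi_n(\langle X_1|E\rangle)$, and settle it by an inner induction on $n$: unfold by $RDP$, distribute by $PR1$, cut by $PR3$/$PR4$ (with $L12$ to move $::$ past $\Pi_n$), and bottom out at $PR5$. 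This makes explicit exactly what the paper's closing remark glosses over: the unfolding terminates because the projection index strictly decreases at each application of $PR4$, and in fact $\Pi_n$ applied to the solution of a linear specification is provably a finite closed basic term --- a sharper intermediate statement than anything the paper records. The only cost is your final bookkeeping step of re-packaging that finite basic term as a (trivially acyclic) linear specification, which the paper avoids by never leaving the recursive-specification format; both routes are sound, and yours carries the real mathematical content that the paper leaves implicit.
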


\begin{proof}
By applying structural induction with respect to term size, each process term $t_1$ in APTC with static localities and linear recursion and projection operator $\Pi_n$ generates a
process can be expressed in the form of equations

$t_i=(u_{i11}::a_{i11}\leftmerge\cdots\leftmerge u_{i1i_1}::a_{i1i_1})t_{i1}+\cdots+(u_{ik_i1}::a_{ik_i1}\leftmerge\cdots\leftmerge u_{ik_ii_k}::a_{ik_ii_k})t_{ik_i}\\
+(v_{i11}::b_{i11}\leftmerge\cdots\leftmerge v_{i1i_1}::b_{i1i_1})+\cdots+(v_{i1_i1}::b_{il_i1}\leftmerge\cdots\leftmerge v_{i1_ii_l}::b_{il_ii_l})$

for $i\in\{1,\cdots,n\}$. Let the linear recursive specification $E$ consist of the recursive equations

$X_i=(u_{i11}::a_{i11}\leftmerge\cdots\leftmerge u_{i1i_1}::a_{i1i_1})X_{i1}+\cdots+(u_{ik_i1}::a_{ik_i1}\leftmerge\cdots\leftmerge u_{ik_ii_k}::a_{ik_ii_k})X_{ik_i}\\
+(v_{i11}::b_{i11}\leftmerge\cdots\leftmerge v_{i1i_1}::b_{i1i_1})+\cdots+(v_{i1_i1}::b_{il_i1}\leftmerge\cdots\leftmerge v_{i1_ii_l}::b_{il_ii_l})$

for $i\in\{1,\cdots,n\}$. Replacing $X_i$ by $t_i$ for $i\in\{1,\cdots,n\}$ is a solution for $E$, $RSP$ yields $t_1=\langle X_1|E\rangle$.

That is, in $E$, there is not the occurrence of projection operator $\Pi_n$.
\end{proof}

\begin{theorem}[Soundness of APTC with static localities and projection operator and guarded recursion]\label{SAPTCR}
Let $x$ and $y$ be APTC with static localities and projection operator and guarded recursion terms. If APTC with static localities and projection operator and guarded recursion $\vdash x=y$, then
\begin{enumerate}
  \item $x\sim_s^{sl} y$;
  \item $x\sim_p^{sl} y$;
  \item $x\sim_{hp}^{sl} y$;
  \item $x\sim_{hhp}^{sl} y$.
\end{enumerate}
\end{theorem}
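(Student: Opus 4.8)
The plan is to follow exactly the strategy used for every preceding soundness result in this section: exploit the fact that each of the four static location truly concurrent bisimulation equivalences $\sim_s^{sl}$, $\sim_p^{sl}$, $\sim_{hp}^{sl}$ and $\sim_{hhp}^{sl}$ is simultaneously an equivalence relation and a congruence with respect to every operator in the signature --- in particular with respect to the projection operator $\Pi_n$, which was just established in the congruence theorem for $\Pi_n$, and with respect to all operators of APTC with static localities and guarded recursion. Because of congruence, soundness of the whole equational theory reduces to the pointwise task of checking that each individual axiom is sound, i.e.\ that whenever the theory equates two terms, their closed instances are related by each of the four bisimulations.

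First I would note that the axioms governing APTC with static localities and guarded recursion, including $RDP$ and $RSP$ in Table \ref{RDPRSP}, have already been shown sound modulo all four equivalences by the earlier soundness theorem for guarded recursion, so the only genuinely new obligation is to verify the projection axioms $PR1$--$PR6$ together with the location law $L12$ in Table \ref{AxiomsForProjection}. For each such axiom I would exhibit, for the closed instances of both sides, a witnessing relation built from $\textbf{Id}$ together with the pair of the two sides, and check the transfer conditions directly against the transition rules for $\Pi_n$ in Table \ref{TRForProjection}. Concretely, $PR1$ and $PR2$ reduce to the observation that $\Pi_n$ merely relays the first step and decrements its index, so projection distributes over the branching and left-merge structure; $PR3$ and $PR4$ record that at depth $n+1$ a single parallel step $e_1\leftmerge\cdots\leftmerge e_k$ fires unchanged while carrying its location, leaving either $\surd$ or a residual guarded by $\Pi_n$; and $PR5$, $PR6$ capture that both $\Pi_0$ and $\Pi_n(\delta)$ block all transitions and are therefore matched by $\delta$.

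The interaction law $L12$, namely $u::\Pi_n(x)=\Pi_n(u::x)$, is where the location discipline must be treated with care: a move of $x$ at location $v$ is relabelled to $loc\ll v$ by the \textbf{Loc} rule, and I would check that prefixing by $u$ and truncating at depth $n$ commute, so that both sides emit exactly the same location-annotated transitions and preserve the consistent location association demanded by the definitions of $\sim_{hp}^{sl}$ and $\sim_{hhp}^{sl}$. The main obstacle, as in the earlier proofs, is the hhp-case: beyond matching single location-labelled events one must maintain downward closure of the posetal relation under the depth cut imposed by $\Pi_n$, since a naive truncation could threaten the hereditary backward condition. I would argue that because $\Pi_n$ only discards transitions strictly deeper than $n$ and never reorders or merges events occurring at the same depth, the order isomorphism and its downward closure are inherited level by level, so no new conflict with hereditarity arises. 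Once every projection axiom is verified in this manner, soundness follows for all four equivalences, and I would simply remark that the remaining verifications are routine and omit them, in keeping with the style of the surrounding results.
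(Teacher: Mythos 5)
Your proposal is correct and follows essentially the same route as the paper: since $\sim_p^{sl}$, $\sim_s^{sl}$, $\sim_{hp}^{sl}$ and $\sim_{hhp}^{sl}$ are equivalences and congruences, soundness reduces to checking each axiom of Table \ref{AxiomsForProjection}, which the paper then declares trivial and omits. You go somewhat further than the paper by actually sketching the verification of $PR1$--$PR6$ and $L12$ (including the location and hhp downward-closure issues), but this only fills in details the paper leaves implicit rather than constituting a different approach.
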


\begin{proof}
Since $\sim_p^{sl}$, $\sim_s^{sl}$, $\sim_{hp}^{sl}$, and $\sim_{hhp}^{sl}$ are all both equivalent and congruent relations, we only need to check if each axiom in
Table \ref{AxiomsForProjection} is sound modulo $\sim_p^{sl}$, $\sim_s^{sl}$, $\sim_{hp}^{sl}$, and $\sim_{hhp}^{sl}$, the proof is trivial and we omit it.
\end{proof}

Then $AIP$ is given in Table \ref{AIP}.

\begin{center}
    \begin{table}
        \begin{tabular}{@{}ll@{}}
            \hline No. &Axiom\\
            $AIP$ & if $\Pi_n(x)=\Pi_n(y)$ for $n\in\mathbb{N}$, then $x=y$\\
        \end{tabular}
        \caption{$AIP$}
        \label{AIP}
    \end{table}
\end{center}

\begin{theorem}[Soundness of $AIP$]\label{SAIP}
Let $x$ and $y$ be APTC with static localities and projection operator and guarded recursion terms.

\begin{enumerate}
  \item If $\Pi_n(x)\sim_s^{sl}\Pi_n(y)$ for $n\in\mathbb{N}$, then $x\sim_s^{sl} y$;
  \item If $\Pi_n(x)\sim_p^{sl}\Pi_n(y)$ for $n\in\mathbb{N}$, then $x\sim_p^{sl} y$;
  \item If $\Pi_n(x)\sim_{hp}^{sl}\Pi_n(y)$ for $n\in\mathbb{N}$, then $x\sim_{hp}^{sl} y$;
  \item If $\Pi_n(x)\sim_{hhp}^{sl}\Pi_n(y)$ for $n\in\mathbb{N}$, then $x\sim_{hhp}^{sl} y$.
\end{enumerate}
\end{theorem}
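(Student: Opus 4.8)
The plan is to prove all four cases by a single uniform strategy: build a candidate relation out of the projection hypothesis and verify that it is a static location bisimulation of the relevant flavour. First I would invoke the elimination theorem for APTC with static localities and linear recursion and projection operator (Theorem \ref{ETProjection}) to reduce to the case where $x$ and $y$ have the form $\langle X_1|E\rangle$ with $E$ a linear recursive specification; the point of this reduction is that such processes are finitely branching, which is precisely the property that makes $AIP$ valid. Alongside it I would record two elementary facts about projection. The first is a transition lemma read directly off Table \ref{TRForProjection}: $x\xrightarrow[u]{X}x'$ holds iff $\Pi_{n+1}(x)\xrightarrow[u]{X}\Pi_n(x')$ holds, and likewise for terminating transitions to $\surd$, so the moves of $x$ and the moves of its projections are in exact correspondence. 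The second is a monotonicity fact: since $\Pi_m(\Pi_n(z))=\Pi_{\min(m,n)}(z)$ and each static location equivalence is a congruence with respect to $\Pi_n$, the relation $\Pi_n(x')\sim^{sl}\Pi_n(y')$ propagates downward, so that holding it for infinitely many $n$ already forces it for every $n$.

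For the pomset and step cases I would take the candidate relation
$$R=\{(x,y):\Pi_n(x)\sim_p^{sl}\Pi_n(y)\text{ for all }n\in\mathbb{N}\}$$
(with $\sim_s^{sl}$ in the step case) and show it is a static location pomset (resp. step) bisimulation; it contains the pair $(x,y)$ by the hypothesis. Suppose $(x,y)\in R$ and $x\xrightarrow[u]{X}x'$. By the transition lemma $\Pi_{n+1}(x)\xrightarrow[u]{X}\Pi_n(x')$ for every $n$, and since $\Pi_{n+1}(x)\sim_p^{sl}\Pi_{n+1}(y)$ there is a matching move $\Pi_{n+1}(y)\xrightarrow[v]{Y}q_n$ with $X\sim Y$ and $\Pi_n(x')\sim_p^{sl}q_n$; by the lemma again $q_n\equiv\Pi_n(y_n')$ for some transition $y\xrightarrow[v]{Y}y_n'$. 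Thus for each depth $n$ there is a move of $y$ whose target projects to within $\sim_p^{sl}$ of $\Pi_n(x')$.

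The main obstacle is passing from this family of witnesses, one per depth, to a single uniform witness $y'$, and this is exactly where finite branching is used. Because $y$ is finitely branching there are only finitely many pairs $(Y,y')$ with $y\xrightarrow[v]{Y}y'$, so by the pigeonhole principle one pair $(Y,y')$ occurs as $(Y,y_n')$ for infinitely many $n$; for that pair $\Pi_n(x')\sim_p^{sl}\Pi_n(y')$ holds for infinitely many $n$, hence by the monotonicity fact for all $n$, giving $(x',y')\in R$ together with the required matching move $y\xrightarrow[v]{Y}y'$. The symmetric clause and the two termination clauses are handled identically, so $R$ is a bisimulation and $x\sim_p^{sl}y$; the step case is the same verbatim with steps replacing pomset transitions.

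For the hp- and hhp- cases I would work instead with a posetal relation, taking
$$R=\{(C_1,f,C_2):\Pi_n(x)\sim_{hp}^{sl}\Pi_n(y)\text{ for all }n,\ f\text{ the induced configuration isomorphism}\}$$
and carrying the extended isomorphism $f[e_1\mapsto e_2]$ along single-event transitions exactly as in the definition of static location hp-bisimulation. The finite-branching pigeonhole argument is unchanged; the only extra bookkeeping is that the chosen witness must match both the target and the extended isomorphism, and since there are still only finitely many candidates this causes no difficulty. For hhp the candidate must in addition be downward closed, which I would obtain by closing $R$ under the pointwise order on the posetal product and checking that the construction preserves downward closure, using that each $\sim_{hhp}^{sl}$ is by definition a downward closed hp-bisimulation. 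This completes all four cases.
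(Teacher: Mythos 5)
Your proposal is correct and takes essentially the same approach as the paper: the same candidate relation (pairs whose projections at every depth are related), the same verification via the projection transition rules, and the same use of finite branching to extract a single uniform matching successor --- your pigeonhole-plus-monotonicity step is just a rephrasing of the paper's decreasing chain of finite nonempty sets $S_n$ whose intersection is nonempty. The only cosmetic differences are your upfront (and unnecessary) appeal to the elimination theorem, where the paper simply asserts finite branching of $q$ directly, and your more detailed sketch of the hp-/hhp-cases, which the paper dismisses as ``similar'' and omits.
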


\begin{proof}
(1) If $\Pi_n(x)\sim_s^{sl}\Pi_n(y)$ for $n\in\mathbb{N}$, then $x\sim_s^{sl} y$.

Since static location step bisimulation $\sim_s^{sl}$ is both an equivalent and a congruent relation with respect to APTC with static localities and guarded recursion and projection
operator, we only need to check if $AIP$ in Table \ref{AIP} is sound modulo static location step bisimulation equivalence.

Let $p,p_0$ and $q,q_0$ be closed APTC with static localities and projection operator and guarded recursion terms such that $\Pi_n(p_0)\sim_s^{sl}\Pi_n(q_0)$ for $n\in\mathbb{N}$. We
define a relation $R$ such that $p R q$ iff $\Pi_n(p)\sim_s^{sl} \Pi_n(q)$. Obviously, $p_0 R q_0$, next, we prove that $R\in\sim_s^{sl}$.

Let $p R q$ and $p\xrightarrow[u]{\{e_1,\cdots,e_k\}}\surd$, then $\Pi_1(p)\xrightarrow[u]{\{e_1,\cdots,e_k\}}\surd$, $\Pi_1(p)\sim_s^{sl}\Pi_1(q)$ yields
$\Pi_1(q)\xrightarrow[u]{\{e_1,\cdots,e_k\}}\surd$. Similarly, $q\xrightarrow[u]{\{e_1,\cdots,e_k\}}\surd$ implies $p\xrightarrow[u]{\{e_1,\cdots,e_k\}}\surd$.

Let $p R q$ and $p\xrightarrow[u]{\{e_1,\cdots,e_k\}}p'$. We define the set of process terms

$$S_n\triangleq\{q'|q\xrightarrow[u]{\{e_1,\cdots,e_k\}}q'\textrm{ and }\Pi_n(p')\sim_s^{sl}\Pi_n(q')\}$$

\begin{enumerate}
  \item Since $\Pi_{n+1}(p)\sim_s^{sl}\Pi_{n+1}(q)$ and $\Pi_{n+1}(p)\xrightarrow[u]{\{e_1,\cdots,e_k\}}\Pi_n(p')$, there exist $q'$ such that
  $\Pi_{n+1}(q)\xrightarrow[u]{\{e_1,\cdots,e_k\}}\Pi_n(q')$ and $\Pi_{n}(p')\sim_s^{sl}\Pi_{n}(q')$. So, $S_n$ is not empty.
  \item There are only finitely many $q'$ such that $q\xrightarrow[u]{\{e_1,\cdots,e_k\}}q'$, so, $S_n$ is finite.
  \item $\Pi_{n+1}(p)\sim_s^{sl}\Pi_{n+1}(q)$ implies $\Pi_{n}(p')\sim_s^{sl}\Pi_{n}(q')$, so $S_n\supseteq S_{n+1}$.
\end{enumerate}

So, $S_n$ has a non-empty intersection, and let $q'$ be in this intersection, then $q\xrightarrow[u]{\{e_1,\cdots,e_k\}}q'$ and $\Pi_n(p')\sim_s^{sl}\Pi_n(q')$, so $p' R q'$.
Similarly, let $p\mathcal{q}q$, we can obtain $q\xrightarrow[u]{\{e_1,\cdots,e_k\}}q'$ implies $p\xrightarrow[u]{\{e_1,\cdots,e_k\}}p'$ such that $p' R q'$.

Finally, $R\in\sim_s^{sl}$ and $p_0\sim_s^{sl} q_0$, as desired.

(2) If $\Pi_n(x)\sim_p^{sl}\Pi_n(y)$ for $n\in\mathbb{N}$, then $x\sim_p^{sl} y$.

Similarly to the proof of soundness of $AIP$ modulo static location step bisimulation equivalence (1), we can prove that $AIP$ in Table \ref{AIP} is sound modulo static location pomset
bisimulation equivalence, we omit it.

(3) If $\Pi_n(x)\sim_{hp}^{sl}\Pi_n(y)$ for $n\in\mathbb{N}$, then $x\sim_{hp}^{sl} y$.

Similarly to the proof of soundness of $AIP$ modulo static location pomset bisimulation equivalence (2), we can prove that $AIP$ in Table \ref{AIP} is sound modulo static location
hp-bisimulation equivalence, we omit it.

(4) If $\Pi_n(x)\sim_{hhp}^{sl}\Pi_n(y)$ for $n\in\mathbb{N}$, then $x\sim_{hhp}^{sl} y$.

Similarly to the proof of soundness of $AIP$ modulo static location hp-bisimulation equivalence (3), we can prove that $AIP$ in Table \ref{AIP} is sound modulo static location
hhp-bisimulation equivalence, we omit it.
\end{proof}

\begin{theorem}[Completeness of $AIP$]\label{CAIP}
Let $p$ and $q$ be closed APTC with static localities and linear recursion and projection operator terms, then,
\begin{enumerate}
  \item if $p\sim_s^{sl} q$ then $\Pi_n(p)=\Pi_n(q)$;
  \item if $p\sim_p^{sl} q$ then $\Pi_n(p)=\Pi_n(q)$;
  \item if $p\sim_{hp}^{sl} q$ then $\Pi_n(p)=\Pi_n(q)$;
  \item if $p\sim_{hhp}^{sl} q$ then $\Pi_n(p)=\Pi_n(q)$.
\end{enumerate}
\end{theorem}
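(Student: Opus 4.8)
The plan is to exploit the fact that a finite projection $\Pi_n$ collapses recursion into a finite, recursion-free process, so that completeness of the finite fragment of APTC with static localities can be brought to bear. First I would use the elimination theorem for APTC with static localities and linear recursion and projection (Theorem \ref{ETProjection}) to replace $p$ and $q$ by process terms $\langle X_1|E_1\rangle$ and $\langle Y_1|E_2\rangle$ with $E_1,E_2$ linear recursive specifications, so that only terms of this canonical shape need to be treated. The whole argument then rests on a single auxiliary lemma together with the congruence property of $\Pi_n$ already established.

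The key lemma I would prove is: for every closed APTC with static localities and linear recursion term $r$ and every $n\in\mathbb{N}$, the term $\Pi_n(r)$ is provably equal to a basic APTC with static localities term (in particular, recursion-free). The proof goes by induction on $n$. The base case is immediate from $PR5$, since $\Pi_0(r)=\delta$. For the inductive step, I would write $r=\langle X_1|E\rangle$ and unfold one step of the recursion using $RDP$, obtaining a guarded linear sum whose summands are either of the form $(u_{11}::a_{11}\leftmerge\cdots\leftmerge u_{1i_1}::a_{1i_1})\cdot \langle X_{j}|E\rangle$ or of the form $(v_{11}::b_{11}\leftmerge\cdots\leftmerge v_{1j_1}::b_{1j_1})$. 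Applying $PR1$ to distribute $\Pi_{n+1}$ over $+$, then $PR3$ and $PR4$ to the prefixed summands, and $L12$ to move the location prefixes across the projection, reduces $\Pi_{n+1}(r)$ to a sum in which every recursive continuation occurs under $\Pi_n$; the induction hypothesis then rewrites each such continuation to a basic term. Since the outcome contains no occurrence of $\Pi$ or of a recursion constant, the elimination theorem of Section \ref{aptcsl} guarantees it is a basic term.

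With the lemma in hand the four cases are assembled uniformly. For the step case, the congruence theorem of the projection operator $\Pi_n$ gives, from $p\sim_s^{sl}q$, that $\Pi_n(p)\sim_s^{sl}\Pi_n(q)$ for every $n$. By the lemma both $\Pi_n(p)$ and $\Pi_n(q)$ are provably equal to basic terms, and by soundness of APTC with static localities modulo static location step bisimulation these basic terms are themselves step-bisimilar; hence the completeness of APTC with static localities modulo static location step bisimulation on basic terms (Theorem \ref{CPSBE}) yields $\Pi_n(p)=\Pi_n(q)$, as desired. Cases (2), (3) and (4) are identical in structure, invoking instead the congruence of $\sim_p^{sl}$, $\sim_{hp}^{sl}$, $\sim_{hhp}^{sl}$ with respect to $\Pi_n$ and the corresponding completeness theorems (Theorems \ref{CPPBE} and \ref{CPHPBE}).

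I expect the main obstacle to be the auxiliary lemma, specifically the inductive step that shows $\Pi_n(r)$ rewrites to a basic term. The delicate point is the bookkeeping of the location prefixes $u::$ under projection: one must verify that $L12$ together with $PR1$--$PR4$ suffices to commute $\Pi_n$ past the distributed parallel prefixes $u_{1}::a_1\leftmerge\cdots\leftmerge u_k::a_k$ and that the guarded linear form is genuinely preserved by a single unfolding, so that the induction hypothesis is applicable to each continuation. Everything else is routine: the reduction to canonical form, the use of congruence, and the appeal to the already-proven completeness of the recursion-free theory.
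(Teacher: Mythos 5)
Your proposal is correct and takes essentially the same route as the paper's own proof: reduce $p$ and $q$ to canonical form via the elimination theorem, use congruence of $\Pi_n$ to get $\Pi_n(p)\sim\Pi_n(q)$, pass to basic terms, and invoke soundness plus the completeness theorems for the recursion-free fragment (Theorems \ref{CPSBE}, \ref{CPPBE}, \ref{CPHPBE}) to conclude $\Pi_n(p)=\Pi_n(q)$. The only difference is that you explicitly prove, by induction on $n$ using $RDP$, $PR1$--$PR5$ and $L12$, the auxiliary fact that $\Pi_n$ of a linear-recursion term is provably equal to a basic (recursion-free) term, whereas the paper simply asserts the existence of basic terms $p',q'$ with $p'=\Pi_n(p)$ and $q'=\Pi_n(q)$; your lemma is a welcome filling-in of that gap rather than a different approach.
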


\begin{proof}
Firstly, by the elimination theorem of APTC with static localities and guarded recursion and projection operator (see Theorem \ref{ETProjection}), we know that each process term in
APTC with static localities and linear recursion and projection operator is equal to a process term $\langle X_1|E\rangle$ with $E$ a linear recursive specification:

$X_i=(u_{i11}::a_{i11}\leftmerge\cdots\leftmerge u_{i1i_1}::a_{i1i_1})X_{i1}+\cdots+(u_{ik_i1}::a_{ik_i1}\leftmerge\cdots\leftmerge u_{ik_ii_k}::a_{ik_ii_k})X_{ik_i}\\
+(v_{i11}::b_{i11}\leftmerge\cdots\leftmerge v_{i1i_1}::b_{i1i_1})+\cdots+(v_{i1_i1}::b_{il_i1}\leftmerge\cdots\leftmerge v_{i1_ii_l}::b_{il_ii_l})$

for $i\in\{1,\cdots,n\}$.

It remains to prove the following cases.

(1) if $p\sim_s^{sl} q$ then $\Pi_n(p)=\Pi_n(q)$.

Let $p\sim_s^{sl} q$, and fix an $n\in\mathbb{N}$, there are $p',q'$ in basic APTC with static localities terms such that $p'=\Pi_n(p)$ and $q'=\Pi_n(q)$. Since $\sim_s^{sl}$ is a
congruence with respect to APTC with static localities, if $p\sim_s^{sl} q$ then $\Pi_n(p)\sim_s^{sl}\Pi_n(q)$. The soundness theorem yields
$p'\sim_s^{sl}\Pi_n(p)\sim_s^{sl}\Pi_n(q)\sim_s^{sl} q'$. Finally, the completeness of APTC with static localities modulo $\sim_s^{sl}$ ensures $p'=q'$,
and $\Pi_n(p)=p'=q'=\Pi_n(q)$, as desired.

(2) if $p\sim_p^{sl} q$ then $\Pi_n(p)=\Pi_n(q)$.

Let $p\sim_p^{sl} q$, and fix an $n\in\mathbb{N}$, there are $p',q'$ in basic APTC with static localities terms such that $p'=\Pi_n(p)$ and $q'=\Pi_n(q)$. Since $\sim_p^{sl}$ is a
congruence with respect to APTC with static localities, if $p\sim_p^{sl} q$ then $\Pi_n(p)\sim_p^{sl}\Pi_n(q)$. The soundness theorem yields
$p'\sim_p^{sl}\Pi_n(p)\sim_p^{sl}\Pi_n(q)\sim_p^{sl} q'$. Finally, the completeness of APTC with static localities modulo $\sim_p^{sl}$ ensures $p'=q'$,
and $\Pi_n(p)=p'=q'=\Pi_n(q)$, as desired.

(3) if $p\sim_{hp}^{sl} q$ then $\Pi_n(p)=\Pi_n(q)$.

Let $p\sim_{hp}^{sl} q$, and fix an $n\in\mathbb{N}$, there are $p',q'$ in basic APTC with static localities terms such that $p'=\Pi_n(p)$ and $q'=\Pi_n(q)$. Since $\sim_{hp}^{sl}$ is
a congruence with respect to APTC with static localities, if $p\sim_{hp}^{sl} q$ then $\Pi_n(p)\sim_{hp}^{sl}\Pi_n(q)$. The soundness theorem yields
$p'\sim_{hp}^{sl}\Pi_n(p)\sim_{hp}^{sl}\Pi_n(q)\sim_{hp}^{sl} q'$. Finally, the completeness of APTC with static localities modulo $\sim_{hp}^{sl}$
ensures $p'=q'$, and $\Pi_n(p)=p'=q'=\Pi_n(q)$, as desired.

(4) if $p\sim_{hhp}^{sl} q$ then $\Pi_n(p)=\Pi_n(q)$.

Let $p\sim_{hhp}^{sl} q$, and fix an $n\in\mathbb{N}$, there are $p',q'$ in basic APTC with static localities terms such that $p'=\Pi_n(p)$ and $q'=\Pi_n(q)$. Since $\sim_{hhp}^{sl}$ is
a congruence with respect to APTC with static localities, if $p\sim_{hhp}^{sl} q$ then $\Pi_n(p)\sim_{hhp}^{sl}\Pi_n(q)$. The soundness theorem yields
$p'\sim_{hhp}^{sl}\Pi_n(p)\sim_{hhp}^{sl}\Pi_n(q)\sim_{hhp}^{sl} q'$. Finally, the completeness of APTC with static localities modulo $\sim_{hhp}^{sl}$
ensures $p'=q'$, and $\Pi_n(p)=p'=q'=\Pi_n(q)$, as desired.
\end{proof}

\subsubsection{Abstraction}

To abstract away from the internal implementations of a program, and verify that the program exhibits the desired external behaviors, the silent step $\tau$ (and making $\tau$ distinct
by $\tau^e$) and abstraction operator $\tau_I$ are introduced, where $I\subseteq \mathbb{E}$ denotes the internal events. The silent step $\tau$ represents the internal
events, when we consider the external behaviors of a process, $\tau$ events can be removed, that is, $\tau$ events must keep silent. The transition rule of $\tau$ is shown in Table
\ref{TRForTau}. In the following, let the atomic event $e$ range over $\mathbb{E}\cup\{\delta\}\cup\{\tau\}$, and let the communication function
$\gamma:\mathbb{E}\cup\{\tau\}\times \mathbb{E}\cup\{\tau\}\rightarrow \mathbb{E}\cup\{\delta\}$, with each communication involved $\tau$ resulting into $\delta$.

\begin{center}
    \begin{table}
        $$\frac{}{\tau\xrightarrow{\tau}\surd}$$
        \caption{Transition rule of the silent step}
        \label{TRForTau}
    \end{table}
\end{center}


The silent step $\tau$ as an atomic event, is introduced into $E$. Considering the recursive specification $X=\tau X$, $\tau s$, $\tau\tau s$, and $\tau\cdots s$ are all its solutions, that is, the solutions make the existence of $\tau$-loops which cause unfairness. To prevent $\tau$-loops, we extend the definition of linear recursive specification (Definition \ref{LRS}) to the guarded one.

\begin{definition}[Guarded linear recursive specification]\label{GLRS}
A recursive specification is linear if its recursive equations are of the form

$(u_{11}::a_{11}\leftmerge\cdots\leftmerge u_{1i_1}::a_{1i_1})X_1+\cdots+(u_{k1}::a_{k1}\leftmerge\cdots\leftmerge u_{ki_k}::a_{ki_k})X_k\\
+(v_{11}::b_{11}\leftmerge\cdots\leftmerge v_{1j_1}::b_{1j_1})+\cdots+(v_{1j_1}::b_{1j_1}\leftmerge\cdots\leftmerge v_{1j_l}::b_{lj_l})$

where $a_{11},\cdots,a_{1i_1},a_{k1},\cdots,a_{ki_k},b_{11},\cdots,b_{1j_1},b_{1j_1},\cdots,b_{lj_l}\in \mathbb{E}\cup\{\tau\}$, and the sum above is allowed to be empty, in which case
it represents the deadlock $\delta$.

A linear recursive specification $E$ is guarded if there does not exist an infinite sequence of $\tau$-transitions
$\langle X|E\rangle\xrightarrow{\tau}\langle X'|E\rangle\xrightarrow{\tau}\langle X''|E\rangle\xrightarrow{\tau}\cdots$.
\end{definition}

\begin{theorem}[Conservitivity of APTC with static localities and silent step and guarded linear recursion]
APTC with static localities and silent step and guarded linear recursion is a conservative extension of APTC with static localities and linear recursion.
\end{theorem}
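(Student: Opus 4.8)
The plan is to invoke the Conservative extension theorem stated earlier in the preliminaries, exactly as in the conservativity results already proved for guarded recursion and for the projection operator. Here I take $T_0$ to be the TSS of APTC with static localities and linear recursion, and $T_0\oplus T_1$ to be the TSS obtained by adjoining the silent step $\tau$ (a new constant) with its transition rule in Table \ref{TRForTau}, while upgrading linear recursion to guarded linear recursion. Since guarded linear recursion uses the very same transition rules of Table \ref{TRForGR} as ordinary recursion---the guardedness restriction of Definition \ref{GLRS} only rules out infinite $\tau$-transition sequences and introduces no new rule---the genuinely new ingredient in $T_1$ is just the rule $\frac{}{\tau\xrightarrow{\tau}\surd}$.

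First I would record that $T_0$ is source-dependent: source-dependency was verified when APTC with static localities was built up from $BATC^{sl}$ through the parallel operators and encapsulation, and it is preserved by the recursion rules of Table \ref{TRForGR}, whose sources $\langle X_i\mid E\rangle$ contain all their variables. Next I would check condition (2) of the Conservative extension theorem for the new rule: its source is the constant $\tau$, which is a fresh function symbol absent from the signature of APTC with static localities and linear recursion, so the source of the rule is fresh and the first disjunct of condition (2) is satisfied immediately.

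With both hypotheses in place the Conservative extension theorem yields that $T_0\oplus T_1$ generates exactly the same transitions $t\xrightarrow{a}t'$ and $tP$ for terms $t$ over the old signature, i.e.\ that APTC with static localities and silent step and guarded linear recursion is a conservative extension of APTC with static localities and linear recursion. I expect no serious obstacle: the only subtlety is to confirm that the freshness of the constant $\tau$ together with source-dependency really prevents any old-signature term from acquiring new behavior, which is precisely what the cited theorem packages, so the argument reduces to the two-line verification above.
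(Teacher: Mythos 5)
Your proposal is correct and follows essentially the same route as the paper: the paper's proof likewise observes that the transition rules of APTC with static localities and linear recursion are source-dependent and that the only new rule, for the silent step, has the fresh constant $\tau$ as its source, and then concludes conservativity via the Conservative extension theorem. Your write-up merely makes explicit two points the paper leaves implicit, namely that guarded linear recursion introduces no new transition rules beyond those of Table \ref{TRForGR} and which disjunct of condition (2) applies.
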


\begin{proof}
Since the transition rules of APTC with static localities and linear recursion are source-dependent, and the transition rules for silent step in Table \ref{TRForTau} contain only a
fresh constant $\tau$ in their source, so the transition rules of APTC with static localities and silent step and guarded linear recursion is a conservative extension of those of
APTC with static localities and linear recursion.
\end{proof}

\begin{theorem}[Congruence theorem of APTC with static localities and silent step and guarded linear recursion]
Rooted branching static location truly concurrent bisimulation equivalences $\approx_{rbp}^{sl}$, $\approx_{rbs}^{sl}$ and $\approx_{rbhp}^{sl}$ are all congruences with respect to
APTC with static localities and silent step and guarded linear recursion.
\end{theorem}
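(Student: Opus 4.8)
The plan is to follow the same two-stage schema used for the earlier congruence theorems in this section. First I would verify that each of $\approx_{rbp}^{sl}$, $\approx_{rbs}^{sl}$ and $\approx_{rbhp}^{sl}$ is reflexive, symmetric and transitive on the closed terms of APTC with static localities and silent step and guarded linear recursion. Reflexivity and symmetry are immediate from the definitions; transitivity follows because the underlying branching static-location bisimilarities $\approx_{bp}^{sl}$, $\approx_{bs}^{sl}$, $\approx_{bhp}^{sl}$ are transitive and the two rootedness clauses compose (a concrete first move matched by a concrete first move, with branching-bisimilar residuals). Having fixed that these are equivalences, it then remains to show that each is preserved by every operator of the signature: the locality prefix $loc::\_$, the prefixes $\alpha.\_$ and $(\alpha_1\parallel\cdots\parallel\alpha_n).\_$, and the operators $+$, $\cdot$, $\between$, $\parallel$, $\leftmerge$, $\mid$, $\Theta$, $\triangleleft$, $\partial_H$, together with the guarded-linear-recursion construct $\langle X_i\mid E\rangle$; here $\tau$ now occurs as an atomic event.

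For each operator $f$ I would assume $P_1$ and $P_2$ are related (witnessed by a rooted branching static-location bisimulation $R$) and exhibit a witnessing relation of the usual shape $\{(f(\ldots,P_1,\ldots),\,f(\ldots,P_2,\ldots))\}\cup\textbf{Id}$, checked against the transition rules for that operator. The decisive point throughout is rootedness. Consider $+$: a first move of $P_1+Q$ either originates in $P_1$, in which case it is answered by a concrete identically-labelled first move of $P_2$ (not one preceded by absorbed $\tau$-steps) with branching-bisimilar residuals, or it originates in $Q$ and is matched by the very same $Q$-move through $\textbf{Id}$. This is exactly why the congruence-breaking phenomenon of plain branching bisimilarity under $+$ (the discrepancy between $\tau.P+Q$ and $P+Q$) does not arise, and the same rootedness discipline makes the prefix, parallel, communication, $\Theta$, $\triangleleft$ and $\partial_H$ cases routine. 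The static-location bookkeeping is inherited unchanged from the strong case: a move at location $u$ on the left is answered at the same location on the right, so the location data carried into $\varphi$ (or simply the matched $u$) imposes no new constraint, and only the label/termination information need be checked against the branching clauses.

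The main obstacle will be the guarded-linear-recursion construct. For this I would show that substitution into a guarded linear context preserves the equivalence: given related tuples $\widetilde{P}$ and $\widetilde{Q}$, form the relation $\{(H(\widetilde{P}),H(\widetilde{Q})):H\text{ a sequential context}\}$ and prove it is a rooted branching static-location bisimulation, using the syntactic-decomposition argument already established in Lemma \ref{LUSWW3} to trace how a transition of $H(\widetilde{P})$ is mirrored by $H(\widetilde{Q})$. Guardedness in the sense of Definition \ref{GLRS} is precisely what excludes the infinite $\tau$-sequences that would otherwise let the branching matching diverge, and it is what guarantees that the first step of the unfolding $\langle X_i\mid E\rangle\to t_i(\langle X_1\mid E\rangle,\ldots,\langle X_n\mid E\rangle)$ is itself concrete, so that rootedness is preserved across the unfolding. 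I expect the delicate case-analysis of the recursion construct to be where the genuine work lies, while all other operators reduce to the strong-case arguments augmented by the rootedness observation above. Finally, I would note that only the three equivalences $\approx_{rbp}^{sl}$, $\approx_{rbs}^{sl}$, $\approx_{rbhp}^{sl}$ are claimed: the hereditary (downward-closed) variant is omitted, since hereditary history preservation does not mesh well with the rooted branching discipline. Once preservation has been verified operator-by-operator, the three relations are congruences, as desired.
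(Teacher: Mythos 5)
Your proposal is essentially correct as a proof sketch, but it takes a genuinely different route from the paper. The paper does not verify the operators one by one; it argues indirectly from three facts: (1) right-hand sides of guarded linear recursive specifications can be brought into normal form by the axioms; (2) the strong equivalences $\sim_p^{sl}$, $\sim_s^{sl}$, $\sim_{hp}^{sl}$, $\sim_{hhp}^{sl}$ are congruences for all operators of APTC with static localities and imply the corresponding rooted branching equivalences, from which the paper concludes that $\approx_{rbp}^{sl}$, $\approx_{rbs}^{sl}$, $\approx_{rbhp}^{sl}$ (and also $\approx_{rbhhp}^{sl}$) are congruences; and (3) the passage from $\mathbb{E}$ to $\mathbb{E}\cup\{\tau\}$ preserves this, with the verification omitted. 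Your direct schema --- equivalence properties first, then preservation operator by operator with explicit attention to how rootedness rescues the $+$ case, then a context-closure relation for the recursion construct in the spirit of Lemma \ref{LUSWW3} --- is the standard textbook approach and is more self-contained: the paper's fact (2) is, as stated, only a transfer heuristic, since a finer relation being a congruence and being contained in a coarser relation does not by itself make the coarser relation a congruence (weak bisimilarity versus $+$ is the classic counterexample), so the genuine content sits exactly where you place it, in the rootedness analysis and in the $\tau$-aware re-verification that the paper defers to its omitted fact (3). What the paper's route buys is brevity and reuse of the earlier strong-case congruence theorems; what yours buys is an actual argument at the crux. Two caveats on your version: the paper's proof does treat $\approx_{rbhhp}^{sl}$ (it is merely absent from the statement), so your closing claim that the hereditary variant is omitted for a principled semantic reason is not supported by the paper; and your witness relations of the shape $\{(f(P_1),f(P_2))\}\cup\textbf{Id}$ must strictly be closed under derivatives (for instance pairs of the form $(P_1'\parallel Q, P_2'\parallel Q)$ in the parallel case), though this level of informality matches the paper's own usage elsewhere.
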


\begin{proof}
It follows the following three facts:
\begin{enumerate}
  \item in a guarded linear recursive specification, right-hand sides of its recursive equations can be adapted to the form by applications of the axioms in APTC with static localities
  and replacing recursion variables by the right-hand sides of their recursive equations;
  \item static location truly concurrent bisimulation equivalences $\sim_p^{sl}$, $\sim_s^{sl}$, $\sim_{hp}^{sl}$ and $\sim_{hhp}^{sl}$ are all congruences with respect to all operators of APTC with
  static localities, while static location truly concurrent bisimulation equivalences $\sim_p^{sl}$, $\sim_s^{sl}$, $\sim_{hp}^{sl}$ and $\sim_{hhp}^{sl}$ imply the corresponding rooted branching static
  location truly concurrent bisimulations $\approx_{rbp}^{sl}$, $\approx_{rbs}^{sl}$, $\approx_{rbhp}^{sl}$ and $\approx_{rbhhp}^{sl}$, so rooted branching static location truly concurrent bisimulations
  $\approx_{rbp}^{sl}$, $\approx_{rbs}^{sl}$, $\approx_{rbhp}^{sl}$ and $\approx_{rbhhp}^{sl}$ are all congruences with respect to all operators of APTC with static localities;
  \item While $\mathbb{E}$ is extended to $\mathbb{E}\cup\{\tau\}$, it can be proved that rooted branching static location truly concurrent bisimulations $\approx_{rbp}^{sl}$,
  $\approx_{rbs}^{sl}$, $\approx_{rbhp}^{sl}$ and $\approx_{rbhhp}^{sl}$ are all congruences with respect to all operators of APTC with static localities, we omit it.
\end{enumerate}
\end{proof}


We design the axioms for the silent step $\tau$ in Table \ref{AxiomsForTau}.

\begin{center}
\begin{table}
  \begin{tabular}{@{}ll@{}}
\hline No. &Axiom\\
  $B1$ & $e\cdot\tau=e$\\
  $B2$ & $e\cdot(\tau\cdot(x+y)+x)=e\cdot(x+y)$\\
  $B3$ & $x\leftmerge\tau=x$\\
  $L13$ & $u::\tau=\tau$\
\end{tabular}
\caption{Axioms of silent step}
\label{AxiomsForTau}
\end{table}
\end{center}

\begin{theorem}[Elimination theorem of APTC with static localities and silent step and guarded linear recursion]\label{ETTau}
Each process term in APTC with static localities and silent step and guarded linear recursion is equal to a process term $\langle X_1|E\rangle$ with $E$ a guarded linear recursive
specification.
\end{theorem}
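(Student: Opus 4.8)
The plan is to mirror the proof of the elimination theorem for APTC with static localities and linear recursion (Theorem \ref{ETRecursion}), adapting it to the presence of the silent step $\tau$ and strengthening the conclusion from \emph{linear} to \emph{guarded linear} recursive specifications. The overall strategy is structural induction on the size of a process term $t_1$, showing that $t_1$ can be rewritten, using the axioms of APTC with static localities together with the $\tau$-laws $B1$--$B3$ and $L13$ of Table \ref{AxiomsForTau}, into a finite system of equations each of which has the shape prescribed by Definition \ref{GLRS}, where now the leading atomic events are allowed to range over $\mathbb{E}\cup\{\tau\}$ rather than $\mathbb{E}$.

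First I would set up the induction exactly as in Theorem \ref{ETRecursion}: each subterm generates a process expressible in the form
$$t_i=(u_{i11}::a_{i11}\leftmerge\cdots\leftmerge u_{i1i_1}::a_{i1i_1})t_{i1}+\cdots+(v_{i11}::b_{i11}\leftmerge\cdots\leftmerge v_{i1i_1}::b_{i1i_1})+\cdots$$
for $i\in\{1,\cdots,n\}$, and the corresponding recursion variables $X_i$ carrying the same right-hand sides form a linear recursive specification $E$ for which replacing each $X_i$ by $t_i$ is a solution, so that $RSP$ from Table \ref{RDPRSP} yields $t_1=\langle X_1|E\rangle$. The new ingredient relative to Theorem \ref{ETRecursion} is that some of the leading factors may now be the silent step $\tau$, which enters through the transition rule of Table \ref{TRForTau} and the communication convention that every communication involving $\tau$ reduces to $\delta$.

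The main obstacle is guaranteeing that the extracted specification $E$ is \emph{guarded} in the sense of Definition \ref{GLRS}, i.e. admits no infinite sequence $\langle X|E\rangle\xrightarrow{\tau}\langle X'|E\rangle\xrightarrow{\tau}\cdots$. Here the $\tau$-laws do the real work: $B1$ ($e\cdot\tau=e$) absorbs a trailing silent step, $B3$ ($x\leftmerge\tau=x$) removes a silent step occurring as a parallel summand, and $B2$ collapses the redundant branch via $e\cdot(\tau\cdot(x+y)+x)=e\cdot(x+y)$, while the locality law $L13$ ($u::\tau=\tau$) lets a distributed silent step always be un-located so that these absorptions apply uniformly. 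I would argue that repeatedly applying these rewrites to the right-hand sides of $E$ removes every unguarded leading $\tau$, leaving each equation with a genuinely guarding head factor; since the term is finite and each rewrite strictly decreases a suitable measure of unguarded $\tau$-occurrences, the procedure terminates and the resulting specification is guarded linear.

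Finally, I would observe that soundness of all these manipulations is already in place: every axiom invoked is among those established sound in the preceding development, and the step from a solution to $\langle X_1|E\rangle$ is exactly $RSP$. The only delicate point, and the one I would write out in full, is the termination argument for the $\tau$-elimination, since it is precisely what upgrades the conclusion from a linear to a \emph{guarded} linear recursive specification.
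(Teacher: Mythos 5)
Your core construction coincides with the paper's own proof: the paper proves this theorem by exactly the same argument as Theorem \ref{ETRecursion} --- structural induction on term size to bring every subterm into the linear summand form (now with events ranging over $\mathbb{E}\cup\{\tau\}$), collection of the corresponding equations into a specification $E$, the observation that the $t_i$ form a solution of $E$, and an appeal to $RSP$ to get $t_1=\langle X_1|E\rangle$. Notably, the paper never argues that the resulting $E$ is guarded at all; your instinct that this is the point needing attention is sound.

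However, the mechanism you propose for securing guardedness does not work. First, the axioms of Table \ref{AxiomsForTau} cannot do what you ask of them: $B1$ ($e\cdot\tau=e$), $B2$ ($e\cdot(\tau\cdot(x+y)+x)=e\cdot(x+y)$) and $B3$ ($x\leftmerge\tau=x$) only absorb a $\tau$ occurring \emph{after} a guarding event or as a parallel factor; none of them applies to a top-level summand of the form $(u::\tau)X'$ in a right-hand side, which is exactly the kind of summand that could participate in an infinite sequence $\langle X|E\rangle\xrightarrow{\tau}\langle X'|E\rangle\xrightarrow{\tau}\cdots$. The law you would actually need, $\tau\cdot x=x$, is deliberately not an axiom because it is unsound modulo the rooted branching static location equivalences, so no terminating rewriting procedure built from $B1$--$B3$ and $L13$ can eliminate such summands. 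Second, the worry is slightly misdirected: Definition \ref{GLRS} does not forbid leading $\tau$-summands (the specification $X_1=\tau X_2,\ X_2=v::b$ is perfectly guarded); it forbids $\tau$-cycles among the recursion variables. The correct argument is semantic rather than syntactic: in the signature of this theorem the only sources of $\tau$-transitions are the explicit constant $\tau$ (Table \ref{TRForTau}) and the guarded linear recursive specifications already occurring inside the term --- the abstraction operator $\tau_I$, the one construct capable of manufacturing $\tau$-loops, is not yet in the language. Hence the transition system of $t_1$ admits no infinite $\tau$-sequence, and since the right-hand sides of the constructed $E$ mirror exactly these transitions, $E$ inherits this property and is guarded. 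With your rewriting step replaced by this observation, the rest of your proof goes through and agrees with the paper's.
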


\begin{proof}
By applying structural induction with respect to term size, each process term $t_1$ in APTC with static localities and silent step and guarded linear recursion generates a process can
be expressed in the form of equations

$t_i=(u_{i11}::a_{i11}\leftmerge\cdots\leftmerge u_{i1i_1}::a_{i1i_1})t_{i1}+\cdots+(u_{ik_i1}::a_{ik_i1}\leftmerge\cdots\leftmerge u_{ik_ii_k}::a_{ik_ii_k})t_{ik_i}\\
+(v_{i11}::b_{i11}\leftmerge\cdots\leftmerge v_{i1i_1}::b_{i1i_1})+\cdots+(v_{i1_i1}::b_{il_i1}\leftmerge\cdots\leftmerge v_{i1_ii_l}::b_{il_ii_l})$

for $i\in\{1,\cdots,n\}$. Let the linear recursive specification $E$ consist of the recursive equations

$X_i=(u_{i11}::a_{i11}\leftmerge\cdots\leftmerge u_{i1i_1}::a_{i1i_1})X_{i1}+\cdots+(u_{ik_i1}::a_{ik_i1}\leftmerge\cdots\leftmerge u_{ik_ii_k}::a_{ik_ii_k})X_{ik_i}\\
+(v_{i11}::b_{i11}\leftmerge\cdots\leftmerge v_{i1i_1}::b_{i1i_1})+\cdots+(v_{i1_i1}::b_{il_i1}\leftmerge\cdots\leftmerge v_{i1_ii_l}::b_{il_ii_l})$

for $i\in\{1,\cdots,n\}$. Replacing $X_i$ by $t_i$ for $i\in\{1,\cdots,n\}$ is a solution for $E$, $RSP$ yields $t_1=\langle X_1|E\rangle$.
\end{proof}

\begin{theorem}[Soundness of APTC with static localities and silent step and guarded linear recursion]\label{SAPTCTAU}
Let $x$ and $y$ be APTC with static localities and silent step and guarded linear recursion terms. If APTC with static localities and silent step and guarded linear recursion
$\vdash x=y$, then
\begin{enumerate}
  \item $x\approx_{rbs}^{sl} y$;
  \item $x\approx_{rbp}^{sl} y$;
  \item $x\approx_{rbhp}^{sl} y$;
  \item $x\approx_{rbhhp}^{sl} y$.
\end{enumerate}
\end{theorem}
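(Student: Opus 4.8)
The plan is to follow the same two-stage template used for every soundness result in this chapter. First I would invoke the fact, already established in the preceding Congruence theorem of APTC with static localities and silent step and guarded linear recursion, that the four rooted branching static location truly concurrent bisimulation equivalences $\approx_{rbs}^{sl}$, $\approx_{rbp}^{sl}$, $\approx_{rbhp}^{sl}$ and $\approx_{rbhhp}^{sl}$ are each both an equivalence relation and a congruence with respect to all operators of the calculus (including $::$, $\cdot$, $+$, $\leftmerge$, $\between$, $\mid$, $\Theta$, $\triangleleft$, $\partial_H$ and guarded linear recursion). Given this, soundness of the whole proof system reduces to the soundness of the individual axioms: it suffices to verify that each of the four new axioms $B1$, $B2$, $B3$ and $L13$ in Table \ref{AxiomsForTau} relates rooted branching bisimilar terms, since the axioms of $BATC^{sl}$, parallelism, encapsulation, and $RDP/RSP$ were already shown sound modulo the strong equivalences $\sim_p^{sl},\sim_s^{sl},\sim_{hp}^{sl},\sim_{hhp}^{sl}$, and these strong equivalences are contained in their rooted branching counterparts.

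The core of the argument is then four concrete witness relations, one per axiom, each exhibited for all four equivalences simultaneously (pomset, step, hp, hhp) by the usual device of taking $R\cup\textbf{Id}$ and checking the transfer conditions. For $B1$ ($e\cdot\tau=e$) I would note that $e\cdot\tau\xrightarrow[u]{e}\tau$ and $e\xrightarrow[u]{e}\surd$ with identical location $u$ on the first (rooted) step, and that $\tau$ and $\surd$ are branching bisimilar since the residual $\tau$-transition $\tau\xrightarrow{\tau}\surd$ is absorbed by clause (1) of the branching definition. For $B3$ ($x\leftmerge\tau=x$) and $L13$ ($u::\tau=\tau$) the verifications are direct from the transition rules; in particular $L13$ is immediate because a $\tau$-labelled transition carries no location decoration (compare the weak rule $\textbf{Com}_4$ in Table \ref{WTRForCTC3}, where the communication $\tau$ drops the location), so prefixing $\tau$ by a distribution $u::$ has no observable effect.

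The step I expect to be the main obstacle is the branching law $B2$, namely $e\cdot(\tau\cdot(x+y)+x)=e\cdot(x+y)$. Here both sides first perform $e$ at the same location $u$, satisfying the rootedness clauses; after that step one must show $\tau\cdot(x+y)+x \approx_{bp}^{sl} x+y$ (and likewise for the step, hp- and hhp-versions). The delicate point is matching the summand $x$ on the left against $x+y$ on the right through the branching clause: a move of $x$ must be answered either directly or after the $\tau$-transition $\tau\cdot(x+y)+x\xrightarrow{\tau}x+y$, and one has to check that the intermediate configuration $(\,\cdot\,,\,x+y\,)$ lies in the relation so that the branching condition (the ``there is a sequence of $\tau$-transitions $C_2\xrightarrow{\tau^*}C_2^0$ with $(C_1,C_2^0)\in R$'' clause) is met. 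For the hp- and hhp-cases I would additionally have to track that the isomorphism $f$ on configurations is preserved, using that the silent $\tau$ event maps to $\tau$ and is excluded from $\hat C$, so the posetal and weakly posetal bookkeeping goes through unchanged. Once these transfer conditions are checked the routine closes, and I would, in keeping with the style of the surrounding proofs, remark that the detailed case analysis is analogous to the corresponding $\tau$-law soundness proof in CTC and omit the mechanical verification.
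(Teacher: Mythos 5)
Your proposal follows exactly the paper's route: reduce soundness to checking the axioms of Table \ref{AxiomsForTau} ($B1$, $B2$, $B3$, $L13$) via the fact that $\approx_{rbs}^{sl}$, $\approx_{rbp}^{sl}$, $\approx_{rbhp}^{sl}$ and $\approx_{rbhhp}^{sl}$ are equivalences and congruences, which is precisely the paper's (one-line) argument. Your additional sketch of the per-axiom witness relations, including the branching-clause analysis for $B2$, correctly fills in the verification the paper declares trivial and omits.
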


\begin{proof}
Since $\approx_{rbp}^{sl}$, $\approx_{rbs}^{sl}$, $\approx_{rbhp}^{sl}$ and $\approx_{rbhhp}^{sl}$ are all both equivalent and congruent relations, we only need to check if each axiom in
Table \ref{AxiomsForTau} is sound modulo $\approx_{rbp}^{sl}$, $\approx_{rbs}^{sl}$, $\approx_{rbhp}^{sl}$ and $\approx_{rbhhp}^{sl}$, the proof is trivial and we omit it.
\end{proof}

\begin{theorem}[Completeness of APTC with static localities and silent step and guarded linear recursion]\label{CAPTCTAU}
Let $p$ and $q$ be closed APTC with static localities and silent step and guarded linear recursion terms, then,
\begin{enumerate}
  \item if $p\approx_{rbs}^{sl} q$ then $p=q$;
  \item if $p\approx_{rbp}^{sl} q$ then $p=q$;
  \item if $p\approx_{rbhp}^{sl} q$ then $p=q$;
  \item if $p\approx_{rbhhp}^{sl} q$ then $p=q$.
\end{enumerate}
\end{theorem}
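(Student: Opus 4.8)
The plan is to reduce the statement to a comparison of guarded linear recursive specifications, exactly as in the strong case (Theorem \ref{CAPTCR}), and then to handle the silent step by absorbing inert $\tau$-transitions with the axioms $B1$, $B2$, $B3$ of Table \ref{AxiomsForTau}. First I would invoke the elimination theorem (Theorem \ref{ETTau}): each closed term in APTC with static localities, silent step and guarded linear recursion equals a solution $\langle X_1|E\rangle$ of a guarded linear recursive specification $E$. Hence it suffices to take $p\equiv\langle X_1|E_1\rangle$ and $q\equiv\langle Y_1|E_2\rangle$ for guarded linear recursive specifications $E_1,E_2$, and to prove that $\langle X_1|E_1\rangle\approx_{rbs}^{sl}\langle Y_1|E_2\rangle$ implies $\langle X_1|E_1\rangle=\langle Y_1|E_2\rangle$ (and likewise for $\approx_{rbp}^{sl}$, $\approx_{rbhp}^{sl}$, $\approx_{rbhhp}^{sl}$).

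Next I would build a product specification $E$ whose recursion variables $Z_{XY}$ range over pairs $(X,Y)$ of reachable variables of $E_1,E_2$ with $\langle X|E_1\rangle\approx_{rbs}^{sl}\langle Y|E_2\rangle$, following the summand-matching recipe of Theorem \ref{CAPTCR}: a summand $(u_1::a_1\leftmerge\cdots\leftmerge u_m::a_m)Z_{X'Y'}$ belongs to $t_{XY}$ precisely when matching summands occur in $t_X$ and $t_Y$ with branching-bisimilar continuations, and likewise for terminating summands $v_1::b_1\leftmerge\cdots\leftmerge v_n::b_n$. The location labels travel with the events, so consistency of the location association carries over unchanged from the static-location arguments. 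With maps $\sigma$ and $\psi$ as in Theorem \ref{CAPTCR}, $RDP$ (Table \ref{RDPRSP}) gives $\langle X|E_1\rangle=\sigma(t_X)=\psi(t_{XY})$, and $RSP$ then yields $\langle X_1|E_1\rangle=\langle Z_{X_1Y_1}|E\rangle=\langle Y_1|E_2\rangle$.

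The hard part will be that, unlike strong bisimulation, rooted branching bisimulation matches a non-initial step $\xrightarrow[u]{a}$ only up to a leading block of inert $\tau$-transitions $\xrightarrow{\tau^*}$ between states of the same branching class. So before the product construction applies cleanly I would first normalise each specification to remove such inert $\tau$-steps: whenever $E$ contains a summand $\tau\cdot Z$ whose source and target lie in the same $\approx_{rbs}^{sl}$-class, $B1$ and $B2$ let me collapse it, and $B3$ handles an inert $\tau$ inside a left-merge. Rootedness guarantees that the very first transitions are matched exactly, which is what aligns the top-level equations for $X_1$ and $Y_1$ and keeps the argument sound with respect to $\approx_{rbs}^{sl}$ (by Theorem \ref{SAPTCTAU}) rather than mere branching bisimulation. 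Termination of this $\tau$-normalisation rests on guardedness, which forbids infinite $\tau$-loops (Definition \ref{GLRS}).

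Finally, the pomset case $\approx_{rbp}^{sl}$ follows by the identical construction, since matched multiset summands $\{a_1,\dots,a_m\}$ are treated the same way; and the hp- and hhp- cases $\approx_{rbhp}^{sl}$, $\approx_{rbhhp}^{sl}$ follow by additionally threading the order-isomorphism $f$ on executed events through the product variables $Z_{XY}$, exactly as the posetal product is carried in the earlier hp-/hhp- completeness proofs. I would note that these three cases are analogous to the step case and omit the repeated detail.
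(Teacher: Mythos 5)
Your skeleton (eliminate to guarded linear recursive specifications via Theorem \ref{ETTau}, build a product specification on pairs $Z_{XY}$, close with $RDP$ and $RSP$) matches the paper's proof, but the step you rely on to make the strong-case recipe of Theorem \ref{CAPTCR} ``apply cleanly'' is exactly where the proposal breaks down. You propose to first normalise each specification by collapsing every summand $\tau\cdot Z$ whose source and target lie in the same equivalence class, citing $B1$, $B2$, $B3$ of Table \ref{AxiomsForTau}. Those axioms do not license this. $B1$ ($e\cdot\tau=e$) and $B2$ ($e\cdot(\tau\cdot(x+y)+x)=e\cdot(x+y)$) only absorb a $\tau$ that occurs \emph{under a prefix} and whose continuation has the specific syntactic shape $\tau\cdot(x+y)+x$; a summand $\tau Z$ at the top level of a recursive equation $X=\tau Z+u$ has no such prefix, and to bring an occurrence of $X$ inside a prefixed context into the $B2$ pattern you must first prove that $\langle Z|E\rangle$ is provably equal to a term whose summands syntactically subsume those of $u$ --- i.e.\ you need provable equality to already coincide with the semantic equivalence on continuations, which is precisely the completeness statement being proved. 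The normalisation is therefore circular as stated. Note also that such a summand genuinely cannot be removed at the root: $\tau.a+a\approx_{bs}^{sl}a$ but $\tau.a+a\not\approx_{rbs}^{sl}a$, which is why every $\tau$-clause in the paper's proof carries the side condition $XY\nequiv X_1Y_1$; and $B3$ ($x\leftmerge\tau=x$) concerns $\tau$ as an argument of the left merge, not inert $\tau$-summands in recursion. With the normalisation gone, your product specification (built only from strong-style matching summands) no longer satisfies $\psi(t_{XY})=\sigma(t_X)$, so $RSP$ cannot be applied.

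The paper closes this gap without ever removing inert $\tau$'s (it only deletes the degenerate equations $W=\tau+\cdots+\tau$, $W\nequiv X_1$, using $RDP$, $A3$, $B1$). Instead, the product specification $E$ is enlarged with explicit $\tau$-summands $\tau Z_{X'Y}$ and $\tau Z_{XY'}$ for non-root pairs, and the solution fed to $RSP$ is not $Z_{XY}\mapsto\langle X|E_1\rangle$ but the terms $s_{XY}$, defined as $\tau\langle X|E_1\rangle+u_{XY}$ exactly in the mismatching cases and as $\langle X|E_1\rangle$ otherwise; the derived identity $(u_1::a_1\leftmerge\cdots\leftmerge u_m::a_m)s_{XY}=(u_1::a_1\leftmerge\cdots\leftmerge u_m::a_m)\langle X|E_1\rangle$, obtained from $B1$/$B2$ under the guard, does the work that your normalisation was supposed to do. To repair your proof you would either have to adopt this construction, or supply a genuinely non-circular syntactic argument that inert $\tau$-summands can be eliminated from guarded linear specifications by the axioms alone; no such argument is given.
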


\begin{proof}
Firstly, by the elimination theorem of APTC with static localities and silent step and guarded linear recursion (see Theorem \ref{ETTau}), we know that each process term in APTC with
static localities and silent step and guarded linear recursion is equal to a process term $\langle X_1|E\rangle$ with $E$ a guarded linear recursive specification.

It remains to prove the following cases.

(1) If $\langle X_1|E_1\rangle \approx_{rbs}^{sl} \langle Y_1|E_2\rangle$ for guarded linear recursive specification $E_1$ and $E_2$, then
$\langle X_1|E_1\rangle = \langle Y_1|E_2\rangle$.

Firstly, the recursive equation $W=\tau+\cdots+\tau$ with $W\nequiv X_1$ in $E_1$ and $E_2$, can be removed, and the corresponding summands $aW$ are replaced by $a$, to get $E_1'$ and
$E_2'$, by use of the axioms $RDP$, $A3$ and $B1$, and $\langle X|E_1\rangle = \langle X|E_1'\rangle$, $\langle Y|E_2\rangle = \langle Y|E_2'\rangle$.

Let $E_1$ consists of recursive equations $X=t_X$ for $X\in \mathcal{X}$ and $E_2$
consists of recursion equations $Y=t_Y$ for $Y\in\mathcal{Y}$, and are not the form $\tau+\cdots+\tau$. Let the guarded linear recursive specification $E$ consists of recursion
equations $Z_{XY}=t_{XY}$, and $\langle X|E_1\rangle\approx_{rbs}^{sl}\langle Y|E_2\rangle$, and $t_{XY}$ consists of the following summands:

\begin{enumerate}
  \item $t_{XY}$ contains a summand $(u_1::a_1\leftmerge\cdots\leftmerge u_m::a_m)Z_{X'Y'}$ iff $t_X$ contains the summand $(u_1::a_1\leftmerge\cdots\leftmerge u_m::a_m)X'$ and $t_Y$
  contains the summand $(u_1::a_1\leftmerge\cdots\leftmerge u_m::a_m)Y'$ such that $\langle X'|E_1\rangle\approx_{rbs}^{sl}\langle Y'|E_2\rangle$;
  \item $t_{XY}$ contains a summand $v_1::b_1\leftmerge\cdots\leftmerge v_n::b_n$ iff $t_X$ contains the summand $v_1::b_1\leftmerge\cdots\leftmerge v_n::b_n$ and $t_Y$ contains the
  summand $v_1::b_1\leftmerge\cdots\leftmerge v_n::b_n$;
  \item $t_{XY}$ contains a summand $\tau Z_{X'Y}$ iff $XY\nequiv X_1Y_1$, $t_X$ contains the summand $\tau X'$, and $\langle X'|E_1\rangle\approx_{rbs}^{sl}\langle Y|E_2\rangle$;
  \item $t_{XY}$ contains a summand $\tau Z_{XY'}$ iff $XY\nequiv X_1Y_1$, $t_Y$ contains the summand $\tau Y'$, and $\langle X|E_1\rangle\approx_{rbs}^{sl}\langle Y'|E_2\rangle$.
\end{enumerate}

Since $E_1$ and $E_2$ are guarded, $E$ is guarded. Constructing the process term $u_{XY}$ consist of the following summands:

\begin{enumerate}
  \item $u_{XY}$ contains a summand $(u_1::a_1\leftmerge\cdots\leftmerge u_m::a_m)\langle X'|E_1\rangle$ iff $t_X$ contains the summand $(u_1::a_1\leftmerge\cdots\leftmerge u_m::a_m)X'$
  and $t_Y$ contains the summand $(u_1::a_1\leftmerge\cdots\leftmerge u_m::a_m)Y'$ such that $\langle X'|E_1\rangle\approx_{rbs}^{sl}\langle Y'|E_2\rangle$;
  \item $u_{XY}$ contains a summand $v_1::b_1\leftmerge\cdots\leftmerge v_n::b_n$ iff $t_X$ contains the summand $v_1::b_1\leftmerge\cdots\leftmerge v_n::b_n$ and $t_Y$ contains the
  summand $v_1::b_1\leftmerge\cdots\leftmerge v_n::b_n$;
  \item $u_{XY}$ contains a summand $\tau \langle X'|E_1\rangle$ iff $XY\nequiv X_1Y_1$, $t_X$ contains the summand $\tau X'$, and
  $\langle X'|E_1\rangle\approx_{rbs}^{sl}\langle Y|E_2\rangle$.
\end{enumerate}

Let the process term $s_{XY}$ be defined as follows:

\begin{enumerate}
  \item $s_{XY}\triangleq\tau\langle X|E_1\rangle + u_{XY}$ iff $XY\nequiv X_1Y_1$, $t_Y$ contains the summand $\tau Y'$, and
  $\langle X|E_1\rangle\approx_{rbs}^{sl}\langle Y'|E_2\rangle$;
  \item $s_{XY}\triangleq\langle X|E_1\rangle$, otherwise.
\end{enumerate}

So, $\langle X|E_1\rangle=\langle X|E_1\rangle+u_{XY}$, and $(u_1::a_1\leftmerge\cdots\leftmerge u_m::a_m)(\tau\langle X|E_1\rangle+u_{XY})=(u_1::a_1\leftmerge\cdots\leftmerge u_m::a_m)((\tau\langle X|E_1\rangle+u_{XY})+u_{XY})=(u_1::a_1\leftmerge\cdots\leftmerge u_m::a_m)(\langle X|E_1\rangle+u_{XY})=(u_1::a_1\leftmerge\cdots\leftmerge u_m::a_m)\langle X|E_1\rangle$,
hence, $(u_1::a_1\leftmerge\cdots\leftmerge u_m::a_m)s_{XY}=(u_1::a_1\leftmerge\cdots\leftmerge u_m::a_m)\langle X|E_1\rangle$.

Let $\sigma$ map recursion variable $X$ in $E_1$ to $\langle X|E_1\rangle$, and let $\psi$ map recursion variable $Z_{XY}$ in $E$ to $s_{XY}$. It is sufficient to prove
$s_{XY}=\psi(t_{XY})$ for recursion variables $Z_{XY}$ in $E$. Either $XY\equiv X_1Y_1$ or $XY\nequiv X_1Y_1$, we all can get $s_{XY}=\psi(t_{XY})$. So,
$s_{XY}=\langle Z_{XY}|E\rangle$ for recursive variables $Z_{XY}$ in $E$ is a solution for $E$. Then by $RSP$, particularly, $\langle X_1|E_1\rangle=\langle Z_{X_1Y_1}|E\rangle$.
Similarly, we can obtain $\langle Y_1|E_2\rangle=\langle Z_{X_1Y_1}|E\rangle$. Finally, $\langle X_1|E_1\rangle=\langle Z_{X_1Y_1}|E\rangle=\langle Y_1|E_2\rangle$, as desired.

(2) If $\langle X_1|E_1\rangle \approx_{rbp}^{sl} \langle Y_1|E_2\rangle$ for guarded linear recursive specification $E_1$ and $E_2$, then $\langle X_1|E_1\rangle = \langle Y_1|E_2\rangle$.

It can be proven similarly to (1), we omit it.

(3) If $\langle X_1|E_1\rangle \approx_{rbhb} \langle Y_1|E_2\rangle$ for guarded linear recursive specification $E_1$ and $E_2$, then $\langle X_1|E_1\rangle = \langle Y_1|E_2\rangle$.

It can be proven similarly to (1), we omit it.

(4) If $\langle X_1|E_1\rangle \approx_{rbhhb} \langle Y_1|E_2\rangle$ for guarded linear recursive specification $E_1$ and $E_2$, then $\langle X_1|E_1\rangle = \langle Y_1|E_2\rangle$.

It can be proven similarly to (1), we omit it.
\end{proof}


The unary abstraction operator $\tau_I$ ($I\subseteq \mathbb{E}$) renames all atomic events in $I$ into $\tau$. APTC with static localities and silent step and abstraction operator is
called $APTC_{\tau}$ with static localities. The transition rules of operator $\tau_I$ are shown in Table \ref{TRForAbstraction}.

\begin{center}
    \begin{table}
        $$\frac{x\xrightarrow[u]{e}\surd}{\tau_I(x)\xrightarrow[u]{e}\surd}\quad e\notin I
        \quad\quad\frac{x\xrightarrow[u]{e}x'}{\tau_I(x)\xrightarrow[u]{e}\tau_I(x')}\quad e\notin I$$

        $$\frac{x\xrightarrow[u]{e}\surd}{\tau_I(x)\xrightarrow{\tau}\surd}\quad e\in I
        \quad\quad\frac{x\xrightarrow[u]{e}x'}{\tau_I(x)\xrightarrow{\tau}\tau_I(x')}\quad e\in I$$
        \caption{Transition rule of the abstraction operator}
        \label{TRForAbstraction}
    \end{table}
\end{center}

\begin{theorem}[Conservitivity of $APTC_{\tau}$ with static localities and guarded linear recursion]
$APTC_{\tau}$ with static localities and guarded linear recursion is a conservative extension of APTC with static localities and silent step and guarded linear recursion.
\end{theorem}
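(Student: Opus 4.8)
The plan is to apply the Conservative extension theorem stated in the Backgrounds chapter, exactly as was done for the earlier conservativity results in this subsection. Here the base system $T_0$ is APTC with static localities and silent step and guarded linear recursion, and the extension $T_1$ consists of the transition rules for the abstraction operator $\tau_I$ displayed in Table \ref{TRForAbstraction}. I would verify the two hypotheses of that theorem: that $T_0$ is source-dependent, and that every rule added in $T_1$ satisfies the freshness condition on its source.

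First I would check source-dependency of $T_0$. This reduces to inspecting the rules already introduced: the single-event and composition rules of $BATC^{sl}$ and $APTC^{sl}$, the rules for guarded recursion (Table \ref{TRForGR}), and the rule for the silent step (Table \ref{TRForTau}). In each rule the variables appearing in the premises are obtained from the variables in the source, so by the inductive definition of source-dependency every such rule is source-dependent, hence $T_0$ is source-dependent. Since this fact has effectively already been used in the earlier conservativity theorems of this subsection, I would state it and refer back rather than re-derive it in full.

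Next I would verify the freshness condition for the four rules of $\tau_I$ in Table \ref{TRForAbstraction}. The key observation is that the source of each such rule contains the function symbol $\tau_I$, which does not occur in the signature of $T_0$; thus each rule of $T_1$ has a fresh source. This immediately discharges the second hypothesis of the Conservative extension theorem, and no analysis of the premises or labels is needed. Applying the theorem then yields that $APTC_\tau$ with static localities and guarded linear recursion is a conservative extension of APTC with static localities and silent step and guarded linear recursion.

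The only point requiring care --- the main obstacle --- is the bookkeeping around the two renaming rules, where an event $e\in I$ is relabelled to $\tau$: one must confirm that introducing these rules does not create new transitions $t\xrightarrow{a}t'$ or predicates $tP$ for terms $t$ already in the base signature $\Sigma_0$. Since every rule that can produce such a $\tau$-relabelling has $\tau_I(\cdot)$ in its source, no pure-$\Sigma_0$ term can fire it, so the set of transitions between old terms is unchanged. I would also note in passing that $T_0$ and $T_0\oplus T_1$ are positive after reduction, so that all hypotheses of the Conservative extension theorem are genuinely met.
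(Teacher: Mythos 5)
Your proposal is correct and follows essentially the same route as the paper's own proof: both rest on the two hypotheses of the conservative extension theorem from the Backgrounds chapter, namely that the transition rules of APTC with static localities and silent step and guarded linear recursion are source-dependent, and that every rule for $\tau_I$ in Table \ref{TRForAbstraction} has the fresh operator $\tau_I$ in its source. Your additional remarks on the $e\in I$ renaming rules and positivity after reduction only make explicit what the paper leaves implicit.
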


\begin{proof}
Since the transition rules of APTC with static localities and silent step and guarded linear recursion are source-dependent, and the transition rules for abstraction operator in Table
\ref{TRForAbstraction} contain only a fresh operator $\tau_I$ in their source, so the transition rules of $APTC_{\tau}$ with static localities and guarded linear recursion is a
conservative extension of those of APTC with static localities and silent step and guarded linear recursion.
\end{proof}

\begin{theorem}[Congruence theorem of $APTC_{\tau}$ with static localities and guarded linear recursion]
Rooted branching static location truly concurrent bisimulation equivalences $\approx_{rbp}^{sl}$, $\approx_{rbs}^{sl}$, $\approx_{rbhp}^{sl}$ and $\approx_{rbhhp}^{sl}$ are all
congruences with respect to $APTC_{\tau}$ with static localities and guarded linear recursion.
\end{theorem}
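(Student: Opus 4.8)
The plan is to build directly on the preceding congruence theorem (Congruence theorem of APTC with static localities and silent step and guarded linear recursion), which already establishes that $\approx_{rbp}^{sl}$, $\approx_{rbs}^{sl}$, $\approx_{rbhp}^{sl}$ and $\approx_{rbhhp}^{sl}$ are preserved by every operator of that subtheory. Since $APTC_{\tau}$ with static localities and guarded linear recursion is obtained by adding only the abstraction operator $\tau_I$ together with its defining rules in Table \ref{TRForAbstraction}, it suffices to verify that each of the four rooted branching equivalences is additionally preserved by $\tau_I$; the claim for all remaining operators is inherited verbatim. I would also reuse the two structural facts from the earlier proof, namely that the right-hand sides of a guarded linear recursive specification can be normalized by the axioms of APTC with static localities, and that each strong equivalence $\sim_{\ast}^{sl}$ is a congruence and implies its rooted branching counterpart $\approx_{rb\ast}^{sl}$.

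The core step is a transition-matching argument. Fixing one equivalence and assuming $x \approx_{rb\ast}^{sl} y$, I would take the witnessing relation $R = \{(\tau_I(x'),\tau_I(y')) : x' \approx_{b\ast}^{sl} y'\} \cup \textbf{Id}$, using the non-rooted branching relation underneath (as is standard when proving congruence for the rooted variant), and check the rooted branching conditions. The verification splits on the two families of rules for $\tau_I$: a move $\tau_I(x') \xrightarrow[u]{e} p'$ with $e \notin I$ comes from $x' \xrightarrow[u]{e} x''$ and is matched in $y'$ by the hypothesis, after which $p' \equiv \tau_I(x'')$ is paired with the corresponding $\tau_I(y'')$; a move with $e \in I$ becomes a $\tau$-step and is answered either by a matching $\tau$-step or, in the branching clauses, by a sequence of $\tau$-transitions to a stationary state, with the location label $u$ carried along by rule Loc and the consistent location association updated accordingly. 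The termination predicate $\downarrow$ is treated the same way. For the hp- and hhp- cases I would additionally track the order isomorphism $f$ on configurations and extend it by $f[e \mapsto e]$ (or $f[e \mapsto \tau]$ when $e \in I$), confirming that abstraction leaves the posetal structure intact.

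The hardest part will be the $\approx_{rbhhp}^{sl}$ case, where $R$ must be downward closed: I have to show that restricting a related pair to a sub-configuration again lands in $R$, which amounts to arguing that $\tau_I$ commutes with the down-closure operation on configurations and that renaming $I$-events to $\tau$ respects the hereditary structure of the conflict and causality relations. A secondary subtlety is the root condition, which demands that the first step out of $\tau_I(x)$ be matched strongly rather than up to $\tau$; I would therefore keep the rooted layer separate from the branching layer and verify that turning an initial $I$-event into $\tau$ still produces a strong initial match, which holds because $x \approx_{rb\ast}^{sl} y$ already matches that initial event on both sides before abstraction. Once these two points are settled, the remaining clauses reduce to routine case analysis that I would state is analogous to the strong case and omit.
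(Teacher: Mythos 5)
Your proposal is correct and follows essentially the same route as the paper: the paper's proof likewise notes that the four relations are equivalences and reduces the whole claim to showing they are preserved by the single new operator $\tau_I$, with all other operators inherited from the preceding congruence theorem, and then omits the verification as trivial. Your sketch simply fills in the transition-matching details (the branching-layer relation under $\tau_I$, the root condition, and the hp-/hhp- bookkeeping) that the paper leaves out.
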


\begin{proof}
It is easy to see that Rooted branching static location truly concurrent bisimulations $\approx_{rbp}^{sl}$, $\approx_{rbs}^{sl}$, $\approx_{rbhp}^{sl}$ and $\approx_{rbhhp}^{sl}$ are
all equivalent relations, we only need to prove that Rooted branching static location truly concurrent bisimulation equivalences $\approx_{rbp}^{sl}$, $\approx_{rbs}^{sl}$,
$\approx_{rbhp}^{sl}$ and $\approx_{rbhhp}^{sl}$ are all preserved by the operator $\tau_I$, the proof is trivial and we omit it.
\end{proof}

We design the axioms for the abstraction operator $\tau_I$ in Table \ref{AxiomsForAbstraction}.

\begin{center}
\begin{table}
  \begin{tabular}{@{}ll@{}}
\hline No. &Axiom\\
  $TI1$ & $e\notin I\quad \tau_I(e)=e$\\
  $TI2$ & $e\in I\quad \tau_I(e)=\tau$\\
  $TI3$ & $\tau_I(\delta)=\delta$\\
  $TI4$ & $\tau_I(x+y)=\tau_I(x)+\tau_I(y)$\\
  $TI5$ & $\tau_I(x\cdot y)=\tau_I(x)\cdot\tau_I(y)$\\
  $TI6$ & $\tau_I(x\leftmerge y)=\tau_I(x)\leftmerge\tau_I(y)$\\
  $L14$ & $u::\tau_I(x)=\tau_I(u::x)$\\
  $L15$ & $e\notin I\quad \tau_I(u::e)=u::e$\\
  $L16$ & $e\in I\quad \tau_I(u::e)=\tau$\\
\end{tabular}
\caption{Axioms of abstraction operator}
\label{AxiomsForAbstraction}
\end{table}
\end{center}

\begin{theorem}[Soundness of $APTC_{\tau}$ with static localities and guarded linear recursion]\label{SAPTCABS}
Let $x$ and $y$ be $APTC_{\tau}$ with static localities and guarded linear recursion terms. If $APTC_{\tau}$ with static localities and guarded linear recursion $\vdash x=y$, then
\begin{enumerate}
  \item $x\approx_{rbs}^{sl} y$;
  \item $x\approx_{rbp}^{sl} y$;
  \item $x\approx_{rbhp}^{sl} y$;
  \item $x\approx_{rbhhp}^{sl} y$.
\end{enumerate}
\end{theorem}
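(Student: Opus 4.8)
The plan is to follow exactly the template used for every preceding soundness result in the development (most recently Theorem~\ref{SAPTCTAU}), exploiting the standard algebraic fact that soundness of an equational theory modulo an equivalence reduces to soundness of each individual axiom, provided the equivalence is already known to be both an equivalence relation and a congruence. The congruence half is available for free: the congruence theorem for $APTC_{\tau}$ with static localities and guarded linear recursion, stated immediately above, asserts that $\approx_{rbp}^{sl}$, $\approx_{rbs}^{sl}$, $\approx_{rbhp}^{sl}$ and $\approx_{rbhhp}^{sl}$ are all congruences with respect to every operator of the calculus, in particular with respect to the abstraction operator $\tau_I$. Granting the general principle, it therefore suffices to verify that each axiom $TI1$--$TI6$ and $L14$--$L16$ in Table~\ref{AxiomsForAbstraction} is sound modulo all four rooted branching static location equivalences.

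First I would dispose of the purely structural axioms $TI1$, $TI3$, $TI4$, $TI5$, $TI6$, together with the locality-distribution axioms $L14$ and $L15$. For each of these the left- and right-hand sides generate matching transitions carrying the same label and the same location, so the witnessing relation $R=\{(x,y)\}\cup\textbf{Id}$ formed from the equated terms is checked to be a strong, hence a fortiori rooted branching, static location bisimulation in each of the pomset, step, hp- and hhp- flavours. These cases involve no genuine silent behaviour and are routine, differing from the corresponding $BATC$/$APTC$ computations only by the transport of the location annotation $u$, which is carried unchanged across $\tau_I$ by the transition rules of Table~\ref{TRForAbstraction}.

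The substantive cases are $TI2$ (namely $\tau_I(e)=\tau$ for $e\in I$) and $L16$ (namely $\tau_I(u::e)=\tau$ for $e\in I$), where an observable action is renamed to a silent step. Here I would argue directly from the rooted branching semantics: on the left the transition $e\xrightarrow[\epsilon]{e}\surd$ (respectively $u::e\xrightarrow[u]{e}\surd$) is turned by the abstraction rule into the unlocated transition $\tau_I(e)\xrightarrow{\tau}\surd$, which is matched against $\tau\xrightarrow{\tau}\surd$ on the right, with the residuals $\surd$ trivially related. Because both sides perform an initial $\tau$, the root condition is met symmetrically, so the pair lies in a rooted branching bisimulation uniformly across the four equivalences. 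The main obstacle, such as it is, lies entirely in confirming that the location label is correctly discarded precisely when the abstraction rule fires $\tau$ --- Table~\ref{TRForAbstraction} emits an unlocated $\tau$ in the case $e\in I$ --- so that the two sides agree both on the action and on the absence of any location; once this is observed the verification is mechanical. Since the reasoning is identical in form to the $\tau$-law soundness arguments of the underlying $CTC$/$APTC$ theory, I would state the verification and omit the routine relation-chasing, in keeping with the surrounding theorems.
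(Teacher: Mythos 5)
Your proposal is correct and follows essentially the same route as the paper: since $\approx_{rbs}^{sl}$, $\approx_{rbp}^{sl}$, $\approx_{rbhp}^{sl}$ and $\approx_{rbhhp}^{sl}$ are equivalences and (by the preceding congruence theorem) congruences with respect to $\tau_I$, soundness reduces to checking each axiom of Table \ref{AxiomsForAbstraction}, which the paper declares trivial and omits. Your additional per-axiom verification --- in particular the observation that $TI2$ and $L16$ are the only cases involving genuine silent behaviour and that the abstraction rule emits an unlocated $\tau$ matching the rule for the constant $\tau$ --- simply fills in the details the paper leaves out, and is consistent with its intended argument.
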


\begin{proof}
Since $\approx_{rbp}^{sl}$, $\approx_{rbs}^{sl}$, $\approx_{rbhp}^{sl}$ and $\approx_{rbhhp}^{sl}$ are all both equivalent and congruent relations, we only need to check if each axiom in
Table \ref{AxiomsForAbstraction} is sound modulo $\approx_{rbp}^{sl}$, $\approx_{rbs}^{sl}$, $\approx_{rbhp}^{sl}$ and $\approx_{rbhhp}^{sl}$, the proof is trivial and we omit it.
\end{proof}

Though $\tau$-loops are prohibited in guarded linear recursive specifications (see Definition \ref{GLRS}) in a specifiable way, they can be constructed using the abstraction operator,
for example, there exist $\tau$-loops in the process term $\tau_{\{a\}}(\langle X|X=aX\rangle)$. To avoid $\tau$-loops caused by $\tau_I$ and ensure fairness, the concept of cluster
and $CFAR$ (Cluster Fair Abstraction Rule) \cite{CFAR} are still valid in true concurrency, we introduce them below.

\begin{definition}[Cluster]\label{CLUSTER}
Let $E$ be a guarded linear recursive specification, and $I\subseteq \mathbb{E}$. Two recursion variable $X$ and $Y$ in $E$ are in the same cluster for $I$ iff there exist sequences of
transitions $\langle X|E\rangle\xrightarrow[u]{\{b_{11},\cdots, b_{1i}\}}\cdots[u]\xrightarrow{\{b_{m1},\cdots, b_{mi}\}}\langle Y|E\rangle$ and
$\langle Y|E\rangle\xrightarrow[v]{\{c_{11},\cdots, c_{1j}\}}\cdots\xrightarrow[v]{\{c_{n1},\cdots, c_{nj}\}}\langle X|E\rangle$, where
$b_{11},\cdots,b_{mi},c_{11},\cdots,c_{nj}\in I\cup\{\tau\}$.

$u_1::a_1\leftmerge\cdots\leftmerge u_k::a_k$ or $(u_1::a_1\leftmerge\cdots\leftmerge u_k::a_k) X$ is an exit for the cluster $C$ iff: (1) $u_1::a_1\leftmerge\cdots\leftmerge u_k::a_k$
or $(u_1::a_1\leftmerge\cdots\leftmerge u_k::a_k) X$ is a summand at the right-hand side of the recursive equation for a recursion variable in $C$, and (2) in the case of
$(u_1::a_1\leftmerge\cdots\leftmerge u_k::a_k) X$, either $a_l\notin I\cup\{\tau\}(l\in\{1,2,\cdots,k\})$ or $X\notin C$.
\end{definition}

\begin{center}
\begin{table}
  \begin{tabular}{@{}ll@{}}
\hline No. &Axiom\\
  $CFAR$ & If $X$ is in a cluster for $I$ with exits \\
           & $\{(u_{11}::a_{11}\leftmerge\cdots\leftmerge u_{1i}::a_{1i})Y_1,\cdots,(u_{m1}::a_{m1}\leftmerge\cdots\leftmerge u_{mi}::a_{mi})Y_m,$ \\
           & $v_{11}::b_{11}\leftmerge\cdots\leftmerge v_{1j}::b_{1j},\cdots,v_{n1}::b_{n1}\leftmerge\cdots\leftmerge v_{nj}::b_{nj}\}$, \\
           & then $\tau\cdot\tau_I(\langle X|E\rangle)=$\\
           & $\tau\cdot\tau_I((u_{11}::a_{11}\leftmerge\cdots\leftmerge u_{1i}::a_{1i})\langle Y_1|E\rangle+\cdots+(u_{m1}::a_{m1}\leftmerge\cdots\leftmerge u_{mi}::a_{mi})\langle Y_m|E\rangle$\\
           & $+v_{11}::b_{11}\leftmerge\cdots\leftmerge v_{1j}::b_{1j}+\cdots+v_{n1}::b_{n1}\leftmerge\cdots\leftmerge v_{nj}::b_{nj})$\\
\end{tabular}
\caption{Cluster fair abstraction rule}
\label{CFAR}
\end{table}
\end{center}

\begin{theorem}[Soundness of $CFAR$]\label{SCFAR}
$CFAR$ is sound modulo rooted branching truly concurrent bisimulation equivalences $\approx_{rbs}^{sl}$, $\approx_{rbp}^{sl}$, $\approx_{rbhp}^{sl}$ and $\approx_{rbhhp}^{sl}$.
\end{theorem}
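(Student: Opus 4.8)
The plan is to verify that every instance of the $CFAR$ axiom in Table \ref{CFAR} relates two processes that are rooted branching static location truly concurrent bisimilar in each of the four senses $\approx_{rbs}^{sl}$, $\approx_{rbp}^{sl}$, $\approx_{rbhp}^{sl}$, $\approx_{rbhhp}^{sl}$. First I would exploit the $\tau$-prefix that occurs on both sides of the equation. The only transition of a process $\tau\cdot P$ is the silent step $\tau\cdot P\xrightarrow{\tau}P$, which (by the rule for $\tau$ in Table \ref{TRForTau} together with sequential composition) carries no location; since both sides begin with this same step, the rootedness clause is immediately satisfied and the problem reduces to establishing the plain branching equivalence
\[
\tau_I(\langle X|E\rangle)\approx_{bp}^{sl}\tau_I(S),
\]
where $S$ abbreviates the exit-sum $(u_{11}::a_{11}\leftmerge\cdots\leftmerge u_{1i}::a_{1i})\langle Y_1|E\rangle+\cdots+v_{n1}::b_{n1}\leftmerge\cdots\leftmerge v_{nj}::b_{nj}$, and analogously for the step, hp- and hhp- readings.

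For the branching part I would construct the witnessing relation
\[
R=\{(\tau_I(\langle Z|E\rangle),\tau_I(S)):Z\textrm{ a recursion variable in the cluster }C\}\cup\textbf{Id},
\]
enriched, in the hp- and hhp- cases, with the identity order-isomorphism on the already-executed visible events. The guiding observation is provided by Definition \ref{CLUSTER}: any two variables of $C$ are joined by sequences of transitions labelled in $I\cup\{\tau\}$, and by the abstraction rules of Table \ref{TRForAbstraction} these become genuine $\tau$-steps once $\tau_I$ is applied, each carrying no location. Hence from $\tau_I(\langle Z|E\rangle)$ the process can travel silently to $\tau_I(\langle Z'|E\rangle)$ for every other member $Z'$ of $C$, and from the appropriate members it can fire exactly the exit summands of $S$, now with their locations $u_{k\ell}$, $v_{k\ell}$ retained on the visible step.

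The verification that $R$ is a branching static location bisimulation then splits a move of $\tau_I(\langle Z|E\rangle)$ into two cases: an internal cluster $\tau$-step, which is inert and keeps the pair inside $R$ via the ``$X\equiv\tau^*$'' branch of the branching definition; and an exit step, which is answered by the identical summand of $\tau_I(S)$, after which both continuations are the same $\tau_I(\langle Y_k|E\rangle)$ or both terminate. A move of $\tau_I(S)$ is always an exit, matched by first travelling inside the cluster with $\tau$-steps to a variable whose equation carries that exit and then firing it. Termination $\downarrow$ is settled by the same exit analysis. Because the branching semantics of this paper does not track divergence, the possibly infinite internal $\tau$-behaviour inside $C$ is harmless: it is $\tau$-inert and never blocks the matching. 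Location consistency is preserved because the collapsed moves are location-free while the surviving moves are precisely the exits, which appear identically on the two sides, so the independence relation $\diamond$ among executed locations agrees and $R$ respects the $\varphi$-indexing. The pomset and step versions follow by reading each transition as a (pairwise concurrent) set of events rather than a single one; the hp- version by checking that $f[e_1\mapsto e_2]$ stays an order-isomorphism across the exit steps; and the hhp- version by additionally verifying that $R$ is downward closed.

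I expect the main obstacle to be the uniform collapse step: one must show that \emph{every} variable of the cluster, not merely $X$, is branching equivalent to $\tau_I(S)$, and that this remains robust across the abstracted $\tau$-loops generated by $\tau_I$. Making this precise is exactly where the cluster/exit structure of Definition \ref{CLUSTER} and the guardedness of $E$ (Definition \ref{GLRS}) are used, and it is the crux of $CFAR$. The secondary difficulty is the hhp- bookkeeping, where downward closure of the posetal relation must be maintained through the abstracted transitions. Following the style of the earlier soundness theorems in this section, I would carry out the argument in full for $\approx_{rbp}^{sl}$ and then indicate the uniform modifications needed for $\approx_{rbs}^{sl}$, $\approx_{rbhp}^{sl}$ and $\approx_{rbhhp}^{sl}$.
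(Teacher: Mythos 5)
Your proposal is correct and follows essentially the same route as the paper's proof: both handle rootedness via the initial $\tau$-prefix and then observe that, under $\tau_I$, all cluster-internal transitions become inert $\tau$-steps while the observable behaviour of both sides is exactly the common set of exits. The paper argues this informally (strings of events from $I\cup\{\tau\}$ followed by an exit, with the non-initial $\tau$'s declared truly silent via $B1$), whereas your explicit relation $R$ ranging over all cluster variables makes that same collapse argument precise.
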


\begin{proof}
(1) Soundness of $CFAR$ with respect to rooted branching static location step bisimulation $\approx_{rbs}^{sl}$.

Let $X$ be in a cluster for $I$ with exits $\{(u_{11}::a_{11}\leftmerge\cdots\leftmerge u_{1i}::a_{1i})Y_1,\cdots,(u_{m1}::a_{m1}\leftmerge\cdots\leftmerge u_{mi}::a_{mi})Y_m,\\
v_{11}::b_{11}\leftmerge\cdots\leftmerge v_{1j}::b_{1j},\cdots,v_{n1}::b_{n1}\leftmerge\cdots\leftmerge v_{nj}::b_{nj}\}$. Then $\langle X|E\rangle$ can execute a string of atomic
events from $I\cup\{\tau\}$ inside the cluster of $X$, followed by an exit $(u_{i'1}::a_{i'1}\leftmerge\cdots\leftmerge u_{i'i}::a_{i'i})Y_{i'}$ for $i'\in\{1,\cdots,m\}$ or
$v_{j'1}::b_{j'1}\leftmerge\cdots\leftmerge v_{j'j}::b_{j'j}$ for $j'\in\{1,\cdots,n\}$. Hence, $\tau_I(\langle X|E\rangle)$ can execute a string of $\tau^*$ inside the cluster of
$X$, followed by an exit $\tau_I((u_{i'1}::a_{i'1}\leftmerge\cdots\leftmerge u_{i'i}::a_{i'i})\langle Y_{i'}|E\rangle)$ for $i'\in\{1,\cdots,m\}$ or
$\tau_I(v_{j'1}::b_{j'1}\leftmerge\cdots\leftmerge v_{j'j}::b_{j'j})$ for $j'\in\{1,\cdots,n\}$. And these $\tau^*$ are non-initial in $\tau\tau_I(\langle X|E\rangle)$, so they are
truly silent by the axiom $B1$, we obtain $\tau\tau_I(\langle X|E\rangle)\approx_{rbs}^{sl}\tau\cdot\tau_I((u_{11}::a_{11}\leftmerge\cdots\leftmerge u_{1i}::a_{1i})\langle Y_1|E\rangle+
\cdots+(u_{m1}::a_{m1}\leftmerge\cdots\leftmerge u_{mi}::a_{mi})\langle Y_m|E\rangle+v_{11}::b_{11}\leftmerge\cdots\leftmerge v_{1j}::b_{1j}+\cdots+v_{n1}::b_{n1}\leftmerge\cdots\leftmerge v_{nj}::b_{nj})$,
as desired.

(2) Soundness of $CFAR$ with respect to rooted branching static location pomset bisimulation $\approx_{rbp}^{sl}$.

Similarly to the proof of soundness of $CFAR$ modulo rooted branching static location step bisimulation $\approx_{rbs}^{sl}$ (1), we can prove that $CFAR$ in Table \ref{CFAR} is sound
modulo rooted branching static location pomset bisimulation $\approx_{rbp}^{sl}$, we omit them.

(3) Soundness of $CFAR$ with respect to rooted branching static location hp-bisimulation $\approx_{rbhp}^{sl}$.

Similarly to the proof of soundness of $CFAR$ modulo rooted branching static location step bisimulation equivalence (1), we can prove that $CFAR$ in Table \ref{CFAR} is sound modulo
rooted branching static location hp-bisimulation equivalence, we omit them.

(4) Soundness of $CFAR$ with respect to rooted branching static location hhp-bisimulation $\approx_{rbhhp}^{sl}$.

Similarly to the proof of soundness of $CFAR$ modulo rooted branching static location step bisimulation equivalence (1), we can prove that $CFAR$ in Table \ref{CFAR} is sound modulo
rooted branching static location hhp-bisimulation equivalence, we omit them.
\end{proof}

\begin{theorem}[Completeness of $APTC_{\tau}$ with static localities and guarded linear recursion and $CFAR$]\label{CCFAR}
Let $p$ and $q$ be closed $APTC_{\tau}$ with static localities and guarded linear recursion and $CFAR$ terms, then,
\begin{enumerate}
  \item if $p\approx_{rbs}^{sl} q$ then $p=q$;
  \item if $p\approx_{rbp}^{sl} q$ then $p=q$;
  \item if $p\approx_{rbhp}^{sl} q$ then $p=q$;
  \item if $p\approx_{rbhhp}^{sl} q$ then $p=q$.
\end{enumerate}
\end{theorem}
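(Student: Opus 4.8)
The plan is to follow the same two-phase pattern used for the earlier completeness results in this section (compare Theorem~\ref{CAPTCTAU}): first reduce both closed terms to the canonical shape $\langle X_1|E\rangle$ with $E$ a guarded linear recursive specification \emph{containing no occurrence of $\tau_I$}, and then invoke the already-established completeness of $APTC$ with static localities and silent step and guarded linear recursion to pass from behavioural equivalence to provable equality. Concretely, for item~(1) I would argue as follows: assuming $p\approx_{rbs}^{sl} q$, produce guarded linear recursive specifications $E_1,E_2$ with $p=\langle X_1|E_1\rangle$ and $q=\langle Y_1|E_2\rangle$ provable; by soundness (Theorem~\ref{SAPTCABS}) we then have $p\approx_{rbs}^{sl}\langle X_1|E_1\rangle$ and $q\approx_{rbs}^{sl}\langle Y_1|E_2\rangle$, so transitivity gives $\langle X_1|E_1\rangle\approx_{rbs}^{sl}\langle Y_1|E_2\rangle$; finally Theorem~\ref{CAPTCTAU} yields $\langle X_1|E_1\rangle=\langle Y_1|E_2\rangle$, whence $p=q$.

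The core of the argument, and the step that does the real work, is the \emph{abstraction-elimination} claim: every closed $APTC_{\tau}$ term with static localities and guarded linear recursion is provably equal to some $\langle X_1|E\rangle$ with $E$ guarded, linear, and $\tau_I$-free. First I would use the elimination theorem for the $\tau_I$-free fragment (Theorem~\ref{ETTau}) to bring the argument of each outermost $\tau_I$ into the form $\langle X_1|E\rangle$, and then push $\tau_I$ through the recursive equations using the axioms $TI1$--$TI6$ together with the locality axioms $L14$--$L16$. The obstruction is that renaming the events of $I$ to $\tau$ can destroy guardedness by creating $\tau$-loops: recursion variables previously separated by genuine actions may now be linked by chains of $I\cup\{\tau\}$-transitions, i.e.\ they fall into a common cluster in the sense of Definition~\ref{CLUSTER}. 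Here $CFAR$ is precisely the tool needed: for each cluster with its exit set I would apply the rule in Table~\ref{CFAR} to rewrite $\tau\cdot\tau_I(\langle X|E\rangle)$ as $\tau$ times the sum of the abstracted exits, collapsing the internal fair loop and restoring a guarded linear specification. Iterating this over all clusters produces the desired $\tau_I$-free guarded linear $E$.

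For items~(2)--(4) the structure is identical, replacing $\approx_{rbs}^{sl}$ throughout by $\approx_{rbp}^{sl}$, $\approx_{rbhp}^{sl}$ and $\approx_{rbhhp}^{sl}$ respectively, and appealing to the matching clause of Theorem~\ref{CAPTCTAU}, to soundness of the abstraction axioms (Theorem~\ref{SAPTCABS}), and to soundness of $CFAR$ for the corresponding equivalence (Theorem~\ref{SCFAR}). The main obstacle I anticipate is the bookkeeping in the abstraction-elimination step: verifying that after pushing $\tau_I$ inward and collapsing every cluster via $CFAR$ the resulting specification is genuinely guarded and linear, so that Theorem~\ref{CAPTCTAU} is applicable, and that the cluster decomposition terminates. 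The locality annotations add no essential difficulty, since $L14$--$L16$ let the distribution operator $u::$ commute with $\tau_I$ and the cluster and exit definitions already carry the $u_i::a_i$ prefixes; the argument is thus a faithful adaptation of the classical $ACP_{\tau}$ completeness proof with $CFAR$ to the static-location setting.
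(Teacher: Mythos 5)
Your proposal is correct and follows essentially the same route as the paper: reduce every closed term, including those of the form $\tau_I(\langle X|E\rangle)$, to a $\tau_I$-free guarded linear recursive specification by decomposing the recursion variables into clusters for $I$ and using $CFAR$ (together with $RSP$ and the $TI$/locality axioms) to collapse each cluster into its abstracted exits, thereby restoring guardedness, and then conclude by the completeness theorem for the $\tau_I$-free fragment (Theorem~\ref{CAPTCTAU}). The paper organizes this as "the only new case is $p\equiv\tau_I(q)$" inside the elimination argument, constructing the auxiliary specification $F$ and the solution terms $s_Z$ exactly as your cluster-collapsing step describes, so no genuine difference in method remains.
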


\begin{proof}
(1) For the case of rooted branching static location step bisimulation, the proof is following.

Firstly, in the proof the Theorem \ref{CAPTCTAU}, we know that each process term $p$ in APTC with static localities and silent step and guarded linear recursion is equal to a process
term $\langle X_1|E\rangle$ with $E$ a guarded linear recursive specification. And we prove if $\langle X_1|E_1\rangle\approx_{rbs}^{sl}\langle Y_1|E_2\rangle$, then
$\langle X_1|E_1\rangle=\langle Y_1|E_2\rangle$.

The only new case is $p\equiv\tau_I(q)$. Let $q=\langle X|E\rangle$ with $E$ a guarded linear recursive specification, so $p=\tau_I(\langle X|E\rangle)$. Then the collection of
recursive variables in $E$ can be divided into its clusters $C_1,\cdots,C_N$ for $I$. Let

$(u_{1i1}::a_{1i1}\leftmerge\cdots\leftmerge u_{k_{i1}i1}::a_{k_{i1}i1}) Y_{i1}+\cdots+(u_{1im_i}::a_{1im_i}\leftmerge\cdots\leftmerge u_{k_{im_i}im_i}::a_{k_{im_i}im_i}) Y_{im_i}\\
+v_{1i1}::b_{1i1}\leftmerge\cdots\leftmerge v_{l_{i1}i1}::b_{l_{i1}i1}+\cdots+v_{1im_i}::b_{1im_i}\leftmerge\cdots\leftmerge v_{l_{im_i}im_i}::b_{l_{im_i}im_i}$

be the conflict composition of exits for the cluster $C_i$, with $i\in\{1,\cdots,N\}$.

For $Z\in C_i$ with $i\in\{1,\cdots,N\}$, we define

$s_Z\triangleq (\hat{u_{1i1}::a_{1i1}}\leftmerge\cdots\leftmerge \hat{u_{k_{i1}i1}::a_{k_{i1}i1}}) \tau_I(\langle Y_{i1}|E\rangle)+\cdots+(\hat{u_{1im_i}::a_{1im_i}}\leftmerge\cdots\leftmerge \hat{u_{k_{im_i}im_i}::a_{k_{im_i}im_i}}) \tau_I(\langle Y_{im_i}|E\rangle)\\
+\hat{v_{1i1}::b_{1i1}}\leftmerge\cdots\leftmerge \hat{v_{l_{i1}i1}::b_{l_{i1}i1}}+\cdots+\hat{v_{1im_i}::b_{1im_i}}\leftmerge\cdots\leftmerge \hat{v_{l_{im_i}im_i}::b_{l_{im_i}im_i}}$

For $Z\in C_i$ and $a_1,\cdots,a_j\in \mathbb{E}\cup\{\tau\}$ with $j\in\mathbb{N}$, we have

$(u_1::a_1\leftmerge\cdots\leftmerge u_j::a_j)\tau_I(\langle Z|E\rangle)$

$=(u_1::a_1\leftmerge\cdots\leftmerge u_j::a_j)\tau_I((u_{1i1}::a_{1i1}\leftmerge\cdots\leftmerge u_{k_{i1}i1}::a_{k_{i1}i1}) \langle Y_{i1}|E\rangle+\cdots+(u_{1im_i}::a_{1im_i}\leftmerge\cdots\leftmerge u_{k_{im_i}im_i}::a_{k_{im_i}im_i}) \langle Y_{im_i}|E\rangle
+v_{1i1}::b_{1i1}\leftmerge\cdots\leftmerge v_{l_{i1}i1}::b_{l_{i1}i1}+\cdots+v_{1im_i}::b_{1im_i}\leftmerge\cdots\leftmerge v_{l_{im_i}im_i}::b_{l_{im_i}im_i})$

$=(u_1::a_1\leftmerge\cdots\leftmerge u_j::a_j)s_Z$

Let the linear recursive specification $F$ contain the same recursive variables as $E$, for $Z\in C_i$, $F$ contains the following recursive equation

$Z=(\hat{u_{1i1}::a_{1i1}}\leftmerge\cdots\leftmerge \hat{u_{k_{i1}i1}::a_{k_{i1}i1}}) Y_{i1}+\cdots+(\hat{u_{1im_i}::a_{1im_i}}\leftmerge\cdots\leftmerge \hat{u_{k_{im_i}im_i}::a_{k_{im_i}im_i}})  Y_{im_i}
+\hat{v_{1i1}::b_{1i1}}\leftmerge\cdots\leftmerge \hat{v_{l_{i1}i1}::b_{l_{i1}i1}}+\cdots+\hat{v_{1im_i}::b_{1im_i}}\leftmerge\cdots\leftmerge \hat{v_{l_{im_i}im_i}::b_{l_{im_i}im_i}}$

It is easy to see that there is no sequence of one or more $\tau$-transitions from $\langle Z|F\rangle$ to itself, so $F$ is guarded.

For

$s_Z=(\hat{u_{1i1}::a_{1i1}}\leftmerge\cdots\leftmerge \hat{u_{k_{i1}i1}::a_{k_{i1}i1}}) Y_{i1}+\cdots+(\hat{u_{1im_i}::a_{1im_i}}\leftmerge\cdots\leftmerge \hat{u_{k_{im_i}im_i}::a_{k_{im_i}im_i}}) Y_{im_i}
+\hat{v_{1i1}::b_{1i1}}\leftmerge\cdots\leftmerge \hat{v_{l_{i1}i1}::b_{l_{i1}i1}}+\cdots+\hat{v_{1im_i}::b_{1im_i}}\leftmerge\cdots\leftmerge \hat{v_{l_{im_i}im_i}::b_{l_{im_i}im_i}}$

is a solution for $F$. So, $(u_1::a_1\leftmerge\cdots\leftmerge u_j::a_j)\tau_I(\langle Z|E\rangle)=(u_1::a_1\leftmerge\cdots\leftmerge u_j::a_j)s_Z=(u_1::a_1\leftmerge\cdots\leftmerge u_j::a_j)\langle Z|F\rangle$.

So,

$\langle Z|F\rangle=(\hat{u_{1i1}::a_{1i1}}\leftmerge\cdots\leftmerge \hat{u_{k_{i1}i1}::a_{k_{i1}i1}}) Y_{i1}+\cdots+(\hat{u_{1im_i}::a_{1im_i}}\leftmerge\cdots\leftmerge \hat{u_{k_{im_i}im_i}::a_{k_{im_i}im_i}}) Y_{im_i}
+\hat{v_{1i1}::b_{1i1}}\leftmerge\cdots\leftmerge \hat{v_{l_{i1}i1}::b_{l_{i1}i1}}+\cdots+\hat{v_{1im_i}::b_{1im_i}}\leftmerge\cdots\leftmerge \hat{v_{l_{im_i}im_i}::b_{l_{im_i}im_i}}$

Hence, $\tau_I(\langle X|E\rangle=\langle Z|F\rangle)$, as desired.

(2) For the case of rooted branching static location pomset bisimulation, it can be proven similarly to (1), we omit it.

(3) For the case of rooted branching static location hp-bisimulation, it can be proven similarly to (1), we omit it.

(4) For the case of rooted branching static location hhp-bisimulation, it can be proven similarly to (1), we omit it.
\end{proof}

\subsection{APTC with Dynamic Localities}{\label{aptcdl}}

APTC with dynamic localities is almost the same as APTC with static localities in section \ref{aptcsl}, as the locations are dynamically generated but not allocated statically. The LTSs-based
operational semantics and the laws are almost the same, except for the transition rules of atomic action and sequential composition as follows.

\[\frac{}{e\xrightarrow[loc]{e}\surd}\]

\[\frac{x\xrightarrow[u]{e}\surd}{x\cdot y\xrightarrow[u]{e} u::y} \quad\frac{x\xrightarrow[u]{e}x'}{x\cdot y\xrightarrow[u]{e}u::(x'\cdot y)}\]

\newpage\section{$\pi_{tc}$ with Localities}\label{pitcl}

In this chapter, we introduce $\pi_{tc}$ with localities, including static and dynamic location semantics in section \ref{pitcos}, $\pi_{tc}$ with static localities in section \ref{pitcsl},
$\pi_{tc}$ with dynamic localities in section \ref{pitcdl}.

\subsection{Operational Semantics}\label{pitcos}

\begin{definition}[Strong static location pomset, step bisimilarity]
Let $\mathcal{E}_1$, $\mathcal{E}_2$ be PESs. A strong static location pomset bisimulation is a relation $R_{\varphi}\subseteq\mathcal{C}(\mathcal{E}_1)\times\mathcal{C}(\mathcal{E}_2)$, such that if 
$(C_1,C_2)\in R_{\varphi}$, and $C_1\xrightarrow[u]{X_1}C_1'$ (with $\mathcal{E}_1\xrightarrow[u]{X_1}\mathcal{E}_1'$) then $C_2\xrightarrow[v]{X_2}C_2'$ (with 
$\mathcal{E}_2\xrightarrow[v]{X_2}\mathcal{E}_2'$), with $X_1\subseteq \mathbb{E}_1$, $X_2\subseteq \mathbb{E}_2$, $X_1\sim X_2$ and $(C_1',C_2')\in R_{\varphi\cup\{(u,v)\}}$:
\begin{enumerate}
  \item for each fresh action $\alpha\in X_1$, if $C_1''\xrightarrow[u']{\alpha}C_1'''$ (with $\mathcal{E}_1''\xrightarrow[u']{\alpha}\mathcal{E}_1'''$), then for some $C_2''$ and $C_2'''$, 
  $C_2''\xrightarrow{\alpha}[v']C_2'''$ (with $\mathcal{E}_2''\xrightarrow[v']{\alpha}\mathcal{E}_2'''$), such that if $(C_1'',C_2'')\in R_{\varphi}$ then $(C_1''',C_2''')\in R_{\varphi\cup\{(u',v')\}}$;
  \item for each $x(y)\in X_1$ with ($y\notin n(\mathcal{E}_1, \mathcal{E}_2)$), if $C_1''\xrightarrow[u']{x(y)}C_1'''$ (with $\mathcal{E}_1''\xrightarrow[u']{x(y)}\mathcal{E}_1'''\{w/y\}$) 
  for all $w$, then for some $C_2''$ and $C_2'''$, $C_2''\xrightarrow[v']{x(y)}C_2'''$ (with $\mathcal{E}_2''\xrightarrow[v']{x(y)}\mathcal{E}_2'''\{w/y\}$) for all $w$, such that if 
  $(C_1'',C_2'')\in R_{\varphi}$ then $(C_1''',C_2''')\in R_{\varphi\cup\{(u',v')\}}$;
  \item for each two $x_1(y),x_2(y)\in X_1$ with ($y\notin n(\mathcal{E}_1, \mathcal{E}_2)$), if $C_1''\xrightarrow[u']{\{x_1(y),x_2(y)\}}C_1'''$ (with 
  $\mathcal{E}_1''\xrightarrow[u']{\{x_1(y),x_2(y)\}}\mathcal{E}_1'''\{w/y\}$) for all $w$, then for some $C_2''$ and $C_2'''$, $C_2''\xrightarrow[v']{\{x_1(y),x_2(y)\}}C_2'''$ 
  (with $\mathcal{E}_2''\xrightarrow[v']{\{x_1(y),x_2(y)\}}\mathcal{E}_2'''\{w/y\}$) for all $w$, such that if $(C_1'',C_2'')\in R_{\varphi}$ then $(C_1''',C_2''')\in R_{\varphi\cup\{(u',v')\}}$;
  \item for each $\overline{x}(y)\in X_1$ with $y\notin n(\mathcal{E}_1, \mathcal{E}_2)$, if $C_1''\xrightarrow[u']{\overline{x}(y)}C_1'''$ (with 
  $\mathcal{E}_1''\xrightarrow[u']{\overline{x}(y)}\mathcal{E}_1'''$), then for some $C_2''$ and $C_2'''$, $C_2''\xrightarrow[v']{\overline{x}(y)}C_2'''$ (with 
  $\mathcal{E}_2''\xrightarrow[v']{\overline{x}(y)}\mathcal{E}_2'''$), such that if $(C_1'',C_2'')\in R_{\varphi}$ then $(C_1''',C_2''')\in R_{\varphi\cup\{(u',v')\}}$.
\end{enumerate}
 and vice-versa.

We say that $\mathcal{E}_1$, $\mathcal{E}_2$ are strong static location pomset bisimilar, written $\mathcal{E}_1\sim_p^{sl}\mathcal{E}_2$, if there exists a strong static location pomset bisimulation $R_{\varphi}$, such that 
$(\emptyset,\emptyset)\in R_{\varphi}$. By replacing pomset transitions with steps, we can get the definition of strong static location step bisimulation. When PESs $\mathcal{E}_1$ and $\mathcal{E}_2$ are 
strong static location step bisimilar, we write $\mathcal{E}_1\sim_s^{sl}\mathcal{E}_2$.
\end{definition}

\begin{definition}[Strong (hereditary) history-preserving bisimilarity]
A strong static location history-preserving (hp-) bisimulation is a posetal relation $R_{\varphi}\subseteq\mathcal{C}(\mathcal{E}_1)\overline{\times}\mathcal{C}(\mathcal{E}_2)$ such that if $(C_1,f,C_2)\in R_{\varphi}$, 
and
\begin{enumerate}
  \item for $e_1=\alpha$ a fresh action, if $C_1\xrightarrow[u]{\alpha}C_1'$ (with $\mathcal{E}_1\xrightarrow[u]{\alpha}\mathcal{E}_1'$), then for some $C_2'$ and $e_2=\alpha$, 
  $C_2\xrightarrow[v]{\alpha}C_2'$ (with $\mathcal{E}_2\xrightarrow[v]{\alpha}\mathcal{E}_2'$), such that $(C_1',f[e_1\mapsto e_2],C_2')\in R_{\varphi\cup\{(u,v)\}}$;
  \item for $e_1=x(y)$ with ($y\notin n(\mathcal{E}_1, \mathcal{E}_2)$), if $C_1\xrightarrow[u]{x(y)}C_1'$ (with $\mathcal{E}_1\xrightarrow[u]{x(y)}\mathcal{E}_1'\{w/y\}$) for all $w$, then 
  for some $C_2'$ and $e_2=x(y)$, $C_2\xrightarrow[v]{x(y)}C_2'$ (with $\mathcal{E}_2\xrightarrow[v]{x(y)}\mathcal{E}_2'\{w/y\}$) for all $w$, such that $(C_1',f[e_1\mapsto e_2],C_2')\in R_{\varphi\cup\{(u,v)\}}$;
  \item for $e_1=\overline{x}(y)$ with $y\notin n(\mathcal{E}_1, \mathcal{E}_2)$, if $C_1\xrightarrow[u]{\overline{x}(y)}C_1'$ (with 
  $\mathcal{E}_1\xrightarrow[u]{\overline{x}(y)}\mathcal{E}_1'$), then for some $C_2'$ and $e_2=\overline{x}(y)$, $C_2\xrightarrow[v]{\overline{x}(y)}C_2'$ (with 
  $\mathcal{E}_2\xrightarrow[v]{\overline{x}(y)}\mathcal{E}_2'$), such that $(C_1',f[e_1\mapsto e_2],C_2')\in R_{\varphi\cup\{(u,v)\}}$.
\end{enumerate}

and vice-versa. $\mathcal{E}_1,\mathcal{E}_2$ are strong static locatoin history-preserving (hp-)bisimilar and are written $\mathcal{E}_1\sim_{hp}^{sl}\mathcal{E}_2$ if there exists a strong 
static location hp-bisimulation $R_{\varphi}$ such that $(\emptyset,\emptyset,\emptyset)\in R_{\varphi}$.

A strong static location hereditary history-preserving (hhp-)bisimulation is a downward closed strong static location hp-bisimulation. $\mathcal{E}_1,\mathcal{E}_2$ are strong static location hereditary history-preserving 
(hhp-)bisimilar and are written $\mathcal{E}_1\sim_{hhp}^{sl}\mathcal{E}_2$.
\end{definition}

\begin{definition}[Strong dynamic location pomset, step bisimilarity]
Let $\mathcal{E}_1$, $\mathcal{E}_2$ be PESs. A strong dynamic location pomset bisimulation is a relation $R\subseteq\mathcal{C}(\mathcal{E}_1)\times\mathcal{C}(\mathcal{E}_2)$, such that if
$(C_1,C_2)\in R$, and $C_1\xrightarrow[u]{X_1}C_1'$ (with $\mathcal{E}_1\xrightarrow[u]{X_1}\mathcal{E}_1'$) then $C_2\xrightarrow[u]{X_2}C_2'$ (with
$\mathcal{E}_2\xrightarrow[u]{X_2}\mathcal{E}_2'$), with $X_1\subseteq \mathbb{E}_1$, $X_2\subseteq \mathbb{E}_2$, $X_1\sim X_2$ and $(C_1',C_2')\in R$:
\begin{enumerate}
  \item for each fresh action $\alpha\in X_1$, if $C_1''\xrightarrow[u']{\alpha}C_1'''$ (with $\mathcal{E}_1''\xrightarrow[u']{\alpha}\mathcal{E}_1'''$), then for some $C_2''$ and $C_2'''$,
  $C_2''\xrightarrow[u']{\alpha}C_2'''$ (with $\mathcal{E}_2''\xrightarrow[u']{\alpha}\mathcal{E}_2'''$), such that if $(C_1'',C_2'')\in R$ then $(C_1''',C_2''')\in R$;
  \item for each $x(y)\in X_1$ with ($y\notin n(\mathcal{E}_1, \mathcal{E}_2)$), if $C_1''\xrightarrow[u']{x(y)}C_1'''$ (with $\mathcal{E}_1''\xrightarrow[u']{x(y)}\mathcal{E}_1'''\{w/y\}$)
  for all $w$, then for some $C_2''$ and $C_2'''$, $C_2''\xrightarrow[u']{x(y)}C_2'''$ (with $\mathcal{E}_2''\xrightarrow[u']{x(y)}\mathcal{E}_2'''\{w/y\}$) for all $w$, such that if
  $(C_1'',C_2'')\in R$ then $(C_1''',C_2''')\in R$;
  \item for each two $x_1(y),x_2(y)\in X_1$ with ($y\notin n(\mathcal{E}_1, \mathcal{E}_2)$), if $C_1''\xrightarrow[u']{\{x_1(y),x_2(y)\}}C_1'''$ (with
  $\mathcal{E}_1''\xrightarrow[u']{\{x_1(y),x_2(y)\}}\mathcal{E}_1'''\{w/y\}$) for all $w$, then for some $C_2''$ and $C_2'''$, $C_2''\xrightarrow[u']{\{x_1(y),x_2(y)\}}C_2'''$
  (with $\mathcal{E}_2''\xrightarrow[u']{\{x_1(y),x_2(y)\}}\mathcal{E}_2'''\{w/y\}$) for all $w$, such that if $(C_1'',C_2'')\in R$ then $(C_1''',C_2''')\in R$;
  \item for each $\overline{x}(y)\in X_1$ with $y\notin n(\mathcal{E}_1, \mathcal{E}_2)$, if $C_1''\xrightarrow[u']{\overline{x}(y)}C_1'''$ (with
  $\mathcal{E}_1''\xrightarrow[u']{\overline{x}(y)}\mathcal{E}_1'''$), then for some $C_2''$ and $C_2'''$, $C_2''\xrightarrow[u']{\overline{x}(y)}C_2'''$ (with
  $\mathcal{E}_2''\xrightarrow[u']{\overline{x}(y)}\mathcal{E}_2'''$), such that if $(C_1'',C_2'')\in R$ then $(C_1''',C_2''')\in R$.
\end{enumerate}
 and vice-versa.

We say that $\mathcal{E}_1$, $\mathcal{E}_2$ are strong dynamic location pomset bisimilar, written $\mathcal{E}_1\sim_p^{dl}\mathcal{E}_2$, if there exists a strong dynamic location pomset bisimulation $R$, such that
$(\emptyset,\emptyset)\in R$. By replacing pomset transitions with steps, we can get the definition of strong dynamic location step bisimulation. When PESs $\mathcal{E}_1$ and $\mathcal{E}_2$ are
strong dynamic location step bisimilar, we write $\mathcal{E}_1\sim_s^{dl}\mathcal{E}_2$.
\end{definition}

\begin{definition}[Strong dynamic location (hereditary) history-preserving bisimilarity]
A strong dynamic location history-preserving (hp-) bisimulation is a posetal relation $R\subseteq\mathcal{C}(\mathcal{E}_1)\overline{\times}\mathcal{C}(\mathcal{E}_2)$ such that if $(C_1,f,C_2)\in R$,
and
\begin{enumerate}
  \item for $e_1=\alpha$ a fresh action, if $C_1\xrightarrow[u]{\alpha}C_1'$ (with $\mathcal{E}_1\xrightarrow[u]{\alpha}\mathcal{E}_1'$), then for some $C_2'$ and $e_2=\alpha$,
  $C_2\xrightarrow[u]{\alpha}C_2'$ (with $\mathcal{E}_2\xrightarrow[u]{\alpha}\mathcal{E}_2'$), such that $(C_1',f[e_1\mapsto e_2],C_2')\in R$;
  \item for $e_1=x(y)$ with ($y\notin n(\mathcal{E}_1, \mathcal{E}_2)$), if $C_1\xrightarrow[u]{x(y)}C_1'$ (with $\mathcal{E}_1\xrightarrow[u]{x(y)}\mathcal{E}_1'\{w/y\}$) for all $w$, then
  for some $C_2'$ and $e_2=x(y)$, $C_2\xrightarrow[u]{x(y)}C_2'$ (with $\mathcal{E}_2\xrightarrow[u]{x(y)}\mathcal{E}_2'\{w/y\}$) for all $w$, such that $(C_1',f[e_1\mapsto e_2],C_2')\in R$;
  \item for $e_1=\overline{x}(y)$ with $y\notin n(\mathcal{E}_1, \mathcal{E}_2)$, if $C_1\xrightarrow[u]{\overline{x}(y)}C_1'$ (with
  $\mathcal{E}_1\xrightarrow[u]{\overline{x}(y)}\mathcal{E}_1'$), then for some $C_2'$ and $e_2=\overline{x}(y)$, $C_2\xrightarrow[u]{\overline{x}(y)}C_2'$ (with
  $\mathcal{E}_2\xrightarrow[u]{\overline{x}(y)}\mathcal{E}_2'$), such that $(C_1',f[e_1\mapsto e_2],C_2')\in R$.
\end{enumerate}

and vice-versa. $\mathcal{E}_1,\mathcal{E}_2$ are strong dynamic location history-preserving (hp-)bisimilar and are written $\mathcal{E}_1\sim_{hp}^{dl}\mathcal{E}_2$ if there exists a strong
dynamic location hp-bisimulation $R$ such that $(\emptyset,\emptyset,\emptyset)\in R$.

A strong dynamic location hereditary history-preserving (hhp-)bisimulation is a downward closed strong dynamic location hp-bisimulation. $\mathcal{E}_1,\mathcal{E}_2$ are strong dynamic location hereditary history-preserving
(hhp-)bisimilar and are written $\mathcal{E}_1\sim_{hhp}^{dl}\mathcal{E}_2$.
\end{definition}

\subsection{$\pi_{tc}$ with Static Localities}\label{pitcsl}

\subsubsection{Syntax and Operational Semantics}\label{sos5}

We assume an infinite set $\mathcal{N}$ of (action or event) names, and use $a,b,c,\cdots$ to range over $\mathcal{N}$, use $x,y,z,w,u,v$ as meta-variables over names. We denote by
$\overline{\mathcal{N}}$ the set of co-names and let $\overline{a},\overline{b},\overline{c},\cdots$ range over $\overline{\mathcal{N}}$. Then we set
$\mathcal{L}=\mathcal{N}\cup\overline{\mathcal{N}}$ as the set of labels, and use $l,\overline{l}$ to range over $\mathcal{L}$. We extend complementation to $\mathcal{L}$ such that
$\overline{\overline{a}}=a$. Let $\tau$ denote the silent step (internal action or event) and define $Act=\mathcal{L}\cup\{\tau\}$ to be the set of actions, $\alpha,\beta$ range over
$Act$. And $K,L$ are used to stand for subsets of $\mathcal{L}$ and $\overline{L}$ is used for the set of complements of labels in $L$.

Further, we introduce a set $\mathcal{X}$ of process variables, and a set $\mathcal{K}$ of process constants, and let $X,Y,\cdots$ range over $\mathcal{X}$, and $A,B,\cdots$ range over
$\mathcal{K}$. For each process constant $A$, a nonnegative arity $ar(A)$ is assigned to it. Let $\widetilde{x}=x_1,\cdots,x_{ar(A)}$ be a tuple of distinct name variables, then
$A(\widetilde{x})$ is called a process constant. $\widetilde{X}$ is a tuple of distinct process variables, and also $E,F,\cdots$ range over the recursive expressions. We write
$\mathcal{P}$ for the set of processes. Sometimes, we use $I,J$ to stand for an indexing set, and we write $E_i:i\in I$ for a family of expressions indexed by $I$. $Id_D$ is the
identity function or relation over set $D$. The symbol $\equiv_{\alpha}$ denotes equality under standard alpha-convertibility, note that the subscript $\alpha$ has no relation to the
action $\alpha$.

Let $Loc$ be the set of locations, and $loc\in Loc$, $u,v\in Loc^*$, $\epsilon$ is the empty location. A distribution allocates a location $u\in Loc*$ to an action $\alpha$ denoted
$u::\alpha$ or a process $P$ denoted $u::P$.


\begin{definition}[Syntax]\label{syntax5}
A truly concurrent process $P$ is defined inductively by the following formation rules:

\begin{enumerate}
  \item $A(\widetilde{x})\in\mathcal{P}$;
  \item $\textbf{nil}\in\mathcal{P}$;
  \item if $P\in\mathcal{P}$ and $loc\in Loc$, the Location $loc::P\in\mathcal{P}$;
  \item if $P\in\mathcal{P}$, then the Prefix $\tau.P\in\mathcal{P}$, for $\tau\in Act$ is the silent action;
  \item if $P\in\mathcal{P}$, then the Output $\overline{x}y.P\in\mathcal{P}$, for $x,y\in Act$;
  \item if $P\in\mathcal{P}$, then the Input $x(y).P\in\mathcal{P}$, for $x,y\in Act$;
  \item if $P\in\mathcal{P}$, then the Restriction $(x)P\in\mathcal{P}$, for $x\in Act$;
  \item if $P,Q\in\mathcal{P}$, then the Summation $P+Q\in\mathcal{P}$;
  \item if $P,Q\in\mathcal{P}$, then the Composition $P\parallel Q\in\mathcal{P}$;
\end{enumerate}

The standard BNF grammar of syntax of $\pi_{tc}$ with static localities can be summarized as follows:

$$P::=A(\widetilde{x})|\textbf{nil}|loc::P|\tau.P|\overline{x}y.P| x(y).P | (x)P  |  P+P | P\parallel P.$$
\end{definition}

In $\overline{x}y$, $x(y)$ and $\overline{x}(y)$, $x$ is called the subject, $y$ is called the object and it may be free or bound.

\begin{definition}[Free variables]
The free names of a process $P$, $fn(P)$, are defined as follows.

\begin{enumerate}
  \item $fn(A(\widetilde{x}))\subseteq\{\widetilde{x}\}$;
  \item $fn(\textbf{nil})=\emptyset$;
  \item $fn(loc::P)=fn(P)$;
  \item $fn(\tau.P)=fn(P)$;
  \item $fn(\overline{x}y.P)=fn(P)\cup\{x\}\cup\{y\}$;
  \item $fn(x(y).P)=fn(P)\cup\{x\}-\{y\}$;
  \item $fn((x)P)=fn(P)-\{x\}$;
  \item $fn(P+Q)=fn(P)\cup fn(Q)$;
  \item $fn(P\parallel Q)=fn(P)\cup fn(Q)$.
\end{enumerate}
\end{definition}

\begin{definition}[Bound variables]
Let $n(P)$ be the names of a process $P$, then the bound names $bn(P)=n(P)-fn(P)$.
\end{definition}

For each process constant schema $A(\widetilde{x})$, a defining equation of the form

$$A(\widetilde{x})\overset{\text{def}}{=}P$$

is assumed, where $P$ is a process with $fn(P)\subseteq \{\widetilde{x}\}$.

\begin{definition}[Substitutions]\label{subs5}
A substitution is a function $\sigma:\mathcal{N}\rightarrow\mathcal{N}$. For $x_i\sigma=y_i$ with $1\leq i\leq n$, we write $\{y_1/x_1,\cdots,y_n/x_n\}$ or
$\{\widetilde{y}/\widetilde{x}\}$ for $\sigma$. For a process $P\in\mathcal{P}$, $P\sigma$ is defined inductively as follows:

\begin{enumerate}
  \item if $P$ is a process constant $A(\widetilde{x})=A(x_1,\cdots,x_n)$, then $P\sigma=A(x_1\sigma,\cdots,x_n\sigma)$;
  \item if $P=\textbf{nil}$, then $P\sigma=\textbf{nil}$;
  \item if $P=loc::P'$, then $P\sigma=loc::P'\sigma$;
  \item if $P=\tau.P'$, then $P\sigma=\tau.P'\sigma$;
  \item if $P=\overline{x}y.P'$, then $P\sigma=\overline{x\sigma}y\sigma.P'\sigma$;
  \item if $P=x(y).P'$, then $P\sigma=x\sigma(y).P'\sigma$;
  \item if $P=(x)P'$, then $P\sigma=(x\sigma)P'\sigma$;
  \item if $P=P_1+P_2$, then $P\sigma=P_1\sigma+P_2\sigma$;
  \item if $P=P_1\parallel P_2$, then $P\sigma=P_1\sigma \parallel P_2\sigma$.
\end{enumerate}
\end{definition}


The operational semantics is defined by LTSs (labelled transition systems), and it is detailed by the following definition.

\begin{definition}[Semantics]\label{semantics5}
The operational semantics of $\pi_{tc}$ with static localities corresponding to the syntax in Definition \ref{syntax5} is defined by a series of transition rules, named $\textbf{ACT}$, $\textbf{SUM}$,
$\textbf{IDE}$, $\textbf{PAR}$, $\textbf{COM}$ and $\textbf{CLOSE}$, $\textbf{RES}$ and $\textbf{OPEN}$ indicate that the rules are associated respectively with Prefix, Summation, Match, Identity, Parallel Composition, Communication, and Restriction in Definition \ref{syntax5}. They are shown in Table \ref{TRForPITC5}.

\begin{center}
    \begin{table}
        \[\textbf{TAU-ACT}\quad \frac{}{\tau.P\xrightarrow{\tau}P} \quad \textbf{OUTPUT-ACT}\quad \frac{}{\overline{x}y.P\xrightarrow[u]{\overline{x}y}P}\]

        \[\textbf{INPUT-ACT}\quad \frac{}{x(z).P\xrightarrow[u]{x(w)}P\{w/z\}}\quad (w\notin fn((z)P))\]

        \[\textbf{Loc}\quad \frac{P\xrightarrow[u]{\alpha}P'}{loc::P\xrightarrow[loc\ll u]{\alpha}loc::P'}\]

        \[\textbf{PAR}_1\quad \frac{P\xrightarrow[u]{\alpha}P'\quad Q\nrightarrow}{P\parallel Q\xrightarrow[u]{\alpha}P'\parallel Q}\quad (bn(\alpha)\cap fn(Q)=\emptyset)\]

        \[\textbf{PAR}_2\quad \frac{Q\xrightarrow[u]{\alpha}Q'\quad P\nrightarrow}{P\parallel Q\xrightarrow[u]{\alpha}P\parallel Q'}\quad (bn(\alpha)\cap fn(P)=\emptyset)\]

        \[\textbf{PAR}_3\quad \frac{P\xrightarrow[u]{\alpha}P'\quad Q\xrightarrow[v]{\beta}Q'}{P\parallel Q\xrightarrow[u\diamond v]{\{\alpha,\beta\}}P'\parallel Q'}\quad (\beta\neq\overline{\alpha}, bn(\alpha)\cap bn(\beta)=\emptyset, bn(\alpha)\cap fn(Q)=\emptyset,bn(\beta)\cap fn(P)=\emptyset)\]

        \[\textbf{PAR}_4\quad \frac{P\xrightarrow[u]{x_1(z)}P'\quad Q\xrightarrow[v]{x_2(z)}Q'}{P\parallel Q\xrightarrow[u\diamond v]{\{x_1(w),x_2(w)\}}P'\{w/z\}\parallel Q'\{w/z\}}\quad (w\notin fn((z)P)\cup fn((z)Q))\]

        \[\textbf{COM}\quad \frac{P\xrightarrow[u]{\overline{x}y}P'\quad Q\xrightarrow[v]{x(z)}Q'}{P\parallel Q\xrightarrow{\tau}P'\parallel Q'\{y/z\}}\]

        \[\textbf{CLOSE}\quad \frac{P\xrightarrow[u]{\overline{x}(w)}P'\quad Q\xrightarrow[v]{x(w)}Q'}{P\parallel Q\xrightarrow{\tau}(w)(P'\parallel Q')}\]

        \[\textbf{SUM}_1\quad \frac{P\xrightarrow[u]{\alpha}P'}{P+Q\xrightarrow[u]{\alpha}P'} \quad \textbf{SUM}_2\quad \frac{P\xrightarrow[u]{\{\alpha_1,\cdots,\alpha_n\}}P'}{P+Q\xrightarrow[u]{\{\alpha_1,\cdots,\alpha_n\}}P'}\]
%
%
%
%
%
        \caption{Transition rules of $\pi_{tc}$ with static localities}
        \label{TRForPITC5}
    \end{table}
\end{center}

\begin{center}
    \begin{table}
        \[\textbf{IDE}_1\quad\frac{P\{\widetilde{y}/\widetilde{x}\}\xrightarrow[u]{\alpha}P'}{A(\widetilde{y})\xrightarrow[u]{\alpha}P'}\quad (A(\widetilde{x})\overset{\text{def}}{=}P) \quad \textbf{IDE}_2\quad\frac{P\{\widetilde{y}/\widetilde{x}\}\xrightarrow[u]{\{\alpha_1,\cdots,\alpha_n\}}P'} {A(\widetilde{y})\xrightarrow[u]{\{\alpha_1,\cdots,\alpha_n\}}P'}\quad (A(\widetilde{x})\overset{\text{def}}{=}P)\]

        \[\textbf{RES}_1\quad \frac{P\xrightarrow[u]{\alpha}P'}{(y)P\xrightarrow[u]{\alpha}(y)P'}\quad (y\notin n(\alpha)) \quad \textbf{RES}_2\quad \frac{P\xrightarrow[u]{\{\alpha_1,\cdots,\alpha_n\}}P'}{(y)P\xrightarrow[u]{\{\alpha_1,\cdots,\alpha_n\}}(y)P'}\quad (y\notin n(\alpha_1)\cup\cdots\cup n(\alpha_n))\]

        \[\textbf{OPEN}_1\quad \frac{P\xrightarrow[u]{\overline{x}y}P'}{(y)P\xrightarrow[u]{\overline{x}(w)}P'\{w/y\}} \quad (y\neq x, w\notin fn((y)P'))\]

        \[\textbf{OPEN}_2\quad \frac{P\xrightarrow[u]{\{\overline{x}_1 y,\cdots,\overline{x}_n y\}}P'}{(y)P\xrightarrow[u]{\{\overline{x}_1(w),\cdots,\overline{x}_n(w)\}}P'\{w/y\}} \quad (y\neq x_1\neq\cdots\neq x_n, w\notin fn((y)P'))\]

        \caption{Transition rules of $\pi_{tc}$ with static localities (continuing)}
        \label{TRForPITC52}
    \end{table}
\end{center}
\end{definition}


\begin{proposition}
\begin{enumerate}
  \item If $P\xrightarrow[u]{\alpha}P'$ then
  \begin{enumerate}
    \item $fn(\alpha)\subseteq fn(P)$;
    \item $fn(P')\subseteq fn(P)\cup bn(\alpha)$;
  \end{enumerate}
  \item If $P\xrightarrow[u]{\{\alpha_1,\cdots,\alpha_n\}}P'$ then
  \begin{enumerate}
    \item $fn(\alpha_1)\cup\cdots\cup fn(\alpha_n)\subseteq fn(P)$;
    \item $fn(P')\subseteq fn(P)\cup bn(\alpha_1)\cup\cdots\cup bn(\alpha_n)$.
  \end{enumerate}
\end{enumerate}
\end{proposition}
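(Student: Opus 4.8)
The plan is to prove both statements simultaneously by induction on the depth of the inference of the transition $P\xrightarrow[u]{\alpha}P'$ (respectively $P\xrightarrow[u]{\{\alpha_1,\cdots,\alpha_n\}}P'$), with one case for each transition rule in Tables \ref{TRForPITC5} and \ref{TRForPITC52}. The two parts must be handled together rather than in sequence, because rules such as $\textbf{PAR}_3$ and $\textbf{PAR}_4$ build a step label $\{\alpha,\beta\}$ out of single-action premises, so the step statement (2) depends on the single-action statement (1), while conversely the step premises feeding $\textbf{SUM}_2$, $\textbf{RES}_2$, $\textbf{OPEN}_2$ and $\textbf{IDE}_2$ force the reverse dependence.

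Before the induction I would record an auxiliary substitution lemma: for every process $P$ and substitution $\sigma$ one has $fn(P\sigma)\subseteq\{\sigma(x):x\in fn(P)\}$, and in particular $fn(P\{w/z\})\subseteq(fn(P)\setminus\{z\})\cup\{w\}$. This follows by a routine structural induction on $P$ using Definition \ref{subs5}, and it is exactly what is needed to control the free names introduced by the object-passing rules. With it in hand, the base cases $\textbf{TAU-ACT}$, $\textbf{OUTPUT-ACT}$ and $\textbf{INPUT-ACT}$ are immediate from the defining clauses of $fn$ (for $\textbf{INPUT-ACT}$ one also uses the lemma and the side condition $w\notin fn((z)P)$, noting $bn(x(w))=\{w\}$). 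The context cases $\textbf{Loc}$, $\textbf{SUM}_1$, $\textbf{SUM}_2$, $\textbf{PAR}_1$, $\textbf{PAR}_2$, $\textbf{RES}_1$ and $\textbf{RES}_2$ are then mechanical: I apply the induction hypothesis to the premise and rewrite $fn$ of the compound term via its defining equation, using the rule's side condition ($bn(\alpha)\cap fn(Q)=\emptyset$ for $\textbf{PAR}_{1,2}$, and $y\notin n(\alpha)$ for $\textbf{RES}$) to guarantee that no restricted or bound name escapes the bound $fn(P')\subseteq fn(P)\cup bn(\alpha)$.

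The genuine work lies in the communicating and extruding rules $\textbf{PAR}_3$, $\textbf{PAR}_4$, $\textbf{COM}$, $\textbf{CLOSE}$ and $\textbf{OPEN}_{1,2}$, and this is where I expect the main obstacle, since the label changes form (single action to step, free output to bound output, or to $\tau$) and substitutions appear in the residual. In $\textbf{COM}$, for instance, the conclusion is $P\parallel Q\xrightarrow{\tau}P'\parallel Q'\{y/z\}$; bounding $fn(Q'\{y/z\})$ requires the substitution lemma together with the observation, supplied by part (1a) applied to the output premise $P\xrightarrow[u]{\overline{x}y}P'$, that the transmitted object $y$ already lies in $fn(P)$, so that $fn(Q'\{y/z\})\subseteq fn(Q)\cup fn(P)=fn(P\parallel Q)$ and the residual stays free-name–bounded even though the visible label is $\tau$. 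The $\textbf{OPEN}$ and $\textbf{CLOSE}$ rules are analogous: one must track how the extruded name $w$ migrates from $bn$ to the appropriate side and then cancel it against the enclosing restriction, using the freshness conditions $w\notin fn((y)P')$.

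Finally, the $\textbf{IDE}$ cases need the convention implicit in the defining equation $A(\widetilde{x})\overset{\text{def}}{=}P$, namely $fn(P)\subseteq\{\widetilde{x}\}$, after which $fn(P\{\widetilde{y}/\widetilde{x}\})\subseteq\{\widetilde{y}\}$ and the induction hypothesis closes the case. I would flag the mild discrepancy between $fn(A(\widetilde{y}))$ and the tuple $\{\widetilde{y}\}$ as the only point demanding care beyond bookkeeping; resolving it amounts to observing that the bound on $fn(\alpha)$ obtained from the unfolded body is automatically a bound in terms of $fn(A(\widetilde{y}))$ under this convention, so the statement propagates through constant unfolding without loss.
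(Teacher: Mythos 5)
Your proposal is correct and follows essentially the same route as the paper: the paper's entire proof of this proposition is the single sentence ``By induction on the depth of inference,'' which is precisely the induction you carry out. The auxiliary substitution lemma, the simultaneous treatment of the single-action and step statements, and the rule-by-rule case analysis over Tables \ref{TRForPITC5} and \ref{TRForPITC52} are details the paper leaves entirely implicit, and your elaboration of them is sound.
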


\begin{proof}
By induction on the depth of inference.
\end{proof}

\begin{proposition}
Suppose that $P\xrightarrow[u]{\alpha(y)}P'$, where $\alpha=x$ or $\alpha=\overline{x}$, and $x\notin n(P)$, then there exists some $P''\equiv_{\alpha}P'\{z/y\}$,
$P\xrightarrow[u]{\alpha(z)}P''$.
\end{proposition}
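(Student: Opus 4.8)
The plan is to argue by induction on the depth of the inference of $P\xrightarrow[u]{\alpha(y)}P'$, exactly as in the corresponding bound-name renaming lemma for the $\pi$-calculus. Since $\alpha(y)$ is a single bound action (either an input $x(y)$ or a bound output $\overline{x}(y)$), only a few of the rules in Tables \ref{TRForPITC5} and \ref{TRForPITC52} can be the last rule applied: the rules that \emph{introduce} the bound name, namely $\textbf{INPUT-ACT}$ and $\textbf{OPEN}_1$, form the base cases, while $\textbf{Loc}$, $\textbf{PAR}_1$, $\textbf{PAR}_2$, $\textbf{SUM}_1$, $\textbf{RES}_1$ and $\textbf{IDE}_1$ (the rules propagating a single action) form the inductive cases. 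The rules $\textbf{PAR}_3$, $\textbf{PAR}_4$, $\textbf{COM}$ and $\textbf{CLOSE}$ never yield a single bound label $\alpha(y)$, so they need not be considered. Throughout, the new name $z$ is chosen fresh, i.e. $z\notin n(P)$, which is what makes all the side conditions below go through.

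For the base case $\textbf{INPUT-ACT}$ I would take $P\equiv x(w).P_0$ and $P'\equiv P_0\{y/w\}$ with side condition $y\notin fn((w)P_0)$. Re-applying $\textbf{INPUT-ACT}$ with $z$ in place of $y$ gives $x(w).P_0\xrightarrow[u]{x(z)}P_0\{z/w\}$, whose side condition $z\notin fn((w)P_0)$ is guaranteed by $z\notin n(P)$. It then remains to check $P_0\{z/w\}\equiv_{\alpha}(P_0\{y/w\})\{z/y\}$, so that $P''\equiv P_0\{z/w\}$ works. The case $\textbf{OPEN}_1$ is analogous: here $P\equiv(w)P_0$ with premise $P_0\xrightarrow[u]{\overline{x}w}P_0'$ and conclusion $(w)P_0\xrightarrow[u]{\overline{x}(y)}P_0'\{y/w\}$ under $y\notin fn((w)P_0')$; re-applying $\textbf{OPEN}_1$ with $z$ yields $(w)P_0\xrightarrow[u]{\overline{x}(z)}P_0'\{z/w\}$, and one checks $P_0'\{z/w\}\equiv_{\alpha}(P_0'\{y/w\})\{z/y\}$.

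For the inductive cases the bound label is produced by a sub-derivation of a subprocess, so I would apply the induction hypothesis to that premise — choosing $z$ fresh not merely for the subprocess but for the whole of $P$ (hence for every parallel component and for every restricted name occurring above) — and then re-apply the same rule. For instance, in $\textbf{PAR}_1$ with $P\equiv P_1\parallel Q$ and premise $P_1\xrightarrow[u]{\alpha(y)}P_1'$, the induction hypothesis gives $P_1\xrightarrow[u]{\alpha(z)}P_1''$ with $P_1''\equiv_{\alpha}P_1'\{z/y\}$; since $z\notin n(P)$ we still have $bn(\alpha(z))\cap fn(Q)=\emptyset$, so $\textbf{PAR}_1$ applies and delivers $P\xrightarrow[u]{\alpha(z)}P_1''\parallel Q$, with $P_1''\parallel Q\equiv_{\alpha}(P_1'\parallel Q)\{z/y\}=P'\{z/y\}$ because $z,y$ avoid $fn(Q)$. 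The cases $\textbf{Loc}$, $\textbf{PAR}_2$, $\textbf{SUM}_1$, $\textbf{RES}_1$ and $\textbf{IDE}_1$ are handled the same way, using that each of these operators commutes with the renaming $\{z/y\}$ up to $\equiv_{\alpha}$ under the freshness of $z$.

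The main obstacle is the alpha-conversion bookkeeping in the two base cases — namely verifying $P_0\{z/w\}\equiv_{\alpha}(P_0\{y/w\})\{z/y\}$ and its $\textbf{OPEN}_1$ analogue. This is an instance of the standard substitution fact that renaming a bound name through a fresh intermediate name $y$ and then to $z$ coincides, up to $\equiv_{\alpha}$, with renaming it directly to $z$, valid precisely because both $y$ and $z$ are fresh for $(w)P_0$. The only other point requiring care is the uniform choice of a sufficiently fresh $z$ in the parallel and restriction cases so that none of the side conditions ($bn(\cdot)\cap fn(\cdot)=\emptyset$, $y\notin n(\cdot)$) are violated when the rule is re-applied; taking $z\notin n(P)$ at the outset makes this automatic.
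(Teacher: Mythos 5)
Your proof is correct and follows exactly the approach the paper takes: the paper's entire proof is the single line ``By induction on the depth of inference,'' and your argument is precisely that induction, carried out in detail with the right case split (base cases $\textbf{INPUT-ACT}$ and $\textbf{OPEN}_1$, inductive cases for the propagating rules) and the right freshness bookkeeping for $z$. No gaps; you have simply made explicit what the paper leaves implicit.
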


\begin{proof}
By induction on the depth of inference.
\end{proof}

\begin{proposition}
If $P\rightarrow P'$, $bn(\alpha)\cap fn(P'\sigma)=\emptyset$, and $\sigma\lceil bn(\alpha)=id$, then there exists some $P''\equiv_{\alpha}P'\sigma$,
$P\sigma\xrightarrow[u]{\alpha\sigma}P''$.
\end{proposition}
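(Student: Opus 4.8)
The plan is to read the hypothesis $P\rightarrow P'$ as a located labelled transition $P\xrightarrow[u]{\alpha}P'$, and to prove the statement — which asserts that substitution is respected by the transition relation up to $\equiv_{\alpha}$ — by induction on the depth of inference of $P\xrightarrow[u]{\alpha}P'$, exactly as announced for the two preceding propositions. Those two propositions supply the tools I need: the one bounding $fn(P')$ by $fn(P)\cup bn(\alpha)$ controls which names can clash, and the one allowing a bound object $y$ of $\alpha$ to be renamed to a fresh $z$ (yielding a target $\equiv_{\alpha}$ to the original) is what permits pushing $\sigma$ through binders without capture. Throughout I use the structural definition of $P\sigma$ from Definition \ref{subs5} together with the hypotheses $bn(\alpha)\cap fn(P'\sigma)=\emptyset$ and $\sigma\lceil bn(\alpha)=id$, the latter ensuring that $\alpha\sigma$ leaves the bound part of $\alpha$ untouched.

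First I would dispatch the base cases, the axioms $\textbf{TAU-ACT}$, $\textbf{OUTPUT-ACT}$ and $\textbf{INPUT-ACT}$. For $\tau.P$ and $\overline{x}y.P$ the action $\alpha$ has no bound names, so applying $\sigma$ to the source term yields precisely the redex required by the same rule: $(\overline{x}y.P)\sigma=\overline{x\sigma}y\sigma.(P\sigma)\xrightarrow[u]{\overline{x\sigma}y\sigma}P\sigma$, and since $\overline{x\sigma}y\sigma=(\overline{x}y)\sigma=\alpha\sigma$ one takes $P''\equiv P'\sigma$. For $\textbf{INPUT-ACT}$, $x(z).P\xrightarrow[u]{x(w)}P\{w/z\}$ with $w\notin fn((z)P)$; here $w$ is the bound object, $\sigma\lceil\{w\}=id$ gives $w\sigma=w$, and after renaming $w$ if necessary (via the bound-name proposition) to keep it fresh for $\sigma$, the outer substitution commutes with the inner $\{w/z\}$ up to $\equiv_{\alpha}$, yielding the required $P''$.

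Next I would treat the inductive cases. The non-binding compositional rules — $\textbf{SUM}_1$, $\textbf{SUM}_2$, $\textbf{Loc}$, $\textbf{IDE}_1$, $\textbf{IDE}_2$, $\textbf{PAR}_1$, $\textbf{PAR}_2$ and $\textbf{COM}$ — are routine: apply the induction hypothesis to each premise, reassemble with the same rule on the substituted subterms, and verify that each side condition survives $\sigma$ using the free-name bound. The genuinely delicate cases involve binders or freshly created bound names: $\textbf{PAR}_3$ and $\textbf{PAR}_4$, where the disjointness conditions $bn(\alpha)\cap bn(\beta)=\emptyset$ and $bn(\alpha)\cap fn(Q)=\emptyset$ must be re-established after substitution; $\textbf{RES}_1$, $\textbf{RES}_2$ on $(y)P$; and $\textbf{CLOSE}$, $\textbf{OPEN}_1$, $\textbf{OPEN}_2$. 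For these I would first alpha-convert the bound name $y$ of the restriction to a name $y'$ chosen fresh for $\sigma$ (outside the domain and range of $\sigma$ and outside $fn(P'\sigma)$), so that $((y)P)\sigma=(y')\big(P\{y'/y\}\sigma\big)$ with no capture; then push $\sigma$ inside, invoke the induction hypothesis, and rebuild the transition, absorbing any residual discrepancy into $\equiv_{\alpha}$.

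The main obstacle will be the scope-extrusion rules $\textbf{OPEN}_1$ and $\textbf{OPEN}_2$, together with $\textbf{CLOSE}$. There the object of the action is a bound name that the rule itself promotes to the bound subject of a bound-output label, so I must simultaneously keep that name identified with the restriction it escapes and guarantee it is fresh for $\sigma$; this is precisely where $\sigma\lceil bn(\alpha)=id$ and the preceding alpha-renaming proposition do the essential work. All remaining difficulty is bookkeeping: checking that each rule's freshness side condition is preserved under $\sigma$ (via the free-name proposition and the resulting containment $fn(\alpha\sigma)\subseteq fn(P\sigma)$) and that the reconstructed target coincides with $P'\sigma$ only up to $\equiv_{\alpha}$, never demanding syntactic identity.
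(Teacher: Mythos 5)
Your proposal takes exactly the route the paper does: the paper's entire proof is ``By the definition of substitution (Definition \ref{subs5}) and induction on the depth of inference,'' and your argument is precisely that induction, with the base cases, compositional cases, and binder/scope-extrusion cases spelled out. Your elaboration of the delicate $\textbf{OPEN}$/$\textbf{CLOSE}$ cases and the use of $\sigma\lceil bn(\alpha)=id$ fills in details the paper leaves implicit, but it is the same approach, correctly executed.
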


\begin{proof}
By the definition of substitution (Definition \ref{subs5}) and induction on the depth of inference.
\end{proof}

\begin{proposition}
\begin{enumerate}
  \item If $P\{w/z\}\xrightarrow[u]{\alpha}P'$, where $w\notin fn(P)$ and $bn(\alpha)\cap fn(P,w)=\emptyset$, then there exist some $Q$ and $\beta$ with $Q\{w/z\}\equiv_{\alpha}P'$ and
  $\beta\sigma=\alpha$, $P\xrightarrow[u]{\beta}Q$;
  \item If $P\{w/z\}\xrightarrow[u]{\{\alpha_1,\cdots,\alpha_n\}}P'$, where $w\notin fn(P)$ and $bn(\alpha_1)\cap\cdots\cap bn(\alpha_n)\cap fn(P,w)=\emptyset$, then there exist some
  $Q$ and $\beta_1,\cdots,\beta_n$ with $Q\{w/z\}\equiv_{\alpha}P'$ and $\beta_1\sigma=\alpha_1,\cdots,\beta_n\sigma=\alpha_n$, $P\xrightarrow[u]{\{\beta_1,\cdots,\beta_n\}}Q$.
\end{enumerate}

\end{proposition}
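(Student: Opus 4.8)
The plan is to prove both parts by induction on the depth of the inference tree for the transition of the substituted process, in the same style as the preceding three propositions (all of which were dispatched with the phrase ``by induction on the depth of inference''). The statement is essentially the inverse-substitution lemma: whenever a single renaming $\{w/z\}$ applied to $P$ enables a transition, that transition must already have a pre-image transition $P\xrightarrow[u]{\beta}Q$ whose label $\beta$ and residual $Q$ reduce under the substitution to the observed $\alpha$ and $P'$. The hypotheses $w\notin fn(P)$ and $bn(\alpha)\cap fn(P,w)=\emptyset$ are exactly the freshness side-conditions needed so that no accidental name capture occurs when we pull the substitution back across the transition rules.

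First I would set up the induction on part (1), proceeding by cases on the last transition rule applied to derive $P\{w/z\}\xrightarrow[u]{\alpha}P'$, following the rule set in Table \ref{TRForPITC5} and Table \ref{TRForPITC52}. For the base cases \textbf{TAU-ACT}, \textbf{OUTPUT-ACT}, and \textbf{INPUT-ACT}, I would read off the shape that $P$ must have had before substitution: for instance if the rule is \textbf{OUTPUT-ACT}, then $P\{w/z\}\equiv \overline{a}b.P''$, so $P$ itself is $\overline{x}y.Q_0$ with $\overline{x}y\{w/z\}=\overline{a}b$, and I take $\beta=\overline{x}y$. The \textbf{INPUT-ACT} case is where the bound-object freshness condition $w\notin fn(P)$ is used, to guarantee that the bound name introduced by the input prefix is chosen distinct from $w$ and hence the substitution commutes with the $\alpha$-renaming. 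For the \textbf{Loc} rule I would simply peel off the $loc::$ wrapper and invoke the induction hypothesis on the premise, since substitution distributes through $loc::$ by Definition \ref{subs5}(3). The structural cases \textbf{SUM}, \textbf{PAR}, \textbf{COM}, \textbf{CLOSE}, \textbf{RES}, \textbf{OPEN} and \textbf{IDE} follow the same template: decompose $P$ according to how substitution acts (Definition \ref{subs5}), apply the induction hypothesis to each premise, and reassemble $\beta$ and $Q$, checking that the side-conditions ($bn(\alpha)\cap fn(Q)=\emptyset$ etc.) are preserved along the way.

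The main obstacle will be the communication and scope-extrusion rules, \textbf{COM}, \textbf{CLOSE}, and \textbf{OPEN}, together with the bound-output cases, because there the observed label $\alpha$ arises from an interaction of two subderivations and the object name may be bound and extruded. In these cases I expect to need the earlier propositions (the $\alpha$-convertibility proposition ``if $P\xrightarrow[u]{\alpha(y)}P'$ \ldots then $P\xrightarrow[u]{\alpha(z)}P''$ with $P''\equiv_\alpha P'\{z/y\}$'' and the substitution-commutes-with-transition proposition) to realign the bound names of the two partners and to justify that $\beta_1,\beta_2$ can be chosen so that their communication yields exactly $\beta$. The freshness hypothesis $bn(\alpha)\cap fn(P,w)=\emptyset$ is precisely what lets me avoid capture when pulling the extruded name back through the substitution.

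Finally, part (2) is the pomset/step-labelled analogue, and I would prove it by the same induction, the only genuinely new rules being \textbf{PAR}$_3$, \textbf{PAR}$_4$, \textbf{SUM}$_2$, \textbf{RES}$_2$, \textbf{OPEN}$_2$, \textbf{IDE}$_2$. Since a step label $\{\alpha_1,\dots,\alpha_n\}$ is assembled componentwise from concurrent single-event transitions via \textbf{PAR}$_3$ and \textbf{PAR}$_4$, I would apply the induction hypothesis (and, where the multiset contains two input actions sharing a bound object as in \textbf{PAR}$_4$, the single-action result of part (1)) to each constituent, obtaining pre-image labels $\beta_1,\dots,\beta_n$ with $\beta_i\sigma=\alpha_i$, and then recombine them under the same rule to get $P\xrightarrow[u]{\{\beta_1,\cdots,\beta_n\}}Q$ with $Q\{w/z\}\equiv_\alpha P'$. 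The disjointness of the various $bn(\alpha_i)$ from $fn(P,w)$ ensures the recombination respects the side-conditions of \textbf{PAR}$_3$ and \textbf{PAR}$_4$, so no new difficulty beyond part (1) arises.
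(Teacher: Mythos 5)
Your proposal matches the paper's own proof, which reads in its entirety: ``By the definition of substitution (Definition \ref{subs5}) and induction on the depth of inference.'' You have simply unfolded that same strategy --- rule-by-rule induction on the inference depth, using the inductive clauses of the substitution definition to pull the renaming back through each transition rule --- in explicit detail, so the approach is the same and correct.
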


\begin{proof}
By the definition of substitution (Definition \ref{subs5}) and induction on the depth of inference.
\end{proof}

\subsubsection{Strong Bisimilarities}\label{stcb5}



Similarly to $\pi_{tc}$, we can obtain the following laws with respect to probabilistic static location truly concurrent bisimilarities.

\begin{theorem}[Summation laws for strong static location pomset bisimulation]
The summation laws for strong static location pomset bisimulation are as follows.

\begin{enumerate}
  \item $P+\textbf{nil}\sim_p^{sl} P$;
  \item $P+P\sim_p^{sl} P$;
  \item $P_1+P_2\sim_p^{sl} P_2+P_1$;
  \item $P_1+(P_2+P_3)\sim_p^{sl} (P_1+P_2)+P_3$.
\end{enumerate}
\end{theorem}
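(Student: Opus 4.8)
The plan is to prove each of the four summation laws by exhibiting an explicit strong static location pomset bisimulation relation $R_{\varphi}$ containing the relevant pair of processes, together with the identity relation. For each law I would take $R = \{(\text{LHS}, \text{RHS})\} \cup \textbf{Id}$ and verify that it satisfies the clauses of the strong static location pomset bisimulation definition for some suitable starting location association $\varphi$. Since the summation operator carries no location prefix of its own, the locations $u$ appearing on transitions are inherited entirely from the summands via rules $\textbf{SUM}_1$ and $\textbf{SUM}_2$, so the location-matching condition $(C_1',C_2') \in R_{\varphi\cup\{(u,v)\}}$ will be met by taking $v = u$ and observing that the residual processes are syntactically identical (hence related by $\textbf{Id}$).

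First I would handle the two structural laws, commutativity ($P_1+P_2 \sim_p^{sl} P_2+P_1$) and associativity ($P_1+(P_2+P_3)\sim_p^{sl}(P_1+P_2)+P_3$). For commutativity, any transition $P_1+P_2 \xrightarrow[u]{X} P'$ must, by $\textbf{SUM}_1$ or $\textbf{SUM}_2$, arise from a transition of $P_1$ or of $P_2$ at the same location $u$ with the same label; the symmetric rule then provides the matching transition $P_2+P_1 \xrightarrow[u]{X} P'$, and the residuals coincide so $(P',P') \in \textbf{Id} \subseteq R_{\varphi\cup\{(u,u)\}}$. Associativity is the same argument with three summands and a case split on which component fires. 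Next I would treat idempotence ($P+P\sim_p^{sl} P$): a transition of $P+P$ comes from one of the two identical copies of $P$, matched directly by the same transition of $P$, and conversely a transition of $P$ is lifted to $P+P$ by $\textbf{SUM}_1$ or $\textbf{SUM}_2$. Finally, for $P+\textbf{nil}\sim_p^{sl} P$ I would use that $\textbf{nil}$ has no outgoing transitions, so every transition of $P+\textbf{nil}$ must originate from $P$, and vice versa.

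Because the paper has already established the analogous summation/monoid laws for CTC with static localities (and notes repeatedly that such proofs transfer from the CTC case), I would streamline these verifications, appealing to the fact that the transition rules $\textbf{SUM}_1$, $\textbf{SUM}_2$ for $\pi_{tc}$ with static localities have exactly the same shape as those for CTC. The key observation making all four cases routine is that $+$ introduces no new location information and no new events, so the pomset $X_1$ is matched by an identical pomset $X_2 = X_1$ with $X_1 \sim X_2$ trivially, and the location pair $(u,v)$ added to $\varphi$ has $u = v$, keeping $\varphi$ consistent as a cla.

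The main obstacle, and the only place warranting genuine care, is the handling of the name-binding clauses in the strong static location pomset bisimulation definition specific to $\pi_{tc}$ (the input $x(y)$, bound-output $\overline{x}(y)$, and the paired-input conditions involving freshness $y \notin n(\mathcal{E}_1,\mathcal{E}_2)$). I expect these to cause no real difficulty for the summation laws, since $+$ neither binds nor frees names and commutes with substitution, so the side conditions on bound names are preserved identically on both sides; nonetheless I would check that the freshness conditions $w \notin fn(\cdots)$ are symmetric under the rearrangement of summands, which they are because $fn(P_1+P_2) = fn(P_2+P_1)$ and likewise for the associative regrouping. Having verified the transfer conditions and the four clauses in each direction, I conclude that the constructed $R_{\varphi}$ is a strong static location pomset bisimulation with $(\emptyset,\emptyset)\in R_{\varphi}$, giving each stated equivalence.
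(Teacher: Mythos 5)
Your proposal matches the paper's proof in essence: for each law the paper likewise takes $R=\{(\mathrm{LHS},\mathrm{RHS})\}\cup \textbf{Id}$ and claims it is a strong static location pomset bisimulation, omitting the routine verification by appeal to the summation laws for strong pomset bisimulation in $\pi_{tc}$ (you appeal instead to the CTC-with-localities monoid laws, but the transfer argument is the same in kind). Your added details---matching locations with $v=u$ so the cla stays consistent, the case analysis on $\textbf{SUM}_1/\textbf{SUM}_2$, and the check that the $\pi$-calculus name-binding clauses are unaffected by $+$---are correct and simply fill in what the paper leaves implicit.
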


\begin{proof}
\begin{enumerate}
  \item It is sufficient to prove the relation $R=\{(P+\textbf{nil}, P)\}\cup \textbf{Id}$ is a strong static location pomset bisimulation for some distributions. It can be proved similarly to the proof of Summation laws for strong pomset bisimulation in $\pi_{tc}$, we omit it;
  \item It is sufficient to prove the relation $R=\{(P+P, P)\}\cup \textbf{Id}$ is a strong static location pomset bisimulation for some distributions. It can be proved similarly to the proof of Summation laws for strong pomset bisimulation in $\pi_{tc}$, we omit it;
  \item It is sufficient to prove the relation $R=\{(P_1+P_2, P_2+P_1)\}\cup \textbf{Id}$ is a strong static location pomset bisimulation for some distributions. It can be proved similarly to the proof of Summation laws for strong pomset bisimulation in $\pi_{tc}$, we omit it;
  \item It is sufficient to prove the relation $R=\{(P_1+(P_2+P_3), (P_1+P_2)+P_3)\}\cup \textbf{Id}$ is a strong static location pomset bisimulation for some distributions. It can be proved similarly to the proof of Summation laws for strong pomset bisimulation in $\pi_{tc}$, we omit it.
\end{enumerate}
\end{proof}

\begin{theorem}[Summation laws for strong static location step bisimulation]
The summation laws for strong static location step bisimulation are as follows.

\begin{enumerate}
  \item $P+\textbf{nil}\sim_s^{sl} P$;
  \item $P+P\sim_s^{sl} P$;
  \item $P_1+P_2\sim_s^{sl} P_2+P_1$;
  \item $P_1+(P_2+P_3)\sim_s^{sl} (P_1+P_2)+P_3$.
\end{enumerate}
\end{theorem}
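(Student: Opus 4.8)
The plan is to prove each of the four summation laws by exhibiting an explicit relation and checking it is a strong static location step bisimulation, in direct parallel to the pomset case just completed. For a statement like $P_1+P_2\sim_s^{sl} P_2+P_1$, I would take the candidate relation $R=\{(P_1+P_2, P_2+P_1)\}\cup\textbf{Id}$, where $\textbf{Id}$ is the identity relation on configurations, and verify the transfer conditions of the strong static location step bisimulation from Definition (Strong static location pomset, step bisimilarity), specialized to steps (i.e.\ the multiset $X_1$ consists of pairwise-concurrent events). Concretely, I would check that whenever the left component makes a transition $C_1\xrightarrow[u]{X_1}C_1'$, the right component can answer with a matching $C_2\xrightarrow[v]{X_2}C_2'$ such that $X_1\sim X_2$ and the resulting pair lies in $R_{\varphi\cup\{(u,v)\}}$, together with the four clauses governing fresh actions, bound inputs $x(y)$, pairs of bound inputs, and bound outputs $\overline{x}(y)$, and symmetrically for right-to-left moves.

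The key observation that makes all four cases go through is that the transition rules $\textbf{SUM}_1$ and $\textbf{SUM}_2$ are insensitive to the syntactic order and multiplicity of summands: any step derivable from $P_1+P_2$ is derivable from exactly one of the summands, hence also from $P_2+P_1$, at the \emph{same} location $u$ (so one may take $v=u$ and $X_2=X_1$ with the identity isomorphism on the events). After the first transition the two processes reduce to identical residuals, at which point the pair falls into $\textbf{Id}$, and the identity relation is trivially a strong static location step bisimulation. Thus the heart of each proof is a one-step case analysis on which summand fired, followed by an appeal to $\textbf{Id}$. The idempotence law $P+P\sim_s^{sl}P$ and the unit law $P+\textbf{nil}\sim_s^{sl}P$ are handled the same way, noting that $\textbf{nil}$ has no transitions so it contributes nothing on either side, and that $P+P$ offers exactly the transitions of $P$.

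Since the bisimulation here carries a location association $\varphi$ and the matching transitions share the same location, the cla (consistent location association) side-condition is automatically preserved: adding $(u,u)$ to $\varphi$ cannot violate the biconditional $u\diamond u'\Leftrightarrow v\diamond v'$ when $v=u$ throughout. I would remark this once to justify that the "$R_\varphi$ for some distributions" phrasing used in the pomset proofs applies verbatim. Following the established convention of the excerpt, the detailed one-step verifications are entirely analogous to the corresponding summation laws for strong (non-located) step bisimulation in $\pi_{tc}$, so I would state the relation for each item and defer the routine check to that reference rather than grinding through every clause.

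The main obstacle, to the extent there is one, is purely bookkeeping rather than conceptual: the step version must confirm that the events in the answering multiset $X_2$ can be chosen pairwise concurrent whenever those of $X_1$ are, which is immediate here because $X_2=X_1$ and concurrency $\parallel$ is a property of the underlying event structure unchanged by the choice of summand. I therefore expect no genuine difficulty, and the proof reduces to recording the four relations $R=\{(P+\textbf{nil},P)\}\cup\textbf{Id}$, $\{(P+P,P)\}\cup\textbf{Id}$, $\{(P_1+P_2,P_2+P_1)\}\cup\textbf{Id}$, and $\{(P_1+(P_2+P_3),(P_1+P_2)+P_3)\}\cup\textbf{Id}$, and invoking the analogous $\pi_{tc}$ results.
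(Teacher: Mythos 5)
Your proposal is correct and follows essentially the same route as the paper: for each law you exhibit exactly the relations $R=\{(P+\textbf{nil},P)\}\cup\textbf{Id}$, $\{(P+P,P)\}\cup\textbf{Id}$, $\{(P_1+P_2,P_2+P_1)\}\cup\textbf{Id}$, $\{(P_1+(P_2+P_3),(P_1+P_2)+P_3)\}\cup\textbf{Id}$ that the paper uses, and defer the transfer-condition check to the corresponding summation laws for strong step bisimulation in $\pi_{tc}$. Your added remarks on taking $v=u$, $X_2=X_1$ and on preservation of the consistent location association are a sound elaboration of what the paper leaves implicit.
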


\begin{proof}
\begin{enumerate}
  \item It is sufficient to prove the relation $R=\{(P+\textbf{nil}, P)\}\cup \textbf{Id}$ is a strong static location step bisimulation for some distributions. It can be proved similarly to the proof of Summation laws for strong step bisimulation in $\pi_{tc}$, we omit it;
  \item It is sufficient to prove the relation $R=\{(P+P, P)\}\cup \textbf{Id}$ is a strong static location step bisimulation for some distributions. It can be proved similarly to the proof of Summation laws for strong step bisimulation in $\pi_{tc}$, we omit it;
  \item It is sufficient to prove the relation $R=\{(P_1+P_2, P_2+P_1)\}\cup \textbf{Id}$ is a strong static location step bisimulation for some distributions. It can be proved similarly to the proof of Summation laws for strong step bisimulation in $\pi_{tc}$, we omit it;
  \item It is sufficient to prove the relation $R=\{(P_1+(P_2+P_3), (P_1+P_2)+P_3)\}\cup \textbf{Id}$ is a strong static location step bisimulation for some distributions. It can be proved similarly to the proof of Summation laws for strong step bisimulation in $\pi_{tc}$, we omit it.
\end{enumerate}
\end{proof}

\begin{theorem}[Summation laws for strong static location hp-bisimulation]
The summation laws for strong static location hp-bisimulation are as follows.

\begin{enumerate}
  \item $P+\textbf{nil}\sim_{hp}^{sl} P$;
  \item $P+P\sim_{hp}^{sl} P$;
  \item $P_1+P_2\sim_{hp}^{sl} P_2+P_1$;
  \item $P_1+(P_2+P_3)\sim_{hp}^{sl} (P_1+P_2)+P_3$.
\end{enumerate}
\end{theorem}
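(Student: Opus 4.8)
The plan is to follow the same template as the preceding two theorems on pomset and step bisimulation: for each of the four laws I would exhibit an explicit posetal relation of the form $R_{\varphi}=\{(C_1,f,C_2)\}\cup\textbf{Id}$ and verify directly that it is a strong static location hp-bisimulation in the sense of the Strong (hereditary) history-preserving bisimilarity definition. Concretely, for law (1) I take the relation generated by the pair $(P+\textbf{nil},P)$, for (2) the pair $(P+P,P)$, for (3) the pair $(P_1+P_2,P_2+P_1)$, and for (4) the pair $(P_1+(P_2+P_3),(P_1+P_2)+P_3)$, in each case closed under the identity posetal relation $\textbf{Id}$ so that once the leading choice at the summation has been resolved the two sides coincide.

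The key observation that makes the verification go through is that a summation $P+Q$ only ever fires a transition inherited from one of its summands via $\textbf{SUM}_1$ or $\textbf{SUM}_2$, and such a transition carries exactly the same label, the same location $u$, and leads to exactly the same residual as the corresponding transition of the chosen summand. Hence for each law, whenever the left-hand side performs $C_1\xrightarrow[u]{e_1}C_1'$ the right-hand side can answer with the identical move $C_2\xrightarrow[u]{e_2}C_2'$ taking $v=u$ and $e_2=e_1$; the accompanying event map is then $f[e_1\mapsto e_2]$ with $f$ the identity, which remains an isomorphism of the underlying posets, and the location pair added is $(u,u)$, so $\varphi\cup\{(u,u)\}$ is trivially still a consistent location association. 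I would carry out this matching separately for each of the four clauses of the hp-definition — the fresh action $\alpha$, the input $x(y)$, the pair of inputs $\{x_1(y),x_2(y)\}$, and the bound output $\overline{x}(y)$ — but in every clause the answering move and the updated triple are produced in the same uniform way, landing the successor in $\textbf{Id}\subseteq R_{\varphi\cup\{(u,v)\}}$. The symmetric (vice-versa) direction is identical, and because we only need an hp-bisimulation (not its hereditary, downward-closed refinement) no additional closure condition has to be checked.

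The main obstacle, relative to the pomset and step cases already disposed of, is purely bookkeeping: one must confirm that tracking the event isomorphism $f$ and threading the location association $\varphi$ through the transition does not break, which it does not precisely because the residual processes on the two sides are literally equal after the first step, forcing $f$ to be the identity and every new location pair to be of the form $(u,u)$. As in the surrounding propositions, the detailed diagram-chasing is routine and mirrors the proof of the Summation laws for strong hp-bisimulation in $\pi_{tc}$, so I would invoke that correspondence and omit the mechanical repetition, exactly as the paper does for the pomset and step versions.
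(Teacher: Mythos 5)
Your proposal is correct and takes essentially the same route as the paper: both exhibit, for each of the four laws, the relation $R=\{(\textrm{LHS},\textrm{RHS})\}\cup \textbf{Id}$, claim it is a strong static location hp-bisimulation, and defer the routine transition-matching to the proof of the summation laws for strong hp-bisimulation in $\pi_{tc}$. Your extra observation that the $\textbf{SUM}$ rules preserve the label, the location, and the residual (so $f$ stays the identity, only pairs $(u,u)$ are added to $\varphi$, and consistency of the location association is trivial) is exactly the bookkeeping the paper leaves implicit.
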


\begin{proof}
\begin{enumerate}
  \item It is sufficient to prove the relation $R=\{(P+\textbf{nil}, P)\}\cup \textbf{Id}$ is a strong static location hp-bisimulation for some distributions. It can be proved similarly to the proof of Summation laws for strong hp-bisimulation in $\pi_{tc}$, we omit it;
  \item It is sufficient to prove the relation $R=\{(P+P, P)\}\cup \textbf{Id}$ is a strong static location hp-bisimulation for some distributions. It can be proved similarly to the proof of Summation laws for strong hp-bisimulation in $\pi_{tc}$, we omit it;
  \item It is sufficient to prove the relation $R=\{(P_1+P_2, P_2+P_1)\}\cup \textbf{Id}$ is a strong static location hp-bisimulation for some distributions. It can be proved similarly to the proof of Summation laws for strong hp-bisimulation in $\pi_{tc}$, we omit it;
  \item It is sufficient to prove the relation $R=\{(P_1+(P_2+P_3), (P_1+P_2)+P_3)\}\cup \textbf{Id}$ is a strong static location hp-bisimulation for some distributions. It can be proved similarly to the proof of Summation laws for strong hp-bisimulation in $\pi_{tc}$, we omit it.
\end{enumerate}
\end{proof}

\begin{theorem}[Summation laws for strong static location hhp-bisimulation]
The summation laws for strong static location hhp-bisimulation are as follows.

\begin{enumerate}
  \item $P+\textbf{nil}\sim_{hhp}^{sl} P$;
  \item $P+P\sim_{hhp}^{sl} P$;
  \item $P_1+P_2\sim_{hhp}^{sl} P_2+P_1$;
  \item $P_1+(P_2+P_3)\sim_{hhp}^{sl} (P_1+P_2)+P_3$.
\end{enumerate}
\end{theorem}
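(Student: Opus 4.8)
The plan is to mirror the proofs of the three preceding summation laws (the pomset, step, and hp cases), since a strong static location hhp-bisimulation is by definition a downward closed strong static location hp-bisimulation. For each of the four identities I would first exhibit the witnessing relation $R_{\varphi}=\{(\text{LHS},\text{RHS})\}\cup\textbf{Id}$; concretely $\{(P+\textbf{nil},P)\}\cup\textbf{Id}$ for (1), $\{(P+P,P)\}\cup\textbf{Id}$ for (2), $\{(P_1+P_2,P_2+P_1)\}\cup\textbf{Id}$ for (3), and $\{(P_1+(P_2+P_3),(P_1+P_2)+P_3)\}\cup\textbf{Id}$ for (4), each regarded as a posetal relation carrying the evident order-isomorphism data.

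First I would check that every such $R_{\varphi}$ satisfies the hp-bisimulation clauses. The behaviour of $+$ is governed by the rules $\textbf{SUM}_1$ and $\textbf{SUM}_2$ of Table \ref{TRForPITC5}, which neither alter the location annotation nor the label of a transition: any move of the left-hand term at a location $u$ is matched by the same summand move of the right-hand term at the same location, so the adjoined pair has the form $(u,u)$ and trivially respects any consistent location association, while the two residuals are syntactically identical and hence lie in $\textbf{Id}$. This is precisely the argument already used for the strong static location hp-bisimulation case, which in turn reduces to the summation laws for strong hp-bisimulation in $\pi_{tc}$.

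The genuinely new ingredient, and the step I expect to be the main obstacle, is verifying downward closure of each $R_{\varphi}$, which is what upgrades an hp-bisimulation to an hhp-bisimulation. Here I would argue that every posetal triple lying pointwise below a pair of $R_{\varphi}$ is again in $R_{\varphi}$: below a pair of $\textbf{Id}$ one only finds further pairs of $\textbf{Id}$, and below the single top pair $(\text{LHS},\text{RHS})$ every proper sub-configuration is reached after the initial summation choice has already been made, at which point the two sides have collapsed to the same residual and the corresponding sub-triple is carried by $\textbf{Id}$. Once downward closure is established, together with the hp-bisimulation clauses and the basepoint $(\emptyset,\emptyset,\emptyset)\in R_{\varphi}$, each of the four identities follows. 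As in the companion propositions, the routine transition-matching bookkeeping is identical to the $\pi_{tc}$ proof and can be omitted.
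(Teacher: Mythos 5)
Your proposal is correct and takes essentially the same route as the paper: for each of the four laws the paper also exhibits the relation $R=\{(\mathrm{LHS},\mathrm{RHS})\}\cup\textbf{Id}$, claims it is a strong static location hhp-bisimulation for some distributions, and defers the verification to the summation laws for strong hhp-bisimulation in $\pi_{tc}$. Your explicit sketch of the downward-closure check (the step that upgrades hp to hhp) supplies a detail the paper omits entirely, but it is the same underlying argument, not a different approach.
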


\begin{proof}
\begin{enumerate}
  \item It is sufficient to prove the relation $R=\{(P+\textbf{nil}, P)\}\cup \textbf{Id}$ is a strongly static location hhp-bisimulation for some distributions. It can be proved similarly to the proof of Summation laws for strong hhp-bisimulation in $\pi_{tc}$, we omit it;
  \item It is sufficient to prove the relation $R=\{(P+P, P)\}\cup \textbf{Id}$ is a strongly static location hhp-bisimulation for some distributions. It can be proved similarly to the proof of Summation laws for strong hhp-bisimulation in $\pi_{tc}$, we omit it;
  \item It is sufficient to prove the relation $R=\{(P_1+P_2, P_2+P_1)\}\cup \textbf{Id}$ is a strongly static location hhp-bisimulation for some distributions. It can be proved similarly to the proof of Summation laws for strong hhp-bisimulation in $\pi_{tc}$, we omit it;
  \item It is sufficient to prove the relation $R=\{(P_1+(P_2+P_3), (P_1+P_2)+P_3)\}\cup \textbf{Id}$ is a strongly static location hhp-bisimulation for some distributions. It can be proved similarly to the proof of Summation laws for strong hhp-bisimulation in $\pi_{tc}$, we omit it.
\end{enumerate}
\end{proof}

\begin{theorem}[Identity law for truly concurrent bisimilarities]
If $A(\widetilde{x})\overset{\text{def}}{=}P$, then

\begin{enumerate}
  \item $A(\widetilde{y})\sim_p^{sl} P\{\widetilde{y}/\widetilde{x}\}$;
  \item $A(\widetilde{y})\sim_s^{sl} P\{\widetilde{y}/\widetilde{x}\}$;
  \item $A(\widetilde{y})\sim_{hp}^{sl} P\{\widetilde{y}/\widetilde{x}\}$;
  \item $A(\widetilde{y})\sim_{hhp}^{sl} P\{\widetilde{y}/\widetilde{x}\}$.
\end{enumerate}
\end{theorem}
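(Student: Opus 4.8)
The plan is to exhibit, in each of the four cases, an explicit bisimulation built from the single nontrivial pair together with the identity relation. Concretely I would set
$$R = \{(A(\widetilde{y}), P\{\widetilde{y}/\widetilde{x}\})\} \cup \textbf{Id},$$
where in the pomset and step cases $\textbf{Id}$ is the identity relation on configurations, and in the hp- and hhp- cases it is the diagonal of the posetal product, i.e. the set of triples $(C, Id_C, C)$. The whole argument rests on one observation about the operational semantics: by transition rules $\textbf{IDE}_1$ and $\textbf{IDE}_2$ of Table~\ref{TRForPITC5} and Table~\ref{TRForPITC52}, a transition $A(\widetilde{y}) \xrightarrow[u]{\alpha} P'$ (resp. $A(\widetilde{y}) \xrightarrow[u]{\{\alpha_1,\cdots,\alpha_n\}} P'$) is derivable if and only if $P\{\widetilde{y}/\widetilde{x}\} \xrightarrow[u]{\alpha} P'$ (resp. $P\{\widetilde{y}/\widetilde{x}\} \xrightarrow[u]{\{\alpha_1,\cdots,\alpha_n\}} P'$) is derivable, with the same label, the same location $u$, and the \emph{same} target $P'$. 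Thus the two processes have transitions in exact one-to-one correspondence.

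First I would treat the pomset and step cases. Given the special pair and a move $A(\widetilde{y}) \xrightarrow[u]{X_1} P'$, the matching move is $P\{\widetilde{y}/\widetilde{x}\} \xrightarrow[u]{X_1} P'$ obtained from $\textbf{IDE}$; I take $v = u$ and $X_2 = X_1$, so $X_1 \sim X_2$ holds trivially and the resulting pair $(P', P')$ lies in $\textbf{Id}$. The converse direction is symmetric. The four $\pi_{tc}$-specific subclauses (for a fresh action $\alpha$, for an input $x(y)$, for two inputs $x_1(y),x_2(y)$, and for a bound output $\overline{x}(y)$ with $y \notin n(\mathcal{E}_1,\mathcal{E}_2)$) are then discharged uniformly: each asks us to match a further transition out of some $C_1''$ related to $C_2''$, but every such pair is either again the special pair, handled by $\textbf{IDE}$ exactly as above, or already in $\textbf{Id}$, where the matching transition is the identical one and the freshness and substitution requirements $\{w/y\}$ are met because the two derivatives coincide syntactically, hence up to $\equiv_{\alpha}$.

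The one point that needs a deliberate check is that the location association stays consistent. Since every matched move uses $v = u$, only diagonal pairs $(u,u)$ are ever added to $\varphi$; starting from $\varphi = \emptyset$ the set $\varphi$ therefore remains a set of diagonal pairs at every stage, and for diagonal associations the defining condition of a consistent location association, $u \diamond u' \Leftrightarrow v \diamond v'$, holds automatically. Hence each $R_{\varphi\cup\{(u,v)\}}$ is well-formed throughout and $R$ is indeed a static location pomset (resp. step) bisimulation containing the initial pair with empty association.

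For the hp- and hhp- cases I would carry the order-isomorphism $f$ along: when matching $C_1 \xrightarrow[u]{e_1} C_1'$ by $C_2 \xrightarrow[u]{e_2} C_2'$ with $e_2 = e_1$ and $C_2' = C_1'$, the extension $f[e_1 \mapsto e_2]$ is just the identity isomorphism on the enlarged configuration, so the posetal triple $(C_1', f[e_1\mapsto e_2], C_2')$ stays on the diagonal and the three name-passing subclauses go through exactly as before. For hhp- I additionally note that $\textbf{Id}$ together with the diagonal isomorphisms is downward closed, which is immediate. The main, and only mildly technical, obstacle is thus not any hard matching argument, since the $\textbf{IDE}$ rules make the matching trivial, but rather the uniform dispatch of the name-handling subclauses and the verification that the accumulated $\varphi$ never leaves the diagonal; everything else is bookkeeping that mirrors the corresponding identity law already established for $\pi_{tc}$.
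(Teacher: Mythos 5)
Your proposal is correct and takes essentially the same approach as the paper: the paper's own proof consists precisely of exhibiting the relation $R=\{(A(\widetilde{y}),P\{\widetilde{y}/\widetilde{x}\})\}\cup \textbf{Id}$ for each of the four bisimilarities and declaring the verification straightforward. You have simply carried out the omitted verification, using the $\textbf{IDE}_1$/$\textbf{IDE}_2$ rules to get the exact transition correspondence and observing that only diagonal location pairs $(u,u)$ ever enter $\varphi$, so the consistent-location-association condition is automatic.
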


\begin{proof}
\begin{enumerate}
  \item It is straightforward to see that $R=\{A(\widetilde{y},P\{\widetilde{y}/\widetilde{x}\})\}\cup \textbf{Id}$ is a strong static location pomset bisimulation for some distributions, we omit it;
  \item It is straightforward to see that $R=\{A(\widetilde{y},P\{\widetilde{y}/\widetilde{x}\})\}\cup \textbf{Id}$ is a strong static location step bisimulation for some distributions, we omit it;
  \item It is straightforward to see that $R=\{A(\widetilde{y},P\{\widetilde{y}/\widetilde{x}\})\}\cup \textbf{Id}$ is a strong static location hp-bisimulation for some distributions, we omit it;
  \item It is straightforward to see that $R=\{A(\widetilde{y},P\{\widetilde{y}/\widetilde{x}\})\}\cup \textbf{Id}$ is a strongly static location hhp-bisimulation for some distributions, we omit it.
\end{enumerate}
\end{proof}

\begin{theorem}[Restriction Laws for strong static location pomset bisimulation]
The restriction laws for strong static location pomset bisimulation are as follows.

\begin{enumerate}
  \item $(y)P\sim_p^{sl} P$, if $y\notin fn(P)$;
  \item $(y)(z)P\sim_p^{sl} (z)(y)P$;
  \item $(y)(P+Q)\sim_p^{sl} (y)P+(y)Q$;
  \item $(y)\alpha.P\sim_p^{sl} \alpha.(y)P$ if $y\notin n(\alpha)$;
  \item $(y)\alpha.P\sim_p^{sl} \textbf{nil}$ if $y$ is the subject of $\alpha$.
\end{enumerate}
\end{theorem}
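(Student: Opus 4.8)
The plan is to treat each of the five laws by exhibiting an explicit witnessing relation and checking it against the definition of strong static location pomset bisimulation given at the start of this section, exactly as in the corresponding restriction laws for $\pi_{tc}$ without localities. For law $i$ I would take $R_{\varphi} = \{(\text{LHS}_i, \text{RHS}_i)\} \cup \textbf{Id}$ and verify that whenever the left-hand side performs a pomset transition $C_1 \xrightarrow[u]{X_1} C_1'$, the right-hand side answers with some $C_2 \xrightarrow[v]{X_2} C_2'$ such that $X_1 \sim X_2$ and the residuals are again related under $R_{\varphi \cup \{(u,v)\}}$, together with the four auxiliary clauses for fresh actions, bound input actions $x(y)$, pairs $x_1(y), x_2(y)$, and bound output actions $\overline{x}(y)$; and symmetrically for the right-to-left direction.

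The central observation that makes all five cases routine is that the restriction operator $(y)$ carries locations unchanged: the rules $\textbf{RES}_1$ and $\textbf{RES}_2$ copy the location subscript $u$ from premise to conclusion and impose only the name side-condition $y \notin n(\alpha)$ (resp. $y \notin n(\alpha_1) \cup \cdots \cup n(\alpha_n)$). Consequently, in every matching step the two processes fire at the very same location $u = v$, so the location association stays diagonal and the consistency requirement of a cla is met trivially; the bookkeeping $\varphi \cup \{(u,v)\}$ never obstructs membership in the relation. This decouples the verification from the location machinery and reduces each case to the purely name-level argument already carried out for $\pi_{tc}$, which is why I would, following the paper's convention, appeal to that proof once the relation is set up.

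First I would dispatch the structural laws. For (2) I use that $\textbf{RES}_1/\textbf{RES}_2$ commute, so $(y)(z)P$ and $(z)(y)P$ admit exactly the same transitions at the same locations; for (3) I use that $(y)(P+Q)$ fires via $\textbf{SUM}$ followed by $\textbf{RES}$ iff $(y)P+(y)Q$ fires via $\textbf{RES}$ followed by $\textbf{SUM}$, matching labels, locations, and residuals. For (1), since $y \notin fn(P)$ the side-condition $y \notin n(\alpha)$ holds for every transition of $P$, so $(y)P$ and $P$ have identical transition behaviour and the relation closes immediately via its $\textbf{Id}$ part. For (4) with $y \notin n(\alpha)$ I match the single prefix step on both sides and then fall into $\textbf{Id}$. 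The genuinely different case is the degenerate law (5): when $y$ is the subject of $\alpha$, the side-condition of $\textbf{RES}_1/\textbf{RES}_2$ fails, so $(y)\alpha.P$ has no outgoing transition whatsoever and is vacuously related to $\textbf{nil}$.

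The main obstacle, and the only place needing real care, is the handling of the bound-name clauses in laws (4) and (5): I must ensure that the freshness conventions for the object $y$ in an input $x(y)$ or bound output $\overline{x}(y)$ do not clash with the restricted name, and that the ``for all $w$'' quantification in the input clauses survives the rewriting $\alpha.(y)P \leftrightarrow (y)\alpha.P$. Checking that the $\alpha$-convertibility bookkeeping is compatible with the location-indexed residual relation $R_{\varphi \cup \{(u,v)\}}$ — rather than with the locations themselves — is the subtle point, but since locations are untouched by restriction this separates cleanly from the name analysis and the argument goes through as in $\pi_{tc}$.
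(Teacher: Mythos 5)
Your proposal is correct and takes essentially the same route as the paper: for each law it exhibits the witnessing relation $R=\{(\mathrm{LHS},\mathrm{RHS})\}\cup\textbf{Id}$ and reduces the verification to the corresponding restriction-law argument for $\pi_{tc}$ without localities. In fact you supply more justification than the paper (which simply omits the check), in particular the key observation that $\textbf{RES}_1/\textbf{RES}_2$ copy the location subscript unchanged, so matching transitions fire at the same location and the cla bookkeeping is trivially satisfied.
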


\begin{proof}
\begin{enumerate}
  \item It is sufficient to prove the relation $R=\{((y)P, P)|\textrm{ if }y\notin fn(P)\}\cup \textbf{Id}$ is a strong static location pomset bisimulation for some distributions. It can be proved similarly to the proof of restriction laws for strong pomset bisimulation in $\pi_{tc}$, we omit it;
  \item It is sufficient to prove the relation $R=\{((y)(z)P, (z)(y)P)\}\cup \textbf{Id}$ is a strong static location pomset bisimulation for some distributions. It can be proved similarly to the proof of restriction laws for strong pomset bisimulation in $\pi_{tc}$, we omit it;
  \item It is sufficient to prove the relation $R=\{((y)(P+Q), (y)P+(y)Q)\}\cup \textbf{Id}$ is a strong static location pomset bisimulation for some distributions. It can be proved similarly to the proof of restriction laws for strong pomset bisimulation in $\pi_{tc}$, we omit it;
  \item It is sufficient to prove the relation $R=\{((y)\alpha.P, \alpha.(y)P)|\textrm{ if }y\notin n(\alpha)\}\cup \textbf{Id}$ is a strong static location pomset bisimulation for some distributions. It can be proved similarly to the proof of restriction laws for strong pomset bisimulation in $\pi_{tc}$, we omit it;
  \item It is sufficient to prove the relation $R=\{((y)\alpha.P, \textbf{nil})|\textrm{ if }y\textrm{ is the subject of }\alpha\}\cup \textbf{Id}$ is a strong static location pomset bisimulation for some distributions. It can be proved similarly to the proof of restriction laws for strong pomset bisimulation in $\pi_{tc}$, we omit it.
\end{enumerate}
\end{proof}

\begin{theorem}[Restriction Laws for strong static location step bisimulation]
The restriction laws for strong static location step bisimulation are as follows.

\begin{enumerate}
  \item $(y)P\sim_s^{sl} P$, if $y\notin fn(P)$;
  \item $(y)(z)P\sim_s^{sl} (z)(y)P$;
  \item $(y)(P+Q)\sim_s^{sl} (y)P+(y)Q$;
  \item $(y)\alpha.P\sim_s^{sl} \alpha.(y)P$ if $y\notin n(\alpha)$;
  \item $(y)\alpha.P\sim_s^{sl} \textbf{nil}$ if $y$ is the subject of $\alpha$.
\end{enumerate}
\end{theorem}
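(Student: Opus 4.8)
The plan is to treat the five laws uniformly, following the method already used for the pomset case in the preceding theorem: for each equation I exhibit a concrete witnessing relation $R_\varphi = \{(\textrm{LHS},\textrm{RHS})\} \cup \textbf{Id}$ and verify that, for a suitable family of consistent location associations $\varphi$, it is a strong static location step bisimulation, i.e.\ it satisfies the step version of the strong static location pomset/step bisimilarity definition. Since the whole development reduces each law to its counterpart in $\pi_{tc}$, the substance is to check that adding locations and passing from single events to concurrent steps does not disturb the matching. The crucial observation is that both $\textbf{RES}_1$ and $\textbf{RES}_2$ leave the location tag $u$ of the premise transition untouched; consequently a move $(y)P \xrightarrow[u]{X} (y)P'$ is mirrored by a move of the right-hand side carrying the same location $u$, so the pair $(u,v)$ adjoined to $\varphi$ always has $u=v$ and the consistency requirement $u \diamond u' \Leftrightarrow v \diamond v'$ holds automatically.

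First I would dispatch laws (1)--(3), which are purely structural. For (1), with $y \notin fn(P)$, every step transition of $(y)P$ arises from a step transition of $P$ via $\textbf{RES}_2$ (its side condition being vacuously satisfied, as $y$ occurs in no label), and conversely; the residuals differ only by a leading $(y)$ over a process in which $y$ stays non-free, so they are again related by $R_\varphi$ or already equal. Laws (2) and (3) go the same way, using that restriction commutes with itself and distributes over $+$ at the level of $\textbf{RES}_2$, $\textbf{SUM}_1$ and $\textbf{SUM}_2$. Here I must additionally confirm the four name-sensitive clauses of the step definition (fresh action, single input $x(y)$, two concurrent inputs $x_1(y),x_2(y)$, and bound output $\overline{x}(y)$), but in each case the bound object can be chosen outside $n(\mathcal{E}_1,\mathcal{E}_2)$, so these clauses transfer verbatim.

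The harder cases are (4) and (5), where the interaction of restriction with the prefix must be analysed. For (4), when $y \notin n(\alpha)$, the prefix transition $\alpha.(y)P \xrightarrow{\alpha} (y)P$ must be matched by a step of $(y)\alpha.P$ obtained through $\textbf{RES}_1$ or $\textbf{RES}_2$, checking that $y \notin n(\alpha)$ is precisely the side condition needed to fire the restriction rule and that the resulting pair $((y)P,(y)P)$ lies in $\textbf{Id}$; special care is needed when $\alpha$ is an input or a bound output, since then the object is bound and the freshness conventions of $\textbf{INPUT-ACT}$ and $\textbf{OPEN}$ interact with the restricted name. For (5), where $y$ is the subject of $\alpha$, the point is that $(y)\alpha.P$ has no transition: any move would have to come from the prefix rule followed by restriction, but the side condition ($y \notin n(\alpha)$, resp.\ $y \neq x$ in $\textbf{OPEN}$) fails, so the restriction blocks the action and $(y)\alpha.P$ is deadlocked, matching $\textbf{nil}$. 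I expect this name-level bookkeeping in (4) and (5) --- distinguishing the free-output, input, and bound-output forms of $\alpha$ and verifying the restriction side conditions under the step transition rules --- to be the main obstacle, whereas the location and concurrency aspects are routine because $\textbf{RES}$ is location-transparent and step-transparent.
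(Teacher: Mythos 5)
Your proposal follows essentially the same route as the paper: for each of the five laws you exhibit the candidate relation $R=\{(\textrm{LHS},\textrm{RHS})\}\cup\textbf{Id}$ and argue it is a strong static location step bisimulation, reducing the matching argument to the corresponding restriction laws for strong step bisimulation in $\pi_{tc}$. The paper's proof does exactly this (omitting the verification), so your extra bookkeeping on the $\textbf{RES}$ side conditions, the deadlock argument for law (5), and the observation that restriction is location-transparent merely fills in details the paper leaves implicit.
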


\begin{proof}
\begin{enumerate}
  \item It is sufficient to prove the relation $R=\{((y)P, P)|\textrm{ if }y\notin fn(P)\}\cup \textbf{Id}$ is a strong static location step bisimulation for some distributions. It can be proved similarly to the proof of restriction laws for strong step bisimulation in $\pi_{tc}$, we omit it;
  \item It is sufficient to prove the relation $R=\{((y)(z)P, (z)(y)P)\}\cup \textbf{Id}$ is a strong static location step bisimulation for some distributions. It can be proved similarly to the proof of restriction laws for strong step bisimulation in $\pi_{tc}$, we omit it;
  \item It is sufficient to prove the relation $R=\{((y)(P+Q), (y)P+(y)Q)\}\cup \textbf{Id}$ is a strong static location step bisimulation for some distributions. It can be proved similarly to the proof of restriction laws for strong step bisimulation in $\pi_{tc}$, we omit it;
  \item It is sufficient to prove the relation $R=\{((y)\alpha.P, \alpha.(y)P)|\textrm{ if }y\notin n(\alpha)\}\cup \textbf{Id}$ is a strong static location step bisimulation for some distributions. It can be proved similarly to the proof of restriction laws for strong step bisimulation in $\pi_{tc}$, we omit it;
  \item It is sufficient to prove the relation $R=\{((y)\alpha.P, \textbf{nil})|\textrm{ if }y\textrm{ is the subject of }\alpha\}\cup \textbf{Id}$ is a strong static location step bisimulation for some distributions. It can be proved similarly to the proof of restriction laws for strong step bisimulation in $\pi_{tc}$, we omit it.
\end{enumerate}
\end{proof}

\begin{theorem}[Restriction Laws for strong static location hp-bisimulation]
The restriction laws for strong static location hp-bisimulation are as follows.

\begin{enumerate}
  \item $(y)P\sim_{hp}^{sl} P$, if $y\notin fn(P)$;
  \item $(y)(z)P\sim_{hp}^{sl} (z)(y)P$;
  \item $(y)(P+Q)\sim_{hp}^{sl} (y)P+(y)Q$;
  \item $(y)\alpha.P\sim_{hp}^{sl} \alpha.(y)P$ if $y\notin n(\alpha)$;
  \item $(y)\alpha.P\sim_{hp}^{sl} \textbf{nil}$ if $y$ is the subject of $\alpha$.
\end{enumerate}
\end{theorem}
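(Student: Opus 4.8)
The plan is to treat each of the five laws by exhibiting an explicit candidate posetal relation $R_{\varphi}=\{(\text{LHS},\text{RHS})\}\cup\textbf{Id}$, augmented with the stated side condition where one is present, and to verify that it satisfies the three clauses of the strong static location hp-bisimulation definition given earlier in this section: the fresh-action clause, the bound-input clause for $x(y)$, and the bound-output clause for $\overline{x}(y)$, together with their symmetric counterparts. Since each law is a purely structural identity in which restriction either disappears, commutes with another operator, or distributes over $+$, the matching transition on each side is produced by essentially the same derivation up to one application of $\textbf{RES}_1$/$\textbf{RES}_2$ or $\textbf{OPEN}_1$/$\textbf{OPEN}_2$. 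Hence the events $e_1,e_2$ may be taken equal, and the isomorphism extension $f[e_1\mapsto e_2]$ is just the identity extension, so the posetal structure is preserved verbatim. This parallels the proofs of the restriction laws for strong pomset and step bisimulation already given, and may be cited as being proved similarly to the restriction laws for strong hp-bisimulation in $\pi_{tc}$.

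First I would record the observation that makes the localization transparent: among the transition rules of Table \ref{TRForPITC5} and Table \ref{TRForPITC52}, the rules $\textbf{RES}_1$, $\textbf{RES}_2$, $\textbf{OPEN}_1$, $\textbf{OPEN}_2$, $\textbf{SUM}_1$, $\textbf{SUM}_2$ all carry the location label $u$ unchanged from premise to conclusion, and $\textbf{Loc}$ is the only rule that alters a location. Consequently, for every one of the five laws the location $u$ observed on a left transition equals the location $v$ observed on the matching right transition, so the augmentation $\varphi\cup\{(u,v)\}=\varphi\cup\{(u,u)\}$ can never violate the consistent-location-association requirement. This reduces each verification to the corresponding check for ordinary strong hp-bisimulation.

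Then for each law I would carry out the transition analysis. For (1) $(y)P\sim_{hp}^{sl}P$ with $y\notin fn(P)$, every move of $(y)P$ arises from a move of $P$ via $\textbf{RES}$ with $y\notin n(\alpha)$, and no $\textbf{OPEN}$ can fire because $fn(\overline{x}y)\subseteq fn(P)$ forces $y\in fn(P)$, contradicting the hypothesis; conversely every move of $P$ lifts to $(y)P$. For (2) and (3) the derivations on the two sides differ only in the order of $\textbf{RES}$ against $\textbf{SUM}$, leaving the executed-event posets identical. For (4) $(y)\alpha.P\sim_{hp}^{sl}\alpha.(y)P$ with $y\notin n(\alpha)$, the single first action is $\alpha$ on both sides, after which both residuals are $(y)P$ and land in $\textbf{Id}$. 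For (5), when $y$ is the subject of $\alpha$ the side condition of $\textbf{RES}$ fails and no $\textbf{OPEN}$ applies, so $(y)\alpha.P$ has no outgoing transition and is matched by $\textbf{nil}$ with the empty configuration.

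The main obstacle I anticipate is the bound-output subcase of law (4): one must check that commuting $(y)$ past the prefix neither spuriously enables nor disables an $\textbf{OPEN}$ transition on either side, and that the freshness side condition $w\notin fn((y)P')$ in $\textbf{OPEN}_1$ remains compatible after the commutation. Because $y\notin n(\alpha)$ ensures $y$ is neither the subject nor the object extruded by $\alpha$, the bound-output derivations on the two sides are in bijection and carry equal extruded objects $w$, so the universally quantified ``for all $w$'' clause is met and the isomorphism extension stays the identity. Establishing this bijection explicitly, rather than the routine book-keeping of the remaining clauses, is where the genuine content of the argument lies.
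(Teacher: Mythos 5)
Your proposal is correct and follows essentially the same route as the paper: exhibit the candidate relation $R_{\varphi}=\{(\text{LHS},\text{RHS})\}\cup\textbf{Id}$ (with the stated side conditions) for each of the five laws and verify it is a strong static location hp-bisimulation, deferring to the corresponding restriction-law proofs for strong hp-bisimulation in $\pi_{tc}$. In fact you supply strictly more detail than the paper, which omits the transition analysis entirely; your observation that $\textbf{RES}$, $\textbf{OPEN}$ and $\textbf{SUM}$ leave the location label unchanged (so $u=v$ and the consistent-location-association condition is trivially preserved) is exactly the point the paper leaves implicit.
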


\begin{proof}
\begin{enumerate}
  \item It is sufficient to prove the relation $R=\{((y)P, P)|\textrm{ if }y\notin fn(P)\}\cup \textbf{Id}$ is a strong static location hp-bisimulation for some distributions. It can be proved similarly to the proof of restriction laws for strong hp-bisimulation in $\pi_{tc}$, we omit it;
  \item It is sufficient to prove the relation $R=\{((y)(z)P, (z)(y)P)\}\cup \textbf{Id}$ is a strong static location hp-bisimulation for some distributions. It can be proved similarly to the proof of restriction laws for strong hp-bisimulation in $\pi_{tc}$, we omit it;
  \item It is sufficient to prove the relation $R=\{((y)(P+Q), (y)P+(y)Q)\}\cup \textbf{Id}$ is a strong static location hp-bisimulation for some distributions. It can be proved similarly to the proof of restriction laws for strong hp-bisimulation in $\pi_{tc}$, we omit it;
  \item It is sufficient to prove the relation $R=\{((y)\alpha.P, \alpha.(y)P)|\textrm{ if }y\notin n(\alpha)\}\cup \textbf{Id}$ is a strong static location hp-bisimulation for some distributions. It can be proved similarly to the proof of restriction laws for strong hp-bisimulation in $\pi_{tc}$, we omit it;
  \item It is sufficient to prove the relation $R=\{((y)\alpha.P, \textbf{nil})|\textrm{ if }y\textrm{ is the subject of }\alpha\}\cup \textbf{Id}$ is a strong static location hp-bisimulation for some distributions. It can be proved similarly to the proof of restriction laws for strong hp-bisimulation in $\pi_{tc}$, we omit it.
\end{enumerate}
\end{proof}

\begin{theorem}[Restriction Laws for strong static location hhp-bisimulation]
The restriction laws for strong static location hhp-bisimulation are as follows.

\begin{enumerate}
  \item $(y)P\sim_{hhp}^{sl} P$, if $y\notin fn(P)$;
  \item $(y)(z)P\sim_{hhp}^{sl} (z)(y)P$;
  \item $(y)(P+Q)\sim_{hhp}^{sl} (y)P+(y)Q$;
  \item $(y)\alpha.P\sim_{hhp}^{sl} \alpha.(y)P$ if $y\notin n(\alpha)$;
  \item $(y)\alpha.P\sim_{hhp}^{sl} \textbf{nil}$ if $y$ is the subject of $\alpha$.
\end{enumerate}
\end{theorem}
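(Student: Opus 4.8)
The plan is to treat each of the five laws separately by exhibiting, for each, a posetal relation of the form $R_{\varphi}=\{(\textrm{LHS},\textrm{RHS})\}\cup\textbf{Id}$ and verifying that it is a strong static location hhp-bisimulation, that is, a downward closed strong static location hp-bisimulation in the sense of the definition given in Section \ref{pitcos}. Since $\sim_{hhp}^{sl}$ is obtained from $\sim_{hp}^{sl}$ by adding downward closure, I would first establish that each candidate relation is a strong static location hp-bisimulation, and then check separately that it is downward closed. This mirrors the pattern used for the summation, identity, and earlier restriction laws, and reuses the corresponding hhp-argument for $\pi_{tc}$ without localities.

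First, for the hp-part, I would fix $(C_1,f,C_2)\in R_{\varphi}$ and analyze each admissible transition of the left-hand configuration according to the three clauses of the hp-definition, namely a fresh action $\alpha$, a bound input $x(y)$, and a bound output $\overline{x}(y)$. The central observation, common to all five laws, is that the restriction operator and its rearrangements (rules $\textbf{RES}_1$, $\textbf{RES}_2$, $\textbf{OPEN}_1$, $\textbf{OPEN}_2$ in Table \ref{TRForPITC52}) neither create nor consume location prefixes: the $\textbf{Loc}$ rule is the only source of locations, so a transition of $(y)P$ carries exactly the same location $u$ as the corresponding transition of $P$. Hence I can always answer a move at location $u$ by the matching move at the same location $v=u$, so that every pair $(u,v)$ added to $\varphi$ satisfies $u=v$; such a $\varphi$ is trivially a consistent location association, and the updated relation $R_{\varphi\cup\{(u,v)\}}$ is again of the required form.

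The law-specific content then reduces to the side conditions on names. For (1), $y\notin fn(P)$ guarantees that no transition of $P$ is blocked or opened by the restriction, so $(y)P$ and $P$ have identical transition capabilities and the identity map $f$ transports; for (2) one uses the symmetry of the two restrictions together with the fact that $y,z$ do not interfere; for (3) the distributivity of $(y)\cdot$ over $+$ follows from $\textbf{SUM}_1$, $\textbf{SUM}_2$ combined with $\textbf{RES}_1$; for (4), with $y\notin n(\alpha)$, the prefix fires before the restriction is felt, and one matches $\alpha$ with $\alpha$, extending $f$ by $[e_1\mapsto e_2]$ with $e_1=e_2$; for (5), when $y$ is the subject of $\alpha$, the premise of $\textbf{RES}_1$ fails, so $(y)\alpha.P$ has no initial transition, it is related to $\textbf{nil}$, and the bisimulation conditions hold vacuously. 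In each case I would also verify the symmetric clauses and the ``vice-versa'' direction, which are mirror images.

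Finally, for the hhp-upgrade I would verify downward closure: whenever $(C_1',f',C_2')\in R_{\varphi}$ and $(C_1,f,C_2)\subseteq(C_1',f',C_2')$ pointwise, the pair $(C_1,f,C_2)$ again lies in $R_{\varphi}$. Because each relation above is, up to $\textbf{Id}$, generated by a single syntactic rewriting acting uniformly on configurations, every sub-configuration on the left corresponds under $f$ to a sub-configuration of the same shape on the right, so closure should follow directly. The step I expect to be the main obstacle is precisely this downward-closure verification in the presence of $\pi$-calculus name binding: I must ensure that restricting to a sub-configuration does not break the bijective correspondence between the bound-name instantiations $\{w/y\}$ on the two sides, since clauses (2)--(4) of the hp-definition quantify over all $w$, and that $\varphi$ remains a consistent location association after the restriction. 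As elsewhere in this development, once the location bookkeeping is recognized to be the trivial diagonal $u=v$, this reduces to the corresponding hhp-argument for $\pi_{tc}$.
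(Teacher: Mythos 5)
Your proposal is correct and takes essentially the same route as the paper: for each of the five laws you exhibit the relation $R=\{(\textrm{LHS},\textrm{RHS})\}\cup\textbf{Id}$ and verify that it is a strong static location hhp-bisimulation, reducing the detailed checking to the corresponding restriction-law argument for $\pi_{tc}$ without localities. The paper omits precisely those details, so your added observations (the trivially consistent diagonal location association $u=v$ coming from the fact that only the $\textbf{Loc}$ rule touches locations, and the explicit downward-closure check) merely flesh out what the paper leaves implicit.
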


\begin{proof}
\begin{enumerate}
  \item It is sufficient to prove the relation $R=\{((y)P, P)|\textrm{ if }y\notin fn(P)\}\cup \textbf{Id}$ is a strongly static location hhp-bisimulation for some distributions. It can be proved similarly to the proof of restriction laws for strong hhp-bisimulation in $\pi_{tc}$, we omit it;
  \item It is sufficient to prove the relation $R=\{((y)(z)P, (z)(y)P)\}\cup \textbf{Id}$ is a strongly static location hhp-bisimulation for some distributions. It can be proved similarly to the proof of restriction laws for strong hhp-bisimulation in $\pi_{tc}$, we omit it;
  \item It is sufficient to prove the relation $R=\{((y)(P+Q), (y)P+(y)Q)\}\cup \textbf{Id}$ is a strongly static location hhp-bisimulation for some distributions. It can be proved similarly to the proof of restriction laws for strong hhp-bisimulation in $\pi_{tc}$, we omit it;
  \item It is sufficient to prove the relation $R=\{((y)\alpha.P, \alpha.(y)P)|\textrm{ if }y\notin n(\alpha)\}\cup \textbf{Id}$ is a strongly static location hhp-bisimulation for some distributions. It can be proved similarly to the proof of restriction laws for strong hhp-bisimulation in $\pi_{tc}$, we omit it;
  \item It is sufficient to prove the relation $R=\{((y)\alpha.P, \textbf{nil})|\textrm{ if }y\textrm{ is the subject of }\alpha\}\cup \textbf{Id}$ is a strongly static location hhp-bisimulation for some distributions. It can be proved similarly to the proof of restriction laws for strong hhp-bisimulation in $\pi_{tc}$, we omit it.
\end{enumerate}
\end{proof}

\begin{theorem}[Parallel laws for strong static location pomset bisimulation]
The parallel laws for strong static location pomset bisimulation are as follows.
\begin{enumerate}
  \item $P\parallel \textbf{nil}\sim_p^{sl} P$;
  \item $P_1\parallel P_2\sim_p^{sl} P_2\parallel P_1$;
  \item $(y)P_1\parallel P_2\sim_p^{sl} (y)(P_1\parallel P_2)$
  \item $(P_1\parallel P_2)\parallel P_3\sim_p^{sl} P_1\parallel (P_2\parallel P_3)$;
  \item $(y)(P_1\parallel P_2)\sim_p^{sl} (y)P_1\parallel (y)P_2$, if $y\notin fn(P_1)\cap fn(P_2)$.
\end{enumerate}
\end{theorem}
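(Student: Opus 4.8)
The plan is to prove each of the five laws separately by exhibiting a candidate relation of the form $R=\{(\text{LHS},\text{RHS})\}\cup\textbf{Id}$ and verifying that, indexed by an arbitrary consistent location association $\varphi$, it is a strong static location pomset bisimulation in the sense of the definition given in Section \ref{pitcos}. For a candidate pair $(C_1,C_2)\in R_\varphi$ I would check that every move $C_1\xrightarrow[u]{X_1}C_1'$ is matched by some $C_2\xrightarrow[v]{X_2}C_2'$ with $X_1\sim X_2$ and $(C_1',C_2')\in R_{\varphi\cup\{(u,v)\}}$ (and symmetrically), together with the four name-handling clauses for a fresh action, for an input $x(y)$, for two simultaneous inputs $x_1(y),x_2(y)$, and for a bound output $\overline{x}(y)$. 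Since parallel composition introduces no conflict, in each law the residual pair again lies in $R$, so the relation is closed under transitions and the clauses reduce to a case analysis on which transition rule ($\textbf{PAR}_{1\text{--}4}$, $\textbf{COM}$, $\textbf{CLOSE}$, $\textbf{RES}_{1,2}$, $\textbf{OPEN}_{1,2}$) was last applied.

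I would carry out the laws in order of increasing difficulty. Law (1) is simplest: because $\textbf{nil}$ has no transitions, the premise $Q\nrightarrow$ of $\textbf{PAR}_1$ always holds, so every move of $P\parallel\textbf{nil}$ comes from $P$ with the identical location $u$, and conversely, and $\varphi$ is preserved verbatim. Law (2) rests on two facts about the location structure: the independence relation is symmetric, so the composite location $u\diamond v$ produced by $\textbf{PAR}_3/\textbf{PAR}_4$ equals $v\diamond u$, and the step label $\{\alpha,\beta\}$ is an unordered set, hence pomset-isomorphic to $\{\beta,\alpha\}$; the $\textbf{COM}/\textbf{CLOSE}$ $\tau$-moves carry no location subscript and are matched directly, leaving $\varphi$ unchanged. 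Law (4) is associativity and reduces to checking $(u\diamond v)\diamond w=u\diamond(v\diamond w)$, so that the three groupings of a synchronising triple agree on both labels and locations.

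The last two laws, (3) and (5), are the $\pi$-calculus-specific ones, and I expect them to be the main obstacle because they involve the interplay of restriction with scope extrusion. For (3) I would invoke the side condition $y\notin fn(P_2)$ so that no capture occurs; the delicate point is matching a bound-output move, where $(y)P_1\parallel P_2$ extrudes $y$ via $\textbf{OPEN}_1$ followed by $\textbf{PAR}_1$, while $(y)(P_1\parallel P_2)$ must reproduce exactly the same bound-output label and location through $\textbf{OPEN}_1$ applied to the composite, and the residuals must remain related after the (possibly $\alpha$-converted) fresh name is substituted. For (5) the hypothesis $y\notin fn(P_1)\cap fn(P_2)$ forces $y$ to be free in at most one component, so the argument splits on which component contains $y$, and the restriction is pushed inside without disturbing the transitions of the other component, using the propositions on free names and substitution stated just after Definition \ref{semantics5}.

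The genuine work in all five cases is bookkeeping rather than ideas: verifying that the consistency condition $u\diamond u'\Leftrightarrow v\diamond v'$ is maintained as each new pair $(u,v)$ is adjoined to $\varphi$, and that bound names are kept fresh throughout the matching. None of this is conceptually hard, but clauses (3) and (5) demand the most care with extrusion and capture-avoidance, so I would record the freshness hypotheses explicitly and appeal to the earlier substitution propositions at each extrusion or communication step. The remaining laws then follow by the symmetry and associativity of $\diamond$ together with the absence of conflict under $\parallel$, exactly as in the corresponding parallel laws for strong pomset bisimulation in $\pi_{tc}$.
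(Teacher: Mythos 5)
Your proposal is correct and follows essentially the same route as the paper: for each law you exhibit the candidate relation $R=\{(\textrm{LHS},\textrm{RHS})\}\cup \textbf{Id}$ and verify it is a strong static location pomset bisimulation by case analysis on the transition rules, deferring to the corresponding parallel laws in $\pi_{tc}$. The paper's proof is exactly this (in fact it omits the verification entirely, citing the $\pi_{tc}$ proofs), so your additional bookkeeping on $\diamond$, the cla $\varphi$, and the extrusion cases only fills in details the paper leaves implicit.
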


\begin{proof}
\begin{enumerate}
  \item It is sufficient to prove the relation $R=\{(P\parallel \textbf{nil}, P)\}\cup \textbf{Id}$ is a strong static location pomset bisimulation for some distributions. It can be proved similarly to the proof of parallel laws for strong pomset bisimulation in $\pi_{tc}$, we omit it;
  \item It is sufficient to prove the relation $R=\{(P_1\parallel P_2, P_2\parallel P_1)\}\cup \textbf{Id}$ is a strong static location pomset bisimulation for some distributions. It can be proved similarly to the proof of parallel laws for strong pomset bisimulation in $\pi_{tc}$, we omit it;
  \item It is sufficient to prove the relation $R=\{((y)P_1\parallel P_2, (y)(P_1\parallel P_2))\}\cup \textbf{Id}$ is a strong static location pomset bisimulation for some distributions. It can be proved similarly to the proof of parallel laws for strong pomset bisimulation in $\pi_{tc}$, we omit it;
  \item It is sufficient to prove the relation $R=\{((P_1\parallel P_2)\parallel P_3, P_1\parallel (P_2\parallel P_3))\}\cup \textbf{Id}$ is a strong static location pomset bisimulation for some distributions. It can be proved similarly to the proof of parallel laws for strong pomset bisimulation in $\pi_{tc}$, we omit it;
  \item It is sufficient to prove the relation $R=\{(y)(P_1\parallel P_2), (y)P_1\parallel (y)P_2)|\textrm{ if }y\notin fn(P_1)\cap fn(P_2)\}\cup \textbf{Id}$ is a strong static location pomset bisimulation for some distributions. It can be proved similarly to the proof of parallel laws for strong pomset bisimulation in $\pi_{tc}$, we omit it.
\end{enumerate}
\end{proof}

\begin{theorem}[Parallel laws for strong static location step bisimulation]
The parallel laws for strong static location step bisimulation are as follows.
\begin{enumerate}
  \item $P\parallel \textbf{nil}\sim_s^{sl} P$;
  \item $P_1\parallel P_2\sim_s^{sl} P_2\parallel P_1$;
  \item $(y)P_1\parallel P_2\sim_s^{sl} (y)(P_1\parallel P_2)$
  \item $(P_1\parallel P_2)\parallel P_3\sim_s^{sl} P_1\parallel (P_2\parallel P_3)$;
  \item $(y)(P_1\parallel P_2)\sim_s^{sl} (y)P_1\parallel (y)P_2$, if $y\notin fn(P_1)\cap fn(P_2)$.
\end{enumerate}
\end{theorem}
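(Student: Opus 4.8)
The plan is to treat each of the five identities by the method used throughout this chapter: for an assertion $P \sim_s^{sl} Q$ I would exhibit the witnessing relation $R_\varphi = \{(P,Q)\} \cup \textbf{Id}$, equipped with a suitable consistent location association $\varphi$ (and, where the statement mentions components $P_1, P_2, \ldots$, with locations assigned to those components so that the static location labels on the two sides agree --- this is what the recurring phrase ``for some distributions'' refers to), and then verify that $R_\varphi$ meets the transfer conditions of the strong static location step bisimulation in both directions. Since we are in the step rather than the pomset case, at each matching I must additionally check that the events comprising a step label $X$ stay pairwise concurrent after the transition, and that the location subscripts agree up to the association $\varphi \cup \{(u,v)\}$ recorded for the step just taken.

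First I would dispose of the three monoid-style laws (1), (2) and (4). For $P \parallel \textbf{nil} \sim_s^{sl} P$ the only applicable parallel rule is $\textbf{PAR}_1$, since $\textbf{nil}\nrightarrow$, so every step of $P \parallel \textbf{nil}$ is inherited verbatim from $P$ with the same location, and conversely; the $\textbf{Id}$ component of $R_\varphi$ then closes the argument. For commutativity I would match a step of $P_1 \parallel P_2$ coming from $\textbf{PAR}_3$ (or $\textbf{PAR}_4$, $\textbf{COM}$, $\textbf{CLOSE}$) against the symmetric step of $P_2 \parallel P_1$, using that the independence operation $\diamond$ on locations is symmetric, so the label $u \diamond v$ produced on one side equals the label $v \diamond u$ produced on the other. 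Associativity goes the same way, the one extra ingredient being that the three-way location composition built from nested applications of $\diamond$ is insensitive to the bracketing, so the labels on $(P_1 \parallel P_2) \parallel P_3$ and $P_1 \parallel (P_2 \parallel P_3)$ coincide.

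The laws (3) and (5) involving restriction require the $\textbf{RES}$ and $\textbf{OPEN}$ rules in addition to the parallel rules, and here the bound-name bookkeeping of the four transition clauses of the definition (fresh action, single bound input, two bound inputs, bound output) must each be matched separately, exactly as in the corresponding $\pi_{tc}$ proof, while checking that the side conditions ($y \notin fn(P)$ for (3) and $y \notin fn(P_1) \cap fn(P_2)$ for (5)) guarantee that moving the restriction across $\parallel$ never blocks or creates a transition. I expect the genuine difficulty to lie precisely in this last family of cases: reconciling scope extrusion (the $\textbf{OPEN}$ rule turning $\overline{x}y$ into a bound output $\overline{x}(w)$) with the static location labelling, and ensuring that $\varphi$ remains a cla after communication steps, where $\textbf{COM}$ and $\textbf{CLOSE}$ emit an unlocated $\tau$ and so contribute no new pair $(u,v)$ to $\varphi$. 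Once these cases are seen to go through as in $\pi_{tc}$, the localised versions follow with only the added location-tracking, and the proof may legitimately be abbreviated as elsewhere in the chapter.
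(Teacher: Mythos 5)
Your proposal matches the paper's proof: for each law the paper exhibits exactly the same witnessing relation $R=\{(\textrm{LHS},\textrm{RHS})\}\cup \textbf{Id}$ ``for some distributions'' and then defers the transfer-condition check to the corresponding parallel-law proof for strong step bisimulation in $\pi_{tc}$, which is precisely your strategy. Your additional remarks on the symmetry of $\diamond$, bracketing-insensitivity of nested location composition, and the $\textbf{RES}$/$\textbf{OPEN}$ bookkeeping only flesh out details the paper leaves implicit, so the approaches are essentially identical.
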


\begin{proof}
\begin{enumerate}
  \item It is sufficient to prove the relation $R=\{(P\parallel \textbf{nil}, P)\}\cup \textbf{Id}$ is a strong static location step bisimulation for some distributions. It can be proved similarly to the proof of parallel laws for strong step bisimulation in $\pi_{tc}$, we omit it;
  \item It is sufficient to prove the relation $R=\{(P_1\parallel P_2, P_2\parallel P_1)\}\cup \textbf{Id}$ is a strong static location step bisimulation for some distributions. It can be proved similarly to the proof of parallel laws for strong step bisimulation in $\pi_{tc}$, we omit it;
  \item It is sufficient to prove the relation $R=\{((y)P_1\parallel P_2, (y)(P_1\parallel P_2))\}\cup \textbf{Id}$ is a strong static location step bisimulation for some distributions. It can be proved similarly to the proof of parallel laws for strong step bisimulation in $\pi_{tc}$, we omit it;
  \item It is sufficient to prove the relation $R=\{((P_1\parallel P_2)\parallel P_3, P_1\parallel (P_2\parallel P_3))\}\cup \textbf{Id}$ is a strong static location step bisimulation for some distributions. It can be proved similarly to the proof of parallel laws for strong step bisimulation in $\pi_{tc}$, we omit it;
  \item It is sufficient to prove the relation $R=\{(y)(P_1\parallel P_2), (y)P_1\parallel (y)P_2)|\textrm{ if }y\notin fn(P_1)\cap fn(P_2)\}\cup \textbf{Id}$ is a strong static location step bisimulation for some distributions. It can be proved similarly to the proof of parallel laws for strong step bisimulation in $\pi_{tc}$, we omit it.
\end{enumerate}
\end{proof}

\begin{theorem}[Parallel laws for strong static location hp-bisimulation]
The parallel laws for strong static location hp-bisimulation are as follows.
\begin{enumerate}
  \item $P\parallel \textbf{nil}\sim_{hp}^{sl} P$;
  \item $P_1\parallel P_2\sim_{hp}^{sl} P_2\parallel P_1$;
  \item $(y)P_1\parallel P_2\sim_{hp}^{sl} (y)(P_1\parallel P_2)$
  \item $(P_1\parallel P_2)\parallel P_3\sim_{hp}^{sl} P_1\parallel (P_2\parallel P_3)$;
  \item $(y)(P_1\parallel P_2)\sim_{hp}^{sl} (y)P_1\parallel (y)P_2$, if $y\notin fn(P_1)\cap fn(P_2)$.
\end{enumerate}
\end{theorem}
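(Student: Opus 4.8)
The plan is to follow the uniform template already used in this section for the summation, restriction and step/pomset parallel laws: for each of the five identities I would exhibit an explicit witnessing posetal relation of the shape $R=\{(\mathrm{LHS},f,\mathrm{RHS})\}\cup\textbf{Id}$, index it by the consistent location association $\varphi$, and check that $R_{\varphi}$ satisfies the three clauses of a strong static location hp-bisimulation (the fresh-action clause, the input clause $x(y)$, and the bound-output clause $\overline{x}(y)$, together with their symmetric ``vice-versa'' counterparts). Since strong static location hp-bisimilarity is obtained from the $\pi_{tc}$ hp-bisimilarity merely by decorating every transition with a source/target location pair $(u,v)$ and threading it through $\varphi$, each verification splits into two independent tasks: reproduce the matching transition exactly as in the corresponding parallel law for strong hp-bisimulation in $\pi_{tc}$, and then check that the attached locations respect $\varphi$, i.e. that $(C_1',f[e_1\mapsto e_2],C_2')\in R_{\varphi\cup\{(u,v)\}}$.

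For the order-isomorphism $f$ the construction is dictated by the syntactic correspondence between the two sides. For laws (1), (2) and (4) the parenthesization and the $\textbf{nil}$-unit neither create nor destroy events, so $f$ may be taken as the evident bijection on events induced by identifying the parallel components; for laws (3) and (5) the same holds after accounting for the bound name $y$. The transition matching is then read off the operational rules of Table~\ref{TRForPITC5}: a move of $\mathrm{LHS}$ arises through one of $\textbf{PAR}_1$--$\textbf{PAR}_4$, $\textbf{COM}$, $\textbf{CLOSE}$ (and, for (3) and (5), $\textbf{RES}_{1,2}$, $\textbf{OPEN}_{1,2}$), and I would produce the matching move of $\mathrm{RHS}$ by the corresponding rule instance. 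Here laws (2) and (4) additionally require that the location combinator $\diamond$ appearing in $\textbf{PAR}_3$ and $\textbf{PAR}_4$ be commutative and associative, so that $u\diamond v$ and $v\diamond u$, respectively $(u\diamond v)\diamond w$ and $u\diamond(v\diamond w)$, denote the same location and $\varphi$ is fed consistent pairs; these properties of $\diamond$ are immediate from its definition on $Loc^*$.

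The main obstacle is not the individual rule-matching but the simultaneous maintenance of two pieces of bookkeeping across a communication or a scope extrusion. At the hp-level one must keep $f[e_1\mapsto e_2]$ an order-isomorphism, which forces a check that the causal order between an event fired in one parallel branch and an event fired in the other is preserved on both sides; for laws (3) and (5) a bound output fired via $\textbf{OPEN}$ followed by $\textbf{CLOSE}$ introduces a fresh name whose scope migrates, and one must verify that the freshness side conditions ($w\notin fn((z)P)$ and the like) together with the restriction hypotheses ($y\notin fn(P)$, respectively $y\notin fn(P_1)\cap fn(P_2)$) make the two configurations carry \emph{isomorphic} causal structure rather than merely the same pomset. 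Interleaved with this, the $\tau$ produced by $\textbf{COM}$/$\textbf{CLOSE}$ carries its own location datum, so I must confirm that the pair $(u,v)$ adjoined to $\varphi$ at a synchronization step keeps $\varphi$ a consistent location association, i.e. that for every earlier pair $(u',v')\in\varphi$ one has $u\diamond u'\Leftrightarrow v\diamond v'$. Once these two invariants are shown to be preserved by every transition, the five relations are strong static location hp-bisimulations and, since each contains $(\emptyset,\emptyset,\emptyset)$, the stated equivalences follow; as elsewhere in the section the routine rule-by-rule calculation may then be omitted by appeal to the established $\pi_{tc}$ proof.
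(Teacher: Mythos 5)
Your proposal is correct and takes essentially the same approach as the paper's own proof: for each of the five laws you exhibit the canonical witness relation $R=\{(\mathrm{LHS},f,\mathrm{RHS})\}\cup \textbf{Id}$ and verify it is a strong static location hp-bisimulation by reusing the transition-matching of the corresponding parallel laws for strong hp-bisimulation in $\pi_{tc}$, with the location data threaded through the consistent location association $\varphi$. The paper records only the relation and the appeal to the $\pi_{tc}$ proof; your extra remarks on the commutativity/associativity of $\diamond$, the freshness side conditions, and keeping $\varphi$ consistent are exactly the details it leaves implicit.
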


\begin{proof}
\begin{enumerate}
  \item It is sufficient to prove the relation $R=\{(P\parallel \textbf{nil}, P)\}\cup \textbf{Id}$ is a strong static location hp-bisimulation for some distributions. It can be proved similarly to the proof of parallel laws for strong hp-bisimulation in $\pi_{tc}$, we omit it;
  \item It is sufficient to prove the relation $R=\{(P_1\parallel P_2, P_2\parallel P_1)\}\cup \textbf{Id}$ is a strong static location hp-bisimulation for some distributions. It can be proved similarly to the proof of parallel laws for strong hp-bisimulation in $\pi_{tc}$, we omit it;
  \item It is sufficient to prove the relation $R=\{((y)P_1\parallel P_2, (y)(P_1\parallel P_2))\}\cup \textbf{Id}$ is a strong static location hp-bisimulation for some distributions. It can be proved similarly to the proof of parallel laws for strong hp-bisimulation in $\pi_{tc}$, we omit it;
  \item It is sufficient to prove the relation $R=\{((P_1\parallel P_2)\parallel P_3, P_1\parallel (P_2\parallel P_3))\}\cup \textbf{Id}$ is a strong static location hp-bisimulation for some distributions. It can be proved similarly to the proof of parallel laws for strong hp-bisimulation in $\pi_{tc}$, we omit it;
  \item It is sufficient to prove the relation $R=\{(y)(P_1\parallel P_2), (y)P_1\parallel (y)P_2)|\textrm{ if }y\notin fn(P_1)\cap fn(P_2)\}\cup \textbf{Id}$ is a strong static location hp-bisimulation for some distributions. It can be proved similarly to the proof of parallel laws for strong hp-bisimulation in $\pi_{tc}$, we omit it.
\end{enumerate}
\end{proof}

\begin{theorem}[Parallel laws for strong static location hhp-bisimulation]
The parallel laws for strong static location hhp-bisimulation are as follows.
\begin{enumerate}
  \item $P\parallel \textbf{nil}\sim_{hhp}^{sl} P$;
  \item $P_1\parallel P_2\sim_{hhp}^{sl} P_2\parallel P_1$;
  \item $(y)P_1\parallel P_2\sim_{hhp}^{sl} (y)(P_1\parallel P_2)$
  \item $(P_1\parallel P_2)\parallel P_3\sim_{hhp}^{sl} P_1\parallel (P_2\parallel P_3)$;
  \item $(y)(P_1\parallel P_2)\sim_{hhp}^{sl} (y)P_1\parallel (y)P_2$, if $y\notin fn(P_1)\cap fn(P_2)$.
\end{enumerate}
\end{theorem}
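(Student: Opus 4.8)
The plan is to treat all five laws uniformly. For each equation $\text{LHS}\sim_{hhp}^{sl}\text{RHS}$ I would exhibit the witness posetal relation $R_{\varphi}=\{(\text{LHS},\text{RHS})\}\cup\textbf{Id}$, carried along an initially empty consistent location association $\varphi$, and verify that it is a strong static location hp-bisimulation which is in addition downward closed; by the definition of strong static location hereditary history-preserving bisimulation this makes $R_{\varphi}$ an hhp-bisimulation and hence yields the law. Since $\sim_{hhp}^{sl}\subseteq\sim_{hp}^{sl}$ and the transition rules for $\pi_{tc}$ with static localities differ from those of $\pi_{tc}$ only by the location decorations produced by $\textbf{Loc}$, $\textbf{PAR}_3$ and $\textbf{PAR}_4$, each verification reduces to the corresponding parallel law already established for $\pi_{tc}$, together with a check that the location labels and the pairs recorded in $\varphi$ agree on both sides.

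For laws (1), (2) and (4) the underlying processes evolve in lockstep, so the real work is the bookkeeping on locations. For commutativity I would use that $\diamond$ is symmetric, so a synchronisation step by $\textbf{PAR}_3$ or $\textbf{PAR}_4$ from $P_1\parallel P_2$ carrying location $u\diamond v$ is matched by the mirror step from $P_2\parallel P_1$ carrying $v\diamond u=u\diamond v$, and the pair added to $\varphi$ keeps it consistent. For associativity I would rely on $\diamond$ being associative, so that $(u\diamond v)\diamond w$ and $u\diamond(v\diamond w)$ coincide. For laws (3) and (5) I would additionally invoke $\textbf{RES}_1$, $\textbf{RES}_2$ and the opening rules $\textbf{OPEN}_1$, $\textbf{OPEN}_2$, checking the side conditions on $y$ (for (5), $y\notin fn(P_1)\cap fn(P_2)$) so that a bound output on one side is answered by a bound output with the same subject and a matching fresh name on the other, and so that pushing or distributing $(y)$ never blocks or duplicates a transition. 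In every case the three clauses of the hp-bisimulation definition, namely fresh action $\alpha$, input $x(y)$, and bound output $\overline{x}(y)$, are discharged by producing the symmetric matching transition and extending the order-isomorphism by $f[e_1\mapsto e_2]$.

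The step I expect to be the genuine obstacle is downward closure, which is precisely where hhp-bisimulation is more demanding than hp-bisimulation. I would need to show that whenever a triple $(C_1,f,C_2)$ lies pointwise below a triple in $R_{\varphi}$, it is again in $R_{\varphi}$; for parallel composition this interacts with the concurrency of events drawn from distinct branches and with the independence recorded through $\diamond$ in $\varphi$. Concretely, the difficulty is to ensure that restricting attention to a down-closed subconfiguration on the left is matched by a down-closed subconfiguration on the right whose location association remains consistent, so that the witness relation is closed under the pointwise order and not merely under forward transitions. Because the two sides are syntactically related by a fixed rearrangement of $\parallel$ (and of $(y)$), I expect the order-isomorphisms to be canonical and the downward closure to follow from the fact that this rearrangement commutes with taking sub-configurations; nevertheless this is the part that does not reduce verbatim to the $\pi_{tc}$ argument, since the location labels must be tracked through the sub-configuration as well. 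Once downward closure is established for each witness relation, the five laws follow.
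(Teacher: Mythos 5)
Your proposal takes essentially the same route as the paper: for each of the five laws the paper exhibits exactly the witness relation $R=\{(\mathrm{LHS},\mathrm{RHS})\}\cup\textbf{Id}$ (under suitable distributions) and asserts it is a strong static location hhp-bisimulation, deferring the verification to the corresponding parallel laws for strong hhp-bisimulation in $\pi_{tc}$. Your extra discussion of the $\diamond$-bookkeeping and of downward closure simply makes explicit the details the paper omits, so the two arguments coincide in approach.
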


\begin{proof}
\begin{enumerate}
  \item It is sufficient to prove the relation $R=\{(P\parallel \textbf{nil}, P)\}\cup \textbf{Id}$ is a strong static location hhp-bisimulation for some distributions. It can be proved similarly to the proof of parallel laws for strong hhp-bisimulation in $\pi_{tc}$, we omit it;
  \item It is sufficient to prove the relation $R=\{(P_1\parallel P_2, P_2\parallel P_1)\}\cup \textbf{Id}$ is a strong static location hhp-bisimulation for some distributions. It can be proved similarly to the proof of parallel laws for strong hhp-bisimulation in $\pi_{tc}$, we omit it;
  \item It is sufficient to prove the relation $R=\{((y)P_1\parallel P_2, (y)(P_1\parallel P_2))\}\cup \textbf{Id}$ is a strong static location hhp-bisimulation for some distributions. It can be proved similarly to the proof of parallel laws for strong hhp-bisimulation in $\pi_{tc}$, we omit it;
  \item It is sufficient to prove the relation $R=\{((P_1\parallel P_2)\parallel P_3, P_1\parallel (P_2\parallel P_3))\}\cup \textbf{Id}$ is a strong static location hhp-bisimulation for some distributions. It can be proved similarly to the proof of parallel laws for strong hhp-bisimulation in $\pi_{tc}$, we omit it;
  \item It is sufficient to prove the relation $R=\{(y)(P_1\parallel P_2), (y)P_1\parallel (y)P_2)|\textrm{ if }y\notin fn(P_1)\cap fn(P_2)\}\cup \textbf{Id}$ is a strong static location hhp-bisimulation for some distributions. It can be proved similarly to the proof of parallel laws for strong hhp-bisimulation in $\pi_{tc}$, we omit it.
\end{enumerate}
\end{proof}

\begin{proposition}[Location laws for strong static location pomset bisimulation]
The location laws for strong static location pomset bisimulation are as follows.

\begin{enumerate}
  \item $\epsilon::P\sim_p^{sl} P$;
  \item $u::\textbf{nil}\sim_p^{sl} \textbf{nil}$;
  \item $u::(\alpha.P)\sim_p^{sl} u::\alpha.u::P$;
  \item $u::(P+Q)\sim_p^{sl} u::P+u::Q$;
  \item $u::(P\parallel Q)\sim_p^{sl}u::P\parallel u::Q$;
  \item $u::(P\setminus L)\sim_p^{sl}u::P\setminus L$;
  \item $u::(P[f])\sim_p^{sl}u::P[f]$;
  \item $u::(v::P)\sim_p^{sl}uv::P$.
\end{enumerate}
\end{proposition}

\begin{proof}
\begin{enumerate}
  \item $\epsilon::P\sim_p^{sl} P$. It is sufficient to prove the relation $R=\{(\epsilon::P, P)\}\cup \textbf{Id}$ is a strong static location pomset bisimulation, we omit it;
  \item $u::\textbf{nil}\sim_p^{sl} \textbf{nil}$. It is sufficient to prove the relation $R=\{(u::\textbf{nil}, \textbf{nil})\}\cup \textbf{Id}$ is a strong static location pomset bisimulation, we omit it;
  \item $u::(\alpha.P)\sim_p^{sl} u::\alpha.u::P$. It is sufficient to prove the relation $R=\{(u::(\alpha.P), u::\alpha.u::P)\}\cup \textbf{Id}$ is a strong static location pomset bisimulation, we omit it;
  \item $u::(P+Q)\sim_p^{sl} u::P+u::Q$. It is sufficient to prove the relation $R=\{(u::(P+Q), u::P+u::Q)\}\cup \textbf{Id}$ is a strong static location pomset bisimulation, we omit it;
  \item $u::(P\parallel Q)\sim_p^{sl}u::P\parallel u::Q$. It is sufficient to prove the relation $R=\{(u::(P\parallel Q), u::P\parallel u::Q)\}\cup \textbf{Id}$ is a strong static location pomset bisimulation, we omit it;
  \item $u::(P\setminus L)\sim_p^{sl}u::P\setminus L$. It is sufficient to prove the relation $R=\{(u::(P\setminus L), u::P\setminus L)\}\cup \textbf{Id}$ is a strong static location pomset bisimulation, we omit it;
  \item $u::(P[f])\sim_p^{sl}u::P[f]$. It is sufficient to prove the relation $R=\{(u::(P[f]), u::P[f])\}\cup \textbf{Id}$ is a strong static location pomset bisimulation, we omit it;
  \item $u::(v::P)\sim_p^{sl}uv::P$. It is sufficient to prove the relation $R=\{(u::(v::P), uv::P)\}\cup \textbf{Id}$ is a strong static location pomset bisimulation, we omit it.
\end{enumerate}
\end{proof}

\begin{proposition}[Location laws for strong static location step bisimulation]
The location laws for strong static location step bisimulation are as follows.

\begin{enumerate}
  \item $\epsilon::P\sim_s^{sl} P$;
  \item $u::\textbf{nil}\sim_s^{sl} \textbf{nil}$;
  \item $u::(\alpha.P)\sim_s^{sl} u::\alpha.u::P$;
  \item $u::(P+Q)\sim_s^{sl} u::P+u::Q$;
  \item $u::(P\parallel Q)\sim_s^{sl}u::P\parallel u::Q$;
  \item $u::(P\setminus L)\sim_s^{sl}u::P\setminus L$;
  \item $u::(P[f])\sim_s^{sl}u::P[f]$;
  \item $u::(v::P)\sim_s^{sl}uv::P$.
\end{enumerate}
\end{proposition}

\begin{proof}
\begin{enumerate}
  \item $\epsilon::P\sim_s^{sl} P$. It is sufficient to prove the relation $R=\{(\epsilon::P, P)\}\cup \textbf{Id}$ is a strong static location step bisimulation, we omit it;
  \item $u::\textbf{nil}\sim_s^{sl} \textbf{nil}$. It is sufficient to prove the relation $R=\{(u::\textbf{nil}, \textbf{nil})\}\cup \textbf{Id}$ is a strong static location step bisimulation, we omit it;
  \item $u::(\alpha.P)\sim_s^{sl} u::\alpha.u::P$. It is sufficient to prove the relation $R=\{(u::(\alpha.P), u::\alpha.u::P)\}\cup \textbf{Id}$ is a strong static location step bisimulation, we omit it;
  \item $u::(P+Q)\sim_s^{sl} u::P+u::Q$. It is sufficient to prove the relation $R=\{(u::(P+Q), u::P+u::Q)\}\cup \textbf{Id}$ is a strong static location step bisimulation, we omit it;
  \item $u::(P\parallel Q)\sim_s^{sl}u::P\parallel u::Q$. It is sufficient to prove the relation $R=\{(u::(P\parallel Q), u::P\parallel u::Q)\}\cup \textbf{Id}$ is a strong static location step bisimulation, we omit it;
  \item $u::(P\setminus L)\sim_s^{sl}u::P\setminus L$. It is sufficient to prove the relation $R=\{(u::(P\setminus L), u::P\setminus L)\}\cup \textbf{Id}$ is a strong static location step bisimulation, we omit it;
  \item $u::(P[f])\sim_s^{sl}u::P[f]$. It is sufficient to prove the relation $R=\{(u::(P[f]), u::P[f])\}\cup \textbf{Id}$ is a strong static location step bisimulation, we omit it;
  \item $u::(v::P)\sim_s^{sl}uv::P$. It is sufficient to prove the relation $R=\{(u::(v::P), uv::P)\}\cup \textbf{Id}$ is a strong static location step bisimulation, we omit it.
\end{enumerate}
\end{proof}

\begin{proposition}[Location laws for strong static location hp-bisimulation]
The location laws for strong static location hp-bisimulation are as follows.

\begin{enumerate}
  \item $\epsilon::P\sim_{hp}^{sl} P$;
  \item $u::\textbf{nil}\sim_{hp}^{sl} \textbf{nil}$;
  \item $u::(\alpha.P)\sim_{hp}^{sl} u::\alpha.u::P$;
  \item $u::(P+Q)\sim_{hp}^{sl} u::P+u::Q$;
  \item $u::(P\parallel Q)\sim_{hp}^{sl}u::P\parallel u::Q$;
  \item $u::(P\setminus L)\sim_{hp}^{sl}u::P\setminus L$;
  \item $u::(P[f])\sim_{hp}^{sl}u::P[f]$;
  \item $u::(v::P)\sim_{hp}^{sl}uv::P$.
\end{enumerate}
\end{proposition}

\begin{proof}
\begin{enumerate}
  \item $\epsilon::P\sim_{hp}^{sl} P$. It is sufficient to prove the relation $R=\{(\epsilon::P, P)\}\cup \textbf{Id}$ is a strong static location hp-bisimulation, we omit it;
  \item $u::\textbf{nil}\sim_{hp}^{sl} \textbf{nil}$. It is sufficient to prove the relation $R=\{(u::\textbf{nil}, \textbf{nil})\}\cup \textbf{Id}$ is a strong static location hp-bisimulation, we omit it;
  \item $u::(\alpha.P)\sim_{hp}^{sl} u::\alpha.u::P$. It is sufficient to prove the relation $R=\{(u::(\alpha.P), u::\alpha.u::P)\}\cup \textbf{Id}$ is a strong static location hp-bisimulation, we omit it;
  \item $u::(P+Q)\sim_{hp}^{sl} u::P+u::Q$. It is sufficient to prove the relation $R=\{(u::(P+Q), u::P+u::Q)\}\cup \textbf{Id}$ is a strong static location hp-bisimulation, we omit it;
  \item $u::(P\parallel Q)\sim_{hp}^{sl}u::P\parallel u::Q$. It is sufficient to prove the relation $R=\{(u::(P\parallel Q), u::P\parallel u::Q)\}\cup \textbf{Id}$ is a strong static location hp-bisimulation, we omit it;
  \item $u::(P\setminus L)\sim_{hp}^{sl}u::P\setminus L$. It is sufficient to prove the relation $R=\{(u::(P\setminus L), u::P\setminus L)\}\cup \textbf{Id}$ is a strong static location hp-bisimulation, we omit it;
  \item $u::(P[f])\sim_{hp}^{sl}u::P[f]$. It is sufficient to prove the relation $R=\{(u::(P[f]), u::P[f])\}\cup \textbf{Id}$ is a strong static location hp-bisimulation, we omit it;
  \item $u::(v::P)\sim_{hp}^{sl}uv::P$. It is sufficient to prove the relation $R=\{(u::(v::P), uv::P)\}\cup \textbf{Id}$ is a strong static location hp-bisimulation, we omit it.
\end{enumerate}
\end{proof}

\begin{proposition}[Location laws for strong static location hhp-bisimulation]
The location laws for strong static location hhp-bisimulation are as follows.

\begin{enumerate}
  \item $\epsilon::P\sim_{hhp}^{sl} P$;
  \item $u::\textbf{nil}\sim_{hhp}^{sl} \textbf{nil}$;
  \item $u::(\alpha.P)\sim_{hhp}^{sl} u::\alpha.u::P$;
  \item $u::(P+Q)\sim_{hhp}^{sl} u::P+u::Q$;
  \item $u::(P\parallel Q)\sim_{hhp}^{sl}u::P\parallel u::Q$;
  \item $u::(P\setminus L)\sim_{hhp}^{sl}u::P\setminus L$;
  \item $u::(P[f])\sim_{hhp}^{sl}u::P[f]$;
  \item $u::(v::P)\sim_{hhp}^{sl}uv::P$.
\end{enumerate}
\end{proposition}

\begin{proof}
\begin{enumerate}
  \item $\epsilon::P\sim_{hhp}^{sl} P$. It is sufficient to prove the relation $R=\{(\epsilon::P, P)\}\cup \textbf{Id}$ is a strong static location hhp-bisimulation, we omit it;
  \item $u::\textbf{nil}\sim_{hhp}^{sl} \textbf{nil}$. It is sufficient to prove the relation $R=\{(u::\textbf{nil}, \textbf{nil})\}\cup \textbf{Id}$ is a strong static location hhp-bisimulation, we omit it;
  \item $u::(\alpha.P)\sim_{hhp}^{sl} u::\alpha.u::P$. It is sufficient to prove the relation $R=\{(u::(\alpha.P), u::\alpha.u::P)\}\cup \textbf{Id}$ is a strong static location hhp-bisimulation, we omit it;
  \item $u::(P+Q)\sim_{hhp}^{sl} u::P+u::Q$. It is sufficient to prove the relation $R=\{(u::(P+Q), u::P+u::Q)\}\cup \textbf{Id}$ is a strong static location hhp-bisimulation, we omit it;
  \item $u::(P\parallel Q)\sim_{hhp}^{sl}u::P\parallel u::Q$. It is sufficient to prove the relation $R=\{(u::(P\parallel Q), u::P\parallel u::Q)\}\cup \textbf{Id}$ is a strong static location hhp-bisimulation, we omit it;
  \item $u::(P\setminus L)\sim_{hhp}^{sl}u::P\setminus L$. It is sufficient to prove the relation $R=\{(u::(P\setminus L), u::P\setminus L)\}\cup \textbf{Id}$ is a strong static location hhp-bisimulation, we omit it;
  \item $u::(P[f])\sim_{hhp}^{sl}u::P[f]$. It is sufficient to prove the relation $R=\{(u::(P[f]), u::P[f])\}\cup \textbf{Id}$ is a strong static location hhp-bisimulation, we omit it;
  \item $u::(v::P)\sim_{hhp}^{sl}uv::P$. It is sufficient to prove the relation $R=\{(u::(v::P), uv::P)\}\cup \textbf{Id}$ is a strong static location hhp-bisimulation, we omit it.
\end{enumerate}
\end{proof}

\begin{theorem}[Expansion law for static location truly concurrent bisimilarities]
Let $P\equiv\sum_i \alpha_i.P_i$ and $Q\equiv\sum_j \beta_j.Q_j$, where $bn(\alpha_i)\cap fn(Q)=\emptyset$ for all $i$, and $bn(\beta_j)\cap fn(P)=\emptyset$ for all $j$.
Then

\begin{enumerate}
  \item $P\parallel Q\sim_p^{sl} \sum_i\sum_j (\alpha_i\parallel \beta_j).(P_i\parallel Q_j)+\sum_{\alpha_i \textrm{ comp }\beta_j}\tau.R_{ij}$;
  \item $P\parallel Q\sim_s^{sl} \sum_i\sum_j (\alpha_i\parallel \beta_j).(P_i\parallel Q_j)+\sum_{\alpha_i \textrm{ comp }\beta_j}\tau.R_{ij}$;
  \item $P\parallel Q\sim_{hp}^{sl} \sum_i\sum_j (\alpha_i\parallel \beta_j).(P_i\parallel Q_j)+\sum_{\alpha_i \textrm{ comp }\beta_j}\tau.R_{ij}$;
  \item $P\parallel Q\nsim_{hhp}^{sl} \sum_i\sum_j (\alpha_i\parallel \beta_j).(P_i\parallel Q_j)+\sum_{\alpha_i \textrm{ comp }\beta_j}\tau.R_{ij}$.
\end{enumerate}

Where $\alpha_i$ comp $\beta_j$ and $R_{ij}$ are defined as follows:

\begin{enumerate}
  \item $\alpha_i$ is $\overline{x}u$ and $\beta_j$ is $x(v)$, then $R_{ij}=P_i\parallel Q_j\{u/v\}$;
  \item $\alpha_i$ is $\overline{x}(u)$ and $\beta_j$ is $x(v)$, then $R_{ij}=(w)(P_i\{w/u\}\parallel lQ_j\{w/v\})$, if $w\notin fn((u)P_i)\cup fn((v)Q_j)$;
  \item $\alpha_i$ is $x(v)$ and $\beta_j$ is $\overline{x}u$, then $R_{ij}=P_i\{u/v\}\parallel Q_j$;
  \item $\alpha_i$ is $x(v)$ and $\beta_j$ is $\overline{x}(u)$, then $R_{ij}=(w)(P_i\{w/v\}\parallel Q_j\{w/u\})$, if $w\notin fn((v)P_i)\cup fn((u)Q_j)$.
\end{enumerate}
\end{theorem}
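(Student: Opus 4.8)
The plan is to handle the three equivalences (1)--(3) uniformly by exhibiting an explicit static location bisimulation relating $P\parallel Q$ to the right-hand side, and to handle the inequivalence (4) separately by a counterexample. For (1)--(3) I would take
$R_{\varphi}=\{(P\parallel Q,\ \sum_i\sum_j(\alpha_i\parallel\beta_j).(P_i\parallel Q_j)+\sum_{\alpha_i\textrm{ comp }\beta_j}\tau.R_{ij})\}\cup\textbf{Id}$
for a suitable consistent location association $\varphi$, and verify the transfer conditions of the relevant bisimilarity definition in both directions.

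First I would enumerate the transitions of $P\parallel Q$. Since $P\equiv\sum_i\alpha_i.P_i$ and $Q\equiv\sum_j\beta_j.Q_j$ are guarded sums that are both capable of moving, the single-component rules $\textbf{PAR}_1,\textbf{PAR}_2$ are never enabled (their premises $Q\nrightarrow$, $P\nrightarrow$ fail); the only derivable transitions come from $\textbf{PAR}_3,\textbf{PAR}_4$ (the concurrent step $\{\alpha_i,\beta_j\}$ at location $u_i\diamond v_j$, leading to $P_i\parallel Q_j$) and from $\textbf{COM},\textbf{CLOSE}$ (the $\tau$-move to $R_{ij}$, available exactly for compatible pairs). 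The definitions of $\alpha_i$ comp $\beta_j$ and of $R_{ij}$ in the statement are precisely the conclusions of $\textbf{COM}$/$\textbf{CLOSE}$, including the object substitution $\{u/v\}$ and the scope extrusion $(w)(\cdots)$, so the matching is immediate. Note that for a compatible input/output pair both a concurrent step (via $\textbf{PAR}_3$, whose side condition $\beta\neq\overline{\alpha}$ concerns $\mathcal{L}$-complementation and does not rule out input/output candidates) and a communication are offered, which is why the right-hand side carries the $(\alpha_i\parallel\beta_j)$ summand over all pairs and the $\tau.R_{ij}$ summand over compatible pairs. On the right, the prefix $(\alpha_i\parallel\beta_j)$ performs $\{\alpha_i,\beta_j\}$ at the empty location $\epsilon$ into $P_i\parallel Q_j$, and $\tau.R_{ij}$ fires via $\textbf{TAU-ACT}$; in each case the continuations coincide syntactically and so lie in $\textbf{Id}$. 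I would then record the single cross association (such as $(u_i\diamond v_j,\epsilon)$) into $\varphi$ and check that, together with the purely diagonal associations contributed by $\textbf{Id}$, the resulting $\varphi$ satisfies the cla consistency requirement.

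The $\pi$-calculus-specific clauses --- the fresh-action, input $x(y)$, paired-input $\{x_1(y),x_2(y)\}$, and bound-output $\overline{x}(y)$ cases --- would be discharged by observing that the hypotheses $bn(\alpha_i)\cap fn(Q)=\emptyset$ and $bn(\beta_j)\cap fn(P)=\emptyset$ are exactly the side conditions of $\textbf{PAR}_3,\textbf{PAR}_4,\textbf{OPEN}_{1,2}$ and $\textbf{CLOSE}$, so no name capture occurs and the ``for all $w$'' quantifications transfer verbatim between the two sides. For the hp-case (3) I would additionally carry the order isomorphism $f$, extending it by $f[e_1\mapsto e_2]$ at each matched event; since corresponding events bear identical labels and are introduced in the same concurrency/causality pattern, $f$ remains a well-defined isomorphism and the argument reduces to the pomset case plus bookkeeping. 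Throughout, I expect to lean on the already-established location laws and on the fact that these checks run as in the $\pi_{tc}$ case without localities.

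The hard part is (4), the inequivalence $P\parallel Q\nsim_{hhp}^{sl}\cdots$, where one must produce a counterexample rather than a relation, since hhp-bisimilarity demands downward closure. The obstruction is that on the left the summand chosen in $P$ and the summand chosen in $Q$ are resolved independently inside the two components, whereas the expansion merges them into the single combined prefix $(\alpha_i\parallel\beta_j)$; a backtracking move in the hhp game can undo one of the two events and re-expose the still-pending choice in one component, which the combined-prefix form on the right cannot reproduce after the same backtrack. I would instantiate $P$ and $Q$ with minimal witnesses (each a binary guarded sum over concurrent, independently chosen actions), transport the known hhp-failure of the expansion law from $\pi_{tc}$ without localities, and check that adjoining the empty/recorded locations does not repair it. Establishing the \emph{absence} of any downward-closed hp-bisimulation relating the two sides --- rather than merely exhibiting one failed candidate --- is the delicate step and where most of the care will be required.
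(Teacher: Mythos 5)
Your proposal follows essentially the same route as the paper: for (1)--(3) the paper uses exactly the candidate relation $R=\{(P\parallel Q,\ \sum_i\sum_j(\alpha_i\parallel\beta_j).(P_i\parallel Q_j)+\sum_{\alpha_i\textrm{ comp }\beta_j}\tau.R_{ij})\}\cup\textbf{Id}$ and defers the transfer-condition checks to the corresponding expansion-law proofs in $\pi_{tc}$ without localities, and for (4) it gives precisely the kind of minimal counterexample you describe (namely $s_1=(a+b)\parallel c$ versus $t_1=(a\parallel c)+(b\parallel c)$, and symmetrically $s_2$, $t_2$), where a matched step $\{a,c\}$ followed by backtracking to the single event $c$ produces a pointwise-smaller posetal triple that cannot lie in any hp-bisimulation, so downward closure fails. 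Your write-up is actually more explicit than the paper's about the transition enumeration and the consistent-location-association bookkeeping, but the decomposition and key ideas are the same.
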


\begin{proof}
\begin{enumerate}
  \item It is sufficient to prove the relation $R=\{(P\parallel Q, \sum_i\sum_j (\alpha_i\parallel \beta_j).(P_i\parallel Q_j)+\sum_{\alpha_i \textrm{ comp }\beta_j}\tau.R_{ij})|\textrm{ if }y\notin fn(P)\}\cup \textbf{Id}$ is a strong static location pomset bisimulation for some distributions. It can be proved similarly to the proof of Expansion law for strong pomset bisimulation in $\pi_{tc}$, we omit it;
  \item It is sufficient to prove the relation $R=\{(P\parallel Q, \sum_i\sum_j (\alpha_i\parallel \beta_j).(P_i\parallel Q_j)+\sum_{\alpha_i \textrm{ comp }\beta_j}\tau.R_{ij})|\textrm{ if }y\notin fn(P)\}\cup \textbf{Id}$ is a strong static location step bisimulation for some distributions. It can be proved similarly to the proof of Expansion law for strong step bisimulation in $\pi_{tc}$, we omit it;
  \item It is sufficient to prove the relation $R=\{(P\parallel Q, \sum_i\sum_j (\alpha_i\parallel \beta_j).(P_i\parallel Q_j)+\sum_{\alpha_i \textrm{ comp }\beta_j}\tau.R_{ij})|\textrm{ if }y\notin fn(P)\}\cup \textbf{Id}$ is a strong static location hp-bisimulation for some distributions. It can be proved similarly to the proof of Expansion law for strong hp-bisimulation in $\pi_{tc}$, we omit it;
  \item We just prove that for free actions $a,b,c$, let $s_1=(a+ b)\parallel c$, $t_1=(a\parallel c)+ (b\parallel c)$, and $s_2=a\parallel (b+ c)$, $t_2=(a\parallel b)+ (a\parallel c)$.
  We know that $s_1\sim_{hp}^{sl} t_1$ and $s_2\sim_{hp}^{sl} t_2$, we prove that $s_1\nsim_{hhp} t_1$ and $s_2\nsim_{hhp} t_2$. Let $(C(s_1),f_1,C(t_1))$ and $(C(s_2),f_2,C(t_2))$ are
  the corresponding posetal products.
    \begin{itemize}
        \item $s_1\nsim_{hhp} t_1$. $s_1\xrightarrow{\{a,c\}}\surd(s_1')$ ($C(s_1)\xrightarrow{\{a,c\}}C(s_1')$), then $t_1\xrightarrow{\{a,c\}}\surd(t_1')$
        ($C(t_1)\xrightarrow{\{a,c\}}C(t_1')$), we define $f_1'=f_1[a\mapsto a, c\mapsto c]$, obviously, $(C(s_1),f_1,C(t_1))\in \sim_{hp}^{sl}$ and
        $(C(s_1'),f_1',C(t_1'))\in \sim_{hp}^{sl}$. But, $(C(s_1),f_1,C(t_1))\in \sim_{hhp}^{sl}$ and $(C(s_1'),f_1',C(t_1'))\in \nsim_{hhp}$, just because they are not downward
        closed. Let $(C(s_1''),f_1'',C(t_1''))$, and $f_1''=f_1[c\mapsto c]$, $s_1\xrightarrow{c}s_1''$ ($C(s_1)\xrightarrow{c}C(s_1'')$), $t_1\xrightarrow{c}t_1''$
        ($C(t_1)\xrightarrow{c}C(t_1'')$), it is easy to see that $(C(s_1''),f_1'',C(t_1''))\subseteq (C(s_1'),f_1',C(t_1'))$ pointwise, while
        $(C(s_1''),f_1'',C(t_1''))\notin \sim_{hp}^{sl}$, because $s_1''$ and $C(s_1'')$ exist, but $t_1''$ and $C(t_1'')$ do not exist.
        \item $s_2\nsim_{hhp} t_2$. $s_2\xrightarrow{\{a,c\}}\surd(s_2')$ ($C(s_2)\xrightarrow{\{a,c\}}C(s_2')$), then $t_2\xrightarrow{\{a,c\}}\surd(t_2')$
        ($C(t_2)\xrightarrow{\{a,c\}}C(t_2')$), we define $f_2'=f_2[a\mapsto a, c\mapsto c]$, obviously, $(C(s_2),f_2,C(t_2))\in \sim_{hp}^{sl}$ and
        $(C(s_2'),f_2',C(t_2'))\in \sim_{hp}^{sl}$. But, $(C(s_2),f_2,C(t_2))\in \sim_{hhp}^{sl}$ and $(C(s_2'),f_2',C(t_2'))\in \nsim_{hhp}$, just because they are not downward
        closed. Let $(C(s_2''),f_2'',C(t_2''))$, and $f_2''=f_2[a\mapsto a]$, $s_2\xrightarrow{a}s_2''$ ($C(s_2)\xrightarrow{a}C(s_2'')$), $t_2\xrightarrow{a}t_2''$
        ($C(t_2)\xrightarrow{a}C(t_2'')$), it is easy to see that $(C(s_2''),f_2'',C(t_2''))\subseteq (C(s_2'),f_2',C(t_2'))$ pointwise, while
        $(C(s_2''),f_2'',C(t_2''))\notin \sim_{hp}^{sl}$, because $s_2''$ and $C(s_2'')$ exist, but $t_2''$ and $C(t_2'')$ do not exist.
    \end{itemize}
\end{enumerate}
\end{proof}

\begin{theorem}[Equivalence and congruence for strong static location pomset bisimulation]
\begin{enumerate}
  \item $\sim_p^{sl}$ is an equivalence relation;
  \item If $P\sim_p^{sl} Q$ then
  \begin{enumerate}
    \item $loc::P\sim_p^{sl}loc::Q$;
    \item $\alpha.P\sim_p^{sl} \alpha.Q$, $\alpha$ is a free action;
    \item $P+R\sim_p^{sl} Q+R$;
    \item $P\parallel R\sim_p^{sl} Q\parallel R$;
    \item $(w)P\sim_p^{sl} (w)Q$;
    \item $x(y).P\sim_p^{sl} x(y).Q$.
  \end{enumerate}
\end{enumerate}
\end{theorem}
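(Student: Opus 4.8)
The plan is to follow the standard exhibit-a-candidate-relation method used throughout this chapter: for each claim I will display a relation $R_{\varphi}$ between configurations, indexed by a consistent location association (cla) $\varphi$, and verify directly against the transition rules in Table~\ref{TRForPITC5} and Table~\ref{TRForPITC52} that it satisfies the four clauses (fresh action, bound input, two bound inputs, bound output) of strong static location pomset bisimilarity, with the index correctly augmented to $\varphi\cup\{(u,v)\}$ after each matched step. In every case the bulk of the work is a rule-by-rule transition analysis, exactly as in the corresponding $\pi_{tc}$ theorem without localities, the only genuinely new content being the bookkeeping of locations and the cla condition $u\diamond u'\Leftrightarrow v\diamond v'$.

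For part (1), equivalence, I would treat the three properties separately. Reflexivity follows because $\textbf{Id}$ together with the identity cla $Id_{Loc^*}$ trivially matches every transition with itself. Symmetry is immediate from the ``and vice-versa'' phrasing of the definition together with the observation that the cla condition is itself symmetric, so if $R_{\varphi}$ is a bisimulation then $R_{\varphi}^{-1}$ indexed by $\varphi^{-1}$ is one as well. The only property requiring real work is transitivity: given $P\sim_p^{sl}Q$ via $R_{\varphi}$ and $Q\sim_p^{sl}S$ via $R'_{\psi}$, I would form the relational composition $R_{\varphi};R'_{\psi}$ indexed by $\psi\circ\varphi$. The key auxiliary fact is that the composition of two cla's is again a cla: if $(u,w),(u',w')\in\psi\circ\varphi$ with witnesses $v,v'$, then $u\diamond u'\Leftrightarrow v\diamond v'$ by $\varphi$ and $v\diamond v'\Leftrightarrow w\diamond w'$ by $\psi$, whence $u\diamond u'\Leftrightarrow w\diamond w'$. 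Matching a transition of $P$ through $Q$ to $S$ then yields a pair in $R_{\varphi\cup\{(u,v)\}};R'_{\psi\cup\{(v,w)\}}$, which is exactly the composite relation indexed by $(\psi\circ\varphi)\cup\{(u,w)\}$, as required.

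For part (2), congruence, each subcase takes the given bisimulation $R_{\varphi}$ witnessing $P\sim_p^{sl}Q$ and builds a candidate of the shape $R=\{(\mathcal{O}[P],\mathcal{O}[Q])\}\cup\textbf{Id}$ for the operator $\mathcal{O}$ in question. The cases for $+$, $(w)$, and the prefixes are routine: a transition of the compound term is generated by a single applicable rule (\textbf{SUM}, \textbf{RES}/\textbf{OPEN}, and \textbf{OUTPUT-ACT}/\textbf{TAU-ACT} for the free-action prefix (b), \textbf{INPUT-ACT} for the input prefix (f)), after which the continuation lands back inside $R_{\varphi}$ — for the prefixes after the forced first move — and the freshness side conditions on bound names are discharged by $\alpha$-conversion using the substitution propositions stated just above the theorem. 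The location case $loc::P$ uses rule $\textbf{Loc}$: a move $P\xrightarrow[u]{\alpha}P'$ becomes $loc::P\xrightarrow[loc\ll u]{\alpha}loc::P'$, so I match $loc::Q\xrightarrow[loc\ll v]{\alpha}loc::Q'$ and observe that prefixing every location with the common $loc$ preserves both pomset isomorphism on labels and the independence relation $\diamond$, so that $\varphi$ augmented with $(loc\ll u,loc\ll v)$ remains a cla.

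The main obstacle will be the parallel case $P\parallel R\sim_p^{sl}Q\parallel R$. Here a transition of $P\parallel R$ can arise from any of $\textbf{PAR}_1$--$\textbf{PAR}_4$, \textbf{COM}, or \textbf{CLOSE}, so the matching argument splits into many subcases, and in the synchronous rules the location of the compound step is the independent product $u\diamond v$ of the two component locations. The delicate point is that when $P$ contributes location $u$ matched by $Q$'s location $u'$ while the shared $R$-component keeps a fixed location $v$, I must check that the augmented association $\varphi\cup\{(u\diamond v,u'\diamond v)\}$ is still consistent, which reduces to showing that the cla condition for $\varphi$ propagates through the $\diamond$-operator, and I must simultaneously manage the name-freshness bookkeeping ($bn(\alpha)\cap fn(R)=\emptyset$ and the bound-name disjointness in $\textbf{PAR}_3$/$\textbf{PAR}_4$) via $\alpha$-conversion. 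I expect to isolate a small lemma — that $\diamond$ is compatible with the location composition induced by the \textbf{PAR} rules, and that $R_{\varphi}$ behaves as the identity on the $R$-component — after which the remaining verification is mechanical, if lengthy.
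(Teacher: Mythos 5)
Your proposal follows essentially the same route as the paper's own proof: for part (1) the paper likewise reduces to reflexivity, symmetry and transitivity, and for part (2) it exhibits exactly the candidate relations $R=\{(\mathcal{O}[P],\mathcal{O}[Q])\}\cup\textbf{Id}$ for each operator $\mathcal{O}$ (Location, prefix, Summation, Composition, Restriction, Input) and declares the transition-rule verification routine, omitting it. Your write-up simply supplies the bookkeeping the paper leaves implicit (the cla-composition lemma for transitivity, the $loc\ll u$ and $u\diamond v$ consistency checks), so there is no divergence in method and no gap relative to the paper's argument.
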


\begin{proof}
\begin{enumerate}
  \item It is sufficient to prove that $\sim_p^{sl}$ is reflexivity, symmetry, and transitivity, we omit it.
  \item If $P\sim_p^{sl} Q$, then
  \begin{enumerate}
    \item it is sufficient to prove the relation $R=\{(loc::P, loc::.Q)\}\cup \textbf{Id}$ is a strong static location pomset bisimulation, we omit it;
    \item it is sufficient to prove the relation $R=\{(\alpha.P, \alpha.Q)|\alpha \textrm{ is a free action}\}\cup \textbf{Id}$ is a strong static location pomset bisimulation for some distributions. It can be proved similarly to the proof of congruence for strong  pomset bisimulation in $\pi_{tc}$, we omit it;
    \item it is sufficient to prove the relation $R=\{(P+R, Q+R)\}\cup \textbf{Id}$ is a strong static location pomset bisimulation for some distributions. It can be proved similarly to the proof of congruence for strong pomset bisimulation in $\pi_{tc}$, we omit it;
    \item it is sufficient to prove the relation $R=\{(P\parallel R, Q\parallel R)\}\cup \textbf{Id}$ is a strong static location pomset bisimulation for some distributions. It can be proved similarly to the proof of congruence for strong pomset bisimulation in $\pi_{tc}$, we omit it;
    \item it is sufficient to prove the relation $R=\{((w)P, (w).Q)\}\cup \textbf{Id}$ is a strong static location pomset bisimulation for some distributions. It can be proved similarly to the proof of congruence for strong pomset bisimulation in $\pi_{tc}$, we omit it;
    \item it is sufficient to prove the relation $R=\{(x(y).P, x(y).Q)\}\cup \textbf{Id}$ is a strong static location pomset bisimulation for some distributions. It can be proved similarly to the proof of congruence for strong pomset bisimulation in $\pi_{tc}$, we omit it.
  \end{enumerate}
\end{enumerate}
\end{proof}

\begin{theorem}[Equivalence and congruence for strong static location step bisimulation]
\begin{enumerate}
  \item $\sim_s^{sl}$ is an equivalence relation;
  \item If $P\sim_s^{sl} Q$ then
  \begin{enumerate}
    \item $loc::P\sim_s^{sl}loc::Q$;
    \item $\alpha.P\sim_s^{sl} \alpha.Q$, $\alpha$ is a free action;
    \item $P+R\sim_s^{sl} Q+R$;
    \item $P\parallel R\sim_s^{sl} Q\parallel R$;
    \item $(w)P\sim_s^{sl} (w)Q$;
    \item $x(y).P\sim_s^{sl} x(y).Q$.
  \end{enumerate}
\end{enumerate}
\end{theorem}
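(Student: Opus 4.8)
The plan is to follow exactly the pattern of the preceding theorem for strong static location pomset bisimulation, since $\sim_s^{sl}$ is obtained from $\sim_p^{sl}$ merely by restricting the transition labels $X$ to sets of pairwise concurrent events (steps). Consequently every witnessing relation I build will have the shape $R_\varphi=\{(\mathcal{C}[P],\mathcal{C}[Q])\}\cup\textbf{Id}$ for the appropriate one-hole context $\mathcal{C}[\cdot]$, indexed by a consistent location association $\varphi$, and the transfer conditions will be checked by case analysis on the transition rules in Tables~\ref{TRForPITC5} and~\ref{TRForPITC52}, reading off the matching move on the right from the witness for $P\sim_s^{sl}Q$. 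The only content beyond the non-located case is the bookkeeping of the index $\varphi$: each firing of a step adds a pair $(u,v)$, and I must check at each step that the enlarged $\varphi\cup\{(u,v)\}$ is still a cla, i.e. that $u\diamond u'\Leftrightarrow v\diamond v'$ is preserved.

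For Part~(1) I would establish the three equivalence properties by the usual relational constructions. Reflexivity follows because $\textbf{Id}$ indexed by $Id_{Loc^*}$ is a step bisimulation. Symmetry follows by observing that if $R_\varphi$ is a step bisimulation then $R_{\varphi^{-1}}^{-1}$ is one as well, where $\varphi^{-1}=\{(v,u):(u,v)\in\varphi\}$ is again a cla. Transitivity is the only case needing a short argument: given $P\sim_s^{sl}Q$ via $R_\varphi$ and $Q\sim_s^{sl}S$ via $R'_\psi$, I compose to $R_\varphi\circ R'_\psi$ indexed by the relational composite of the two location associations, and verify that this composite remains consistent, i.e. that independence is transported coherently through the intermediate locations.

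For Part~(2) I would treat the operators one at a time. The cases $\alpha.P\sim_s^{sl}\alpha.Q$ (b) and $P+R\sim_s^{sl}Q+R$ (c) are immediate: the only initial moves are the prefix rule, landing in the already-related pair $(P,Q)$, or a summand from $R$ matched by the identity part. For $loc::P\sim_s^{sl}loc::Q$ (a) the rule $\textbf{Loc}$ shows every move of $loc::P$ arises from some $P\xrightarrow[u]{X}P'$ relabelled to $loc\ll u$; matching via the witness and prefixing the matched location by the same $loc$ preserves the cla, because a common prefix does not alter the $\diamond$ relation. For $(w)P\sim_s^{sl}(w)Q$ (e) I would use $\textbf{RES}$ and $\textbf{OPEN}$, carrying the side conditions on names ($y\notin n(\alpha)$ and the bound-output freshness in $\textbf{OPEN}$) through unchanged.

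The two hard cases, which I expect to be the main obstacle, are parallel composition (d) and input prefix (f). For $P\parallel R\sim_s^{sl}Q\parallel R$ I must analyse all of $\textbf{PAR}_1$--$\textbf{PAR}_4$, $\textbf{COM}$ and $\textbf{CLOSE}$; the work lies in tracking how the two component locations combine under $u\diamond v$ and in discharging the freshness side conditions $bn(\alpha)\cap fn(\cdot)=\emptyset$ when one branch moves and the other is matched, so that the combined move exists and the resulting $\varphi$ stays a cla. For $x(y).P\sim_s^{sl}x(y).Q$ the difficulty is the binding input clause of the bisimulation, which quantifies over all instantiating names $w$: establishing the congruence requires that the witnessing relation for $P\sim_s^{sl}Q$ be closed under the substitutions $\{w/y\}$, so I would either strengthen the candidate relation to include all such substitution instances or invoke the corresponding substitution-closure lemma for $\pi_{tc}$. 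Once closure under substitution is in hand, the input case matches the late-style clause directly and the remaining verification is routine.
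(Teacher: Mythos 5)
Your proposal follows essentially the same route as the paper's own proof: the paper establishes part (1) by checking reflexivity, symmetry and transitivity, and part (2) by exhibiting exactly the candidate relations $R=\{(\mathcal{C}[P],\mathcal{C}[Q])\}\cup\textbf{Id}$ for each context and verifying the transfer conditions as in the corresponding congruence proof for strong step bisimulation in $\pi_{tc}$ (all details being omitted there). Your write-up is in fact more explicit than the paper's, since you additionally spell out the cla bookkeeping for the location indices and the substitution-closure issue in the input-prefix case, both of which the paper leaves implicit.
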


\begin{proof}
\begin{enumerate}
  \item It is sufficient to prove that $\sim_s^{sl}$ is reflexivity, symmetry, and transitivity, we omit it.
  \item If $P\sim_s^{sl} Q$, then
  \begin{enumerate}
    \item it is sufficient to prove the relation $R=\{(loc::P, loc::.Q)\}\cup \textbf{Id}$ is a strong static location step bisimulation, we omit it;
    \item it is sufficient to prove the relation $R=\{(\alpha.P, \alpha.Q)|\alpha \textrm{ is a free action}\}\cup \textbf{Id}$ is a strong static location step bisimulation for some distributions. It can be proved similarly to the proof of congruence for strong step bisimulation in $\pi_{tc}$, we omit it;
    \item it is sufficient to prove the relation $R=\{(P+R, Q+R)\}\cup \textbf{Id}$ is a strong static location step bisimulation for some distributions. It can be proved similarly to the proof of congruence for strong step bisimulation in $\pi_{tc}$, we omit it;
    \item it is sufficient to prove the relation $R=\{(P\parallel R, Q\parallel R)\}\cup \textbf{Id}$ is a strong static location step bisimulation for some distributions. It can be proved similarly to the proof of congruence for strong step bisimulation in $\pi_{tc}$, we omit it;
    \item it is sufficient to prove the relation $R=\{((w)P, (w).Q)\}\cup \textbf{Id}$ is a strong static location step bisimulation for some distributions. It can be proved similarly to the proof of congruence for strong step bisimulation in $\pi_{tc}$, we omit it;
    \item it is sufficient to prove the relation $R=\{(x(y).P, x(y).Q)\}\cup \textbf{Id}$ is a strong static location step bisimulation for some distributions. It can be proved similarly to the proof of congruence for strong step bisimulation in $\pi_{tc}$, we omit it.
  \end{enumerate}
\end{enumerate}
\end{proof}

\begin{theorem}[Equivalence and congruence for strong static location hp-bisimulation]
\begin{enumerate}
  \item $\sim_{hp}^{sl}$ is an equivalence relation;
  \item If $P\sim_{hp}^{sl} Q$ then
  \begin{enumerate}
    \item $loc::P\sim_{hp}^{sl}loc::Q$;
    \item $\alpha.P\sim_{hp}^{sl} \alpha.Q$, $\alpha$ is a free action;
    \item $P+R\sim_{hp}^{sl} Q+R$;
    \item $P\parallel R\sim_{hp}^{sl} Q\parallel R$;
    \item $(w)P\sim_{hp}^{sl} (w)Q$;
    \item $x(y).P\sim_{hp}^{sl} x(y).Q$.
  \end{enumerate}
\end{enumerate}
\end{theorem}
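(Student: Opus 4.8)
The plan is to follow the same two-part structure used for the pomset and step cases above, but to carry the order-isomorphism $f$ and the consistent location association $\varphi$ through every construction, since $\sim_{hp}^{sl}$ is a \emph{posetal} relation rather than a plain relation on configurations. Throughout, the witness relations will have the shape $R_{\varphi}=\{(\mathcal{C}[P],f,\mathcal{C}[Q])\}\cup\textbf{Id}$, and the verification amounts to checking the four transfer clauses of the strong static location hp-bisimulation definition (fresh action, bound input $x(y)$, two simultaneous inputs, and bound output $\overline{x}(y)$) against the transition rules of Table~\ref{TRForPITC5} and Table~\ref{TRForPITC52}.

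For part (1) I would establish the three equivalence properties by exhibiting explicit witness posetal relations. Reflexivity follows from checking that the identity posetal relation $\{(C,Id_C,C):C\in\mathcal{C}(\mathcal{E})\}$ together with the empty (hence trivially consistent) location association is a strong static location hp-bisimulation containing $(\emptyset,\emptyset,\emptyset)$. For symmetry, given a witness $R_{\varphi}$ for $P\sim_{hp}^{sl}Q$, I would show that $R_{\varphi}^{-1}=\{(C_2,f^{-1},C_1):(C_1,f,C_2)\in R_{\varphi}\}$ paired with $\varphi^{-1}$ is again such a bisimulation; the only nontrivial point is that the diamond condition $u\diamond u'\Leftrightarrow v\diamond v'$ defining a consistent location association is symmetric in its two arguments, so $\varphi^{-1}$ is a cla whenever $\varphi$ is. For transitivity I would compose witnesses, taking $g\circ f$ for the isomorphisms and composing the location associations, and verify the small lemma that the composite of two cla's is again a cla, threading the diamond-equivalence through the intermediate process.

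For part (2), for each operator I would exhibit the candidate relation and verify the transfer clauses. The cases for $loc::$, the free prefix $\alpha.P$, summation $P+R$, and restriction $(w)P$ are routine: each first move is forced by a single rule (\textbf{Loc}, \textbf{OUTPUT-ACT}/\textbf{TAU-ACT}, \textbf{SUM}, \textbf{RES}), the induced isomorphism extends by a single event as $f[e_1\mapsto e_2]$, and the location association grows by one pair $(u,v)$ that remains consistent. For parallel composition $P\parallel R$ I would argue by cases on which of $\textbf{PAR}_1$ through $\textbf{PAR}_4$, \textbf{COM}, or \textbf{CLOSE} fires, checking that the combined location $u\diamond v$ and the merged isomorphism on the disjoint event sets of the two components respect both consistency and the posetal order.

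The hardest step will be congruence under the input prefix $x(y).P$, because the hp-bisimulation clauses quantify over all instantiating names $w$ and require the transferred isomorphism to match after the substitution $\{w/y\}$. The plan is to reduce this to closure of $\sim_{hp}^{sl}$ under name substitution: from $P\sim_{hp}^{sl}Q$ I would first show $P\{w/y\}\sim_{hp}^{sl}Q\{w/y\}$ for every $w$, using the substitution propositions stated just after Definition~\ref{semantics5}, and then observe that after the single forced \textbf{INPUT-ACT} move the two residuals are exactly these substituted processes, so the witness relation extends with the one-point isomorphism $f[e_1\mapsto e_2]$ and one new consistent location pair. I expect the bookkeeping tying the ``for all $w$'' quantifier to the extension of $f$ to be the main obstacle, since this is precisely the point at which $\pi$-calculus congruence typically fails for a naive (late) formulation and must be handled through the early-style reading built into the definition.
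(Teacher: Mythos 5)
Your proposal is correct and follows essentially the same route as the paper: part (1) via reflexivity, symmetry and transitivity of $\sim_{hp}^{sl}$, and part (2) by exhibiting, for each operator, a witness relation of the form $R=\{(\cdot,\cdot)\}\cup\textbf{Id}$ and checking the transfer clauses against the rules of Table \ref{TRForPITC5}, with the routine cases deferred to the corresponding congruence proofs for strong hp-bisimulation in $\pi_{tc}$. The paper omits all of the details you spell out (the cla closure properties under inversion and composition, the case analysis on $\textbf{PAR}_{1\textrm{-}4}$/\textbf{COM}/\textbf{CLOSE}, and the reduction of the input-prefix case to substitution closure), so your elaboration is a strictly more explicit version of the same argument.
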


\begin{proof}
\begin{enumerate}
  \item It is sufficient to prove that $\sim_{hp}^{sl}$ is reflexivity, symmetry, and transitivity, we omit it.
  \item If $P\sim_{hp}^{sl} Q$, then
  \begin{enumerate}
    \item it is sufficient to prove the relation $R=\{(loc::P, loc::.Q)\}\cup \textbf{Id}$ is a strong static location hp-bisimulation, we omit it;
    \item it is sufficient to prove the relation $R=\{(\alpha.P, \alpha.Q)|\alpha \textrm{ is a free action}\}\cup \textbf{Id}$ is a strong static location hp-bisimulation for some distributions. It can be proved similarly to the proof of congruence for strong hp-bisimulation in $\pi_{tc}$, we omit it;
    \item it is sufficient to prove the relation $R=\{(P+R, Q+R)\}\cup \textbf{Id}$ is a strong static location hp-bisimulation for some distributions. It can be proved similarly to the proof of congruence for strong hp-bisimulation in $\pi_{tc}$, we omit it;
    \item it is sufficient to prove the relation $R=\{(P\parallel R, Q\parallel R)\}\cup \textbf{Id}$ is a strong static location hp-bisimulation for some distributions. It can be proved similarly to the proof of congruence for strong hp-bisimulation in $\pi_{tc}$, we omit it;
    \item it is sufficient to prove the relation $R=\{((w)P, (w).Q)\}\cup \textbf{Id}$ is a strong static location hp-bisimulation for some distributions. It can be proved similarly to the proof of congruence for strong hp-bisimulation in $\pi_{tc}$, we omit it;
    \item it is sufficient to prove the relation $R=\{(x(y).P, x(y).Q)\}\cup \textbf{Id}$ is a strong static location hp-bisimulation for some distributions. It can be proved similarly to the proof of congruence for strong hp-bisimulation in $\pi_{tc}$, we omit it.
  \end{enumerate}
\end{enumerate}
\end{proof}

\begin{theorem}[Equivalence and congruence for strong static location hhp-bisimulation]
\begin{enumerate}
  \item $\sim_{hhp}^{sl}$ is an equivalence relation;
  \item If $P\sim_{hhp}^{sl} Q$ then
  \begin{enumerate}
    \item $loc::P\sim_{hhp}^{sl}loc::Q$;
    \item $\alpha.P\sim_{hhp}^{sl} \alpha.Q$, $\alpha$ is a free action;
    \item $P+R\sim_{hhp}^{sl} Q+R$;
    \item $P\parallel R\sim_{hhp}^{sl} Q\parallel R$;
    \item $(w)P\sim_{hhp}^{sl} (w)Q$;
    \item $x(y).P\sim_{hhp}^{sl} x(y).Q$.
  \end{enumerate}
\end{enumerate}
\end{theorem}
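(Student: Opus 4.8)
The plan is to mirror the two-part structure of the preceding pomset, step, and hp-bisimulation congruence theorems, adding at each step the extra verification that the witnessing posetal relations are \emph{downward closed}, since a strong static location hhp-bisimulation is by definition a downward closed strong static location hp-bisimulation. First I would dispatch part (1). Reflexivity holds because the identity posetal relation on $\mathcal{C}(\mathcal{E})\overline{\times}\mathcal{C}(\mathcal{E})$ is trivially a downward closed strong static location hp-bisimulation containing $(\emptyset,\emptyset,\emptyset)$, for the empty cla $\varphi=\emptyset$. Symmetry follows by inverting a witness $R_{\varphi}$ to $R_{\varphi}^{-1}=\{(C_2,f^{-1},C_1):(C_1,f,C_2)\in R_{\varphi}\}$ and noting that inversion preserves both the transfer clauses and downward closure. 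Transitivity follows by composing two witnesses, composing their order-isomorphisms, accumulating the cla by union as the definition prescribes, and checking that the composite is again downward closed; the ``vice-versa'' clauses combine componentwise.

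For part (2) I would, for each operator, take the witness $R_{\varphi}$ for $P\sim_{hhp}^{sl}Q$ and form the candidate relations $\{(loc::P,loc::Q)\}\cup\textbf{Id}$, $\{(\alpha.P,\alpha.Q)\}\cup\textbf{Id}$ (with $\alpha$ a free action), $\{(P+R,Q+R)\}\cup\textbf{Id}$, $\{(P\parallel R,Q\parallel R)\}\cup\textbf{Id}$, $\{((w)P,(w)Q)\}\cup\textbf{Id}$, and $\{(x(y).P,x(y).Q)\}\cup\textbf{Id}$. In each case the forward and backward transfer conditions, together with the three action clauses of the hp-bisimulation definition (fresh action, bound input $x(y)$, bound output $\overline{x}(y)$), are inherited from $R_{\varphi}$ exactly as in the corresponding strong static location hp-bisimulation congruence theorem above, so the genuinely new obligation is downward closure of the enlarged posetal relation. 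For the location, free-prefix, summation, input-prefix, and restriction contexts this is routine: any transition out of the context either stays inside $\textbf{Id}$ or transfers control directly into the $R_{\varphi}$-part, so every triple in the candidate relation restricts, along a downward closed subconfiguration, to a triple that again lies in it by downward closure of $R_{\varphi}$.

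The hard part will be the parallel case (2)(d), namely $P\parallel R\sim_{hhp}^{sl}Q\parallel R$. A configuration of $P\parallel R$ decomposes as a consistent union of a $P$-part and an $R$-part, and the order-isomorphism decomposes accordingly, so I would let the candidate relation carry this product structure; the delicate point is that downward closure must hold against \emph{all} subconfigurations, including those cutting across the parallel boundary. This is precisely where hhp-bisimulation is most fragile---indeed the Expansion law theorem above records that $\parallel$ fails the expansion law modulo $\sim_{hhp}^{sl}$---so the argument must not be routed through an expansion into prefixed summands. Instead I would argue directly that restricting a product triple $(C_1,f,C_2)$ to a downward closed subconfiguration again yields a product triple whose $P/Q$-component is a restriction of a triple in $R_{\varphi}$ and whose $R$-component is an identity triple, and then invoke downward closure of $R_{\varphi}$ on the first component and reflexivity on the second. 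As the paper does for the analogous $\pi_{tc}$ result, the remaining bookkeeping is then reduced to the congruence proof for strong hhp-bisimulation in $\pi_{tc}$, which I would cite rather than reproduce.
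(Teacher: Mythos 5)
Your proposal takes essentially the same approach as the paper: the paper also proves part (1) by checking reflexivity, symmetry and transitivity, and part (2) by exhibiting exactly the candidate relations $\{(loc::P,loc::Q)\}\cup\textbf{Id}$, $\{(\alpha.P,\alpha.Q)\}\cup\textbf{Id}$, $\{(P+R,Q+R)\}\cup\textbf{Id}$, $\{(P\parallel R,Q\parallel R)\}\cup\textbf{Id}$, $\{((w)P,(w)Q)\}\cup\textbf{Id}$, $\{(x(y).P,x(y).Q)\}\cup\textbf{Id}$ and deferring the verification to the congruence proof for strong hhp-bisimulation in $\pi_{tc}$. Your explicit treatment of downward closure, including the product decomposition in the parallel case, merely fills in detail the paper omits.
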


\begin{proof}
\begin{enumerate}
  \item It is sufficient to prove that $\sim_{hhp}^{sl}$ is reflexivity, symmetry, and transitivity, we omit it.
  \item If $P\sim_p^{sl} Q$, then
  \begin{enumerate}
    \item it is sufficient to prove the relation $R=\{(loc::P, loc::.Q)\}\cup \textbf{Id}$ is a strong static location hhp-bisimulation, we omit it;
    \item it is sufficient to prove the relation $R=\{(\alpha.P, \alpha.Q)|\alpha \textrm{ is a free action}\}\cup \textbf{Id}$ is a strongly static location hhp-bisimulation for some distributions. It can be proved similarly to the proof of congruence for strong hhp-bisimulation in $\pi_{tc}$, we omit it;
    \item it is sufficient to prove the relation $R=\{(P+R, Q+R)\}\cup \textbf{Id}$ is a strongly static location hhp-bisimulation for some distributions. It can be proved similarly to the proof of congruence for strong hhp-bisimulation in $\pi_{tc}$, we omit it;
    \item it is sufficient to prove the relation $R=\{(P\parallel R, Q\parallel R)\}\cup \textbf{Id}$ is a strongly static location hhp-bisimulation for some distributions. It can be proved similarly to the proof of congruence for strong hhp-bisimulation in $\pi_{tc}$, we omit it;
    \item it is sufficient to prove the relation $R=\{((w)P, (w).Q)\}\cup \textbf{Id}$ is a strongly static location hhp-bisimulation for some distributions. It can be proved similarly to the proof of congruence for strong hhp-bisimulation in $\pi_{tc}$, we omit it;
    \item it is sufficient to prove the relation $R=\{(x(y).P, x(y).Q)\}\cup \textbf{Id}$ is a strongly static location hhp-bisimulation for some distributions. It can be proved similarly to the proof of congruence for strong hhp-bisimulation in $\pi_{tc}$, we omit it.
  \end{enumerate}
\end{enumerate}
\end{proof}


\begin{definition}
Let $X$ have arity $n$, and let $\widetilde{x}=x_1,\cdots,x_n$ be distinct names, and $fn(P)\subseteq\{x_1,\cdots,x_n\}$. The replacement of $X(\widetilde{x})$ by $P$ in $E$,
written $E\{X(\widetilde{x}):=P\}$, means the result of replacing each subterm $X(\widetilde{y})$ in $E$ by $P\{\widetilde{y}/\widetilde{x}\}$.
\end{definition}

\begin{definition}
Let $E$ and $F$ be two process expressions containing only $X_1,\cdots,X_m$ with associated name sequences $\widetilde{x}_1,\cdots,\widetilde{x}_m$. Then,
\begin{enumerate}
  \item $E\sim_p^{sl} F$ means $E(\widetilde{P})\sim_p^{sl} F(\widetilde{P})$;
  \item $E\sim_s^{sl} F$ means $E(\widetilde{P})\sim_s^{sl} F(\widetilde{P})$;
  \item $E\sim_{hp}^{sl} F$ means $E(\widetilde{P})\sim_{hp}^{sl} F(\widetilde{P})$;
  \item $E\sim_{hhp}^{sl} F$ means $E(\widetilde{P})\sim_{hhp}^{sl} F(\widetilde{P})$;
\end{enumerate}

for all $\widetilde{P}$ such that $fn(P_i)\subseteq \widetilde{x}_i$ for each $i$.
\end{definition}

\begin{definition}
A term or identifier is weakly guarded in $P$ if it lies within some subterm $loc::\alpha.Q$ or $(loc_1::\alpha_1\parallel\cdots\parallel loc_n::\alpha_n).Q$ of $P$.
\end{definition}

\begin{theorem}
Assume that $\widetilde{E}$ and $\widetilde{F}$ are expressions containing only $X_i$ with $\widetilde{x}_i$, and $\widetilde{A}$ and $\widetilde{B}$ are identifiers with $A_i$, $B_i$.
Then, for all $i$,
\begin{enumerate}
  \item $E_i\sim_s^{sl} F_i$, $A_i(\widetilde{x}_i)\overset{\text{def}}{=}E_i(\widetilde{A})$, $B_i(\widetilde{x}_i)\overset{\text{def}}{=}F_i(\widetilde{B})$, then
  $A_i(\widetilde{x}_i)\sim_s^{sl} B_i(\widetilde{x}_i)$;
  \item $E_i\sim_p^{sl} F_i$, $A_i(\widetilde{x}_i)\overset{\text{def}}{=}E_i(\widetilde{A})$, $B_i(\widetilde{x}_i)\overset{\text{def}}{=}F_i(\widetilde{B})$, then
  $A_i(\widetilde{x}_i)\sim_p^{sl} B_i(\widetilde{x}_i)$;
  \item $E_i\sim_{hp}^{sl} F_i$, $A_i(\widetilde{x}_i)\overset{\text{def}}{=}E_i(\widetilde{A})$, $B_i(\widetilde{x}_i)\overset{\text{def}}{=}F_i(\widetilde{B})$, then
  $A_i(\widetilde{x}_i)\sim_{hp}^{sl} B_i(\widetilde{x}_i)$;
  \item $E_i\sim_{hhp}^{sl} F_i$, $A_i(\widetilde{x}_i)\overset{\text{def}}{=}E_i(\widetilde{A})$, $B_i(\widetilde{x}_i)\overset{\text{def}}{=}F_i(\widetilde{B})$, then
  $A_i(\widetilde{x}_i)\sim_{hhp}^{sl} B_i(\widetilde{x}_i)$.
\end{enumerate}
\end{theorem}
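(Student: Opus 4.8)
The plan is to prove all four items by one uniform argument, since they differ only in the quality of transition matching that must be preserved: pomset isomorphism for $\sim_p^{sl}$, pairwise concurrency of the step for $\sim_s^{sl}$, the order-isomorphism $f$ for $\sim_{hp}^{sl}$, and downward closure of that isomorphism for $\sim_{hhp}^{sl}$. I would carry out the reasoning in detail for $\sim_s^{sl}$ and then indicate the decorations needed in the other three cases. The conclusion is a statement about each component $A_i\sim_s^{sl} B_i$, so I fix the target index $i$ while retaining the full tuples $\widetilde{A},\widetilde{B}$ inside the relation, since the defining equations are mutually recursive.

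First I would record the two unwinding facts supplied by the Identity law of this section: from $A_i(\widetilde{x}_i)\overset{\text{def}}{=}E_i(\widetilde{A})$ we get $A_i(\widetilde{x}_i)\sim_s^{sl} E_i(\widetilde{A})$, and likewise $B_i(\widetilde{x}_i)\sim_s^{sl} F_i(\widetilde{B})$. Combining the first with the hypothesis $E_i\sim_s^{sl} F_i$, which by the definition of bisimilarity of expressions means $E_i(\widetilde{P})\sim_s^{sl} F_i(\widetilde{P})$ for every closed tuple and which I instantiate at $\widetilde{P}=\widetilde{A}$, yields $A_i(\widetilde{x}_i)\sim_s^{sl} F_i(\widetilde{A})$. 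Thus $\widetilde{A}$ is a solution, up to $\sim_s^{sl}$, of the $\widetilde{F}$-system, and $\widetilde{B}$ is another solution of the \emph{same} system.

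The core step is to build the witnessing bisimulation, following the pattern of the transition-analysis lemma used for recursion earlier (Lemma \ref{LUSWW3}) and the unique-solution proofs. I would take
\[
R=\{(G(\widetilde{A}),G(\widetilde{B})) : G \text{ an expression with free process variables among } \widetilde{X}\},
\]
and show that $R$, closed up to $\sim_s^{sl}$ on both sides, is a strong static location step bisimulation. Given $(G(\widetilde{A}),G(\widetilde{B}))\in R$ and a move $G(\widetilde{A})\xrightarrow[u]{\{\alpha_1,\cdots,\alpha_n\}}P'$, I proceed by induction on the structure of $G$, which is where the weak guardedness of $\widetilde{X}$ in $\widetilde{E},\widetilde{F}$, just defined above, is used. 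When $G$ contributes the action through one of the static operators — Location, Prefix, Summation, Parallel Composition, Restriction, Input — the matching transition of $G(\widetilde{B})$ is produced by the corresponding rule of Tables \ref{TRForPITC5} and \ref{TRForPITC52}, and the residual again has the form $G'(\widetilde{A})$ against $G'(\widetilde{B})$; when instead the head of $G$ is a recursion identifier, the transition arises by unwinding $A_i\overset{\text{def}}{=}E_i(\widetilde{A})$, and here I invoke $A_i\sim_s^{sl} F_i(\widetilde{A})$ from the previous paragraph together with $B_i\sim_s^{sl} F_i(\widetilde{B})$ to reroute the move through $F_i$ and land in $\sim_s^{sl}R\sim_s^{sl}$. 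Congruence of $\sim_s^{sl}$ for the static operators, already established in this section, is what lets the up-to closure be absorbed.

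I expect the main obstacle to be the recursion-identifier case, specifically keeping the residual inside $R$ after an unwinding: one must check that after passing through $F_i$ the reached state is still of the shape $G'(\widetilde{A})$ up to $\sim_s^{sl}$, which is exactly what weak guardedness guarantees by preventing an identifier from being exposed without first consuming a prefix. Two further points need care. In the $\pi_{tc}$ setting the Input, \textbf{OPEN} and \textbf{CLOSE} moves carry bound names, so the matching must be read modulo $\equiv_{\alpha}$ and under the freshness side conditions, using the substitution propositions proved above. For the hhp-case the relation must additionally be shown downward closed, which, as the expansion-law discussion earlier in this section illustrates, is the delicate part; there I would work with downward-closed context families and verify that closure is preserved under the transition analysis. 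Finally, I would note a streamlined alternative that bypasses the explicit bisimulation: since both $\widetilde{A}$ and $\widetilde{B}$ solve the $\widetilde{F}$-system up to $\sim_s^{sl}$ and $\widetilde{X}$ is weakly guarded, the unique-solution theorem for weakly guarded recursion — the $\pi_{tc}$ companion of Theorem \ref{USWSB3} — gives $\widetilde{A}\sim_s^{sl}\widetilde{B}$ immediately, and the same substitution works verbatim for the $p$, $hp$ and $hhp$ variants.
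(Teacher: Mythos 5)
Your core construction coincides with the paper's: the paper also takes the context-closure relation $R=\{(G(A),G(B)):G\textrm{ has only identifier }X\}$ (stated for $I=\{1\}$ without loss of generality), reduces the theorem to two transition-matching properties — one for steps of free and bound-output actions, one for input actions quantified over all instantiating names — and verifies them, absorbing $\sim_s^{sl}$ into the matching exactly as your up-to closure does. The genuine difference lies in how the bisimulation property is verified. The paper inducts on the \emph{depth of the inference} of the transition, and this needs no guardedness whatsoever: in the case $G\equiv X(\widetilde{y})$, a move of $A(\widetilde{y})$ arises by rule $\textbf{IDE}_1$ or $\textbf{IDE}_2$ from a strictly shorter inference of a move of $E(\widetilde{A})\{\widetilde{y}/\widetilde{x}\}$, which is again of the shape $G'(\widetilde{A})$; the induction hypothesis applies to that move, one transfers it to $F(\widetilde{B})$ using $E\sim_s^{sl}F$ instantiated at $\widetilde{B}$, and closes with $\textbf{IDE}$ for $B(\widetilde{y})$.

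You instead induct on the \emph{structure} of $G$, and precisely because $E_i(\widetilde{A})$ is not a subterm of $X_i(\widetilde{y})$, your identifier case cannot be closed structurally; you patch it by invoking weak guardedness of $\widetilde{X}$ in $\widetilde{E},\widetilde{F}$ (via the analogue of Lemma \ref{LUS3}). But weak guardedness is \emph{not} a hypothesis of this theorem — it is only assumed in the following theorem on unique solution of equations — so both your main route and your proposed shortcut through that unique-solution theorem establish a strictly weaker statement than the one asserted: they say nothing about unguarded defining equations, for which the result still holds. The repair is mechanical: replace structural induction on $G$ by induction on the depth of inference of the transition; after that substitution your argument, including the up-to-$\sim_s^{sl}$ closure justified by congruence, the $\equiv_{\alpha}$ and freshness bookkeeping for input and bound-output actions, and the additional downward-closure check in the hhp case, goes through and matches the proof the paper intends.
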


\begin{proof}
\begin{enumerate}
  \item $E_i\sim_s^{sl} F_i$, $A_i(\widetilde{x}_i)\overset{\text{def}}{=}E_i(\widetilde{A})$, $B_i(\widetilde{x}_i)\overset{\text{def}}{=}F_i(\widetilde{B})$, then $A_i(\widetilde{x}_i)\sim_s^{sl} B_i(\widetilde{x}_i)$.

      We will consider the case $I=\{1\}$ with loss of generality, and show the following relation $R$ is a strong static location step bisimulation for some distributions.

      $$R=\{(G(A),G(B)):G\textrm{ has only identifier }X\}.$$

      By choosing $G\equiv X(\widetilde{y})$, it follows that $A(\widetilde{y})\sim_s^{sl} B(\widetilde{y})$. It is sufficient to prove the following:
      \begin{enumerate}
        \item If $G(A)\xrightarrow[u]{\{\alpha_1,\cdots,\alpha_n\}}P'$, where $\alpha_i(1\leq i\leq n)$ is a free action or bound output action with $bn(\alpha_1)\cap\cdots\cap bn(\alpha_n)\cap n(G(A),G(B))=\emptyset$, then
        $G(B)\xrightarrow[u]{\{\alpha_1,\cdots,\alpha_n\}}Q''$ such that $P'\sim_s^{sl} Q''$;
        \item If $G(A)[u]\xrightarrow{x(y)}P'$ with $x\notin n(G(A),G(B))$, then $G(B)\xrightarrow[u]{x(y)}Q''$, such that for all $u$, $P'\{u/y\}\sim_s^{sl} Q''\{u/y\}$.
      \end{enumerate}

      To prove the above properties, it is sufficient to induct on the depth of inference and quite routine, we omit it.
  \item $E_i\sim_p^{sl} F_i$, $A_i(\widetilde{x}_i)\overset{\text{def}}{=}E_i(\widetilde{A})$, $B_i(\widetilde{x}_i)\overset{\text{def}}{=}F_i(\widetilde{B})$, then
  $A_i(\widetilde{x}_i)\sim_p^{sl} B_i(\widetilde{x}_i)$. It can be proven similarly to the above case.
  \item $E_i\sim_{hp}^{sl} F_i$, $A_i(\widetilde{x}_i)\overset{\text{def}}{=}E_i(\widetilde{A})$, $B_i(\widetilde{x}_i)\overset{\text{def}}{=}F_i(\widetilde{B})$, then
  $A_i(\widetilde{x}_i)\sim_{hp}^{sl} B_i(\widetilde{x}_i)$. It can be proven similarly to the above case.
  \item $E_i\sim_{hhp}^{sl} F_i$, $A_i(\widetilde{x}_i)\overset{\text{def}}{=}E_i(\widetilde{A})$, $B_i(\widetilde{x}_i)\overset{\text{def}}{=}F_i(\widetilde{B})$, then
  $A_i(\widetilde{x}_i)\sim_{hhp}^{sl} B_i(\widetilde{x}_i)$. It can be proven similarly to the above case.
\end{enumerate}
\end{proof}

\begin{theorem}[Unique solution of equations]
Assume $\widetilde{E}$ are expressions containing only $X_i$ with $\widetilde{x}_i$, and each $X_i$ is weakly guarded in each $E_j$. Assume that $\widetilde{P}$ and $\widetilde{Q}$ are processes such that $fn(P_i)\subseteq \widetilde{x}_i$ and $fn(Q_i)\subseteq \widetilde{x}_i$. Then, for all $i$,
\begin{enumerate}
  \item if $P_i\sim_p^{sl} E_i(\widetilde{P})$, $Q_i\sim_p^{sl} E_i(\widetilde{Q})$, then $P_i\sim_p^{sl} Q_i$;
  \item if $P_i\sim_s^{sl} E_i(\widetilde{P})$, $Q_i\sim_s^{sl} E_i(\widetilde{Q})$, then $P_i\sim_s^{sl} Q_i$;
  \item if $P_i\sim_{hp}^{sl} E_i(\widetilde{P})$, $Q_i\sim_{hp}^{sl} E_i(\widetilde{Q})$, then $P_i\sim_{hp}^{sl} Q_i$;
  \item if $P_i\sim_{hhp}^{sl} E_i(\widetilde{P})$, $Q_i\sim_{hhp}^{sl} E_i(\widetilde{Q})$, then $P_i\sim_{hhp}^{sl} Q_i$.
\end{enumerate}
\end{theorem}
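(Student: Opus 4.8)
The plan is to mimic the technique already used for the strong static location results in this paper, in particular the congruence-of-recursion theorem just proved (whose proof exhibits the relation $R=\{(G(A),G(B)):G\textrm{ has only identifier }X\}$) and the unique-solution theorems \ref{USSSB3} for CTC with static localities, adapting both to the name-passing setting of $\pi_{tc}$. Concretely, for each of the four equivalences I would exhibit the relation $R=\{(G(\widetilde{P}),G(\widetilde{Q})):G\textrm{ is an expression with free variables among }\widetilde{X}\}$ and show it is a bisimulation of the required kind \emph{up to} the relevant bisimilarity; instantiating $G\equiv X_i$ then gives $(P_i,Q_i)\in R$ and hence $P_i\sim Q_i$. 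As in the weak static location versions, it is enough to treat a single equation $X=E$, since the general tuple case only carries the index $i$ through every step; so I assume $P\sim E(P)$ and $Q\sim E(Q)$ and aim for $P\sim Q$. The entire argument hinges on a replication lemma that isolates the effect of weak guardedness.

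First I would establish the $\pi_{tc}$ analogue of Lemma \ref{LUS3}: if every $X_j$ is weakly guarded in $E$ and $E(\widetilde{P})\xrightarrow[u]{\{\alpha_1,\cdots,\alpha_n\}}P'$, then there is an expression $E'$, again with free variables among $\widetilde{X}$, with $P'\equiv E'(\widetilde{P})$ and, for every $\widetilde{Q}$, $E(\widetilde{Q})\xrightarrow[u]{\{\alpha_1,\cdots,\alpha_n\}}E'(\widetilde{Q})$; the same must be shown for the input, output, bound-output and $\tau$ transition shapes of Table \ref{TRForPITC5}. This is proved by induction on the structure of $E$ (one case per formation rule of Definition \ref{syntax5}): a bare variable $X_j$ cannot be the source because weak guardedness forces each occurrence under a prefix $loc::\alpha.F$ or $(loc_1::\alpha_1\parallel\cdots\parallel loc_n::\alpha_n).F$; the prefix cases supply the transition directly and fix $E'$ to be the body; and the $+$, $\parallel$, $loc::$, $(x)$ and constant cases follow from the induction hypothesis together with the matching rule $\textbf{SUM}$, $\textbf{PAR}$, $\textbf{COM}$/$\textbf{CLOSE}$, $\textbf{Loc}$, $\textbf{RES}$/$\textbf{OPEN}$, $\textbf{IDE}$. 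The essential content is that the first move of $E(\widetilde{P})$ is furnished by a guarding prefix of $E$ itself and never by a parameter process, so the residual is a uniform context $E'$ independent of $\widetilde{P}$.

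With the lemma in hand I would verify the bisimulation-up-to clause by the same case split on $G$ as in Theorem \ref{USSSB3}. In the base case $G\equiv X$ we have $G(\widetilde{P})\equiv P\xrightarrow{}P'$; from $P\sim E(P)$ we obtain a matching $E(P)\xrightarrow{}P''\sim P'$, the lemma gives $P''\equiv E'(P)$ with $E(Q)\xrightarrow{}E'(Q)$, and then $Q\sim E(Q)$ transports this to $Q\xrightarrow{}Q'\sim E'(Q)$; since $(E'(P),E'(Q))\in R$ this places the residuals in the composition $\sim R\sim$, as required for a bisimulation up to the relevant bisimilarity. For compound $G$ the hypotheses are not needed: transitions decompose through the structural rules, sub-contexts applied to $\widetilde{P},\widetilde{Q}$ again lie in $R$, and one closes using the equivalence-and-congruence theorems already established for $\sim_p^{sl}$, $\sim_s^{sl}$, $\sim_{hp}^{sl}$ and $\sim_{hhp}^{sl}$. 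The hp- and hhp- cases additionally thread the order-isomorphism $f$, updating it as $f[e_1\mapsto e_2]$; the bound-name moves $x(y)$ and $\overline{x}(y)$ are matched using the ``for all $w$'' clauses of the strong static location definitions, and the side conditions $bn(\alpha)\cap fn(\cdot)=\emptyset$ together with alpha-conversion must be maintained so that the replication lemma and the $\{w/y\}$ substitutions apply cleanly.

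The hardest part will be the $\sim_{hhp}^{sl}$ case. Unlike pomset, step and hp, the hereditary history-preserving relation must be downward closed, so exhibiting the $R$ above is not by itself enough: I must check that the witnessing posetal relation stays closed under passage to sub-configurations, and in doing so avoid exactly the failure displayed in the Expansion-law theorem, where a candidate relation was hp but not downward closed. Concretely, the obstacle is to show that every sub-configuration of a residual pair $(E'(\widetilde{P}),f',E'(\widetilde{Q}))$ still arises from some context $E''$ and hence still lies in $R$, while correctly accounting for the bound names introduced by inputs and openings. Managing these names so as to preserve downward closure is where the name-passing structure of $\pi_{tc}$ makes the argument genuinely more delicate than its CTC counterpart; the remaining name bookkeeping in the replication lemma is routine once the lemma is set up correctly.
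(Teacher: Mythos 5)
Your proposal is correct and takes essentially the same approach as the paper: the paper's own proof of this theorem is simply deferred (``similar to the proof of unique solution of equations in $\pi_{tc}$, we omit it''), and the argument it points to is exactly the one you reconstruct --- a replication lemma in the style of Lemma \ref{LUS3}, the context relation $R=\{(G(\widetilde{P}),G(\widetilde{Q}))\}$ as in the paper's preceding recursion-congruence theorem, and an induction on the depth of inference as in Theorems \ref{USSSB3}. Your explicit treatment of the bound-name side conditions and the downward-closure issue for $\sim_{hhp}^{sl}$ goes beyond anything the paper records, but it is a refinement of, not a departure from, the paper's technique.
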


\begin{proof}
\begin{enumerate}
  \item It is similar to the proof of unique solution of equations for strong pomset bisimulation in $\pi_{tc}$, we omit it;
  \item It is similar to the proof of unique solution of equations for strong step bisimulation in $\pi_{tc}$, we omit it;
  \item It is similar to the proof of unique solution of equations for strong hp-bisimulation in $\pi_{tc}$, we omit it;
  \item It is similar to the proof of unique solution of equations for strong hhp-bisimulation in $\pi_{tc}$, we omit it.
\end{enumerate}
\end{proof}

\subsubsection{Algebraic Theory}\label{at5}

In this section, we will try to axiomatize $\pi_{tc}$ with static localities, the theory is \textbf{STC}.

\begin{definition}[STC]
The theory \textbf{STC} is consisted of the following axioms and inference rules:

\begin{enumerate}
  \item Alpha-conversion $\textbf{A}$.
  \[\textrm{if } P\equiv Q, \textrm{ then } P=Q\]
  \item Congruence $\textbf{C}$. If $P=Q$, then,
  \[loc::P=loc::Q\]
  \[\tau.P=\tau.Q\quad \overline{x}y.P=\overline{x}y.Q\]
  \[P+R=Q+R\quad P\parallel R=Q\parallel R\]
  \[(x)P=(x)Q\quad x(y).P=x(y).Q\]
  \item Summation $\textbf{S}$.
  \[\textbf{S0}\quad P+\textbf{nil}=P\]
  \[\textbf{S1}\quad P+P=P\]
  \[\textbf{S2}\quad P+Q=Q+P\]
  \[\textbf{S3}\quad P+(Q+R)=(P+Q)+R\]
  \item Restriction $\textbf{R}$.
  \[\textbf{R0}\quad (x)P=P\quad \textrm{ if }x\notin fn(P)\]
  \[\textbf{R1}\quad (x)(y)P=(y)(x)P\]
  \[\textbf{R2}\quad (x)(P+Q)=(x)P+(x)Q\]
  \[\textbf{R3}\quad (x)\alpha.P=\alpha.(x)P\quad \textrm{ if }x\notin n(\alpha)\]
  \[\textbf{R4}\quad (x)\alpha.P=\textbf{nil}\quad \textrm{ if }x\textrm{is the subject of }\alpha\]
  \item Location $\textbf{L}$.
  \[\textbf{L0}\quad \epsilon::P= P\]
  \[\textbf{L1}\quad u::\textbf{nil}=\textbf{nil}\]
  \[\textbf{L2}\quad u::(\alpha.P)=u::\alpha.u::P\]
  \[\textbf{L3}\quad u::(P+Q)=u::P+u::Q\]
  \[\textbf{L4}\quad u::(P\parallel Q)=u::P\parallel u::Q\]
  \[\textbf{L5}\quad u::(P\setminus L)=u::P\setminus L\]
  \[\textbf{L6}\quad u::(P[f])=u::P[f]\]
  \[\textbf{L7}\quad u::(v::P)=uv::P\]
  \item Expansion $\textbf{E}$.
  Let $P\equiv\sum_i \alpha_i.P_i$ and $Q\equiv\sum_j \beta_j.Q_j$, where $bn(\alpha_i)\cap fn(Q)=\emptyset$ for all $i$, and $bn(\beta_j)\cap fn(P)=\emptyset$ for all $j$.
Then

\begin{enumerate}
  \item $P\parallel Q\sim_p^{sl} \sum_i\sum_j (\alpha_i\parallel \beta_j).(P_i\parallel Q_j)+\sum_{\alpha_i \textrm{ comp }\beta_j}\tau.R_{ij}$;
  \item $P\parallel Q\sim_s^{sl} \sum_i\sum_j (\alpha_i\parallel \beta_j).(P_i\parallel Q_j)+\sum_{\alpha_i \textrm{ comp }\beta_j}\tau.R_{ij}$;
  \item $P\parallel Q\sim_{hp}^{sl} \sum_i\sum_j (\alpha_i\parallel \beta_j).(P_i\parallel Q_j)+\sum_{\alpha_i \textrm{ comp }\beta_j}\tau.R_{ij}$;
  \item $P\parallel Q\nsim_{hhp}^{sl} \sum_i\sum_j (\alpha_i\parallel \beta_j).(P_i\parallel Q_j)+\sum_{\alpha_i \textrm{ comp }\beta_j}\tau.R_{ij}$.
\end{enumerate}

Where $\alpha_i$ comp $\beta_j$ and $R_{ij}$ are defined as follows:

\begin{enumerate}
  \item $\alpha_i$ is $\overline{x}u$ and $\beta_j$ is $x(v)$, then $R_{ij}=P_i\parallel Q_j\{u/v\}$;
  \item $\alpha_i$ is $\overline{x}(u)$ and $\beta_j$ is $x(v)$, then $R_{ij}=(w)(P_i\{w/u\}\parallel lQ_j\{w/v\})$, if $w\notin fn((u)P_i)\cup fn((v)Q_j)$;
  \item $\alpha_i$ is $x(v)$ and $\beta_j$ is $\overline{x}u$, then $R_{ij}=P_i\{u/v\}\parallel Q_j$;
  \item $\alpha_i$ is $x(v)$ and $\beta_j$ is $\overline{x}(u)$, then $R_{ij}=(w)(P_i\{w/v\}\parallel Q_j\{w/u\})$, if $w\notin fn((v)P_i)\cup fn((u)Q_j)$.
\end{enumerate}
  \item Identifier $\textbf{I}$.
  \[\textrm{If }A(\widetilde{x})\overset{\text{def}}{=}P,\textrm{ then }A(\widetilde{y})= P\{\widetilde{y}/\widetilde{x}\}.\]
\end{enumerate}
\end{definition}

\begin{theorem}[Soundness]
If $\textbf{STC}\vdash P=Q$ then
\begin{enumerate}
  \item $P\sim_p^{sl} Q$;
  \item $P\sim_s^{sl} Q$;
  \item $P\sim_{hp}^{sl} Q$.
\end{enumerate}
\end{theorem}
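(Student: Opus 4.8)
The plan is to prove soundness by induction on the length of the derivation of $P=Q$ in $\textbf{STC}$, reducing the whole statement to checking that each axiom is sound modulo the three equivalences and that the inference rules preserve soundness. First I would invoke the three ``Equivalence and congruence'' theorems proved earlier in this section, which establish that $\sim_p^{sl}$, $\sim_s^{sl}$ and $\sim_{hp}^{sl}$ are each equivalence relations and congruences with respect to all operators ($loc::$, prefix, $+$, $\parallel$, restriction, input prefix). Reflexivity supplies the base case, symmetry and transitivity let me glue successive derivation steps together, and the congruence property discharges the Congruence rule $\textbf{C}$ directly: if the premise $P=Q$ is sound by the inductive hypothesis, then each of $loc::P=loc::Q$, $\tau.P=\tau.Q$, $P+R=Q+R$, and so on, follows from the respective congruence clause.

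Next I would treat the axiom groups one at a time, observing that each is already the content of a theorem established above. For Alpha-conversion $\textbf{A}$, two $\alpha$-equivalent processes generate isomorphic labelled transition systems (up to renaming of bound names), hence are related by all three static location equivalences. The Summation axioms $\textbf{S0}$--$\textbf{S3}$ are exactly the four clauses of the Summation laws theorems (instantiated at $\sim_p^{sl}$, $\sim_s^{sl}$, $\sim_{hp}^{sl}$); the Restriction axioms $\textbf{R0}$--$\textbf{R4}$ match the Restriction laws theorems; the Location axioms $\textbf{L0}$--$\textbf{L7}$ match the Location laws propositions; and the Identifier rule $\textbf{I}$ is the Identity law theorem. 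In each case the axiom is verified by citing the corresponding result, so no new bisimulation relation needs to be exhibited here.

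The only genuinely delicate axiom is Expansion $\textbf{E}$. For this I would cite the Expansion law theorem for static location truly concurrent bisimilarities: its cases (1), (2), (3) assert precisely the equations of $\textbf{E}$ modulo $\sim_p^{sl}$, $\sim_s^{sl}$ and $\sim_{hp}^{sl}$ respectively, which is exactly what is needed and no more. This is where the main subtlety lies, and it explains the shape of the present statement. Case (4) of that same Expansion law shows $P\parallel Q\nsim_{hhp}^{sl}\sum_i\sum_j(\alpha_i\parallel\beta_j).(P_i\parallel Q_j)+\sum_{\alpha_i\textrm{ comp }\beta_j}\tau.R_{ij}$, witnessed there by the processes $(a+b)\parallel c$ and $a\parallel(b+c)$: hereditary history preservation is not downward closed in the required way. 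Hence axiom $\textbf{E}$ is \emph{unsound} for $\sim_{hhp}^{sl}$, and this is exactly why the conclusion of the present theorem lists only $\sim_p^{sl}$, $\sim_s^{sl}$ and $\sim_{hp}^{sl}$ and deliberately omits the hhp-case. The hard part of the write-up is therefore not a computation but the bookkeeping of keeping all three equivalences synchronized through the induction while flagging that the argument breaks down precisely at $\textbf{E}$ for hhp; once the Expansion law is in hand, the remaining axioms are routine appeals to the earlier laws and the congruence theorems.
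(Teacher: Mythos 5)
Your proposal is correct and takes essentially the same route as the paper, whose entire proof is a one-line appeal to the laws already established in the strong bisimilarities section; you simply make explicit the induction on derivations and the congruence bookkeeping that the paper leaves implicit. Your remark that the Expansion law fails for $\sim_{hhp}^{sl}$ (case (4) of that theorem) and that this is precisely why the hhp-case is absent from the statement is accurate and a useful clarification the paper does not spell out.
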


\begin{proof}
The soundness of these laws modulo strongly truly concurrent bisimilarities is already proven in Section \ref{stcb5}.
\end{proof}

\begin{definition}
The agent identifier $A$ is weakly guardedly defined if every agent identifier is weakly guarded in the right-hand side of the definition of $A$.
\end{definition}

\begin{definition}[Head normal form]
A Process $P$ is in head normal form if it is a sum of the prefixes:

$$P\equiv \sum_i(loc_{i1}::\alpha_{i1}\parallel\cdots\parallel loc_{in}::\alpha_{in}).P_i.$$
\end{definition}

\begin{proposition}
If every agent identifier is weakly guardedly defined, then for any process $P$, there is a head normal form $H$ such that

$$\textbf{STC}\vdash P=H.$$
\end{proposition}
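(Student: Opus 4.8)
The plan is to prove the statement by induction on the structure of the process $P$, using the axioms of $\textbf{STC}$ to rewrite $P$ into a sum of located concurrent prefixes. The base cases are immediate: $\textbf{nil}$ is the empty head normal form, and each single prefix $\tau.P$, $\overline{x}y.P$, $x(y).P$ is already a head normal form with one summand whose prefix has a single component (reading $n=1$ in the definition). The inductive cases then show that each syntactic operator preserves the existence of a provably equal head normal form, so that the conclusion propagates up the term.

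First I would dispatch the two ``distribution'' operators. For summation $P+Q$, the induction hypothesis yields head normal forms with $\textbf{STC}\vdash P=H_P$ and $\textbf{STC}\vdash Q=H_Q$; by congruence $\textbf{C}$ together with $\textbf{S2},\textbf{S3}$ the term $H_P+H_Q$ is again a sum of prefixes, hence a head normal form. For the location operator $loc::P$, I would take the head normal form of $P$ and push the location inward using $\textbf{L3}$ to distribute over $+$, $\textbf{L2}$ to split each prefix from its continuation, $\textbf{L4}$ to distribute over the concurrent prefix, and finally $\textbf{L7}$ to coalesce each $loc::loc_{ij}::\alpha_{ij}$ into the single located action $(loc\cdot loc_{ij})::\alpha_{ij}$, while the continuations $loc::P_i$ absorb the remaining location (using $\textbf{L0}$ when $loc=\epsilon$). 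The restriction case $(x)P$ is analogous: $\textbf{R2}$ distributes $(x)$ over $+$, and a concurrent-prefix reading of $\textbf{R3}/\textbf{R4}$ either pushes $(x)$ past a prefix into the continuation when $x\notin n(\alpha_{ik})$ for all components, or annihilates the summand to $\textbf{nil}$ (absorbed by $\textbf{S0}$) when $x$ is the subject of some component.

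The identifier case $A(\widetilde{y})$ is where the weak-guardedness hypothesis is essential. By the Identifier law $\textbf{I}$ we unfold $A(\widetilde{y})=P\{\widetilde{y}/\widetilde{x}\}$ where $A(\widetilde{x})\overset{\text{def}}{=}P$, and since $A$ is weakly guardedly defined, every identifier occurrence in $P$ lies inside some prefix. Consequently the unfolded body contains no identifier \emph{outside} the scope of a prefix, so the structural analysis of $P\{\widetilde{y}/\widetilde{x}\}$ never re-encounters an unguarded identifier and the guarded ones are simply carried along in the continuations $P_i$. To make this a single well-founded argument I would not induct on raw syntactic size (the body may be larger than $A(\widetilde{y})$) but on the ``unguarded weight'' $m(P)$ counting operator symbols occurring outside any prefix, with an unguarded identifier weighted by $1+m(\text{its body})$; this is finite precisely because weak guardedness forbids unguarded identifier cycles, and unfolding strictly decreases it.

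The main obstacle is the parallel composition case $P\parallel Q$. Here I would apply the induction hypothesis to obtain head normal forms for $P$ and $Q$ and then invoke the Expansion law $\textbf{E}$ to rewrite $P\parallel Q$ as $\sum_i\sum_j(\alpha_i\parallel\beta_j).(P_i\parallel Q_j)+\sum_{\alpha_i \textrm{ comp }\beta_j}\tau.R_{ij}$. The delicate point is that the summands of a head normal form are themselves concurrent prefixes $(loc_{i1}::\alpha_{i1}\parallel\cdots\parallel loc_{in}::\alpha_{in})$, so $\textbf{E}$ must be read, and if necessary strengthened, so that the $\alpha_i,\beta_j$ range over such located concurrent prefixes and $\alpha_i\parallel\beta_j$ denotes their merge into one larger concurrent prefix, while the side conditions $bn(\alpha_i)\cap fn(Q)=\emptyset$ and $bn(\beta_j)\cap fn(P)=\emptyset$ are secured in advance by alpha-conversion $\textbf{A}$. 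I would then check that each communication summand $\tau.R_{ij}$ is itself a head normal form (a single $\tau$ prefix over a process continuation), which, combined with the distribution and location manipulations above, closes the induction.
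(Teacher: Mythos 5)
Your proposal is correct and follows exactly the route the paper takes: the paper's entire proof is the single sentence ``It is sufficient to induct on the structure of $P$ and quite obvious,'' i.e.\ the same structural induction you carry out. Your elaboration in fact supplies details the paper silently assumes --- notably the well-founded ``unguarded weight'' measure needed so that unfolding $A(\widetilde{y})$ via axiom $\textbf{I}$ does not break the induction, and the reading of the Expansion law $\textbf{E}$ with summands that are located concurrent prefixes rather than single actions --- both of which are genuine gaps in a naive ``induct on the structure'' argument.
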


\begin{proof}
It is sufficient to induct on the structure of $P$ and quite obvious.
\end{proof}

\begin{theorem}[Completeness]
For all processes $P$ and $Q$,
\begin{enumerate}
  \item if $P\sim_p^{sl} Q$, then $\textbf{STC}\vdash P=Q$;
  \item if $P\sim_s^{sl} Q$, then $\textbf{STC}\vdash P=Q$;
  \item if $P\sim_{hp}^{sl} Q$, then $\textbf{STC}\vdash P=Q$.
\end{enumerate}
\end{theorem}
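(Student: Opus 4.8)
The plan is to follow the classical Milner-style completeness argument: reduce both processes to head normal form, then induct on a depth measure and repeatedly absorb matching summands using the idempotence axiom $\textbf{S1}$. Since the three cases ($\sim_p^{sl}$, $\sim_s^{sl}$, $\sim_{hp}^{sl}$) differ only in the bookkeeping attached to each matched transition, I would prove them in parallel, writing $\sim$ for the relevant equivalence and flagging the few places where the pomset/step/hp distinction actually matters. First I would invoke the preceding head-normal-form proposition: assuming identifiers are weakly guardedly defined, $\textbf{STC}\vdash P=\sum_i(loc_{i1}::\alpha_{i1}\parallel\cdots\parallel loc_{in}::\alpha_{in}).P_i$, and likewise for $Q$. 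By the Soundness theorem these head normal forms are still $\sim$-equivalent to the originals, so by transitivity it suffices to prove $\textbf{STC}\vdash P=Q$ for head normal forms, with the induction running on the sum of the depths (nested prefix count) of $P$ and $Q$.

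The core step is the summand-absorption argument. Given $P\sim Q$ in head normal form, take any summand $(loc::\alpha).P'$ of $P$; by the $\textbf{Loc}$ and $\textbf{SUM}$ rules it induces a transition $P\xrightarrow[u]{\alpha}P'$ for a location word $u$ determined by $loc$. The bisimulation (started from $(\emptyset,\emptyset)\in R_{\varphi}$) supplies a matching transition $Q\xrightarrow[v]{\beta}Q'$ with $\alpha,\beta$ compatible, $(P',Q')\in R_{\varphi\cup\{(u,v)\}}$, and—crucially for the static-location semantics—the consistency of the cla $\varphi$ forcing $u$ and $v$ to agree on independence. By the induction hypothesis $\textbf{STC}\vdash P'=Q'$, so the chosen summand of $P$ is provably equal to a summand of $Q$; applying $\textbf{S1}$ together with $\textbf{S2}$ and $\textbf{S3}$ for rearrangement gives $\textbf{STC}\vdash Q+(loc::\alpha).P'=Q$. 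Ranging over all summands of $P$ yields $\textbf{STC}\vdash Q+P=Q$, and the symmetric argument gives $\textbf{STC}\vdash P+Q=P$; commutativity of $+$ then delivers $P=Q$.

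The main obstacle is the input-prefix (bound-name) case. For $\alpha=x(y)$ the bisimulation clause only guarantees a match after the object is instantiated by an arbitrary fresh name $w$, i.e.\ the residuals satisfy $(P'',Q'')\in R_{\varphi'}$ after $P''\equiv P'\{w/y\}$ and $Q''\equiv Q'\{w/y\}$ for all $w$, rather than $P'\sim Q'$ directly. I would therefore apply the induction hypothesis to the substituted residuals, pick one fresh $w\notin n(P',Q')$, obtain $\textbf{STC}\vdash P'\{w/y\}=Q'\{w/y\}$, and recover $\textbf{STC}\vdash x(y).P'=x(y).Q'$ using alpha-conversion $\textbf{A}$ together with the input-prefix case of the congruence rule $\textbf{C}$. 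The bound-output case $\overline{x}(y)$ is handled analogously via the $\textbf{OPEN}$ rules and the location/restriction laws $\textbf{L5}$–$\textbf{L7}$. I expect the hp-case to need only the additional step of checking that the order-isomorphism $f[e_1\mapsto e_2]$ extends consistently along matched transitions, which introduces no genuinely new difficulty; and I would note that the hhp-case is deliberately absent, matching the failure of the expansion law $\textbf{E}$ for $\sim_{hhp}^{sl}$ established earlier.
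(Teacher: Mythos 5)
Your proposal matches the paper's proof essentially step for step: both reduce $P$ and $Q$ to head normal forms, induct on the sum of the depths, and match summands case by case (free/parallel prefixes handled by the induction hypothesis plus prefix congruence, input and bound-output prefixes handled via fresh-name instantiation together with alpha-conversion $\textbf{A}$ and congruence $\textbf{C}$). The only differences are cosmetic: you spell out the final summand-absorption step ($\textbf{STC}\vdash Q+P=Q$ and $\textbf{STC}\vdash P+Q=P$, hence $P=Q$) that the paper leaves implicit, and the paper proves the step-bisimulation case in full and declares the pomset and hp cases analogous, where you run the three cases in parallel.
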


\begin{proof}
\begin{enumerate}
  \item if $P\sim_s^{sl} Q$, then $\textbf{STC}\vdash P=Q$.
Since $P$ and $Q$ all have head normal forms, let $P\equiv\sum_{i=1}^k\alpha_i.P_i$ and $Q\equiv\sum_{i=1}^k\beta_i.Q_i$. Then the depth of $P$, denoted as $d(P)=0$, if $k=0$;
$d(P)=1+max\{d(P_i)\}$ for $1\leq i\leq k$. The depth $d(Q)$ can be defined similarly.

It is sufficient to induct on $d=d(P)+d(Q)$. When $d=0$, $P\equiv\textbf{nil}$ and $Q\equiv\textbf{nil}$, $P=Q$, as desired. Note that, we consider the general distribution.

Suppose $d>0$.

\begin{itemize}
  \item If $(\alpha_1\parallel\cdots\parallel\alpha_n).M$ with $\alpha_i(1\leq i\leq n)$ free actions is a summand of $P$, then $P\xrightarrow[u]{\{\alpha_1,\cdots,\alpha_n\}}M$. Since
  $Q$ is in head normal form and has a summand $(\alpha_1\parallel\cdots\parallel\alpha_n).N$ such that $M\sim_s^{sl} N$, by the induction hypothesis $\textbf{STC}\vdash M=N$,
  $\textbf{STC}\vdash (\alpha_1\parallel\cdots\parallel\alpha_n).M= (\alpha_1\parallel\cdots\parallel\alpha_n).N$;
  \item If $x(y).M$ is a summand of $P$, then for $z\notin n(P, Q)$, $P\xrightarrow[u]{x(z)}M'\equiv M\{z/y\}$. Since $Q$ is in head normal form and has a summand $x(w).N$ such that for
  all $v$, $M'\{v/z\}\sim_s^{sl} N'\{v/z\}$ where $N'\equiv N\{z/w\}$, by the induction hypothesis $\textbf{STC}\vdash M'\{v/z\}=N'\{v/z\}$, by the axioms $\textbf{C}$ and
  $\textbf{A}$, $\textbf{STC}\vdash x(y).M=x(w).N$;
  \item If $\overline{x}(y).M$ is a summand of $P$, then for $z\notin n(P,Q)$, $P\xrightarrow[u]{\overline{x}(z)}M'\equiv M\{z/y\}$. Since $Q$ is in head normal form and has a summand
  $\overline{x}(w).N$ such that $M'\sim_s^{sl} N'$ where $N'\equiv N\{z/w\}$, by the induction hypothesis $\textbf{STC}\vdash M'=N'$, by the axioms $\textbf{A}$ and $\textbf{C}$,
  $\textbf{STC}\vdash \overline{x}(y).M= \overline{x}(w).N$;
\end{itemize}

  \item if $P\sim_p^{sl} Q$, then $\textbf{STC}\vdash P=Q$. It can be proven similarly to the above case.
  \item if $P\sim_{hp}^{sl} Q$, then $\textbf{STC}\vdash P=Q$. It can be proven similarly to the above case.
\end{enumerate}
\end{proof}

\subsection{$\pi_{tc}$ with Dynamic Localities}\label{pitcdl}

$\pi_{tc}$ with dynamic localities is almost the same as $\pi_{tc}$ with static localities in section \ref{pitcsl}, as the locations are dynamically generated but not allocated statically. The LTSs-based
operational semantics and the laws are almost the same, except for the transition rules of $\textbf{Act}$ as follows.

\[\textbf{OUTPUT-ACT}\quad \frac{}{\overline{x}y.P\xrightarrow[u]{\overline{x}y}u::P}\]

\[\textbf{INPUT-ACT}\quad \frac{}{x(z).P\xrightarrow[u]{x(w)}u::P\{w/z\}}\quad (w\notin fn((z)P))\]


\newpage

\end{document}